\newif\ifignore 
\newcommand{\auxproof}[1]{%
\ifignore\mbox{}\newline
\textbf{PROOF:} \dotfill\newline
{\it #1}\mbox{}\newline
\textbf{ENDPROOF}\dotfill\newline
\fi}
\newdimen\proofrulebreadth \proofrulebreadth=.05em
\newdimen\proofdotseparation \proofdotseparation=1.25ex
\newdimen\proofrulebaseline \proofrulebaseline=2ex
\let\then\relax
\def\hfi{\hskip0pt plus.0001fil}
\mathchardef\squigto="3A3B
\newif\ifinsideprooftree\insideprooftreefalse
\newif\ifonleftofproofrule\onleftofproofrulefalse
\newif\ifproofdots\proofdotsfalse
\newif\ifdoubleproof\doubleprooffalse
\let\wereinproofbit\relax
\newdimen\shortenproofleft
\newdimen\shortenproofright
\newdimen\proofbelowshift
\newbox\proofabove
\newbox\proofbelow
\newbox\proofrulename
\def\shiftproofbelow{\let\next\relax\afterassignment\setshiftproofbelow\dimen0 }
\def\shiftproofbelowneg{\def\next{\multiply\dimen0 by-1 }%
\afterassignment\setshiftproofbelow\dimen0 }
\def\setshiftproofbelow{\next\proofbelowshift=\dimen0 }
\def\setproofrulebreadth{\proofrulebreadth}
\def\prooftree{
%
\ifnum  \lastpenalty=1
\then   \unpenalty
\else   \onleftofproofrulefalse
\fi
%
\ifonleftofproofrule
\else   \ifinsideprooftree
        \then   \hskip.5em plus1fil
        \fi
\fi
%
\bgroup
\setbox\proofbelow=\hbox{}\setbox\proofrulename=\hbox{}%
\let\justifies\proofover\let\leadsto\proofoverdots\let\Justifies\proofoverdbl
\let\using\proofusing\let\[\prooftree
\ifinsideprooftree\let\]\endprooftree\fi
\proofdotsfalse\doubleprooffalse
\let\thickness\setproofrulebreadth
\let\shiftright\shiftproofbelow \let\shift\shiftproofbelow
\let\shiftleft\shiftproofbelowneg
\let\ifwasinsideprooftree\ifinsideprooftree
\insideprooftreetrue
%
\setbox\proofabove=\hbox\bgroup$\displaystyle 
\let\wereinproofbit\prooftree
%
\shortenproofleft=0pt \shortenproofright=0pt \proofbelowshift=0pt
%
\onleftofproofruletrue\penalty1
}
\def\eproofbit{
%
\ifx    \wereinproofbit\prooftree
\then   \ifcase \lastpenalty
        \then   \shortenproofright=0pt  
        \or     \unpenalty\hfil         
        \or     \unpenalty\unskip       
        \else   \shortenproofright=0pt  
        \fi
\fi
%
\global\dimen0=\shortenproofleft
\global\dimen1=\shortenproofright
\global\dimen2=\proofrulebreadth
\global\dimen3=\proofbelowshift
\global\dimen4=\proofdotseparation
\global\count255=\proofdotnumber
%
$\egroup  
%
\shortenproofleft=\dimen0
\shortenproofright=\dimen1
\proofrulebreadth=\dimen2
\proofbelowshift=\dimen3
\proofdotseparation=\dimen4
\proofdotnumber=\count255
}
\def\proofover{
\eproofbit 
\setbox\proofbelow=\hbox\bgroup 
\let\wereinproofbit\proofover
$\displaystyle
}%
\def\proofoverdbl{
\eproofbit 
\doubleprooftrue
\setbox\proofbelow=\hbox\bgroup 
\let\wereinproofbit\proofoverdbl
$\displaystyle
}%
\def\proofoverdots{
\eproofbit 
\proofdotstrue
\setbox\proofbelow=\hbox\bgroup 
\let\wereinproofbit\proofoverdots
$\displaystyle
}%
\def\proofusing{
\eproofbit 
\setbox\proofrulename=\hbox\bgroup 
\let\wereinproofbit\proofusing
\kern0.3em$
}
\def\endprooftree{
\eproofbit 
  \dimen5 =0pt
%
\dimen0=\wd\proofabove \advance\dimen0-\shortenproofleft
\advance\dimen0-\shortenproofright
%
\dimen1=.5\dimen0 \advance\dimen1-.5\wd\proofbelow
\dimen4=\dimen1
\advance\dimen1\proofbelowshift \advance\dimen4-\proofbelowshift
%
\ifdim  \dimen1<0pt
\then   \advance\shortenproofleft\dimen1
        \advance\dimen0-\dimen1
        \dimen1=0pt
        \ifdim  \shortenproofleft<0pt
        \then   \setbox\proofabove=\hbox{%
                        \kern-\shortenproofleft\unhbox\proofabove}%
                \shortenproofleft=0pt
        \fi
\fi
%
\ifdim  \dimen4<0pt
\then   \advance\shortenproofright\dimen4
        \advance\dimen0-\dimen4
        \dimen4=0pt
\fi
%
\ifdim  \shortenproofright<\wd\proofrulename
\then   \shortenproofright=\wd\proofrulename
\fi
%
\dimen2=\shortenproofleft \advance\dimen2 by\dimen1
\dimen3=\shortenproofright\advance\dimen3 by\dimen4
%
\ifproofdots
\then
        \dimen6=\shortenproofleft \advance\dimen6 .5\dimen0
        \setbox1=\vbox to\proofdotseparation{\vss\hbox{$\cdot$}\vss}%
        \setbox0=\hbox{%
                \advance\dimen6-.5\wd1
                \kern\dimen6
                $\vcenter to\proofdotnumber\proofdotseparation
                        {\leaders\box1\vfill}$%
                \unhbox\proofrulename}%
\else   \dimen6=\fontdimen22\the\textfont2 
        \dimen7=\dimen6
        \advance\dimen6by.5\proofrulebreadth
        \advance\dimen7by-.5\proofrulebreadth
        \setbox0=\hbox{%
                \kern\shortenproofleft
                \ifdoubleproof
                \then   \hbox to\dimen0{%
                        $\mathsurround0pt\mathord=\mkern-6mu%
                        \cleaders\hbox{$\mkern-2mu=\mkern-2mu$}\hfill
                        \mkern-6mu\mathord=$}%
                \else   \vrule height\dimen6 depth-\dimen7 width\dimen0
                \fi
                \unhbox\proofrulename}%
        \ht0=\dimen6 \dp0=-\dimen7
\fi
%
\let\doll\relax
\ifwasinsideprooftree
\then   \let\VBOX\vbox
\else   \ifmmode\else$\let\doll=$\fi
        \let\VBOX\vcenter
\fi
\VBOX   {\baselineskip\proofrulebaseline \lineskip.2ex
        \expandafter\lineskiplimit\ifproofdots0ex\else-0.6ex\fi
        \hbox   spread\dimen5   {\hfi\unhbox\proofabove\hfi}%
        \hbox{\box0}%
        \hbox   {\kern\dimen2 \box\proofbelow}}\doll%
%
\global\dimen2=\dimen2
\global\dimen3=\dimen3
\egroup 
\ifonleftofproofrule
\then   \shortenproofleft=\dimen2
\fi
\shortenproofright=\dimen3
%
\onleftofproofrulefalse
\ifinsideprooftree
\then   \hskip.5em plus 1fil \penalty2
\fi
}
\newcommand\after{\mathbin{\circ}}
\newsavebox\sbpafterd
\savebox\sbpafterd{\begin{tikzpicture}[baseline=-2.5pt]
            \filldraw[fill=white,draw=white] circle (2.1pt);
            \filldraw[fill=white,draw=black, line width=0.2pt] circle (1.8pt);
            \filldraw[fill=black, line width=0] circle (0.2pt);
                                \end{tikzpicture}}
\newsavebox\sbpaftert
\savebox\sbpaftert{\begin{tikzpicture}[baseline=-2.5pt]
            \filldraw[fill=white,draw=white] circle (2.1pt);
            \filldraw[fill=white,draw=black, line width=0.2pt] circle (1.8pt);
            \filldraw[fill=black, line width=0] circle (0.2pt);
                                \end{tikzpicture}}
\newsavebox\sbpafters
\savebox\sbpafters{\begin{tikzpicture}[baseline=-1.75pt]
            \filldraw[fill=white,draw=white] circle (1.7pt);
            \filldraw[fill=white,draw=black,line width=0.15pt] circle (1.4pt);
            \filldraw[fill=black, line width=0] circle (0.15pt);
                                \end{tikzpicture}}
\newsavebox\sbpafterss
\savebox\sbpafterss{\begin{tikzpicture}[baseline=-1.25pt]
            \filldraw[fill=white,draw=white] circle (1.4pt);
            \filldraw[fill=white,draw=black,line width=0.1pt] circle (1.1pt);
            \filldraw[fill=black, line width=0] circle (0.15pt);
                                \end{tikzpicture}}
\newcommand{\pafter}{\begingroup
        \mathchoice{\mathbin{\usebox\sbpafterd}}
        {\mathbin{\usebox\sbpaftert}}
        {\mathbin{\usebox\sbpafters}}
        {\mathbin{\usebox\sbpafterss}}\endgroup}
\newsavebox\sbtafterd
\savebox\sbtafterd{\begin{tikzpicture}[baseline=-2.5pt]
                             \filldraw[fill=black,draw=white] circle (1.4pt);
                                \end{tikzpicture}}
\newsavebox\sbtaftert
\savebox\sbtaftert{\begin{tikzpicture}[baseline=-2.5pt]
                             \filldraw[fill=black,draw=white] circle (1.4pt);
                                \end{tikzpicture}}
\newsavebox\sbtafters
\savebox\sbtafters{\begin{tikzpicture}[baseline=-1.75pt]
                             \filldraw[fill=black,draw=white] circle (1.2pt);
                                \end{tikzpicture}}
\newsavebox\sbtafterss
\savebox\sbtafterss{\begin{tikzpicture}[baseline=-1.25pt]
                             \filldraw[fill=black,draw=white] circle (1.0pt);
                                \end{tikzpicture}}
\newcommand{\tafter}{
        \begingroup\mathchoice{\mathbin{\usebox\sbtafterd}}
        {\mathbin{\usebox\sbtaftert}}
        {\mathbin{\usebox\sbtafters}}
        {\mathbin{\usebox\sbtafterss}}\endgroup}
\newsavebox\sbtto
\savebox\sbtto{\begin{tikzpicture}[baseline=-2.5pt]
            \filldraw[fill=black,draw=white] circle (1.4pt);
                \end{tikzpicture}}
\newcommand\tto{\mathrel{\ooalign{$\to$\cr
            \hfil$\usebox\sbtto$\hfil\cr}}}
\newsavebox\sbpto
\savebox\sbpto{\begin{tikzpicture}[baseline=-2.5pt]
            \filldraw[fill=white,draw=white] circle (1.4pt);
            \filldraw[fill=white,draw=black,line width=0.2pt] circle (1.2pt);
            \filldraw[fill=black, line width=0] circle (0.2pt);
                \end{tikzpicture}}
\newcommand\pto{\mathrel{\ooalign{$\to$\cr
            \hfil$\usebox\sbpto$\hfil\cr}}}
\newsavebox\sbtplusd
\savebox\sbtplusd{\begin{tikzpicture}[baseline=-2.5pt]
                 \filldraw[fill=black,draw=white] circle (1.4pt);
                    \end{tikzpicture}}
\newsavebox\sbtplust
\savebox\sbtplust{\begin{tikzpicture}[baseline=-2.5pt]
                 \filldraw[fill=black,draw=white] circle (1.4pt);
                    \end{tikzpicture}}
\newsavebox\sbtpluss
\savebox\sbtpluss{\begin{tikzpicture}[baseline=-1.75pt]
                 \filldraw[fill=black,draw=white] circle (1.2pt);
                    \end{tikzpicture}}
\newsavebox\sbtplusss
\savebox\sbtplusss{\begin{tikzpicture}[baseline=-1.25pt]
                 \filldraw[fill=black,draw=white] circle (1.0pt);
                    \end{tikzpicture}}
\newcommand{\tplus}{
        \mathchoice%
    {\mathbin{\ooalign{$\displaystyle+$\cr\hfil$\usebox\sbtplusd$\hfil\cr}}}
    {\mathbin{\ooalign{$\textstyle+$\cr\hfil$\usebox\sbtplust$\hfil\cr}}}
    {\mathbin{\ooalign{$\scriptstyle+$\cr\hfil$\usebox\sbtpluss$\hfil\cr}}}
    {\mathbin{\ooalign{$\scriptscriptstyle+$\cr\hfil$\usebox\sbtplusss$\hfil\cr}}}}
\newsavebox\sbpplusd
\savebox\sbpplusd{\begin{tikzpicture}[baseline=-2.5pt]
            \filldraw[fill=white,draw=white] circle (1.4pt);
            \filldraw[fill=white,draw=black, line width=0.2pt] circle (1.2pt);
            \filldraw[fill=black, line width=0] circle (0.2pt);
                                \end{tikzpicture}}
\newsavebox\sbpplust
\savebox\sbpplust{\begin{tikzpicture}[baseline=-2.5pt]
            \filldraw[fill=white,draw=white] circle (1.4pt);
            \filldraw[fill=white,draw=black, line width=0.2pt] circle (1.2pt);
            \filldraw[fill=black, line width=0] circle (0.2pt);
                                \end{tikzpicture}}
\newsavebox\sbppluss
\savebox\sbppluss{\begin{tikzpicture}[baseline=-1.75pt]
            \filldraw[fill=white,draw=white] circle (1.3pt);
            \filldraw[fill=white,draw=black,line width=0.15pt] circle (1.1pt);
            \filldraw[fill=black, line width=0] circle (0.15pt);
                                \end{tikzpicture}}
\newsavebox\sbpplusss
\savebox\sbpplusss{\begin{tikzpicture}[baseline=-1.25pt]
            \filldraw[fill=white,draw=white] circle (1.2pt);
            \filldraw[fill=white,draw=black,line width=0.1pt] circle (1.0pt);
            \filldraw[fill=black, line width=0] circle (0.15pt);
                                \end{tikzpicture}}
\newcommand{\pplus}{
        \mathchoice%
    {\mathbin{\ooalign{$\textstyle+$\cr\hfil$\usebox\sbpplusd$\hfil\cr}}}
    {\mathbin{\ooalign{$\textstyle+$\cr\hfil$\usebox\sbpplust$\hfil\cr}}}
    {\mathbin{\ooalign{$\scriptstyle+$\cr\hfil$\usebox\sbppluss$\hfil\cr}}}
    {\mathbin{\ooalign{$\scriptscriptstyle+$\cr\hfil$\usebox\sbpplusss$\hfil\cr}}}}
\DeclareSymbolFont{T1op}{T1}{cmr}{m}{n}
\DeclareMathSymbol{\mathguilsinglleft}{\mathopen}{T1op}{'016}
\DeclareMathSymbol{\mathguilsinglright}{\mathclose}{T1op}{'017}
\newcommand{\klin}[1]{\mathguilsinglleft#1\mathguilsinglright}
\DeclareMathSymbol{\lasyguilsinglleft}{\mathopen}{lasy}{40}
\DeclareMathSymbol{\lasyguilsinglright}{\mathclose}{lasy}{41}
\DeclareFontFamily{U}{mathb}{\hyphenchar\font45}
\DeclareFontShape{U}{mathb}{m}{n}{
      <5> <6> <7> <8> <9> <10> gen * mathb
      <10.95> mathb10 <12> <14.4> <17.28> <20.74> <24.88> mathb12
      }{}
\DeclareSymbolFont{mathb}{U}{mathb}{m}{n}
\DeclareFontFamily{U}{mathx}{\hyphenchar\font45}
\DeclareFontShape{U}{mathx}{m}{n}{
      <5> <6> <7> <8> <9> <10>
      <10.95> <12> <14.4> <17.28> <20.74> <24.88>
      mathx10
      }{}
\DeclareSymbolFont{mathx}{U}{mathx}{m}{n}
\DeclareMathDelimiter{\lgroupabx}{4}{mathb}{"70}{mathx}{"76}
\DeclareMathDelimiter{\rgroupabx}{5}{mathb}{"71}{mathx}{"77}
\newcounter{main}
\newtheorem{theorem}[main]{Theorem}
\newtheorem{lemma}[main]{Lemma}
\newtheorem{proposition}[main]{Proposition}
\newtheorem{corollary}[main]{Corollary}
\newtheorem{definition}[main]{Definition}
\newtheorem{exercise}[main]{Exercise}
\theoremstyle{definition}
\newtheorem{remark}[main]{Remark}
\newtheorem{discussion}[main]{Discussion}
\newtheorem{postulate}[main]{Postulate}
\newtheorem{example}[main]{Example}
\newtheorem{examples}[main]{Examples}
\renewenvironment{proof}[1][Proof]%
   { \begin{trivlist}%
     \item[\hskip \labelsep {\bfseries #1}]%
   }%
   { \end{trivlist}%
   }
\newcommand{\set}[2]{\{#1\;|\;#2\}}
\newcommand{\setin}[3]{\{#1\in#2\;|\;#3\}}
\newcommand{\all}[2]{\forall{#1}.\,#2}
\newcommand{\allin}[3]{\forall{#1\in#2}.\,#3}
\newcommand{\exin}[3]{\exists{#1\in#2}.\,#3}
\newcommand{\tuple}[1]{\langle#1\rangle}
\newcommand{\dtuple}[1]{\tuple{\!\tuple{#1}\!}}
\newcommand{\downset}{\mathop{\downarrow}\!}
\newcommand{\pideal}[2]{\tuple{#1}_{#2}} 
\newcommand{\N}{\mathbb{N}}
\newcommand{\NNO}{\N}
\newcommand{\C}{\mathbb{C}}
\newcommand{\R}{\mathbb{R}}
\newcommand{\Z}{\mathbb{Z}}
\newcommand{\intd}{{\kern.2em}\mathrm{d}{\kern.03em}}
\newcommand{\ket}[1]{\ensuremath{|{\kern.1em}#1{\kern.1em}\rangle}}
\newcommand{\bigket}[1]{\ensuremath{\big|{\kern.1em}#1{\kern.1em}\big\rangle}}
\newcommand{\bra}[1]{\langle\,#1\,|}
\newcommand{\ie}{\textit{i.e.}\xspace}
\newcommand{\bang}{\mathord{!}}
\newcommand{\andthen}{\mathrel{\&}}
\newcommand{\Dst}{\mathcal{D}}
\newcommand{\sDst}{\Dst_{\leq 1}}
\newcommand{\Mlt}{\mathcal{M}}
\newcommand{\Giry}{\mathcal{G}}
\newcommand{\Pow}{\mathcal{P}}
\newcommand{\idmap}[1][]{\ensuremath{\mathrm{id}_{#1}}}
\newcommand{\op}[1]{#1^{\mathrm{op}}}
\newcommand{\orthogonal}{\mathrel{\bot}}
\newcommand{\bigovee}{\mathop{\vphantom{\sum}\mathchoice%
        {\vcenter{\hbox{\huge $\ovee$}}}%
        {\vcenter{\hbox{\Large $\ovee$}}}%
        {\ovee}{\ovee}}\displaylimits}
\newcommand{\supp}{\textsl{supp}}
\newcommand{\Sub}{\ensuremath{\textsl{Sub}}}
\newcommand{\Quot}{\ensuremath{\textsl{Quot}}\xspace}
\newcommand{\Hom}{\textsl{Hom}}
\newcommand{\dis}{\ensuremath{\textsl{dis}}}
\newcommand{\tr}{\ensuremath{\textsl{tr}}}
\newcommand{\indic}[1]{\mathbf{1}_{#1}}
\newcommand{\img}{\mathrm{im}}
\newcommand{\kerbot}{\ensuremath{\ker^{\bot}}}
\newcommand{\imgbot}{\ensuremath{\img^{\bot}}}
\newcommand{\iep}{\mathrm{ie}}
\newcommand{\coker}{\ensuremath{\mathrm{coker}}}
\newcommand{\sotimes}{\mathrel{\raisebox{.05pc}{$\scriptstyle \otimes$}}}
\newcommand{\inprod}[2]{\ensuremath{\langle #1\,|\,#2 \rangle}}
\newcommand{\floor}[1]{\ensuremath{\lfloor#1\rfloor}}
\newcommand{\ceil}[1]{\ensuremath{{\lceil#1\rceil}}}
\newcommand{\cmpr}[2]{\ensuremath{\{#1|{\kern.2ex}#2\}}}
\newcommand{\instr}{{\textrm{instr}}}
\newcommand{\dc}{{\textrm{dc}}}    
\newcommand{\asrt}{{\textrm{asrt}}}
\newcommand{\defeq}{\smash{\stackrel{\mathrm{def}}{=}}}
\newcommand{\tbox}[1]{\ensuremath{#1^{\ast}}}
\newcommand{\pbox}[1]{\ensuremath{#1^{\Box}}}
\newcommand{\pdiam}[1]{\ensuremath{#1^{\Diamond}}}
\newcommand{\tstat}[1]{\ensuremath{#1_{\ast}}}
\newcommand{\one}{\ensuremath{\mathbf{1}}}
\newcommand{\zero}{\ensuremath{\mathbf{0}}}
\newcommand{\cat}[1]{\ensuremath{\mathbf{#1}}\xspace}
\newcommand{\Cat}[1]{\ensuremath{\mathbf{#1}}\xspace}
\newcommand{\Sets}{\Cat{Sets}}
\newcommand{\Ab}{\Cat{Ab}}
\newcommand{\OAb}{\Cat{OAb}}
\newcommand{\OUG}{\Cat{OUG}}
\newcommand{\OUS}{\Cat{OUS}}
\newcommand{\Top}{\Cat{Top}}
\newcommand{\Meas}{\Cat{Meas}}
\newcommand{\CH}{\Cat{CH}}
\newcommand{\CRng}{\Cat{CRng}}
\newcommand{\PCM}{\Cat{PCM}}
\newcommand{\PCMod}{\Cat{PCMod}}
\newcommand{\EA}{\Cat{EA}}
\newcommand{\BA}{\Cat{BA}}
\newcommand{\EMod}{\Cat{EMod}}
\newcommand{\Conv}{\Cat{Conv}}
\newcommand{\Hilb}{\Cat{Hilb}}
\newcommand{\Vect}{\Cat{Vect}}
\newcommand{\vNA}{\Cat{vNA}}
\newcommand{\CvNA}{\Cat{CvNA}}
\newcommand{\OMLatGal}{\Cat{OMLatGal}}
\newcommand{\LSub}{\ensuremath{\textsl{LSub}}\xspace}
\newcommand{\CLSub}{\ensuremath{\textsl{CLSub}}\xspace}
\newcommand{\Pred}{\ensuremath{\textsl{Pred}}\xspace}
\newcommand{\PPred}{\ensuremath{\Pred_{\Box}}\xspace}
\newcommand{\Stat}{\ensuremath{\textsl{Stat}}\xspace}
\newcommand{\SStat}{\ensuremath{\textsl{SStat}}\xspace}
\newcommand{\Par}{\ensuremath{\textsl{Par}}\xspace}
\newcommand{\Tot}{\ensuremath{\textsl{Tot}}\xspace}
\newcommand{\Causal}{\ensuremath{\textsl{Caus}}\xspace}
\newcommand{\End}{\ensuremath{\textsl{End}}\xspace}
\newcommand{\sEnd}{\ensuremath{\End_{\leq\idmap}}\xspace}
\newcommand{\CP}{\ensuremath{\textsl{CP}}\xspace}
\newcommand{\Cont}{\ensuremath{\textsl{C}}\xspace}
\newcommand{\ShaPred}{\ensuremath{\textsl{ShaPred}}\xspace}
\newcommand{\scalar}{\mathrel{\bullet}}
\newcommand{\B}{\mathcal{B}}
\newsavebox\sbground
\savebox\sbground{\begin{tikzpicture}[circuit ee IEC,yscale=0.5,xscale=0.4]
                \draw (0,-2ex) to (0,0) node[ground,rotate=90,xshift=.65ex] {};
                \end{tikzpicture}}
\newcommand\ground{\mathbin{\text{\raisebox{-0.2ex}{\usebox\sbground}}}}
\newcommand{\Kl}{\mathcal{K}{\kern-.2ex}\ell}
\newcommand{\EM}{\mathcal{E}{\kern-.2ex}\mathcal{M}}
\newcommand{\DM}{\ensuremath{\mathcal{D}{\kern-.85ex}\mathcal{M}}}
\newcommand{\QEDbox}{\square} 
\newcommand{\QED}{\hspace*{\fill}$\QEDbox$}
\newcommand{\conglongrightarrow}{\mathrel{\smash{\stackrel{
           \raisebox{.5ex}{$\scriptstyle\cong$}}{
           \raisebox{0ex}[0ex][0ex]{$\longrightarrow$}}}}}
\newcommand{\spmat}[1]{\ensuremath{\smash{\left(%
\begin{smallmatrix}#1\end{smallmatrix}%
\right)}}}
\newcommand{\pullback}[1][dr]{\save*!/#1-1.2pc/#1:(-1,1)@^{|-}\restore}
\newcommand{\xyline}[2][]{\ensuremath{\smash{\xymatrix@1#1{#2}}}}
\newcommand{\xyinline}[2][]{\ensuremath{\smash{\xymatrix@1#1{#2}}}^{\rule[8.5pt]{0pt}{0pt}}}
\newcommand{\IV}{\raisebox{-.2em}{$\xy
(0,2)*{\cdot};
(1,2)*{\cdot};
(2,2)*{\cdot};
(0,0)*{\cdot};
(1,0)*{\cdot};
{\ar@{-}(0,2); (0,0)};
{\ar@{-}(1,2); (1,0)};
{\ar@{-}(2,2); (1,0)};
\endxy$}}
\newcommand{\XI}{\raisebox{-.2em}{$\xy
(0,2)*{\cdot};
(1,2)*{\cdot};
(2,2)*{\cdot};
(0,0)*{\cdot};
(1,0)*{\cdot};
{\ar@{-}(0,2); (1,0)};
{\ar@{-}(1,2); (0,0)};
{\ar@{-}(2,2); (1,0)};
\endxy$}}
\begin{document}

\title{An Introduction to Effectus Theory}

\author{Kenta Cho, Bart Jacobs, Bas Westerbaan, Abraham Westerbaan \\[1em]
\small Institute for Computing and Information Sciences (iCIS), \\[-.5em]
\small Radboud University Nijmegen, The Netherlands. \\[-.5em]
\small Email: \texttt{\{K.Cho,bart,bwesterb,awesterb\}@cs.ru.nl}}

\date{\small\today}

\maketitle

\begin{abstract}
Effectus theory is a new branch of categorical logic that aims to
capture the essentials of quantum logic, with probabilistic and
Boolean logic as special cases.  Predicates in effectus theory are not
subobjects having a Heyting algebra structure, like in topos theory,
but `characteristic' functions, forming effect algebras. Such effect
algebras are algebraic models of quantitative logic, in which double
negation holds.  Effects in quantum theory and fuzzy predicates in
probability theory form examples of effect algebras.

This text is an account of the basics of effectus theory. It includes
the fundamental duality between states and effects, with the
associated Born rule for validity of an effect (predicate) in a
particular state.  A basic result says that effectuses can be
described equivalently in both `total' and `partial' form. So-called
`commutative' and `Boolean' effectuses are distinguished, for
probabilistic and classical models. It is shown how these Boolean
effectuses are essentially extensive categories. A large part of the
theory is devoted to the logical notions of comprehension and
quotient, which are described abstractly as right adjoint to truth,
and as left adjoint to falisity, respectively. It is illustrated how
comprehension and quotients are closely related to measurement.  The
paper closes with a section on `non-commutative' effectus theory,
where the appropriate formalisation is not entirely clear yet.
\end{abstract}

\clearpage

\tableofcontents

\clearpage
\section{Introduction}\label{sec:intro}

Effectus theory is a new branch of categorical logic that started
with~\cite{Jacobs15a}. The theory is already used in several other
publications, such
as~\cite{JacobsWW15,Cho15,ChoJWW15,Adams14,AdamsJ15}, and is being
actively developed. Therefore we think that it is appropriate to
provide a systematic, accessible introduction to this new area.

The aim of effectus theory is to axiomatise the categorical
characteristics of quantum computation and logic. This axiomatisation
should include probabilistic and classical computation and logic as
special cases.  The relevant logical structure is described in terms
of effect algebras. Such algebras have been introduced in theoretical
physics to describe generalised (quantum) probability theories,
see~\cite{FoulisB94}, and also~\cite{ChovanecK95,GiuntiniG94}. Effect
algebras generalise both classical logic, in the form of Boolean
algebras, and quantitative logic, with the unit interval $[0,1]$ of
real numbers as main example. The double negation law holds in such
effect algebras. One of the basic insights of~\cite{Jacobs15a} is that
certain elementary categorical structure --- involving coproducts and
pullbacks --- provides predicates with effect algebra structure.

In the usual set-theoretic world one can define predicates on a set
$X$ as subsets $P\subseteq X$ or equivalently as characteristic
functions $p \colon X \rightarrow 2$, where $2 = 1+1 = \{0,1\}$ is the
two-element set. The `subset' or `spatial' approach is the basis for
much of categorical logic where predicates are described as subobjects
$P\rightarrowtail X$. In topos theory, there is a bijective
correspondence between such subobjects of $X$ and characteristic maps
$X \rightarrow \Omega$ to a distinguished `classifier' object
$\Omega$. These predicates on an object $X$ in a topos form a Heyting
algebra, a basic formalisation of intuitionistic logic, in which the
double negation law fails. Topos theory has developed into a rich area
of mathematics, combining geometry and logic, see for instance the
handbooks~\cite{MaclaneM92} or~\cite{Johnstone02}.

Effectus theory breaks with this spatial approach and uses maps of the
form $X \rightarrow 1+1$ as predicates. Negation then corresponds to
swapping of outcomes in $1+1$, so that the double negation law
immediately holds. As mentioned, predicates in an effectus are not
Heyting algebras, but effect algebras: their logic is completely
different.

The current text provides an introduction to this new area. It starts
with some basic results from the original source~\cite{Jacobs15a}.
But it differs from~\cite{Jacobs15a} in that it combines the total and
partial perspectives on effectuses right from the beginning, working
towards the equivalence of these two approaches, as developed
in~\cite{Cho15}. Similarly, this article provides a systematic
description of quotients and comprehension in effectuses, which are
neatly described as a chain of four adjunctions in a row, which
quotients as left adjoint to falsity, and comprehension as right
adjoint to truth.  This extends~\cite{ChoJWW15} where quotients and
comprehension are described only concretely, in specific examples.

This paper contains several definitions and results that have not been
published before, such as:
\begin{itemize}
\item the construction of effectuses from `grounded' biproduct
  categories in Section~\ref{sec:biprod};

\item the definition of commutative (probabilistic) effectuses and
  Boolean (classical) effectuses in Section~\ref{sec:commbool}, giving
  a picture:
$$\xymatrix@C-1.5pc{
\ovalbox{\begin{minipage}{5em}\begin{center} Boolean \\ effectuses
   \end{center}\end{minipage}}
& \subseteq &
\ovalbox{\begin{minipage}{6em}\begin{center} commutative \\ effectuses 
   \end{center}\end{minipage}}
& \subseteq &
\ovalbox{\begin{minipage}{9em}\begin{center} (non-commutative) \\ effectuses
   \end{center}\end{minipage}}
}$$

\noindent The adjective `non-commutative' means `arbitrary'; it is
only used to distinguish the general case from the two special
subcases.

\item the equivalence between a Boolean effectus with comprehension
  and the well-known notion of extensive category; this gives many
  examples of (Boolean) effectuses, including categories of (compact
  Hausdorff) topological spaces, and of measurable spaces; also, the
  opposite $\op{\CRng}$ of the category of commutative rings is an
  example, see~\cite{ChoJWW15,Jacobs15a};

\item the first steps in non-commutative --- properly quantum ---
  effectus theory.
\end{itemize}

A recurring topic in effectus theory is the association of an `assert'
partial map $\asrt_{p} \colon X \rightarrow X+1$ with a predicate
$p\colon X \rightarrow 1+1$. This assert map is the `action' of the
predicate and is typical for a quantum setting, where an observation,
determined by a predicate, may have a side-effect and disturb the
system under observation. In a commutative or Boolean effectus one
does have such action maps $\asrt_p$, but they are side-effect
free. This is formalised by saying that these assert maps are below
the identity, in a suitable partial order on partial maps. Indeed, the
identity map does nothing, so maps below the identity also do not
change the state. This predicate-action correspondence is one of the
main topics.

Four running examples are used throughout the text, namely:
\begin{enumerate}
\item the category $\Sets$ of sets and functions, which is a Boolean
effectus;

\item the Kleisli category $\Kl(\Dst)$ of the distribution monad
  $\Dst$ on $\Sets$, which is a commutative effectus, modeling
  discrete probabilistic computation;

\item the opposite $\op{\OUG}$ of the category $\OUG$ of order unit
  groups, which is used as a relatively simple toy example to
  illustrate some basic notions;

\item the opposite $\op{\vNA}$ of the category $\vNA$ of von Neumann
  algebras, which is the prime example of a non-commutative effectus.
\end{enumerate}

\noindent The last two categories occur in opposite form because they
incorporate Heisenberg's view where quantum computations are predicate
transformers which operate in opposite direction, transforming
predicates on the post-state to predicates on the pre-state.

The insight that these diverse examples are all instance of the notion
of effectus has significant value. It allows us to grasp in these
examples what, for instance, the comprehension or quotient structure
of an effectus amounts to. These descriptions can be non-trivial and
hard to understand without the abstract categorical viewpoint. This
applies in particular to von Neumann algebras, where the `opposite'
interpretation can be confusing, and where notions like support
projection exist already for a long time
(see \textit{e.g.}~\cite[Dfn.~1.10.3]{Sakai71}),
but without their proper universal properties.

This text uses von Neumann algebras (or $W^*$-algebras) instead of
$C^*$-alge\-bras. The key aspect of von Neumann algebras that we use
is that that predicates/effects form a directed complete poset. This
allows us to define for an arbitrary predicate/effect $p$ the least
sharp predicate $\ceil{p}$ above $p$. This construction is crucial for
images, comprehension and for quotients. Additionally, (normal) states
separate predicates in von Neumann algebras, which is used in the
quotient construction. These two properties, directed completeness and
separation characterise von Neumann algebras among $C^*$-algebras,
following Kadison~\cite[Dfn.~1 and Thm.~1]{Kadison56} (see
also~\cite{Rennela14}).

Before starting, we would like to put effectus theory in broader
perspective.  The Oxford school in categorical quantum theory started
with~\cite{AbramskyC04}, see the recent books~\cite{HeunenV15}
and~\cite{CoeckeK16}. This approach takes the tensor $\otimes$ for
parallel composition as the main ingredient of quantum theory ---
following Schr\"odinger's emphasis on composite systems, and not von
Neumann's focus on measurement, see~\cite[\S\S2.2]{CoeckeL13}. In the
associated graphical calculus this tensor is represented simply via
paralell wires. This gives a powerful formalism in which many
protocols can be described. Coproducts (or biproducts) played a role
originally, but have become less prominent, probably because they do
not work well in a projective (\textit{i.e.}~`scalar-free')
setting~\cite{CoeckeD11} and make the graphical calculus more
complicated. The main examples for the Oxford school are so-called
dagger-compact closed monoidal categories, such as Hilbert spaces,
sets and relations, or finite-dimensional $C^*$-algebras and
completely positive maps.

In contrast, in effectus theory the coproduct $+$ is taken as
primitive constructor (and not the tensor $\otimes$), leading to a
strong emphasis on logic, validity and measurement, with the category
of von Neumann algebras as leading quantum example. A connection with
the Oxford school exists via the $\CP^*$-categories (with biproducts)
of~\cite{CoeckeHK14}. Section~\ref{sec:biprod} below shows that the
causal maps in a `grounded' biproduct category --- including
$\CP^*$-categories --- form an effectus.  Tensors can be added to an
effectus, see Section~\ref{sec:monoidal}, and lead to a much richer
theory, more closely connected to the Oxford school. But tensors are
not essential for the definition of an effectus theory, and for the
associated logic of effect algebras.

Other connections are emerging in recent work~\cite{Tull16} on
operational probabilistic theories~\cite{ChiribellaAP10} and
effectuses.  This makes it possible to transfer methods and results in
existing quantum theory and effectus theory back and forth.

One may ask: is effectus theory the new topos theory? It is definitely
too early to say. But what hopefully becomes clear from this
introduction is that effectus theory is a rich novel direction in
categorical logic that promises to capture essential aspects of the
quantum world and to provide a unifying framework that includes
the probabilistic and Boolean worlds as special cases.

This text gives a survey of first results in the theory of effectuses.
Some parts focus more on explaining ideas and examples, and some parts
serve as reference for basic facts --- with long series of results. We
briefly discuss the contents of the various sections. First, a
separate section is devoted to notation, where distinct notation is
introduced and explained for total and partial maps. Then,
Section~\ref{sec:effectus} introduces the central notion of effectus,
as a category with coproducts $(+, 0)$ and a final object $1$
satisfying certain elementary axioms. This same section describes
categorical consequences of these axioms, in particular about partial
projections and pairing, and introduces the leading examples of
effectuses.  Section~\ref{sec:partialpred} continues the exploration
and focuses on the partial monoid structure $(\ovee, 0)$ on partial
maps $X \rightarrow Y+1$ in an effectus. Predicates $X \rightarrow
1+1$ are a special case and have additional structure: it is shown
that they form effect algebras with scalar multiplication --- making
them effect modules. Subsequently, Section~\ref{sec:triangle} shows
that states $1 \rightarrow X$ of an object $X$ in an effectus form
convex sets, and that the basic adjunction between effect modules and
convex sets gives what is called a state-and-effect triangle of the
form:
$$\vcenter{\xymatrix{
\op{(\EMod_{M})}\ar@/^1.5ex/[rr]^-{\Hom(-,M)} & \top & 
   \Conv_{M}\ar@/^1.5ex/[ll]^-{\Hom(-,M)} \\
& \cat{B}\ar[ul]^{\Hom(-,1+1)=\Pred\quad}\ar[ur]_{\;\Stat=\Hom(1,-)} &
}}$$ 

\noindent where $\cat{B}$ is an effectus and $M = \Pred(1)$ is the
effect monoid of scalars in $\cat{B}$, see
Theorem~\ref{thm:effectusTriangle} for details. In this situation we
can describe the Born rule for validity $\omega\models p$ of a
predicate $p\colon X \rightarrow 1+1$ and a state $\omega \colon 1
\rightarrow X$ simply via composition, as scalar $p \after \omega
\colon 1\rightarrow 1+1$.

Section~\ref{sec:biprod} describes a new construction --- inspired by
$\CP^*$-categories~\cite{CoeckeHK14} --- to obtain an effectus as the
subcategory of `causal' maps in a biproduct category with special
ground maps $\ground$. Many examples of effectuses can be obtained
(and understood) in this way, where, for instance, causal maps
correspond to the unital ones. Subsequently, Section~\ref{sec:kerimg}
is more auxiliary: it introduces kernels and images in an effectus and
collects many basic results about them that will be used later on. A
basic result in the theory of effectuses, namely the equivalence of
the descriptions in terms of total and partial maps, is the topic of
Section~\ref{sec:partialtotal}, following~\cite{Cho15}. It plays a
central role in the theory, and once we have seen it, we freely switch
between the two descriptions, by using phrases like an `effectus in
total form', or an `effectus in partial form'.

Special axioms in an effectus are identified in
Section~\ref{sec:commbool} that capture probabilistic models and
Boolean models as special cases. Section~\ref{sec:monoidal} describes
tensors $\otimes$ in an effectus, for parallel composition. These
tensors come equipped with projections $X \leftarrow X\otimes Y
\rightarrow Y$, for weakening/discarding. But these tensors do
\emph{not} have copiers/diagonals $X \rightarrow X\otimes X$, since
contraction/duplication/cloning does not exist in the quantum world.
We show that if copiers exist, then the effectus becomes commutative.

The subsequent two Sections~\ref{sec:cmpr} and~\ref{sec:quot}
introduce the important notions of comprehension $\cmpr{X}{p}$ and
quotients $X/p$ for a predicate $p$ on an object $X$ in an
effectus. These notions are nicely captured categorically as right
adjoint to truth and as left adjoints to falisity, in a chain of
comprehesions:
$$\mbox{quotient } \dashv \mbox{ falsity } \dashv \mbox{ forget }
   \dashv \mbox{ truth } \dashv \mbox{ comprehension}$$

\noindent Such chains exist in all our leading examples, see
also~\cite{ChoJWW15}. The combination of comprehension and quotients
is studied in Section~\ref{sec:cmprquot}.

Inbetween, Section~\ref{sec:extcat} returns to Boolean effectuses and
gives a new characterisation: it shows that Boolean effectuses with
comprehension are essentially the same as extensive categories. The
latter are well-known kinds of categories in which finite coproducts
$(+,0)$ are well-behaved.

Our final Section~\ref{sec:noncomm} is of a different nature because
it describes some unclarities in the theory of effectuses which
require further research. What is unclear is the precise formalisation
of the general non-commutative case that captures the essentials of
quantum theory. This involves the intruiging relations between the
predicate-action correspondence for measurement on the one hand, and
quotients and comprehension on the other. We show that the `assert'
action maps are uniquely determined by postulates in von Neumann
algebras, but we cannot, at this stage, prove uniqueness at a general,
axiomatic level. More directions for future research are collected in
Section~\ref{sec:conclusion}.

\section{Notation}\label{sec:notation}

In the theory of effectuses, both total and partial maps play an
important role. When reasoning in an effectus one often switches
between total and partial maps. Since this may lead to confusion, we
address this topic explicitly in the current section, before
introducing the notion of effectus itself in the next section.

There are two equivalent ways of defining effectuses, see
Section~\ref{sec:partialtotal} for details.
\begin{enumerate}
\item The `total' approach starts from a category of total maps, and
  introduces partial maps as special, Kleisli maps for the lift monad
  $(-)+1$ in a larger enveloping Kleisli category. This is how
  effectuses were originally introduced in~\cite{Jacobs15a}.

\item The `partial' approach starts from a category of partial maps,
  and describes the total maps via a smaller `wide' subcategory with
  the same objects where the total maps are singled out via a special
  property. This approach comes from~\cite{Cho15}.
\end{enumerate}

\noindent In both cases we have a faithful functor between two
categories with the same objects:
$$\xymatrix@C+1pc{
\left({\begin{array}{c} \text{total} \\[-.2em] \text{maps} \end{array}}\right)
  \ar@{^(->}[r] &
\left({\begin{array}{c} \text{partial} \\[-.2em] \text{maps} \end{array}}\right)
}$$

\noindent These categories each have their own form of composition. In
order to disambiguate them we shall use different notation, for the
time being, namely $\tafter$\index{N}{$\tafter$, composition} and
$\pafter$.\index{N}{$\pafter$, composition in the Kleisli category of
  the lift monad} In fact, we shall also use different notation for
morphisms, namely $\tto$\index{N}{$\tto$, total map} and
$\pto$,\index{N}{$\pto$, partial map} and also for sums of maps,
namely $\tplus$ and $\pplus$. These notational differences are
relevant as long as we have not established the precise relationship
between total and partial maps, see
Theorem~\ref{thm:partialtotal}. Once we know the relationship, we
re-evaluate the situation in Discussion~\ref{dis:partialtotal}.

Only at that stage do we see clearly how to switch back-and-forth, and
can decide whether to take total or partial maps as first class
citizens.

\subsection*{Total maps}

Let $\cat{B}$ be a category with finite coproducts $(+, 0)$ and a
final object $1\in\cat{B}$. We consider the maps in $\cat{B}$ as
total, and we write the coprojections $\kappa_i$\index{N}{$\kappa_i$,
  $i$-th coprojection} and cotupling $[-,-]$\index{N}{$[f,g]$, cotuple
  of maps $f,g$} as maps:
$$\xymatrix@C-.7pc{
X_{1}\ar[r]|-{\tafter}^-{\kappa_1} & X_{1}+X_{2} & 
   X_{2}\ar[l]|-{\tafter}_-{\kappa_2} 
\qquad\mbox{and}\qquad
X_{1}+X_{2}\ar[rr]|-{\tafter}^-{[f_{1},f_{2}]} & & Y 
\quad\mbox{for}\quad 
X_{i}\ar[r]|-{\tafter}^-{f_i} & Y
}$$

\noindent The sum of two maps $f\colon X \tto A$, $g\colon Y \tto B$
is written as $f\tplus g = [\kappa_{1} \tafter f, \kappa_{2} \tafter
  g] \colon X+Y \tto A+B$.\index{N}{$\tplus$, sum of total maps} The
$n$-fold coproduct $X + \cdots + X$ of the same object $X$ is usually
written as $n\cdot X$, and called a copower. We use the notation
$\nabla = [\idmap, \idmap] \colon X+X \tto X$\index{N}{$\nabla =
  [\idmap,\idmap]$, codiagonal} for the codiagonal.

For an arbitrary object $X\in\cat{B}$ we write
$\bang_{X}$,\index{N}{$\bang$, unique map from an initial, or to a
  final, object} or simply $\bang$, both for the unique map from $0$
to $X$, and for the unique map from $X$ to $1$, as in:
$$\xymatrix{
0\ar|-{\tafter}[r]^-{\bang} & X
& &
X\ar|-{\tafter}[r]^-{\bang} & 1
}$$

\subsection*{Partial maps}

A partial map\index{S}{partial!-- map} is a map of the form $X
\rightarrow Y+1$. As is well-known, the assignment $X \mapsto X+1$
forms a monad on the category $\cat{B}$. It is often called the lift
or the maybe monad. The unit of this monad is the first coprojection
$\kappa_{1} \colon X \tto X+1$, and the multiplication is the cotuple
$[\idmap, \kappa_{2}] \colon (X+1)+1 \tto X+1$. We shall write
$\Par(\cat{B})$\index{N}{cat@$\Par(\cat{B})$ Kleisli category of the
  lift monad on $\cat{B}$} for the Kleisli category of this lift
monad.  Its objects are the objects of $\cat{B}$, and its maps from
$X$ to $Y$, written as $X \pto Y$, are morphisms $X \tto Y+1$ in
$\cat{B}$. Thus, (ordinary) maps in $\Par(\cat{B})$ are partial maps
in $\cat{B}$. Composition of $f\colon X \pto Y$ and $g\colon Y \pto Z$
in $\Par(\cat{B})$ is written as $g \pafter f \colon X \pto Z$, and is
described in $\Par(\cat{B})$ and in $\cat{B}$ as:
$$\xymatrix@C-.7pc{
g\pafter f = \Big(X\ar[r]|-{\pafter}^-{f} & 
   Y\ar[r]|-{\pafter}^-{g} & Z\Big) = 
\Big(X\ar[r]|-{\tafter}^-{f} & 
   Y+1\ar[rr]|-{\tafter}^-{[g,\kappa_{2}]} & & Z+1\Big).
}$$

\noindent Each map $h\colon X \tto Y$ in $\cat{B}$ gives rise to a map
$\klin{h} \colon X\pto Y$ in $\Par(\cat{B})$,\index{N}{$\klin{h} =
  \kappa_{1} \after h$, embedding into Kleisli category of the lift
  monad} namely:
$$\xymatrix{
\klin{h} = \Big(X\ar[r]|-{\tafter}^-{h} & Y\ar[r]|-{\tafter}^-{\kappa_{1}}
  & Y+1\Big).
}$$

\noindent We call such maps in $\Par(\cat{B})$ of the form $\klin{h}$
\emph{total},\index{S}{total map!-- in a category of partial maps} and
write them with this (single) guillemet notation $\klin{-}$. Later on,
in Lemma~\ref{lem:zero}, we shall see an alternative characterisation
of total maps, which will be much more useful.

The identity on $X$ in $\Par(\cat{B})$ is then the (total) map
$\klin{\idmap[X]} = \kappa_{1} \colon X \pto X$. There is an
identity-on-objects functor $\cat{B} \rightarrow \Par(\cat{B})$, which
we simply write as $\klin{-}$. It preserves composition, since:
$$\begin{array}{rcccccl}
\klin{k} \pafter \klin{h}
& = &
[\kappa_{1} \tafter k, \kappa_{2}] \tafter \kappa_{1} \tafter h
& = &
\kappa_{1} \tafter k \tafter h
& = &
\klin{k \tafter h}.
\end{array}$$

\noindent It is also not hard to see that:
$$\begin{array}{rclcrcl}
g \pafter \klin{h}
& = &
g \tafter h
& \qquad\mbox{and}\qquad &
\klin{k} \pafter f
& = &
(k\tplus\idmap) \tafter f.
\end{array}$$

The initial object $0\in\cat{B}$ is a \emph{zero object} in
$\Par(\cat{B})$, because there are unique maps $0 \pto X$ and $X\pto
0$ in $\Par(\cat{B})$, namely the maps $\bang\colon 0 \tto X+1$ and
$\bang\colon X \tto 0+1 \cong 1$ in $\cat{B}$. Hence, for each pair of
objects $X, Y$ there is a special \emph{zero
  map}\index{N}{$\zero$,!zero map} between them in $\Par(\cat{B})$,
defined as:
$$\xymatrix@C-.5pc{
{\Big(X}\ar[r]|-{\pafter}^-{\zero} & Y\Big) = 
   \Big(X\ar[r]|-{\tafter}^-{\bang} & 1\ar[r]|-{\tafter}^-{\kappa_{2}} & Y+1\Big).
}$$

\noindent Notice that $\zero \pafter g = \zero = f \pafter \zero$ for
all (partial) maps $f,g$. Notice that we write this zero map $\zero$
in bold face, in order to distinguish it from the zero object $0$. In
general, we shall write the top and bottom element of a poset as such
bold face $\one$ and $\zero$. Frequently, the zero map is indeed the
bottom element in a poset structure on homsets.

The presence of this zero map $\zero$ allows us to define kernel and
cokernel maps in $\Par(\cat{B})$, like in Abelian categories, namely
as (co)equaliser of a map $X \pto Y$ and the zero map $\zero\colon X
\pto Y$. This will appear later.

The coproducts $(+, 0)$ in $\cat{B}$ also form coproducts in
$\Par(\cat{B})$. The coprojections in $\Par(\cat{B})$ are:
$$\xymatrix@C+3pc{
X_{1}\ar[r]|-{\tafter}^-{\klin{\kappa_{1}} = \kappa_{1} \tafter \kappa_{1}} 
   & X_{1}+X_{2} &
   X_{2}\ar[l]|-{\tafter}_-{\klin{\kappa_{2}} = \kappa_{1} \tafter \kappa_{2}}
}$$

\noindent The cotuple $[f,g]$ in $\Par(\cat{B})$ is the same as in
$\cat{B}$. For two maps $f\colon X\pto A$ and $g\colon Y\pto B$ there
is thus a sum of maps $f \pplus g \colon X+Y\pto
A+B$\index{N}{$\pplus$ sum of partial maps} in $\Par(\cat{B})$, which is
defined in $\cat{B}$ as:
$$\xymatrix@C+2pc{
f \pplus g = \Big(X+Y\ar[rr]|-{\pafter}^-{[(\kappa_{1}\tplus \idmap) \tafter f,
   (\kappa_{2}\tplus \idmap) \tafter g]} & & (A+B)+1\Big).
}$$

\noindent It is not hard to see that:
$$\begin{array}{rclcrcl}
\big[\klin{f}, \klin{g}\big]
& = &
\klin{[f,g]}
& \qquad\mbox{and}\qquad &
\klin{h} \pplus \klin{k}
& = &
\klin{h\tplus k}.
\end{array}$$

\auxproof{
$$\begin{array}{rcl}
\klin{h} \pplus \klin{k}
& = &
[(\kappa_{1}+\idmap) \after \kappa_{1} \after h
   (\kappa_{2}+\idmap) \after \kappa_{1} \after k] \\
& = &
[\kappa_{1} \after \kappa_{1} \after h, 
   \kappa_{1} \after \kappa_{2} \after k] \\
& = &
\kappa_{1} \after [\kappa_{1} \after h, \kappa_{2} \after k] \\
& = &
\kappa_{1} \after (h+k) \\
& = &
\klin{h+k}.
\end{array}$$
}

The coproduct $X_{1}+X_{2}$ in $\Par(\cat{B})$ comes with `partial
projections'\index{S}{partial!-- projection} $\rhd_{i} \colon X_{1}+X_{2} \pto
X_{i}$,\index{N}{$\rhd_{i}$, $i$-th partial projection} described in
$\cat{B}$ as:
\begin{equation}
\label{diag:partprojtot}
\xymatrix{
X_{1}+1 & & X_{1}+X_{2}\ar[ll]|-{\tafter}_-{\rhd_{1} = \idmap\tplus \bang}
   \ar[rr]|-{\tafter}^-{\rhd_{2} = [\kappa_{2}\tafter\bang, \kappa_{1}]} & & X_{2}+1
}
\end{equation}

\noindent Equivalently they can be described in $\Par(\cat{B})$ via
the zero map $\zero$ as cotuples:
\begin{equation}
\label{diag:partprojpart}
\xymatrix{
X_{1} & & X_{1}+X_{2}\ar[ll]|-{\pafter}_-{\rhd_{1} = [\idmap,\zero]}
   \ar[rr]|-{\pafter}^-{\rhd_{2} = [\zero,\idmap]} & & X_{2}
}
\end{equation}

\noindent These partial projections are natural in $\Par(\cat{B})$, in
the sense that:
\begin{equation}
\label{eqn:partprojnat}
\begin{array}{rcl}
\rhd_{i} \pafter (f_{1}\pplus f_{2})
& = &
f_{i} \pafter \rhd_{i}.
\end{array}
\end{equation}

\noindent It is easy to define these projections $\rhd_{i} \colon
X_{1} + \cdots + X_{n} \pto X_{i}$ for $n$-ary coproducts.

The coproduct $+$ in $\Par(\cat{B})$ forms what is sometimes called a
`split product' or `butterfly' product (see
\textit{e.g.}~\cite[2.1.7]{Grandis12}): the following triangles
commute in $\Par(\cat{B})$, where the two diagonals are zero maps:
\begin{equation}
\label{diag:butterfly}
\vcenter{\xymatrix@R-1pc@C+1pc{
X\ar@{=}[dd]\ar[rd]|-{\pafter}^-{\klin{\kappa_{1}}} & & 
   Y\ar[ld]|-{\pafter}_{\klin{\kappa_{2}}}\ar@{=}[dd]
\\
& X+Y\ar[ld]|-{\pafter}^{\rhd_1}\ar[rd]|-{\pafter}_{\rhd_2} &
\\
X & & Y
}}
\end{equation}

\auxproof{
$$\begin{array}{rcl}
\rhd_{1} \pafter (f_{1}+f_{2})
& = &
[\idmap+\bang, \kappa_{2}] \after [(\kappa_{1}+\idmap) \after f_{1}, 
   (\kappa_{2}+\idmap) \after f_{2}] \\
& = &
[[\kappa_{1}, \kappa_{2}] \after f_{1}, 
   [\kappa_{2}\after\bang, \kappa_{2}] \after f_{2}] \\
& = &
[f_{1}, \kappa_{2} \after [\bang,\idmap] \after f_{2}] \\
& = &
[f_{1}, \kappa_{2} \after \bang] \\
& = &
[f_{1}, \kappa_{2}] \after (\idmap+\bang) \\
& = &
f_{1} \pafter \rhd_{1}
\\
\rhd_{2} \pafter (f_{1}+f_{2})
& = &
[[\kappa_{2}\after\bang, \kappa_{1}],\kappa_{2}] \after 
   [(\kappa_{1}+\idmap) \after f_{1}, (\kappa_{2}+\idmap) \after f_{2}] \\
& = &
[[\kappa_{2}\after\bang, \kappa_{2}] \after f_{1}, 
   [\kappa_{1}, \kappa_{2}] \after f_{2}] \\
& = &
[\kappa_{2} \after \bang, f_{2}] \\
& = &
[f_{2}, \kappa_{2}] \after [\kappa_{2}\after\bang, \kappa_{1}] \\
& = &
f_{2} \pafter \rhd_{2}
\\
J(g_{1})+J(g_{2})
& = &
[(\kappa_{1}+\idmap) \after \kappa_{1} \after g_{1}, 
   (\kappa_{2}+\idmap) \after \kappa_{1} \after g_{2}] \\
& = &
[\kappa_{1} \after \kappa_{1} \after g_{1},
   \kappa_{1} \after \kappa_{2} \after g_{2}] \\
& = &
\kappa_{1} \after \kappa_{1} \after g_{1}, \kappa_{2} \after g_{2}] \\
& = &
\kappa_{1} \after (g_{1}+g_{2}) \\
& = &
J(g_{1}+g_{2}).
\end{array}$$

We also check the butterfly equations:
$$\begin{array}{rcl}
\rhd_{1} \pafter J(\kappa_{1})
& = &
(\idmap+\bang) \after \kappa_{1} \\
& = &
\kappa_{1} \\
\rhd_{1} \pafter J(\kappa_{2})
& = &
(\idmap+\bang) \after \kappa_{2} \\
& = &
\kappa_{2} \after \bang \\
& = &
0 \\
\rhd_{2} \pafter J(\kappa_{1})
& = &
[\kappa_{2} \after \bang, \kappa_{1}] \after \kappa_{1} \\
& = &
\kappa_{2} \after \bang \\
& = &
0 \\
\rhd_{2} \pafter J(\kappa_{2})
& = &
[\kappa_{2} \after \bang, \kappa_{1}] \after \kappa_{2} \\
& = &
\kappa_{1}.
\end{array}$$
}

\noindent In particular, the coprojections $\klin{\kappa_i}$ are split
monos, and the projections $\rhd_{i}$ are split epis in
$\Par(\cat{B})$.

In this diagram~\eqref{diag:butterfly} we write $\klin{\kappa_1}
\colon X \pto X+Y$ for the coprojection in a category of partial maps
$\Par(\cat{B})$. Similarly we have written $\klin{\idmap}$ for the
identity in $\Par(\cat{B})$. From now on we shall omit these
guillemets $\klin{-}$ for coprojections and identities when the
context is clear. Thus we simply write $\kappa_{1} \colon X \pto X+Y$
and $\idmap \colon X \pto X$ for a coprojection or identity map in
$\Par(\cat{B})$.

\subsection*{Predicates and predicate transformers}

In the context of effectus theory a \emph{predicate} on an object $X$
is a total map of the form $X \tto 1+1$, or equivalently, a partial
map $X \pto 1$. The predicates $\one$ for true and $\zero$ for false
are given by:
$$\xymatrix@C-.7pc{ \one = \Big(X\ar[r]|-{\tafter}^-{\bang} &
  1\ar[r]|-{\tafter}^-{\kappa_1} & 1+1\Big) & & \zero =
  \Big(X\ar[r]|-{\tafter}^-{\bang} & 1\ar[r]|-{\tafter}^-{\kappa_2} &
  1+1\Big).  }$$\index{N}{$\one$,!truth
  predicate}\index{N}{$\zero$,!falsity predicate}

\noindent The zero predicate on $X$ is thus the zero map $X \pto 1$.
The negation, called \emph{orthosupplement} in this setting, of a
predicate $p\colon X \tto 1+1$ is obtained by swapping the outcomes:
\begin{equation}
\label{diag:orthosupplement}
\vcenter{\xymatrix@C-.5pc{
p^{\bot} = \Big(X\ar[r]|-{\tafter}^-{p} & 
   1+1\ar[rr]|-{\tafter}^-{[\kappa_{2},\kappa_{1}]}_-{\cong}
   & & 1+1\Big).
}}\index{N}{$p^{\bot}$, orthosupplement of predicate $p$}
\end{equation}

\noindent Clearly, $p^{\bot\bot} = p$ and $\one^{\bot} = \zero$. One
may expect that these predicates $X \tto 1+1$ form a Boolean algebra,
but this is not true in general.  When $\cat{B}$ is an effectus, the
predicates on $X$ form an \emph{effect algebra}. This is one of the
fundamental results of effectus theory, see
Section~\ref{sec:partialpred} for details.  The double negation law
$p^{\bot\bot} = p$ always holds in an effect algebra.

We use two kinds of re-indexing\index{S}{re-indexing} or
substitution\index{S}{substitution} operations for a map, each
transferring predicates on the codomain to predicates on the
domain. For \emph{total} maps $f\colon Y \tto X$ and \emph{partial}
maps $g \colon Y \pto X$ we write:
\begin{equation}
\label{diag:subst}
\vcenter{\xymatrix@C-.9pc{
\tbox{f}(p) = \Big(Y\ar[r]|-{\tafter}^-{f} & X\ar[r]|-{\tafter}^-{p} & 1+1\Big)
& 
\pbox{g}(p) = \Big(Y\ar[r]|-{\tafter}^-{g} & 
   X+1\ar[rr]|-{\tafter}^-{[p, \kappa_{1}]} & & 1+1\Big).
}}\index{N}{subst@$\tbox{f}$, substitution predicate transformer for 
total map $f$}\index{N}{subst@$\pbox{g}$, modal predicate
    transformer for partial map $g$}
\end{equation}

\noindent The idea is that $\tbox{f}(p)$ is true on an input to $f$
iff $p$ is true on its output. Similarly, $\pbox{g}(p)$ is true on an
input to $g$ iff either $g$ does not terminate, or $g$ terminates and
$p$ is true on its output (\textit{i.e.}~if $g$ terminates then $p$ is
true on the output). The `box' notation $\pbox{g}$ suggests this modal
reading, see also the De Morgan dual $\pdiam{g}$ in
point~\eqref{exc:subst:partial} below.

The next result collects some basic observations. The proofs
are easy and left to the reader.

\begin{exercise}
\label{exc:subst}
The two substitution (or modal) operations $\tbox{(-)}$ and
$\pbox{(-)}$ for total and partial maps satisfy:
\begin{enumerate}
\item \label{exc:subst:totalfun} $\tbox{\idmap}(p) = p$ and
  $\tbox{(f_{2} \tafter f_{1})}(p) =
  \tbox{f_{1}}\big(\tbox{f_{2}}(p)\big)$;

\item \label{exc:subst:partialfun} $\pbox{\klin{\idmap}}(p) = p$
  and $\pbox{(g_{2} \pafter g_{1})}(p) =
  \pbox{g_{1}}\big(\pbox{g_{2}}(p)\big)$;

\item \label{exc:subst:totalpartial} $\pbox{\klin{f}}(p) = \tbox{f}(p)$;

\item \label{exc:subst:partial} $\pbox{g}(p) = (p^{\bot} \pafter
  g)^{\bot} = \pdiam{g}(p^{\bot})^{\bot}$, if we define $\pdiam{g}(q)
  = q \pafter f$;\index{N}{subst@$\pdiam{g}$, modal predicate
    transformer for partial map $g$}

\item \label{exc:subst:totalpres} $\tbox{f}(p^{\bot}) =
  \tbox{f}(p)^{\bot}$ and $\tbox{f}(\one) = \one$, and thus also
  $\tbox{f}(\zero) = \zero$;

\item \label{exc:subst:partialtruth} $\pbox{g}(\one) = \one$. \QED
\end{enumerate}
\end{exercise}

\auxproof{
$$\begin{array}{rcl}
\pbox{(g_{2} \pafter g_{1})}(p)
& = &
[p, \kappa_{1}] \after [g_{2}, \kappa_{2}] \after g_{1} \\
& = &
[[p, \kappa_{1}] \after g_{2}, \kappa_{1}] \after g_{1} \\
& = &
[\pbox{g_{2}}(p), \kappa_{1}] \after g_{1} \\
& = &
\pbox{g_{1}}\big(\pbox{g_{2}}(p)\big) 
\\
\pbox{\klin{f}}(p)
& = &
[p, \kappa_{1}] \after \kappa_{1} \after f \\
& = &
p \after f \\
& = &
\tbox{f}(p)
\\
(p^{\bot} \pafter g)^{\bot}
& = &
[\kappa_{2},\kappa_{1}] \after 
   [[\kappa_{2},\kappa_{1}] \after p, \kappa_{2}] \after g \\
& = &
[p, \kappa_{1}] \after g \\
& = &
\pbox{g}(p)
\\
\pbox{g}(1)
& = &
[\kappa_{1} \after \bang, \kappa_{1}] \after g \\
& = &
\kappa_{1} \after [\bang, \idmap] \after g \\
& = &
\kappa_{1} \after \bang \after g \\
& = &
\kappa_{1} \after \bang \\
& = &
1.
\end{array}$$
}

\section{Effectuses}\label{sec:effectus}

In this section we give the definition of effectus, we provide several
leading examples, and we prove some basic categorical results about
effectuses. The definition below is slightly simpler than the original
one in~\cite{Jacobs15a}. We shall show that the requirements used
below are equivalent to the original ones, see
Lemma~\ref{lem:effectuspb}~\eqref{lem:effectuspb:tot}. There is no
clear intuition behind the nature of the axioms below: they ensure
that coproducts are well-behaved and provide the logical structure
that will be described in Section~\ref{sec:partialpred}.

\begin{definition}
\label{def:effectus}
An \emph{effectus}\index{S}{effectus} is a category $\cat{B}$ with
finite coproducts $(+, 0)$ and a final object $1$, such that both:
\begin{enumerate}
\item \label{def:effectus:pb} diagrams of the following form are
  pullbacks in $\cat{B}$:
\begin{equation}
\label{diag:effectuspb}
\vcenter{\xymatrix{
X+Y\ar[r]|-{\tafter}^-{\idmap\tplus \bang}\ar[d]|-{\tafter}_{\bang\tplus \idmap} & 
   X + 1\ar[d]|-{\tafter}^{\bang\tplus\idmap}
& &
X\ar[r]|-{\tafter}^-{\bang}\ar[d]|-{\tafter}_{\kappa_1} & 1\ar[d]|-{\tafter}^{\kappa_1}
\\
1+ Y\ar[r]|-{\tafter}_-{\idmap\tplus \bang} & 1 + 1
& &
X+Y\ar[r]|-{\tafter}_-{\bang\tplus \bang} & 1 + 1
}}
\end{equation}

\item \label{def:effectus:jm} the following two maps are jointly monic
  in $\cat{B}$:
\begin{equation}
\label{diag:effectusjointlymonic}
\vcenter{\xymatrix@C+2pc{
(1+1)+1\ar@/^1ex/[rr]|-{\tafter}^-{\IV = 
   [[\kappa_{1}, \kappa_{2}], \kappa_{2}]}
   \ar@/_1ex/[rr]|-{\tafter}_-{\XI = 
   [[\kappa_{2}, \kappa_{1}], \kappa_{2}]} & & 1+1
}}\index{N}{$\IV$}\index{N}{$\XI$}
\end{equation}

\noindent Joint monicity means that if $f,g$ satisfy $\IV \tafter f =
\IV \tafter g$ and $\XI \tafter f = \XI \tafter g$, then $f=g$. 
\end{enumerate}

A map of the form $X \tto 1+1$, or equivalently $X \pto 1$, will be
called a \emph{predicate}\index{S}{predicate!-- in an effectus} on
$X\in\cat{B}$. We write $\Pred(X)$\index{N}{$\Pred(X)$, collection of
  predicates on an object $X$} for the collection of predicates on
$X$.  Predicates of the special form $1 \tto 1+1$ are also called
\emph{scalars}.

A \emph{state}\index{S}{state!-- in an effectus} is a total map of the
form $1 \tto X$, and a \emph{substate}\index{S}{substate in an
  effectus} is a partial map $1 \pto X$. We write
$\Stat(X)$\index{N}{$\Stat(X)$, collection of states of an object $X$}
and $\SStat(X)$\index{N}{$\SStat(X)$, collection of substates of an
  object $X$} for the collections of such states and substates of $X$.
%
%
\end{definition}

The diagram on the left in~\eqref{diag:effectuspb} provides a form of
partial pairing for compatible partial maps, see
Lemma~\ref{lem:pairing} below; the diagram on the right ensures that
coprojections are disjoint, see Proposition~\ref{prop:effectuscoproj}
below. The joint monicity
requirement~\eqref{diag:effectusjointlymonic} is equivalent to joint
monicity of partial projections, see Lemma~\ref{lem:effectusjm} below.
It captures a form of cancellation.  The symbols $\IV$ and $\XI$
should suggest what these two maps do, as functions from a 3-element
set to a 2-element set, read downwards.

The diagram on the right in~\eqref{diag:effectuspb} only mentions the
first coprojection $\kappa_{1} \colon X \tto X+Y$. A corresponding
pullback exists for the second coprojection $\kappa_{2} \colon Y \tto
X+Y$ since it is related to $\kappa_1$ via the swap isomorphism $X+Y
\cong Y+X$. Similarly, the results below only mention the first
(co)projection, but hold for the second one as well.

An important property of effectuses is that predicates in an effectus
have the logical structure of an effect module. This will be
elaborated in the next few sections.

The next few results describe some categorical consequences of the
structure in an effectus. They focus especially on pullbacks. We
recall the following general result, known as the Pullback
Lemma:\index{S}{pullback lemma} consider a situation with two
commuting squares $A$ and $B$,
$$\xymatrix@R-2pc@C-2pc{
\cdot\ar[rr]\ar[dd] & & \cdot\ar[rr]\ar[dd] & & \cdot\ar[dd]
\\
& A & & B &
\\
\cdot\ar[rr] & & \cdot\ar[rr] & & \cdot
}$$

\noindent Then: if $B$ is a pullback, then $A$ is a pullback if
and only if the outer rectangle is a pullback.

\begin{lemma}
\label{lem:effectuspb}
Let $\cat{B}$ be a category with finite coproducts and a final object,
in which the squares~\eqref{diag:effectuspb} in the definition of an
effectus are pullbacks.
\begin{enumerate}
\item \label{lem:effectuspb:tot} Rectangles of the following forms are
  also pullbacks in $\cat{B}$.
\begin{equation}
\label{diag:effecutsderivedpb}
\vcenter{\xymatrix{
X+Y\ar[r]|-{\tafter}^-{\idmap\tplus g}\ar[d]|-{\tafter}_{f\tplus \idmap} & 
   X + B\ar[d]|-{\tafter}^{f\tplus\idmap}
& &
X\ar[r]|-{\tafter}^-{f}\ar[d]|-{\tafter}_{\kappa_1} & A\ar[d]|-{\tafter}^{\kappa_1}
\\
A + Y\ar[r]|-{\tafter}_-{\idmap\tplus g} & A + B
& &
X+Y\ar[r]|-{\tafter}_-{f\tplus g} & A + B
}}
\end{equation}

\item \label{lem:effectuspb:par} In the category $\Par(\cat{B})$ of
  partial maps the following squares are pullbacks, where $f\colon X
  \tto A$ and $g\colon Y \tto B$ are total maps.
\begin{equation}
\label{diag:effectuspartprojpb}
\vcenter{\xymatrix{
X+Y\ar[r]|-{\pafter}^-{\idmap\pplus \klin{g}}
   \ar[d]|-{\pafter}_{\klin{f}\pplus \idmap} & 
   X + B\ar[d]|-{\pafter}^{\klin{f}\pplus\idmap}
& &
X+Y\ar[r]|-{\pafter}^-{\klin{f}\pplus\idmap}\ar[d]|-{\pafter}_{\rhd_{1}} & 
   A+Y\ar[d]|-{\pafter}^{\rhd_{1}}
\\
A + Y\ar[r]|-{\pafter}_-{\idmap\pplus\klin{g}} & A + B
& &
X\ar[r]|-{\pafter}_-{\klin{f}} & A
}}
\end{equation}

\noindent Recall that we write $\rhd_{i}$ for the partial projection
maps from~\eqref{diag:partprojpart}.
\end{enumerate}
\end{lemma}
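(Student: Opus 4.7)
The plan is to derive both parts from the given pullbacks in~\eqref{diag:effectuspb} via the Pullback Lemma, with the partial case reduced to part~(1) through the iso $(Z+W)+1 \cong Z+(W+1)$.

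For the right square in~\eqref{diag:effecutsderivedpb}, I would apply the Pullback Lemma to the horizontal composite
$$\xymatrix@C-.3pc{
X\ar[r]|-{\tafter}^-{f}\ar[d]|-{\tafter}_{\kappa_1} & A\ar[r]|-{\tafter}^-{\bang}\ar[d]|-{\tafter}_{\kappa_1} & 1\ar[d]|-{\tafter}^{\kappa_1}
\\
X+Y\ar[r]|-{\tafter}_-{f\tplus g} & A+B\ar[r]|-{\tafter}_-{\bang \tplus \bang} & 1+1
}$$
The right square here is the second pullback of~\eqref{diag:effectuspb} taken at $A,B$, and the outer rectangle is the same pullback taken at $X,Y$ (using $(\bang\tplus\bang)\after(f\tplus g) = \bang\tplus\bang$); hence the left square is a pullback.

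For the left square in~\eqref{diag:effecutsderivedpb}, I would argue directly from the first pullback of~\eqref{diag:effectuspb}. Given a cone $u\colon T\tto A+Y$, $v\colon T\tto X+B$ with $(\idmap\tplus g)\after u = (f\tplus\idmap)\after v$, set $a = (\idmap\tplus \bang)\after v\colon T\tto X+1$ and $b = (\bang\tplus\idmap)\after u\colon T\tto 1+Y$. Since $\bang$ absorbs $f$ and $g$, both $(\bang\tplus\idmap)\after a$ and $(\idmap\tplus\bang)\after b$ equal $(\bang\tplus\bang)\after v = (\bang\tplus\bang)\after u$, so the pullback at $X,Y$ yields a unique $w\colon T\tto X+Y$ with $(\idmap\tplus\bang)\after w = a$ and $(\bang\tplus\idmap)\after w = b$. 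To check $(\idmap\tplus g)\after w = v$, I post-compose both sides with $\idmap\tplus\bang$ and $\bang\tplus\idmap$ and verify equality using the defining properties of $w$ together with the hypothesis; the pullback at $X,B$ (a fresh instance of the first pullback of~\eqref{diag:effectuspb}) then gives equality. Symmetrically for $(f\tplus\idmap)\after w = u$, and uniqueness of $w$ follows the same way.

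For part~(2), I would unfold partial maps to total maps into $(-)+1$ and use the isomorphism $(A+Y)+1 \cong A+(Y+1)$ to reinterpret each square as an instance of part~(1). The left square of~\eqref{diag:effectuspartprojpb} becomes, up to this iso, the pullback from part~(1) with $g$ replaced by $g\tplus\idmap_1\colon Y+1 \to B+1$; a cone in $\Par(\cat{B})$ is then literally a cone in $\cat{B}$ for the part~(1) square, and the universal $w$ produced there is the sought partial map. For the right square involving the partial projections, I would recall that $\rhd_1 = \idmap\tplus\bang$ in $\cat{B}$; after the same reshuffling, the square becomes the part~(1) pullback applied with $g = \bang\colon Y+1 \to 1$ (so $B = 1$), again transferring directly.

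The main obstacle I anticipate is bookkeeping: in the proof of the left square of part~(1), one has to juggle four auxiliary pullback instances (at $1,1$, $X,B$, $A,Y$, and $X,Y$) and repeatedly use that $\bang$ absorbs arbitrary maps into the final object; and in part~(2) one must carefully match the $\Par(\cat{B})$ composition $\pafter$ and sum $\pplus$ with their concrete descriptions in $\cat{B}$ across the iso $(X+Y)+1 \cong X+(Y+1)$ so that the pullback condition lines up correctly.
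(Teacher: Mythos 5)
Your proposal is correct and follows essentially the same route as the paper: part~(1) is obtained from the axiom pullbacks~\eqref{diag:effectuspb} by pasting (the right square via the Pullback Lemma along $\bang\tplus\bang \tafter (f\tplus g)=\bang\tplus\bang$, the left square by building the mediating map from the $X,Y$-instance and checking the two triangles against the $X,B$- and $A,Y$-instances), and part~(2) is reduced to part~(1) through the associativity isomorphism $(Z+W)+1\cong Z+(W+1)$, with $\rhd_1$ identified with $\idmap\tplus\bang$. The bookkeeping you flag is exactly what the paper carries out explicitly, and no step is missing.
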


The rectangles in~\eqref{diag:effectuspb} used to define effectuses
are instances of the more general
formulations~\eqref{diag:effecutsderivedpb} provided in this lemma.

\begin{proof}
We start with the diagram on the left
in~\eqref{diag:effecutsderivedpb}.  Suppose we have $h\colon C \tto
A+Y$ and $k\colon C \tto X+B$ satisfying $(\idmap\tplus g) \tafter h =
(f\tplus\idmap) \tafter k$. Consider the following diagram.
$$\xymatrix{
C\ar@{..>}[dr]|-{\tafter}^-{\ell}
   \ar@/_2ex/[ddr]|-{\tafter}_{h}\ar@/^2ex/[drr]|-{\tafter}^{k}
   \ar@/_4ex/[dddr]|-{\tafter}_{(\bang\tplus\idmap)\tafter h}
   \ar@/^4ex/[drrr]|-{\tafter}^{(\idmap\tplus \bang)\tafter k}
\\
& X+Y\ar[r]|-{\tafter}^-{\idmap\tplus g}\ar[d]|-{\tafter}^{f\tplus \idmap} & 
   X + B\ar[d]|-{\tafter}^{f\tplus\idmap}\ar[r]|-{\tafter}^-{\idmap\tplus \bang} &
   X + 1\ar[d]|-{\tafter}^{f\tplus\idmap} &
\\
& A + Y\ar[r]|-{\tafter}_-{\idmap\tplus g}\ar[d]|-{\tafter}^{\bang\tplus \idmap} & 
   A + B\ar[r]|-{\tafter}_-{\idmap\tplus \bang}\ar[d]|-{\tafter}^{\bang\tplus\idmap} &
   A + 1\ar[d]|-{\tafter}^{\bang\tplus\idmap}
\\
& 1 + Y\ar[r]|-{\tafter}_-{\idmap\tplus g} & 
   1+B\ar[r]|-{\tafter}_-{\idmap\tplus \bang} & 1+1
}$$

\noindent The outer diagram commutes since:
$$\begin{array}{rcl}
(\idmap\tplus\bang) \tafter (\idmap\tplus g) \tafter (\bang\tplus\idmap) \tafter h
& = &
(\idmap\tplus\bang) \tafter (\bang\tplus\idmap) \tafter (\idmap\tplus g) \tafter h \\
& = &
(\idmap\tplus\bang) \tafter (\bang\tplus\idmap) \tafter (f\tplus\idmap) \tafter k \\
& = &
(\bang\tplus\idmap) \tafter (f\tplus\idmap) \tafter (\idmap\tplus\bang) \tafter k.
\end{array}$$

\noindent The outer rectangle, consisting of all four squares, is a
pullback since it is an instance of the pullback on the left
in~\eqref{diag:effectuspb}. Hence there is a unique map $\ell \colon C
\tto A+B$ with $(\bang\tplus\idmap) \tafter \ell = (\bang\tplus\idmap) \tafter
h$ and $(\idmap\tplus\bang) \tafter \ell = (\idmap\tplus\bang) \tafter k$.  We
obtain $(f\tplus\idmap) \tafter \ell = h$ by uniqueness of mediating
maps, since the lower two squares form a pullback, again
by~\eqref{diag:effectuspb}. Similarly $(\idmap\tplus g) \tafter \ell =
k$ holds because the two square on the right together form a pullback.

In a similar way we show that the diagram on the right
in~\eqref{diag:effecutsderivedpb} is a pullback: let $h\colon C \tto X+Y$
and $k\colon C \rightarrow A$ satisfy $(f\tplus g) \tafter h = \kappa_{1}
\tafter k$. Then we have a situation:
$$\xymatrix{
C\ar@{..>}[dr]|-{\tafter}^-{\ell}
   \ar@/_2ex/[ddr]|-{\tafter}_{h}\ar@/^2ex/[drr]|-{\tafter}^{k}
   \ar@/^4ex/[drrr]|-{\tafter}^{\bang}
\\
& X\ar[r]|-{\tafter}^-{f}\ar[d]|-{\tafter}^{\kappa_{1}} & 
   A\ar[d]|-{\tafter}^{\kappa_1}\ar[r]|-{\tafter}^-{\bang} &
   1\ar[d]|-{\tafter}^{\kappa_1} &
\\
& X + Y\ar[r]|-{\tafter}_-{f\tplus g} &
   A + B\ar[r]|-{\tafter}_-{\bang\tplus \bang} &
   1 + 1
}$$

\noindent Since the two squares together and the square on the right
form a pullback, as on the right in~\eqref{diag:effectuspb}. Hence we
obtain a unique map $\ell \colon C \tto X$ with $\kappa_{1} \tafter
\ell = h$. We get $f \tafter \ell = k$ by uniqueness of mediating
maps, using the pullback on the right. 

We turn to the diagram~\eqref{diag:effectuspartprojpb} in the category
$\Par(\cat{B})$. We write $\alpha$ for the standard associativity
isomorphism $U+(V+W)\tto (U+V)+W$ in $\cat{B}$, given explicitly by
$\alpha = [\kappa_{1} \tafter \kappa_{1}, \kappa_{2}\tplus\idmap]$ and
$\alpha^{-1} = [\idmap\tplus\kappa_{1}, \kappa_{2} \tafter
  \kappa_{2}]$. For total maps $f\colon X \tto A$ and $g\colon Y \tto
B$ we have commuting diagrams in $\cat{B}$:
$$\vcenter{\xymatrix@C-1pc{
X+(Y+1)\ar[d]|-{\tafter}_{f\tplus\idmap}\ar[r]|-{\tafter}^-{\alpha}_-{\cong} &
   (X+Y)+1\ar[d]|-{\tafter}^-{[\klin{f}\pplus\idmap, \kappa_{2}]}
\\
A+(Y+1)\ar[r]|-{\tafter}^-{\alpha}_-{\cong} & (A+Y)+1
}}
\qquad
\vcenter{\xymatrix@R-1pc@C+.5pc{
X+(Y+1)\ar[r]|-{\tafter}^-{\idmap\tplus (g\tplus\idmap)}
   \ar[d]|-{\tafter}_{\alpha}^{\cong} &
   (X+B)+1\ar[d]|-{\tafter}_{\alpha}^{\cong}
\\
(X+Y)+1\ar[r]|-{\tafter}_-{[\idmap\pplus\klin{g}, \kappa_{2}]} & (X+B)+1
}}$$

\auxproof{
\noindent This works since:
$$\begin{array}{rclcrcl}
[\klin{f}\pplus\idmap, \kappa_{2}]
& = &
(f\tplus\idmap)\tplus\idmap
& \qquad\mbox{and}\qquad &
[\idmap\pplus\klin{g}, \kappa_{2}]
& = &
(\idmap\tplus g)\tplus\idmap.
\end{array}$$

$$\begin{array}{rcl}
[\klin{f}\pplus\idmap, \kappa_{2}]
& = &
[[(\kappa_{1}\tplus\idmap) \tafter \kappa_{1} \tafter f,
   (\kappa_{2}\tplus\idmap) \tafter \kappa_{1}], \kappa_{2}] \\
& = &
[[\kappa_{1} \tafter \kappa_{1} \tafter f, \kappa_{1} \tafter \kappa_{2}], 
   \kappa_{2}] \\
& = &
[\kappa_{1} \tafter (f\tplus\idmap), \kappa_{2}] \\
& = &
(f\tplus\idmap)\tplus\idmap
\\
{[\idmap\pplus\klin{g}, \kappa_{2}]}
& = &
[[(\kappa_{1}\tplus\idmap) \after \kappa_{1}, 
   (\kappa_{2}\tplus\idmap) \tafter \kappa_{1} \tafter g], \kappa_{2}] \\
& = &
[[\kappa_{1} \tafter \kappa_{1}, \kappa_{1} \tafter \kappa_{2} \tafter g],
   \kappa_{2}] \\
& = &
[\kappa_{1} \tafter (\idmap\tplus g), \kappa_{2}] \\
& = &
(\idmap\tplus g)\tplus\idmap.
\end{array}$$
}

\noindent As a result, the rectangle on the left below is a pullback
in $\cat{B}$, since it is related via the associativity isomorphisms
$\alpha$ to the rectangle on the right, which is a pullback
by~\eqref{diag:effecutsderivedpb}.
$$\xymatrix@C+.5pc{
(X+Y)+1\ar[r]|-{\tafter}^-{[\idmap\pplus\klin{g}, \kappa_{2}]}
   \ar[d]|-{\tafter}_-{[\klin{f}\pplus\idmap, \kappa_{2}]} & 
   (X+B)+1\ar[d]|-{\tafter}^-{[\klin{f}\pplus\idmap, \kappa_{2}]}
& 
X+(Y+1)\ar[r]|-{\tafter}^-{\idmap\tplus (g\tplus\idmap)}
   \ar[d]|-{\tafter}_{f\tplus\idmap} &
   X+(B+1)\ar[d]|-{\tafter}^{f\tplus\idmap}
\\
(A+Y)+1\ar[r]|-{\tafter}^-{[\idmap\pplus\klin{g}, \kappa_{2}]} & (A+B)+1
& 
A+(Y+1)\ar[r]|-{\tafter}^-{\idmap\tplus (g\tplus\idmap)} &
   A+(B+1)
}$$

\noindent The fact that the above rectangle on the left is a pullback
in $\cat{B}$ implies that the rectangle on the left
in~\eqref{diag:effectuspartprojpb} is a pullback in $\Par(\cat{B})$.

We use basically the same trick for the diagram on the right
in~\eqref{diag:effectuspartprojpb}. Let $h\colon C \pto X$ and
$k\colon C \pto A+Y$ be (partial) maps with $\klin{g} \pafter h =
\rhd_{1} \pafter k$.  When translated to the category $\cat{B}$ the
latter equation becomes $(g\tplus\idmap) \tafter h = [\rhd_{1},
  \kappa_{2}] \tafter k$.
$$\xymatrix{
C\ar@/_2ex/[ddr]|-{\tafter}_-{h}\ar@{..>}[dr]|-{\tafter}^(0.5){\ell}
   \ar@/^2ex/[drr]|-{\tafter}^{\raisebox{.2em}{$\scriptstyle\alpha^{-1} \tafter k$}}
\\
& X+(Y+1)\ar[r]|-{\tafter}^-{f\tplus\idmap}
   \ar[d]|-{\tafter}_{\idmap\tplus \bang}\pullback & 
   A+(Y+1)\ar[d]|-{\tafter}^{\idmap\tplus \bang = [\rhd_{1},\kappa_{2}] \tafter \alpha} 
\\
& X + 1\ar[r]|-{\tafter}_-{f\tplus\idmap} & A+1
}$$

\auxproof{
\noindent The outer diagram commutes, since:
$$\begin{array}{rcl}
[\rhd_{1},\kappa_{2}] \tafter \alpha
& = &
[\idmap\tplus \bang,\kappa_{2}] \tafter 
   [\kappa_{1} \tafter \kappa_{1}, \kappa_{2}\tplus\idmap] \\
& = &
[\kappa_{1} , [\kappa_{2}\tafter \bang, \kappa_{2}]] \\
& = &
[\kappa_{1} , \kappa_{2} \tafter [\bang,\idmap]] \\
& = &
\idmap\tplus \bang.
\end{array}$$
}

\noindent The rectangle is a pullback, as instance of the square on
the left in~\eqref{diag:effecutsderivedpb}. Then $\ell' = \alpha
\tafter \ell \colon C \pto X+Y$ is the required mediating map in
$\Par(\cat{B})$. \QED

\auxproof{
Since:
$$\begin{array}{rcl}
\rhd_{1} \pafter \ell'
& = &
[\rhd_{1}, \kappa_{2}] \tafter \ell' \\
& = &
[\rhd_{1}, \kappa_{2}] \tafter \alpha \tafter \ell \\
& = &
[\idmap\tplus \bang, \kappa_{2}] \tafter 
   [\kappa_{1} \tafter \kappa_{1}, \kappa_{2}\tplus\idmap] \tafter \ell \\
& = &
[\kappa_{1}, [\kappa_{2}\tafter \bang, \kappa_{2}]] \tafter \ell \\
& = &
[\kappa_{1}, \kappa_{2}\tafter [\bang, \kappa_{2}]] \tafter \ell \\
& = &
[\kappa_{1}, \kappa_{2}\tafter \bang] \tafter \ell \\
& = &
(\idmap\tplus \bang) \tafter \ell \\
& = &
h
\\
\klin{g\tplus\idmap} \pafter \ell'
& = &
((g\tplus\idmap)\tplus\idmap) \tafter \ell' \\
& = &
((g\tplus\idmap)\tplus\idmap) \tafter \alpha \tafter \ell \\
& = &
\alpha \tafter (g\tplus\idmap) \tafter \ell \\
& = &
\alpha \tafter \alpha^{-1} \tafter k \\
& = &
k.
\end{array}$$
}
\end{proof}

\begin{proposition}
\label{prop:effectuscoproj}
In an effectus coprojections $\kappa_{i} \colon X_{i} \tto
X_{1}+X_{2}$ are monic and disjoint, and the initial object $0$ is
strict.\index{S}{strict initial object}
\end{proposition}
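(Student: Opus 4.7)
The three statements will each be derived directly from the right-hand pullback square in the effectus axiom~\eqref{diag:effectuspb}, or from its generalized form in Lemma~\ref{lem:effectuspb}. The joint monicity axiom~\eqref{diag:effectusjointlymonic} plays no role here.

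For \emph{monicity} of $\kappa_{1}\colon X_{1} \tto X_{1}+X_{2}$, suppose $\kappa_{1} \tafter f = \kappa_{1} \tafter g$ for maps $f,g\colon C \tto X_{1}$. Both pairs $(f,\bang_{C})$ and $(g,\bang_{C})$ are then mediating maps into the right pullback in~\eqref{diag:effectuspb} for the common cone with apex $\kappa_{1} \tafter f$, since the commutativity $(\bang\tplus\bang)\tafter\kappa_{1}\tafter f = \kappa_{1}\tafter\bang_{C}$ is automatic. Uniqueness of the pullback mediator forces $f = g$; the case of $\kappa_{2}$ is symmetric.

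For \emph{disjointness}, I want to exhibit $0$ together with the unique maps $\bang\colon 0 \tto X_{i}$ as the pullback of $\kappa_{1}\colon X_{1}\tto X_{1}+X_{2}$ and $\kappa_{2}\colon X_{2}\tto X_{1}+X_{2}$. The cleanest route is to apply the second pullback rectangle of Lemma~\ref{lem:effectuspb}~\eqref{lem:effectuspb:tot} specialized to $X := 0,\ Y := X_{2},\ A := X_{1},\ B := X_{2},\ f := \bang,\ g := \idmap$. Identifying $0+X_{2} \cong X_{2}$ via $\kappa_{2}$, the map $\kappa_{1}\colon 0 \tto 0+X_{2}$ becomes $\bang\colon 0 \tto X_{2}$ and the bottom arrow $\bang\tplus\idmap$ becomes $\kappa_{2}\colon X_{2} \tto X_{1}+X_{2}$, so the pullback takes exactly the required shape.

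For \emph{strictness}, given $f\colon X \tto 0$, I specialize the right pullback in~\eqref{diag:effectuspb} to $X' := 0,\ Y' := X$. After identifying $0+X \cong X$ via $\kappa_{2}$, the resulting pullback square has $0$ in the top-left, legs $\bang\colon 0 \tto X$ and $\bang\colon 0 \tto 1$, and bottom arrow $\kappa_{2} \tafter \bang_{X}\colon X \tto 1+1$. The cone $(\idmap_{X},\bang_{X})$ into this square is valid because $\kappa_{1} \tafter \bang_{X} = \kappa_{2} \tafter \bang_{X}$: both sides equal the composite $X \xrightarrow{f} 0 \xrightarrow{\bang} 1+1$, since any map out of $0$ is unique. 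The mediator $u\colon X \tto 0$ then satisfies $\bang \tafter u = \idmap_{X}$, and $u \tafter \bang = \idmap_{0}$ because $0$ is initial, so $\bang$ and $u$ are mutually inverse and $X \cong 0$. The main bookkeeping obstacle across all three parts is tracking the canonical isomorphisms $0+Y \cong Y$ in the specializations; once those identifications are made, each argument is a short invocation of the universal property of a pullback.
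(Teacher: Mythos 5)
Your proof is correct and follows essentially the same route as the paper: all three claims are extracted from the pullback axioms~\eqref{diag:effectuspb} and their derived forms in Lemma~\ref{lem:effectuspb}, modulo the canonical isomorphisms $0+Y\cong Y$. The only cosmetic differences are that for disjointness you instantiate the right-hand derived pullback where the paper uses the left-hand one, and for strictness you should add the one-line observation that the mediating map $u$ with $\bang\tafter u=\idmap_{X}$ coincides with the given $f$ (both compose with $\bang\colon 0\tto X$ to $\idmap_{0}$ by initiality, and $\bang$ is invertible), so that $f$ itself is the isomorphism.
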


An easy consequence of this result is that for an effectus $\cat{B}$
the canonical functor $\klin{-} \colon \cat{B} \rightarrow
\Par(\cat{B})$ from $\cat{B}$ to the Kleisli category $\Par(\cat{B})$
of the lift monad $(-)+1$ is faithful. This is the case because
$\klin{f} = \kappa_{1} \tafter f$, and $\kappa_{1}$ is monic.

\begin{proof}
The first coprojection $\kappa_{1} \colon A \tto A+B$ is monic since,
using the pullback on the right in~\eqref{diag:effecutsderivedpb}, we
get a diagram as on the left below. Its two rectangles are pullbacks,
and so their combination too.
$$\vcenter{\xymatrix{
A\pullback\ar@{=}[r]\ar@{=}[d] & 
   A\pullback\ar@{=}[r]\ar[d]|-{\tafter}_{\kappa_1}^{\cong} & 
   A\ar[d]|-{\tafter}^{\kappa_{1}}
\\
A\ar[r]|-{\tafter}^-{\kappa_1}_-{\cong}\ar@/_4ex/[rr]|-{\tafter}_-{\kappa_{1}} & 
   A+0\ar[r]|-{\tafter}^-{\idmap\tplus \bang} & A+B
}}
\qquad\qquad
\vcenter{\xymatrix@C+0pc@R-.5pc{
0\ar[dr]|-{\tafter}^{\cong}\ar[dd]|-{\tafter}\ar[rr]|-{\tafter} & & 
   X\ar@{ >->}[d]|-{\tafter}^{\kappa_{2}}_{\cong}
   \ar@{ >->}@/^5ex/[dd]|-{\tafter}^{\kappa_2} 
\\
&  0+0\pullback\ar[r]|-{\tafter}\ar[d]|-{\tafter} & 
   0+X\ar[d]|-{\tafter}_{\bang\tplus\idmap}
\\
A\ar@{ >->}[r]|-{\tafter}^-{\kappa_1}_-{\cong}
   \ar@{ >->}@/_4ex/[rr]|-{\tafter}_-{\kappa_{1}} & 
   A+0\ar[r]|-{\tafter}^-{\idmap\tplus\bang} & A+X
}}$$

\noindent The above diagram on the right shows that the intersection
(pullback) of $\kappa_{1}, \kappa_{2}$ is the initial object $0$: the
small rectangle is a pullback since it is an instance of the pullback
on the left in~\eqref{diag:effecutsderivedpb}. Hence the outer
rectangle is a pullback via the isomorphisms in the diagram.

Strictness of the initial object $0$ means that each map $f\colon X
\tto 0$ is an isomorphism. For such a map, we have to prove that the
composite $X \tto 0 \tto X$ is the identity. Consider the diagram:
$$\xymatrix{
X\ar@/^2ex/[drr]|-{\tafter}^{f}\ar@{ >->}@/_2ex/[ddr]|-{\tafter}_{\kappa_2}
   \ar@{..>}[dr]|-{\tafter}
 \\
& 0\pullback\ar@{=}[r]\ar@{ >->}[d]|-{\tafter}_-{\kappa_{1}} & 
   0\ar@{ >->}[d]|-{\tafter}^{\kappa_{1} = \kappa_{2}} 
\\
& 0+X\ar[r]|-{\tafter}_-{\idmap\tplus f}\ar[d]|-{\tafter}_{[\bang,\idmap]} & 0+0
\\
& X
}$$

\noindent The rectangle is a pullback, as on the right
in~\eqref{diag:effecutsderivedpb}. By initiality of $0$, we have
$\kappa_{1} = \kappa_{2} \colon 0 \tto 0+0$, so that the outer diagram
commutes. Then we get the dashed map, as indicated, which must be
$f\colon X \rightarrow 0$. But now we are done:
$$\begin{array}{rcccccl}
\bang \tafter f
& = &
[\bang, \idmap] \tafter \kappa_{1} \tafter f
& = &
[\bang, \idmap] \tafter \kappa_{2}
& = &
\idmap. 
\end{array}\eqno{\QEDbox}$$
\end{proof}

The next result shows that the joint monicity requirement in the
definition of an effectus has many equivalent formulations.

\begin{lemma}
\label{lem:effectusjm}
Let $\cat{B}$ be a category with finite coproducts and a final object,
in which the squares in~\eqref{diag:effectuspb} are pullbacks. Then
the following statements are equivalent.
\begin{enumerate}
\item \label{lem:effectusjm:twoOne} The category $\cat{B}$ is an
  effectus, that is, the two maps $\IV, \XI \colon (1+1)+1 \tto 1+1$
  in~\eqref{diag:effectusjointlymonic} are jointly monic in $\cat{B}$.

\item \label{lem:effectusjm:twoOneProj} The two partial projection
  maps $\rhd_{1}, \rhd_{2} \colon 1+1 \pto 1$ are jointly monic in
  $\Par(\cat{B})$.

\item \label{lem:effectusjm:twoX} For each object $X$, the two partial
  projection maps $\rhd_{1}, \rhd_{2} \colon X+X \pto X$ are jointly
  monic in $\Par(\cat{B})$.

\item \label{lem:effectusjm:two} For each pair of objects $X_{1},
  X_{2}$, the two projections $\rhd_{i} \colon X_{1}+X_{2} \pto X_{i}$
      are jointly monic in $\Par(\cat{B})$.

\item \label{lem:effectusjm:manyX} For each $n$-tuple of objects
  $X_{1}, \ldots, X_{n}$, with $n\geq 1$, the $n$ projections
  $\rhd_{i} \colon X_{1}+\cdots+X_{n} \pto X_{i}$ are jointly monic in
  $\Par(\cat{B})$.
\end{enumerate}
\end{lemma}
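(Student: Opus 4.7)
The plan is to split the equivalences into a few manageable pieces. The implications $(5)\Rightarrow(4)\Rightarrow(3)\Rightarrow(2)$ are immediate specialisations (taking $n=2$, then $X_{1}=X_{2}=X$, then $X=1$), so the real work lies in $(1)\Leftrightarrow(2)$, $(2)\Rightarrow(4)$, and $(4)\Rightarrow(5)$. I expect $(4)\Rightarrow(5)$ to be a routine induction on $n$ and $(1)\Leftrightarrow(2)$ to be a matter of unpacking definitions; the main obstacle will be $(2)\Rightarrow(4)$, which I plan to handle by chasing the pullbacks of Lemma~\ref{lem:effectuspb}~\eqref{lem:effectuspb:par} to reduce the general case to the special case $X_{1}=X_{2}=1$.

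For $(1)\Leftrightarrow(2)$, I specialise~\eqref{diag:partprojtot} to $X_{1}=X_{2}=1$: the two partial projections become $\idmap\tplus\bang=\idmap_{1+1}$ and $[\kappa_{2}\tafter\bang,\kappa_{1}]=[\kappa_{2},\kappa_{1}]$ as total maps $1+1\tto 1+1$. A partial map $h\colon C\pto 1+1$ is by definition a total map $C\tto(1+1)+1$, and $\rhd_{i}\pafter h$ is the total map $[\rhd_{i},\kappa_{2}]\tafter h$; the two cotuples work out to $[[\kappa_{1},\kappa_{2}],\kappa_{2}]=\IV$ and $[[\kappa_{2},\kappa_{1}],\kappa_{2}]=\XI$. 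Hence joint monicity of $\rhd_{1},\rhd_{2}$ in $\Par(\cat{B})$ is literally joint monicity of $\IV,\XI$ in $\cat{B}$.

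For $(4)\Rightarrow(5)$, I proceed by induction on $n$: the cases $n\leq 2$ are trivial, and for $n\geq 3$ I split $X_{1}+\cdots+X_{n}\cong X_{1}+(X_{2}+\cdots+X_{n})$, noting that the $n$-ary projection $\rhd_{1}$ coincides with the binary $\rhd_{1}$ for this split, while for $i\geq 2$ the $n$-ary $\rhd_{i}$ factors as the $(n-1)$-ary $\rhd_{i-1}$ composed with the binary $\rhd_{2}\colon X_{1}+(X_{2}+\cdots+X_{n})\pto X_{2}+\cdots+X_{n}$. Thus if $h,k$ agree on all $n$-ary projections, the inductive hypothesis applied on $X_{2}+\cdots+X_{n}$ gives $\rhd_{2}\pafter h=\rhd_{2}\pafter k$, and then the binary joint monicity (4) forces $h=k$.

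The decisive step is $(2)\Rightarrow(4)$. Given $h,k\colon C\pto X_{1}+X_{2}$ with $\rhd_{i}\pafter h=\rhd_{i}\pafter k$ for $i=1,2$, I first post-compose with $\klin{\bang}\pplus\klin{\bang}\colon X_{1}+X_{2}\pto 1+1$. Naturality~\eqref{eqn:partprojnat} gives $\rhd_{i}\pafter(\klin{\bang}\pplus\klin{\bang})=\klin{\bang}\pafter\rhd_{i}$, so the two resulting maps $C\pto 1+1$ agree on both $\rhd_{i}$, and hypothesis (2) yields
$$(\klin{\bang}\pplus\klin{\bang})\pafter h=(\klin{\bang}\pplus\klin{\bang})\pafter k.$$
Next I feed this into the pullback of Lemma~\ref{lem:effectuspb}~\eqref{lem:effectuspb:par} in its symmetric form with $\rhd_{2}$, specialised to $g=\bang\colon X_{2}\tto 1$: the two maps $(\klin{\bang}\pplus\idmap)\pafter h$ and $(\klin{\bang}\pplus\idmap)\pafter k$ into $1+X_{2}$ agree on $\rhd_{2}$ (assumption plus naturality) and on $\idmap\pplus\klin{\bang}$ (the displayed equality plus functoriality of $\pplus$), so the pullback forces
$$(\klin{\bang}\pplus\idmap)\pafter h=(\klin{\bang}\pplus\idmap)\pafter k.$$
A second application of the same pullback, this time in the stated form with $\rhd_{1}$ and $f=\bang\colon X_{1}\tto 1$, combines this equality with the original $\rhd_{1}\pafter h=\rhd_{1}\pafter k$ to deliver $h=k$. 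The crux is recognising that the two pullback squares of~\eqref{diag:effectuspartprojpb} allow the summands of $X_{1}+X_{2}$ to be collapsed to $1$ one coordinate at a time, reducing joint monicity for general $X_{1},X_{2}$ to the scalar case.
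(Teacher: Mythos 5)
Your proof is correct, and the logical skeleton closes properly: the downward specialisations $(5)\Rightarrow(4)\Rightarrow(3)\Rightarrow(2)$ together with $(1)\Leftrightarrow(2)$, $(2)\Rightarrow(4)$ and $(4)\Rightarrow(5)$ give all five equivalences. Your treatment of $(1)\Leftrightarrow(2)$ (via $\IV=[\rhd_{1},\kappa_{2}]$, $\XI=[\rhd_{2},\kappa_{2}]$) and your induction for $(4)\Rightarrow(5)$ coincide with the paper's. The one place where you genuinely diverge is the central step: the paper proves $(2)\Rightarrow(3)$ for $X+X$ by a $2\times 2$ grid of the pullbacks of Lemma~\ref{lem:effectuspb}~\eqref{lem:effectuspb:par}, finishing with the \emph{left-hand} square of~\eqref{diag:effectuspartprojpb} to recombine the two collapsed maps, and then needs a separate argument for $(3)\Rightarrow(4)$ --- embedding $X_{1}+X_{2}$ into $(X_{1}+X_{2})+(X_{1}+X_{2})$ via $\kappa_{1}\pplus\kappa_{2}$, applying $(3)$, and recovering the original maps with the codiagonal $\nabla$. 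You instead run the pullback chase directly on a general $X_{1}+X_{2}$, applying the \emph{right-hand} square of~\eqref{diag:effectuspartprojpb} twice (once per coordinate, collapsing one summand to $1$ at a time), which makes the $\kappa_{1}\pplus\kappa_{2}$/$\nabla$ trick unnecessary. Both routes rest on the same underlying idea --- reduce joint monicity to the scalar case by the projection pullbacks --- but yours is slightly more economical, at the cost of quoting the symmetric ($\rhd_{2}$) form of the pullback square, which the paper states only for $\rhd_{1}$ but explicitly licenses by the swap-isomorphism remark after Definition~\ref{def:effectus}.
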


\begin{proof}
The implication $\eqref{lem:effectusjm:twoOne} \Rightarrow
\eqref{lem:effectusjm:twoOneProj}$ is a simple reformulation,
using that $\IV = [\rhd_{1}, \kappa_{2}]$ and $\XI = [\rhd_{2}, \kappa_{2}]$.

\auxproof{
$$\begin{array}{rcl}
[\rhd_{1}, \kappa_{2}]
& = &
[\idmap\pplus \bang, \kappa_{2}] \\
& = &
[\idmap\pplus\idmap, \kappa_{2}] \\
& = &
[\idmap, \kappa_{2}] \\
& = &
\IV
\\
{[\rhd_{2}, \kappa_{2}]}
& = &
[[\kappa_{2} \tafter \bang, \kappa_{1}], \kappa_{2}] \\
& = &
[[\kappa_{2}, \kappa_{1}], \kappa_{2}] \\
& = &
\XI.
\end{array}$$
}

For the implication $\eqref{lem:effectusjm:twoOneProj} \Rightarrow
\eqref{lem:effectusjm:twoX}$, let $f, g\colon Y \pto X+X$ satisfy
$\rhd_{i} \pafter f = \rhd_{i} \pafter g \colon Y \pto X$ for $i =
1,2$. Consider the following diagram in $\Par(\cat{B})$.
$$\xymatrix@C+1pc{
Y\ar@/^1ex/[dr]|-{\pafter}^-{f}\ar@/_1ex/[dr]|-{\pafter}_-{g}
\\
& X+X\pullback\ar[r]|-{\pafter}_-{\klin{\bang}\pplus\idmap}
   \ar[d]|-{\pafter}^{\idmap\pplus\klin{\bang}}
   \ar@/^4ex/[rr]|-{\pafter}^-{\rhd_2}\ar@/_5ex/[dd]|-{\pafter}_-{\rhd_1} & 
   1+X\pullback\ar[r]|-{\pafter}_-{\rhd_{2}}\ar[d]|-{\pafter}^{\idmap\pplus\klin{\bang}}
   & X\ar[d]|-{\pafter}^{\klin{\bang}}
\\
& X+1\pullback\ar[r]|-{\pafter}_-{\klin{\bang}\pplus\idmap}\ar[d]|-{\pafter}^{\rhd_{1}}
   & 1+1\ar[r]|-{\pafter}_-{\rhd_{2}}
   \ar[d]|-{\pafter}^{\rhd_{1}} & 1
\\
& X\ar[r]|-{\pafter}_-{\klin{\bang}} & 1
}$$

\noindent All rectangles are pullbacks in $\Par(\cat{B})$ by
Lemma~\ref{lem:effectuspb}~\eqref{lem:effectuspb:par}. The definitions
$f' = (\klin{\bang}\pplus\klin{\bang}) \pafter f$ and $g' =
(\klin{\bang}\pplus\klin{\bang})  \pafter g$ yield equal maps $Y \pto 1+1$ by
point~\eqref{lem:effectusjm:twoOneProj}, since $\rhd_{i} \pafter f' =
\klin{\bang} \pafter \rhd_{i} \pafter f = \klin{\bang} \pafter \rhd_{i}
\pafter g = \rhd_{i} \pafter g'$ by~\eqref{eqn:partprojnat}. But then:
$$\left\{\begin{array}{rcll}
(\klin{\bang}\pplus\idmap) \pafter f
& = &
(\klin{\bang}\pplus\idmap) \pafter g \qquad &
   \mbox{by the upper right pullback} \\
(\idmap\pplus\klin{\bang}) \pafter f
& = &
(\idmap\pplus\klin{\bang}) \pafter g &
   \mbox{by the lower left pullback.}
\end{array}\right.$$

\noindent Hence $f=g$ by the upper left pullback.

For the implication $\eqref{lem:effectusjm:twoX} \Rightarrow
\eqref{lem:effectusjm:two}$ let $f,g \colon Z \pto X_{1}+X_{2}$
satisfy $\rhd_{i} \pafter f = \rhd_{i} \pafter g$. The trick is to
consider $(\kappa_{1}\pplus\kappa_{2}) \pafter f,
(\kappa_{1}\pplus\kappa_{2}) \pafter g \colon Z \pto
(X_{1}+X_{2})+(X_{1}+X_{2})$ instead. They satisfy, by
naturality~\eqref{eqn:partprojnat} of the partial projections:
$$\begin{array}{rcl}
\rhd_{i} \pafter (\kappa_{1}\pplus\kappa_{2}) \pafter f
& = &
\kappa_{i} \pafter \rhd_{i} \pafter f \\
& = &
\kappa_{i} \pafter \rhd_{i} \pafter g \\
& = &
\rhd_{i} \pafter (\kappa_{1}\pplus\kappa_{2}) \pafter g.
\end{array}$$

\noindent We obtain $(\kappa_{1}\pplus\kappa_{2}) \pafter f
= (\kappa_{1}\pplus\kappa_{2}) \pafter g$ from
point~\eqref{lem:effectusjm:twoX}. But then we are done:
$$\begin{array}{rcccccl}
f
& = &
\nabla \pafter (\kappa_{1}\pplus\kappa_{2}) \pafter f
& = &
\nabla \pafter (\kappa_{1}\pplus\kappa_{2}) \pafter g
& = &
g.
\end{array}$$

\auxproof{
When we reduce the equality that we use here to $\cat{B}$ we get,
as expected:
$$\begin{array}{rcl}
\nabla \pafter (\kappa_{1}\pplus\kappa_{2})
& = &
[[\klin{\idmap}, \klin{\idmap}], \kappa_{2}] \tafter
   [(\kappa_{1}+\idmap) \tafter \kappa_{1}, 
    (\kappa_{2}+\idmap) \tafter \kappa_{2}] \\
& = &
[[\kappa_{1}, \kappa_{1}], \kappa_{2}] \tafter
   [\kappa_{1} \tafter \kappa_{1} \tafter \kappa_{1},
    \kappa_{1} \tafter \kappa_{2} \tafter \kappa_{2}] \\
& = &
[\kappa_{1} \tafter \kappa_{1}, \kappa_{1} \tafter \kappa_{2}] \\
& = &
\kappa_{1} \\
& = &
\klin{\idmap}.
\end{array}$$
}

The implication $\eqref{lem:effectusjm:two} \Rightarrow
\eqref{lem:effectusjm:manyX}$ is obtained via induction. Finally, the
implication $\eqref{lem:effectusjm:manyX} \Rightarrow
\eqref{lem:effectusjm:twoOne}$ is trivial. \QED

\auxproof{
The implication $\eqref{lem:effectusjm:twoX} \Rightarrow
\eqref{lem:effectusjm:manyX}$ is done by induction on $n$. The case $n =
1$, where $[\rhd_{1}, \kappa_{2}] = \idmap = [\rhd_{2}, \kappa_{2}]
\colon X+1 \rightarrow X+1$, holds, and the case $n=2$ holds by
induction. 

Now let $f, g \colon Y \rightarrow (n+1)\cdot X + 1 = (n\cdot X + X) +
1$ satisfy $[\rhd_{i}, \kappa_{2}] \after f = [\rhd_{i}, \kappa_{2}]
\after g$, for $i\leq n+1$. Take $f' = [\rhd_{1},\kappa_{2}] \after
f\colon Y \rightarrow (n+1)\cdot X +1 \rightarrow n\cdot X + 1$.  and
similarly, $g' = [\rhd_{1},\kappa_{2}] \after g$. Then for each $i\leq
n$, $[\rhd_{i},\kappa_{2}] \after f' = [\rhd_{i},\kappa_{2}] \after
g'$. Hence $f' = g'$. But since $[\rhd_{2},\kappa_{2}] \after f =
[\rhd_{n+1},\kappa_{2}] \after f = [\rhd_{n+1},\kappa_{2}] \after g =
[\rhd_{2},\kappa_{2}] \after g$, we get $f = g$
by~\eqref{lem:effectusjm:twoX}.
}
\end{proof}

In an effectus these partial projections $\rhd_i$ are projections for
a `partial pairing' operation $\dtuple{-,-}$ that will be described
next.  In fact one can understand the pullback on the left
in~\eqref{diag:effectuspb} in the definition of effectus as precisely
providing such a pairing for maps which are suitably orthogonal to
each other. Later on, in Section~\ref{sec:kerimg} we can make this
requirement more precise in terms of kernels that are each other's
orthosupplement.

\begin{lemma}
\label{lem:pairing}
In an effectus, each pair of partial maps $f\colon Z \pto X$ and
$g\colon Z \pto Y$ with $\one \pafter f = (\one \pafter g)^{\bot}$,
determines a unique total map $\dtuple{f,g} \colon Z \tto
X+Y$\index{N}{$\dtuple{f,g}$, partial pairing of $f,g$}
\index{S}{partial!-- pairing} with:
$$\begin{array}{rclcrcl}
\rhd_{1} \tafter \dtuple{f,g}
& = &
f
& \qquad\mbox{and}\qquad &
\rhd_{2} \tafter \dtuple{f,g}
& = &
g.
\end{array}$$

\noindent Uniqueness gives equations:
\begin{equation}
\label{eqn:dtupleuniqueness}
\begin{array}{rclcrcl}
\dtuple{f,g} \tafter h
& = &
\dtuple{f \tafter h, g\tafter h}
& \qquad\mbox{and}\qquad &
\dtuple{\rhd_{1} \tafter k, \rhd_{2}\tafter k} 
& = &
k
\end{array}
\end{equation}

\noindent for each $h\colon W \rightarrow Z$ and $k\colon Z \tto X+Y$.
Also, for total maps $h,k$ we have:
\begin{equation}
\label{eqn:dtuplenat}
\begin{array}{rcl}
(h\tplus k) \tafter \dtuple{f,g}
& = &
\dtuple{\klin{h} \pafter f, \klin{k} \pafter g}.
\end{array}
\end{equation}
\end{lemma}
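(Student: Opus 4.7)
The plan is to build $\dtuple{f,g}$ directly from the pullback on the left in~\eqref{diag:effectuspb}. Recall that a partial map $f\colon Z \pto X$ is, as a morphism in $\cat{B}$, a total map $f\colon Z \tto X+1$, and unpacking the Kleisli composition shows $\one \pafter f = (\bang \tplus \idmap) \tafter f$; similarly for $g$. A short computation then reveals that the hypothesis $\one \pafter f = (\one \pafter g)^{\bot}$ is equivalent to $(\bang \tplus \idmap) \tafter f = (\idmap \tplus \bang) \tafter g'$, where $g' := [\kappa_{2}, \kappa_{1}] \tafter g \colon Z \tto 1+Y$. This is precisely the cone data required by the pullback, which therefore produces a unique total map $\dtuple{f,g} \colon Z \tto X+Y$ with $(\idmap \tplus \bang) \tafter \dtuple{f,g} = f$ and $(\bang \tplus \idmap) \tafter \dtuple{f,g} = g'$. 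Using the formulas $\rhd_{1} = \idmap \tplus \bang$ and $\rhd_{2} = [\kappa_{2} \tafter \bang, \kappa_{1}] = [\kappa_{2}, \kappa_{1}] \tafter (\bang \tplus \idmap)$ from~\eqref{diag:partprojtot}, these translate back to the required $\rhd_{1} \tafter \dtuple{f,g} = f$ and $\rhd_{2} \tafter \dtuple{f,g} = g$, and the two forms of the system are equivalent, so uniqueness transfers along.

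Both equalities in~\eqref{eqn:dtupleuniqueness} then reduce to uniqueness of the pullback factorisation. For the first, precomposing the hypothesis with $h\colon W \tto Z$ shows that $(f \tafter h, g \tafter h)$ is again compatible, and both $\dtuple{f,g} \tafter h$ and $\dtuple{f \tafter h, g \tafter h}$ satisfy the $\rhd_{i}$-equations for this pair. For the second, a direct calculation in $\cat{B}$ gives $\one \pafter \rhd_{1} = (\one \pafter \rhd_{2})^{\bot}$, so for any $k \colon Z \tto X+Y$ the pair $(\rhd_{1} \tafter k, \rhd_{2} \tafter k)$ is automatically compatible, and $k$ trivially satisfies the defining equations for $\dtuple{\rhd_{1} \tafter k, \rhd_{2} \tafter k}$. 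For~\eqref{eqn:dtuplenat}, naturality of partial projections~\eqref{eqn:partprojnat} together with the identity $\klin{h \tplus k} = \klin{h} \pplus \klin{k}$ yields $\rhd_{1} \tafter (h \tplus k) \tafter \dtuple{f,g} = \klin{h} \pafter f$, and symmetrically for $\rhd_{2}$; compatibility of the pair $(\klin{h} \pafter f, \klin{k} \pafter g)$ uses Exercise~\ref{exc:subst}.\eqref{exc:subst:partialtruth}, and then one more invocation of uniqueness finishes the proof.

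The only real obstacle is the opening translation between the partial-map compatibility $\one \pafter f = (\one \pafter g)^{\bot}$ and the total-map compatibility demanded by the pullback~\eqref{diag:effectuspb}. The latter asks for agreement on $1+1$ with no implicit swap, while the former has a swap hidden in the orthosupplement, so one has to be careful about the asymmetric roles of $X+1$ and $1+Y$ at the two sides of the pullback and introduce the auxiliary $g'$. Once that initial rewriting is in place, the rest is mechanical use of the universal property together with naturality of the partial projections.
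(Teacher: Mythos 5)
Your proposal is correct and follows essentially the same route as the paper: the hypothesis $\one \pafter f = (\one \pafter g)^{\bot}$ is unfolded into a cone over the left-hand pullback of~\eqref{diag:effectuspb} with $g' = [\kappa_{2},\kappa_{1}] \tafter g$ as the second leg, the mediating map is read back through the definitions of $\rhd_{1}$ and $\rhd_{2}$, and the remaining equations all follow from uniqueness of the mediating map (your explicit checks that the shifted pairs are again compatible are a slightly more careful spelling-out of what the paper leaves implicit).
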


Note that the partial composite $\one \pafter f$ can be unfolded to:
$$\begin{array}{rcccccl}
\one \pafter f
& = &
[\one, \kappa_{2}] \tafter f
& = &
[\kappa_{1} \tafter \bang, \kappa_{2}] \tafter f
& = &
(\bang\tplus\idmap) \tafter f.
\end{array}$$

\noindent This map $\one \pafter (-)$ plays an important role,
see also Lemma~\ref{lem:zero} below. It is the orthosupplement of
the kernel operation, see Section~\ref{sec:kerimg} below.

\begin{proof}
The assumption $\one \pafter f = (\one \pafter g)^{\bot}$ says that
the outer diagram commutes in:
$$\xymatrix{
Z\ar@/_2ex/[ddr]|-{\tafter}_{[\kappa_{2}, \kappa_{1}] \tafter g}
   \ar@/^2ex/[drr]|-{\tafter}^{f}
   \ar@{..>}[dr]|-{\tafter}^(0.7){\dtuple{f,g}}
\\
& X+Y\ar[r]|-{\tafter}^-{\idmap\tplus \bang}
   \ar[d]|-{\tafter}^{\bang\tplus \idmap}\pullback & 
   X + 1\ar[d]|-{\tafter}^{\bang\tplus\idmap}
\\
& 1+ Y\ar[r]|-{\tafter}_-{\idmap\tplus \bang} & 1 + 1
}$$

\auxproof{
$$\begin{array}{rcl}
(\idmap\tplus \bang) \tafter [\kappa_{2}, \kappa_{1}] \tafter g
& = &
[\kappa_{2}, \kappa_{1}] \tafter (\bang\tplus\idmap) \tafter g \\
& = &
[\kappa_{2}, \kappa_{1}] \tafter \kerbot(g) \\
& = &
[\kappa_{2}, \kappa_{1}] \tafter \ker(f) \\
& = &
\kerbot(f) \\
& = &
(\bang\tplus\idmap) \tafter f.
\end{array}$$
}

\noindent The (inner) rectangle is a pullback
by~\eqref{diag:effectuspb}, yielding the pairing $\dtuple{f,g}$ as the
unique total map with $\rhd_{1} \tafter \dtuple{f,g} = (\idmap\tplus\bang)
\tafter \dtuple{f,g} = f$ and:
$$\begin{array}{rcccccl}
\rhd_{2} \tafter \dtuple{f,g}
& = &
[\kappa_{2}, \kappa_{1}] \tafter (\bang\tplus\idmap) \tafter \dtuple{f,g}
& = &
[\kappa_{2}, \kappa_{1}] \tafter [\kappa_{2}, \kappa_{1}] \tafter g
& = &
g.
\end{array}$$

\noindent Clearly, $\tuple{f,g}$ is the unique (mediating) total map
$Z \tto X+Y$ satisfying these equations. The
equations~\eqref{eqn:dtupleuniqueness} follow directly from this
uniqueness property. Finally, the equation~\eqref{eqn:dtuplenat} is
obtained from uniqueness of mediating maps in pullback like the one
above:
$$\begin{array}[b]{rcl}
(\idmap\tplus\bang) \tafter (h\tplus k) \tafter \dtuple{f,g}
& = &
(h\tplus\idmap) \tafter (\idmap\tplus\bang) \tafter \dtuple{f,g} \\
& = &
(h\tplus\idmap) \tafter f \\
& = &
\klin{h} \pafter f
\\
(\bang\tplus\idmap) \tafter (h\tplus k) \tafter \dtuple{f,g}
& = &
(\idmap\tplus k) \tafter (\bang\tplus\idmap) \tafter \dtuple{f,g} \\
& = &
(\idmap\tplus k) \tafter [\kappa_{2}, \kappa_{1}] \tafter g \\
& = &
[\kappa_{2}, \kappa_{1}] \tafter (k\tplus \idmap) \tafter g \\
& = &
[\kappa_{2}, \kappa_{1}] \tafter (\klin{k} \pafter g).
\end{array}\eqno{\QEDbox}$$

\auxproof{
$$\begin{array}{rcl}
\rhd_{1} \tafter \dtuple{f,g}
& = &
(\idmap\tplus \bang) \tafter \dtuple{f,g} \\
& = &
f
\\
\rhd_{1} \tafter \dtuple{f,g}
& = &
[\kappa_{2} \tafter \bang, \kappa_{1}] \tafter \dtuple{f,g} \\
& = &
[\kappa_{2}, \kappa_{1}] \after (\bang\tplus\idmap) \tafter \dtuple{f,g} \\
& = &
[\kappa_{2}, \kappa_{1}] \after [\kappa_{2}, \kappa_{1}] \after g \\
& = &
g.
\end{array}$$

\noindent Uniqueness of $\dtuple{f,g}$ follows from
Lemma~\ref{lem:effectusjm}.
}
\end{proof}

This partial pairing $\dtuple{-,-}$ will be generalised later, see
Lemma~\ref{lem:sumpairing}, and the end of
Discussion~\ref{dis:partialtotal}. It plays an important role in the
sequel, for instance in decomposition, see
Lemma~\ref{lem:quot}~\eqref{lem:quot:dc}, \eqref{lem:quot:mapdc}.

The following result is extremely simple but at the same time
extremely useful.

\begin{lemma}
\label{lem:zero}
Let $f\colon X \pto Y$ be a partial map in an effectus. Then:
$$\begin{array}{rclcrcl}
\one \pafter f = \zero
& \Longleftrightarrow &
f = \zero
& \qquad\mbox{and}\qquad &
\one \pafter f = \one
& \Longleftrightarrow &
f \mbox{ is total.}
\end{array}$$
\end{lemma}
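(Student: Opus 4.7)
The plan is to rewrite $\one \pafter f$ in a form that matches the pullback squares on the right of~\eqref{diag:effectuspb}, and then read off both equivalences as instances of the universal property of those pullbacks (once with $\kappa_1$, once with $\kappa_2$).

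First, viewing $f\colon X \pto Y$ as a morphism $X \tto Y+1$ in $\cat{B}$, I unfold the Kleisli composite:
$$\one \pafter f \;=\; [\kappa_{1} \tafter \bang,\, \kappa_{2}] \tafter f \;=\; (\bang \tplus \idmap) \tafter f \;\colon\; X \tto 1+1.$$
Hence $\one \pafter f = \one$ is precisely the equation $(\bang \tplus \idmap) \tafter f = \kappa_{1} \tafter \bang$, and $\one \pafter f = \zero$ is precisely $(\bang \tplus \idmap) \tafter f = \kappa_{2} \tafter \bang$.

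For the second equivalence, I would instantiate the right-hand pullback in~\eqref{diag:effectuspb} with its $X$ set to $Y$ (the codomain of $f$) and its $Y$ set to $1$, so the bottom edge becomes $\bang \tplus \bang = \bang \tplus \idmap \colon Y+1 \tto 1+1$. The hypothesis $(\bang \tplus \idmap) \tafter f = \kappa_{1} \tafter \bang$ supplies exactly the outer commutativity of that pullback, and the universal property delivers a unique $h \colon X \tto Y$ with $f = \kappa_{1} \tafter h = \klin{h}$; that is, $f$ is total. The converse direction is just the calculation $(\bang \tplus \idmap) \tafter \kappa_{1} \tafter h = \kappa_{1} \tafter \bang \tafter h = \kappa_{1} \tafter \bang = \one$.

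For the first equivalence, I would use the counterpart of that same pullback for the second coprojection (available via the swap remark immediately after Definition~\ref{def:effectus}), instantiated so that the upper-left corner becomes $1$ and the left edge is $\kappa_{2} \colon 1 \tto Y+1$. The hypothesis $(\bang \tplus \idmap) \tafter f = \kappa_{2} \tafter \bang$ then forces $f$ to factor through $\kappa_{2} \colon 1 \tto Y+1$ via the unique map $X \tto 1$, so $f = \kappa_{2} \tafter \bang = \zero$. The converse is immediate. The only real obstacle is the bookkeeping: spotting that $\one \pafter f$ collapses to $(\bang \tplus \idmap) \tafter f$ and choosing the correct instantiation of the effectus pullback axiom (in particular invoking the $\kappa_{2}$-version for the zero case); after that, both equivalences are just the pullback's universal property.
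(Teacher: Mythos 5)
Your proposal is correct and follows essentially the same route as the paper: both unfold $\one \pafter f$ to $(\bang\tplus\idmap)\tafter f$ and then apply the universal property of the right-hand pullback square from the effectus axioms, once for $\kappa_1$ (totality) and once for the $\kappa_2$-version (the zero case). The paper cites the derived pullback~\eqref{diag:effecutsderivedpb} rather than~\eqref{diag:effectuspb} directly, but with your instantiation ($Y:=1$, so $\bang\tplus\bang=\bang\tplus\idmap$) the two squares coincide, so the arguments are identical.
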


\begin{proof}
The implication $(\Leftarrow)$ on the left is trivial, and for the
implication $(\Leftarrow)$ on the right, let $f$ be total, say $f =
\kappa_{1} \tafter g$ for $g\colon X \tto Y$. Then:
$$\begin{array}{rcccccccl}
\one \pafter f
& = &
[\kappa_{1} \tafter \bang, \kappa_{2}] \tafter \kappa_{1} \tafter g
& = &
\kappa_{1} \tafter \bang \tafter g
& = &
\kappa_{1} \tafter \bang
& = &
\one.
\end{array}$$

\noindent The two implications $(\Rightarrow)$ follow from the
following two diagrams.
$$\xymatrix{
X\ar@/_2ex/[ddr]|-{\tafter}_{f}\ar@/^2ex/[drr]|-{\tafter}^-{\bang}
   \ar@{..>}[dr]|-{\tafter} & &
&
X\ar@/_2ex/[ddr]|-{\tafter}_{f}\ar@/^2ex/[drr]|-{\tafter}^-{\bang}
   \ar@{..>}[dr]|-{\tafter}
\\
& 1\ar@{ >->}[d]|-{\tafter}_{\kappa_2}\ar@{=}[r]\pullback &
  1\ar@{ >->}[d]|-{\tafter}^{\kappa_2}
&
& Y\ar@{ >->}[d]|-{\tafter}_{\kappa_1}\ar[r]|-{\tafter}^-{\bang}\pullback &
  1\ar@{ >->}[d]|-{\tafter}^{\kappa_1}
\\
& Y+1\ar[r]|-{\tafter}_-{\bang\tplus\idmap} & 1+1
&
& Y+1\ar[r]|-{\tafter}_-{\bang\tplus\idmap} & 1+1
}$$

\noindent Both rectangles are pullbacks, as instances of the diagram
on the right in~\eqref{diag:effecutsderivedpb} --- once for the second
coprojection $\kappa_2$. \QED
\end{proof}





\subsection{Examples of effectuses}\label{subsec:effectusEx}

Below we describe the running examples of effectuses that will be used
throughout this text. We will not go into all details and proofs for
these examples, but just sketch the essentials.

\begin{example}
\label{ex:effectusSets}
The category $\Sets$\index{N}{cat@$\Sets$, category of sets} of sets and
functions has coproducts via disjoint union: $X+Y =
\set{\tuple{x,1}}{x\in X} \cup \set{\tuple{y,2}}{y\in Y}$. The numbers
$1$ and $2$ in this set are used as distinct labels, to make sure that
we have a disjoint union. The coprojections $\kappa_{1} \colon X \to
X+Y$ and $\kappa_{2} \colon Y \to X+Y$ are given by $\kappa_{1}x =
\tuple{x,1}$ and $\kappa_{2}y = \tuple{x,2}$. For functions $f\colon X
\to Z$ and $g\colon Y \to Z$ there is a cotuple map $[f,g] \colon X+Y
\to Z$ given by $[f,g](\kappa_{1}x) = f(x)$ and $[f,g](\kappa_{2}y) =
g(y)$.

The emtpy set is the initial object $0\in\Sets$, precisely because for
any set $X$, there is precisely one function $0 \to X$, namely the
empty function. Any singleton set is final in $\Sets$. We typically
write $1 = \{*\}$ for (a choice of) the final object.

Predicates on a set $X$ are `characteristic' functions $X \to 1+1 = 2
\cong \{0,1\}$, which correspond to subsets of $X$. There is thus an
isomorphism $\Pred(X) \cong \Pow(X)$, where $\Pow(X)$ is the
powerset. In particular, the set $\Pred(1)$ of scalars is the
two-element set of Booleans $\Pow(1) \cong 2$. A state is a function
$1 \rightarrow X$, and thus corresponds to an element of the set $X$.

Kleisli maps $f\colon X \to Y+1$ of the lift monad correspond to
partial functions from $X$ to $Y$, written according to our convention
as maps $f\colon X \pto Y$.  We say that $f$ is undefined at $x\in X$
if $f(x) = *\in 1$ --- or more formally, if $f(x) = \kappa_{2}*$. It
is easy to see that Kleisli composition corresponds to the usual
composition of partial functions.

Later, in Section~\ref{sec:extcat}, we shall see that the category
$\Sets$ is an instance of an extensive category, and that such
extensive categories are instances of effectuses, that can be
characterised in a certain way. In fact, every topos is an extensive
category, and thereby an effectus.
\end{example}

\begin{example}
\label{ex:effectusKlD}
In order to capture (discrete) probabilistic models we use two monads
on $\Sets$, namely the \emph{distribution}\index{S}{distribution!--
  monad}\index{S}{monad!distribution --} monad $\Dst$ and the
\emph{subdistribution} monad $\sDst$.

For an arbitrary set $X$ we write $\Dst(X)$\index{N}{fun@$\Dst$,
  distribution monad} for the set of formal convex
combinations\index{S}{convex!-- combination} of elements in $X$. There
are two equivalent ways of describing such convex combinations.
\begin{itemize}
\item We can use expressions $r_{1}\ket{x_1} + \cdots +
  r_{n}\ket{x_n}$\index{N}{$\ket{x}$, element $x$ as part of a formal
    sum} where $x_{i}\in X$ and $r_{i} \in [0,1]$ satisfy
  $\sum_{i}r_{i} = 1$. The `ket' notation $\ket{x}$ is syntactic
  sugar, used to distinguish an element $x\in X$ from its occurrence
  in such sums. The expression $\sum_{i}r_{i}\ket{x_i} =
  r_{1}\ket{x_1} + \cdots + r_{n}\ket{x_n} \in \Dst(X)$ may be
  understood as: the probability of element $x_i$ is $r_i$.


\item We also use functions $\omega \colon X \to [0,1]$ with finite
  support $\supp(\omega) = \setin{x}{X}{\omega(x) \neq 0}$, and with
  property $\sum_{x} \omega(x) = 1$. We can write this $\omega$ as
  formal convex sum $\sum_{x} \omega(x)\ket{x}$. Conversely, each
  formal convex sum $\sum_{i}r_{i}\ket{x_i}$ yields a function $X \to
  [0,1]$ with $x_{i} \mapsto r_{i}$, and $x \mapsto 0$ for all other
  $x\in X$.
\end{itemize}

\noindent We shall freely switch between these two ways of describing
formal convex combinations. Sometimes we use the phrase `discrete
probability distribution'\index{S}{distribution!discrete probability
  --} for such a combination.

The subdistribution\index{S}{subdistribution
  monad}\index{S}{monad!subdistribution --} monad
$\sDst$\index{N}{fun@$\sDst$, subdistribution monad} has
`subconvex'\index{S}{subconvex combination} combinations
$\sum_{i}r_{i}\ket{x_i} \in \sDst(X)$, where $r_{i}\in [0,1]$ satisfy
$\sum_{i}r_{i} \leq 1$.

We write $\Kl(\Dst)$\index{N}{cat@$\Kl(-)$, Kleisli category} for the
Kleisli category of the monad $\Dst$. Objects of this category are
sets, and morphisms $X \to Y$ are functions $X \to \Dst(Y)$. If $X,Y$
are finite sets, such maps $X \rightarrow \Dst(Y)$ are precisely the
stochastic matrices. Such functions can be understood as Markov
chains, describing probabilistic computations.  For functions $f\colon
X \to \Dst(Y)$ and $g\colon Y \to \Dst(Z)$ we define $g \tafter f
\colon X \to \Dst(Z)$ essentially via matrix multiplication:
$$\begin{array}{rcl}
(g \tafter f)(x)
& = &
{\displaystyle\sum_{z}} \big(\sum_{y} g(y)(z)\cdot f(x)(y)\big)\bigket{z}.
\end{array}$$

\noindent Notice that we have mixed the above two descriptions: we
have used the ket-notation for the distribution $(g \tafter f)(x) \in
\Dst(Z)$, but we have used the distributions $f(x)\in\Dst(Y)$ and
$g(y)\in\Dst(Z)$ as functions with finite support, namely as functions
$f(x) \colon Y \to [0,1]$ and $g(y) \colon Z \to
    [0,1]$. This allows us to multiply (and add) the probabilities
    $g(y)(z) \in [0,1]$ and $f(x)(y) \in [0,1]$. It can be checked
    that the above definition yields a distribution again, and that
    the operation $\tafter$ is associative.

We also need an identity map $X \to X$ in $\Kl(\Dst)$. It is
the function $X \to \Dst(X)$ that sends $x\in X$ to the
`Dirac' distribution $1\ket{x} \in \Dst(X)$. As function $X
\to [0,1]$ this Dirac distribution sends $x$ to $1$, and any
other element $x'\neq x$ to $0$. In this way $\Kl(\Dst)$ becomes a
category. The empty set $0$ is initial in $\Kl(\Dst)$, and the
singleton set $1$ is final, because $\Dst(1) \cong 1$.

The Kleisli category $\Kl(\Dst)$ inherits finite coproducts from the
underlying category $\Sets$. On objects it is the disjoint union
$X+Y$, with coprojection $X \to \Dst(X+Y)$ given by $x \mapsto
1\ket{\kappa_{i}x}$.  The cotuple is as in $\Sets$. We leave it to the
reader to check that the requirements of an effectus hold in
$\Kl(\Dst)$. Later, in Example~\ref{ex:biproductSetsKlD}, we shall see
that this example effectus $\Kl(\Dst)$ is an instance of a more
general construction.
 
Predicates on a set $X$ are functions $X \to \Dst(1+1) = \Dst(2) \cong
[0,1]$. Hence predicates in this effectus are
fuzzy:\index{S}{predicate!fuzzy --}\index{S}{fuzzy predicate}
$\Pred(X) \cong [0,1]^{X}$. The scalars are the probabilities from
$[0,1]$. A state $1 \tto X$ corresponds to a distribution $\omega
\in\Dst(X)$.

Interestingly, the category $\Par(\Kl(\Dst))$ of partial maps in the
Kleisli category of the distribution monad $\Dst$ is the Kleisli
category $\Kl(\sDst)$ of the subdistribution monad. The reason is that
subdistribution in $\sDst(Y)$ can be identified with a a distribution
in $\Dst(Y+1)$. Indeed, given a subdistribution
$\sum_{i}r_{i}\ket{y_i}$, we take $r = 1 - (\sum_{i}r_{i})$ and get a
proper distribution $\sum_{i}r_{i}\ket{y_i} + r\ket{\!*\!}$, where $1
= \{*\}$. Thus we often identify a partial map $X \pto Y$ in
$\Kl(\Dst)$ with a function $X \to \sDst(Y)$.

We have used the distribution monad $\Dst$ on $\Sets$, for
\emph{discrete} probability. For \emph{continuous} probability one can
use the Giry monad\index{S}{Giry monad}\index{S}{monad!Giry --}
$\Giry$\index{N}{fun@$\Giry$, Giry monad} on the category of
measurable spaces, see~\cite{Jacobs15a,Jacobs13} for details. The
Kleisli category $\Kl(\Giry)$ of this monad $\Giry$ is also an
effectus.
\end{example}

The next example of an effectus involves order unit groups. It does
not capture a form of computation, like $\Sets$ or $\Kl(\Dst)$. Still
this example is interesting because it shares some basic structure
with the quantum model given by von Neumann algebras --- see the
subsequent Example~\ref{ex:effectusNA} --- but it is mathematically
much more elementary. Hence we use it as an intermediate stepping
stone towards von Neumann algebras.

\begin{example}
\label{ex:effectusOUG}
We write $\Ab$\index{N}{cat@$\Ab$, category of Abelian groups} for the
category of Abelian groups,\index{S}{Abelian!--
  group}\index{S}{group!Abelian --} with group homomorphisms between
them. An Abelian group $G$ is called \emph{ordered} if there is a
partial order $\leq$ on $G$ that satisfies: $x \leq x'$ implies $x+y
\leq x'+y$, for all $x,x',y\in G$.  One can then prove, for instance
$x\leq x'$ iff $0 \leq x' - x$ iff $x + y = x'$ for some $y\geq 0$.

A homomorphism $f\colon G \rightarrow H$ of ordered Abelian groups is
a group homomorphism that is monotone: $x \leq x' \Rightarrow f(x)
\leq f(x')$. Such a group homomorphism is monotone iff it is
positive.\index{S}{positive!-- map} The latter means: $x \geq 0
\Rightarrow f(x) \geq 0$. We write $\OAb$\index{N}{cat@$\OAb$,
  category of ordered Abelian groups} for the category of ordered
Abelian groups with monotone/positive group homomorphisms between
them. There is an obvious functor $\OAb \rightarrow \Ab$ which forgets
the order.

An ordered Abelian group $G$ is called an \emph{order unit
  group}\index{S}{order unit!-- group}\index{S}{group!order unit --}
if there is a positive unit $1\in G$ such that for each $x\in G$ there
is an $n\in\NNO$ with $-n\cdot 1 \leq x \leq n\cdot 1$. A homomorphism
$f\colon G \rightarrow H$ of order unit groups is a positive group
homomorphisms which is `unital',\index{S}{unital map!-- in an order
  unit group} that is, it preserves the unit: $f(1) = 1$. We often
write `PU'\index{S}{PU, positive unital} for positive unital. We
obtain another category $\OUG$\index{N}{cat@$\OUG$, category of order
  unit groups} of order unit groups and PU group homomorphisms between
them. We have a functor $\OUG \rightarrow \OAb$ that forgets the unit.

It is easy to see that order unit groups are closed under finite
products: the cartesian product, commonly written as direct sum
$G_{1}\oplus G_{2}$, is again an ordered Abelian group, with
componentwise operations and order, and with the pair $(1,1)$ as
unit. The trivial singleton group $\{0\}$ with $1 = 0$ is final in the
category $\OUG$. The group of integers $\Z$ is initial in $\OUG$,
since for each order unit group $G$ there is precisely one map
$f\colon \Z \rightarrow G$, namely $f(k) = k\cdot 1$.

We now claim that the \emph{opposite} $\op{\OUG}$ of the category of
order unit groups is an effectus. The direct sums $(\oplus, \{0\})$
form coproducts in $\op{\OUG}$, and the integers $\Z$ form the final
object. The effectus requirements in Definition~\ref{def:effectus} can
be verified by hand, but they also follow from a general construction
in Example~\ref{ex:biproductOUGvNA}. This category is in opposite form
since it incorporates Heisenberg's view where computations are
predicate transformers, acting in opposite direction, transforming
`postcondition' predicates on the codomain of the computation to
`precondition' predicates on the domain.

We elaborate on predicates and on partial maps since they can be
described in alternative ways that also apply in other settings.
\begin{enumerate}
\item A predicate on an order unit group $G$ is formally a map
  $p\colon G \tto 1+1$ in the effectus $\op{\OUG}$. In the category
  $\OUG$ this amounts to a map $p\colon \Z\oplus\Z \rightarrow G$.  We
  claim that such predicates correspond to \emph{effects}, that is, to
  elements in the unit interval $[0,1]_{G} = \setin{x}{G}{0 \leq x
    \leq 1}$.\index{D}{$[0,1]_{G}$, unit interval in an order unit
    group $G$} Indeed, the element $p(1,0)$ is positive, and below the
  unit since $p(1,0) + p(0,1) = p(1,1) = 1$, where $p(0,1)$ is
  positive too. In the other direction, an effect $e\in [0,1]_{G}$
  gives rise to a map $p_{e} \colon \Z\oplus\Z \rightarrow G$ by
  $p_{e}(k,m) = k\cdot e + m\cdot e^{\bot}$, where $e^{\bot} = 1 - e$.

\item A partial map $f\colon G \pto H$ in $\op{\OUG}$ is a map
  $f\colon G \tto H+1$, which corresponds to a positive unital group
  homomorphism $f\colon H\oplus \Z \rightarrow G$. We next claim that
  these homomorphisms $f$ correspond to \emph{subunital} positive
  group homomorphisms $g\colon H \rightarrow G$, where
  subunital\index{S}{subunital map!-- in an order unit group} means
  that $g(1) \leq 1$. This correspondence works as follows: given $f$,
  take $\overline{f}(y) = f(y, 0)$; and given $g$, take
  $\overline{g}(y, k) = g(y) + k\cdot g(1)^{\bot}$, where $g(1)^{\bot}
  = 1 - g(1)$. We use the abbreviation `PsU'\index{S}{PsU, positive
    subunital} for `positive subunital'.
\end{enumerate}

\noindent Summarising, there are bijective correspondences for
predicates and partial maps:
\begin{equation}
\label{corr:OUGpredpartmap}
\hspace*{-1em}\begin{prooftree}
\begin{prooftree}
\xymatrix{G\ar[r]|-{\tafter} & 1+1}\rlap{\hspace*{1.1em}in $\op{\OUG}$}
\Justifies
\xymatrix{\Z\oplus \Z \ar[r]|-{\tafter} & G}\rlap{\hspace*{1em}in $\OUG$}
\end{prooftree}
\Justifies
e\in [0,1]_{G}
\end{prooftree}
\hspace*{9em}
\begin{prooftree}
\begin{prooftree}
\xymatrix{G\ar[r]|-{\pafter} & H}\rlap{\hspace*{1.8em}in $\Par(\op{\OUG})$}
\Justifies
\xymatrix{G\ar[r]|-{\tafter} & H+1}\rlap{\hspace*{1.1em}in $\op{\OUG}$}
\end{prooftree}
\Justifies
\begin{prooftree}
\xymatrix{H\oplus \Z \ar[r]|-{\tafter} & G}\rlap{\hspace*{1.1em}in $\OUG$}
\Justifies
\xymatrix{H \ar[r]_-{\text{PsU}} & G}
\end{prooftree}
\end{prooftree}\hspace*{6em}
\end{equation}

\noindent A state of an order unit group $G$ is a homomorphism $G
\rightarrow \Z$.

Order unit groups provide a simple example of an effectus, in which
part of the structure of more complex models --- like von Neumann
algebras, see below --- already exists. Similar structures are order
unit \emph{spaces}.\index{S}{order unit!-- space} They are ordered
vector spaces over the reals, which are at the same time an order unit
group. The category $\OUS$\index{N}{cat@$\OUS$, category of order unit
  spaces} of order unit spaces has positive unital linear maps as
homormorphisms. It plays a promiment role in generalised probabilistic
theories, see \textit{e.g.}~\cite{Wilce12}. One can prove that the
category $\op{\OUS}$ is an effectus too. The states of an order unit
space $V$ are the homomorphisms $V \rightarrow \R$. They carry much
more interesting structure than the states of an order unit group. For
instance, this set of states $\Stat(V)$ forms a convex compact
Hausdorff space. More information about order unit groups can be found
in~\cite{Goodearl86}, and about order unit spaces
in~\cite{Nagel74,JacobsM12b}.
\end{example}

\begin{example}
\label{ex:effectusNA}
The mathematically most sophisticated example of an effectus is the
opposite $\op{\vNA}$ of the category $\vNA$\index{N}{cat@$\vNA$,
  category of von Neumann algebras} of von Neumann algebra,
  which provides a model for quantum computation. 
For example, the qubit is represented by the von Neumann algebra
of $2\times 2$ complex matrices (see 
point~\ref{ex:effectusNA-qubit} below).

For completeness, we give the definition of von Neumann algebra (in
point~\ref{ex:effectusNA-defNA}), and the morphisms of~$\vNA$ (in
point~\ref{ex:effectusNA-defMorph}), but we refer the reader
to~\cite{kadison97,paulsen2002,Sakai71} for the intricate details.

\begin{enumerate}
\item
The structure that invoked the study of von Neumann algebra is the set
$\B(\mathscr{H})$ of \emph{bounded} operators
$T\colon\mathscr{H}\to\mathscr{H}$ on a Hilbert space~$\mathscr{H}$.
Recall that an operator~$T$ on~$\mathscr{H}$ is bounded provided that:
$$\begin{array}{rcccl}
\|T\|
& = &
\inf\setin{b}{[0,\infty)}{\allin{x}{\mathscr{H}}{\|T(x)\|\leq b\cdot \|x\|}}
& < &
\infty.
\end{array}$$

\noindent Not only is~$\B(\mathscr{H})$ a complex vector
space (with coordinatewise operations), but it also carries a
multiplication (namely composition of operators), an involution
(viz.~taking adjoint), and a unit (viz.~the identity operator), which
interact appropriately, and $\B(\mathscr{H})$ is therefor a
\emph{unital $*$-algebra}.  The adjoint of an operator $S\in
\B(\mathscr{H})$ is the the unique operator~$S^{*}\in
\B(\mathscr{H})$ with $\inprod{S(x)}{y} = \inprod{x}{S^*(y)}$
for all~$x,y\in\mathscr{H}$.

We assume that every $*$-algebra has a unit (`is unital'). A
$*$-subalgebra of a $*$-algebra~$\mathscr{A}$ is 
a complex linear subspace $\mathscr{S}$
of~$\mathscr{A}$ which is closed under multiplication and involution, $(-)^*$,
and contains the unit $1$ 
of~$\mathscr{A}$. An isomorphism between $*$-algebras
is a linear bijection which preserves multiplication, involution and
unit.

\item Let~$\mathscr{H}$ be a Hilbert space. A \emph{unital
  $C^*$-algebra of operators} $\mathscr{A}$ 
  on~$\mathscr{H}$ is a $*$-subalgebra
  of $\B(\mathscr{H})$ which is \emph{norm closed}, that is,
  if $(T_\alpha)_\alpha$ is a net in~$\mathscr{A}$ and $S\in \mathscr{A}$ 
	  such that
  $\|T_\alpha-S\|\rightarrow 0$, then $S\in \mathscr{A}$. A \emph{von Neumann
    algebra of operators} $\mathscr{A}$ on~$\mathscr{H}$ is a $*$-subalgebra
  $\B(\mathscr{H})$, which is \emph{weakly closed}, that is,
  if $(T_\alpha)_\alpha$ is a net in~$\mathscr{A}$ and $S\in \mathscr{A}$ 
	  such that
  $\inprod{x}{(S-T_\alpha )(x)}\rightarrow 0$ for
  all~$x\in\mathscr{H}$, then $S\in \mathscr{A}$.

\item \label{ex:effectusNA-defNA} A \emph{unital $C^*$-algebra}
  \index{S}{C@$C^*$-algebra} $\mathscr{A}$ 
  is a $*$-algebra which is isomorphic to
  a unital $C^*$-algebra of operators on some Hilbert space.  (This is
  not the usual definition, but it is equivalent,
  see~\cite[Thm.~4.5.6]{kadison97}). A \emph{von Neumann
    algebra}\index{S}{von Neumann algebra} 
    (also  \emph{$W^*$-algebra}\index{S}{$W^*$-algebra}) 
    $\mathscr{A}$ is a $*$-algebra which
  is isomorphic to a von Neumann algebra of operators on some Hilbert
  space (cf.~text above Example 5.1.6 of~\cite{kadison97}).  Every von
  Neumann algebra is also a $C^*$-algebra, but the converse is false.
  In this text we will deal mainly with von Neumann
  algebras; we added the definition of $C^*$-algebra here only 
  for comparison.

\item \label{ex:effectusNA-product} For every Hilbert
  space~$\mathscr{H}$, the $*$-algebra of operators
  $\B(\mathscr{H})$ is a von Neumann algebra.  In particular,
  $\{0\}$, $\C$, $M_2$, $M_3$, \dots are all von Neumann algebras,
  where $M_n$ is the $*$-algebra of $n\times n$ complex matrices. The
  cartesian product, $\mathscr{A}\oplus \mathscr{B}$, of two von
  Neumann algebras is again a von Neumann algebra with coordinatewise
  operations. In particular, the complex plane $\mathbb{C}^2 =
  \mathbb{C}\oplus\mathbb{C}$ is a von Neumann algebra. Given a von
  Neumann algebra~$\mathscr{A}$ the $*$-algebra~$M_n(\mathscr{A})$ of
  $n\times n$-matrices with entries drawn from~$\mathscr{A}$ is a von
  Neumann algebra.

\item \label{ex:effectusNA-positive} An element~$a$ of a von Neumann
  algebra~$\mathscr{A}$ 
  is called \emph{positive} \index{S}{positive!-- element}
  if $a=b^* \cdot b$ for some~$b\in \mathscr{A}$, and \emph{self-adjoint}
  \index{S}{self-adjoint} if $a^*=a$.  The set of positive elements
  of~$\mathscr{A}$ is denoted by~$\mathscr{A}_+$ 
  and the set of self-adjoint elements
  of~$\mathscr{A}$ is denoted by~$\mathscr{A}_{\mathrm{sa}}$.
   All positive elements of~$\mathscr{A}$
  are self-adjoint.

A von Neumann algebra~$\mathscr{A}$ is partially 
ordered by: $a\leq b$ iff $b-a$
is positive.  The self-adjoint elements of a von Neumann algebra~$\mathscr{A}$
form an order unit space,~$\mathscr{A}_{\mathrm{sa}}$. 
For every positive element
$a\in \mathscr{A}$ of a von Neumann algebra there is a unique positive element
$b\in \mathscr{A}$ with $b^2 = a$, which is denoted by~$\sqrt{a}$.

In a von Neumann algebra every bounded directed net of self-adjoint
elements~$D$ has a supremum~$\bigvee D$ in $\mathscr{A}_{\mathrm{sa}}$.  Both
`bounded' and `directed' are necessary, and the statement is false
for $C^*$-algebras.

\item
A linear map $f\colon \mathscr{A}\to \mathscr{B}$ between 
von Neumann algebras
is called 
\begin{enumerate}
\item \emph{unital}\index{S}{unital map!-- between von Neumann
  algebras} if $f(1)=1$;

\item \emph{subunital}\index{S}{subunital map!-- between von Neumann
  algebras} if $f(1)\leq 1$;

\item \emph{positive}\index{S}{positive map between von Neumann
  algebras} if $f(a)\in \mathscr{B}_+$ for all~$a\in \mathscr{A}_+$;

\item \emph{involutive}\index{S}{involutive map} if $f(a^*)=f(a)^*$
  for all~$a\in \mathscr{A}$;

\item \emph{multiplicative}\index{S}{multiplicative map} if $f(a\cdot
  b)=f(a)\cdot f(b)$ for all~$a,b\in \mathscr{A}$;

\item \emph{completely positive}\index{S}{positive!completely
  --}\index{S}{completely positive map} (see~\cite{paulsen2002}) if
  for every $n\in \mathbb{N}$ the map
$$\xymatrix@C+1pc{
M_n(\mathscr{A}) \ar[r]^-{M_n(f)} & M_n(\mathscr{B})
}$$

\noindent given by $(M_n(f)(A))_{ij}=f(A_{ij})$ for all~$A\in
M_n(\mathscr{A})$ is positive;

\item \emph{normal}\index{S}{normal map} if $f(\bigvee D)=\bigvee
  f(D)$ for every bounded directed set $D\subseteq
  \mathscr{A}_{\mathrm{sa}}$ (cf.~\cite{Sakai71}).
\end{enumerate}

\item\label{ex:effectusNA-defMorph} We denote the category of von
  Neumann algebras and completely positive normal unital maps
  by~$\vNA$.  It has an initial object~$\mathbb{C}$, final
  object~$\{0\}$, and products given by $\oplus$ (see
  point~\ref{ex:effectusNA-product}). The opposite category
  $\op{\vNA}$ then has finite coproducts and a final object.  In fact,
  $\op{\vNA}$ is an effectus for almost the same reasons as
  $\op{\OUS}$ and $\op{\OUG}$ are an effectuses (see
  Example~\ref{ex:effectusOUG}).

As is the case for~$\op{\OUS}$, the predicates on a von Neumann
algebra $\mathscr{A}$ correspond to elements $a\in \mathscr{A}$ with $0\leq a
\leq 1$, which are called \emph{effects}.  In fact, the map
$\Pred(\mathscr{A})\to [0,1]_\mathscr{A}$ 
given by $p\mapsto p(1,0)$ is an isomorphism of
effect algebras, see the correspondences~\eqref{corr:OUGpredpartmap}.
Also, the partial maps in $\Par(\op{\vNA})$ correspond to the
completely positive normal \emph{subunital} maps between von Neumann
algebras.

\item 
A von Neumann algebra~$\mathscr{A}$ is commutative
if $a\cdot b = b\cdot a$ for all $a,b\in \mathscr{A}$.
We write $\CvNA \hookrightarrow \vNA$\index{N}{cat@$\CvNA$,
  category of commutative von Neumann
  algebras}\index{S}{commutative!-- von Neumann algebra} 
\index{S}{von Neumann algebra!commutative --} for the full subcategory
of \emph{commutative} von Neumann algebras.  We recall that a positive
unital map $\mathscr{A} \rightarrow \mathscr{B}$ is automatically
completely positive when either $\mathscr{A}$ or $\mathscr{B}$ is
commutative.  These commutative von Neumann algebras capture
probabilistic models, as special case of the general quantum models in
$\vNA$. Also the category $\op{\CvNA}$ is an effectus.

\item \label{ex:effectusNA-qubit} For quantum computation the most
  important von Neumann algebra is perhaps the $*$-algebra~$M_2$ of
  $2\times 2$ matrices, because it models a
  \emph{qubit};\index{S}{qubit} it is closely related to the Hilbert
  space~$\C^2$, since $M_{2} = \B(\C^{2})$.

A \emph{pure state}\index{S}{pure state!-- of a
  qubit}\index{S}{state!pure --!--- of a qubit} of a qubit is a vector
in~$\C^2$ of length~$1$, and can thus be written as
$\alpha\left|0\right> + \beta\left|1\right>$, where $\ket{0} = (1,0)$
and $\ket{1} = (0,1)$ are base vectors in $\C^2$, and
$\alpha,\beta\in\C$ are scalars with $|\alpha|^2 + |\beta|^2 =1$.  One
says that any pure state of a qubit is in a superposition of~$\ket{0}$
and~$\ket{1}$.  Two pure states $(\alpha_1,\beta_1)$ and
$(\alpha_2,\beta_2)$ of a qubit are equivalent if
$(z\cdot\alpha_1,z\cdot \beta_1) = (\alpha_2,\beta_2)$ for some~$z\in
\C$ (with $|z|=1$).

A \emph{sharp predicate}\index{S}{sharp!-- predicate!-- on a qubit} on
a qubit is a linear subspace~$C$ of~$\C^2$ which in turn corresponds
to a projection~$P\in M_2$ (namely the orthogonal projection
onto~$C$).  For instance, the predicate ``the qubit is in state
$\ket{0}$'' is represented by the linear subspace $\{(\alpha,0)\colon
\alpha\in \C\}$ (the $x$-axis), and by the
projection~$\spmat{1&0\\0&0}$.

\item Let~$C$ be a linear subspace of~$\C^2$ (i.e.~a sharp predicate
  of a qubit) with orthogonal projection~$P\in M_2$.  Then~$P^\bot =
  I-P$ is again a projection, onto the \emph{orthocomplement} $C^\bot
  = \setin{y}{\C^2}{\allin{x}{C}{\inprod{x}{y}=0}}$ of~$C$.

Let $x\in \C^2$ be a pure state of a qubit (so $\|x\|=1$).  Then:
$$\begin{array}{rcl}
x \in C 
& \iff &
\text{predicate $C$  certainly holds in state~$x$} 
\\
x \in C^\bot 
& \iff &
\text{predicate $C$ certainly does not hold in state~$x$.}
\end{array}$$

\noindent However, if $x\notin C\cup C^\bot$, then the matter
whether~$C$ holds in~$x$ is undecided.  The matter can, however, be
pressed using measurement; it forces the qubit to the state
$P(x)\cdot\|P(x)\|^{-1}$ with probability~$\|P(x)\|^2$, and to the
state $P^\perp(x)\cdot\|P^\perp(x)\|^{-1}$ with
probability~$\|P^\bot (x)\|^2$. 

\begin{center}
\begin{tikzpicture}
\draw[line width=.8pt] (-1,0) -- (3,0);
\draw (3,0) node[anchor=west]{$C$};

\draw[line width=.8pt] (0,-.5) -- (0,2);
\draw (0,2) node[anchor=south]{$C^\perp$};

\draw[fill=black] (2,1) circle (.8pt);
\draw (2,1) node[anchor=west]{$x$};

\draw[->] (2,1) -- (2,0);
\draw[fill=black] (2,0) circle (.8pt);
\draw (2,0) node[anchor=north]{$P(x)$};

\draw[->] (2,1) -- (0,1);
\draw[fill=black] (0,1) circle (.8pt);
\draw (0,1) node[anchor=east]{$P^\perp (x)$};
\end{tikzpicture}
\end{center}
Thus, using measurement one finds that
the predicate~$C$ holds in state~$x$ with probability $\|P(x)\|^2$.
After measurement, the state of the qubit has become a (probabilistic)
mixture of
$P(x)\cdot\|P(x)\|^{-1}\in C $ and
$P^\perp(x)\cdot\|P^\perp(x)\|^{-1} \in C^\perp$.

\item Any pure state~$x$ of a qubit gives rise to a
  state~$\omega_x\colon M_2\to \C$ of $\op{\vNA}$ given by
  $\omega_x(A)=\inprod{Ax}{x}$ for all~$A\in M_2$. The states of the
  form~$\omega_x$ are precisely the extreme points of the convex set
  $\Stat(M_2)$ of all states on~$M_2$. The states which are not
  extreme correspond to probabilistic mixtures of pure states.  For
  example, if one measures whether the pure state~$\ket{{+}} =
  \frac{1}{\sqrt{2}}(\ket{0} + \ket{1})$ is in the state $\ket{0}$,
  then the resulting state is
  $\frac{1}{2}(\omega_{\ket{0}}+\omega_{\ket{1}})$, which not extreme,
  and corresponds to a qubit which is either in state $\ket{0}$ or
  $\ket{1}$ with equal probability.

Any state of~$M_2$ can be written as a convex combination
of extreme states. But this is not true in general:
a von Neumann algebra might have no extreme states
at all.

\item
Any sharp predicate~$C$ of the qubit
gives rise to a predicate
on~$M_2$ in~$\op{\vNA}$.
To see this recall that
the predicates on~$M_2$ in~$\op{\vNA}$
correspond to the elements $P\in M_2$ with~$0\leq P \leq 1$.
Now, the sharp predicate~$C$ corresponds to the 
orthogonal projection~$P_C$ onto~$C$.

The predicates on~$M_2$ of the form~$P_C$
are precisely those predicates~$P$ with:
for every $Q\in M_2$ with $0\leq Q \leq 1$,
if~$Q\leq P$ and~$Q \leq P^\perp$,
then~$Q=0$. 

\item
Let us think about the conjunction of 
two sharp predicates~$C$ and~$D$ 
on the qubit. Surely,
it should be~$C\cap D$.
Is measuring whether $C\cap D$
holds simply a matter of first measuring~$C$
and then measuring~$D$?

Let~$P$ be the orthogonal projection on~$C$,
and let~$Q$ be the orthogonal projection onto~$D$.
First note that if~$x\in\mathbb{C}^2$
is a state of the qubit,
then the probability to find that~$C$ holds
upon measurent, and then that~$D$ holds up measurement
is $\|Q (P (x))\|^2$,
and the resulting state is 
$Q (P (x))\cdot\|Q (P (x))\|^{-1}$.
By the way, $\|Q (P (x))\|^2 = \inprod{P (Q (P (x)))}{x}
= \varphi_x(PQP)$,
so first measuring~$C$ and then~$D$
corresponds in this formalism to the operator $PQP \in M_2$
(and not to $QP$ as one might have guessed). 
\begin{center}
\begin{tikzpicture}
\draw[line width=.8pt] (-1,0) -- (3,0);
\draw (3,0) node[anchor=west]{$C$};

\draw[line width=.8pt] (-.28,-.28) -- (2,2);
\draw (2,2) node[anchor=south west]{$D$};

\draw[fill=black] (2.6,.5) circle (.8pt);
\draw (2.6,.5) node[anchor=west]{$x$};

\draw (2.6,.5) -- (2.6,0);
\draw[->] (2.6,0) -- (1.3,1.3);
\draw[fill=black] (1.3,1.3) circle (.8pt);
\draw (1.3,1.3) node[anchor=south east]{$Q(P(x))$};

\draw (2.6,.5) -- (1.535,1.535);
\draw [->] (1.535,1.535) -- (1.535,0);
\draw[fill=black] (1.535,0) circle (.8pt);
\draw (1.535,0) node[anchor=north]{$P(Q(x))$};
\end{tikzpicture}
\end{center}
Now, 
(by choosing $C:=\{(\alpha,0)\colon \alpha\in \mathbb{C}\}$
and $D:=\{(\alpha,\alpha)\colon \alpha\in \mathbb{C}\}$)
one easily sees
the following peculiar behaviour typical for quantum systems:
\begin{enumerate}
\item
Measurement disturbes the system:
while $P(x)\cdot \|P(x)\|^{-1}$
is in~$C$,
the state
$Q(P(x))\cdot \|Q(P(x))\|^{-1}$
need not be in~$C$;
\item
The order of measurement matters for the probability
of the outcome:
we might have $\|Q(P(x))\|^2\neq \|P(Q(x))\|^2$;
\item
The composition of two sharp measurements
need not be sharp:
$PQP$ need not be a projection.
\end{enumerate}
Thus, 
measuring whether $C\cap D$ holds
is not always simply a matter of measuring~$C$
and then measuring~$D$.

The operator $PQP$ is called the sequential product of~$P$ and~$Q$ and
is denoted by~$P \andthen Q$.  More generally, if $\mathscr{A}$ is a
von Neumann algebra, and $p,q\in [0,1]_\mathscr{A}$, then the
sequential product is given by $p \andthen q = \sqrt{p}q\sqrt{p}$, and
it represents first measuring~$p$ and then measuring~$q$.

We will see in Theorem~\ref{thm:uniqueness-asrt}
that~$p\andthen q$
can be defined in~$\op{\vNA}$
using only the language
of category theory.

\item We have the following situation for a qubit:
$$\begin{array}{rclcrcl}
\{\text{pure states}\}
& \subseteq &
\C^2
& \qquad &
\{\text{sharp predicates}\} 
& \subseteq &
M_2.
\end{array}$$

\noindent This situation is typical: any Hilbert space~$\mathscr{H}$,
representing a purely quantum system, gives rise to the von Neumann
algebra $\B(\mathscr{H})$ of bounded operators
on~$\mathscr{H}$; the elements of norm~$1$ in~$\mathscr{H}$ are the
pure states of this quantum system, while the projections
in~$\B(\mathscr{H})$ are the sharp predicates, corresponding
to \emph{closed} linear subspaces of~$\mathscr{H}$.  For instance,
$M_3$ represents a qutrit, and $\B(\ell_2)$ represents a
quantum integer.

\item
Let~$\mathscr{A}$ be a von Neumann algebra. An important technical
fact is that the states on~$\mathscr{A}$ in~$\op{\vNA}$ (usually
called \emph{normal} states in the literature) are \emph{separating},
that is, for all~$a,b\in \mathscr{A}$ we have $a\leq b$ if and only if
for every $\omega\colon\mathscr{A}\to\mathbb{C}$ in~$\vNA$ we have
$\omega(a)\leq\omega(b)$.

This entails
that the \emph{ultraweak topology}\index{S}{ultraweak!-- topology}---the 
least topology
on~$\mathscr{A}$
such that all states on~$\mathscr{A}$ in~$\op{\vNA}$
are continuous---is Hausdorff.
A net $(a_i)_{i\in D}$
of elements of~$\mathscr{A}$
converges \emph{ultraweakly}\index{S}{ultraweak!-- convergence}
(i.e.~with respect to the ultraweak topology)
to an element $a\in\mathscr{A}$
iff $(\omega(a_i))_{i\in D}$
converges to~$\omega(a)$
for every state~$\omega$ on~$\mathscr{A}$ in~$\op{\vNA}$.

It is not hard to see that a positive linear map
$f\colon \mathscr{A}\to\mathscr{B}$
between von Neumann algebras
is ultraweakly continuous\index{S}{ultraweak!-- continuity}
if and only if~$f$ is normal.
In this sense
the ultraweak topology
plays the same role in
the theory of von Neumann algebras
as the Scott topology plays in domain theory.

There is much more
to be said about the ultraweak topology in particular,
and von Neumann algebras in general,
but since we draw
most (new) results
about von Neumann algebras from
more specialized publications, \cite{WesterbaanW15,Rennela14,Cho14},
we refrain from diverging
any farther into the theory of operator algebras.
\end{enumerate}
\end{example}

\section{The structure of partial maps and predicates}\label{sec:partialpred}

This section concentrates on the structure that exists on partial maps
$X \pto Y$ in an effectus. It turns out that they come equipped with a
partial binary sum operation, written as $\ovee$, which has the zero
map as unit and is commutative and associative (in an appropriately
partial sense). Abstractly, the relevant observation is that the
category $\Par(\cat{B})$ of partial maps of an effectus $\cat{B}$ is
enriched over the category $\PCM$\index{N}{cat@$\PCM$, category of partial
  commutative monoids} of partial commutative monoids.  We first
describe what such PCMs are.

The section continues with predicates, as special case. The main
result of this section states that these predicates in an effectus
form an effect module, in a functorial way.

\begin{definition}
\label{def:pcm}
A \emph{partial commutative monoid},\index{S}{monoid!partial
  commutative --}\index{partial!-- commutative monoid} abbreviated as
PCM,\index{S}{PCM|see{partial commutative monoid}} is given by a set
  $M$ with a zero element $\zero\in M$ and a partial binary operation
  $\ovee$\index{N}{$\ovee$, partial sum} on $M$, satisfying the
  following requirements --- where we call $x,y$ and orthogonal and
  write $x \orthogonal y$\index{N}{$\orthogonal$, orthogonality} when
  $x\ovee y$ is defined.
\begin{enumerate}
\item $x \orthogonal \zero$ and $x \ovee \zero = x$, for each $x\in M$;

\item if $x \orthogonal y$ then $y \orthogonal x$ and $x\ovee y = y
  \ovee x$, for all $x,y\in M$;

\item if $x \orthogonal y$ and $(x \ovee y) \orthogonal z$, then $y
  \orthogonal z$ and $x \orthogonal (x \ovee z)$, and moreover
  $(x\ovee y) \ovee z = x \ovee (y\ovee z)$, for all $x,y,z\in M$.
\end{enumerate}

\noindent A homomorphism $f\colon M \rightarrow N$ between two PCMs is
a function that preserves $\zero$ and $\ovee$, in the sense that:
\begin{enumerate}
\item $f(\zero) = \zero$;

\item if $x \orthogonal y$ in $M$, then $f(x) \orthogonal f(y)$ in $N$
  and $f(x \ovee y) = f(x) \ovee f(y)$.
\end{enumerate}

\noindent We write $\PCM$ for the resulting category.
\end{definition}

This notion of PCM shows up in the following way. Sums $\ovee$ are
defined for parallel partial maps, via bounds. These definitions work
well and satisfy appropriate properties via the categorical structure
in effectuses, as identified in the previous section.

\begin{proposition}[From~\cite{Cho15}]
\label{prop:effectusPCM}
Let $\cat{B}$ be an effectus.
\begin{enumerate}
\item \label{prop:effectusPCM:hom} Each homset $\Par(\cat{B})\big(X,
  Y\big)$ of partial maps $X \pto Y$ is a PCM in the following way.
\begin{itemize}
\item $f \orthogonal g$ iff there is a `bound'\index{S}{bound}
  $b\colon X \pto Y+Y$ with $\rhd_{1} \pafter b = f$ and $\rhd_{2}
  \pafter b = g$. By Lemma~\ref{lem:effectusjm} such a bound $b$, if
  it exists, is necessarily unique.

\item if $f \orthogonal g$ via bound $b$, then we define $f \ovee g =
  \nabla \pafter b = (\nabla\tplus\idmap) \tafter b \colon X \pto Y$.
\end{itemize}

\item \label{prop:effectusPCM:pres} This PCM-structure is preserved
  both by pre- and by post-composition in $\Par(\cat{B})$.
\end{enumerate}
\end{proposition}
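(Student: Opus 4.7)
The plan is to verify the PCM axioms for part~\eqref{prop:effectusPCM:hom} directly from the bound description, and then check preservation by composition for part~\eqref{prop:effectusPCM:pres} using naturality of the partial projections and the codiagonal. Throughout, uniqueness of bounds is automatic: joint monicity of $\rhd_1, \rhd_2$ (Lemma~\ref{lem:effectusjm}) forces a bound for $(f,g)$, if one exists, to be determined by $f$ and $g$.

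For the zero axiom, given $f \colon X \pto Y$, the map $\kappa_1 \pafter f \colon X \pto Y+Y$ is a bound for $(f, \zero)$, since the butterfly diagram~\eqref{diag:butterfly} gives $\rhd_1 \pafter \kappa_1 = \idmap$ and $\rhd_2 \pafter \kappa_1 = \zero$; consequently $f \ovee \zero = \nabla \pafter \kappa_1 \pafter f = f$. Commutativity follows by composing a bound $b$ for $(f,g)$ with the swap $\sigma = [\kappa_2, \kappa_1] \colon Y+Y \tto Y+Y$, using $\rhd_i \pafter \klin{\sigma} = \rhd_{3-i}$ and $\nabla \tafter \sigma = \nabla$.

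The main step is associativity. Suppose $b_1$ is a bound for $(f,g)$ and $b_2$ is a bound for $(f \ovee g, h)$. I invoke Lemma~\ref{lem:effectuspb}~\eqref{lem:effectuspb:par} with the codiagonal $\nabla \colon Y+Y \tto Y$ to obtain the pullback square
$$\xymatrix@C+1pc{
(Y+Y)+Y\ar[r]|-{\pafter}^-{\klin{\nabla}\pplus\idmap}\ar[d]|-{\pafter}_{\rhd_1} &
   Y+Y\ar[d]|-{\pafter}^{\rhd_1}
\\
Y+Y\ar[r]|-{\pafter}_-{\klin{\nabla}} & Y.
}$$
Since $\klin{\nabla} \pafter b_1 = f \ovee g = \rhd_1 \pafter b_2$, this pullback yields a unique mediating $d \colon X \pto (Y+Y)+Y$ with $\rhd_1 \pafter d = b_1$ and $(\klin{\nabla}\pplus\idmap) \pafter d = b_2$. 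Using the associator $\alpha \colon (Y+Y)+Y \tto Y+(Y+Y)$, I set $b_3 := \rhd_2 \pafter \klin{\alpha} \pafter d$ as a candidate bound for $(g, h)$: the naturality~\eqref{eqn:partprojnat} of $\rhd_i$, together with the identities $\rhd_1 \pafter \rhd_2 \pafter \klin{\alpha} = \rhd_2 \pafter \rhd_1$ and $\rhd_2 \pafter \rhd_2 \pafter \klin{\alpha} = \rhd_2$ on $(Y+Y)+Y$, yields $\rhd_1 \pafter b_3 = g$ and $\rhd_2 \pafter b_3 = h$. Analogously, $b_4 := (\idmap \pplus \nabla) \pafter \klin{\alpha} \pafter d$ bounds $(f, g \ovee h)$. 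The associativity equation then reduces to
$$\begin{array}{rcccccl}
(f\ovee g) \ovee h
& = &
[\nabla, \idmap] \pafter d
& = &
[\idmap, \nabla] \pafter \klin{\alpha} \pafter d
& = &
f \ovee (g\ovee h),
\end{array}$$
where the middle equality holds because both sides equal (the Kleisli embedding of) the ternary codiagonal $[\idmap, \idmap, \idmap] \colon (Y+Y)+Y \tto Y$.

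For part~\eqref{prop:effectusPCM:pres}, preservation is routine. Given a bound $b$ for $(f,g)$ and $h \colon Z \pto X$, the composite $b \pafter h$ bounds $(f \pafter h, g \pafter h)$ and satisfies $\nabla \pafter b \pafter h = (f \ovee g) \pafter h$. Given $k \colon Y \pto W$, the naturality~\eqref{eqn:partprojnat} gives $\rhd_i \pafter (k \pplus k) = k \pafter \rhd_i$, so $(k \pplus k) \pafter b$ bounds $(k \pafter f, k \pafter g)$; their sum equals $\nabla \pafter (k \pplus k) \pafter b = [k, k] \pafter b = k \pafter \nabla \pafter b = k \pafter (f \ovee g)$, using naturality of the codiagonal in $\Par(\cat{B})$. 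The main obstacle is the associativity step: beyond invoking the pullback from Lemma~\ref{lem:effectuspb}, one must carefully track how the partial projections on $(Y+Y)+Y$ interact with the associator $\alpha$ in order to extract the bounds $b_3$ and $b_4$, which is routine bookkeeping but is where all the structural content of the effectus axioms is used.
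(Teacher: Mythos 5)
Your proposal is correct and follows essentially the same route as the paper's proof: the zero and commutativity axioms via the bounds $\kappa_1 \pafter f$ and the swap, associativity via the pullback of $\klin{\nabla}\pplus\idmap$ against $\rhd_1$ from Lemma~\ref{lem:effectuspb}~\eqref{lem:effectuspb:par} to produce a common refinement on $(Y+Y)+Y$, and preservation under composition via $b \pafter h$ and $(k\pplus k)\pafter b$. Your maps $\rhd_2\pafter\klin{\alpha}\pafter d$ and $(\idmap\pplus\nabla)\pafter\klin{\alpha}\pafter d$ coincide with the paper's $(\rhd_2\pplus\idmap)\pafter c$ and $[\idmap,\kappa_2]\pafter c$, so the difference is purely notational.
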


The latter statement can be stated more abstractly as: $\Par(\cat{B})$
is enriched over $\PCM$. This works since the category $\PCM$ is
symmetric monoidal, see~\cite[Thm.~8]{JacobsM12a}. Another way is to
say that $\Par(\cat{B})$ is a finitely partially additive category
(abbreviated as `FinPAC'), using the terminology of~\cite{ArbibM1980}
(and~\cite{Cho15}). The original definition involves countable sums,
whereas we only use finite sums, see Subsection~\ref{subsec:FinPAC}.

More results about these partial homsets follow, in
Proposition~\ref{prop:effectusPCMorder}.

\begin{proof}
For an arbitrary map $f\colon X \pto Y$ we can take $b =
\kappa_{1} \pafter f \colon X \pto Y+Y$ as bound for $f
\orthogonal \zero$, showing $f \ovee \zero = f$. If $f \orthogonal g$ via $b
\colon X \pto Y+Y$, then swapping the outcomes of $b$ yields a bound
for $g \orthogonal f$, proving $f \ovee g = g \ovee f$. The more
difficult case is associativity.

So assume $f \orthogonal g$ via bound $a$, and $(f \ovee g)
\orthogonal h$ via bound $b$. We then have:
$$\left\{\begin{array}{rcl}
\rhd_{1} \pafter a
& = &
f
\\
\rhd_{2} \pafter a
& = &
g
\\
\nabla \pafter a
& = &
f \ovee g
\end{array}\right.
\qquad\mbox{and}\qquad
\left\{\begin{array}{rcl}
\rhd_{1} \pafter b
& = &
f \ovee g
\\
\rhd_{2} \pafter b
& = &
h
\\
\nabla \pafter b
& = &
(f \ovee g) \ovee h.
\end{array}\right.$$

\noindent Consider the diagram below, where the rectangle is a
pullback in $\Par(\cat{B})$, as instance of the diagram on the right
in~\eqref{diag:effectuspartprojpb}. Here we use that the codiagonal
$\nabla = [\klin{\idmap}, \klin{\idmap}] = \klin{[\idmap,\idmap]}$ is
total in $\Par(\cat{B})$.
$$\xymatrix{
X\ar@/^2ex/[drr]|-{\pafter}^-{b}\ar@/_2ex/[ddr]|-{\pafter}_-{a}
    \ar@{..>}[dr]|-{\pafter}^-{c}
\\
& (Y+Y)+Y\ar[r]|-{\pafter}_-{\nabla\pplus\idmap}\ar[d]|-{\pafter}_{\rhd_{1}}
   \pullback & 
   Y+Y\ar[d]|-{\pafter}^{\rhd_1}
\\
& Y+Y\ar[r]|-{\pafter}_-{\nabla} & Y
}$$

\noindent We now take $c' = (\rhd_{2}\pplus\idmap) \pafter c \colon X
\pto Y+Y$. This $c'$ is a bound for $g$ and $h$, giving
$g\orthogonal h$ since:
$$\begin{array}{rclcrcl}
\rhd_{1} \pafter c'
& = &
\rhd_{1} \pafter (\rhd_{2}\pplus\idmap) \pafter c
& \qquad &
\rhd_{2} \pafter c'
& = &
\rhd_{2} \pafter c 
\\
& = &
\rhd_{2} \pafter \rhd_{1} \pafter c
& &
& = &
\rhd_{2} \pafter (\nabla\pplus\idmap) \pafter c 
\\
& = &
\rhd_{2} \pafter a
& &
& = &
\rhd_{2} \pafter b 
\\
& = &
g
& &
& = &
h.
\end{array}$$

\noindent Next, the map $c'' = [\idmap, \kappa_{2}] \pafter c \colon X
\pto Y+Y$ is a bound for $f$ and $g\ovee h$ since:
$$\begin{array}{rclcrcl}
\rhd_{1} \pafter c''
& = &
[\rhd_{1} \pafter \idmap, \rhd_{1} \pafter \kappa_{2}] \pafter c
& \qquad &
\rhd_{2} \pafter c''
& = &
[\rhd_{2} \pafter \idmap, \rhd_{2} \pafter \kappa_{2}] \pafter c 
\\
& = &
[\rhd_{1}, \zero] \pafter c
& &
& = &
[\rhd_{2}, \idmap] \pafter c
\\
& = &
\rhd_{1} \pafter [\idmap, \zero] \pafter c
& &
& = &
\nabla \pafter (\rhd_{2}\pplus\idmap) \pafter c
\\
& = &
\rhd_{1} \pafter \rhd_{1} \pafter c
& &
& = &
\nabla \pafter c'
\\
& = &
f
& &
& = &
g \ovee h.
\end{array}$$

\noindent We now obtain the associativity of $\ovee$:
$$\begin{array}{rcl}
f\ovee (g\ovee h)
\hspace*{\arraycolsep} = \hspace*{\arraycolsep}
\nabla \pafter c''
\hspace*{\arraycolsep} = \hspace*{\arraycolsep}
\nabla \pafter [\idmap, \kappa_{2}] \pafter c 
& = &
[\nabla, \idmap] \pafter c \\
& = &
\nabla \pafter (\nabla\pplus\idmap) \pafter c \\
& = &
\nabla \pafter b \\
& = &
f \ovee (g\ovee h).
\end{array}$$

Finally we check that $\ovee$ is preserved by pre- and post-composition.
Let $f \orthogonal g$ via bound $b \colon X \pto Y+Y$.
\begin{itemize}
\item For $h\colon Z \pto X$ it is easy to see that $b \pafter h$ is
a bound for $f \pafter h$ and $g \pafter h$, giving $(f \pafter h) \ovee
(g \pafter h) = (f \ovee g) \pafter h$.

\auxproof{
Clearly $\rhd_{1} \pafter b \pafter h = f \pafter h$ and $\rhd_{2} \pafter
b \pafter h = g \pafter h$, and thus:
$$\begin{array}{rcccl}
(f \pafter h) \ovee (g \pafter h)
& = &
\nabla \pafter b \pafter h
& = &
(f \ovee g) \pafter h.
\end{array}$$
}

\item For $k\colon Y \pto Z$ we claim that $(k\pplus k) \pafter b$ is
  a bound for $k \pafter f$ and $k \pafter g$. This follows from
  naturality~\eqref{eqn:partprojnat} of the partial projections
  $\rhd_{i}$. In this way we get $(k \pafter f) \ovee (k \pafter g) =
  k \pafter (f \ovee g)$. \QED

\auxproof{
$$\begin{array}{rcccccl}
(k \pafter f) \ovee (k \pafter g)
& = &
\nabla \pafter (k \pplus k) \pafter b 
& = &
k \pafter \nabla \pafter b
& = &
k \pafter (f \ovee g).
\end{array}$$
}
\end{itemize}
\end{proof}

Here is a concrete illustration: the two partial projections
$\rhd_{1}, \rhd_{2} \colon X+X \pto X$ in an effectus are orthogonal,
via the identity map $X+X \rightarrow X+X$. This is obvious. Hence we
have as sum $\rhd_{1} \ovee \rhd_{2} = \nabla$.

We briefly show what this PCM-structure is in our leading examples.

\begin{example}
\label{ex:PCM}
In the category $\Par(\Sets)$ of partial maps of the effectus $\Sets$
two partial maps $f,g \colon X \tto Y+1$ are orthogonal if they have
disjoint domains of definition. That is:
$$\begin{array}{rcl}
f \orthogonal g
& \Longleftrightarrow &
\allin{x}{X}{f(x) = * \vee g(x) = *}.
\end{array}$$

\noindent In that case we have:
$$\begin{array}{rcl}
(f \ovee g)(x)
& = &
\left\{\begin{array}{ll}
f(x) \quad & \mbox{if } g(x) = * \\
g(x) \quad & \mbox{if } f(x) = * \\
\end{array}\right.
\end{array}$$

In the effectus $\Kl(\Dst)$ for discrete probability, orthogonality of
two partial maps $f, g \colon X \rightarrow \sDst(Y)$ is given by:
$$\begin{array}{rcl}
f \orthogonal g
& \Longleftrightarrow &
\allin{x}{X}{\sum_{y\in Y}f(x)(y) + g(x)(y) \leq 1}.
\end{array}$$

\noindent In that case we have $(f \ovee g)(x)(y) = f(x)(y) + g(x)(y)$.

In the effectus $\op{\OUG}$ of order unit groups two subunital
positive maps $f,g \colon G \rightarrow H$ are orthogonal if $f(1) +
g(1) \leq 1$. In that case we simply have $(f\ovee g)(x) = f(x) +
g(x)$. The same description applies in the effectus $\op{\vNA}$ of von
Neumann algebras.
\end{example}

The PCM-structure on partial maps $X \pto Y$ from
Proposition~\ref{prop:effectusPCM} exists in particular for predicates
--- which are maps of the form $X \pto 1$. In the language of total
maps, a bound for $p,q\colon X \tto 1+1$ is a map $b\colon X \tto
(1+1)+1$ with $\IV \tafter b = p$ and $\XI \tafter b = q$, where $\IV,
\XI \colon (1+1)+1 \rightarrow 1+1$ are the jointly monic
maps~\eqref{diag:effectusjointlymonic} in the definition of an
effectus. The actual sum is then given by $p \ovee q =
(\nabla\tplus\idmap) \tafter b \colon X \tto 1+1$.

In this special case of predicates there is more than PCM-structure:
predicates on $X$ form an effect algebra and also an effect
module. This will be shown next. We begin with the definition.

\begin{definition}
\label{def:EA}
An \emph{effect algebra}\index{S}{effect!-- algebra} is a partial
commutative monoid $(M, \ovee, \zero)$ with an orthosupplement
operation $(-)^{\bot} \colon M \rightarrow M$ such that for each $x\in
M$:
\begin{enumerate}
\item \label{def:EA:bot} $x^{\bot}$\index{N}{$x^{\bot}$,
  orthosupplement in an effect algebra} is the unique element with $x
  \ovee x^{\bot} = \one$, where $\one = \zero^{\bot}$;

\item \label{def:EA:ortho} $x \orthogonal \one$ implies $x = \zero$.
\end{enumerate}

\noindent A homomorphism of effect algebras $f \colon M \rightarrow N$
is a homomorphism of PCMs satisfying $f(1) = 1$. We write
$\EA$\index{N}{cat@$\EA$, category of effect algebras} for the resulting
category.
\end{definition}

The unit interval $[0,1]$ is an example of an effect algebra, with $r
\orthogonal s$ iff $r+s \leq 1$, and in that case $r \ovee s = r+s$.
The orthosupplement is given by $r^{\bot} = 1 - r$. Every Boolean
algebra\index{S}{Boolean!-- algebra} is also an effect algebra with $x
\orthogonal y$ iff $x \wedge y = 0$, and in that case $x \ovee y = x
\vee y$. The orthosupplement is negation $x^{\bot} = \neg x$. In
particular, the set $2 = \{0,1\}$ is an effect algebra. It is initial
in the category $\EA$ of effect algebras.

In~\cite{JacobsM12a} it is shown that the category $\EA$ of effect
algebras is symmetric monoidal, where the unit for the tensor
$\otimes$ is the initial effect algebra $2 = \{0,1\}$.

Each effect algebra carries an order, defined by $x \leq y$ iff $x
\ovee z = y$ for some $z$. We collect some basic results. Proofs
can be found in the literature on effect algebras, see
\textit{e.g.}~\cite{DvurecenskijP00}.

\begin{exercise}
\label{exc:EA}
In an effect algebra one has:
\begin{enumerate}
\item \label{exc:EA:Inv} orthosupplement is an involution: $x^{\bot\bot} = x$;

\item \label{exc:EA:Canc} cancellation:\index{S}{cancellation} $x\ovee
  y = x\ovee y'$ implies $y=y'$;

\item \label{exc:EA:ZeroSum} positivity:\index{S}{positive!-- sum}
  $x\ovee y = \zero$ implies $x=y=\zero$;

\item \label{exc:EA:PO} $\leq$ is a partial order with $\one$ as top and
  $\zero$ as bottom element;

\item \label{exc:EA:OrthoAntiMonotone} $x\leq y$ implies $y^{\bot} \leq x^{\bot}$;

\item \label{exc:EA:Orthogonal} $x\ovee y$ is defined iff $x\orthogonal
  y$ iff $x\leq y^{\bot}$ iff $y\leq x^{\bot}$;

\item \label{exc:EA:SumMonotone} $x\leq y$ and $y \orthogonal z$ implies 
$x \orthogonal z$ and $x\ovee z \leq y\ovee z$;

\item \label{exc:EA:Downset} the downset $\downset y = \set{x}{x \leq
  y}$ is again an effect algebra, with $y$ as top, orthosupplement
  $x^{\bot_y} = (y^{\bot}\ovee x)^{\bot}$, and sum $x\ovee_{y} x'$
  which is defined iff $x \orthogonal x'$ and $x\ovee x' \leq y$, and
  in that case $x \ovee_{y} x' = x \ovee x'$. \QED
\end{enumerate}
\end{exercise}

The uniqueness of the orthosupplement $x^{\bot}$ is an important
property in the proof of this exericse. For instance it gives us
$f(x^{\bot}) = f(x)^{\bot}$ for a map of effect algebras, since:
$$\begin{array}{rcccccccl}
f(x^{\bot}) \ovee f(x)
& = &
f(x^{\bot} \ovee x)
& = &
f(\one)
& = &
\one
& = &
f(x)^{\bot} \ovee f(x).
\end{array}$$

A \emph{test} in an effect algebra is a finite set of elements $x_{1},
\ldots, x_{n}$ which are pairwise orthogonal and satisfy $x_{1} \ovee
\ldots \ovee x_{n} = \one$. Each effect algebra thus gives rise to a
functor (presheaf) $\NNO \rightarrow \Sets$, sending a number $n$ to
the collection of $n$-element tests. This structure is investigated
in~\cite{StatonU15}.

Recall that we write $\Pred(X)$ for the collection of predicates $X
\tto 1+1$ and $\tbox{f} = (-) \tafter f$ for the substitution map
$\Pred(X) \rightarrow \Pred(Y)$, where $f \colon Y \tto X$.

\begin{proposition}
\label{prop:effectusEA}
Let $\cat{B}$ be an effectus. Then:
\begin{enumerate}
\item \label{prop:effectusEA:EA} $\Pred(X)$ is an effect algebra, for
  each object $X\in\cat{B}$;

\item \label{prop:effectusEA:fun} sending $X\mapsto\Pred(X)$ and
  $f\mapsto \tbox{f}$ gives a functor $\Pred \colon \cat{B}
  \rightarrow \op{\EA}$.
\end{enumerate}
\end{proposition}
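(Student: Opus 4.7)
The plan is to leverage what is already in place: by Proposition~\ref{prop:effectusPCM}, every homset $\Par(\cat{B})(X,Y)$ is a PCM, and in particular $\Pred(X) = \Par(\cat{B})(X,1)$ is a PCM under $\ovee$ with unit $\zero$. So for part~\eqref{prop:effectusEA:EA} I only need to upgrade this PCM to an effect algebra, using the orthosupplement $p^{\bot}$ of~\eqref{diag:orthosupplement}. This amounts to verifying the two effect-algebra axioms of Definition~\ref{def:EA}.

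For Definition~\ref{def:EA}\eqref{def:EA:bot}, I first show existence: $p \ovee p^{\bot} = \one$. The candidate bound for $p\orthogonal p^{\bot}$ is $\klin{p} = \kappa_{1}\tafter p \colon X \pto 1+1$, viewed as a map $X \tafter (1+1)+1$. A direct calculation gives $\IV \tafter \klin{p} = [\kappa_{1},\kappa_{2}] \tafter p = p$ and $\XI \tafter \klin{p} = [\kappa_{2},\kappa_{1}] \tafter p = p^{\bot}$, and $\nabla \pafter \klin{p} = [[\kappa_{1},\kappa_{1}],\kappa_{2}] \tafter \kappa_{1} \tafter p = \kappa_{1} \tafter \bang \tafter p = \one$. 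For uniqueness, suppose $p \ovee q = \one$ via a bound $b\colon X \pto 1+1$. Then $\nabla \pafter b = \one$, and because $\nabla \pafter b$ and $\one \pafter b$ coincide (both equal $[\kappa_{1}\tafter \bang, \kappa_{2}] \tafter b$, since $[\kappa_{1},\kappa_{1}] = \kappa_{1}\tafter \bang$), we get $\one \pafter b = \one$; so Lemma~\ref{lem:zero} forces $b$ to be total, \textit{i.e.}~$b = \kappa_{1}\tafter b'$ for some $b'\colon X \tafter 1+1$. Evaluating $\IV$ and $\XI$ on this factorisation yields $b' = p$ and $q = p^{\bot}$.

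The main subtle step is the positivity axiom~\eqref{def:EA:ortho}: $p \orthogonal \one \Rightarrow p = \zero$. Given a bound $b\colon X \tafter (1+1)+1$ with $\IV \tafter b = p$ and $\XI \tafter b = \one = \kappa_{1}\tafter \bang$, I rewrite $\XI$ through an associativity-type isomorphism $\sigma\colon (1+1)+1 \congrightarrow 1+(1+1)$ so that $\XI = (\idmap \tplus \bang) \tafter \sigma$. The pullback square on the right in~\eqref{diag:effecutsderivedpb} applied with $f = \idmap_{1}$ and $g = \bang_{1+1}$ then tells me that $\sigma \tafter b$ factors through $\kappa_{1}\colon 1 \tafter 1+(1+1)$, hence $b$ factors through $\kappa_{1}\tafter \kappa_{2}\colon 1 \tafter (1+1)+1$. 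Applying $\IV$ to this factorisation gives $p = [\kappa_{1},\kappa_{2}] \tafter \kappa_{2} \tafter \bang = \kappa_{2}\tafter \bang = \zero$, as required. The only real obstacle here is packaging the ``preimage-of-$\kappa_{1}$-under-$\XI$'' argument cleanly via one of the existing pullback squares; the reformulation via $\sigma$ makes it a direct instance of~\eqref{diag:effecutsderivedpb}.

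For part~\eqref{prop:effectusEA:fun} the work is essentially done. For a total map $f\colon Y \tafter X$, Exercise~\ref{exc:subst}\eqref{exc:subst:totalpartial} gives $\tbox{f}(p) = \pbox{\klin{f}}(p) = p \pafter \klin{f}$, so $\tbox{f}$ is precomposition with $\klin{f}$ in $\Par(\cat{B})$. By Proposition~\ref{prop:effectusPCM}\eqref{prop:effectusPCM:pres} precomposition preserves $\ovee$ and $\zero$, and Exercise~\ref{exc:subst}\eqref{exc:subst:totalpres} gives $\tbox{f}(\one)=\one$ (and therefore $\tbox{f}(p^{\bot}) = \tbox{f}(p)^{\bot}$ by uniqueness of orthosupplements). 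So $\tbox{f}$ is a morphism of effect algebras. Functoriality $\tbox{\idmap} = \idmap$ and $\tbox{(g\tafter f)} = \tbox{f}\tafter \tbox{g}$ is Exercise~\ref{exc:subst}\eqref{exc:subst:totalfun}, giving the contravariant functor $\Pred\colon \cat{B} \tafter \op{\EA}$.
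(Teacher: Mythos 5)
Your proof is correct and follows essentially the same route as the paper's: the same bound $\kappa_{1}\tafter p$ for $p\ovee p^{\bot}=\one$, the same pullback from~\eqref{diag:effecutsderivedpb} for both uniqueness and the positivity axiom, and the same appeal to Proposition~\ref{prop:effectusPCM} and Exercise~\ref{exc:subst} for functoriality. The only (harmless) cosmetic differences are that you route the uniqueness step through Lemma~\ref{lem:zero} instead of redoing its pullback, and in the positivity step you factor $\XI$ through a permutation isomorphism where the paper factors $\IV$ through the associativity isomorphism $\alpha$.
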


\begin{proof}
From Proposition~\ref{prop:effectusPCM}~\eqref{prop:effectusPCM:hom}
we know that $\Pred(X)$ is a PCM. We show that it is an effect algebra
with the orthosupplement $p^{\bot} = [\kappa_{2}, \kappa_{1}] \tafter
p$ from~\eqref{diag:orthosupplement}.

The three equations below show that $b = \kappa_{1} \tafter p \colon X
\rightarrow (1+1)+1$ is a bound for $p$ and $p^{\bot}$, yielding
$p\ovee p^{\bot} = \one$.
$$\begin{array}{rcl}
\IV \tafter b
& = &
[\idmap, \kappa_{2}] \tafter \kappa_{1} \tafter p 
\hspace*{\arraycolsep}=\hspace*{\arraycolsep}
p 
\\
\XI \tafter b
& = &
[[\kappa_{2}, \kappa_{1}], \kappa_{2}] \tafter \kappa_{1} \tafter p 
\hspace*{\arraycolsep}=\hspace*{\arraycolsep}
[\kappa_{2}, \kappa_{1}] \tafter p 
\hspace*{\arraycolsep}=\hspace*{\arraycolsep}
p^{\bot} 
\\
(\nabla\tplus\idmap) \tafter b 
& = &
(\nabla\tplus\idmap) \tafter \kappa_{1} \tafter p 
\hspace*{\arraycolsep}=\hspace*{\arraycolsep}
\kappa_{1} \tafter \nabla \tafter p
\hspace*{\arraycolsep}=\hspace*{\arraycolsep}
\kappa_{1}
\hspace*{\arraycolsep}=\hspace*{\arraycolsep}
\one.
\end{array}$$

\noindent Next we show that $p^{\bot}$ is the only predicate with
$p\ovee p^{\bot} = \one$. So assume also for $q\colon X \rightarrow 1+1$
we have $p\ovee q = \one$, say via bound $b$. Then $\IV \tafter b = p$,
$\XI \tafter b = q$ and $(\nabla\tplus\idmap) \tafter b = \one =
\kappa_{1} \tafter \bang$. We use the pullback on the right
in~\eqref{diag:effecutsderivedpb}:
$$\xymatrix{
X\ar@/_2ex/[ddr]|-{\tafter}_{b}\ar@/^2ex/[drr]|-{\tafter}^-{\bang}
   \ar@{..>}[dr]|-{\tafter}^(0.6){c} 
\\
& 1+1\ar[d]|-{\tafter}_{\kappa_1}\ar[r]|-{\tafter}^-{\nabla}\pullback & 
   1\ar[d]|-{\tafter}^{\kappa_1}
\\
& (1+1)+1\ar[r]|-{\tafter}_-{\nabla\tplus\idmap} & 1+1
}$$

\noindent The fact that the bound $b$ is of the form $\kappa_{1}\tafter
c$ is enough to obtain $q = p^{\bot}$ in:
$$\begin{array}{rcl}
p^{\bot}
\hspace*{\arraycolsep} = \hspace*{\arraycolsep}
[\kappa_{2}, \kappa_{1}] \tafter \IV \tafter b
& = &
[\kappa_{2}, \kappa_{1}] \tafter [\idmap,\kappa_{2}] \tafter \kappa_{1} 
   \tafter c \\
& = &
[[\kappa_{2}, \kappa_{1}], \kappa_{2}] \tafter \kappa_{1} \tafter c 
\hspace*{\arraycolsep} = \hspace*{\arraycolsep}
\XI \tafter b
\hspace*{\arraycolsep} = \hspace*{\arraycolsep}
q.
\end{array}$$

\noindent Next, suppose $\one \orthogonal p$; we must prove $p=\zero
\colon X \tto 1+1$. We may assume a bound $b\colon X \tto (1+1)+1$
with $\rhd_{1} \pafter b = \one = \kappa_{1} \tafter \bang$ and
$\rhd_{2} \pafter b = p$. We use the pullback in $\cat{B}$ on the
right in~\eqref{diag:effecutsderivedpb} to obtain the unique map $X
\tto 1$ as mediating map in:
$$\xymatrix{
X\ar[dd]|-{\tafter}_{b}\ar@/^2ex/[drr]|-{\tafter}^-{\bang}
   \ar@{..>}[dr]|-{\tafter}_-{\bang}
\\
& 1\ar@{=}[r]\ar@{ >->}[d]|-{\tafter}_{\kappa_1}\pullback & 
   1\ar@{ >->}[d]|-{\tafter}_{\kappa_1}
\\
(1+1)+1\ar[r]|-{\tafter}^-{\alpha^{-1}}_-{\cong}
   \ar@/_4ex/[rr]|-{\tafter}_-{\IV = [\rhd_{1}, \kappa_{2}]} & 
   1+(1+1)\ar[r]|-{\tafter}^-{\idmap\tplus\nabla} & 1+1
}$$

\noindent As before, $\alpha$ is the associativity isomorphism. Hence
$b = \alpha \tafter \kappa_{1} \tafter \bang = \kappa_{1} \tafter
\kappa_{1} \tafter \bang = \klin{\one}$. Then, as required:
$$\begin{array}{rcccccccccl}
p
& = &
\rhd_{2} \pafter b
& = &
[\zero, \idmap] \pafter \klin{\one}
& = &
[\kappa_{2}, \idmap] \tafter \kappa_{1} \tafter \bang
& = &
\kappa_{2} \tafter \bang
& = &
\zero.
\end{array}$$

For the second point of the proposition we only have to prove that
substitution $\tbox{f}$ preserves $\ovee$ and $\one$. Preservation of
$\ovee$ follows from
Proposition~\ref{prop:effectusPCM}~\eqref{prop:effectusPCM:pres}, and
preservation of $\one$ holds by
Exercise~\ref{exc:subst}~\eqref{exc:subst:totalpres}. \QED
\end{proof}

In Example~\ref{ex:EMod} below we shall describe this effect algebra
structure concretely for our running examples. But we first we show
that the result can be strengthened further: predicates in an effectus
are not only effect \emph{algebras}, but also effect
\emph{modules}. The latter are effect algebras with a scalar
multiplication, a bit like in vector spaces. In examples the scalars
are often probabilities from $[0,1]$ or Booleans from $\{0,1\}$. But
more abstractly they are characterised as effect monoids.

\begin{definition}
\label{def:EMon}
An \emph{effect monoid}\index{S}{effect!--
  monoid}\index{S}{monoid!effect --} is an effect algebra which is at
the same time a monoid, in a coherent way: it is given by an effect
algebra $M$ with (total) associative multiplication operation $\cdot
\colon M\times M \rightarrow M$ which preserves $\zero,\ovee$ in each
coordinate separately, like in the second part of
Definition~\ref{def:pcm}, and satisfies $\one\cdot x = x = x\cdot
\one$.

An effect monoid is commutative if its multiplication is commutative.
\end{definition}

More abstractly, using that the category $\EA$ of effect algebras is
symmetric monoidal~\cite{JacobsM12a}, an effect monoid $(M, \cdot,
\one)$ is a monoid in the monoidal category $\EA$ of effect algebras
of the form:
$$\xymatrix{
2\ar[r] & M & M\otimes M\ar[l]
}$$

\noindent satisfying the monoid requirements. Since the tensor unit
$2=\{0,1\}$ is initial in the category $\EA$, the map on the left is
uniquely determined, and the multiplication map on the right is the
only structure.

The unit interval $[0,1]$ is the prime example of an effect monoid,
via its standard multiplication. It is clearly commutative. This
structure can be extended pointwise to fuzzy predicates $[0,1]^{X}$,
for a set $X$, and to continuous predicates $\Cont(Y, [0,1])$ for a
topological space $Y$.

The Booleans $\{0,1\}$ also form an effect monoid, via conjunction
(multiplication). More generally, each Boolean algebra is an effect
monoid, with conjunction as multiplication.

Recall that any order unit group gives rise to an effect algebra by
considering its unit interval.  If the order unit group carries an
associative bilinear positive multiplication for which the order unit
is neutral, then its unit interval is an effect monoid. We should warn
that this does not apply to $C^*$-algebras: their self-adjoint
elements form an order unit group, but their multiplication is not
positive and does not restrict to self-adjoint elements --- except
when the $C^*$-algebra is commutative. Indeed, we shall see in
Section~\ref{sec:commbool} that predicates in a `commutative' effectus
form commutative effect monoids, see
esp.\ Lemma~\ref{lem:commeffectus}~\eqref{lem:commeffectus:EAmap}.

Most obvious examples of an effect monoid are commutative. But here is
an example of a non-commutative effect monoid.  Consider~$\R^5$ with
standard basis~$e_1, \ldots, e_5$ ordered lexicographically such
that~$e_1 \gg e_2 \gg \ldots \gg e_5$, where~$v \gg w$ denotes~$v \geq
\lambda w$ for any~$\lambda > 0$.  There is a unique associative
bilinear positive product $*$ fixed by~$e_1 * e_j = e_j * e_1 = e_j$,
$e_2 * e_2 = e_4$, $e_3 * e_2 = e_5$ and in the remaining cases~$e_i *
e_j= 0$.  For details, see~\cite{Westerbaan13}.

Scalars in an effectus are predicates $1 \tto 1+1$ on the final object
$1$.  They form an effect algebra by
Proposition~\ref{prop:effectusEA}~\eqref{prop:effectusEA:EA}.  When we
view scalars as partial maps $1 \pto 1$, we directly see that they
also carry a monoid structure, namely Kleisli/partial composition
$\pafter$. The latter preserves the sums $\ovee, 0$ in each coordinate
by Proposition~\ref{prop:effectusPCM}~\eqref{prop:effectusPCM:pres}.
Since the scalar $\one$ is the first coprojection $\kappa_{1} \colon 1
\tto 1+1$, it is the identity map on $1$ in the category of partial
maps, and thus the unit for $\pafter$.  We now summarise the
situation.

\begin{definition}
\label{def:effectusScalars}
For an arbitrary effectus $\cat{B}$ with final object $1$, we write
$\Pred(1) = \Par(\cat{B})(1,1)$ for the effect monoid of scalars with
partial composition $\pafter$.
\end{definition}

Once we know what scalars are, we can define associated modules
having scalar multiplication.

\begin{definition}
\label{def:EMod}
Let $M$ be an effect monoid. 
\begin{enumerate}
\item An \emph{effect module over $M$}\index{S}{effect!-- module} is
  an effect algebra $E$ with a scalar 
multiplication\index{S}{scalar multiplication!-- in an effect module} 
(action) $M\otimes E \rightarrow E$ in $\EA$. In such a module we have
elements $r\cdot e\in E$, for $r\in M$ and $e\in E$, satisfying, apart
from preservation of $0,\ovee$ in each coordinate, $\one\cdot e = e$
and $r\cdot (s\cdot e) = (r\cdot s) \cdot e$.

\item A map of effect modules is a map of effect algebras $f\colon E
  \rightarrow D$ satisfying $f(r\cdot e) = r\cdot f(e)$ for each $r\in
  M$, $e\in E$.  We write $\EMod_{M}$\index{N}{cat@$\EMod_{M}$, category
    of effect modules over an effect monoid $M$} for the category of
  effect modules over $M$, with effect module maps as morphisms
  between them.
\end{enumerate}
\end{definition}

We can also define effect modules more abstractly: an effect monoid
$M$ is a monoid $M\in\EA$ in the symmetric monoidal category of effect
algebras.  Hence the functor $M \otimes (-)$ is a monad on $\EA$. The
category $\EMod_M$ of effect modules over $M$ is the resulting
category of Eilenberg-Moore algebras of this monad, \ie~the category
of actions of the monoid $M$. There is an obvious forgetful functor
$\EMod_{M} \rightarrow \EA$.

Effect modules over the Booleans $\{0,1\}$ are just effect
algebras. Almost always in examples we encounter effect modules over
the probabilities $[0,1]$. That's why we often simply write $\EMod$
for $\EMod_{[0,1]}$. In the context of effect modules, the elements of
the underlying effect monoid are often called scalars.

For each set $X$, the collection $[0,1]^{X}$ of `fuzzy predicates' on
$X$ is an effect algebra, by Proposition~\ref{prop:effectusEA}, where
$p \orthogonal q$ iff $p(x)+q(x)\leq 1$ for all $x\in X$, and in that
case $(p\ovee q)(x) = p(x) + q(x)$. The orthosupplement $p^\bot$ is
defined by $p^{\bot}(x) = 1- p(x) = p(x)^{\bot}$. This effect algebra
$[0,1]^{X}$ is an effect module over $[0,1]$, via scalar
multiplication $r\cdot p$ defined as $(r\cdot p)(x) = r\cdot
p(x)$. This mapping $X \mapsto [0,1]^{X}$ yields a functor $\Sets
\rightarrow \op{\EMod}$.

We can now strengthen Proposition~\ref{prop:effectusEA} in the
following way.

\begin{theorem}[From~\cite{Jacobs15a}]
\label{thm:effectusEMod}
For each effectus $\cat{B}$ the mapping $X \mapsto \Pred(X)$ yields a
\emph{predicate}\index{S}{predicate!-- functor} functor:
$$\xymatrix@C+1pc{
\cat{B}\ar[r]^-{\Pred} & {\op{\Big(\EMod_{M}\Big)}}
   \qquad\mbox{where}\qquad M = \Pred(1).
}$$

\noindent This functor $\Pred$ preserves finite coproducts and the
final object.
\end{theorem}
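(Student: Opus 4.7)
My plan is to exploit the partial-map description throughout. A predicate $p\in\Pred(X)$ is a partial map $X\pto 1$ and a scalar $s\in M=\Pred(1)$ is a partial map $1\pto 1$, so Kleisli composition $s\cdot p\defeq s\pafter p$ yields a well-typed action of $M$ on $\Pred(X)$. The unit law $\one\cdot p=p$ and the associativity $(rs)\cdot p=r\cdot(s\cdot p)$ are then exactly the identity and associativity laws for $\pafter$, while bilinearity in $(\ovee,\zero)$ on both arguments is immediate from Proposition~\ref{prop:effectusPCM}\eqref{prop:effectusPCM:pres}. Combined with Proposition~\ref{prop:effectusEA}\eqref{prop:effectusEA:EA}, this upgrades $\Pred(X)$ from an effect algebra to an effect module over $M$.

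For a total map $f\colon Y\tto X$, substitution reads in partial form as $\tbox{f}(q)=q\pafter\klin{f}$. Proposition~\ref{prop:effectusEA}\eqref{prop:effectusEA:fun} already gives that $\tbox{f}$ is a map of effect algebras, so the only remaining check is scalar-linearity $\tbox{f}(s\cdot p)=s\cdot\tbox{f}(p)$, which reduces to $(s\pafter p)\pafter\klin{f}=s\pafter(p\pafter\klin{f})$, i.e.\ once more associativity of $\pafter$. This makes $\Pred$ a functor into $\op{\EMod_M}$.

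It remains to verify preservation of the final object and of finite coproducts, where the variance in the target must be carefully tracked. For $\Pred(1)=M$ with its free action on itself, any effect-module map $g\colon M\to E$ is pinned down by $g(r)=g(r\cdot\one)=r\cdot g(\one)=r\cdot\one_E$, exhibiting $M$ as initial in $\EMod_M$ and hence final in $\op{\EMod_M}$. For the empty coproduct, strictness of $0$ (Proposition~\ref{prop:effectuscoproj}) forces $\Pred(0)$ to be the one-element effect module, which is final in $\EMod_M$ and thus the initial object (empty product) in $\op{\EMod_M}$. For the binary coproduct, the universal property of $X+Y$ in $\cat{B}$ supplies a bijection $\Pred(X+Y)\cong\Pred(X)\times\Pred(Y)$ given by $p\mapsto(\tbox{\kappa_1}(p),\tbox{\kappa_2}(p))$ with inverse $(q,r)\mapsto[q,r]$; since both components $\tbox{\kappa_i}$ are effect-module maps and products in $\EMod_M$ are computed coordinatewise, this bijection is an isomorphism in $\EMod_M$, realising $\Pred(X+Y)$ as the coproduct of $\Pred(X)$ and $\Pred(Y)$ in $\op{\EMod_M}$. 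The only real hazard in the whole argument is keeping this variance consistent when passing to the opposite category; everything else is bookkeeping driven by associativity of $\pafter$ and universal properties already in hand.
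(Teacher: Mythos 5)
Your proposal is correct and follows essentially the same route as the paper: scalar multiplication is defined as partial post-composition $s\pafter p$, the module laws and scalar-linearity of $\tbox{f}$ all reduce to associativity of $\pafter$ together with Proposition~\ref{prop:effectusPCM}, and preservation of $1$, $0$ and $+$ is verified exactly as in the paper (initiality of $M$, the singleton $\Pred(0)$, and the isomorphism $\Pred(X+Y)\cong\Pred(X)\times\Pred(Y)$ via $\tuple{\tbox{\kappa_1},\tbox{\kappa_2}}$ and cotupling). The only cosmetic quibble is that for $\Pred(0)$ you invoke strictness of $0$ where plain initiality already gives the unique predicate $0\to 1+1$.
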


This theorem says that predicates in an effectus form effect modules
over the effect monoid $\Pred(1)$ of scalars in the effectus.

\begin{proof}
For a predicate $p\colon X \pto 1$ on $X$ and a scalar $s\colon 1\pto
1$ we define scalar multiplication simply as partial composition $s
\pafter p$. The PCM-structure $\ovee, 0$ is preserved in each
coordinate by
Proposition~\ref{prop:effectusPCM}~\eqref{prop:effectusPCM:pres}.  We
have $r \pafter (s \pafter p) = (r\pafter s) \pafter p$ by
associativity of partial composition, and $\one \pafter p = p$ because
$\one = \kappa_{1} \colon 1 \tto 1+1$ is the identity on $1$ for
partial composition $\pafter$.

For a (total) map $f\colon Y \tto X$ in $\cat{B}$ we see that
substitution $\tbox{f}$ preserves scalar multiplication:
$$\begin{array}{rcccccccl}
\tbox{f}(s \pafter p)
& = &
(s \pafter p) \tafter f
& = &
(s \pafter p) \pafter \klin{f}
& = &
s \pafter (p \pafter \klin{f})
& = &
s \pafter \tbox{f}(p).
\end{array}$$

\noindent We still have to prove that the functor $\Pred \colon
\cat{B} \rightarrow \op{(\EMod_{M})}$ preserves finite coproducts and
the final object. This means that it sends coproducts in $\cat{B}$ to
products in $\EMod_M$, and the final object to the initial one in
$\EMod_{M}$.
\begin{itemize}
\item There is precisely one predicate $0 \rightarrow 1+1$, so
$\Pred(0)$ is a singleton, which is final in $\EMod_{M}$.

\item The scalars $M = \Pred(1)$ are initial in $\EMod_{M}$.

\item For objects $X,Y\in\cat{B}$, there is an isomorphism of effect
  modules:
$$\xymatrix{
\Pred(X+Y)\ar@/^2ex/[rr]^-{\tuple{\tbox{\kappa_{1}}, \tbox{\kappa_{2}}}}
   & \cong &
\Pred(X)\times\rlap{$\,\Pred(Y)$}\ar@/^2ex/[ll]^-{[-,-]}
}\qquad\eqno{\QEDbox}$$
\end{itemize}
\end{proof}

With this final object $M$ and these finite coproducts the category
$\op{(\EMod_{M})}$ is an effectus. Its scalars are precisely the
elements of the effect monoid $M$.

\begin{example}
\label{ex:EMod}
We briefly review what this theorem means concretely for the
running examples from Subsection~\ref{subsec:effectusEx}.
\begin{enumerate}
\item The scalars in the effects $\Sets$ of sets and functions are the
  Booleans $2 = \{0,1\}$, see Example~\ref{ex:effectusSets}. Since
  effect modules over $2$ are just effect algebras, the predicate
  functor takes the form $\Pred \colon \Sets \rightarrow \op{\EA}$. It
  sends a set $X$ to the effect algebra $\Pow(X)$ of subsets of
  $X$. As is well-known, these predicates $\Pow(X)$ form a Boolean
  algebra, but that is a consequence of the fact that $\Sets$ is a
  special, `Boolean' effectus, see
  Proposition~\ref{prop:booleffectus}. For a function $f \colon Y \to
  X$, the associated substitution functor $\tbox{f} \colon \Pow(X)
  \rightarrow \Pow(Y)$ is inverse image:
$$\begin{array}{rcl}
\tbox{f}(U)
& = &
\setin{y}{Y}{f(y)\in U}.
\end{array}$$

\item We recall from Example~\ref{ex:effectusKlD} that scalars in the
  Kleisli category $\Kl(\Dst)$ are the probabilities $[0,1]$, and
  predicates on a set $X$ are functions $X \rightarrow [0,1]$. Indeed,
  as we saw before Theorem~\ref{thm:effectusEMod}, the set of fuzzy
  predicates $[0,1]^{X}$ is an effect module over $[0,1]$. For a map
  $f\colon Y \to X$ in $\Kl(\Dst)$, that is, for a function $f\colon
  Y \rightarrow \Dst(X)$, the associated substitution map $\tbox{f}
  \colon [0,1]^{X} \rightarrow [0,1]^{Y}$ is defined as:
$$\begin{array}{rcl}
\tbox{f}(p)(y)
& = &
\sum_{x\in X} f(y)(x) \cdot p(x).
\end{array}$$

\noindent This map $\tbox{f}$ is indeed a homomorphism of effect
modules. Hence we have a predicate functor $\Kl(\Dst) \rightarrow
\op{\EMod}$.

\item The scalars in the effectus $\op{\OUG}$ of order unit groups are
  the Booleans $2 = \{0,1\}$, and the predicates on an order unit
  group $G$ are the effects in the unit interval, $[0,1]_{G} =
  \setin{x}{G}{0 \leq x \leq 1}$, see Example~\ref{ex:effectusOUG}.
  For a map $f\colon G \tto H$ in $\op{\OUG}$, that is, for a
  homomorphism of order unit groups $f\colon H \rightarrow G$ the
  substitution function $\tbox{f} \colon [0,1]_{H} \rightarrow
  [0,1]_{G}$ is given simply by function application:
$$\begin{array}{rcl}
\tbox{f}(x)
& = &
f(x).
\end{array}$$

\noindent This is well-defined since $f$ is positive, so that $f(x)
\geq 0$, and unital so that $f(x) \leq f(1) = 1$. In this case we have
a predicate functor $\op{\OUG} \rightarrow \op{\EA}$.

For the special case of order unit \emph{spaces}, the scalars in the
relevant effectus $\op{\OUS}$ are the probabilities $[0,1]$. In this
case the predicate functor is of the form $\Pred \colon \op{\OUS}
\rightarrow \op{\EMod}$. It is full and faithful.

\item The situation is similar for the effectus $\op{\vNA}$ of von
  Neumann algebras, see Example~\ref{ex:effectusNA}: the scalars are
  the probabilities $[0,1]$, and the predicates on a von Neumann
  algebra $\mathscr{A}$ are the effects in $[0,1]_{\mathscr{A}} =
  \setin{a}{\mathscr{A}}{0 \leq a \leq 1}$. This is an effect module since
  $r\cdot a \in [0,1]_{\mathscr{A}}$ for $r\in [0,1]$ and $a\in
        [0,1]_{\mathscr{A}}$. Substitution $\tbox{f}$ works, like
        for order unit groups, via function application:
$$\begin{array}{rcl}
\tbox{f}(a)
& = &
f(a).
\end{array}$$

\noindent We thus obtain a predicate functor $\op{\vNA} \rightarrow
\op{\EMod}$. It is also full and faithful, see~\cite{FurberJ15}.
\end{enumerate}
\end{example}

In (the proof of) Theorem~\ref{thm:effectusEMod} we have seen scalar
multiplication $s \pafter p$ via partial \emph{post}-composition, for
a scalar $s \colon 1 \pto 1$ and a predicate $p\colon X \pto 1$. The
same trick can be used for substates $\omega \colon 1 \pto X$ via
partial \emph{pre}-composition. Substates thus form a PCM with scalar
multiplication (which is a map of PCMs in each coordinate). We call
these structure partial commutative modules (PC-modules, for
short). They are organised in a category $\PCMod$ in the obvious
manner. We have to keep in mind that writing partial composition
$\pafter$ in the usual order gives `right' modules, with the scalar
written on the right.

\begin{lemma}
\label{lem:substatePCMod}
The sets $\SStat(X)$ of substates $\omega \colon 1 \pto X$ in an
effectus are partial commutative modules over the effect monoid of
scalars $\Pred(1)$, via $\omega \pafter s$.
\end{lemma}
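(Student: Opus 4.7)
The proof is essentially dual to the argument for $\Pred(X)$ given in Theorem~\ref{thm:effectusEMod}, exchanging the roles of post- and pre-composition. The plan is to first recall that $\SStat(X) = \Par(\cat{B})(1,X)$ is a PCM by Proposition~\ref{prop:effectusPCM}~\eqref{prop:effectusPCM:hom}, and then to define the scalar action of $s\in \Pred(1) = \Par(\cat{B})(1,1)$ on $\omega \in \SStat(X)$ simply as $\omega \pafter s \colon 1 \pto X$.

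I would then verify the PC-module axioms one by one, each reducing to a basic fact about partial composition in $\Par(\cat{B})$. Bilinearity --- that is, $\omega \pafter (-)$ and $(-) \pafter s$ are PCM-homomorphisms --- follows immediately from Proposition~\ref{prop:effectusPCM}~\eqref{prop:effectusPCM:pres}, which states that $\pafter$ preserves $\ovee$ and $\zero$ in each coordinate. The unit law $\omega \pafter \one = \omega$ holds because $\one = \kappa_{1} \colon 1 \tto 1+1$ is the identity on $1$ in $\Par(\cat{B})$, and associativity $(\omega \pafter s) \pafter t = \omega \pafter (s \pafter t)$ is just associativity of Kleisli composition.

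Naturality with respect to morphisms $f\colon X \pto Y$ of the forward action $\omega \mapsto f \pafter \omega$ then makes $\SStat$ into a functor $\cat{B} \rightarrow \PCMod$, by the same bilinearity argument applied to post-composition. In short, there is no genuine obstacle: every required equation is immediate from the PCM-enrichment of $\Par(\cat{B})$ together with the monoid structure on scalars from Definition~\ref{def:effectusScalars}, so the proof is essentially a transcription of that of Theorem~\ref{thm:effectusEMod} with $s \pafter p$ replaced by $\omega \pafter s$.
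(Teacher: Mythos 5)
Your proof is correct and follows essentially the same route as the paper: the action is $\omega \pafter s$, bilinearity comes from Proposition~\ref{prop:effectusPCM}~\eqref{prop:effectusPCM:pres}, and the unit and associativity laws are immediate from the monoid structure of partial composition. The paper's own proof is just a terser statement of exactly these points.
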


\begin{proof}
Obviously, this definition $\omega \pafter s$ determines a right
action of the monoid $\Pred(1)$ of scalars on the set $\SStat(X)$ of
substates. It preserves the PCM structure in each coordinate by
Proposition~\ref{prop:effectusPCM}~\eqref{prop:effectusPCM:pres}. \QED
\end{proof}

Later, in Section~\ref{sec:monoidal} we shall see that in the presence
of tensor products $\otimes$ all partial homsets, and not just the
ones of substates, become PC-modules.

This scalar multiplication on 
substates\index{S}{scalar multiplication!-- on substates}
makes it possible to define when a substate is pure. In quantum theory
a state is called pure is it is not a mixture (convex combination) of
other states. In the current setting this takes the following form.

\begin{definition}
\label{def:puresstate}
A non-zero substate $\omega\colon 1 \pto X$ in an effectus is called
\emph{pure}\index{S}{pure state}\index{S}{state!pure --} if for each
pair of orthogonal substates $\omega_{1}, \omega_{2} \colon 1 \pto X$
with $\omega = \omega_{1} \ovee \omega_{2}$ there is a scalar $s$
with:
$$\begin{array}{rclcrcl}
\omega_{1}
& = &
\omega \pafter s
& \qquad\mbox{and}\qquad &
\omega_{2}
& = &
\omega \pafter s^{\bot}.
\end{array}$$
\end{definition}

For instance, subdistributions of the form $r\ket{x} \in \sDst(X)$,
for $r > 0$, are pure: if $r\ket{x} = \omega_{1} \ovee \omega_{2}$,
for $\omega_{1},\omega_{2}\in\sDst(X)$, then $\omega_{1}(y) +
\omega_{2}(y) = (\omega_{1} \ovee \omega_{2})(y) r\ket{x}(y) = 0$ for
$y\neq x$.  Hence $\omega_{1}(y) = \omega_{2}(y) = 0$. This means that
$\omega_{1} = r_{1}\ket{x}$ and $\omega_{2} = r_{2}\ket{x}$ for
certain $r_{1},r_{2} \in [0,1]$ with $r_{1} + r_{2} = r$. Hence we
take $s = \frac{r_{1}}{r} \in [0,1]$. Clearly:
$$\begin{array}{rcccccl}
s\cdot \omega
& = &
\frac{r_{1}}{r}\cdot r \ket{x}
& = &
r_{1}\ket{x}
& = &
\omega_{1}.
\end{array}$$

And:
$$\begin{array}{rcccccccl}
s^{\bot}\cdot \omega
& = &
(1 - \frac{r_{1}}{r})\cdot r \ket{x}
& = &
(r - r_{1})\ket{x}
& = &
r_{2}\ket{x}
& = &
\omega_{2}.
\end{array}$$


\section{State and effect triangles}\label{sec:triangle}

In quantum theory there is a basic duality between states and effects,
see \textit{e.g.}~\cite{HeinosaariZ12}. This duality can be formalised
in categorical terms as an adjunction $\op{\EMod} \leftrightarrows
\Conv$ between the opposite of the category of effect modules and the
category of convex sets. This adjunction will be described in more
detail below. 

The duality between states and effects is related to the different
approaches introduced by two of the founders of quantum theory, namely
Erwin Schr\"odinger and Werner Heisenberg. Schr\"odinger's approach is
state-based and works in a forward direction, whereas Heisenberg's
describes how quantum operations work on effects, in a backward
direction. It turns out that the difference between these two
approaches is closely related to a well-known distinction in the
semantics of computer programs, namely between state transformer
semantics and predicate transformer semantics,
see~\cite{Jacobs15b}. The situation can be described in terms of a
triangle:
\begin{equation}
\label{diag:computationtriangle}
\qquad\vcenter{\xymatrix@C-1.2pc@R-1.5pc{
{\ovalbox{\textbf{Heisenberg}}} & & {\ovalbox{\textbf{Schr\"odinger}}} \\
\llap{$\op{\Cat{Log}}=\;$}{\left(\begin{array}{c} \text{predicate} \\[-.3em]
      \text{transformers} \end{array}\right)}\ar@/^1em/[rr] 
& \top &
{\left(\begin{array}{c} \text{state} \\[-.3em]
      \text{transformers} \end{array}\right)}\ar@/^1em/[ll]  \\
\\
& {\Big(\text{computations}\Big)}\ar[uul]^(0.45){\Pred}\ar[uur]_(0.45){\Stat} &
}}
\end{equation}

\noindent The main result of this section shows that each effectus
gives rise to a such a `state and effect' triangle.

We start with a closer inspection of the structure of states in an
effectus. Recall that a state is a map of the form $\omega \colon 1
\tto X$.  We will show that states are closed under convex
combinations $\sum_{i}r_{i}\omega_{i}$, where the $r_{i}$ are scalars
$1 \tto 1+1$.  We recall from Definition~\ref{def:effectusScalars}
that these scalars form an effect monoid. Hence we need to understand
convexity with respect to such effect monoids, generalising the usual
form of convexity with respect to the effect monoid $[0,1]$ of
probabilities.

\begin{definition}
\label{def:convexset}
Let $M$ be an effect monoid.
\begin{enumerate}
\item A \emph{convex set}\index{S}{convex!-- set} over $M$ is a set
  $X$ with sums of convex combinations with scalars from $M$. More
  precisely, for each $n$-tuple $r_{1}, \ldots, r_{n} \in M$ with
  $\bigovee_{i}r_{i} = 1$ and $n$-tuple $x_{1}, \ldots, x_{n} \in X$
  there is an element $\sum_{i}r_{i}x_{i} \in X$. These convex sums
  must statisfy the following two properties:
$$\begin{array}{rclcrcl}
1x
& = &
x
& \qquad\mbox{and}\qquad &
\sum_{i} r_{i}\big(\sum_{j}s_{ij}x_{ij}\big)
& = &
\sum_{ij} (r_{i}\cdot s_{ij})x_{ij}
\end{array}$$

\item A function $f\colon X \rightarrow Y$ between two convex sets is
  called \emph{affine}\index{S}{affine function} if it preserves sums
  of convex combinations: $f(\sum_{i}r_{i}x_{i}) =
  \sum_{i}r_{i}f(x_{i})$. Convex sets and affine functions between
  them form a category $\Conv_{M}$.\index{N}{cat@$\Conv_{M}$, category
    of convex sets over an effect monoid $M$}
\end{enumerate}
\end{definition}

Convex sets over $M$ can be described more abstractly as an
Eilenberg-Moore algebra of a distribution monad $\Dst_{M}$ defined in
terms of formal convex combinations with scalars from $M$,
see~\cite{Jacobs15a}. This explains the form of the above two
equations.

Convex sets over the unit interval $[0,1]$ have sums of `usual' convex
combinations, with scalars from $[0,1]$. That's why we often simply
write $\Conv$ for the category $\Conv_{[0,1]}$ --- just like $\EMod$
is $\EMod_{[0,1]}$.  Convex sets over the Booleans $\{0,1\}$ are
ordinary sets: in an $n$-tuple $r_{1}, \ldots, r_{n} \in \{0,1\}$ with
$\bigovee_{i}r_{i} = 1$ there is precisely one $i$ with $r_{i} = 1$,
and $r_{j} = 0$ for $j\neq i$. In that case we can define
$\sum_{j}r_{j}x_{j} = x_{i}$.  This works for any set $X$. Hence
$\Conv_{2} \cong \Sets$.

The next result gives a categorical formalisation of the duality
between states and effects in quantum physics. It goes back
to~\cite{Jacobs10a}.

\begin{proposition}
\label{prop:EmodConvAdj}
Let $M$ be an effect monoid. By ``homming into $M$'' one obtains an
adjunction:
$$\xymatrix{
{\op{\big(\EMod_{M}\big)}}\ar@/^1.5ex/[rr]^-{\Hom(-,M)} 
   & \top & \Conv_{M}\qquad\ar@/^1.5ex/[ll]^-{\Hom(-,M)}
}$$
\end{proposition}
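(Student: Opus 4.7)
The plan is to construct the adjunction by exhibiting $M$ simultaneously as an object of $\EMod_{M}$ and of $\Conv_{M}$, and then using ``homming into $M$'' to define two functors that are adjoint via a standard currying bijection.

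First, I would check that $M$ carries both structures. As an effect module over itself the scalar multiplication is simply the monoid multiplication $\cdot$ of $M$. To view $M$ as a convex set over itself I would define
$$\sum_{i} r_{i}\, m_{i} \;\defeq\; \bigovee_{i} \big(r_{i}\cdot m_{i}\big)$$
whenever $\bigovee_{i} r_{i} = \one$. This partial sum exists because $r_{i}\cdot m_{i}\leq r_{i}$ (as $m_{i}\leq\one$ and $\cdot$ is monotone in each argument), so the $r_{i}\cdot m_{i}$ are dominated by $\bigovee_{i} r_{i}$ which is defined. The unit law and associativity law of Definition~\ref{def:convexset} follow directly from $\one\cdot m = m$ and from distributivity of $\cdot$ over $\ovee$ in each coordinate.

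Second, I would equip the homsets with the required pointwise structure. For $E\in\EMod_{M}$, define convex combinations on $\Hom_{\EMod_{M}}(E,M)$ by
$$\Big(\sum_{i} r_{i} f_{i}\Big)(e) \;\defeq\; \bigovee_{i} r_{i}\cdot f_{i}(e),$$
which lies in $M$ by the convex-set structure just established, and which remains a morphism of effect modules because $\ovee$, orthosupplement and the $M$-action all commute with pointwise formation. Dually, for $X\in\Conv_{M}$, equip $\Hom_{\Conv_{M}}(X,M)$ with the pointwise PCM structure $(f\ovee g)(x)\defeq f(x)\ovee g(x)$, orthosupplement $f^{\bot}(x)\defeq f(x)^{\bot}$ and scalar action $(r\cdot f)(x)\defeq r\cdot f(x)$. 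The main checks are that a pointwise-defined $f\ovee g$ is again affine and that $r\cdot f$ is affine; both reduce to bilinearity of $\cdot$ and preservation of $\ovee$ by $\cdot$. Functoriality of both assignments is by precomposition, yielding the two contravariant functors $\Hom(-,M)$ in the statement.

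Third, I would exhibit the natural bijection
$$\EMod_{M}\big(E,\, \Hom_{\Conv_{M}}(X,M)\big) \;\cong\; \Conv_{M}\big(X,\, \Hom_{\EMod_{M}}(E,M)\big)$$
by sending $\phi\mapsto\hat\phi$ with $\hat\phi(x)(e)\defeq \phi(e)(x)$. Both sides parametrise ``biaffine'' maps $\beta\colon E\times X\to M$ that are effect-module morphisms in $E$ for each fixed $x$ and affine in $X$ for each fixed $e$; the displayed swap is then ordinary currying. Verifying that $\hat\phi(x)$ is indeed an effect-module map, and that $x\mapsto\hat\phi(x)$ is affine, follows by direct unfolding of the pointwise definitions from step two. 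Naturality in $E$ and $X$ is routine, so rewriting the bijection in opposite form on the left yields the stated adjunction.

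The main obstacle lies in step two: one must carefully verify that pointwise sums, orthosupplements and scalar multiplications of structure-preserving maps into $M$ remain structure-preserving, and that the resulting convex-set structure on $\Hom_{\EMod_{M}}(E,M)$ actually satisfies the axioms of Definition~\ref{def:convexset}. These checks rely on the PCM axioms of Definition~\ref{def:pcm}, the effect-algebra identities collected in Exercise~\ref{exc:EA}, and, crucially, the bilinearity of $\cdot$ in the effect monoid $M$. Once these verifications are in hand, step three is just the standard ``bihomomorphisms vs.\ currying'' argument and presents no further difficulty.
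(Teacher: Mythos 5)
Your proposal is correct and follows essentially the same route as the paper's proof: equip $\Conv_{M}(X,M)$ with a pointwise effect-module structure, equip $\EMod_{M}(E,M)$ with convex combinations $\big(\sum_{i}r_{i}f_{i}\big)(e)=\bigovee_{i}r_{i}\cdot f_{i}(e)$, take precomposition for functoriality, and obtain the adjunction by swapping arguments. Your additional preliminary step verifying that $M$ is a convex set over itself, and the domination argument ($r_{i}\cdot m_{i}\leq r_{i}$) guaranteeing the partial sums exist, are details the paper leaves implicit but are sound.
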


\begin{proof}
Given a convex set $X\in\Conv_{M}$, the homset $\Conv(X,M)$ of affine
maps is an effect module, with $f\orthogonal g$ iff $\allin{x}{X}{f(x)
  \orthogonal g(x)}$ in $M$. In that case one defines $(f\ovee g)(x) =
f(x)\ovee g(x)$.  It is easy to see that this is again an affine
function. Similarly, the pointwise scalar product $(r\cdot f)(x) =
r\cdot f(x)$ yields an affine function. This mapping $X \mapsto
\Conv(X,M)$ gives a contravariant functor since for $h\colon
X\rightarrow X'$ in $\Conv_{M}$ pre-composition with $h$ yields a map
$(-) \after h \colon \Conv(X', M) \rightarrow \Conv(X, M)$ of effect
modules.

In the other direction, for an effect module $E\in\EMod_{M}$, the
homset $\EMod(E, M)$ of effect module maps yields a convex set: a
convex sum $f = \sum_{j}r_{j}f_{j}$, where $f_{j} \colon E \rightarrow
M$ in $\EMod_{M}$ and $r_{j}\in M$, can be defined as $f(y) =
\bigovee_{j}\, r_{j}\cdot f_{j}(y)$. This $f$ forms a map of effect
modules. Again, functoriality is obtained via pre-composition.

The dual adjunction between $\EMod_M$ and $\Conv_M$ involves a
bijective correspondence that is obtained by swapping arguments. \QED
\end{proof}

\begin{lemma}
\label{lem:effectusConv}
Let $\cat{B}$ be an effectus, with effect monoid of scalars $\Pred(1)$.
\begin{enumerate}
\item For each object $X$ the set of states $\Stat(X) = \cat{B}(1, X)$
  is a convex set over the effect monoid $\Pred(1)$ of scalars in
  $\cat{B}$.

\item Each (total) map $f\colon X \tto Y$ in $\cat{B}$ gives rise to
  an affine function $\tstat{f} = f \tafter (-) \colon \Stat(X)
  \rightarrow \Stat(Y)$.\index{N}{subst@$\tstat{f}$, state transformer
    associated with total map $f$}
\end{enumerate}

\noindent Thus we have a state\index{S}{state!-- functor} functor:
$$\xymatrix@C+1pc{
\cat{B}\ar[r]^-{\Stat} & \Conv_{M} \qquad\mbox{where}\qquad M=\Pred(1).
}$$
\end{lemma}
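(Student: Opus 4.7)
The core task is to equip $\Stat(X)=\cat{B}(1,X)$ with convex sums over $M=\Pred(1)$. Given states $\omega_1,\dots,\omega_n\colon 1\tto X$ and scalars $r_1,\dots,r_n\in M$ with $\bigovee_i r_i=\one$, I will define $\sum_i r_i\omega_i$ by working in $\Par(\cat{B})$. Set $f_i := \klin{\omega_i}\pafter r_i\colon 1\pto X$. Using Lemma~\ref{lem:zero} together with the fact that $\klin{\omega_i}$ is total, I compute
$$\one\pafter f_i \;=\; (\one\pafter \klin{\omega_i})\pafter r_i \;=\; \one\pafter r_i \;=\; r_i.$$
So the ``total weight'' of $f_i$ is precisely $r_i$, and $\bigovee_i \one\pafter f_i=\one$.

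\textbf{Joint summability.} I proceed by induction on $n$. For $n=2$ we have $\one\pafter f_1 = r_1 = r_2^{\bot} = (\one\pafter f_2)^{\bot}$, so Lemma~\ref{lem:pairing} supplies a unique total pairing $\dtuple{f_1,f_2}\colon 1\tto X+X$; I set $r_1\omega_1+r_2\omega_2 := \nabla\tafter\dtuple{f_1,f_2}$, which by Proposition~\ref{prop:effectusPCM} is the (total) Kleisli composite $\klin{\nabla}\pafter\dtuple{f_1,f_2} = f_1 \ovee f_2$. For $n>2$, set $s=r_2\ovee\cdots\ovee r_n=r_1^\bot$; by induction the substates $f_2,\dots,f_n$ admit an iterated $\ovee$-sum $g\colon 1\pto X$ with $\one\pafter g = s = r_1^\bot$, so the binary pairing applies again and yields $\sum_i r_i\omega_i := f_1\ovee g$. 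That this definition is independent of the bracketing follows from the associativity of $\ovee$ established in Proposition~\ref{prop:effectusPCM}, which also guarantees that any two iterated bracketings produce the same partial map. The result is total because $\one\pafter(\bigovee_i f_i) = \bigovee_i r_i = \one$, and Lemma~\ref{lem:zero} then identifies it with a (unique) state $1\tto X$.

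\textbf{The convex set axioms and affineness.} Unitality $1\omega=\omega$ is immediate from $\klin{\omega}\pafter\one = \klin{\omega}$. For the nested equation $\sum_i r_i\bigl(\sum_j s_{ij}\omega_{ij}\bigr) = \sum_{ij}(r_i\cdot s_{ij})\omega_{ij}$, both sides unfold to the same $\ovee$-sum $\bigovee_{ij}\klin{\omega_{ij}}\pafter s_{ij}\pafter r_i$, using (i) associativity of $\pafter$, (ii) the identification $r_i\cdot s_{ij} = s_{ij}\pafter r_i$ of scalar multiplication with Kleisli composition in $\Pred(1)$, and (iii) the fact from Proposition~\ref{prop:effectusPCM} that pre-composition with $r_i$ distributes over $\ovee$. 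Given a total $f\colon X\tto Y$ and the state $\omega=\sum_i r_i\omega_i$, I compute
$$\klin{f}\pafter \bigovee_i \klin{\omega_i}\pafter r_i \;=\; \bigovee_i (\klin{f}\pafter \klin{\omega_i})\pafter r_i \;=\; \bigovee_i \klin{f\tafter\omega_i}\pafter r_i,$$
again by Proposition~\ref{prop:effectusPCM}; translating back to $\cat{B}$ this reads $\tstat{f}(\sum_i r_i\omega_i)=\sum_i r_i\,\tstat{f}(\omega_i)$, so $\tstat{f}$ is affine. Functoriality $\tstat{\idmap}=\idmap$ and $\tstat{g\tafter f}=\tstat{g}\after\tstat{f}$ is just associativity of $\tafter$.

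\textbf{Main obstacle.} The only delicate point is the joint summability of the $n$-tuple $(f_i)$: Lemma~\ref{lem:pairing} gives only a \emph{binary} partial pairing, so for $n\geq 3$ I must build the sum inductively and verify the result is bracketing-independent. This rests squarely on the associativity half of Proposition~\ref{prop:effectusPCM}, whose proof is the most intricate step in Section~\ref{sec:partialpred}; once invoked, everything else is straightforward diagram chasing.
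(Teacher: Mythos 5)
Your route is genuinely different from the paper's, but the inductive step has a concrete gap. For $n\geq 3$ you need the intermediate sum $g = f_{2}\ovee\cdots\ovee f_{n}$, whose total weight is $\one\pafter g = r_{2}\ovee\cdots\ovee r_{n} = r_{1}^{\bot}$, which is strictly below $\one$ in general. Lemma~\ref{lem:pairing} only produces a pairing of \emph{two} maps whose kernel-supplements are exactly complementary, so it does not give you $f_{2}\ovee f_{3}$ (since $r_{2}\ovee r_{3}\neq\one$), and the associativity clause of Proposition~\ref{prop:effectusPCM} only reassociates sums that are already known to exist --- it does not create the bounds you need. The missing ingredient is that $\kerbot = \one\pafter(-)$ \emph{reflects} orthogonality: from $r_{i}\orthogonal r_{j}$ in $\Pred(1)$ one must conclude $f_{i}\orthogonal f_{j}$ in $\Par(\cat{B})(1,X)$. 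This is exactly Lemma~\ref{lem:kerbot} (equivalently, the $n$-ary correspondence~\eqref{bijcor:sumpairing} of Lemma~\ref{lem:sumpairing}); it is proved later in the paper but depends only on the pullbacks of Lemma~\ref{lem:effectuspb}, so you may cite it without circularity. With that lemma inserted, your induction closes and the rest of your argument (unitality, the nested-sum law via distribution of $\pafter$ over $\ovee$, affineness of $\tstat{f}$) is correct.

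For comparison, the paper avoids the $n$-ary summability issue altogether: from the single bound $b\colon 1\pto n\cdot 1$ witnessing $\bigovee_{i}r_{i}=\one$ it deduces, via the pullback on the right of~\eqref{diag:effectuspb}, that $b$ is a total map $s\colon 1\tto n\cdot 1$, and then defines $\sum_{i}r_{i}\omega_{i} := [\omega_{1},\ldots,\omega_{n}]\tafter s$. That definition lives entirely in $\cat{B}$ and makes affineness of $\tstat{f}$ a one-line cotuple computation, at the cost of being less transparently an ``$\ovee$-sum of weighted substates''. Your formula $\bigovee_{i}\klin{\omega_{i}}\pafter r_{i}$ agrees with the paper's (apply $\rhd_{i}$ to both and use joint monicity), and it generalises more readily, e.g.\ to the normalisation arguments of Lemma~\ref{lem:unitnormalisation}; but it genuinely requires the reflection property of $\kerbot$, which your writeup silently assumes.
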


\begin{proof}
We first have to prove that convex sums exist in the set of states
$\Stat(X)$. So let $r_{1}, \ldots, r_{n} \in \Pred(1)$ be scalars with
$\bigovee_{i} r_{i} = \one$. This means that there is a bound $b\colon
1 \pto n\cdot 1 = 1 + \cdots + 1$ with $\rhd_{i} \pafter b = r_{i}$
and $\nabla \pafter b = \one \colon 1 \pto 1$, where $\nabla \colon
n\cdot 1 \pto 1$ is the $n$-ary codiagonal. The equation $\nabla
\pafter b = \one$ translates to $(\nabla\tplus\idmap) \tafter b =
\kappa_{1} \tafter \bang \colon 1 \tto 1+1$. Hence $b$ is a total map
$1 \tto n\cdot 1$, of the form $b = \kappa_{1} \tafter s$ in:
$$\xymatrix{
1\ar@/_2ex/[ddr]|-{\tafter}_{b}\ar@/^2ex/[drr]|-{\tafter}^-{\bang}
   \ar@{..>}[dr]|-{\tafter}^-{s}
\\
& n\cdot 1\ar@{ >->}[d]|-{\tafter}_{\kappa_1}
   \ar[r]|-{\tafter}_-{\nabla}\pullback &
  1\ar@{ >->}[d]|-{\tafter}^{\kappa_1}
\\
& n\cdot 1+1\ar[r]|-{\tafter}_-{\nabla\tplus\idmap} & 1+1
}$$

\noindent For an $n$-tuple of states $\omega_{i} \colon 1 \tto X$ we
now define $[\omega_{1}, \ldots, \omega_{n}] \tafter s$ to be the
convex sum $\sum_{i}r_{i}\omega_{i}$.

Post-composition with $f\colon X \rightarrow Y$ is clearly affine,
since:
$$\begin{array}[b]{rcl}
\tstat{f}\big(\sum_{i}r_{i}\omega_{i}\big)
& = &
f \tafter [\omega_{1}, \ldots, \omega_{n}] \tafter s \\
& = &
[f \tafter \omega_{1}, \ldots, f \tafter \omega_{n}] \tafter s
\hspace*{\arraycolsep}=\hspace*{\arraycolsep}
\sum_{i}r_{i}\tstat{f}(\omega_{i}).
\end{array}\eqno{\QEDbox}$$
\end{proof}

When we combine this result with Theorem~\ref{thm:effectusEMod} and
Proposition~\ref{prop:EmodConvAdj} we obtain the following result.

\begin{theorem}
\label{thm:effectusTriangle}
Let $\cat{B}$ be an effectus, with $M = \Pred(1) = \Stat(1+1)$ its
effect monoid of scalars. There is a `state and effect'\index{S}{state
  and effect triangle} triangle of the form:
\begin{equation}
\label{diag:effectustriangle}
\vcenter{\xymatrix{
\op{(\EMod_{M})}\ar@/^1.5ex/[rr]^-{\Hom(-,M)} & \top & 
   \Conv_{M}\ar@/^1.5ex/[ll]^-{\Hom(-,M)} \\
& \cat{B}\ar[ul]^{\Hom(-,1+1)=\Pred\quad}\ar[ur]_{\;\Stat=\Hom(1,-)} &
}} 
\end{equation}

\noindent For a predicate (effect) $p\colon X \tto 1+1$ and a state
$\omega \colon 1 \tto X$ we define the validity $\omega \models p$ as
the scalar obtained by composition:
\begin{equation}
\label{eqn:Born}
\begin{array}{rcl}
(\omega \models p)
& = &
p \after \omega \colon 1 \tto 1+1.
\end{array}\index{N}{$\models$, validity given by the Born rule}
\end{equation}

\noindent We call this abstract definition the \emph{Born
  rule},\index{S}{Born rule} see
Example~\ref{ex:triangle}~\eqref{ex:triangle:vNA} below.

This validity $\models$ satisfies the following Galois
correspondence:
\begin{equation}
\label{eqn:validityGalois}
\begin{array}{rcccl}
\tstat{f}(\omega) \models q
& = &
q \tafter f \tafter \omega
& = &
\omega \models \tbox{f}(q),
\end{array}
\end{equation}

\noindent where $f\colon X \tto Y$, $\omega\colon 1 \tto X$, and $q
\colon Y \tto 1+1$.

This validity relation $\models$ gives rise to two natural
transformations in:
\begin{equation}
\label{diag:effectusValidity}
\vcenter{\xymatrix{
\op{(\EMod_{M})} & & \Conv_{M} \\
& \cat{B}\urtwocell^{\Stat\quad}_{\hspace*{5em}\Hom(\Pred(-), M)}{}
  \ultwocell_{\quad\Pred}^{\Hom(\Stat(-),M)\hspace*{5em}}{}
}}
\end{equation}
\end{theorem}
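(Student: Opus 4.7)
The plan is to assemble the triangle from the three building blocks already in hand and then verify the validity equation and naturality by direct calculation. First, the two legs are in place: Theorem~\ref{thm:effectusEMod} furnishes the functor $\Pred \colon \cat{B} \to \op{(\EMod_{M})}$ with $M = \Pred(1)$, while Lemma~\ref{lem:effectusConv} furnishes $\Stat \colon \cat{B} \to \Conv_{M}$. The top adjunction is Proposition~\ref{prop:EmodConvAdj} instantiated at this same $M$. So the triangle~\eqref{diag:effectustriangle}, viewed simply as a diagram of categories, functors, and an adjunction, requires no additional argument.

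Second, the Born rule~\eqref{eqn:Born} is a definition; its well-typedness is immediate because $p \tafter \omega \colon 1 \tto 1+1$ is a scalar. The Galois correspondence~\eqref{eqn:validityGalois} then collapses to associativity of composition: $\tstat{f}(\omega) \models q = q \tafter (f \tafter \omega) = (q \tafter f) \tafter \omega = \tbox{f}(q) \tafter \omega = \omega \models \tbox{f}(q)$.

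Third, for the two natural transformations in~\eqref{diag:effectusValidity}, I would define both by currying the Born pairing $(\omega, p) \mapsto p \tafter \omega$. The component at $X$ of the $\Conv_{M}$-side transformation is the map $\Stat(X) \to \Hom(\Pred(X), M)$ sending $\omega$ to $p \mapsto p \tafter \omega$; the component of the $\op{(\EMod_{M})}$-side transformation is the map $\Pred(X) \to \Hom(\Stat(X), M)$ sending $p$ to $\omega \mapsto p \tafter \omega$. Well-typedness requires, for fixed $\omega$, that $p \mapsto p \tafter \omega$ is a map of effect modules over $M$, and dually that, for fixed $p$, the map $\omega \mapsto p \tafter \omega$ is affine. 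Both facts follow directly from the $\PCM$-enrichment of $\Par(\cat{B})$ (Proposition~\ref{prop:effectusPCM}) together with the module structures on $\Pred(X)$ and $\Stat(X)$ established in Theorem~\ref{thm:effectusEMod} and Lemma~\ref{lem:effectusConv}, using that scalar multiplication is in each case implemented by partial composition. Naturality in $X$ of either transformation unravels to exactly the Galois correspondence of the previous paragraph.

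The main obstacle is nothing more than bookkeeping: keeping straight which of the four homsets $\Hom(\Pred(X), M)$, $\Hom(\Stat(X), M)$, $\Pred(X)$, $\Stat(X)$ each curried version of the pairing lives in, and which maps are morphisms in $\EMod_{M}$ versus in $\Conv_{M}$. Once this is in order, every verification reduces either to associativity of $\tafter$ or to $\PCM$-enrichment of $\Par(\cat{B})$.
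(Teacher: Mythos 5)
Your proposal is correct and follows essentially the same route as the paper: the triangle is assembled from Theorem~\ref{thm:effectusEMod}, Lemma~\ref{lem:effectusConv} and Proposition~\ref{prop:EmodConvAdj}, the Galois correspondence is associativity of composition, and the two natural transformations are obtained by currying the validity pairing, with naturality given exactly by~\eqref{eqn:validityGalois}. The only difference is that you spell out the well-typedness of the curried maps (effect-module map in one variable, affine in the other), which the paper leaves implicit; this is a harmless and correct addition.
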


\begin{proof}
Most of this is a summary of what we have seen before. We concentrate
on the natural transformations in the last part of the theorem. The
natural transformation $\Hom(\Stat(-), M) \Rightarrow \Pred$ consists
of functions $\Pred(X) \rightarrow M^{\Stat(X)}$ in $\EMod_{M}$, given
by $p \mapsto (\omega \mapsto \omega\models p)$. Similarly, the
natural transformation $\Stat \Rightarrow \Hom(\Pred(-), M)$ consists
of affine maps $\Stat(X) \rightarrow M^{\Pred(X)}$ given by $\omega
\mapsto (p \mapsto \omega\models p)$. Naturality of these functions is
given by the equations~\eqref{eqn:validityGalois}. \QED
\end{proof}

Notice that we do not require that the
triangle~\eqref{diag:effectustriangle} commutes (up to isomorphism),
that is, that the natural transformations
in~\eqref{diag:effectusValidity} are isomorphisms. In some examples
they are, in some examples they are not.

\begin{example}
\label{ex:triangle}
We shall review the running examples in the remainder of this section.
\begin{enumerate}
\item \label{ex:triangle:Sets} For the effectus $\Sets$ the
  triangle~\eqref{diag:effectustriangle} takes the following form.
$$\xymatrix{
\op{\EA}\ar@/^1.5ex/[rr]^-{\Hom(-,2)} & \top & 
   \Sets\ar@/^1.5ex/[ll]^-{\Hom(-,2)} \\
& \Sets\ar[ul]^{2^{(-)} = \Pred\;}\ar[ur]_{\;\Stat = \idmap} &
}$$

\noindent Here we use that convex sets over the effect monoid $2 =
\{0,1\}$ of scalars in $\Sets$ are just sets. The predicate functor
$2^{(-)}$ is powerset, see Example~\ref{ex:effectusSets}. For a state
$x\in X$ and a predicate $p\in 2^{X}$ the Born rule~\ref{eqn:Born}
gives an outcome in $\{0,1\}$ determined by membership:
$$\begin{array}{rcl}
x \models p
& = &
p(x).
\end{array}$$

\item \label{ex:triangle:KlD} For the effectus $\Kl(\Dst)$ we have a
  triangle:
$$\xymatrix{
\op{\EMod}\ar@/^1.5ex/[rr]^-{\Hom(-,[0,1])} & \top & 
   \Conv\ar@/^1.5ex/[ll]^-{\Hom(-,[0,1])} \\
& \Kl(\Dst)\ar[ul]^{[0,1]^{(-)} = \Pred\;}\ar[ur]_{\;\Stat} &
}$$

\noindent For a Kleisli map $f\colon X \rightarrow \Dst(Y)$ the
associated state transformer function $\tstat{f} \colon \Dst(X)
\rightarrow \Dst(Y)$ is Kleisli extension, given by:
$$\begin{array}{rcl}
\tstat{f}(\omega)(y)
& = &
\sum_{x} \omega(x) \cdot f(x)(y).
\end{array}$$

\noindent This is `baby' integration. For the Giry monad $\Giry$ it is
proper integration, see~\cite{Jacobs13}. For a state
$\omega\in\Dst(X)$ and a predicate $p\in [0,1]^{X}$ the Born rule
gives the expected value:
$$\begin{array}{rcl}
\omega \models p
& = &
\sum_{x} \omega(x) \cdot p(x) \;\in\; [0,1].
\end{array}$$

\noindent The (Born) validity rules for discrete and continuous
probability have been introduced in~\cite{Kozen81,Kozen85}, in the
context of semantics of probabilistic programs. For continuous
probability, the correspondence~\eqref{eqn:validityGalois} occurs
in~\cite{Jacobs13} for the Giry monad as an equation:
$$\begin{array}{rcccccl}
\tstat{f}(\omega) \models q
& = &
\displaystyle \int q \intd \tstat{f}(\omega)
& = &
\displaystyle \int \tbox{f}(q) \intd \omega
& = &
\omega \models \tbox{f}(q).
\end{array}$$

\noindent Here, $f\colon X \rightarrow \Giry(Y)$ is a measurable
function, $q\colon Y \rightarrow [0,1]$ is a (measurable) predicate on
$Y$ and a maesure/state $\omega\in\Giry(X)$. The operation
$\tstat{f}$ is Kleisli extension and $\tbox{f}$ is substitution.

\item \label{ex:triangle:OUG} The state and effect triangle for the
  effectus $\op{\OUG}$ of order unit groups is:
$$\xymatrix{
\op{\EA}\ar@/^1.5ex/[rr]^-{\Hom(-,2)} & \top & 
   \Sets\ar@/^1.5ex/[ll]^-{\Hom(-,2)} \\
& \op{\OUG}\ar[ul]^{[0,1]_{(-)} = \Pred\quad}\;\ar[ur]_{\quad\Stat = \Hom(-,\Z)} &
}$$

\noindent For a predicate $e\in [0,1]_{G}$ and a state $\omega \colon G
\rightarrow \Z$ validity is:
$$\begin{array}{rcl}
\omega \models e
& = &
\omega(e) \;\in\; \{0,1\}.
\end{array}$$

\item \label{ex:triangle:vNA} The effectus $\op{\vNA}$ of von Neumann
  algebras yields:
$$\xymatrix{
\op{\EMod}\ar@/^1.5ex/[rr]^-{\Hom(-,[0,1])} & \top & 
   \Conv\ar@/^1.5ex/[ll]^-{\Hom(-,[0,1])} \\
& \op{\vNA}\ar[ul]^{[0,1]_{(-)} = \Pred\quad}\ar[ur]_{\quad\Stat=\Hom(-,\C)} &
}$$

\noindent For a predicate $e\in [0,1]_{\mathscr{A}}$ and a state $\omega
\colon \mathscr{A} \rightarrow \C$ one interpretes validity again
as expected probability:
$$\begin{array}{rcl}
\omega \models e
& = &
\omega(e) \;\in\; [0,1].
\end{array}$$

\noindent An interesting special case is $\mathscr{A} =
\B(\mathscr{H})$,\index{N}{$\B(\mathscr{H})$, space of operators on a
  Hilbert space $\mathscr{H}$} where $\mathscr{H}$ is a
 Hilbert space, with associated von Neumann algebra
$\B(\mathscr{H})$ of bounded operators $\mathscr{H}
\rightarrow \mathscr{H}$. It is well-known that (normal) states $\omega \colon
\B(\mathscr{H}) \rightarrow \C$ correspond to density matrices $\rho
\colon \mathscr{H} \rightarrow \mathscr{H}$, via $\omega =
\tr(-\rho)$.\index{N}{$\tr$, trace operation} The Born rule
formula~\eqref{eqn:Born}, for an effect $E \in [0,1]_{\B(\mathscr{H})}
= \set{F\colon \mathscr{H} \rightarrow \mathscr{H}}{0 \leq F \leq
  \idmap}$ becomes the Born rule:
$$\begin{array}{rcl}
\tr(-\rho) \models E
& = &
\tr(E\rho).
\end{array}$$
\end{enumerate}
\end{example}

\begin{remark}
\label{rem:normalisation}
In Theorem~\ref{thm:effectusEMod} we have seen that the predicate
functor $\Pred \colon \cat{B} \rightarrow \op{(\EMod_{M})}$ of an
effectus $\cat{B}$ preserves finite coproducts. The same preservation
property does not hold in general for the states functor $\Stat \colon
\cat{B} \rightarrow \Conv_{M}$ from Lemma~\ref{lem:effectusConv}.  The
matter is investigated in~\cite{JacobsWW15}, for the case where the
effect monoid $M$ is $[0,1]$.

It turns out that preservation of finite coproducts by the states
functor is closely related to \emph{normalisation} of substates.  This
works as follows. 

We say that an effectus $\cat{B}$ satisfies
normalisation\index{S}{normalisation} if for each non-zero substate
$\omega \colon 1 \pto X$ there is a unique state $\rho \colon 1 \tto
X$ with:
\begin{equation}
\label{eqn:normalisation}
\begin{array}{rcl}
\omega
& = &
\klin{\rho} \pafter \one \pafter \omega.
\end{array}
\end{equation}

\noindent This says that the substate $\omega$ is scalar multiplcation
$\klin{\rho}\pafter s$ in the sense of Lemma~\ref{lem:substatePCMod},
where the scalar $s = \one \pafter \omega \colon 1 \pto 1$ is
determined by $\omega$ itself.

We briefly show that the effectuses $\Kl(\Dst)$ and $\op{\vNA}$
satisfy normalisation.
\begin{enumerate}
\item A substate in $\Kl(\Dst)$ is a subdistribution
  $\omega\in\sDst(X)$. Let us assume that it is non-zero, so that the
  associated scalar $s = \one \pafter \omega = \sum_{x}\omega(x) \in
  [0,1]$ is non-zero. We take $\rho = \frac{\omega}{s} = \sum_{x}
  \frac{\omega(x)}{s}\ket{x}$. This $\rho$ is a proper state
  (distribution) since:
$$\begin{array}{rcccl}
\sum_{x}\rho(x)
& = &
\sum_{x}\frac{\omega(x)}{s}
& = &
1.
\end{array}$$

\noindent Moreover, we have $s\cdot \rho = \omega$ by construction.

\item A state in the effectus $\op{\vNA}$ is a positive subunital map
  $\omega \colon \mathscr{A} \rightarrow \C$. If it is non-zero, then
  $s = \omega(1) \in [0,1]$ is a non-zero scalar, so we can define
  $\rho\colon \mathscr{A} \rightarrow \C$ as $\rho(a) =
  \frac{\omega(a)}{s}$. By construction $\rho$ is unital, and
  satisfies $\omega = s\cdot \rho$.
\end{enumerate}
\end{remark}

\section{Effectuses from biproduct categories}\label{sec:biprod}


This section describes a construction that turns a biproduct category
with a suitable `ground' map into an effectus. The construction is
inspired by causal maps in the context of $\CP^*$-categories,
see~\cite{CoeckeHK14}.

A \emph{biproduct category}\index{S}{category!biproduct
  --}\index{S}{biproduct!-- category} is a category with finite
biproducts $(0, \oplus)$.\index{S}{biproduct} This means first of all
that the object $0$ is both initial and final, and thus gives rise to
zero maps $\zero \colon X \rightarrow 0 \rightarrow Y$ between
arbitrary objects $X,Y$. Next, for each pair of objects $X_{1},
X_{2}$, the object $X_{1}\oplus X_{2}$ is both a product, and a
coproduct, with coprojections and projections:
$$\vcenter{\xymatrix{
X_{i}\ar[r]^-{\kappa_i} & X_{1} \oplus X_{2}\ar[r]^-{\pi_j} & X_{j}
}}
\quad
\mbox{satisfying}
\quad
\begin{array}{rcl}
\pi_{j} \after \kappa_{i}
& = &
\left\{\begin{array}{ll}
\idmap \quad & \mbox{if } i = j \\
\zero & \mbox{if } i \neq j
\end{array}\right.
\end{array}$$

\noindent It follows immediately that $\kappa_{1} = \tuple{\pi_{1}
  \after \kappa_{1}, \pi_{2} \after \kappa_{1}} = \tuple{\idmap,
  \zero}$ and $\pi_{1} = [\pi_{1} \after \kappa_{1}, \pi_{1} \after
  \kappa_{2}] = [\idmap, \zero]$, and similarly for $\kappa_{2}$ and
  $\pi_2$. Moreover, each homset of maps $X \rightarrow Y$ is a
  commutative monoid, with sum of maps $f,g\colon X \rightarrow Y$
  given by:
$$\xymatrix@C+1pc{
f+g = \Big(X\ar[r]^-{\Delta = \tuple{\idmap,\idmap}} & 
   X\oplus X\ar[r]^-{f\oplus g} & 
   Y\oplus Y\ar[r]^-{\nabla = [\idmap,\idmap]} & Y\Big)
}$$

\noindent The zero element for this addition $+$ is the zero map
$\zero \colon X \rightarrow Y$.

\begin{definition}
\label{def:biprodground}
We call a category $\cat{A}$ a \emph{grounded biproduct
  category}\index{S}{grounded biproduct
  category}\index{S}{biproduct!grounded --
  category}\index{S}{category!grounded biproduct --} if $\cat{A}$ has
finite biproducts $(\oplus, 0)$ and has a special object $I$ with for
each $X\in\cat{A}$ a `ground' map\index{S}{ground map} $\ground_{X}
\colon X \rightarrow I$\index{N}{$\ground$, ground map} satisfying the
four requirements below. We omit the subscript $X$ in $\ground_{X}$
when it is clear from the context.
\begin{enumerate}
\item \label{def:biprodground:I} the ground on $I$ is the identity:
  $\ground_{I} = \idmap[I] \colon I \rightarrow I$;

\item \label{def:biprodground:coprod} coprojections commute with
  ground: $\big(X_{i} \xrightarrow{\kappa_i} X_{1}\oplus X_{2}
  \xrightarrow{\ground} I\big) = \big(X_{i} \xrightarrow{\ground}
  I\big)$.

\item \label{def:biprodground:zm} ground maps are `zero-monic', that
  is $\ground \after f = \zero$ implies $f=\zero$

\item \label{def:biprodground:canc} `subcausal cancellation' holds: if
  $f + g = f + h = \ground \colon X \rightarrow I$, then $g=h$.
\end{enumerate}

\noindent A map $f\colon X \rightarrow Y$ in $\cat{A}$ is called
\emph{causal}\index{S}{causal map} if $\ground_{Y} \after f =
\ground_{X}$.  We write $\Causal(\cat{A}) \hookrightarrow
\cat{A}$\index{N}{cat@$\Causal(\cat{A})$, category of causal maps} for
the subcategory with causal maps.
\end{definition}

As we shall see in Examples~\ref{ex:biproductSetsKlD}
and~\ref{ex:biproductOUGvNA} below, these ground maps describe a unit
elements, possibly in opposite direction. Causal maps preserve this
units, and may thus be called unital --- or co-unital. The idea of
defining them causal maps in this way occurs in~\cite{CoeckeL13},
building on the causality axiom in~\cite{ChiribellaAP11}. The above
second point says that coprojections are causal. Third point, about
cancellation, fails in the $\CP^*$-category obtained from the category
of relations, where union $\cup$ is used as sum $+$.

We can now state and prove the main result, which was obtained jointly
with Aleks Kissinger.

\begin{theorem}
\label{thm:effectusFromGround}
The category $\Causal(\cat{A})$ of causal maps, for a grounded
biproduct category $\cat{A}$, is an effectus.
\end{theorem}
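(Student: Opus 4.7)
The plan is to verify each of the three pieces of effectus structure on $\Causal(\cat{A})$ directly, exploiting the fact that in a biproduct category every map $f\colon Z \tto X_{1}\oplus \cdots \oplus X_{n}$ decomposes uniquely as a tuple $\tuple{f_{1},\ldots,f_{n}}$ with $f_{i} = \pi_{i}\tafter f$, and that cotupling and tupling interact via $[g_{1},g_{2}] \tafter \tuple{f_{1},f_{2}} = g_{1}\tafter f_{1} + g_{2} \tafter f_{2}$. Throughout, causality will be translated into the single arithmetical constraint $\ground_{X_{1}} \tafter f_{1} + \cdots + \ground_{X_{n}} \tafter f_{n} = \ground_{Z}$, which follows from $\ground_{X_{1}\oplus\cdots\oplus X_{n}} = [\ground_{X_{1}},\ldots,\ground_{X_{n}}]$ (a direct consequence of axiom~\eqref{def:biprodground:coprod} together with the universal property of the coproduct).

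First I would identify the categorical structure on $\Causal(\cat{A})$. The object $I$ is final: any causal $f\colon X \tto I$ satisfies $f = \ground_{I} \tafter f = \ground_{X}$ by axiom~\eqref{def:biprodground:I}, so $\ground_{X}$ is the only option. The zero object $0\in\cat{A}$ is initial, as the unique map $0\tto X$ is the zero map, which is causal because both $\ground_{X}\tafter \zero$ and $\ground_{0}$ are maps $0\tto I$ and $0$ is initial. Coproducts are provided by biproducts: coprojections are causal by axiom~\eqref{def:biprodground:coprod}, and if $f_{1},f_{2}$ are causal then so is $[f_{1},f_{2}]$, since $\ground_{Y} \tafter [f_{1},f_{2}] \tafter \kappa_{i} = \ground_{Y}\tafter f_{i} = \ground_{X_{i}} = \ground_{X_{1}\oplus X_{2}}\tafter \kappa_{i}$, so uniqueness of cotupling yields $\ground_{Y}\tafter [f_{1},f_{2}] = \ground_{X_{1}\oplus X_{2}}$.

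Next I verify the two pullback diagrams~\eqref{diag:effectuspb}. For the \emph{left} square, given a causal $Z$ with causal $u = \tuple{u_{1},u_{2}}\colon Z \tto X\oplus I$ and $v = \tuple{v_{1},v_{2}}\colon Z\tto I \oplus Y$ satisfying $(\ground_{X}\oplus \idmap)\tafter u = (\idmap \oplus \ground_{Y})\tafter v$, computing both sides coordinate-wise gives $v_{1} = \ground_{X}\tafter u_{1}$ and $u_{2} = \ground_{Y}\tafter v_{2}$. The only plausible mediating map is $w = \tuple{u_{1}, v_{2}}$, and its causality follows from that of $u$:
$$\ground_{X\oplus Y}\tafter w = \ground_{X}\tafter u_{1} + \ground_{Y}\tafter v_{2} = \ground_{X}\tafter u_{1} + u_{2} = \ground_{Z}.$$
Uniqueness follows from the uniqueness of tupling in $\cat{A}$. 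For the \emph{right} square, a causal $u\colon Z\tto I$ is forced to equal $\ground_{Z}$, and for a causal $v = \tuple{v_{1},v_{2}}\colon Z \tto X\oplus Y$ with $(\ground_{X}\oplus\ground_{Y})\tafter v = \kappa_{1}\tafter \ground_{Z}$, coordinate-wise comparison yields $\ground_{X}\tafter v_{1} = \ground_{Z}$ (making $v_{1}$ causal) and $\ground_{Y}\tafter v_{2} = \zero$. The zero-monicity axiom~\eqref{def:biprodground:zm} then forces $v_{2} = \zero$, whence $v = \kappa_{1}\tafter v_{1}$ and $v_{1}$ is the unique mediating map.

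Finally, joint monicity of $\IV$ and $\XI$ will come straight from subcausal cancellation~\eqref{def:biprodground:canc}. For causal $f = \tuple{f_{1},f_{2},f_{3}},\, g = \tuple{g_{1},g_{2},g_{3}}\colon X\tto (I\oplus I)\oplus I$, rewriting $\IV$ and $\XI$ as sums over the three biproduct coordinates yields
$$\IV \tafter f = \tuple{f_{1},\, f_{2}+f_{3}}, \qquad \XI \tafter f = \tuple{f_{2},\, f_{1}+f_{3}},$$
and similarly for $g$. Equality of both sides gives $f_{1}=g_{1}$ and $f_{2}=g_{2}$, and causality gives $(f_{1}+f_{2}) + f_{3} = \ground_{X} = (f_{1}+f_{2}) + g_{3}$; axiom~\eqref{def:biprodground:canc} then yields $f_{3}=g_{3}$, hence $f=g$. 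The only real obstacle I anticipate is the notational bookkeeping of keeping total-map notation ($\tplus$, $\tafter$) aligned with the biproduct operations $(\oplus, +, \tuple{-,-}, [-,-])$; once one commits to decomposing every map into its biproduct components and expressing causality as the single linear equation above, each of the three conditions falls out mechanically from exactly one of the four axioms for a ground map.
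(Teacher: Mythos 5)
Your proposal is correct and follows essentially the same route as the paper's proof: coprojections/cotuples handled via axiom (2), the left pullback's mediating map $\tuple{\pi_1\tafter f,\pi_2\tafter g}$ shown causal via the identity $\ground_{X\oplus Y}=\nabla\tafter(\ground_X\oplus\ground_Y)$ (your "sum of components" equation), the right pullback via zero-monicity, and joint monicity by extracting two coordinates from the first components of $\IV$ and $\XI$ and recovering the third by subcausal cancellation against $\ground_X$. The only difference is presentational — you work uniformly in biproduct coordinates where the paper writes out the projections explicitly — so there is nothing substantive to add.
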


\begin{proof}
The category $\Causal(\cat{A})$ has coproduct $\oplus$ since the
coprojections $\kappa_{i}$ are causal by definition, and the cotuple
$[f,g]$ is causal if $f\colon X\rightarrow Z$, $g\colon Y\rightarrow
Z$ are causal by requirement~\eqref{def:biprodground:coprod} in
Definition~\ref{def:biprodground}:
$$\begin{array}{rcccccccl}
\ground_{Z} \after [f,g]
& = &
[\ground_{Z} \after f,\ground_{Z} \after g]
& = &
[\ground_{X}, \ground_{Y}]
& \smash{\stackrel{\eqref{def:biprodground:coprod}}{=}} &
[\ground_{X+Y} \after \kappa_{1}, \ground_{X+Y} \after \kappa_{2}]
& = &
\ground_{X+Y}.
\end{array}$$

\noindent The zero object $0$ in $\cat{A}$ is initial in
$\Causal(\cat{A})$ since $\bang_{X} \colon 0 \rightarrow X$ is
causal. We have $\ground_{X} \after \bang_{X} = \bang_{I} =
\ground_{0}$. The object $I\in\cat{A}$ is final in $\Causal(\cat{A})$,
since: the ground map $\ground_{X} \colon X \rightarrow I$ is causal
because $\ground_{I} \after \ground_{X} = \idmap[I] \after \ground_{X}
= \ground_{X}$. Further, any causal map $f\colon X \rightarrow I$
satisfies $f = \idmap[I] \after f = \ground_{I} \after f =
\ground_{X}$.

We have to prove that the two diagrams from
Definition~\ref{def:effectus}~\eqref{def:effectus:pb} are pullbacks in
$\Causal(\cat{A})$.
$$\vcenter{\xymatrix@R-.5pc{
X\oplus Y\ar[r]^-{\idmap\oplus \ground}\ar[d]_{\ground\oplus\idmap} & 
  X\oplus I\ar[d]^{\ground\oplus\idmap}
& &
X\ar[r]^-{\ground}\ar[d]_{\kappa_1} & I\ar[d]^{\kappa_1}  
\\
I\oplus Y\ar[r]_-{\idmap\oplus\ground} & I\oplus I
& &
X\oplus Y\ar[r]_-{\ground\oplus\ground} & I\oplus I
}}$$

\noindent For the diagram on the left let $A\in\cat{A}$ be an object
with causal maps $f \colon A \rightarrow X\oplus I$ and $g \colon A
\rightarrow I\oplus Y$ satisfying $(\ground\oplus\idmap) \after f =
(\idmap\oplus\ground) \after g$. There is an obvious, unique mediating
map $\tuple{\pi_{1} \after f, \pi_{2} \after g} \colon A \rightarrow
X\oplus Y$. We only have to prove that it is causal. First notice
that:
$$\begin{array}{rcccccccl}
\ground_{Y} \after \pi_{1} \after f
& = &
\pi_{1} \after (\ground\oplus\idmap) \after f
& = &
\pi_{1} \after (\idmap\oplus\ground) \after g
& = &
\pi_{1} \after g
& = &
\ground_{I} \after \pi_{1} \after g.
\end{array}$$

\noindent Now we have that:
$$\begin{array}{rcl}
\ground_{X\oplus Y}
\hspace*{\arraycolsep}=\hspace*{\arraycolsep}
\ground_{X\oplus Y} \after [\kappa_{1}, \kappa_{2}]
& = &
[\ground_{X\oplus Y} \after \kappa_{1}, \ground_{X\oplus Y} \after \kappa_{2}] \\
& = &
[\ground_{X}, \ground_{Y}]
\hspace*{\arraycolsep}=\hspace*{\arraycolsep}
\nabla \after (\ground_{X}\oplus\ground_{Y}).
\end{array}\eqno{(*)}$$

\noindent Hence:
$$\begin{array}{rcl}
\ground \after \tuple{\pi_{1} \after f, \pi_{2} \after g}
& \smash{\stackrel{(*)}{=}} &
\nabla \after (\ground\oplus\ground) \after 
   \tuple{\pi_{1} \after f, \pi_{2} \after g} \\
& = &
\nabla \after \tuple{\ground \after \pi_{1} \after f, 
   \ground \after \pi_{2} \after g} \\
& = &
\nabla \after \tuple{\ground \after \pi_{1} \after g, 
   \ground \after \pi_{2} \after g}
   \qquad\qquad \mbox{as shown above} \\
& = &
\nabla \after (\ground\oplus\ground) \after 
   \tuple{\pi_{1} \after g, \pi_{2} \after g} \\
& \smash{\stackrel{(*)}{=}} &
\ground \after g \\
& = &
\ground.
\end{array}$$

\noindent The $\tuple{\pi_{1} \after f, \pi_{2} \after g}$ is causal,
and the diagram on the left is a pullback.

For the above diagram on the right, let $f\colon A \rightarrow X\oplus
Y$ be a causal map satisfying $(\ground\oplus\ground) \after f =
\kappa_{1} \after \ground$. The obvious mediating map is $\pi_{1}
\after f \colon A \rightarrow X$. It is causal since $\ground \after
\pi_{1} \after f = \pi_{1} \after (\ground\oplus\ground) \after f =
\pi_{1} \after \kappa_{1} \after \ground = \ground$. We obtain
$\pi_{2} \after f = \zero$ via
Definition~\ref{def:biprodground}~\eqref{def:biprodground:zm} from:
$$\begin{array}{rccccccccl}
\ground \after \pi_{2} \after f
& = &
\pi_{2} \after (\ground\oplus\ground) \after f
& = &
\pi_{2} \after \kappa_{1} \after \ground
& = &
\zero \after \ground
& = &
\zero.
\end{array}$$

\noindent We now get:
$$\begin{array}{rcccccccl}
f
& = & 
\tuple{\pi_{1} \after f, \pi_{2} \after f}
& = &
\tuple{\pi_{1} \after f, \zero}
& = &
\tuple{\idmap, \zero} \after \pi_{1} \after f
& = &
\kappa_{1} \after \pi_{1} \after f.
\end{array}$$

\noindent Hence $\pi_{1} \after f \colon A \rightarrow X$ is the 
required unique mediating (causal) map. 

\auxproof{
This map is indeed grounded:
$$\begin{array}{rcccccl}
\ground \after \pi_{1} \after f
& = &
\pi_{1} \after (\ground\oplus\ground) \after f
& = &
\pi_{1} \after \kappa_{1} \after \ground
& = &
\ground.
\end{array}$$

\noindent Moreover, it is unique if $g\colon A \rightarrow X$
satisfies $\kappa_{1} \after g = f$, then $\pi_{1} \after f = \pi_{1}
\after \kappa_{1} \after g = g$.
}

Finally we have to prove that the following two maps $\IV$ and $\XI$ are
jointly monic, in:
$$\vcenter{\xymatrix@C-.5pc{
(I\oplus I)\oplus I\ar@<+.5ex>[r]^-{\IV}\ar@<-.5ex>[r]_-{\XI} & I\oplus I
}}
\;\mbox{ where }\;
\left\{\begin{array}{l}
\IV 
=
[\idmap, \kappa_{2}]
=
\tuple{\pi_{1} \after \pi_{1}, (\pi_{2} \after \pi_{1}) + \pi_{2}} \\
\XI 
=
[[\kappa_{2}, \kappa_{1}], \kappa_{2}]
=
\tuple{\pi_{2} \after \pi_{1}, (\pi_{1} \after \pi_{1}) + \pi_{2}}.
\end{array}\right.$$

\noindent Let $f,g\colon X \rightarrow (I\oplus I) \oplus I$ be causal
maps satisfing $\IV \after f = \IV \after g$ and $\XI \after f = \XI
\after g$. Then $\pi_{1} \after f = \pi_{1} \after g$ since:
$$\begin{array}{rcccccl}
\pi_{1} \after \pi_{1} \after f
& = &
\pi_{1} \after \IV \after f
& = &
\pi_{1} \after \IV \after g
& = &
\pi_{1} \after \pi_{1} \after g
\\
\pi_{2} \after \pi_{1} \after f
& = &
\pi_{1} \after \XI \after f
& = &
\pi_{1} \after \XI \after g
& = &
\pi_{2} \after \pi_{1} \after g.
\end{array}$$

\noindent Hence it suffices to show $\pi_{2} \after f = \pi_{2} \after
g \colon X \rightarrow I$. For this we use subcausal cancellation from
Definition~\ref{def:biprodground}~\eqref{def:biprodground:canc}:
$$\begin{array}{rcl}
(\nabla \after \pi_{1} \after f) + (\pi_{2} \after f)
& = &
\nabla \after \tuple{\nabla \after \pi_{1} \after f,\pi_{2} \after f} \\
& = &
\nabla \after (\nabla\oplus\idmap[I]) \after f \\
& = &
\ground_{I} \after \nabla \after (\nabla\oplus\idmap[I]) \after f \qquad
   \mbox{since $\ground_{I}=\idmap[I]$}\\
& = &
\ground_{X},
\end{array}$$

\noindent where the last equality holds since $f$ is causal, and
causal maps are closed under composition, cotuple and coproduct.
Similarly we have $(\nabla \after \pi_{1} \after g) + (\pi_{2} \after
g)=\ground_{X}$.  By $\pi_{1} \after f = \pi_{1} \after g$ and the
cancellation in
Definition~\ref{def:biprodground}~\eqref{def:biprodground:canc} we
obtain $\pi_{2} \after f = \pi_{2} \after g$. Hence $f = g$, showing
that $\IV$ and $\XI$ are jointly monic. \QED
\end{proof}

In the remainder of this section we show how each of our four running
example effectuses can be understood as category of causal maps in a
grounded biproduct category.

\begin{example}
\label{ex:biproductSetsKlD}
Let $S$ be a semiring which is positive and cancellative, that is, it
satisfies $x+y = 0 \Rightarrow x=y=0$ and $x+y = x+z \Rightarrow y=z$.
We write $\Mlt_{S} \colon \Sets \rightarrow
\Sets$\index{N}{fun@$\Mlt$, multiset monad} for the multiset monad
with scalars from $S$. Thus, elements of $\Mlt_{S}(X)$ are finite
formal sums $\sum_{i}s_{i}\ket{x_i}$ with $s_{i}\in S$ and $x_{i}\in
X$. It is easy to see that $\Mlt_{S}$ is a monad, with unit $\eta(x) =
1\ket{x}$. It is an `additive' monad (see~\cite{CoumansJ13}), since
$\Mlt_{S}(0) \cong 1$ and $\Mlt_{S}(X+Y) \cong
\Mlt_{S}(X)\times\Mlt_{S}(Y)$. The Kleisli category $\Kl(\Mlt_{S})$
then has finite biproducts $(+, 0)$.

We claim that $\Kl(\Mlt_{S})$ is a grounded biproduct category. We
take $I = 1$ and use that $\Mlt_{S}(1) \cong S$. We thus take as map
$\ground \colon X \rightarrow 1$ in $\Kl(\Mlt_{S})$ the function $X
\rightarrow S$ given by $\ground(x) = 1 \in S$ for all $x\in X$.  We
note that each map $f$ in $\Kl(\Mlt_{S})$, of the form $f = \eta
\after g$ for $g\colon X \rightarrow Y$ in $\Sets$, is causal, since
in $\Kl(\Mlt_{S})$:
$$\begin{array}{rcl}
\big(\ground \after f\big)(x)
& = &
\big(\mu \after \Mlt_{S}(\ground) \after \eta \after g\big)(x) \\
& = &
\big(\mu \after \eta \after \ground \after g\big)(x) 
\hspace*{\arraycolsep} = \hspace*{\arraycolsep}
\ground(g(x)) 
\hspace*{\arraycolsep} = \hspace*{\arraycolsep}
1
\hspace*{\arraycolsep} = \hspace*{\arraycolsep}
\ground(x).
\end{array}$$

\noindent We now briefly check that the ground maps $\ground$ in
$\Kl(\Mlt_{S})$ satisfy the four requirements from
Definition~\ref{def:biprodground}.
\begin{enumerate}
\item The ground map $\ground \colon 1 \rightarrow 1$ in $\Kl(\Mlt_{S})$
  is $\ground(*) = 1$, which is the unit $\eta$ of the monad $\Mlt_{S}$,
  and thus the identity in $\Kl(\Mlt_{S})$.

\item The coprojection $\kappa_{1} \colon X \rightarrow X+Y$ in
  $\Kl(\Mlt_{S})$ is $\eta \after k_1$, where, for the moment, we
  write $k_1$ for the coprojection in $\Sets$. Hence it is causal, as
  noted above.

\item If $\ground \after f = 0$ in $\Kl(\Mlt_{S})$, for a Kleisli map
  $f\colon X \rightarrow Y$, then $\sum_{y} f(x)(y)\cdot \ground(y) =
  0$ for each $x\in X$.  Since $S$ is positive, we get $f(x)(y) = 0$,
  for each $y\in Y$. But then $f(x) = 0 \in \Mlt_{S}(X)$, and thus $f
  = \zero$.

\item If $f + g = f + h = \ground$, for $f, g, h\colon 1\rightarrow
  \Mlt_{S}(X)$, then we may identify $f,g,h$ with multisets in
  $\Mlt_{S}(X)$ that satisfy $f(x) + g(x) = f(x) + h(x) = 1$, for each
  $x\in X$. But then $g(x) = h(x)$ by cancellation in $S$, for each
  $x\in X$, and thus $g = h$.
\end{enumerate}

\noindent We now consider two special choices for the semiring $S$.
\begin{itemize}
\item By taking $S = \NNO$ we obtain the category $\Sets$ as the
  effectus of causal maps in $\Kl(\Mlt_{\NNO})$. Indeed, maps $f\colon
  X \rightarrow \Mlt_{\NNO}(Y)$ with $\sum_{y} f(x)(y) = 1$, for each
  $x\in X$ are determined by a unique $y\in Y$ with $f(x)(y) =
  1$. Hence $f$ corresponds to a function $X \rightarrow Y$.

\item Next we take $S = \R_{\geq 0}$, the semiring of non-negative
  real numbers. We now obtain the Kleisli category $\Kl(\Dst)$ as
  effectus of causal maps in the grounded biproduct category
  $\Kl(\Mlt_{\R_{\geq 0}})$, since maps $f\colon X \rightarrow
  \Mlt_{\R_{\geq 0}}(Y)$ with $\sum_{y} f(x)(y) = 1$, for each $x$,
  are precisely the maps $X \rightarrow \Dst(Y)$.
\end{itemize}
\end{example}

\begin{example}
\label{ex:biproductOUGvNA}
It is well-known that the category $\Ab$ of Abelian groups has finite
biproducts $(\oplus, \{0\})$, given by cartesian products. These
biproducts restrict to the category $\OAb$ of ordered Abelian groups
with positive/monotone group homomorphisms (described in
Example~\ref{ex:effectusOUG}). Let's use the \textit{ad hoc} notation
$\underline{\OUG}$ for the category with order unit groups as objects,
but with positive group homomorphisms as maps. Hence we have a full
and faithful functor $\underline{\OUG} \rightarrow \OAb$ since we do
not require that units are preserved. It is not hard to see that
$\underline{\OUG}$ also has biproducts.

We claim that $\op{\underline{\OUG}}$ is a grounded biproduct
category. It has biproducts since they are invariant under taking the
opposite. We take $I = \Z$. For an order unit group $G$ we have to
define a ground map $G \rightarrow \Z$ in $\op{\underline{\OUG}}$.  We
define this function $\ground \colon \Z \rightarrow G$ simply as
$\ground(k) = k\cdot 1 \in G$. We check the four requirements from
Definition~\ref{def:biprodground}.
\begin{enumerate}
\item The ground map $\ground \colon \Z \rightarrow \Z$ is given by
  $\ground(k) = k\cdot 1 = k$ and is thus the identity.

\item We have a commuting diagram in $\underline{\OUG}$:
$$\xymatrix@C+1pc@R-.5pc{
\Z\ar[r]^-{\ground}\ar[dr]_{\ground} & G_{1}\oplus G_{2}\ar[d]^{\pi_i}
\\
& G_{i}
}$$

\noindent since $\pi_{i}(\ground(k)) = \pi_{i}(k\cdot (1,1)) =
\pi_{1}(k\cdot 1, k\cdot 1) = k\cdot 1 = \ground(k)$.

\item If $f \colon G \rightarrow H$ in $\underline{\OUG}$ satisfies $f
  \after \ground = \zero$, then $f(1) = 0$. From this we obtain $f(x)
  = 0$ for an arbitrary $x\in G$ in the following way. Since $G$ is an
  order unit group there is an $n\in\NNO$ with $-n\cdot 1 \leq x \leq
  n\cdot 1$.  But then because $f$ is monotone:
$$\begin{array}{rcccccccccccl}
0
& = &
-n \cdot f(1)
& = &
f(-n\cdot 1)
& \leq &
f(x)
& \leq &
f(n\cdot 1)
& = &
n\cdot f(1)
& = &
0.
\end{array}$$

\noindent Hence $f = \zero$.

\item If $f + g = f + h = \ground \colon \Z \rightarrow G$, then
  $f,g,h$ can be identified with elements of $G$ satisfying $f + g = f
  + h = 1$. By subtracting $f$ on both sides we obtain $g = h$.
\end{enumerate}

\noindent It is now easy to see that $\op{\OUG}$ is the subcategory
$\Causal(\op{\underline{\OUG}})$ and is thus an effectus. Indeed, a
map $f\colon G \rightarrow H$ in $\underline{\OUG}$ with $f \after
\ground = \ground$ is unital: $f(1) = f(\ground(1)) = \ground(1) =
1$. Hence $f$ is a map in the category $\OUG$ of order unit groups.

In the same way one can prove that the opposite $\op{\OUS}$ of the
category $\OUS$ of order unit spaces is an effectus. Since there is an
equivalence of category $\OUS \simeq \EMod$, see~\cite{JacobsM12b} for
details, the opposite $\op{\EMod}$ of the category of effect modules
(over $[0,1]$) is also an effectus.

In an analogous way one can define a category $\underline{\vNA}$ of
von Neumann algebras with completely positive maps and show that its
opposite is a grounded biproduct category. The category of von Neumann
algebras with completely positive unital maps is the associated
effectus of causal maps. In particular, it forces this quantum model
to be `non-signalling', see~\cite[\S5.3]{CoeckeK15}.
\end{example}

\begin{remark}
\label{rem:CPGrounded}
As mentioned in the beginning of this section, the
$\CP^*$-categories from~\cite{CoeckeHK14} form an inspiration
for the construction in this section. A category
$\CP^{*}(\cat{C})$ is obtained from a dagger-compact category
$\cat{C}$, and forms a grounded biproduct if:
\begin{enumerate}
\item $\cat{C}$ has finite biproducts;

\item the dagger in $\cat{C}$ yields a ``positive definite inner
  product'', that is, for all $\psi \colon I \rightarrow X$, if
  $\psi^{\dagger} \after \psi = 0$, then $\psi = 0$;

\item ``uniqueness of positive resolutions'' holds: for all positive
  maps $p, q, q'$, if $p + q = \idmap = p + q'$, then $q = q'$. We
  recall that a map $p$ is called positive in a dagger-category if
  there exists $g$ such that $p = g^{\dagger} \after g$.
\end{enumerate}
\end{remark}

The main point of this section is to show that inside a grounded
biproduct category there is an effectus of causal maps. The grounded
biproduct category can then be seen as a larger, ambient category of
the effectus. One can also go in the other direction, that is, produce
an ambient grounded biproduct category from an effectus via a
(universal) totalisation construction.  This will be described
elsewhere.


\section{Kernels and images}\label{sec:kerimg}

Kernels and images (and cokernels) are well-known constructions in
(categorical) algebra. Standardly, for a map $f\colon A \rightarrow B$
its kernel $\ker(f)$ and image $\img(f)$ are understood as
\emph{maps}, of the form $\ker(f) \colon K \rightarrowtail A$ and
$\img(f) \colon B \twoheadrightarrow C$, satisfying certain universal
properties. Here we define kernels and images as
\emph{predicates},\index{S}{cokernel!-- predicate}\index{S}{kernel!--
  predicate} namely $\ker(f) \colon A \rightarrow 1+1$ and $\img(f)
\colon B \rightarrow 1+1$. Later on, in the presence of comprehension
and quotients, our kernels and images (as predicates) will give rise
to kernels and images/cokernels in the traditional sense, see in
particular Lemma~\ref{lem:partcmpr}~\eqref{lem:partcmpr:kermap} and
Lemma~\ref{lem:quot}~\eqref{lem:quot:coker}. When confusion might
occur we speak of kernel/image \emph{predicates} versus kernel/image
\emph{maps}.\index{S}{cokernel!-- map}\index{S}{kernel!-- map} But our
default meaning is: predicate.

In the context of effectuses there is an important difference between
kernels and images: kernels always exist, but the presence of images
is an additional property of an effectus.

We recall from~\eqref{diag:subst} on page~\pageref{diag:subst} the two
forms of substitution that we use, for total and partial maps, namely:
$\tbox{f}(p) = p \tafter f$ and $\pbox{g}(p) = [p, \kappa_{1}] \tafter
g = (p^{\bot} \pafter g)^{\bot}$, with their basic properties
described in Exercise~\ref{exc:subst}. The predicate $\pbox{g}(p)$ may
be read as the `weakest liberal precondition' of $p$ for the partial
computation $g$. Informally it says: if $g$ terminates, then $p$
holds.  The `strong' version $\pdiam{g}(p) = p \pafter g$ says: $g$
terminates and then $p$ holds. As usual, there is the De Morgan
relationship $\pdiam{g}(p^{\bot}) = \pbox{g}(p)^\bot$.

At this stage we know that the predicates $\Pred(X)$ on an object $X$
in an effectus form an effect module, and are in particular partially
ordered (see Exercise~\ref{exc:EA}). We know that the total
substitution map $\tbox{f}$ preserves the effect module
structure. This is different for the partial substitution map
$\pbox{g}$. So far we only know that it preserves truth, see
Exercise~\ref{exc:subst}~\eqref{exc:subst:partialtruth}. But there is
a bit more that we can say now.

\begin{lemma}
\label{lem:partialmonotone}
For a partial map $g\colon X \pto Y$ in an effectus the partial
substitution map $\pbox{g} \colon \Pred(Y) \rightarrow \Pred(X)$
preserves $\one, \owedge$ and is monotone --- where $\owedge$ is the
De Morgan dual of $\ovee$ given by $p \owedge q = (p^{\bot} \ovee
q^{\bot})^{\bot}$.
\end{lemma}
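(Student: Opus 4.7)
The plan is to leverage the De Morgan dual $\pdiam{g}$ and reduce everything to facts we already have about the PCM-enriched structure on partial homsets.

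First I recall from Exercise~\ref{exc:subst}~\eqref{exc:subst:partial} that $\pbox{g}(p) = \pdiam{g}(p^{\bot})^{\bot}$, where $\pdiam{g}(q) = q \pafter g$. Since partial substitution through $\pdiam{g}$ is simply pre-composition with $g$ in $\Par(\cat{B})$, Proposition~\ref{prop:effectusPCM}~\eqref{prop:effectusPCM:pres} immediately tells us that $\pdiam{g} \colon \Pred(Y) \rightarrow \Pred(X)$ is a PCM map: $\pdiam{g}(\zero) = \zero$ (since $\zero \pafter g = \zero$), and whenever $q \orthogonal q'$ in $\Pred(Y)$ via a bound $b \colon Y \pto 1+1$, the map $b \pafter g$ is a bound for $\pdiam{g}(q) \orthogonal \pdiam{g}(q')$, giving $\pdiam{g}(q) \ovee \pdiam{g}(q') = \pdiam{g}(q \ovee q')$.

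From this, monotonicity of $\pdiam{g}$ is automatic: if $q \leq q'$ then $q' = q \ovee r$ for some $r$ (by definition of the effect-algebra order, Exercise~\ref{exc:EA}~\eqref{exc:EA:PO}, \eqref{exc:EA:Orthogonal}), so $\pdiam{g}(q') = \pdiam{g}(q) \ovee \pdiam{g}(r) \geq \pdiam{g}(q)$. Combined with the fact that orthosupplement is order-reversing and involutive (Exercise~\ref{exc:EA}~\eqref{exc:EA:Inv}, \eqref{exc:EA:OrthoAntiMonotone}), the identity $\pbox{g}(p) = \pdiam{g}(p^{\bot})^{\bot}$ then yields monotonicity of $\pbox{g}$: from $p \leq p'$ we get $p'^{\bot} \leq p^{\bot}$, hence $\pdiam{g}(p'^{\bot}) \leq \pdiam{g}(p^{\bot})$, and finally $\pbox{g}(p) \leq \pbox{g}(p')$.

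Preservation of $\one$ is exactly Exercise~\ref{exc:subst}~\eqref{exc:subst:partialtruth}. For preservation of $\owedge$, suppose $p \owedge q$ is defined in $\Pred(Y)$, which by definition means $p^{\bot} \orthogonal q^{\bot}$. By the PCM property of $\pdiam{g}$ we then have $\pdiam{g}(p^{\bot}) \orthogonal \pdiam{g}(q^{\bot})$, so $\pbox{g}(p)^{\bot} \orthogonal \pbox{g}(q)^{\bot}$, i.e.\ $\pbox{g}(p) \owedge \pbox{g}(q)$ is defined, and
\[
\pbox{g}(p) \owedge \pbox{g}(q)
\;=\; \bigl(\pdiam{g}(p^{\bot}) \ovee \pdiam{g}(q^{\bot})\bigr)^{\bot}
\;=\; \pdiam{g}(p^{\bot} \ovee q^{\bot})^{\bot}
\;=\; \pdiam{g}\bigl((p \owedge q)^{\bot}\bigr)^{\bot}
\;=\; \pbox{g}(p \owedge q).
\]

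There is no real obstacle here: once one writes $\pbox{g}$ as $(-)^{\bot} \circ \pdiam{g} \circ (-)^{\bot}$, the PCM-enrichment from Proposition~\ref{prop:effectusPCM} and the standard effect-algebra facts from Exercise~\ref{exc:EA} do all the work. The only thing to be careful about is that $\owedge$ is a partial operation, so the statement ``preserves $\owedge$'' has to be read as: $\pbox{g}(p) \owedge \pbox{g}(q)$ is defined whenever $p \owedge q$ is, and the two agree — which is precisely what the computation above establishes.
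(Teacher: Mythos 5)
Your proof is correct and follows essentially the same route as the paper's: both reduce everything to the fact that pre-composition $(-)\pafter g$ preserves $\ovee$ (Proposition~\ref{prop:effectusPCM}) and then push that through the orthosupplement via $\pbox{g}(p) = (p^{\bot}\pafter g)^{\bot}$ to get preservation of $\owedge$. The only cosmetic difference is that you derive monotonicity directly from $\pdiam{g}$ being a PCM map, whereas the paper deduces it from preservation of $\owedge$; these are interchangeable.
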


\begin{proof}
We use that partial pre-composition $(-) \pafter g$ preserves $\ovee$,
see
Proposition~\ref{prop:effectusPCM}~\eqref{prop:effectusPCM:pres}. If
$p^{\bot} \orthogonal q^{\bot}$, then:
$$\begin{array}{rcl}
\pbox{g}(p \owedge q)
\hspace*{\arraycolsep}=\hspace*{\arraycolsep}
\big((p\owedge q)^{\bot} \pafter g\big)^{\bot}
& = &
\big((p^{\bot} \ovee q^{\bot}) \pafter g\big)^{\bot} \\
& = &
\big((p^{\bot} \pafter g) \ovee (q^{\bot} \pafter g)\big)^{\bot} \\
& = &
(p^{\bot} \pafter g)^{\bot} \owedge (q^{\bot} \pafter g)^{\bot} \\
& = &
\pbox{g}(p) \owedge \pbox{g}(q).
\end{array}$$

\noindent As a consequence $\pbox{g}$ is monotone, since $p \leq q$
iff $q^{\bot} \leq p^{\bot}$ iff $q^{\bot} \ovee r^{\bot} = p^{\bot}$ for some
$r$, that is, $q \owedge r = p$. \QED

\auxproof{
Explicitly, if $p \leq q$, then $q^{\bot} \leq p^{\bot}$ so that $q^{\bot} \ovee r
= p^{\bot}$ for some $r$. Hence:
$$\begin{array}{rcccl}
(q^{\bot} \pafter g) \ovee (r \pafter g)
& = &
(q^{\bot} \ovee r) \pafter g
& = &
p^{\bot} \pafter g.
\end{array}$$

\noindent This gives $q^{\bot} \pafter g \leq p^{\bot} \pafter g$, and
thus:
$$\begin{array}{rcccccl}
\pbox{g}(p)
& = &
\big(p^{\bot} \pafter g\big)^{\bot}
& \leq &
\big(q^{\bot} \pafter g\big)^{\bot}
& = &
\pbox{g}(q).
\end{array}$$
}
\end{proof}

\subsection{Kernels}\label{subsec:ker}

In linear algebra the kernel of a (linear) map $f\colon A \rightarrow
B$ is the subspace $\setin{a}{A}{f(a) = 0}$ of those elements that get
mapped to $0$. This also works for partial maps $X \tto Y+1$ where we
intuitively understand the kernel as capturing those elements of $X$
that get sent to $1$ in the outcome $Y+1$.

\begin{definition}
\label{def:ker}
Let $\cat{B}$ be an effectus. The \emph{kernel}\index{S}{kernel} of a
partial map $f\colon X \pto Y$ is the predicate $\ker(f) \colon X \tto
1+1$ given by:
\begin{equation}
\label{eqn:ker}
\begin{array}{rcl}
\ker(f)
& \defeq &
\pbox{f}(\zero) \\
& = &
[\zero, \kappa_{1}] \tafter f
\hspace*{\arraycolsep}=\hspace*{\arraycolsep}
[\kappa_{2}, \kappa_{1}] \tafter (\bang\tplus\idmap) \tafter f
\hspace*{\arraycolsep}=\hspace*{\arraycolsep}
\big((\bang\tplus\idmap) \tafter f\big)^{\bot}.
\end{array}\index{N}{$\ker(f)$, kernel of a map $f$}
\end{equation}

\noindent We call $f$ an \emph{internal mono}\index{S}{internal!--
  mono} if $\ker(f) = \zero$.

Sometimes it is easier to work with the orthosupplement of a kernel,
and so we introduce special notation $\kerbot$\index{N}{$\kerbot(f)$,
  kernel-supplement of a map $f$} for it:
$$\begin{array}{rcl}
\kerbot(f)
& \defeq &
\ker(f)^{\bot} \\
& = &
(\bang\tplus\idmap) \tafter f
\hspace*{\arraycolsep}=\hspace*{\arraycolsep}
\klin{\bang} \pafter f
\hspace*{\arraycolsep}=\hspace*{\arraycolsep}
[\kappa_{1} \tafter \bang, \kappa_{2}] \tafter f
\hspace*{\arraycolsep}=\hspace*{\arraycolsep}
\one \pafter f.
\end{array}$$
\end{definition}

This $\kerbot(f)$ is sometimes called the `domain predicate' since it
captures the `domain of $f$', where $f$ is defined, that is,
`non-undefined'. Here we call it the kernel-supplement. We shortly see
that internal monos are not related to `external' monos, in the
underling category, but they are part of a factorisation system, see
Proposition~\ref{prop:quotfactorisation}. Similarly, the internal epis
that will be defined later on in this section form part of a
factorisation system, see
Proposition~\ref{prop:cmprquotimg}~\eqref{prop:cmprquotimg:factsyst}.

We review the situation for our running examples.

\begin{example}
\label{ex:ker}
In the effectus $\Sets$ the kernel of a partial function $f\colon X
\to Y+1$ the partial substitution map $\pbox{f} \colon \Pow(Y)
\rightarrow \Pow(X)$ is given by:
\begin{equation}
\label{eqn:partsubstSets}
\begin{array}{rcl}
\pbox{f}(V)
& = &
\setin{x}{X}{\allin{y}{Y}{f(x) = \kappa_{1}y \Rightarrow y\in V}}.
\end{array}
\end{equation}

\noindent Hence we obtain as kernel predicate:
$$\begin{array}{rcl}
\ker(f)
\hspace*{\arraycolsep}=\hspace*{\arraycolsep}
\pbox{f}(\zero)
& = &
\setin{x}{X}{\allin{y}{Y}{f(x) = \kappa_{1}y \Rightarrow y\in \emptyset}} \\
& = &
\setin{x}{X}{f(x) = *},
\end{array}$$

\noindent where we write $*$ for the sole element of the final
set/object $1$.

In the effectus $\Kl(\Dst)$ for discrete probability a partial map
$f\colon X \to Y+1$ may be described either as a function $f\colon X
\rightarrow \Dst(Y+1)$ or as $f\colon X \rightarrow \sDst(Y)$, see
Example~\ref{ex:effectusKlD}.  We consider both cases.
\begin{enumerate}
\item For $f\colon X \rightarrow \Dst(Y+1)$ the partial substitution
  map $\pbox{f} \colon [0,1]^{Y} \rightarrow [0,1]^{X}$ is given
  by:
\begin{equation}
\label{eqn:partsubstKlD}
\begin{array}{rcl}
\pbox{f}(q)(x)
& = &
\sum_{y\in Y} f(x)(y)\cdot q(y) + f(x)(*).
\end{array}
\end{equation}

\noindent Hence the $\ker(f) \in [0,1]^{X}$ is the fuzzy predicate:
$$\begin{array}{rcccccl}
\ker(f)(x)
& = &
\pbox{f}(\zero)(x)
& = &
f(x)(*)
& = &
1 - \sum_{y\in Y}f(x)(y).
\end{array}$$

\noindent The kernel thus assigns to $x\in X$ the probability that
$f(x)$ is undefined.

\item For a function $f\colon X \rightarrow \sDst(Y)$ we have:
\begin{equation}
\label{eqn:partsubstKlsD}
\begin{array}{rcl}
\pbox{f}(q)(x)
& = &
\sum_{y\in Y} f(x)(y)\cdot q(y) + \big(1 - \sum_{y}f(x)(y)\big).
\end{array}
\end{equation}

\noindent It gives essentially the same description of the kernel:
$$\begin{array}{rclcrcl}
\ker(f)(x)
& = &
1 - \sum_{y}f(x)(y)
& \quad\mbox{so that}\quad &
\kerbot(f)(x)
& = &
\sum_{y}f(x)(y).
\end{array}$$
\end{enumerate}

In the effectus $\op{\OUG}$ of order unit groups the kernel of a
partial map $f\colon G \pto H$ can also be described in two ways,
via the correspondences from Example~\ref{ex:effectusOUG}. We choose
to understand $f$ as a positive subunital map $H \rightarrow G$. Then,
for an effect $e\in [0,1]_{H}$ we have:
\begin{equation}
\label{eqn:partsubstOUG}
\begin{array}{rcl}
\pbox{f}(e)
& = &
f(e) + f(1)^{\bot}.
\end{array}
\end{equation}

\noindent As a result, if identify $e\in [0,1]_{H}$ with the
corresponding map $e\colon H \rightarrow \Z\oplus \Z$, then:
\begin{equation}
\label{eqn:compappOUG}
\begin{array}{rcll} 
e \after f
& = &
\pbox{f}(e^{\bot})^{\bot} & 
   \mbox{by Exercise~\ref{exc:subst}~\eqref{exc:subst:partial}} \\
& = &
1 - f(e^{\bot}) - f(1)^{\bot} & \mbox{by~\eqref{eqn:partsubstOUG}} \\
& = &
1 - f(1) + f(e) - f(1)^{\bot} \qquad \\
& = &
f(1)^{\bot} + f(e) - f(1)^{\bot} \\
& = &
f(e).
\end{array}\hspace*{3em}
\end{equation}

\noindent Moreover, we simply have:
$$\begin{array}{rclcrcl}
\ker(f)
& = &
f(1)^{\bot}
& \quad\mbox{and so}\quad &
\kerbot(f)
& = &
f(1).
\end{array}$$

\noindent The same descriptions apply in the effectus $\op{\vNA}$ of
von Neumann algebras.
\end{example}

We continue with some basic properties of kernels.

\begin{lemma}
\label{lem:ker}
The kernel operation $\ker(-)$ in an effectus $\cat{B}$ satisfies:
\begin{enumerate}
\item \label{lem:ker:pred} $\ker(p) = p^{\bot}$ for a predicate
  $p\colon X \pto 1$, and so $\kerbot(p) = p$;

\item \label{lem:ker:zero} $\ker(f) = \one$ iff $\kerbot(f) = \zero$
  iff $f = \zero$;

\item \label{lem:ker:proj} $\ker(\rhd_{1}) = [\zero,\one]$ and
  $\ker(\rhd_{2}) = [\one,\zero]$, for the partial projections
  $\rhd_i$ defined in~\eqref{diag:partprojtot};

\item \label{lem:ker:mono} $f$ is internally monic, that is $\ker(f) =
  \zero$, iff $f$ is total;




\item \label{lem:ker:comp} $\ker(g\pafter f) = \pbox{f}\big(\ker(g)\big)$

\item \label{lem:ker:ord} $\ker(g \pafter f) \geq \ker(f)$;

\item \label{lem:ker:tot} $\ker(\klin{h} \pafter f) = \ker(f)$;

\item \label{lem:ker:monocat} the internally monic maps form a
  subcategory of $\Par(\cat{B})$;

\item \label{lem:ker:compintmonic} if $g\pafter f$ and $g$ are
  internally monic, then so is $f$;

\item \label{lem:ker:squareneg} $\pbox{f}(p^{\bot}) =
  \pbox{f}(p)^{\bot} \ovee \ker(f)$.
\end{enumerate}
\end{lemma}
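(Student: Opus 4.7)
The plan is to reduce everything to the defining identity $\ker(f) = \pbox{f}(\zero)$ and then quote the functorial/order-theoretic properties of the partial-substitution operator $\pbox{(-)}$ collected in Exercise~\ref{exc:subst} and Lemma~\ref{lem:partialmonotone}, together with Lemma~\ref{lem:zero}. Items \eqref{lem:ker:pred}--\eqref{lem:ker:mono} are direct unfoldings. For \eqref{lem:ker:pred}, if $p\colon X \pto 1$, then viewed as $X \tto 1+1$, one has $\bang\tplus\idmap = \idmap[1+1]$, so $\kerbot(p) = (\bang\tplus\idmap) \tafter p = p$ and hence $\ker(p) = p^\bot$. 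For \eqref{lem:ker:zero}, since $\kerbot(f) = \one \pafter f$, apply Lemma~\ref{lem:zero}: $\kerbot(f) = \zero$ iff $f = \zero$, and the chain $\ker(f)=\one \Leftrightarrow \kerbot(f)=\zero$ is immediate from the orthosupplement. For \eqref{lem:ker:proj} I will compute $\kerbot(\rhd_1) = (\bang\tplus\idmap) \tafter (\idmap\tplus\bang) = \bang\tplus\bang = [\kappa_1 \tafter \bang, \kappa_2 \tafter \bang] = [\one,\zero]$, so $\ker(\rhd_1) = [\zero,\one]$; the argument for $\rhd_2$ is symmetric. Item \eqref{lem:ker:mono} follows from \eqref{lem:ker:zero} and the other implication of Lemma~\ref{lem:zero}: $\kerbot(f) = \one \pafter f = \one$ iff $f$ is total.

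For \eqref{lem:ker:comp} I will just apply functoriality of $\pbox{(-)}$ from Exercise~\ref{exc:subst}~\eqref{exc:subst:partialfun}:
$$\ker(g \pafter f) = \pbox{(g\pafter f)}(\zero) = \pbox{f}\big(\pbox{g}(\zero)\big) = \pbox{f}(\ker(g)).$$
Then \eqref{lem:ker:ord} follows by monotonicity of $\pbox{f}$ (Lemma~\ref{lem:partialmonotone}) applied to $\zero \leq \ker(g)$: $\ker(f) = \pbox{f}(\zero) \leq \pbox{f}(\ker(g)) = \ker(g \pafter f)$. For \eqref{lem:ker:tot} I will use \eqref{lem:ker:comp} together with Exercise~\ref{exc:subst}~\eqref{exc:subst:totalpartial} and \eqref{exc:subst:totalpres}, which give $\ker(\klin{h}) = \pbox{\klin{h}}(\zero) = \tbox{h}(\zero) = \zero$, so $\ker(\klin{h} \pafter f) = \pbox{f}(\zero) = \ker(f)$.

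The subcategory statement \eqref{lem:ker:monocat} is now routine: identities $\klin{\idmap} = \kappa_1$ are total, hence internally monic by \eqref{lem:ker:mono}; and if $\ker(f) = \ker(g) = \zero$, then by \eqref{lem:ker:comp}, $\ker(g \pafter f) = \pbox{f}(\ker(g)) = \pbox{f}(\zero) = \ker(f) = \zero$. For \eqref{lem:ker:compintmonic} I will use \eqref{lem:ker:ord}: $\ker(f) \leq \ker(g \pafter f) = \zero$.

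The main calculation will be \eqref{lem:ker:squareneg}, the only non-routine point. I will work through the diamond reformulation from Exercise~\ref{exc:subst}~\eqref{exc:subst:partial}, namely $\pbox{f}(p) = \pdiam{f}(p^\bot)^\bot$ where $\pdiam{f}(q) = q \pafter f$. Since post-composition $(-) \pafter f$ preserves $\ovee$ by Proposition~\ref{prop:effectusPCM}~\eqref{prop:effectusPCM:pres}, the orthogonal predicates $p$ and $p^\bot$ give orthogonal images with
$$\pdiam{f}(p) \ovee \pdiam{f}(p^\bot) = (p \ovee p^\bot) \pafter f = \one \pafter f = \kerbot(f) = \ker(f)^\bot.$$
Adding $\ker(f)$ on both sides (the sum is defined because the left-hand side is already $\ker(f)^\bot$) yields $\pdiam{f}(p) \ovee \pdiam{f}(p^\bot) \ovee \ker(f) = \one$, so by uniqueness of orthosupplements (Definition~\ref{def:EA}~\eqref{def:EA:bot}) we get $\pdiam{f}(p)^\bot = \pdiam{f}(p^\bot) \ovee \ker(f)$. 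Translating back through $\pbox{f}(p^\bot) = \pdiam{f}(p)^\bot$ and $\pdiam{f}(p^\bot) = \pbox{f}(p)^\bot$ delivers the desired identity. The main obstacle is making sure the orthogonality and associativity of the PCM sum are invoked correctly here; everything else is bookkeeping.
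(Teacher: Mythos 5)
Your proposal is correct and follows essentially the same route as the paper: each item is unfolded from $\ker(f)=\pbox{f}(\zero)$ using Exercise~\ref{exc:subst}, Lemma~\ref{lem:partialmonotone} and Lemma~\ref{lem:zero}, and the key item~\eqref{lem:ker:squareneg} is established by the same ``sum to $\one$ and invoke uniqueness of orthosupplements'' argument the paper uses. The only (harmless) deviations are that you derive \eqref{lem:ker:tot} from \eqref{lem:ker:comp} rather than by direct computation, and \eqref{lem:ker:compintmonic} from the order property \eqref{lem:ker:ord} rather than from \eqref{lem:ker:tot}, which in fact gives the slightly stronger statement that $g\pafter f$ internally monic already forces $f$ internally monic.
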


\begin{proof}
This involves some elementary reasoning in an effectus.
\begin{enumerate}
\item Clearly, $\ker(p) = [\zero, \kappa_{1}] \tafter p = [\kappa_{2}
  \tafter \bang, \kappa_{2}] \tafter p = [\kappa_{2}, \kappa_{1}] \tafter
  p = p^{\bot}$.

\item First, $\ker(\zero) = \ker(\kappa_{2} \tafter \bang) = [\zero,
  \kappa_{1}] \tafter \kappa_{2} \tafter \bang = \kappa_{1} \tafter
  \bang = \one$. Next, if $\one = \ker(f)$, then $(\bang\tplus\idmap)
  \tafter f = \kerbot(f) = \zero$, so that $f = \zero$ by
  Lemma~\ref{lem:zero}.

\item We simply calculate:
$$\begin{array}{rcl}
\ker(\rhd_{1})
& = &
[\zero, \kappa_{1}] \tafter (\idmap\tplus\bang) 
\hspace*{\arraycolsep}=\hspace*{\arraycolsep}
[\zero, \kappa_{1} \tafter \bang] 
\hspace*{\arraycolsep}=\hspace*{\arraycolsep}
[\zero,\one]
\\
\ker(\rhd_{2})
& = &
[\zero, \kappa_{1}] \tafter [\kappa_{2} \tafter \bang, \kappa_{1}] 
\hspace*{\arraycolsep}=\hspace*{\arraycolsep}
[\kappa_{1} \tafter \bang, \zero] 
\hspace*{\arraycolsep}=\hspace*{\arraycolsep}
[\one, \zero].
\end{array}$$

\item On the one hand the kernel of a total map is $\zero$ since:
$$\begin{array}{rcccccccl}
\ker(\klin{g})
& = &
[\kappa_{2} \tafter \bang, \kappa_{1}] \tafter \kappa_{1} \tafter g
& = &
\kappa_{2} \tafter \bang \tafter g
& = &
\kappa_{2} \tafter \bang
& = &
\zero.
\end{array}$$

\noindent On the other hand, if $\ker(f) = \zero$, then $\one \pafter
f = \kerbot(f) = \one$ so that we $f$ is total by
Lemma~\ref{lem:zero}.

\item We have by Exercise~\ref{exc:subst}~\eqref{exc:subst:partialfun}:
$$\begin{array}{rcccccl}
\ker(g \pafter f)
& = &
\pbox{(g \pafter f)}(\zero)
& = &
\pbox{f}\big(\pbox{g}(\zero)\big)
& = &
\pbox{f}\big(\ker(g)\big).
\end{array}$$

\item From $\zero \leq \ker(g)$ we get by Lemma~\ref{lem:partialmonotone}
and point~\eqref{lem:ker:comp}:
$$\begin{array}{rcccccl}
\ker(f)
& = &
\pbox{f}(\zero) 
& \leq &
\pbox{f}(\ker(g))
& = &
\ker(g \pafter f).
\end{array}$$

\item By a straightforward computation:
$$\begin{array}{rcccccccl}
\ker(\klin{h} \pafter f)
& = &
[\zero, \kappa_{1}] \tafter (h\tplus\idmap) \tafter f
& = &
[\zero, \kappa_{1}] \tafter f
& = &
\pbox{f}(\zero)
& = &
\ker(f).
\end{array}$$

\item Internally monic maps are total by point~\eqref{lem:ker:mono},
  and these total map form a subcategory by Lemma~\ref{lem:zero}.

\item Let $g\pafter f$ and $g$ be internally monic. Write $g =
  \klin{h}$. Then $f$ is internally monic since by
  point~\eqref{lem:ker:tot},
$$\begin{array}{rcccccl}
\ker(f)
& = &
\ker(\klin{h} \pafter f)
& = &
\ker(g \pafter f)
& = &
\zero.
\end{array}$$

\item Let $f\colon X \pto Y$ be a partial map and $p$ a predicate
on $Y$. We first observe that:
$$\begin{array}{rcl}
\one
\hspace*{\arraycolsep}=\hspace*{\arraycolsep}
\big(\one \pafter f\big) \ovee \big(\one \pafter f\big)^{\bot} 
& = &
\big((p \ovee p^{\bot}) \pafter f\big) \ovee \ker(f) \\
& = &
\big(p \pafter f\big) \ovee \big(p^{\bot} \pafter f\big) \ovee \ker(f) \\
& = &
\pbox{f}(p^{\bot})^{\bot} \ovee \pbox{f}(p)^{\bot} \ovee \ker(f).
\end{array}$$

\noindent This last equation is based on
Exercise~\ref{exc:subst}~\eqref{exc:subst:partial}. By uniqueness of
orthosupplements we obtain $\pbox{f}(p^{\bot}) = \pbox{f}(p)^{\bot}
\ovee \ker(f)$. \QED

\auxproof{
Old proof: Let $f\colon X \pto Y$ be a partial map and $q$ a predicate
on $Y$. We prove $\pbox{f}(p^{\bot}) = \pbox{f}(p)^{\bot} \ovee
\ker(f)$ by reasoning in $\Pred(Y+1)$ and using that the total
substitution map $\tbox{f}$ preserves the effect algebra structure.
$$\begin{array}[b]{rcl}
\pbox{f}(p^{\bot})
\hspace*{\arraycolsep}=\hspace*{\arraycolsep}
[p^{\bot}, \kappa_{1}] \tafter f
\hspace*{\arraycolsep}=\hspace*{\arraycolsep}
\tbox{f}\big([p^{\bot}, \one]\big)
& = &
\tbox{f}\big([p^{\bot}, \zero] \ovee [\zero, \one]\big) \\
& = &
\tbox{f}\big([p, \one]^{\bot} \ovee [\zero, \one]\big) \\
& = &
\tbox{f}\big([p, \one]\big)^{\bot} \ovee \tbox{f}\big([\zero, \one]\big) \\
& = &
\pbox{f}(p)^{\bot} \ovee \ker(f).
\end{array}\eqno{\QEDbox}$$
}
\end{enumerate}
\end{proof}

We can now see that `internally monic' does not imply `monic'.  Indeed
`internally monic' means `total', and not every total function in
$\Par(\Sets)$ is also monic in $\Par(\Sets)$, like the truth function
$\{0,1\} \tto 1+1$.

More generally, it is not hard to see that a partial map $f\colon X
\tto Y+1$ is monic in the category $\Par(\cat{B})$ of partial maps of
an effectus $\cat{B}$ if and only if its Kleisli extension
$[f,\kappa_{2}] \colon X+1 \rightarrow Y+1$ is monic in $\cat{B}$.

\auxproof{
\begin{itemize}
\item Let $f\colon X \tto Y+1$ be monic in $\Par(\cat{B})$, and let
  $a, b \colon Y \tto X+1$ satisfy $[f,\kappa_{2}] \after a = f\pafter
  a = f \pafter b = [f,\kappa_{2}] \after b$. Then $a=b$.

\item Conversely, let $[f,\kappa_{2}] \colon X+1 \rightarrow Y+1$ be
  monic in $\cat{B}$, and let $a, b\colon Y\pto X$ satisfy $f \pafter
  a = f \pafter b$, that is, $[f,\kappa_{2}] \after a = [f,\kappa_{2}]
  \after b$. Then $a=b$.
\end{itemize}
}

But also: not all `external monos' are `internal monos'. For instance
the partial function $f\colon X \rightarrow \Dst(X+1)$ given by $f(x)
= \frac{3}{4}\ket{x} + \frac{1}{4}\ket{\!*\!}$ has $\ker(f)(x) = f(x)(*) =
\frac{1}{4}$, so $\ker(f) \neq 0$ in $[0,1]^{X}$, and thus $f$ is not
internally monic.  Still $f$ is an external mono in $\Par(\Kl(\Dst))$:
if $\omega,\omega' \in \Dst(X+1)$ satisfy $\tstat{f}(\omega) =
\tstat{f}(\omega')$, then for each $x\in X$ we have $\omega(x) =
\frac{4}{3}\cdot \tstat{f}(\omega)(x) = \frac{4}{3}\cdot
\tstat{f}(\omega')(x) = \omega'(x)$, and $\omega(*) = 4\cdot
\tstat{f}(\omega)(*) = 4 \cdot \tstat{f}(\omega')(*) =
\omega'(*)$.


The following technical but important result about kernel-supplements,
and the subsequent proposition, are due to~\cite{Cho15}.

\begin{lemma}
\label{lem:kerbot}
Let $\cat{B}$ be an effectus, with objects $X,Y\in\cat{B}$.  The
kernel-supplement map
$$\xymatrix@C+1pc{
\Par(\cat{B})\big(X,Y)\ar[r]^-{\kerbot} & \Pred(X)
}$$

\noindent is a map of PCMs, that preserves and reflects both $0$ and
orthogonality $\orthogonal$.

Moreover, this kernel-supplement map is `dinatural' in the situation:
$$\xymatrix@C+1pc{
\op{\cat{B}}\times\cat{B}
   \rrtwocell^{\Par(\cat{B})(-,-)}_{\Pred_{1} = \Pred \after \pi_{1}}{\quad\;\kerbot}
   & & \Sets
}$$

\noindent Indeed, for each map $f\colon X \tto Y$ in $\cat{B}$ we
have a commuting diagram:
\begin{equation}
\label{diag:kerbotdinatural}
\hspace*{-1em}\vcenter{\xymatrix@C-1.5pc@R-.5pc{
& {\Par(\cat{B})(X,X)\ar[rr]^-{\kerbot_{X}}} & &
   \Pred_{1}(X,X)\rlap{$\; = \Pred(X)$}
      \ar@{=}[dr]^(0.6){\qquad\Pred_{1}(X,f) = \idmap}
\\
\Par(\cat{B})(Y,X)\ar[ur]^{\Par(\cat{B})(f,X) = (-) \tafter f\qquad}
   \ar[dr]_{\Par(\cat{B})(Y,f) = (f\tplus\idmap) \tafter (-)\qquad\quad} & & & &
\llap{$\Pred(X)=\;$}\Pred_{1}(X,Y)
\\
& \Par(\cat{B})(Y,Y)\ar[rr]_-{\kerbot_{Y}} & &
   \Pred_{1}(Y,Y)\rlap{$\;=\Pred(Y)$}
      \ar[ur]_(0.6){\qquad\Pred_{1}(f,Y) = \tbox{f}}
}}
\end{equation}
\end{lemma}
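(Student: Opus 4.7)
The easy parts of the statement follow quickly from the definition $\kerbot(f) = \one \pafter f$. Preservation of $\zero$ and $\ovee$ (hence of orthogonality, and hence the PCM-map property) is immediate from Proposition~\ref{prop:effectusPCM}~\eqref{prop:effectusPCM:pres}, since $\kerbot$ is just post-composition with $\one$. Reflection of $\zero$ is Lemma~\ref{lem:zero}. Dinaturality of~\eqref{diag:kerbotdinatural} reduces, for $h \colon Y \pto X$, to the equality
$$\one_{X} \pafter h \pafter \klin{f} \;=\; \one_{Y} \pafter \klin{f} \pafter h \pafter \klin{f},$$
which follows from the primitive identity $\one_{Y} \pafter \klin{f} = \one_{X}$; the latter is immediate on unfolding, since $\bang_{Y} \tafter f = \bang_{X}$ by finality of $1$.

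The substantive claim is reflection of orthogonality. Given $f, g \colon X \pto Y$ with $\kerbot(f) \orthogonal \kerbot(g)$ witnessed by a bound $\beta_{0} \colon X \pto 1+1$ (so $\rhd_{1} \pafter \beta_{0} = \kerbot(f)$ and $\rhd_{2} \pafter \beta_{0} = \kerbot(g)$), I plan to produce a bound $\beta \colon X \pto Y+Y$ for $(f,g)$ by two successive applications of the right-hand partial pullback of Lemma~\ref{lem:effectuspb}~\eqref{lem:effectuspb:par}, specialised at $\bang \colon Y \tto 1$ to give the pullback squares
$$\xymatrix@R-1pc{
Y+1 \ar[r]|-{\pafter}^-{\klin{\bang}\pplus\idmap} \ar[d]|-{\pafter}_-{\rhd_{1}} &
   1+1 \ar[d]|-{\pafter}^-{\rhd_{1}}
& &
Y+Y \ar[r]|-{\pafter}^-{\klin{\bang}\pplus\idmap} \ar[d]|-{\pafter}_-{\rhd_{1}} &
   1+Y \ar[d]|-{\pafter}^-{\rhd_{1}}
\\
Y \ar[r]|-{\pafter}_-{\klin{\bang}} & 1
& &
Y \ar[r]|-{\pafter}_-{\klin{\bang}} & 1
}$$
together with the canonical swap isomorphisms $\sigma \colon 1+1 \congrightarrow 1+1$ and $\sigma' \colon Y+1 \congrightarrow 1+Y$, which by direct computation satisfy $\rhd_{1} \pafter \sigma = \rhd_{2}$, $\rhd_{2} \pafter \sigma = \rhd_{1}$, and similarly for $\sigma'$.

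Feeding the first pullback with the pair $(g, \sigma \pafter \beta_{0})$—compatible since $\klin{\bang} \pafter g = \kerbot(g) = \rhd_{2} \pafter \beta_{0} = \rhd_{1} \pafter (\sigma \pafter \beta_{0})$—yields a unique $\gamma \colon X \pto Y+1$ with $\rhd_{1} \pafter \gamma = g$ and $(\klin{\bang}\pplus\idmap) \pafter \gamma = \sigma \pafter \beta_{0}$; by naturality~\eqref{eqn:partprojnat} of the partial projections, we also get $\rhd_{2} \pafter \gamma = \rhd_{2} \pafter (\sigma \pafter \beta_{0}) = \rhd_{1} \pafter \beta_{0} = \kerbot(f)$. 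Feeding the second pullback with the pair $(f, \sigma' \pafter \gamma)$—compatible because $\klin{\bang} \pafter f = \kerbot(f) = \rhd_{2} \pafter \gamma = \rhd_{1} \pafter (\sigma' \pafter \gamma)$—then yields a unique $\beta \colon X \pto Y+Y$ with $\rhd_{1} \pafter \beta = f$; once more by naturality, $\rhd_{2} \pafter \beta = \rhd_{2} \pafter (\sigma' \pafter \gamma) = \rhd_{1} \pafter \gamma = g$. Hence $\beta$ is the desired bound, and $f \orthogonal g$. The main obstacle is to recognise that two pullback applications (with the swaps interleaved to line up projections) suffice to transport the one-dimensional witness $\beta_{0}$ for $\kerbot(f) \orthogonal \kerbot(g)$ into the two-dimensional bound for $f$ and $g$ themselves; once this construction is identified, everything else is bookkeeping using the naturality of $\rhd_{i}$.
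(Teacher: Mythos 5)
Your proof is correct and follows essentially the same route as the paper: the easy parts are dispatched exactly as in the text, and reflection of orthogonality is obtained by lifting the bound $b\colon X\pto 1+1$ through two successive applications of the partial-projection pullback of Lemma~\ref{lem:effectuspb}~\eqref{lem:effectuspb:par}, first to $Y+1$ and then to $Y+Y$. The only (cosmetic) difference is that the paper uses the $\rhd_{2}$-variant of the pullback in the second step, which makes your explicit swap isomorphisms $\sigma$, $\sigma'$ unnecessary.
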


\begin{proof}
First, the map $\kerbot$ preserves and reflects $\zero$, since
$\kerbot(f) = \zero$ iff $f=\zero$ by
Lemma~\ref{lem:ker}~\eqref{lem:ker:zero} --- or equivalently,
Lemma~\ref{lem:zero}. It also preserves $\ovee$ by
Proposition~\ref{prop:effectusPCM}~\eqref{prop:effectusPCM:pres}: if
parallel partial maps $f,g$ are orthogonal, then:
$$\begin{array}{rcccccl}
\kerbot(f \ovee g)
& = &
\one \pafter (f\ovee g)
& = &
(\one \pafter f) \ovee (\one \pafter g)
& = &
\kerbot(f) \ovee \kerbot(g).
\end{array}$$

\noindent The main challenge is to prove that $\kerbot(f) \orthogonal
\kerbot(g)$ implies $f \orthogonal g$. So let $b\colon X \pto 1+1$ be
a bound for $\kerbot(f)$ and $\kerbot(g)$. Then $\rhd_{1} \pafter b =
\kerbot(f) = \one \pafter f = \klin{\bang} \pafter f$ and similarly
$\rhd_{2} \pafter b = \klin{\bang} \pafter g$. We use the projection
pullbacks in $\Par(\cat{B})$ from~\eqref{diag:effectuspartprojpb} in
Lemma~\ref{lem:effectuspb}~\eqref{lem:effectuspb:par} in 
two steps below, first to get a map $c$, and then $d$.
$$\xymatrix{
X\ar@/_2ex/[ddr]|-{\pafter}_{f}\ar@/^2ex/[drr]|-{\pafter}^-{b}
   \ar@{..>}[dr]|-{\pafter}^-{c} & &
&
X\ar@/_2ex/[ddr]|-{\pafter}_{g}\ar@/^2ex/[drr]|-{\pafter}^-{c}
   \ar@{..>}[dr]|-{\pafter}^-{d}
\\
& Y+1\ar[r]|-{\pafter}^-{\klin{\bang}\pplus\idmap}
   \ar[d]|-{\pafter}_{\rhd_{1}}\pullback & 
   1+1\ar[d]|-{\pafter}^{\rhd_{1}}
&
& Y+Y\ar[r]|-{\pafter}^-{\idmap\pplus\klin{\bang}}
   \ar[d]|-{\pafter}_{\rhd_{2}}\pullback & 
   Y+1\ar[d]|-{\pafter}^{\rhd_{2}}
\\
& Y\ar[r]|-{\pafter}_-{\klin{\bang}} & 1
&
& Y\ar[r]|-{\pafter}_-{\klin{\bang}} & 1
}$$

\noindent The outer diagram on the right commutes since:
$$\begin{array}{rcccccl}
\rhd_{2} \pafter c
& = &
\rhd_{2} \pafter (\klin{\bang}\pplus\idmap) \pafter c
& = &
\rhd_{2} \pafter b
& = &
\klin{\bang} \pafter g.
\end{array}$$

\noindent This $d$ is a bound for $f,g$, showing $f \orthogonal g$. By
construction $\rhd_{2} \pafter d = g$, and:
$$\begin{array}{rcccccl}
\rhd_{1} \pafter d
& = &
\rhd_{1} \pafter (\idmap\pplus\klin{\bang}) \pafter d
& = &
\rhd_{1} \pafter c
& = &
f.
\end{array}$$

Finally we check commutation of the dinaturality\index{S}{dinaturality}
diagram~\eqref{diag:kerbotdinatural}: for a total map $f\colon X \tto
Y$ and a partial map $g \in \Par(\cat{B})(Y, X)$ we have:
$$\begin{array}[b]{rcl}
\big(\tbox{f} \after \kerbot_{Y} \after ((f\tplus\idmap) \tafter -)\big)(g)
& = &
\tbox{f}\big(\kerbot_{Y}((f\tplus\idmap) \tafter g)\big) \\
& = &
(\bang\tplus\idmap) \tafter (f\tplus\idmap) \tafter g \tafter f \\
& = &
(\bang\tplus\idmap) \tafter g \tafter f \\
& = &
\kerbot_{X}(g \tafter f) \\
& = &
\big(\kerbot_{X} \after (- \tafter f)\big)(g).
\end{array}\eqno{\QEDbox}$$
\end{proof}

This reflection of orthogonality is a quite powerful property. First
we use it to say more about the PCM-structure $(\ovee, \zero)$ on
homsets of partial maps from Proposition~\ref{prop:effectusPCM}.

\begin{proposition}
\label{prop:effectusPCMorder}
Let $\cat{B}$ be an effectus. Then:
\begin{enumerate}
\item \label{prop:effectusPCMorder:pos} the sum $\ovee$ on partial
  homsets $\Par(\cat{B})(X,Y)$ is positive:\index{S}{positive!-- sum}
  $f\ovee g = \zero$ implies $f = g = \zero$;

\item \label{prop:effectusPCMorder:canc} it is also
  cancellative\index{S}{cancellation} in the sense: $f \ovee g = g$
  implies $f = \zero$;

\item \label{prop:effectusPCMorder:order} $\Par(\cat{B})(X,Y)$ is a
  partial order via $f \leq g$ iff $f \ovee h = g$ for some $h$; pre-
  and post-composition of partial maps is thus monotone.
\end{enumerate}
\end{proposition}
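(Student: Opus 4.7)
The plan is to leverage Lemma~\ref{lem:kerbot}: since the kernel-supplement map $\kerbot \colon \Par(\cat{B})(X,Y) \to \Pred(X)$ is a PCM-homomorphism that both preserves and reflects $\zero$ and orthogonality, PCM-properties of $\Pred(X)$ that are expressible by equations of the form ``some sum is $\zero$'' or ``two sums of the form $a \ovee b$ agree'' can be pulled back from $\Pred(X)$ to $\Par(\cat{B})(X,Y)$. Since $\Pred(X)$ is an effect algebra by Proposition~\ref{prop:effectusEA}, it enjoys positivity and cancellativity by Exercise~\ref{exc:EA}~\eqref{exc:EA:ZeroSum}, \eqref{exc:EA:Canc}. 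This is the main conceptual step; no delicate new effectus argument is required.

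For part~\eqref{prop:effectusPCMorder:pos}, assume $f \ovee g = \zero$. Applying $\kerbot$ and using that it is a PCM-map gives $\kerbot(f) \ovee \kerbot(g) = \kerbot(\zero) = \zero$ in $\Pred(X)$. By positivity in the effect algebra $\Pred(X)$ (Exercise~\ref{exc:EA}~\eqref{exc:EA:ZeroSum}), $\kerbot(f) = \kerbot(g) = \zero$, and since $\kerbot$ reflects $\zero$ (\ie~by Lemma~\ref{lem:ker}~\eqref{lem:ker:zero}), we conclude $f = g = \zero$. For part~\eqref{prop:effectusPCMorder:canc}, assume $f \ovee g = g$. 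Applying $\kerbot$ gives $\kerbot(f) \ovee \kerbot(g) = \kerbot(g) = \zero \ovee \kerbot(g)$, so cancellation in the effect algebra $\Pred(X)$ yields $\kerbot(f) = \zero$, whence $f = \zero$.

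For part~\eqref{prop:effectusPCMorder:order}, reflexivity of $\leq$ is immediate from $f \ovee \zero = f$; transitivity follows by associativity, since $f \ovee a = g$ and $g \ovee b = h$ imply $f \ovee (a \ovee b) = h$ (with $a \ovee b$ defined by the PCM axioms). For antisymmetry, suppose $f \ovee a = g$ and $g \ovee b = f$; then $f \ovee (a \ovee b) = f$, so by~\eqref{prop:effectusPCMorder:canc} we get $a \ovee b = \zero$, whence by~\eqref{prop:effectusPCMorder:pos} both $a = b = \zero$, giving $f = g$. Monotonicity of pre- and post-composition is then a direct consequence of Proposition~\ref{prop:effectusPCM}~\eqref{prop:effectusPCM:pres}: if $g = f \ovee h$, then for any composable $k$,
$$k \pafter g = (k \pafter f) \ovee (k \pafter h) \geq k \pafter f,$$
and symmetrically for post-composition.

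The only step that requires any care is making sure the associativity manipulations used for transitivity and antisymmetry respect the partiality of $\ovee$; this is handled by the third PCM axiom in Definition~\ref{def:pcm}, which guarantees that orthogonality of intermediate sums implies the required orthogonalities to rebracket.
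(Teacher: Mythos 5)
Your proof is correct and follows essentially the same route as the paper's: apply the kernel-supplement map $\kerbot$ from Lemma~\ref{lem:kerbot}, transfer positivity and cancellation from the effect algebra $\Pred(X)$, use reflection of $\zero$ to conclude, and derive antisymmetry of $\leq$ from parts~\eqref{prop:effectusPCMorder:pos} and~\eqref{prop:effectusPCMorder:canc}. No substantive differences.
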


As a result of the last point the category $\Par(\cat{B})$ of partial
maps is not only enriched over PCMs, but also over pointed posets
(with a bottom element, preserved under composition).

\begin{proof}
If $f\ovee g = \zero$, then $\kerbot(f) \ovee \kerbot(g) =
\kerbot(f\ovee g) = \zero$, so that $\kerbot(f) = \kerbot(g) = \zero$
since the effect module $\Pred(X)$ is positive. But then $f = g =
\zero$ by Lemma~\ref{lem:ker}~\eqref{lem:ker:zero}.

If $f \ovee g = g$, then by similarly applying $\kerbot(-)$ we get
$\kerbot(f) = \zero$ in $\Pred(X)$, and thus $f = \zero$.

Obviously, the order $\leq$ on $\Par(\cat{B})(X,Y)$ is reflexive and
transitive. But it is also anti-symmetric: if $f \leq g$ and $g \leq
f$, say via $f \ovee h = g$ and $g \ovee k = f$, then $f \ovee (h\ovee
k) = f$, so that $h\ovee k = \zero$ by
point~\eqref{prop:effectusPCMorder:canc}, and thus $h = k = \zero$ by
point~\eqref{prop:effectusPCMorder:pos}. Hence $f = g$. \QED
\end{proof}

We include a second result that uses reflection of orthogonality by
kernel-supplements. The fact that normalisation of substates holds in
the two examples $\Kl(\Dst)$ and $\op{\vNA}$ in
Remark~\ref{rem:normalisation} is an instance of the following result
due to Sean Tull.

\begin{lemma}
\label{lem:unitnormalisation}
Let $\cat{B}$ be an effectus whose scalars are given by the unit
interval $[0,1]$ of $\R$. Then normalisation\index{S}{normalisation}
holds in $\cat{B}$: non-zero substates can be normalised to proper
states, that is, for each non-zero $\omega \colon 1 \pto X$ there is a
unique $\rho \colon 1 \tto X$ with $\omega = \klin{\rho} \pafter r$
for the (non-zero) scalar $r = \one \pafter \omega \colon 1 \pto 1$.
\end{lemma}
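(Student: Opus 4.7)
The plan is to split the statement into uniqueness and existence. Uniqueness looks tractable from the available machinery, while existence appears to be the substantive content and will demand a separate argument.

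For uniqueness, suppose $\klin{\rho_1} \pafter r = \klin{\rho_2} \pafter r$. The goal is to upgrade this from an equation at the scalar $r$ to an equation at the scalar $\one$, from which faithfulness of $\klin{-} \colon \cat{B} \to \Par(\cat{B})$ (noted after Proposition~\ref{prop:effectuscoproj}) gives $\rho_1 = \rho_2$. My route is: first handle scalars $s \leq r$, then arbitrary scalars. For $s \in [0,1]$ with $s \leq r$, use that the scalar effect monoid $\Pred(1)$ is exactly $[0,1]$ with $\pafter$ given by real multiplication, so $s = r \pafter t$ with $t = s/r \in [0,1]$; associativity of $\pafter$ then immediately yields $\klin{\rho_1} \pafter s = \klin{\rho_2} \pafter s$. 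For arbitrary $s \in [0,1]$, the Archimedean property gives an $n$ with $s/n \leq r$; writing $s$ as the $n$-fold sum $(s/n) \ovee \cdots \ovee (s/n)$ in $\Pred(1)$ and invoking Proposition~\ref{prop:effectusPCM}\eqref{prop:effectusPCM:pres} (preservation of $\ovee$ by pre-composition) reduces the claim to the case $s/n \leq r$ just handled. Specialising to $s = \one$ yields $\klin{\rho_1} = \klin{\rho_2}$, and faithfulness concludes.

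Existence is the main obstacle. The natural attempt is to mimic the division $\omega / r$ performed in $\Kl(\Dst)$ and $\op{\vNA}$ (Remark~\ref{rem:normalisation}), for which there is no direct analogue since $1/r \notin [0,1]$ when $r < 1$. A partial construction: if $n \geq 1$ satisfies $nr \leq 1$, form the $n$-fold sum $\sigma_n = \omega \ovee \cdots \ovee \omega$, which has $\kerbot(\sigma_n) = nr$ and satisfies $\sigma_n \pafter (1/n) = \omega$ by bilinearity of $\pafter$ over $\ovee$. In the special case $r = 1/n$ this already gives a state $\rho := \sigma_n$ (total by Lemma~\ref{lem:zero} since $\kerbot(\sigma_n) = 1$) with $\klin{\rho}\pafter r = \omega$. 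For general, and in particular irrational, $r$, this only yields an increasing family of substates $\{\sigma_n : nr \leq 1\}$ whose kernel-supplements $nr$ approach but never reach $1$.

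To cover the general case my plan is therefore indirect: dualise through the state-and-effect triangle of Theorem~\ref{thm:effectusTriangle}. Define $\hat{\rho} \colon \Pred(X) \to [0,1]$ by $\hat{\rho}(p) = (\omega \models p)\cdot (1/r)$, well-defined because $\omega \models p \leq \omega \models \one = r$, and verify directly that $\hat{\rho}$ preserves $\zero, \ovee$, $\one$ and scalar multiplication, so $\hat{\rho}$ lives in $\op{(\EMod_{[0,1]})}(\Pred(X),[0,1])$. By construction, $\hat{\rho}$ composed with $\Pred$ gives $\omega \models (-)$ divided by $r$, whence $\klin{\rho}\pafter r$ and $\omega$ are sent to the same predicate-valued functional. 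The hard step — and the one where I expect the delicate work to lie — is to realise $\hat{\rho}$ as an honest morphism $\rho\colon 1 \tto X$ in $\cat{B}$, that is, to invert the natural transformation $\Stat(X) \to \op{(\EMod)}(\Pred(X),[0,1])$ from diagram~\eqref{diag:effectusValidity}. I would look for this to follow from separation of states by predicates together with a directed-completeness argument applied to the family $\{\sigma_n \pafter (1/n)\}_n$; the case analysis of Remark~\ref{rem:normalisation} in $\Kl(\Dst)$ and $\op{\vNA}$ suggests that in general one should produce $\rho$ as a suitable supremum in $\SStat(X)$, pinned down uniquely by the first half of the proof.
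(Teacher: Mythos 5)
Your uniqueness argument is correct and is essentially the same reduction the paper uses (expressing $\one$ as a finite $\ovee$-combination of multiples of $r$ and cancelling via faithfulness of $\klin{-}$), so the problem lies entirely in the existence half, which has a genuine gap. The plan to define $\hat{\rho}(p) = (\omega\models p)\cdot(1/r)$ as an effect module map and then ``invert'' the natural transformation $\Stat(X)\rightarrow \op{(\EMod)}(\Pred(X),[0,1])$ from~\eqref{diag:effectusValidity} cannot work in a general effectus: the paper states explicitly, right after Theorem~\ref{thm:effectusTriangle}, that this transformation need not be an isomorphism, and an effectus whose scalars happen to be $[0,1]$ carries no separation or directed-completeness hypotheses. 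For the same reason the increasing family $\{\sigma_{n}\pafter(1/n)\}_{n}$ has no reason to admit a supremum in $\SStat(X)$, so neither of the two ingredients you hope to fall back on is available.

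The fix is a small finite modification of your own $\sigma_{n}$ construction, using precisely the division-below-$r$ trick you already deploy for uniqueness. Put $n = \lfloor 1/r\rfloor$, so that $r' := 1 - n\cdot r$ satisfies $0\leq r' < r$ and hence $r'/r\in[0,1]$ is a legitimate scalar. Take $s_{1}=\dots=s_{n}=\one$ and $s_{n+1}=r'/r$; then $\bigovee_{i}\, s_{i}\cdot r = n\cdot r + r' = 1$. The substates $\omega\pafter s_{i}$ are mutually orthogonal because their kernel-supplements $r\cdot s_{i}$ are orthogonal in $[0,1]$ and $\kerbot = \one\pafter(-)$ reflects orthogonality (Lemma~\ref{lem:kerbot}); their sum has kernel-supplement $\bigovee_{i}\, r\cdot s_{i}=\one$ and is therefore total by Lemma~\ref{lem:zero}, so it equals $\klin{\rho}$ for a unique $\rho\colon 1\tto X$. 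Finally $\klin{\rho}\pafter r = \bigovee_{i}\,\omega\pafter s_{i}\pafter r = \omega\pafter\bigovee_{i}\, s_{i}\cdot r = \omega$, using commutativity of the scalar monoid $[0,1]$ to move $r$ past the $s_{i}$. This finite decomposition $\bigovee_{i}\, s_{i}\cdot r=\one$ is exactly the paper's proof; in effect you stopped one summand short of it, because the missing observation is that the remainder $1-n\cdot r$ lies \emph{below} $r$ and can therefore be divided by $r$ inside $[0,1]$ --- no passage to a limit is needed.
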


In particular, this means that in the context of~\cite{JacobsWW15},
where all effectuses have $[0,1]$ as scalars, normalisation comes for
free.

\begin{proof}
Let $\omega \colon 1 \pto X$ be a non-zero substate, with
corresponding scalar $r = \one \pafter \omega \in [0,1]$. Since $r\neq
0$ --- by Lemma~\ref{lem:zero} --- we can find an $n\in\NNO$ and
$r'\in [0,1]$ with $r' \leq r$ and $n\cdot r + r' = 1$.  More
abstractly, we can find scalars $s_{1}, \ldots, s_{m}\in [0,1]$ with
$\bigovee_{i} s_{i} \cdot r = 1$. We now form the scalar
multiplication $\omega \pafter s_{i} \colon 1 \pto X$ as in
Lemma~\ref{lem:substatePCMod}. The scalars $\one \pafter \omega
\pafter s_{i} = r \pafter s_{i}$ are orthogonal, so the maps $\omega
\pafter s_{i}$ are orthogonal too, since $\one \pafter (-) =
\ker^{\bot}$ reflects orthogonality, by Lemma~\ref{lem:kerbot}.  But
then we have the following equalities of maps $1 \pto 1$.
$$\begin{array}{rcccccl}
\one \pafter \bigovee_{i} (\omega \pafter s_{i})
& = &
\bigovee_{i} \one \pafter \omega \pafter s_{i}
& = &
\bigovee_{i} r \pafter s_{i}
& = &
\one.
\end{array}$$

\noindent Lemma~\ref{lem:zero} tells that the partial map
$\bigovee_{i} (\omega \pafter s_{i}) \colon 1 \pto X$ is total, so we
can write it as $\klin{\rho}$, for a unique state $\rho\colon 1 \tto
X$. By construction we have $\klin{\rho} \pafter \one \pafter \omega =
\omega$ as in~\eqref{eqn:normalisation}. 

If also $\rho'\colon 1\tto X$ satisfies $\klin{\rho'} \pafter r =
\omega = \klin{\rho} \pafter r$, then we obtain $\rho' = \rho$ from
faithfulness of $\klin{-} \colon \cat{B} \rightarrow \Par(\cat{B})$:
$$\begin{array}[b]{rcl}
\klin{\rho}
\hspace*{\arraycolsep}=\hspace*{\arraycolsep}
\klin{\rho} \pafter \bigovee_{i} r \pafter s_{i}
& = &
\bigovee_{i} \klin{\rho} \pafter r \pafter s_{i} \\
& = &
\bigovee_{i} \klin{\rho'} \pafter r \pafter s_{i}
\hspace*{\arraycolsep}=\hspace*{\arraycolsep}
\klin{\rho'} \pafter \bigovee_{i} r \pafter s_{i}
\hspace*{\arraycolsep}=\hspace*{\arraycolsep}
\klin{\rho'}.
\end{array}\eqno{\QEDbox}$$
\end{proof}

Tull's result is a bit more general and generalises the crucial
property used in the lemma to the requirement that the scalars
satisfy: for each non-zero scalar $r$ there are scalars $s_{1},
\ldots, s_{m}$ with $\bigovee_{i} s_{i}\cdot r = r = \bigovee_{i} r
\cdot s_{i}$. This condition does not apply to effectuses that have a
cube $[0,1]^{n}$ as scalars.  For instance $(r,0)\in [0,1]^{2}$ is
non-zero, for $r\neq 0$, but there is no way to raise $(r,0)$ to the
top element $(1,1)\in [0,1]^{2}$ via multiply-and-add. This means that
the product of two effectuses with normalisation need not have
normalisation.

By generalising the requirement to all sets of predicates one can
normalise all non-zero partial maps, via the scalar multiplication
from Lemma~\ref{lem:monPCMod}.

Once again using reflection of orthogonality we can give an
alternative description of the partial pairing $\dtuple{f,g}$ from
Lemma~\ref{lem:pairing}. It exists for maps $f,g$ satisfying $\one
\pafter f = (\one \pafter g)^{\bot}$. In the terminology of the
present section we can rephrase this assumption as $\kerbot(f) =
\ker(g)$, or as $\kerbot(f) \ovee \kerbot(g) = \one$.

\begin{lemma}
\label{lem:sumpairing}
Let $f\colon Z \pto X$ and $g\colon Z \pto Y$ be to partial maps in an
effectus with $\kerbot(f) = \ker(g)$. Then we can describe the
resulting total pairing map $\dtuple{f,g} \colon Z \tto X+Y$ as the
unique one satisfying in the homset of partial maps $Z \pto X+Y$:
\begin{equation}
\label{eqn:dtupleovee}
\begin{array}{rcl}
\klin{\dtuple{f,g}}
& = &
(\kappa_{1}\pafter f) \ovee (\kappa_{2}\pafter g).
\end{array}
\end{equation}

\noindent As special case of this equation we obtain, using the
equation $\dtuple{\rhd_{1},\rhd_{2}} = \idmap$
from~\eqref{eqn:dtupleuniqueness}, a new relationship between
projections and coprojections:
$$\begin{array}{rcl}
\idmap
& = &
(\kappa_{1}\pafter \rhd_{1}) \ovee (\kappa_{2}\pafter \rhd_{2}).
\end{array}$$

More generally, there is the following bijective correspondence
(from~\cite{Cho15}):
\begin{equation}
\label{bijcor:sumpairing}
\begin{prooftree}
{\xymatrix{Z\ar[r]|-{\pafter}^-{f} & X_{1} + \cdots + X_{n}}}
\Justifies
{\xymatrix{Z\ar[r]|-{\pafter}_-{f_i} & X_{i}} \;
   \mbox{ with $\kerbot(f_{i})$ orthogonal}}
\end{prooftree}
\end{equation}

\noindent Moreover, the map $f$ above the lines is total if and only
if the maps $f_i$ below the lines satisfy $\bigovee_{i} \kerbot(f_{i})
= \one$.
\end{lemma}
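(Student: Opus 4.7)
The plan is to verify the three assertions in turn, using the reflection of orthogonality by $\kerbot$ from Lemma~\ref{lem:kerbot} as the central technical tool, together with joint monicity of partial projections from Lemma~\ref{lem:effectusjm}.

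First I would show that the sum on the right-hand side of~\eqref{eqn:dtupleovee} is well-defined and total. A short direct calculation (using that $\one_{X+Y} \pafter \kappa_{i}$ reduces via Exercise~\ref{exc:subst}~\eqref{exc:subst:totalpartial} to $\one$ on $X_i$) gives $\kerbot(\kappa_{i}\pafter h) = \kerbot(h)$; this is also immediate from the dinaturality diagram~\eqref{diag:kerbotdinatural}. Hence $\kerbot(\kappa_{1}\pafter f)\ovee\kerbot(\kappa_{2}\pafter g) = \kerbot(f)\ovee\kerbot(g) = \ker(g)\ovee\kerbot(g) = \one$, using the hypothesis $\kerbot(f)=\ker(g)$. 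Since $\kerbot$ reflects orthogonality (Lemma~\ref{lem:kerbot}), the sum $s \defeq (\kappa_{1}\pafter f)\ovee(\kappa_{2}\pafter g)$ exists, and since $\kerbot(s)=\one$ this $s$ is total by Lemma~\ref{lem:ker}~\eqref{lem:ker:mono}. Write $s=\klin{t}$ for a unique total map $t\colon Z\tto X+Y$.

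Next I would verify $t=\dtuple{f,g}$. Using the butterfly equations~\eqref{diag:butterfly} we get $\rhd_{1}\pafter s = (\rhd_{1}\pafter\kappa_{1}\pafter f)\ovee(\rhd_{1}\pafter\kappa_{2}\pafter g) = f\ovee\zero = f$, and similarly $\rhd_{2}\pafter s = g$. Thus $t$ satisfies the two defining equations of $\dtuple{f,g}$ in Lemma~\ref{lem:pairing}, so by uniqueness (equivalently, by joint monicity of $\rhd_{1},\rhd_{2}$ in Lemma~\ref{lem:effectusjm}~\eqref{lem:effectusjm:two}) we have $t=\dtuple{f,g}$. The special case follows by taking $f=\rhd_{1}$, $g=\rhd_{2}$: then $\kerbot(\rhd_{1}) = [\zero,\one]^{\bot} = [\one,\zero] = \ker(\rhd_{2})$ by Lemma~\ref{lem:ker}~\eqref{lem:ker:proj}, and $\dtuple{\rhd_{1},\rhd_{2}}=\idmap$ by~\eqref{eqn:dtupleuniqueness}.

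For the bijective correspondence~\eqref{bijcor:sumpairing} I would first promote the special case $\idmap = \bigovee_{i}\kappa_{i}\pafter\rhd_{i}$ from binary to $n$-ary coproducts by induction on $n$; this is the one step that is not purely formal, though it is routine once one rewrites an $n$-ary coproduct as $X_{1}+(X_{2}+\cdots+X_{n})$ and uses the binary case twice. Given this identity, the passage from $f\colon Z \pto X_{1}+\cdots+X_{n}$ to the tuple $f_{i}\defeq \rhd_{i}\pafter f$ gives maps whose kernel-supplements satisfy $\bigovee_{i}\kerbot(f_{i}) = (\bigovee_{i}\one\pafter\rhd_{i})\pafter f = \one\pafter f = \kerbot(f)$, which is $\leq\one$, so they are orthogonal. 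Conversely, given $(f_{i})$ with orthogonal kernel-supplements, the map $f\defeq \bigovee_{i}\kappa_{i}\pafter f_{i}$ exists (again because $\kerbot(\kappa_{i}\pafter f_{i})=\kerbot(f_{i})$ and $\kerbot$ reflects orthogonality). The two assignments are mutually inverse: one direction uses $\rhd_{j}\pafter\kappa_{i} = \idmap$ or $\zero$ to compute $\rhd_{j}\pafter f = f_{j}$, and the other uses the $n$-ary identity $\idmap = \bigovee_{i}\kappa_{i}\pafter\rhd_{i}$ to recover $f$. Finally, the identity $\kerbot(f) = \bigovee_{i}\kerbot(f_{i})$ gives the totality characterization via Lemma~\ref{lem:ker}~\eqref{lem:ker:mono}.

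The main obstacle is psychological rather than technical: one must trust that reflection of orthogonality by $\kerbot$ (a genuinely deep fact, whose proof in Lemma~\ref{lem:kerbot} uses both pullbacks~\eqref{diag:effectuspartprojpb}) suffices to lift equalities of scalar predicates to equalities of partial maps. Once that tool is available, every step is a short calculation with projections and coprojections.
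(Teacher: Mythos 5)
Your proof is correct. The treatment of the bijective correspondence~\eqref{bijcor:sumpairing} and of the totality criterion is essentially the paper's own argument (send $f$ to $\rhd_{i}\pafter f$; use reflection of orthogonality by $\kerbot$ from Lemma~\ref{lem:kerbot} for the converse; check mutual inversion with $\rhd_{j}\pafter\kappa_{i}$). For the first equation~\eqref{eqn:dtupleovee} you argue in the opposite direction from the paper: the paper starts from the pairing $\dtuple{f,g}$ already supplied by Lemma~\ref{lem:pairing} and exhibits the explicit bound $b=(\klin{\kappa_{1}}\pplus\klin{\kappa_{2}})\pafter\klin{\dtuple{f,g}}$, computing $\rhd_{i}\pafter b$ and $\nabla\pafter b$ directly, so that reflection of orthogonality is not needed there at all; you instead build the sum $(\kappa_{1}\pafter f)\ovee(\kappa_{2}\pafter g)$ from scratch via $\kerbot(\kappa_{i}\pafter h)=\kerbot(h)$ and reflection of orthogonality, observe it is total since its kernel-supplement is $\one$, and then identify it with $\dtuple{f,g}$ through the projection equations and the uniqueness clause of Lemma~\ref{lem:pairing}. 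Your route leans harder on Lemma~\ref{lem:kerbot} but buys an independent reconstruction of the pairing from the partial sum structure; the paper's route is shorter because it reuses the pullback-based existence already established. Both are sound, and your reduction of the special case via $\kerbot(\rhd_{1})=[\one,\zero]=\ker(\rhd_{2})$ matches Lemma~\ref{lem:ker}~\eqref{lem:ker:proj} correctly. The only point worth flagging is that the $n$-ary identity $\idmap=\bigovee_{i}\kappa_{i}\pafter\rhd_{i}$, which you propose to obtain by induction, is used by the paper without explicit proof as well, so your plan is if anything slightly more complete there.
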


\begin{proof}
For the sum $\ovee$ on the right in the
equation~\eqref{eqn:dtupleovee} we take as bound $b \colon Z \pto
(X+Y) + (X+Y)$ the map $b = \klin{\kappa_{1}\tplus\kappa_{2}}
\tafter \dtuple{f,g} = (\klin{\kappa_{1}} \pplus \klin{\kappa_{2}})
\tafter \dtuple{f,g}$. Then:
$$\begin{array}{rcccccccl}
\rhd_{1} \pafter b
& = &
\rhd_{1} \pafter (\klin{\kappa_{1}} \pplus \klin{\kappa_{2}})
  \tafter \dtuple{f,g} 
& = &
\klin{\kappa_{1}} \pafter \rhd_{1} \tafter \dtuple{f,g} 
& = &
\klin{\kappa_{1}} \pafter f 
& = &
\kappa_{1} \pafter f.
\end{array}$$

\noindent Similarly we get $\rhd_{2} \pafter b = \kappa_{2} \pafter g$. Then:
$$\begin{array}{rcl}
(\kappa_{1}\pafter f) \ovee (\kappa_{2}\pafter g)
\hspace*{\arraycolsep}=\hspace*{\arraycolsep}
\nabla \pafter b 
& = &
\klin{\nabla} \pafter \klin{\kappa_{1}\tplus\kappa_{2}}
   \tafter \dtuple{f,g} \\
& = &
\klin{\idmap} \tafter \dtuple{f,g} \\
& = &
\klin{\dtuple{f,g}}.
\end{array}$$

\noindent The equation $(\kappa_{1}\pafter \rhd_{1}) \ovee
(\kappa_{2}\pafter \rhd_{2}) = \idmap$ holds
by~\eqref{eqn:dtupleuniqueness}, for $k=\idmap$.

In the bijective correspondence~\eqref{bijcor:sumpairing} we send a
map $f\colon Z \pto X_{1} + \cdots + X_{n}$ to the $n$-tuple of maps
$f_{i} = \rhd_{i} \pafter f$. These maps are all orthogonal, via $f$
as bound. Hence the maps $\kerbot(f_{i}) = \one \after f_{i}$ are also
orthogonal. If $f$ is total, then by an $n$-ary version
of~\eqref{eqn:dtupleovee},
$$\begin{array}{rcccccccccl}
\one
& = &
\one \pafter f
& = &
\one \pafter \bigovee_{i}(\kappa_{i} \pafter f_{i})
& = &
\bigovee_{i} \one \pafter \kappa_{i} \pafter f_{i}
& = &
\bigovee_{i} \one \pafter f_{i}
& = &
\bigovee_{i} \ker(f_{i}).
\end{array}$$

\noindent In the other direction, let $f_{i}\colon Z \rightarrow
X_{i}$ be maps for which the kernel-supplements $\kerbot(f_{i}) = \one
\after f_{i}\colon Z \pto 1$ are orthogonal. The maps $\kappa_{i}
\pafter f_{i} \colon Z \rightarrow X_{1}+\cdots+X_{n}$ are then
orthogonal too since the maps $\one \pafter \kappa_{i} \pafter f_{i} =
\one \pafter f_{i}$ are orthogonal and $\kerbot = \one \pafter (-)$
reflects orthogonality, by Lemma~\ref{lem:kerbot}. Hence we take $f =
\bigovee_{i}(\kappa_{i} \pafter f_{i})$. This map $f$ is total if
$\bigovee\ker(f_{i}) = \one$ by following the previous chain of
equations backwards. 

We get a bijective correspondence since $\bigovee_{i}(\kappa_{i} \pafter
\rhd_{i} \pafter f) = f$ like in~\eqref{eqn:dtupleovee}, and:
$$\begin{array}{rcccccl}
\rhd_{j} \pafter \bigovee_{i}(\kappa_{i} \pafter f_{i})
& = &
\bigovee_{i}(\rhd_{j} \pafter \kappa_{i} \pafter f_{i})
& = &
\bigovee_{i}\left\{\begin{array}{ll}
\idmap \pafter f_{i} \; & \mbox{if } j = i \\
\zero & \mbox{if } j \neq i
\end{array}\right\}
& = &
f_{j}.
\end{array}\eqno{\QEDbox}$$
\end{proof}

The bijective correspondence~\eqref{bijcor:sumpairing} extends the
pairing from Lemma~\ref{lem:pairing} to $n$-ary form and to partial
maps, see the end for Discussion~\ref{dis:partialtotal} for more
details.

\auxproof{
\begin{remark}
\label{rem:pairing}
Now that we have the terminology of kernels (and their complements) we
can get a better understanding of the pullback on the left
in~\eqref{diag:effectuspb} in the definition of effectus. This
pullback can be seen as a form of \emph{pairing} $\dtuple{f,g}$
for partial maps with `compatible' kernels. Let $f\colon Z \pto X$ and
$g\colon Z \pto Y$ be two partial maps with $\ker(f) =
\kerbot(g)$. The latter says that the outer diagram commutes in:
$$\xymatrix{
Z\ar@/_2ex/[ddr]|-{\tafter}_{[\kappa_{2}, \kappa_{1}] \tafter g}
   \ar@/^2ex/[drr]|-{\tafter}^{f}
   \ar@{..>}[dr]|-{\tafter}^(0.7){\dtuple{f,g}}
\\
& X+Y\ar[r]|-{\tafter}^-{\idmap\tplus \bang}
   \ar[d]|-{\tafter}^{\bang\tplus \idmap}\pullback & 
   X + 1\ar[d]|-{\tafter}^{\bang\tplus\idmap}
\\
& 1+ Y\ar[r]|-{\tafter}_-{\idmap\tplus \bang} & 1 + 1
}$$

\auxproof{
$$\begin{array}{rcl}
(\idmap\tplus \bang) \tafter [\kappa_{2}, \kappa_{1}] \tafter g
& = &
[\kappa_{2}, \kappa_{1}] \tafter (\bang\tplus\idmap) \tafter g \\
& = &
[\kappa_{2}, \kappa_{1}] \tafter \kerbot(g) \\
& = &
[\kappa_{2}, \kappa_{1}] \tafter \ker(f) \\
& = &
\kerbot(f) \\
& = &
(\bang\tplus\idmap) \tafter f.
\end{array}$$
}

\noindent This pairing $\dtuple{f,g}$ is the unique total map
with:
$$\begin{array}{rclcrcl}
\rhd_{1} \tafter \dtuple{f,g}
& = &
f
& \qquad\mbox{and}\qquad &
\rhd_{2} \tafter \dtuple{f,g}
& = &
g.
\end{array}$$

\noindent Uniqueness gives:
\begin{equation}
\label{eqn:dtupleuniqueness}
\begin{array}{rclcrcl}
\dtuple{f,g} \tafter h
& = &
\dtuple{f \tafter h, g\tafter h}
& \qquad\mbox{and}\qquad &
\dtuple{\rhd_{1} \tafter k, \rhd_{2}\tafter k} 
& = &
k,
\end{array}
\end{equation}

\noindent for each $h\colon W \rightarrow Z$ and $k\colon Z \tto X+Y$.

\auxproof{
$$\begin{array}{rcl}
\rhd_{1} \tafter \dtuple{f,g}
& = &
(\idmap\tplus \bang) \tafter \dtuple{f,g} \\
& = &
f
\\
\rhd_{1} \tafter \dtuple{f,g}
& = &
[\kappa_{2} \tafter \bang, \kappa_{1}] \tafter \dtuple{f,g} \\
& = &
[\kappa_{2}, \kappa_{1}] \after (\bang\tplus\idmap) \tafter \dtuple{f,g} \\
& = &
[\kappa_{2}, \kappa_{1}] \after [\kappa_{2}, \kappa_{1}] \after g \\
& = &
g.
\end{array}$$

\noindent Uniqueness of $\dtuple{f,g}$ follows from
Lemma~\ref{lem:effectusjm}.
}

Moreover, the pairing is also unique in satisfying in the homset of
partial maps $Z \pto X+Y$:
\begin{equation}
\label{eqn:dtupleovee}
\begin{array}{rcl}
\klin{\dtuple{f,g}}
& = &
(\kappa_{1}\pafter f) \ovee (\kappa_{2}\pafter g).
\end{array}
\end{equation}

\noindent As special case of this equation we obtain a new relationship
between projections and coprojections:
$$\begin{array}{rcl}
\idmap
& = &
(\kappa_{1}\pafter \rhd_{1}) \ovee (\kappa_{2}\pafter \rhd_{2}).
\end{array}$$

\auxproof{
Indeed, we take as bound $b \colon Z \pto (X+Y) + (X+Y)$ the map $b =
\kappa_{1} \tafter (\kappa_{1}\tplus\kappa_{2}) \tafter
\dtuple{f,g}$. Then:
$$\begin{array}{rcl}
\rhd_{1} \pafter b
& = &
[\idmap\tplus \bang, \kappa_{2}] \tafter \kappa_{1} \tafter 
   (\kappa_{1}\tplus\kappa_{2}) \tafter \dtuple{f,g} \\
& = &
(\idmap\tplus \bang) \tafter 
   (\kappa_{1}\tplus\kappa_{2}) \tafter \dtuple{f,g} \\
& = &
(\kappa_{1}\tplus\idmap) \tafter (\idmap\tplus \bang) \tafter \dtuple{f,g} \\
& = &
(\kappa_{1}\tplus\idmap) \tafter f \\
& = &
\kappa_{1} \pafter f
\\
\rhd_{2} \pafter b
& = &
[[\kappa_{2}\tafter \bang, \kappa_{1}], \kappa_{2}] \tafter \kappa_{1} \tafter 
   (\kappa_{1}\tplus\kappa_{2}) \tafter \dtuple{f,g} \\
& = &
[\kappa_{2}\tafter \bang, \kappa_{1}] \tafter 
   (\kappa_{1}\tplus\kappa_{2}) \tafter \dtuple{f,g} \\
& = &
[\kappa_{2}\tafter \bang, \kappa_{1} \tafter \kappa_{2}] \tafter 
   \dtuple{f,g} \\
& = &
[\kappa_{2}, \kappa_{1} \tafter \kappa_{2}] \tafter 
   (\bang\tplus\idmap) \tafter \dtuple{f,g} \\
& = &
[\kappa_{2}, \kappa_{1} \tafter \kappa_{2}] \tafter 
   [\kappa_{2}, \kappa_{1}] \tafter g \\
& = &
[\kappa_{1} \tafter \kappa_{2}, \kappa_{2}] \tafter g \\
& = &
(\kappa_{2}\tplus\idmap) \tafter g \\
& = &
\kappa_{2} \pafter g
\\
(\kappa_{1}\pafter f) \ovee (\kappa_{2}\pafter g)
& = &
\nabla \pafter b \\
& = &
(\nabla\tplus\idmap) \tafter \kappa_{1} \tafter 
   (\kappa_{1}\tplus\kappa_{2}) \tafter \dtuple{f,g} \\
& = &
\kappa_{1} \tafter \nabla \tafter 
   (\kappa_{1}\tplus\kappa_{2}) \tafter \dtuple{f,g} \\
& = &
\kappa_{1} \tafter [\kappa_{1}, \kappa_{2}] \tafter \dtuple{f,g} \\
& = &
\klin{\dtuple{f,g}}.
\end{array}$$

The equation $(\kappa_{1}\pafter \rhd_{1}) \ovee (\kappa_{2}\pafter
\rhd_{2}) = \idmap$ holds since:
\begin{itemize}
\item $\ker(\rhd_{1}) = [0,1] = [1,0]^{\bot} = \ker(\rhd_{2})^{\bot} =
  \kerbot(\rhd_{2})$ by Lemma~\ref{lem:ker}~\eqref{lem:ker:proj}.

\item And: $\rhd_{1} = \idmap\tplus \bang \colon X+Y \tto X+1$ and
  $[\kappa_{2}, \kappa_{1}] \tafter \rhd_{2} = [\kappa_{2},
  \kappa_{1}] \tafter [\kappa_{2} \tafter \bang, \kappa_{1}] = [\kappa_{1}
  \tafter \bang, \kappa_{2}] = \bang\tplus\idmap$.
\end{itemize}
}

\noindent Finally we notice that for total maps $h,k$ we have:
\begin{equation}
\label{eqn:dtuplenat}
\begin{array}{rcl}
(h\tplus k) \tafter \dtuple{f,g}
& = &
\dtuple{\klin{h} \pafter f, \klin{k} \pafter g}.
\end{array}
\end{equation}

\auxproof{
$$\begin{array}{rcl}
(\idmap\tplus\bang) \tafter (h\tplus k) \tafter \dtuple{f,g}
& = &
(h\tplus\idmap) \tafter (\idmap\tplus\bang) \tafter \dtuple{f,g} \\
& = &
(h\tplus\idmap) \tafter f \\
& = &
\klin{h} \pafter f
\\
(\bang\tplus\idmap) \tafter (h\tplus k) \tafter \dtuple{f,g}
& = &
(\idmap\tplus k) \tafter (\bang\tplus\idmap) \tafter \dtuple{f,g} \\
& = &
(\idmap\tplus k) \tafter [\kappa_{2}, \kappa_{1}] \tafter g \\
& = &
[\kappa_{2}, \kappa_{1}] \tafter (k\tplus \idmap) \tafter g \\
& = &
[\kappa_{2}, \kappa_{1}] \tafter (\klin{k} \pafter g).
\end{array}$$
}

Thus, the coproduct $+$ in the category of partial maps of an effectus
looks almost like a biproduct: it does have projections, and pairing,
but pairing only works conditionally.

The `pairing' described here can be generalised, as in~\cite{Cho15},
to a bijective correspondence between:
$$\begin{prooftree}
{\xymatrix{Z\ar[r]|-{\pafter}^-{f} & X_{1} + \cdots + X_{n}}}
\Justifies
{\xymatrix{Z\ar[r]|-{\pafter}_-{f_i} & X_{i}} \;
   \mbox{ with $\kerbot(f_{i})$ orthogonal}}
\end{prooftree}$$

\noindent Moreover, the map $f$ above the lines is total if and only
if the maps $f_i$ below the lines satisfy $\bigovee_{i} \kerbot(f_{i})
= \one$.
\end{remark}
}

\subsection{Images}\label{subsec:img}

A kernel is a predicate on the domain of a partial map. An image is a
predicate on its codomain. An effectus always has kernels. But the
existence of images must be required explicitly.

There is one more notion that we need in the description of images.
In an effect algebra an element $s$ is called
\emph{sharp}\index{S}{sharp!-- element of an effect algebra} if $s
\wedge s^{\bot} = 0$. One may argue that the meet $\wedge$ may not
exist. But the definition of sharpness only requires that the
particular meet $s \wedge s^{\bot}$ exists and is equal to
$0$. Equivalently, without meets: $s$ is sharp if for each element $x$
one has: $x\leq s$ and $x\leq s^{\bot}$ implies $x = 0$. Notice that
$\zero$ and $\one$ are sharp elements, and that if $s$ is sharp then
its orthosupplement $s^\bot$ is sharp too.

\begin{definition}
\label{def:img}
We say that an effectus has \emph{images}\index{S}{image} if for each
partial map $f\colon X \pto Y$ there is a least predicate $q$ on $Y$
with $\pbox{f}(q) = \one$. In that case we write
$\img(f)$\index{N}{$\img(f)$, image of a map $f$} for this predicate
$q$. We say that the effectus has \emph{sharp
  images}\index{S}{image!sharp --} if these image predicates $\img(f)$
are sharp.

Like kernel-supplements $\kerbot$ we also uses image-complements
$\imgbot$ defined as $\imgbot(f) =
\img(f)^{\bot}$.\index{N}{$\imgbot(f)$, image-supplement of a map $f$}

We call $f$ an \emph{internal epi}\index{S}{internal!-- epi} if $\img(f)
= \one$.
\end{definition}

In all our examples images are sharp. Since many basic results about
images can be proven without assuming sharpness, we shall not use
sharp images until we really need them.

Let's see if we have images in our running examples.

\begin{example}
\label{ex:img}
In the effectus $\Sets$, each partial map $f\colon X \pto Y$ has an
image, namely the subset of $Y$ given by:
$$\begin{array}{rcl}
\img(f)
& = &
\setin{y}{Y}{\exin{x}{X}{f(x) = \kappa_{1}y}}.
\end{array}$$

\noindent It is easy to see that $\img(f)$ is the least subset
$V\subseteq Y$ with $\pbox{f}(V) = 1$, using the definition of
$\pbox{f}$ from~\eqref{eqn:partsubstSets}.

\auxproof{
$$\begin{array}{rcccccl}
\pbox{f}(\img(f))
& = &
\set{x}{\all{y}{f(x) = y \Rightarrow y\in \img(f)}} 
& = &
X
& = &
1 \in \Pow(X).
\end{array}$$

\noindent And if $\pbox{f}(V) = 1$, then for all $x$ and $y$ we
have $f(x) = y \Rightarrow y\in V$. We need to prove $\img(f)\subseteq
V$.  So let $y\in\img(f)$, say via $y = f(x)$. But then $y\in V$.
}

The effectus $\Kl(\Dst)$ also has images: for a partial map $f\colon X
\rightarrow \Dst(Y+1)$ take the (sharp) predicate $\img(f) \colon Y
\rightarrow [0,1]$ given by:
$$\begin{array}{rcl}
\img(f)(y)
& = &
\left\{\begin{array}{ll}
1 \quad & \mbox{if there is an $x\in X$ with $f(x)(y) > 0$} \\
0 & \mbox{otherwise.}
\end{array}\right.
\end{array}$$

\noindent We have, using the description of $\pbox{f}$
from~\eqref{eqn:partsubstKlD},
$$\begin{array}{rcl}
\pbox{f}(\img(f))(x)
& = &
\sum_{y} f(x)(y) \cdot \img(f)(y) + f(x)(*) \\
& = &
\sum_{y} f(x)(y) + f(x)(*) \\
& = &
1.
\end{array}$$

\noindent Further, if $q\in [0,1]^{Y}$ satisfies $\pbox{f}(q) = 1$,
then $\sum_{y} f(x)(y)\cdot q(y) + f(x)(*) = 1$ for each $x\in X$. But
this can only happen if $q(y) = 1$ if $f(x)(y) > 0$, that is, if
$\img(f) \leq q$.

In the effectus $\op{\OUG}$ of order unit groups, images need not
exist. To see this, let~$A=[0,1]^{2}$ be the unit square, which is
clearly a convex set.  The set $\mathrm{Aff}(A)$ of bounded affine
functions $A\to \mathbb{R}$ forms an order unit group with
coordinatewise operations and order.  Let $\varphi\colon
\mathrm{Aff}(A)\to \mathbb{R}$ be the map given by $\varphi(f)=f(0,0)$
for all $f\in\mathrm{Aff}(A)$.  We claim that~$\varphi$, seen as arrow
$\mathbb{R} \to \mathrm{Aff}(A)$ in $\op{\OUG}$, has no image.
Towards a contradiction, let $\varphi$ have image $f = \img(\varphi)$
in the unit interval of predicates $[0,1]_{\mathrm{Aff}(A)}$, see
Example~\ref{ex:effectusOUG}. Then:
$$\begin{array}{rcccccccl}
1
& = &
\pbox{\varphi}(f)
& \smash{\stackrel{\eqref{eqn:partsubstOUG}}{=}} &
\varphi(f) + \varphi(\one)^{\bot}
& = &
f(0,0) + \one(0,0)^{\bot}
& = &
f(0,0).
\end{array}$$

\noindent Consider the functions $f_{1}, f_{2} \in \mathrm{Aff}(A)$
given by $f_{1}(x,y) = 1-x$ and $f_{2}(x,y) = 1-y$. Clearly,
$\pbox{\varphi}(f_{1}) = f_{1}(0,0) = 1$ and $\pbox{\varphi}(f_{2}) =
f_{2}(0,0) = 1$. Hence $f \leq f_{1}, f_{2}$ by minimality of images.
Then $f(1,y) \leq f_{1}(1,y) = 0$ and $f(x,1) \leq f_{2}(x, 1) = 0$.
In particular $f(1,1) = 0$. We now consider the middle point
$(\frac{1}{2}, \frac{1}{2}) \in A$. It can be written in two ways as
convex combination of extreme points, namely as:
$$\begin{array}{rcccl}
\frac{1}{2}(1,0) + \frac{1}{2}(0,1)
& \smash{\stackrel{(a)}{=}} &
(\frac{1}{2}, \frac{1}{2})
& \smash{\stackrel{(b)}{=}} &
\frac{1}{2}(0,0) + \frac{1}{2}(1,1).
\end{array}$$

\noindent By applying the affine function $f$ on both sides we obtain
a contradiction. Starting from the above equation~$(a)$ we get:
$$\begin{array}{rcccccl}
f\big(\frac{1}{2}, \frac{1}{2}\big)
& = &
f\big(\frac{1}{2}(1,0) + \frac{1}{2}(0,1)\big)
& = &
\frac{1}{2}f(1,0) + \frac{1}{2}f(0,1)
& = &
0.
\end{array}$$

\noindent But starting from the equation~$(b)$ we obtain a different
outcome:
$$\begin{array}{rcccccl}
f\big(\frac{1}{2}, \frac{1}{2}\big)
& = &
f\big(\frac{1}{2}(0,0) + \frac{1}{2}(1,1)\big)
& = &
\frac{1}{2}f(0,0) + \frac{1}{2}f(1,1)
& = &
\frac{1}{2}.
\end{array}$$

\noindent The conclusion is that the map $\varphi$ in the effectus
$\op{\OUG}$ has no image.

In the effectus $\op{\vNA}$ of von Neumann algebra the image
of a subunital map $f\colon \mathscr{A} \rightarrow \mathscr{B}$ does
exist, and is given by the following (sharp) effect of $\mathscr{A}$.
\begin{equation}
\label{eqn:imgvNA}
\begin{array}{rcl}
\img(f)
& = &
\displaystyle\bigwedge\setin{p}{\mathscr{A}}{p 
   \mbox{ is a projection with } f(p) = f(1)}.
\end{array}
\end{equation}

\noindent Here we use that projections are the sharp elements and form
a complete lattice, see \textit{e.g.}~\cite[Prop.~1.10.2]{Sakai71}
One can prove: $f(a) = 0$ implies $\img(f) \leq a^\bot$,
for an arbitrary element $a\in
[0,1]_{\mathscr{A}}$, see also Lemma~\ref{lem:img}~\eqref{lem:img:post} below.

\auxproof{
If $f(a) = 0$, then $f(\floor{a}) = 0$, and so $f(\floor{a}^{\bot}) =
f(1 - \floor{a}) = f(1) - f(\floor{a}) = f(1)$.  Hence $\img(f) \leq
\floor{a}^{\bot} \leq a^{\bot}$.
}

It can be shown that in a von Neumann algebra the sharp elements are
precisely the projections, that is, the effects $e$ with $e\cdot e =
e$. Later on, in
Proposition~\ref{prop:telos}~\eqref{prop:telos:sharpidemp}, we prove
this in abstract form.
\end{example}

In general, images need not exist in an effectus, but they do exist
for a few special maps: identity maps, coprojections, and partial
projections.

\begin{lemma}
\label{lem:imgexists}
In the category $\Par(\cat{B})$ of partial maps of an effectus $\cat{B}$,
\begin{enumerate}
\item \label{lem:imgexists:id} $\img(\idmap) = \one$, for the identity
  map $\idmap \colon X \pto X$;

\item \label{lem:imgexists:zero} $\img(\zero) = \zero$, for the zero
  map $\zero \colon X \pto Y$;

\item \label{lem:imgexists:coproj} $\img(\kappa_{1}) = [\one,\zero]$
  and $\img(\kappa_{2}) = [\zero,\one]$;

\item \label{lem:imgexists:proj} $\img(\rhd_{1}) = \one$ and also
  $\img(\rhd_{2}) = \one$, so that the partial projections $\rhd_{i}$
  are internally epic.
\end{enumerate}
\end{lemma}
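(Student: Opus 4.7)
The plan is to verify each of the four cases by computing $\pbox{f}$ explicitly using the descriptions from Exercise~\ref{exc:subst} and the definition~\eqref{diag:subst}, then checking that the candidate predicate (i) satisfies $\pbox{f}(\img(f)) = \one$ and (ii) is minimal among predicates with that property. The key ingredient for minimality is the coproduct decomposition of predicates: by Theorem~\ref{thm:effectusEMod} the predicate functor $\Pred$ sends the coproduct $X+Y$ to the product in $\EMod_{M}$, so $\Pred(X+Y) \cong \Pred(X)\times \Pred(Y)$ componentwise via $q \mapsto (\tbox{\kappa_{1}}(q), \tbox{\kappa_{2}}(q))$ with inverse $(p_{1},p_{2})\mapsto [p_{1},p_{2}]$. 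In particular, the partial order on $\Pred(X+Y)$ is componentwise, and a cotuple equality $[p_{1},p_{2}] = [q_{1},q_{2}]$ forces $p_{i}=q_{i}$.

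For part~\eqref{lem:imgexists:id}, since $\idmap$ in $\Par(\cat{B})$ is $\klin{\idmap[X]}$, Exercise~\ref{exc:subst}~\eqref{exc:subst:partialfun} gives $\pbox{\idmap}(q) = q$, so $\pbox{\idmap}(q) = \one$ iff $q = \one$; hence $\img(\idmap) = \one$. For part~\eqref{lem:imgexists:zero}, I compute $\pbox{\zero}(q) = [q,\kappa_{1}] \tafter \kappa_{2} \tafter \bang = \kappa_{1} \tafter \bang = \one$ for every $q\in\Pred(Y)$, so the least such $q$ is the bottom element $\zero$ of $\Pred(Y)$.

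For part~\eqref{lem:imgexists:coproj}, since $\kappa_{1}$ in $\Par(\cat{B})$ is total, $\pbox{\kappa_{1}}(q) = \tbox{\kappa_{1}}(q) = q \tafter \kappa_{1}$ by Exercise~\ref{exc:subst}~\eqref{exc:subst:totalpartial}. Taking $q = [\one,\zero]$ yields $[\one,\zero] \tafter \kappa_{1} = \one$. For minimality, any $q$ with $q \tafter \kappa_{1} = \one$ decomposes as $q = [q_{1},q_{2}]$ with $q_{1} = q\tafter \kappa_{1} = \one$, so componentwise $[\one,\zero] \leq [\one,q_{2}] = q$. The case of $\kappa_{2}$ is symmetric.

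For part~\eqref{lem:imgexists:proj}, I use $\rhd_{1} = \idmap\tplus\bang$ from~\eqref{diag:partprojtot} to compute $\pbox{\rhd_{1}}(q) = [q,\kappa_{1}] \tafter (\idmap\tplus\bang) = [q, \kappa_{1}\tafter\bang] = [q,\one] \colon X+Y \tafter 1+1$. Requiring this to equal $\one_{X+Y} = [\one_{X},\one_{Y}]$ forces $q = \one_{X}$ by the cotuple decomposition above, so $\img(\rhd_{1}) = \one$; the case of $\rhd_{2}$ is analogous after using the description $\rhd_{2} = [\kappa_{2}\tafter\bang,\kappa_{1}]$. No step presents a real obstacle: the only thing that needs care is appealing correctly to the componentwise structure of $\Pred(X+Y)$ obtained from Theorem~\ref{thm:effectusEMod}, which is what makes both existence (via the explicit cotuple witness) and minimality (via componentwise comparison) transparent.
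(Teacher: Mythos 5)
Your proof is correct and follows essentially the same route as the paper's: compute $\pbox{f}$ explicitly in each case, verify the candidate predicate, and use the componentwise structure of $\Pred(X+Y)\cong\Pred(X)\times\Pred(Y)$ for minimality (the paper phrases the $\kappa_1$ case as ``since $p\after\kappa_2\geq\zero$ we get $p\geq[\one,\zero]$'', which is the same appeal to Theorem~\ref{thm:effectusEMod} you make explicitly). Only a cosmetic slip: the displayed type ``$[q,\one]\colon X+Y \tafter 1+1$'' should read $X+Y \tto 1+1$.
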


\begin{proof}
We show each time that the claimed predicate has the universal
property of an image (the least sharp one such that \ldots).
\begin{enumerate}
\item By Exercise~\ref{exc:subst}~\eqref{exc:subst:partialfun} we have
  $\pbox{\idmap}(p) = p$. Hence $\pbox{\idmap}(p) = \one$ iff
  $p=\one$, so that $\img(\idmap) = \one$.

\item Let $\pbox{0}(p) = \one$. Then $\one = [p, \kappa_{1}] \tafter
  \kappa_{2} \tafter \bang = \kappa_{1} \tafter \bang$. This equation
  imposes no restrictions on $p$, so the least predicate for which
  this holds is the falsity predicate $0$. Hence $\img(\zero) =
  \zero$.

\item Formally a coprojection in $\Par(\cat{B})$ is of the form
  $\klin{\kappa_{i}} = \kappa_{1} \tafter \kappa_{i}$. By
  Exercise~\ref{exc:subst}~\eqref{exc:subst:totalpartial} we have
  $\pbox{\klin{\kappa_{1}}}([\one,\zero]) =
  \tbox{\kappa_{1}}([\one,\zero]) = [\one,\zero] \tafter \kappa_{1} =
  \one$. Further, if also $\pbox{\klin{\kappa_{1}}}(p) = \one$, then
  $p \tafter \kappa_{1} = \one$. Since $p \after \kappa_{2} \geq
  \zero$, we get $p \geq [\one, \zero]$. Similarly one shows that
  $\img(\kappa_{2}) = [\zero,\one]$.

\auxproof{ 
We have $\pbox{\klin{\kappa_{2}}}([0,1]) =
\tbox{\kappa_{2}}([0,1]) = [0,1] \after \kappa_{2} = 1$. And if
also $\pbox{\klin{\kappa_{2}}}(p) = 1$, then $p \tafter \kappa_{2} = 1$,
so $p \geq [0,1]$.
}

\item Let $\pbox{\rhd_{1}}(p) = \one$. Then $\one = [p, \kappa_{1}]
  \tafter \rhd_{1} = [p, \kappa_{1}] \tafter (\idmap\tplus \bang) =
          [p, \kappa_{1} \tafter \bang] = [p, \one]$. But then $p =
          [p,\one] \tafter \kappa_{1} = \one \tafter \kappa_{1} =
          \one$. The $\rhd_{2}$-case is left to the reader. \QED

\auxproof{
Let $\pbox{\rhd_{1}}(p) = 1$. Then $1 = [p, \kappa_{1}] \tafter
[\kappa_{2} \tafter \bang, \kappa_{1}] = [\kappa_{1} \tafter \bang, p] =
[1,p]$. Hence $p = 1$.
}
\end{enumerate}
\end{proof}

Having seen this result we can say that the following is an `internal'
short exact sequence in the category of partial maps an effectus.
\begin{equation}
\label{diag:coprodshortexact}
\xymatrix{
0 \ar[r] & X\ar[r]^-{\kappa_{1}} & X+Y\ar[r]^-{\rhd_2} & Y\ar[r] & 0
}
\end{equation}

\noindent Exactness means that the image of one map is the kernel of
the next one. This works because the coprojections are total in the
category of partial maps, and thus internally monic, the partial
projections are internally epic, and $\img(\kappa_{1}) = [\one,\zero]
= \ker(\rhd_{2})$. But there is more: the
sequence~\ref{diag:coprodshortexact} involves two splittings, making
$\kappa_{1}$ a split mono, and $\rhd_{2}$ a split epi:
\begin{equation}
\label{diag:coprodshortexactsplit}
\xymatrix{
0 \ar[r] & X\ar[r]^-{\kappa_{1}} & 
   X+Y\ar[r]^-{\rhd_2}\ar@/^3ex/[l]^{\rhd_1} & 
   Y\ar[r]\ar@/^3ex/[l]^{\kappa_{2}} & 0
}
\end{equation}

\noindent This is a reformulation of the `butterfly'
diagram~\eqref{diag:butterfly}.

\begin{lemma}
\label{lem:img}
In an effectus $\cat{B}$ with images one has:
\begin{enumerate}
\item \label{lem:img:ker} $\ker(f) = \pbox{f}\big(\imgbot(f)\big)$;

\item \label{lem:img:kerone} $\img(f) = \zero$ iff $\ker(f) = \one$;

\item \label{lem:img:post} $p \pafter f = \zero$ iff $p \leq
  \imgbot(f)$; in particular, $\imgbot(f) \pafter f = \zero$;

\item \label{lem:img:comp} $\img(g \pafter f) \leq \img(g)$;

\item \label{lem:img:compsquare$} $\img(f) \leq \pbox{g}\big(\img(g
  \pafter f)\big)$;

\item \label{lem:img:compeq} $\img(g \pafter f) = \img(g)$ if $f$ is
  a internal epi;

\item \label{lem:img:compzero} $g \pafter f = \zero$ iff $\img(f) \leq
  \ker(g)$;

\item \label{lem:img:compintepic} if $g\pafter f$ and $f$ are
  internally epic, then so is $g$;

\item \label{lem:img:cat} internal epis form a subcategory;

\item \label{lem:img:epichar} a partial map $f$ is internally epic iff
  for each predicate $q$ on its codomain: $\pbox{f}(q) = \one
  \Longleftrightarrow q = \one$;

\item \label{lem:img:ext} `external' epis are internal epis: each epi
  $f$ in $\Par(\cat{B})$ satisfies $\img(f) = \one$;

\item \label{lem:img:join} $\img([f,g]) = \img(f) \vee \img(g)$, for
  predicates $f,g$ with the same codomain.
\end{enumerate}
\end{lemma}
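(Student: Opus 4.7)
The plan is to prove the twelve parts in the order given, so that each part can lean on previous ones together with the universal property of $\img(f)$ (least $q$ with $\pbox{f}(q) = \one$), the basic facts about $\pbox{(-)}$ from Exercise~\ref{exc:subst}, and the kernel toolkit of Lemma~\ref{lem:ker}. The backbone throughout is the identity $\pbox{f}(p^\bot) = \pbox{f}(p)^\bot \ovee \ker(f)$ of Lemma~\ref{lem:ker}~\eqref{lem:ker:squareneg} and the contravariant dictionary between equations of partial maps and predicate inequalities.

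For (1) I instantiate the squareneg identity at $p = \img(f)$ and use $\pbox{f}(\img(f)) = \one$ to collapse the right-hand side to $\ker(f)$. Part (2) then drops out of (1), Lemma~\ref{lem:ker}~\eqref{lem:ker:zero}, and Lemma~\ref{lem:imgexists}~\eqref{lem:imgexists:zero}. For (3) I rewrite $p \pafter f = \zero$ as $\pbox{f}(p^\bot) = \one$ via Exercise~\ref{exc:subst}~\eqref{exc:subst:partial} and read off the equivalence $p \leq \imgbot(f)$ from the universal property. Parts (4) and (5) exploit the functoriality $\pbox{(g \pafter f)} = \pbox{f} \after \pbox{g}$ applied to $\img(g)$ and $\img(g \pafter f)$ respectively, while (6) is the immediate combination of the two.

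Part (7) is the one spot where a small trick is needed. I apply Lemma~\ref{lem:ker}~\eqref{lem:ker:zero} to the composite $g \pafter f$, using $\kerbot(g \pafter f) = \one \pafter g \pafter f = \kerbot(g) \pafter f$, to rewrite $g \pafter f = \zero$ as $\kerbot(g) \pafter f = \zero$; part (3) then converts this into $\kerbot(g) \leq \imgbot(f)$, which is the desired $\img(f) \leq \ker(g)$. Parts (8)--(10) are short: (8) follows directly from (4); (9) uses (6) together with Lemma~\ref{lem:imgexists}~\eqref{lem:imgexists:id}; (10) is a direct rephrasing of the universal property. For (11), part~(3) gives $\imgbot(f) \pafter f = \zero = \zero \pafter f$, and external epicness of $f$ forces $\imgbot(f) = \zero$.

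For (12), the inequalities $\img(f), \img(g) \leq \img([f,g])$ come from part~(4) applied to $[f,g] \pafter \klin{\kappa_i}$ (using $g \pafter \klin{h} = g \tafter h$). The reverse direction hinges on the calculation $\pbox{[f,g]}(q) = [\pbox{f}(q), \pbox{g}(q)]$ and the fact that $\one_{X+Y} = [\one_X, \one_Y]$: any common upper bound $q$ of $\img(f)$ and $\img(g)$ then gives $\pbox{f}(q) = \pbox{g}(q) = \one$ by monotonicity (Lemma~\ref{lem:partialmonotone}), hence $\pbox{[f,g]}(q) = \one$, so $\img([f,g]) \leq q$. This simultaneously witnesses that the join $\img(f) \vee \img(g)$ exists in $\Pred$ and coincides with $\img([f,g])$. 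The only genuine obstacle I anticipate is part~(7): one must line up the kernel/image dualities correctly to convert a zero equation of partial maps into a predicate inequality without circularly invoking the universal property of $\img$.
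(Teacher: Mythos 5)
Your proposal is correct and follows essentially the same strategy as the paper's proof: the universal property of images, the identity $\pbox{f}(p^{\bot}) = \pbox{f}(p)^{\bot} \ovee \ker(f)$, and the functoriality $\pbox{(g\pafter f)} = \pbox{f}\after\pbox{g}$ carry all twelve parts exactly as in the text. The only local deviations are in part (7), where you route through $\kerbot(g\pafter f) = \kerbot(g)\pafter f$ and part (3) instead of the paper's use of $\ker(g\pafter f) = \pbox{f}(\ker(g))$ with minimality, and in part (12), where you verify the least-upper-bound property directly via $\pbox{[f,g]}(q) = [\pbox{f}(q), \pbox{g}(q)]$ rather than the paper's chain of equivalences through part (3) --- both variants are valid.
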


\begin{proof}
All of the points are obtained by elementary arguments.
\begin{enumerate}
\item The equation $\ker(f) = \pbox{f}\big(\imgbot(f)\big)$
  follows directly from Lemma~\ref{lem:ker}~\eqref{lem:ker:squareneg}:
$$\begin{array}{rcccccl}
\pbox{f}\big(\imgbot(f)\big)
& = &
\pbox{f}(\img(f))^{\bot} \ovee \ker(f)
& = &
\one^{\bot} \ovee \ker(f)
& = &
\ker(f).
\end{array}$$

\item If $\img(f) = 0$, then by the previous point:
$$\begin{array}{rcccccccl}
\ker(f)
& = &
\pbox{f}\big(\imgbot(f)\big)
& = &
\pbox{f}(\zero^{\bot})
& = &
\pbox{f}(\one)
& = &
\one.
\end{array}$$

\noindent Conversely, if $1 = \ker(f) = \pbox{f}(0)$, then 
$\img(f) \leq 0$ since it is minimal.

\item First, we have:
$$\begin{array}{rcl}
\imgbot(f) \pafter f
& = &
[\imgbot(f), \kappa_{2}] \tafter f \\
& = &
[\kappa_{2}, \kappa_{1}] \tafter [\img(f), \kappa_{1}] \tafter f 
\hspace*{\arraycolsep}=\hspace*{\arraycolsep}
\pbox{f}(\img(f))^{\bot}
\hspace*{\arraycolsep}=\hspace*{\arraycolsep}
\one^{\bot}
\hspace*{\arraycolsep}=\hspace*{\arraycolsep}
\zero.
\end{array}$$

\noindent Hence if $p \leq \imgbot(f)$, then $p \pafter f \leq
\imgbot(f) \pafter f = \zero$, using Lemma~\ref{lem:partialmonotone}.

In the other direction, if $p \pafter f = \zero$, then:
$$\begin{array}{rcccccl} \pbox{f}(p^{\bot}).  & = & (p \pafter
  f)^{\bot} & = & \zero^{\bot} & = & \one.
\end{array}$$

\noindent Hence $\img(f) \leq p^{\bot}$ by minimality of images,
and so $p \leq \img(f)^{\bot} = \imgbot(f)$.

\item In order to prove $\img(g \pafter f) \leq \img(g)$ it suffices
  to show $\pbox{(g \pafter f)}(\img(g)) = \one$. But the latter is
  easy, since $\pbox{(g \pafter f)}(\img(g)) =
  \pbox{f}\big(\pbox{g}(\img(g))\big) = \pbox{f}(\one) = \one$.

\item The inequality $\img(f) \leq \pbox{g}\big(\img(g \pafter
  f)\big)$ follows by minimality of images from:
$$\begin{array}{rcccl}
\pbox{f}\Big(\pbox{g}\big(\img(g \pafter f)\big)\Big)
& = &
\pbox{(g \pafter f)}\big(\img(g \pafter f)\big)
& = &
\one.
\end{array}$$

\item By point~\eqref{lem:img:comp} we only have to prove $\img(g
  \pafter f) \geq \img(g)$. This follows if we can show
  $\pbox{g}\big(\img(g \pafter f)\big) = \one$. But this is a
  consequence of the previous point, since $\img(f)=\one$ because $f$
  is by assumption an internal epi.

\item Let $g \pafter f = \zero$. Then, by
  Lemma~\ref{lem:ker}~\eqref{lem:ker:comp}, we have $\pbox{f}(\ker(g))
  = \ker(g \pafter f) = \ker(\zero) = \one$, so that $\img(f) \leq
  \ker(g)$.  Conversely, if $\img(f) \leq \ker(g)$, then $\one =
  \pbox{f}(\img(f)) \leq \pbox{f}(\ker(g)) = \ker(g \pafter f)$. But
  then $g \pafter f = \zero$ by Lemma~\ref{lem:ker}~\eqref{lem:ker:zero}.

\item Let $g\pafter f$ and $f$ be internally epic. Then $g$ is
  internally epic too, by point~\eqref{lem:img:compeq} since: $\img(g)
  = \img(g \pafter f) = \one$.

\item Lemma~\ref{lem:imgexists}~\eqref{lem:imgexists:id} says that the
  partial identity is internally epic. And if $g, f$ are internal
  epis, then by point~\eqref{lem:img:compeq}, $g \pafter f$ too, since
  $\img(g \pafter f) = \img(g) = \one$.

\item Let $f$ be internally epic. Clearly, if $q = \one$, then
  $\pbox{f}(q) = \pbox{f}(\one) = \one$. In the other direction, if
  $\pbox{f}(q) = \one$, then, by minimality of images, $\one = \img(f)
  \leq q$.

Conversely, assume $\pbox{f}(q) = \one \Longleftrightarrow q=\one$ for
each predicate $q$.  Take $q = \img(f)$; since by definition
$\pbox{f}(\img(f)) = \one$, we get $q=\img(f)=\one$, so that $f$ is
internally epic.

\item Let $f\colon X \pto Y$ be epic in $\Par(\cat{B})$.  By
  point~\eqref{lem:img:post} we have $\imgbot(f) \pafter f = \zero$.
  But also $0 \pafter f = \zero$. Hence $\imgbot(f) = \zero$ since $f$
  is epic, and thus $\img(f) = \one$, so that $f$ is internally
  epic.

\item We use point~\eqref{lem:img:post} to show that for an arbitrary
  predicate $p$,
$$\begin{array}[b]{rcl}
\img([f, g]) \leq p
& \Longleftrightarrow &
p^{\bot} \leq \imgbot([f,g]) \\
& \Longleftrightarrow &
p^{\bot} \pafter [f,g] = [p^{\bot} \pafter f, p^{\bot} \pafter g] = \zero \\
& \Longleftrightarrow &
p^{\bot} \pafter f = \zero \mbox{ and } p^{\bot} \pafter g = \zero \\
& \Longleftrightarrow &
p^{\bot} \leq \imgbot(f) = \zero \mbox{ and } p^{\bot} \leq \imgbot(g) = \zero \\
& \Longleftrightarrow &
\img(f) \leq p \mbox{ and } \img(g) \leq p.
\end{array}\eqno{\QEDbox}$$
\end{enumerate}
\end{proof}

\section{Relating total and partial maps}\label{sec:partialtotal}

This section contains the main result of~\cite{Cho15}, giving a
precise relation between total and partial maps in an effectus.
Briefly certain requirements $\mathcal{R}$ are identified so that:
\begin{itemize}
\item for a category $\cat{C}$ satisfying $\mathcal{R}$, the
  subcategory $\Tot(\cat{C})$ of `total' maps in $\cat{C}$ is an
  effectus, and $\Par(\Tot(\cat{C})) \cong \cat{C}$;

\item for an effectus $\cat{B}$, the category of partial maps
  $\Par(\cat{B})$ satisfies $\mathcal{R}$, and $\Tot(\Par(\cat{B}))
  \cong \cat{B}$.
\end{itemize}

\noindent A category satisfying $\mathcal{R}$ is called a \emph{FinPAC
  with effects} in~\cite{Cho15}. This terminology is explained below.

This correspondence is made precise in~\cite{Cho15} in the form of a
2-equivalence of 2-categories. Here however, we discuss the essentials
in more concrete form, and refer the reader to \textit{loc.\ cit.} for
further details. This close correspondence between total and partial
maps can be seen as a confirmation of the appropriateness of the
notion of effectus. The correspondence and its consquences --- for
notation and terminology --- are discussed at the end of this section.

We first have to explore the notion of finitely partially additive
category, or FinPAC.  This is based on the notion of partially
additive category, introduced in~\cite{ArbibM1980}.

\subsection{FinPACs}\label{subsec:FinPAC}

Recall that an arbitrary category $\cat{C}$ is $\PCM$-enriched if all
its homsets are PCMs with $\ovee, 0$, and pre- and post-composition
preserve the PCM-structure.  If such a category $\cat{C}$ has
coproducts $+$ too, then we can define `projections' $\rhd_{i}$ via
the zero map $\zero$, like in~\eqref{diag:partprojpart}, as:
\begin{equation}
\label{diag:partprojFinPAC}
\vcenter{\xymatrix@C+2pc{
X & X+Y\ar[l]_-{\rhd_{1} = [\idmap, \zero]}\ar[r]^-{\rhd_{2} = [\zero, \idmap]} 
   & Y
}}
\end{equation}

\noindent These projections are automatically natural: $\rhd_{i}
\after (f_{1}+f_{2}) = f_{i} \after \rhd_{i}$.

\auxproof{
$$\begin{array}{rcl}
\rhd_{1} \after (f_{1}+f_{2})
& = &
[\idmap, 0] \after (f_{1}+f_{2}) \\
& = &
[f_{1}, 0 \after f_{2}] \\
& = &
[f_{1}, 0] \\
& = &
f_{1} \after [\idmap, 0] \\
& = &
f_{1} \after \rhd_{1}.
\end{array}$$
}

Since $\cat{C}$ is any category --- not necessarily an effectus --- we
shall use ordinary notation, like $\after$ and $\rightarrow$, for
composition and maps, and not the special notation $\tafter, \pafter$
and $\tto, \pto$ for effectuses.

\begin{definition}
\label{def:FinPAC}
A finitely partially additive category\index{S}{finitely additive
  category}\index{S}{category!finitely additive --} (a
FinPAC,\index{S}{FinPAC|see{finitely additive category}} for short) is a
$\PCM$-enriched category $\cat{C}$ with finite coproducts $(+, 0)$
satisfying both the:
\begin{itemize}
\item \emph{Compatible Sum Axiom}: if there is a bound $b \colon X
  \rightarrow Y+Y$ for maps $f,g \colon X \rightarrow Y$ with
  $\rhd_{1} \after b = f$ and $\rhd_{2} \after b = g$, then $f
  \orthogonal g$ in the PCM $\cat{C}(X, Y)$;

\item \emph{Untying Axiom}: if $f \orthogonal g$ then $(\kappa_{1}
\after f) \orthogonal (\kappa_{2} \after g)$.
\end{itemize}
\end{definition}

The names for these two axioms come from the theory of partially
additive categories, see~\cite{ArbibM1980,ManesA86}
or~\cite{Haghverdi00}. The maps $\rhd_{i}$ are often called
\emph{quasi} projections in that context.

We need some basic results about FinPACs,
following~\cite{ManesA86,Cho15}.

\begin{lemma}
\label{lem:FinPAC}
In a FinPAC $\cat{C}$,
\begin{enumerate}
\item \label{lem:FinPAC:zero} The initial object $0\in\cat{C}$ is also
  final, and thus a zero object; the resulting zero map $X \rightarrow
  0 \rightarrow Y$ is the zero element $\zero$ of the PCM-structure on
  the homset $\cat{C}(X, Y)$;

\item \label{lem:FinPAC:id} the maps $\kappa_{1} \after \rhd_{1},
  \kappa_{2} \after \rhd_{2} \colon X+Y \rightarrow X+Y$ are
  orthogonal with sum $(\kappa_{1} \after \rhd_{1}) \ovee (\kappa_{2}
  \after \rhd_{2}) = \idmap$;

\item \label{lem:FinPAC:f} any map $f \colon Z \rightarrow X+Y$ can be
  written as sum $f = (\kappa_{1} \after \rhd_{1} \after f) \ovee
  (\kappa_{2} \after \rhd_{2} \after f)$;

\item \label{lem:FinPAC:jm} the two projection maps $\rhd_{i} \colon
  X_{1} + X_{2} \rightarrow X_{i}$ are jointly monic;

\item \label{lem:FinPAC:ortho} $f_{1} \orthogonal f_{2}$ iff there is
  a necessarily unique bound $b$ with $\rhd_{i} \after b = f_{i}$ and
  $f_{1} \ovee f_{2} = \nabla \after b$.
\end{enumerate}
\end{lemma}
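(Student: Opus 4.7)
The plan is to derive the five points in the order given, each building on the earlier ones.

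For (1), I would first show that $0$ is final. By initiality of $0$ the homset $\cat{C}(0,0)$ has the single element $\idmap_0$, but being a PCM it also contains a zero element $\zero_{0,0}$; hence $\idmap_0 = \zero_{0,0}$. For an arbitrary $f\colon X \to 0$, since post-composition preserves the PCM zero,
\[
f \;=\; \idmap_0 \after f \;=\; \zero_{0,0}\after f \;=\; \zero_{X,0},
\]
so $f$ is unique and $0$ is a zero object. Dually the map $\bang\colon 0 \to Y$ equals $\zero_{0,Y}$, and the composite $X \to 0 \to Y$ equals $\zero_{0,Y}\after\zero_{X,0} = \zero_{X,Y}$, again using preservation of zero under composition.

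For (2), the key observation is to use the map
\[
b \;=\; [\kappa_1\after\kappa_1,\; \kappa_2\after\kappa_2] \;\colon\; X+Y \longrightarrow (X+Y)+(X+Y).
\]
Using the cotuple equations $\rhd_1\after\kappa_1 = \idmap$, $\rhd_1\after\kappa_2 = \zero$ and preservation of $\zero$ under composition, a direct computation gives $\rhd_1\after b = [\kappa_1,\zero] = \kappa_1\after\rhd_1$ and, symmetrically, $\rhd_2\after b = \kappa_2\after\rhd_2$. The Compatible Sum Axiom then furnishes $(\kappa_1\after\rhd_1)\orthogonal(\kappa_2\after\rhd_2)$, and their sum is $\nabla\after b = [\kappa_1,\kappa_2] = \idmap$. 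Point (3) is immediate by precomposing this identity with $f$, using that pre-composition is a PCM homomorphism on homsets. Point (4) then follows from (3): if $\rhd_i\after f = \rhd_i\after g$ for $i=1,2$, applying (3) to both sides forces $f=g$.

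For (5), the direction ($\Leftarrow$) is just a restatement of the Compatible Sum Axiom together with the definition $f_1 \ovee f_2 = \nabla\after b$. For ($\Rightarrow$), given $f_1 \orthogonal f_2$ the Untying Axiom yields $(\kappa_1\after f_1)\orthogonal(\kappa_2\after f_2)$, so we may define
\[
b \;=\; (\kappa_1\after f_1)\;\ovee\;(\kappa_2\after f_2).
\]
Distributing composition over $\ovee$ and using $\rhd_i\after\kappa_j \in \{\idmap,\zero\}$ gives $\rhd_i\after b = f_i \ovee \zero = f_i$; likewise $\nabla\after b = f_1 \ovee f_2$. Uniqueness of $b$ with the projection equations comes from point (4).

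The only mildly subtle step is (1): the trick that $\cat{C}(0,0)$ is \emph{simultaneously} a singleton (by initiality) and a PCM (hence contains $\zero_{0,0}$) is easy to overlook but makes everything else fall out. The remaining points are essentially mechanical consequences of the two FinPAC axioms and the naturality of $\rhd_i$ already recorded in the paragraph preceding the definition.
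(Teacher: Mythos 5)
Your overall architecture (order of the five points, the bound $b=[\kappa_1\after\kappa_1,\kappa_2\after\kappa_2]=\kappa_1+\kappa_2$ in (2), the singleton-homset argument for $\idmap_0=\zero$ in (1), and the Untying construction $b=(\kappa_1\after f_1)\ovee(\kappa_2\after f_2)$ in (5)) is exactly the paper's. But there is one genuine logical slip, in part (2), and it stems from a conceptual confusion that also surfaces in your phrasing of (5): in a FinPAC the partial sum $\ovee$ is \emph{primitive} PCM data on each homset, not something defined via bounds. The identity $f_1\ovee f_2=\nabla\after b$ for a bound $b$ is not ``the definition'' (that is how things work in an \emph{effectus}, where Proposition~\ref{prop:effectusPCM} defines $\ovee$ that way); here it is precisely the content of part (5), whose proof runs through (4), (3) and (2). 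So computing ``their sum is $\nabla\after b$'' inside the proof of (2) is circular: at that point you only know from the Compatible Sum Axiom that the two maps are orthogonal, with no handle yet on what their sum is.

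The repair is short and is what the paper does: set $s=(\kappa_1\after\rhd_1)\ovee(\kappa_2\after\rhd_2)$ and use only that pre-composition is a PCM homomorphism to compute
\[
s\after\kappa_1=(\kappa_1\after\rhd_1\after\kappa_1)\ovee(\kappa_2\after\rhd_2\after\kappa_1)=\kappa_1\ovee\zero=\kappa_1,
\]
and symmetrically $s\after\kappa_2=\kappa_2$; then $s=\idmap$ by the universal property of the coproduct. With that substitution, (3), (4) and (5) go through exactly as you wrote them — note that in (5) the equation $f_1\ovee f_2=\nabla\after b$ is \emph{derived} for your constructed bound by distributing $\nabla\after(-)$ over $\ovee$, and transferred to an arbitrary bound by the uniqueness from (4), rather than being assumed.
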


\begin{proof}
For the sake of completeness, we include the details.
\begin{enumerate}
\item For each object $X\in\cat{C}$ there is a map $\zero \colon X
  \rightarrow 0$, namely the PCM-zero. It is the only such map, since
  each $f\colon X \rightarrow 0$ satisfies: $f = \idmap \after f = 0
  \after f = 0$. The resulting map $X \rightarrow 0 \rightarrow Y$ is
  thus $\bang \after \zero = \zero$.

\item Take as bound $b = \kappa_{1}+\kappa_{2} \colon X+Y \rightarrow
  (X+Y)+(X+Y)$. Then: 
$$\begin{array}{rcccccccl}
\rhd_{1} \after b
& = &
[\idmap, \zero] \after (\kappa_{1}+\kappa_{2})
& = &
[\kappa_{1}, \zero]
& = &
\kappa_{1} \after [\idmap, \zero]
& = &
\kappa_{1} \after \rhd_{1}.
\end{array}$$

\noindent Similarly one obtains $\rhd_{2} \after b = \kappa_{2} \after
\rhd_{2}$. This gives $(\kappa_{1} \after \rhd_{1}) \orthogonal
(\kappa_{2} \after \rhd_{2})$ by the Compatible Sum Axiom. We take
their sum $s = (\kappa_{1} \after \rhd_{1}) \ovee (\kappa_{2} \after
\rhd_{2})$ in the homset of maps $X+Y \rightarrow X+Y$ and obtain $s =
\idmap$ from $s \after \kappa_{i} = \kappa_{i}$, as in:
$$\begin{array}{rcl}
s \after \kappa_{2}
& = &
\big((\kappa_{1} \after \rhd_{1}) \ovee (\kappa_{2} \after \rhd_{2})\big) 
   \after \kappa_{2} \\
& = &
(\kappa_{1} \after [\idmap, \zero] \after \kappa_{2}) \ovee 
   (\kappa_{2} \after [\zero, \idmap] \after \kappa_{2}) \\
& = &
(\kappa_{1} \after \zero) \ovee \kappa_{2} \\
& = &
0 \ovee \kappa_{2} \\
& = &
\kappa_{2}.
\end{array}$$

\item Directly by the previous point, for $f\colon Z \rightarrow X+Y$,
$$\!\!\begin{array}{rcccccl}
f
& = &
\idmap \after f
& = &
\big((\kappa_{1} \after \rhd_{1}) \ovee (\kappa_{2} \after \rhd_{2})\big) \after f
& = &
(\kappa_{1} \after \rhd_{1} \after f) \ovee 
   (\kappa_{2} \after \rhd_{2} \after f).
\end{array}$$

\item Suppose $f,g\colon Z \rightarrow X_{1}+X_{2}$ satisfy $\rhd_{i}
  \after f = \rhd_{i} \after g$, for $i=1,2$. Then, by the previous
  point:
$$\begin{array}{rcccccl}
f
& = &
(\kappa_{1} \after \rhd_{1} \after f) \ovee 
   (\kappa_{2} \after \rhd_{2} \after f)
& = &
(\kappa_{1} \after \rhd_{1} \after g) \ovee 
   (\kappa_{2} \after \rhd_{2} \after g)
& = &
g.
\end{array}$$

\item If a bound exists, then it is unique because the $\rhd_{i}$ are
  jointly monic. The Compatible Sum Axiom says that existence of a
  bound $b$ for $f_{1}, f_{2}$ gives $f_{1} \orthogonal f_{2}$. We
  have to prove the converse. So let $f_{1} \orthogonal f_{2}$, and
  thus $(\kappa_{1} \after f_{1}) \orthogonal (\kappa_{2} \after
  f_{2})$ by the Untying Axiom. We take $b = (\kappa_{1} \after f_{1})
  \ovee (\kappa_{2} \after f_{2})$. Then $\rhd_{i} \after b = f_{i}$,
  making $b$ a bound. Moreover:
$$\begin{array}{rcccl}
\nabla \after b
& = &
(\nabla \after \kappa_{1} \after f_{1}) \ovee 
   (\nabla \after \kappa_{2} \after f_{2})
& = &
f_{1} \ovee f_{2}.
\end{array}\eqno{\QEDbox}$$

\auxproof{
$$\begin{array}{rcccccl}
\rhd_{1} \after b
& = &
(\rhd_{1} \after \kappa_{1} \after f_{1}) \ovee 
   (\rhd_{1} \after \kappa_{2} \after f_{2})
& = &
f_{1} \ovee (0 \after f_{2})
& = &
f_{1}.
\end{array}$$
}
\end{enumerate}
\end{proof}

The next result shows the relevance of FinPACs in the current setting.

\begin{lemma}
\label{lem:effectusFinPAC}
The category $\Par(\cat{B})$ of partial maps in an effectus $\cat{B}$
is a FinPAC.
\end{lemma}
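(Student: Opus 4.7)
The plan is to verify the three ingredients of a FinPAC for $\Par(\cat{B})$: $\PCM$-enrichment, existence of finite coproducts with the correct zero, and the Compatible Sum and Untying Axioms. Two of these are essentially already in hand. Proposition~\ref{prop:effectusPCM} gives the $\PCM$-structure on each homset $\Par(\cat{B})(X,Y)$ and its preservation by pre- and post-composition, which is $\PCM$-enrichment. Finite coproducts $(+,0)$ are inherited from $\cat{B}$ (Section~\ref{sec:notation}), with $0$ a zero object; consequently the FinPAC quasi-projections $[\idmap,\zero]$ and $[\zero,\idmap]$ of~\eqref{diag:partprojFinPAC} literally coincide with the partial projections $\rhd_i$ of~\eqref{diag:partprojpart}, so the two notations agree.

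With these alignments in place, the Compatible Sum Axiom is immediate: the definition of orthogonality $f \orthogonal g$ given in Proposition~\ref{prop:effectusPCM}~\eqref{prop:effectusPCM:hom} is exactly the existence of a bound $b\colon X \pto Y+Y$ with $\rhd_1 \pafter b = f$ and $\rhd_2 \pafter b = g$. (Uniqueness of the bound, if needed, comes from joint monicity of $\rhd_1,\rhd_2$ via Lemma~\ref{lem:effectusjm}.)

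The only axiom requiring an actual construction is Untying. Given $f \orthogonal g \colon X \pto Y$ with bound $b \colon X \pto Y+Y$, I would propose the map
\[
 b' \;=\; (\kappa_{1} \pplus \kappa_{2}) \pafter b \;\colon\; X \pto (Y+Y) + (Y+Y)
\]
as a bound for $\kappa_{1} \pafter f$ and $\kappa_{2} \pafter g$. Using naturality of the partial projections~\eqref{eqn:partprojnat}, one computes
\[
 \rhd_{1} \pafter b' \;=\; \kappa_{1} \pafter \rhd_{1} \pafter b \;=\; \kappa_{1} \pafter f,
 \qquad
 \rhd_{2} \pafter b' \;=\; \kappa_{2} \pafter \rhd_{2} \pafter b \;=\; \kappa_{2} \pafter g,
\]
which verifies $(\kappa_{1} \pafter f) \orthogonal (\kappa_{2} \pafter g)$ and completes the check.

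There is no real obstacle here: everything reduces to assembling previously established facts (PCM-enrichment, the inherited coproduct structure, and naturality of partial projections). The mild subtlety worth flagging is just the coincidence of the two presentations of $\rhd_i$ (via $[\idmap,\zero]$ on the one hand and via~\eqref{diag:partprojtot}/\eqref{diag:partprojpart} on the other), since without that observation the two axioms would appear to speak about different maps.
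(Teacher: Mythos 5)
Your proof is correct and follows essentially the same route as the paper: PCM-enrichment from Proposition~\ref{prop:effectusPCM}, coproducts inherited as in any Kleisli category, Compatible Sum immediate from the definition of orthogonality, and Untying via the bound $(\kappa_{1}\pplus\kappa_{2}) \pafter b$ together with naturality~\eqref{eqn:partprojnat}. The remark on identifying the FinPAC quasi-projections with the partial projections $\rhd_i$ is a sensible extra clarification but does not change the argument.
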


\begin{proof}
We know that $\Par(\cat{B})$ is enriched over $\PCM$ by
Proposition~\ref{prop:effectusPCM}, and inherits coproducts from
$\cat{B}$, like any Kleisli category. The Compatible Sum Axiom holds
by definition of orthogonality, see
Proposition~\ref{prop:effectusPCM}~\eqref{prop:effectusPCM:hom}.  For
the Untying Axiom, let $f_{1}, f_{2} \colon X \pto Y$ satisfy $f_{1}
\orthogonal f_{2}$ via bound $b \colon X \pto Y+Y$. Then $c =
(\kappa_{1}\pplus\kappa_{2}) \pafter b \colon X \pto (Y+Y)+(Y+Y)$ is a
bound for $\kappa_{i} \after f_{i}$ since:
$$\begin{array}{rcccccl}
\rhd_{i} \pafter c
& = &
\rhd_{i} \pafter (\kappa_{1}\pplus\kappa_{2}) \pafter b
& = &
\kappa_{i} \pafter \rhd_{i} \pafter b
& = &
\kappa_{i} \pafter f_{i}.
\end{array}\eqno{\QEDbox}$$
\end{proof}

\subsection{FinPACs with effects}\label{subsec:FinPACwE}

We now come to the axiomatisation of the category of partial maps in
an effectus. We use the name `FinPAC with effects' from~\cite{Cho15}.
This is a temporary name, see Discussion~\ref{dis:partialtotal} below.

\begin{definition}
\label{def:FinPACwE}
A category $\cat{C}$ is called a \emph{FinPAC with
  effects}\index{S}{FinPAC!-- with effects} if it is a FinPAC with a
special object $I\in\cat{C}$ such that:
\begin{enumerate}
\item \label{def:FinPACwE:EA} the homset $\cat{C}(X,I)$ is not only a
  PCM, but an effect algebra, for each object $X\in\cat{C}$;

\item \label{def:FinPACwE:ortho} the top/truth element $\one \in \cat{C}(X,
  I)$ satisfies: for all $f,g \colon Y \rightarrow X$,
$$\begin{array}{rcl}
(\one \after f) \orthogonal (\one \after g)
& \Longrightarrow &
f \orthogonal g
\end{array}$$

\item \label{def:FinPACwE:zero} the bottom/falsity element $\zero\in
  \cat{C}(X, I)$ satisfies: for all $f\colon Y \rightarrow X$,
$$\begin{array}{rcl}
\one \after f = \zero
& \Longrightarrow &
f = \zero.
\end{array}$$
\end{enumerate}

\noindent These last two points say that the function $\one \after
(-)$ reflects orthogonality and zero.

A map $f\colon X \rightarrow Y$ in such a FinPAC with effects is
called \emph{total}\index{S}{total map!-- in a FinPAC with effects} if
$\one_{Y} \after f = \one_{X}$. We write $\Tot(\cat{C})
\hookrightarrow \cat{C}$\index{N}{cat@$\Tot(\cat{C})$, category of
  total maps in a FinPAC with effects $\cat{C}$} for the `wide'
subcategory (with the same objects) of total maps in $\cat{C}$.
\end{definition}

These top maps $\one \colon X \rightarrow I$ resemble the ground maps
$\ground \colon X \rightarrow I$ in Definition~\ref{def:biprodground}.
Recall that causal maps $f$ satisfy $\ground \after f = \ground$.  The
corresponding property $\one \after f = \one$ describes the total
maps, as defined above.

Before arriving at the main result of this section, we collect a few
facts about FinPACs with effects.

\begin{lemma}
\label{lem:FinPACwE}
In an FinPAC with effects $(\cat{C}, I)$,
\begin{enumerate}
\item \label{lem:FinPACwE:splitmonic} split monics are total, so
  in particular all coprojections and isomorphisms are total;

\item \label{lem:FinPACwE:idI} $\idmap[I] = \one \colon I \rightarrow
  I$; as a result, $I\in\cat{C}$ is final in $\Tot(\cat{C})$;

\item \label{lem:FinPACwE:onecoptuple} $[\one, \one] = \one \colon X+Y
  \rightarrow I$;

\item \label{lem:FinPACwE:pres} for each total map $f\colon X
  \rightarrow Y$ pre-composition $(-) \after f \colon \cat{C}(Y, I)
  \rightarrow \cat{C}(X,I)$ is a map of effect algebras;

\item \label{lem:FinPACwE:coprod} $\Tot(\cat{C})$ inherits finite
  coproducts $(+,0)$ from $\cat{C}$.
\end{enumerate}
\end{lemma}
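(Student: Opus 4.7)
The plan is to dispatch the five points in turn, with each following from a short calculation in the effect algebra $\cat{C}(X, I)$ together with the $\PCM$-enrichment and the two reflection axioms from Definition~\ref{def:FinPACwE}.

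For point~\eqref{lem:FinPACwE:splitmonic}, first observe that $\PCM$-enrichment makes pre- and post-composition monotone for the effect-algebra order on the homsets into $I$ (since $p \leq q$ iff $p \ovee r = q$ for some $r$, and composition preserves $\ovee$ and $\zero$). Given a split mono $f\colon X \rightarrow Y$ with left inverse $g$, the chain $\one_X = \one_X \after g \after f \leq \one_Y \after f \leq \one_X$ --- where the first inequality uses $\one_X \after g \leq \one_Y$ and monotonicity of $(-) \after f$, and the second uses that $\one_X$ is the top of $\cat{C}(X, I)$ --- forces $\one_Y \after f = \one_X$. Coprojections are split by the quasi-projections $\rhd_{i}$ of the FinPAC structure, and isomorphisms are automatically split monic.

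The only mildly delicate step is point~\eqref{lem:FinPACwE:idI}. The trick is to evaluate $\one_I \after \one_I$ via the orthosupplement $\idmap_I^{\bot}$ in the effect algebra $\cat{C}(I, I)$. Post-composition preserves $\ovee$, so from $\idmap_I \ovee \idmap_I^{\bot} = \one_I$ one obtains $\one_I \after \one_I = (\one_I \after \idmap_I) \ovee (\one_I \after \idmap_I^{\bot}) = \one_I \ovee (\one_I \after \idmap_I^{\bot})$. Since this last sum is defined, the effect-algebra axiom $x \orthogonal \one \Rightarrow x = \zero$ (Exercise~\ref{exc:EA}~\eqref{exc:EA:Ortho}, rephrased) yields $\one_I \after \idmap_I^{\bot} = \zero$, and applying the zero-reflection axiom~\eqref{def:FinPACwE:zero} gives $\idmap_I^{\bot} = \zero$, hence $\idmap_I = \zero^{\bot} = \one_I$. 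Finality of $I$ in $\Tot(\cat{C})$ is then immediate: $\one_X \colon X \rightarrow I$ is total because $\one_I \after \one_X = \idmap_I \after \one_X = \one_X$, and any total $f\colon X \rightarrow I$ equals $\idmap_I \after f = \one_I \after f = \one_X$.

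The remaining points fall out quickly. For~\eqref{lem:FinPACwE:onecoptuple}, let $q$ be the orthosupplement of $[\one_X, \one_Y]$ in $\cat{C}(X+Y, I)$; precomposing $[\one_X, \one_Y] \ovee q = \one_{X+Y}$ with the (total, by~\eqref{lem:FinPACwE:splitmonic}) coprojections $\kappa_{i}$ gives $\one_{X_i} \ovee (q \after \kappa_{i}) = \one_{X+Y} \after \kappa_{i} = \one_{X_i}$, and cancellation in the effect algebra (Exercise~\ref{exc:EA}~\eqref{exc:EA:Canc}) forces $q \after \kappa_{i} = \zero$; the universal property of the coproduct then yields $q = \zero$, hence $[\one, \one] = \one_{X+Y}$. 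For~\eqref{lem:FinPACwE:pres}, $\PCM$-enrichment already gives preservation of $\zero$ and $\ovee$, while totality of $f$ is exactly preservation of the top. For~\eqref{lem:FinPACwE:coprod}, the initial object $0$ remains initial in $\Tot(\cat{C})$ because $\cat{C}(0, I)$ is a singleton (as $0$ is a zero object by Lemma~\ref{lem:FinPAC}~\eqref{lem:FinPAC:zero}), so the unique arrow $0 \rightarrow X$ is trivially total; coprojections are total by~\eqref{lem:FinPACwE:splitmonic}, and cotupling preserves totality via $\one_Z \after [f,g] = [\one_Z \after f, \one_Z \after g] = [\one_X, \one_Y] = \one_{X+Y}$ using~\eqref{lem:FinPACwE:onecoptuple}. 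No real obstacle is anticipated.
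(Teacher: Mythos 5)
Your proof is correct and follows essentially the same route as the paper's: the split-mono argument via monotonicity of composition, the $\idmap_I=\one$ step via orthogonality of $\idmap_I$ and $\idmap_I^{\bot}$ plus the axiom $x\orthogonal\one\Rightarrow x=\zero$ and zero-reflection, and the remaining points from PCM-enrichment and totality of coprojections all match. The only (immaterial) variation is in point~\eqref{lem:FinPACwE:onecoptuple}, where the paper simply computes $\one=\one\after[\kappa_{1},\kappa_{2}]=[\one\after\kappa_{1},\one\after\kappa_{2}]=[\one,\one]$ directly, whereas you reach the same conclusion by cancelling an orthosupplement componentwise.
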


\begin{proof}
We reason in the category $\cat{C}$.
\begin{enumerate}
\item Let $f \after m = \idmap$, making $m$ a split monic. We have
  $\one \after f \leq \one$, since $1$ is by definition the top
  element. Hence by post-composing with $m$ we get: $\one = \one
  \after f \after m \leq \one \after m$, so that $\one = \one \after
  m$.  Coprojections $\kappa_{i}$ are split monics in $\cat{C}$, since
  $\rhd_{i} \after \kappa_{i} = \idmap$.

\item In the effect algebra $\cat{C}(I, I)$ we have $\idmap
  \orthogonal \idmap^{\bot}$. Hence $(\one \after \idmap) \orthogonal
  (\one \after \idmap^{\bot})$ since post-composition is a
  PCM-map. But then $\one \after \idmap^{\bot} = \zero$ by
  Definition~\ref{def:EA}~\eqref{def:EA:ortho}. This gives
  $\idmap^{\bot} = \zero$ by
  Definition~\ref{def:FinPACwE}~\eqref{def:FinPACwE:zero}, and thus
  $\idmap = \one$.

For each object $X$ there is a total map $\one_{X} \colon X
\rightarrow I$, since $\one_{X} = \idmap[I] \after \one_{X} = \one_{I}
\after \one_{X}$. If $f\colon X \rightarrow I$ is total, then $f =
\idmap[I] \after f = \one_{I} \after f = \one_{X}$.

\item For the equation $[\one,\one] = \one \colon X+Y \rightarrow I$
  we use that coprojections are total, by
  point~\eqref{lem:FinPACwE:splitmonic}:
$$\begin{array}{rcccccl}
\one
& = &
\one \after [\kappa_{1}, \kappa_{2}]
& = &
[\one \after \kappa_{1}, \one \after \kappa_{2}]
& = &
[\one,\one].
\end{array}$$

\item The map $(-) \after f$ preserves the PCM-structure by
  definition.  And it preserves truth $\one$ since $f$ is total. Hence
  it is a map of effect algebras, see Definition~\ref{def:EA}.

\item The object $0$ is initial in $\Tot(\cat{C})$, since the unique
  map $\bang \colon 0 \rightarrow X$ is total: $\one_{X} \after
  \bang_{X} = \bang_{I} = \one_{0}$ by initiality in
  $\cat{C}$. Coprojections are total by
  point~\eqref{lem:FinPACwE:splitmonic}. If $f,g$ are total, then
  so is $[f,g]$ since $\one \after [f,g] = [\one \after f, \one \after
    g] = [\one,\one] = \one$ by
  point~\eqref{lem:FinPACwE:onecoptuple}. \QED
\end{enumerate}
\end{proof}

We now come to the main result of this section.

\begin{theorem}
\label{thm:partialtotal}
\emph{(From~\cite{Cho15})}
\begin{enumerate}
\item For an effectus $\cat{B}$, the category of partial maps
  $\Par(\cat{B})$ with special object $1$ is a FinPAC with effects,
  and $\Tot(\Par(\cat{B})) \cong \cat{B}$.

\item For a FinPAC with effects $(\cat{C}, I)$, the subcategory
  $\Tot(\cat{C})$ of total maps is an effectus, and
  $\Par(\Tot(\cat{C})) \cong \cat{C}$.
\end{enumerate}
\end{theorem}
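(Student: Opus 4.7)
The plan is to verify the four claims in turn, each time leveraging the structural lemmas already established. For the first claim, $\Par(\cat{B})$ is a FinPAC by Lemma~\ref{lem:effectusFinPAC}, so with $I = 1$ it remains to check the three extra conditions of Definition~\ref{def:FinPACwE}. The homsets $\Par(\cat{B})(X, 1) = \Pred(X)$ are effect algebras by Proposition~\ref{prop:effectusEA}; the top map $\one \pafter (-)$ coincides with the kernel-supplement $\kerbot$, so it reflects orthogonality by Lemma~\ref{lem:kerbot} and reflects $\zero$ by Lemma~\ref{lem:zero}. The isomorphism $\Tot(\Par(\cat{B})) \cong \cat{B}$ then follows because Lemma~\ref{lem:zero} characterises $\one \pafter f = \one$ as $f$ being of the form $\klin{h}$ for a (unique) total $h$ in $\cat{B}$, while Proposition~\ref{prop:effectuscoproj} shows that $\klin{-}$ is a faithful identity-on-objects embedding.

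For the second claim I first show $\Tot(\cat{C})$ is an effectus. The finite coproducts in $\Tot(\cat{C})$ and the finality of $I$ are given by Lemma~\ref{lem:FinPACwE}, parts~\eqref{lem:FinPACwE:coprod} and~\eqref{lem:FinPACwE:idI}. For the two pullback squares of Definition~\ref{def:effectus}~\eqref{def:effectus:pb} I would build mediating maps inside $\cat{C}$ via the canonical decomposition $g = (\kappa_1 \after \rhd_1 \after g) \ovee (\kappa_2 \after \rhd_2 \after g)$ from Lemma~\ref{lem:FinPAC}~\eqref{lem:FinPAC:f}, using joint monicity of the $\rhd_i$ (Lemma~\ref{lem:FinPAC}~\eqref{lem:FinPAC:jm}) for uniqueness, and then verify totalness of the mediator by composing with $\one$ and invoking $\idmap[I] = \one$ together with PCM-preservation of post-composition. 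Joint monicity of $\IV, \XI$ reduces by the same method to joint monicity of the $\rhd_i$.

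The central piece is the isomorphism $\Par(\Tot(\cat{C})) \cong \cat{C}$. A map $X \pto Y$ in $\Par(\Tot(\cat{C}))$ is by definition a total map $X \to Y + I$ in $\cat{C}$, so I need a natural bijection between arbitrary $f \colon X \to Y$ in $\cat{C}$ and total maps $X \to Y + I$. Given $f$, I set $\tilde{f} = (\kappa_1 \after f) \ovee (\kappa_2 \after (\one \after f)^{\bot})$: using $\idmap[I] = \one$, the two summands have $\one$-images $(\one \after f)$ and $(\one \after f)^{\bot}$, which are orthogonal in the effect algebra $\cat{C}(X, I)$, so orthogonality-reflection (Definition~\ref{def:FinPACwE}~\eqref{def:FinPACwE:ortho}) lifts this to $\kappa_1 \after f \orthogonal \kappa_2 \after (\one \after f)^{\bot}$ in $\cat{C}(X, Y+I)$; the same computation shows $\tilde{f}$ is total. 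Conversely, by Lemma~\ref{lem:FinPAC}~\eqref{lem:FinPAC:f} any total $g \colon X \to Y + I$ splits as $(\kappa_1 \after \rhd_1 \after g) \ovee (\kappa_2 \after \rhd_2 \after g)$, and totalness forces $\rhd_2 \after g = (\one \after \rhd_1 \after g)^{\bot}$, so $g$ is recovered from $\rhd_1 \after g$. Functoriality of this bijection reduces, via the preservation of $\ovee$ by composition, to a routine effect-algebra calculation matching lift-monad composition in $\Par(\Tot(\cat{C}))$ with ordinary composition in $\cat{C}$.

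The main obstacle I expect is keeping the two descriptions aligned while verifying the effectus pullback axioms: mediating maps are naturally constructed inside $\cat{C}$, but must be shown to lie in $\Tot(\cat{C})$ and to be unique among \emph{total} maps, rather than merely unique in $\cat{C}$. Both promotions hinge on the reflection properties of $\one$ in Definition~\ref{def:FinPACwE}, which allow data living in the effect-algebra homsets $\cat{C}(X, I)$ to be lifted to arbitrary homsets $\cat{C}(X, Y)$. Once this dictionary is in place, the remaining coherence --- in particular that the two functors $\Tot \after \Par$ and $\Par \after \Tot$ are naturally isomorphic to the identity, and the full 2-equivalence alluded to in~\cite{Cho15} --- is essentially bookkeeping.
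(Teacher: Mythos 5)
Your overall route coincides with the paper's: part~(1) via Lemma~\ref{lem:effectusFinPAC}, the effect-algebra structure of $\Par(\cat{B})(X,1)=\Pred(X)$, reflection of orthogonality and of zero by $\kerbot = \one\pafter(-)$ (Lemma~\ref{lem:kerbot}, Lemma~\ref{lem:zero}), and Lemma~\ref{lem:zero} again for $\Tot(\Par(\cat{B}))\cong\cat{B}$; part~(2) via Lemma~\ref{lem:FinPACwE}, mediating maps assembled from the decomposition $g=(\kappa_{1}\after\rhd_{1}\after g)\ovee(\kappa_{2}\after\rhd_{2}\after g)$ with totality checked by postcomposing with $\one$, and the mutually inverse assignments $g\mapsto\rhd_{1}\after g$ and $f\mapsto(\kappa_{1}\after f)\ovee\bigl(\kappa_{2}\after(\one\after f)^{\bot}\bigr)$, with orthogonality obtained exactly as in the paper from Definition~\ref{def:FinPACwE}~\eqref{def:FinPACwE:ortho}.

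There is, however, one step that fails as you state it: the claim that joint monicity of $\IV,\XI\colon(I+I)+I\to I+I$ ``reduces by the same method to joint monicity of the $\rhd_{i}$''. Writing $f_{1}=\rhd_{1}\after\rhd_{1}\after f$, $f_{2}=\rhd_{2}\after\rhd_{1}\after f$, $f_{3}=\rhd_{2}\after f$, the hypotheses $\IV\after f=\IV\after g$ and $\XI\after f=\XI\after g$ yield, after postcomposition with the two projections of $I+I$, only $f_{1}=g_{1}$, $f_{2}=g_{2}$, and $f_{2}\ovee f_{3}=g_{2}\ovee g_{3}$: the third component is never isolated, so joint monicity of the three projections on $(I+I)+I$ is not directly applicable. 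To conclude $f_{3}=g_{3}$ you must cancel $f_{2}$ in the last equation, and cancellation fails in a general PCM; it holds here only because these sums live in $\cat{C}(X,I)$, which Definition~\ref{def:FinPACwE}~\eqref{def:FinPACwE:EA} requires to be an effect algebra (Exercise~\ref{exc:EA}~\eqref{exc:EA:Canc}). This is precisely the point in the verification of the effectus axioms where the ``with effects'' part of the hypothesis, beyond the bare FinPAC structure, is indispensable, and your sketch as written skips it. Everything else in your plan goes through as in the paper.
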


\begin{proof}
Let $\cat{B}$ be an effectus. Lemma~\ref{lem:effectusFinPAC} tells
that $\Par(\cat{B})$ is a FinPAC. We take $I = 1$, so that
$\Par(\cat{B})(X, 1) = \cat{B}(X, 1+1) = \Pred(X)$ is an effect
algebra. Next if $\one \pafter f = \kerbot(f) \orthogonal \kerbot(g) =
\one \pafter g$, then $f \orthogonal g$ since $\kerbot$ reflects
orthogonality --- and zero too --- by
Lemma~\ref{lem:kerbot}. Reflection of zero proves
requirement~\eqref{def:FinPACwE:zero} in
Definition~\ref{def:FinPACwE}.  In order to prove $\Tot(\Par(\cat{B}))
\cong \cat{B}$ we have to prove that a map $f\colon X \pto Y$ is total
iff $\one \pafter f = \one$.  But we already know this from
Lemma~\ref{lem:zero}.

For the second point, let $(\cat{C}, I)$ be a FinPAC with effects.
From Lemma~\ref{lem:FinPACwE} we know that the category
$\Tot(\cat{C})$ of total maps has $I$ as finial object, and has
coproducts $(+,0)$ as in $\cat{C}$. We first show that the rectangles
in Definition~\ref{def:effectus}~\eqref{def:effectus:pb} are pullbacks
in $\Tot(\cat{C})$. We may thus assume that we have total maps $f,g,h$
in commuting (outer) diagrams:
$$\xymatrix{
Z\ar@/_2ex/[ddr]_{g}\ar@/^2ex/[drr]^-{f}\ar@{..>}[dr]^(0.6){k}
& & &
W\ar@/_2ex/[ddr]_{h}\ar@/^2ex/[drr]^-{\one}\ar@{..>}[dr]^(0.6){\rhd_{1} \after h}
\\
& X+Y\ar[r]^-{\idmap + \one}\ar[d]^{\one+\idmap} & X + I\ar[d]^{\one+\idmap}
&
& X\ar[r]^-{\one}\ar[d]_{\kappa_1} & I\ar[d]^{\kappa_1}
\\
& I + Y\ar[r]_-{\idmap+\one} & I + I
& 
& X+Y\ar[r]_-{\one+\one} & I + I
}$$

\noindent We first concentrate on the situation on the left. By
assumption, $(\one+\idmap) \after f = (\idmap+\one) \after g = b$,
say. Then $(\rhd_{1} \after b) \orthogonal (\rhd_{2} \after b)$, by
definition of orthogonality. But:
$$\begin{array}{rcccccl}
\rhd_{1} \after b
& = &
\rhd_{1} \after (\one+\idmap) \after f
& = &
\one \after \rhd_{1} \after f
& = &
\one \after \kappa_{1} \after \rhd_{1} \after f 
\\
\rhd_{2} \after b
& = &
\rhd_{2} \after (\idmap+\one) \after g
& = &
\one \after \rhd_{2} \after g
& = &
\one \after \kappa_{2} \after \rhd_{2} \after g.
\end{array}$$

\noindent Hence we have $(\one \after \kappa_{1} \after \rhd_{1}
\after f) \orthogonal (\one \after \kappa_{2} \after \rhd_{2} \after
g)$, from which we can conclude $(\kappa_{1} \after \rhd_{1} \after f)
\orthogonal (\kappa_{2} \after \rhd_{2} \after g)$ using
Definition~\ref{def:FinPACwE}~\eqref{def:FinPACwE:ortho}. Thus we can
define:
$$\begin{array}{rcl}
k
& = &
(\kappa_{1} \after \rhd_{1} \after f) \ovee
   (\kappa_{2} \after \rhd_{2} \after g) \;\colon\; W \longrightarrow X+Y.
\end{array}$$

\noindent Then:
$$\begin{array}{rcl}
(\idmap+\one) \after k
& = &
((\idmap+\one) \after \kappa_{1} \after \rhd_{1} \after f) \ovee
   ((\idmap+\one) \after \kappa_{2} \after \rhd_{2} \after g) \\
& = &
(\kappa_{1} \after \rhd_{1} \after f) \ovee
   (\kappa_{2} \after \one \after \rhd_{2} \after g) \\
& = &
(\kappa_{1} \after \rhd_{1} \after f) \ovee
   (\kappa_{2} \after \rhd_{2} \after (\idmap+\one) \after g) \\
& = &
(\kappa_{1} \after \rhd_{1} \after f) \ovee
   (\kappa_{2} \after \rhd_{2} \after (\one+\idmap) \after f) \\
& = &
(\kappa_{1} \after \rhd_{1} \after f) \ovee
   (\kappa_{2} \after \rhd_{2} \after f) \\
& = &
f \qquad \mbox{by Lemma~\ref{lem:FinPAC}~\eqref{lem:FinPAC:f}.}
\end{array}$$

\noindent Similarly one proves $(\one+\idmap) \after k = g$. For
uniqueness, let $\ell \colon W \rightarrow X+Y$ also satisfy
$(\idmap+\one) \after \ell = f$ and $(\one+\idmap) \after \ell = g$, then:
$$\begin{array}{rcl}
k
& = &
(\kappa_{1} \after \rhd_{1} \after f) \ovee
   (\kappa_{2} \after \rhd_{2} \after g) \\
& = &
(\kappa_{1} \after \rhd_{1} \after (\idmap+\one) \after \ell) \ovee
   (\kappa_{2} \after \rhd_{2} \after (\one+\idmap) \after \ell) \\
& = &
(\kappa_{1} \after \rhd_{1} \after \ell) \ovee
   (\kappa_{2} \after \rhd_{2} \after \ell) \\
& = &
\ell.
\end{array}$$

In the above diagram on the right we have $(\one+\one) \after h =
\kappa_{1} \after \one$. We claim $\rhd_{2} \after h = \zero$. This
follows by Definition~\ref{def:FinPACwE}~\eqref{def:FinPACwE:zero}
from
$$\begin{array}{rcccccccl}
\one \after \rhd_{2} \after h
& = &
\rhd_{2} \after (\one+\one) \after h
& = &
\rhd_{2} \after \kappa_{1} \after \one
& = &
\zero \after \one
& = &
\zero.
\end{array}$$

\noindent The map $h\colon W \rightarrow X+Y$ then satisfies,
by Lemma~\ref{lem:FinPAC}~\eqref{lem:FinPAC:f},
$$\begin{array}{rcccccl}
h 
& = &
(\kappa_{1} \after \rhd_{1} \after h) \ovee
   (\kappa_{2} \after \rhd_{2} \after h) 
& = &
(\kappa_{1} \after \rhd_{1} \after h) \ovee (\kappa_{2} \after \zero)
& = &
\kappa_{1} \after \rhd_{1} \after h.
\end{array}$$

\noindent Hence $\rhd_{1} \after h \colon W \rightarrow X$ is a
mediating map. It is the unique one, since if also $k\colon W
\rightarrow X$ satisfies $\kappa_{1} \after k = h$, then $\rhd_{1}
\after h = \rhd_{1} \after \kappa_{1} \after k = k$.

We still have to prove that the two maps $\IV, \XI \colon (I+I)+I
\rightarrow I+I$ are jointly monic in $\Tot(\cat{C})$, where $\IV =
            [\idmap,\kappa_{2}]$ and $\XI = [[\kappa_{2}, \kappa_{1}],
              \kappa_{2}]$. So let $f,g \colon X \rightarrow (I+I) +
            I$ satisfy $\IV \after f = \IV \after g$ and $\XI \after f
            = \XI \after g$. Write:
$$\begin{array}{rclcrclcrcl}
f_{1}
& = &
\rhd_{1} \after \rhd_{1} \after f
& \qquad &
f_{2}
& = &
\rhd_{2} \after \rhd_{1} \after f
& \qquad &
f_{3}
& = &
\rhd_{2} \after f.
\end{array}$$

\noindent Then $f = (\kappa_{1} \after \kappa_{1} \after f_{1}) \ovee
(\kappa_{1} \after \kappa_{2} \after f_{1}) \ovee (\kappa_{2} \after
f_{3})$. We can write the map $g$ in a similar manner. The
equation $\IV \after f = \IV \after g$ yields,
$$\begin{array}{rcl}
(\kappa_{1} \after f_{1}) \ovee (\kappa_{2} \after f_{2}) \ovee 
   (\kappa_{2} \after f_{3})
& = &
(\kappa_{1} \after g_{1}) \ovee (\kappa_{2} \after g_{2}) \ovee 
   (\kappa_{2} \after g_{3}).
\end{array}$$

\noindent Hence by post-composing with $\rhd_{1}$ and with $\rhd_{2}$ we
get:
$$\begin{array}{rclcrcl}
f_{1}
& = &
g_{1}
& \qquad\mbox{and}\qquad &
f_{2} \ovee f_{3}
& = &
g_{2} \ovee g_{3}.
\end{array}$$

\noindent Similarly, the equation $\XI \after f = \XI \after g$
yields:
$$\begin{array}{rcl}
(\kappa_{2} \after f_{1}) \ovee (\kappa_{1} \after f_{2}) \ovee 
   (\kappa_{2} \after f_{3})
& = &
(\kappa_{2} \after g_{1}) \ovee (\kappa_{1} \after g_{2}) \ovee 
   (\kappa_{2} \after g_{3}).
\end{array}$$

\noindent Post-composing with $\rhd_{1}$ yields $f_{2} = g_{2}$.  By
substitution in our previous finding we get $f_{2} \ovee f_{3} = f_{2}
\ovee g_{3}$. Cancellation in the effect algebra $\cat{C}(X, I)$ gives
$f_{3} = g_{3}$. Hence $f = g$.

Finally we show that we have an identity-on-objects, full and faithful
functor $F\colon \Par(\Tot(\cat{C})) \rightarrow \cat{C}$, defined on
maps by $F(g) = \rhd_{1} \after g$. It is easy to see that $F$ is
indeed a functor. We construct an inverse functor $G \colon \cat{C}
\rightarrow \Par(\Tot(\cat{C}))$.

\auxproof{
$$\begin{array}{rcl}
F(\idmap)
& = &
F(\kappa_{1}) \\
& = &
\rhd_{1} \after \kappa_{1} \\
& = &
\idmap
\\
F(h \pafter g)
& = &
\rhd_{1} \after [h, \kappa_{2}] \after g \\
& = &
[\rhd_{1} \after h, \rhd_{1} \after \kappa_{2}] \after g \\
& = &
[F(h), 0] \after g \\
& = &
F(h) \after [\idmap, 0] \after g \\
& = &
F(h) \after \rhd_{1} \after g \\
& = &
F(h) \after F(g).
\end{array}$$
}

Let $f\colon X \rightarrow Y$ be a map in $\cat{C}$. We form $\one
\after f \colon X \rightarrow I$, so that $(\one \after f) \orthogonal
(\one \after f)^{\bot}$. Now note that $\one \after f = \one \after
\kappa_{1} \after f$, for $\kappa_{1} \colon Y \rightarrow Y+1$. Next,
$$\begin{array}{rcccccl}
(\one \after f)^{\bot}
& = &
\idmap[I] \after (\one \after f)^{\bot}
& = &
\one \after (\one \after f)^{\bot}
& = &
\one \after \kappa_{2} \after (\one \after f)^{\bot},
\end{array}$$

\noindent where $\kappa_{2} \colon 1 \rightarrow Y+1$. Hence $(\one
\after \kappa_{1} \after f) \orthogonal (\one \after \kappa_{2} \after
(\one \after f)^{\bot})$, which gives by
Definition~\ref{def:FinPACwE}~\eqref{def:FinPACwE:ortho} an
orthogonality $(\kappa_{1} \after f) \orthogonal (\kappa_{2} \after
(\one \after f)^{\bot})$ in the homset of maps $X \rightarrow Y+1$. We
now define:
$$\begin{array}{rcl}
G(f)
& = &
(\kappa_{1} \after f) \ovee (\kappa_{2} \after (\one \after f)^{\bot})
   \;\colon\; X \longrightarrow Y+I.
\end{array}$$

\noindent Clearly, 
$$\begin{array}{rcl}
FG(f)
\hspace*{\arraycolsep} = \hspace*{\arraycolsep}
\rhd_{1} \after G(f)
& = &
(\rhd_{1} \after \kappa_{1} \after f) \ovee 
   (\rhd_{1} \after \kappa_{2} \after (\one \after f)^{\bot}) \\
& = &
f \ovee (\zero \after (\one \after f)^{\bot})
\hspace*{\arraycolsep} = \hspace*{\arraycolsep}
f \ovee \zero
\hspace*{\arraycolsep} = \hspace*{\arraycolsep}
f.
\end{array}$$

\noindent In order to see $GF(g) = g$, for a total map $g\colon X
\rightarrow Y+I$, write $g = (\kappa_{1} \after \rhd_{1} \after g)
\ovee (\kappa_{2} \after \rhd_{2} \after g)$ by
Lemma~\ref{lem:FinPAC}~\eqref{lem:FinPAC:f}, and compare it with:
$$\begin{array}{rcl}
GF(g)
& = &
(\kappa_{1} \after \rhd_{1} \after g) \ovee 
   (\kappa_{2} \after (\one \after \rhd_{1} \after g)^{\bot}).
\end{array}$$

\noindent Hence it suffices to show $\rhd_{2} \after g = (\one \after
\rhd_{1} \after g)^{\bot}$. This is done as follows. The map
$(\one+\idmap) \after g \colon X \rightarrow I+I$ satisfies $(\rhd_{1}
\after (\one+\idmap) \after g) \orthogonal (\rhd_{2} \after (\one+\idmap)
\after g)$ and thus:
$$\begin{array}{rcccccccl}
(\one \after \rhd_{1} \after g) \ovee (\rhd_{2} \after g)
& = &
\nabla \after (\one+\idmap) \after g
& = &
[\one,\one] \after g
& = &
\one \after g
& = &
\one.
\end{array}$$

\noindent But then we are done by uniqueness of orthosupplements in
the effect algebra $\cat{C}(X,I)$. \QED
\end{proof}

\begin{discussion}
\label{dis:partialtotal}
Now that we have seen the equivalence of `effectus' and `FinPAC with
effects' we have a choice --- or a dilemma, if you like: which notion
to use? Let's start by listing some pros and contras.
\begin{enumerate}
\item The notion of effectus has the definite advantage that its
  definition is simple and elegant --- see
  Definition~\ref{def:effectus}. Surprisingly many results can be
  obtained from this relatively weak structure, which are best
  summarised in the resulting state-and-effect
  triangle~\eqref{diag:effectustriangle}.

A disadvantage of using the notion of effectus is that we have to
explicitly distinguish total and partial maps, for which we have even
introduced separate notation. Another disadvantage is that from a
computational perspective the category $\Par(\cat{B})$ of partial maps
in an effectus $\cat{B}$ is the more interesting structure, and not
$\cat{B}$ itself. In support of the notion of effectus one could claim
that the main examples are most naturally described as effectus, and
not as FinPAC with effects: thus, for instance the categories $\Sets$
and $\Kl(\Dst)$ with total functions and distributions are in a sense
more natural descriptions, than the categories $\Par(\Sets)$ and
$\Par(\Kl(\Dst)) \cong \Kl(\sDst)$ with partial maps and
subdistributions.

\item The definition of `FinPAC with effects' is much less elegant,
  see Definitions~\ref{def:FinPACwE} and~\ref{def:FinPAC}: it is not
  only much more verbose, but also involves `structure', namely the
  special object $I$, of which it is even not clear that it is
  determined up-to-isomorphism. On the other hand, a definite
  advantage is that in a FinPAC with effects the total maps are a
  natural subclass of all the maps (understood as the partial ones),
  and there is no need for separate notation for total and partial
  maps. Moreover, the notion of FinPAC with effects gives you in many,
  computational situations directly the structure that is of most
  interest, namely partial maps. This is especially the case when we
  discuss comprehension and quotients later on.
\end{enumerate}

\noindent How to weigh these arguments? How to proceed from here? We
can choose to work from now on (1)~only with effectuses, (2)~only with
FinPACs with effects, or (3)~switch freely between them, depending on
whatever works best in which situation.

The first two options are easiest, but provide limited
flexibility. Therefore we will choose the third approach. We do
realise that it does not make the theory of effectuses easier, since
one has to been keenly aware of which description applies. But we hope
that the reader will reach such a level of enlightment that the
differences become immaterial --- and Wittgenstein's proverbial ladder
can be thrown away, after one has climbed it.

More concretely, in the sequel we will start definitions and results
with either ``let $\cat{B}$ an effectus in total
form'',\index{S}{effectus!-- in total form} or with ``let $\cat{C}$ be
an effectus in partial form''.\index{S}{effectus!-- in partial form}
The latter expression will replace the term `FinPAC with effects'; it
will not be used anymore in the sequel of this document.

We take another important decision: up to now we have used separate
notation for total ($\tafter, \tto, \tplus$) and partial ($\pafter,
\pto, \pplus$) maps in effectuses (in total form). From now on:
\begin{itemize}
\item we use ordinary categorical notation in an \emph{effectus in
  total form} (replacing $\tafter, \tto, \tplus$ by $\after, \to, +$)
  but continue to use special notation ($\pafter, \pto, \pplus$) in
  the category of partial maps, \textit{i.e.}~in the Kleisli category
  of the lift monad.

\item we also use ordinary categorical notation in an \emph{effectus
  in partial form}.
\end{itemize}

In line with such easy switching of contexts we will freely use
notation that we have introduced for \emph{partial} maps in an
effectus in total form for \emph{ordinary} maps in effectuses in
partial form --- where $\pafter$ simply becomes $\after$. Thus for
instance, for such a map $f\colon X \rightarrow Y$ in an effectus in
partial form we write:
$$\begin{array}{rcl}
\pbox{f}(q)
& = &
(q^{\bot} \after f)^{\bot} 
\\
\ker(f)
& = &
\pbox{f}(\zero)
\hspace*{\arraycolsep} = \hspace*{\arraycolsep}
(\one \after f)^{\bot}
\\
\kerbot(f)
& = &
\one \after f.
\end{array}$$

\noindent Theorem~\ref{thm:partialtotal} allows us to translate back
and forth between the total and partial world. Thus, the properties
of, for instance, Lemma~\ref{lem:ker}, which are formulated for an
effectus in total form, also make sense for an effectus in partial
form. Further, if $f$ is total, then the two forms of substitution
$\pbox{f}$ and $\tbox{f}$ concide:
$$\begin{array}{rcccl}
\pbox{f}(q)
& = &
q \after f
& = &
\tbox{f}(q).
\end{array}$$

\noindent This follows from uniqueness of orthosupplements:
$$\begin{array}{rcccccl}
(q^{\bot} \after f) \ovee (q \after f)
& = &
(q^{\bot} \ovee q) \after f
& = &
\one \after f
& = &
\one.
\end{array}$$

There is another topic that we can now understand in greater
generality, namely the partial pairing $\dtuple{f,g}$. It was
introduced in Lemma~\ref{lem:pairing} for maps $f,g$ satisfying
$\ker^{\bot}(f) \ovee \ker^{\bot}(g) = \one$, and produced a
\emph{total} map $\dtuple{f,g}$. The bijective
correspondence~\eqref{bijcor:sumpairing}, in upwards direction,
extends this pairing in two ways, namely to $n$-ary pairing and to
partial maps $f_{i} \colon Z \pto X_{i}$. for which the
kernel-supplements $\kerbot(f_{i})$ are only orthogonal (instead of
adding up to $\one$). The resulting pairing $\dtuple{f_{1}, \ldots,
  f_{n}} \colon Z \pto X_{1}+\cdots+X_{n}$ is then only a partial map,
defined as $\bigovee_{i}(\kappa_{i} \after f_{i})$. It is unique in
satisfying:
$$\begin{array}{rcl}
\rhd_{i} \after \dtuple{f_{1}, \ldots, f_{n}} 
& = &
f_{i}.
\end{array}$$

\noindent Thus, in an effectus in partial form we have partial pairing
too.

Recall that this map $\dtuple{f_{1}, \ldots, f_{n}}$ is total if
$\bigovee_{i}\kerbot(f_{i}) = \one$. In that case we are back in the
situation of Lemma~\ref{lem:pairing}. In the sequel we use this
pairing in this more general form, as essentially given by the
correspondence~\eqref{bijcor:sumpairing}.
\end{discussion}

\section{Commutative and Boolean effectuses}\label{sec:commbool}

Partial endomaps $X \pto X$ play an important role in effectus theory.
They give rise to predicates, by taking their kernel, but they may
also be obtained from predicates, as their associated `side
effect'. This is a topic that will return many times in the
sequel. For this reason we introduce special notation, and write
$\End(X)$\index{N}{$\End(X)$, homset of partial maps $X \rightarrow
  X$} for the set of partial maps $X \pto X$, in an effectus in total
form. This set $\End(X)$ is a partial commutative monoid (PCM) by
Proposition~\ref{prop:effectusPCM} via $\ovee, \zero$, it carries a
partial order by
Proposition~\ref{prop:effectusPCMorder}~\eqref{prop:effectusPCMorder:order},
and it is a (total) monoid via partial composition $\pafter, \idmap$.
We recall from Lemma~\ref{lem:kerbot} that the kernel-orthosupplement
$\kerbot$ forms a PCM-homomorphism
$$\xymatrix@C+1pc{
\End(X) \ar[r]^-{\kerbot} & \Pred(X)
}$$

\noindent It reflects $\zero$ and $\orthogonal$. Explicitly, $\kerbot(f) =
\one \pafter f = (\bang+\idmap) \after f$.

We shall write $\sEnd(X)\hookrightarrow \End(X)$\index{N}{$\sEnd(X)$,
  set of partial maps $X \rightarrow X$ below the identity} for the
subset:
$$\begin{array}{rcl}
\sEnd(X)
& = &
\set{f\colon X \pto X}{f \leq \idmap[X]},
\end{array}$$

\noindent where $\idmap[X] = \kappa_{1}$ is the partial identity $X
\pto X$. We shall understand endomaps in $\sEnd(X)$ as
\emph{side-effect free}\index{S}{side-effect free map} morphisms.

\begin{definition}
\label{def:commbool}
An effectus in total form is called
\emph{commutative}\index{S}{commutative!--
  effectus}\index{S}{effectus!commutative --} if for each object $X$
both:
\begin{enumerate}
\item \label{def:commbool:asrt} the map $\kerbot \colon \sEnd(X)
  \rightarrow \Pred(X)$ is an isomorphism; we shall write the inverse
  as $p \mapsto \asrt_{p}$,\index{N}{$\asrt_p$, assert map for
    predicate $p$!in a commutative effectus} and call it `assert';

\item \label{def:commbool:comm} $\asrt_{p} \pafter \asrt_{q} =
  \asrt_{q} \pafter \asrt_{p}$, for each pair of predicates
  $p,q\in\Pred(X)$.
\end{enumerate}

\noindent For $p,q\in\Pred(X)$ we define a new `product' predicate
$p\andthen q \in \Pred(X)$\index{N}{$p \andthen q$, sequential
  composition of predicates $p,q$} via:
$$\begin{array}{rcccccl}
p\andthen q
& = &
\kerbot\big(\asrt_{p} \pafter \asrt_{q}\big)
& = &
\kerbot\big(\asrt_{q} \pafter \asrt_{p}\big)
& = &
q\andthen p.
\end{array}$$

An effectus is called \emph{Boolean}\index{S}{Boolean!--
  effectus}\index{S}{effectus!Boolean --} if it is commutative
and satisfies: $\asrt_{p} \pafter \asrt_{p^{\bot}} = 0$, for each
predicate $p\in \Pred(X)$.
\end{definition}

We shall read the predicate $p \andthen q$ as `$p$ andthen $q$'. This
$\andthen$ is a commutative operation in the present commutative
context, but it is non-commutative in a more general setting, see
Section~\ref{sec:noncomm}.

The conditions in this definition are given in such a way that they
can easily be re-formulated for an effectus in partial form. Hence, in
the sequel, we freely speak about a commutative/Boolean effectus in
partial form.

Later on in Subsection~\ref{subsec:copier} we will show that the
presence of copiers makes an effectus commutative.

\begin{example}
\label{ex:commbool}
We describe three examples of these subclasses of effectuses.
\begin{enumerate}
\item \label{ex:commbool:Sets} The effectus $\Sets$ is Boolean. For a
  predicate $P\subseteq X$, the partial assert function $\asrt_{P}
  \colon X \to X+1$ is given by:
$$\begin{array}{rcl}
\asrt_{P}(x)
& = &
\left\{\begin{array}{ll}
\kappa_{1}x \quad & \mbox{if } x\in P \\
\kappa_{2}* & \mbox{if } x\not\in P.
\end{array}\right.
\end{array}$$

\noindent For convenience we often omit these coprojections
$\kappa_{1}, \kappa_{2}$ in such descriptions. We have $\asrt_{P} \leq
\idmap$ since $\asrt_{P} \ovee \asrt_{P^\bot} = \idmap$, see the
description of $\ovee$ on partial maps in Example~\ref{ex:PCM}. We
have:
$$\begin{array}{rcccccl}
\kerbot(\asrt_{P})
& = &
\set{x}{\asrt_{P}(x) \neq *}
& = &
\set{x}{x\in P}
& = &
P.
\end{array}$$

\noindent Next, let $f\colon X \rightarrow X+1$ satisfy $f \leq
\idmap$.  This means $f(x) \neq * \Rightarrow f(x) = x$. Now we take
as predicate $P = \set{x}{f(x) \neq *}$, so that $f =
\asrt_{P}$. Hence we have an isomorphism $\sEnd(X) \cong \Pred(X)$.

We further have:
$$\begin{array}{rcccl}
\big(\asrt_{P} \pafter \asrt_{Q}\big)(x)
& = &
\left\{\begin{array}{ll}
x \quad & \mbox{if } x\in P\cap Q \\
* & \mbox{otherwise}
\end{array}\right\}
& = &
\big(\asrt_{Q} \pafter \asrt_{P}\big)(x).
\end{array}$$

\noindent The product predicate $P\andthen Q$ is thus
intersection/conjunction $P\cap Q$.

The effectus $\Sets$ is Boolean since:
$$\begin{array}{rcccccl}
\big(\asrt_{P} \pafter \asrt_{P^\bot}\big)(x)
& = &
\left\{\begin{array}{ll}
x \quad & \mbox{if } x\in P\cap \neg P \\
* & \mbox{otherwise}
\end{array}\right\}
& = &
*
& = &
\zero(x).
\end{array}$$

\item \label{ex:commbool:KlD} As may be expected, the effectus
  $\Kl(\Dst)$ is commutative. For a predicate $p\in [0,1]^{X}$ we have
  an assert map $\asrt_{p} \colon X \rightarrow \Dst(X+1)$ given by
  the convex sum:
$$\begin{array}{rcl}
\asrt_{p}(x)
& = &
p(x)\ket{x} + (1-p(x))\ket{\!*\!}.
\end{array}$$

\noindent We have $\asrt_{p} \leq \idmap$ since $\asrt_{p} \ovee
\asrt_{p^\bot} = \idmap$. Then, following the description of kernels
in $\Kl(\Dst)$ from Example~\ref{ex:ker}, we get:
$$\begin{array}{rcl}
\kerbot(\asrt_{p})(x)
\hspace*{\arraycolsep}=\hspace*{\arraycolsep}
1 - \ker(\asrt_{p})(x)
& = &
1 - \asrt_{p}(*) \\
& = &
1 - (1 - p(x))
\hspace*{\arraycolsep}=\hspace*{\arraycolsep}
p(x).
\end{array}$$

\noindent We check that we get an isomorphism $\sEnd(X) \cong
\Pred(X)$. Let $f\colon X \rightarrow \Dst(X+1)$ satisfy $f\leq
\idmap$, where $\idmap(x) = 1\ket{x}$. Then $f(x)(x') \neq 0
\Rightarrow x' = x$. Taking $p(x) = f(x)(x)$ then yields $f =
\asrt_{p}$.

Next, these assert maps satisfy:
$$\begin{array}{rcl}
\big(\asrt_{p} \pafter \asrt_{q}\big)(x)
& = &
p(x)\cdot q(x)\ket{x} + (1 - p(x)\cdot q(x))\ket{\!*\!} \\
& = &
\big(\asrt_{q} \pafter \asrt_{p}\big)(x).
\end{array}$$

\noindent The product predicate $p\andthen q$ is thus the pointwise
multiplication $(p\andthen q) = p(x) \cdot q(x)$.

It is instructive to see why $\Kl(\Dst)$ is not a Boolean
effectus:
\noindent It satisfies:
$$\begin{array}{rcl}
\big(\asrt_{p} \pafter \asrt_{p^\bot}\big)(x)
& = &
p(x)\cdot (1 -p(x))\ket{x} + (1 - p(x)\cdot (1-p(x)))\ket{\!*\!}.
\end{array}$$

\noindent If $\asrt_{p} \pafter \asrt_{p^\bot} = \zero$, then $p(x)
\cdot (1-p(x)) = 0$ for each $x$. But this requires that $p(x)$ is a
Boolean predicate, with $p(x) \in \{0,1\}$ for each $x$, so that $p$
restricts to $X \rightarrow \{0,1\}$. But of course, not every
predicate in $\Kl(\Dst)$ is Boolean.

\item \label{ex:commbool:vNA} The effectus $\op{\CvNA}$ of
  \emph{commutative} von Neumann algebras is commutative. For an
  effect $e\in [0,1]_{\mathscr{A}}$ in such a commutative algebra
  $\mathscr{A}$, we define $\asrt_{e} \colon \mathscr{A} \pto
  \mathscr{A}$ by $\asrt_{e}(a) = e\cdot a$.  Clearly, this is a
  linear subunital map. We use commutativity to show that it is
  positive. For $a \geq 0$, say $a = b\cdot b^{*}$ we obtain a
  positive element:
$$\begin{array}{rcccccccl}
\asrt_{e}(a)
& = &
e\cdot a
& = &
\sqrt{e} \cdot \sqrt{e} \cdot b \cdot b^{*}
& = &
\sqrt{e} \cdot b \cdot b^{*} \cdot \sqrt{e}
& = &
(\sqrt{e} \cdot b) \cdot (\sqrt{e} \cdot b)^{*}.
\end{array}$$

\noindent Here we use that each positive element $x$ has a positive
square root $\sqrt{x}$ and is self-adjoint, that is, satisfies $x =
x^{*}$.

Like before, we have $\asrt_{e} \ovee \asrt_{e^{\bot}} = \idmap$, so
that $\asrt_{e} \leq \idmap$. Further, following the description in
Example~\ref{ex:ker},
$$\begin{array}{rcccccl}
\kerbot(\asrt_{e})
& = &
\asrt_{e}(1)
& = &
e\cdot 1
& = &
e.
\end{array}$$

\noindent Showing that $\kerbot$ is an isomorphism requires more work.
Let $f\colon \mathscr{A} \pto \mathscr{A}$ be a subunital positive map
below the identity. This means $f(a) \leq a$ for all $a \geq
0$. Define $g\colon \mathscr{A}\oplus \mathscr{A} \rightarrow
\mathscr{A}$ by $g(x,y) = f(x) + y - f(y)$. This map is positive,
because if $x, y \geq 0$, then $f(x) \geq 0$ and $y - f(y) \geq 0$,
since $f(y) \leq y$, and thus $g(x,y) \geq 0$. Clearly, $g$ is
unital. More generally, $g(x,x) = x$, which can be written more
abstractly as $g \after \Delta = \idmap$, where the diagonal $\Delta
\colon \mathscr{A} \rightarrow \mathscr{A}\oplus\mathscr{A}$ preserves
multiplication. Hence `Tomiyama'~\cite{Tomiyama57} applies,
see~\cite[Lemma~38]{Jacobs15a}, so that $z\cdot g(x,y) =
g(\Delta(z)\cdot (x,y)) = g(z\cdot x, z\cdot y)$. The element $g(1,0)
= f(1)$ is central in $\mathscr{A}$ since:
$$\begin{array}{rcccccccl}
x\cdot g(1, 0)
& = &
g\big(\Delta(x)\cdot (1,0)\big)
& = &
g(x, 0)
& = &
g\big((1,0)\cdot \Delta(x)\big)
& = &
g(1, 0) \cdot x.
\end{array}$$

\noindent We thus obtain:
$$\begin{array}{rcccccccccl}
f(x)
& = &
g(x, 0)
& = &
g(1, 0)\cdot x
& = &
f(1)\cdot x
& = &
\asrt_{f(1)}(x)
& = &
\asrt_{\kerbot(f)}(x).
\end{array}$$

Finally, we have, for arbitrary effects $e,d\in [0,1]_{\mathscr{A}}$,
$$\begin{array}{rcccccl}
\big(\asrt_{e} \pafter \asrt_{d}\big)(a)
& = &
e\cdot (d\cdot a)
& = &
d \cdot (e\cdot a)
& = &
\big(\asrt_{d} \pafter \asrt_{e}\big)(a).
\end{array}$$

\noindent As a result, the product predicate $e\andthen d$ is given
by multiplication $\cdot$ in the von Neumann algebra.
\end{enumerate}
\end{example}

The assert maps, if they exist, give rise to many interesting
properties.

\begin{lemma}
\label{lem:commeffectus}
Let $\cat{B}$ be a commutative effectus in total form.
\begin{enumerate}
\item \label{lem:commeffectus:EA} Each PCM $\sEnd(X)$ is an effect
  algebra; in fact, with composition $\pafter, \idmap$ it is a
  commutative effect monoid.

\item \label{lem:commeffectus:EAmap} The assert map $\Pred(X)
  \rightarrow \sEnd(X)$ is an (iso)morphism of effect algebras, and
  makes $\Pred(X)$ with $\andthen$ also a commutative effect monoid.

\item \label{lem:commeffectus:zero} $p \pafter f = \zero$ iff
  $\asrt_{p} \pafter f = \zero$.

\item \label{lem:commeffectus:scal} $\asrt_{s} = s$ for each scalar
  $s\colon 1 \pto 1$.

\item \label{lem:commeffectus:cotuple} $\asrt_{[p,q]} = \asrt_{p}
  \pplus \asrt_{q} \colon X+Y \pto X+Y$.

\item \label{lem:commeffectus:instr} The \emph{instrument} map
  $\instr_{p} = \dtuple{\asrt_{p}, \asrt_{p^\bot}} \colon X
  \rightarrow X+X$\index{N}{$\instr_p$, instrument map for predicate $p$} satisfies $\nabla \after \instr_{p} = \idmap$.

\item \label{lem:commeffectus:order} $p\andthen q = q \pafter \asrt_{p}$,
  and thus $p\andthen q \leq q$ and also $p\andthen q \leq p$.

\item \label{lem:commeffectus:subst} $\pbox{\big(\asrt_{p}\big)}(q) =
  (p\andthen q) \ovee p^{\bot}$.

\item \label{lem:commeffectus:BA} The following points are
equivalent, for an arbitrary predicate $p$.
\begin{enumerate}
\item \label{lem:commeffectus:BA:idempasrt} $\asrt_{p} \pafter
  \asrt_{p} = \asrt_{p}$
\item \label{lem:commeffectus:BA:idempandthen} $p \andthen p = p$
\item \label{lem:commeffectus:BA:zeroasrt} $\asrt_{p} \pafter
  \asrt_{p^\bot} = \zero$
\item \label{lem:commeffectus:BA:zeroandthen} $p \andthen p^{\bot} = \zero$
\item \label{lem:commeffectus:BA:sharp} $p$ is sharp, that is,
  $p\wedge p^{\bot} = \zero$.
\end{enumerate}
\end{enumerate}
\end{lemma}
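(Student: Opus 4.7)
The unifying strategy is that by hypothesis $\asrt = \kerbot^{-1}$ is a bijection inverse to $\kerbot$, and by Lemma~\ref{lem:kerbot} $\kerbot$ is a PCM-homomorphism preserving and reflecting both $\ovee$ and $\zero$; every identity to be proved will be obtained either by direct manipulation of $\asrt_p$ or by applying $\kerbot$ and exploiting injectivity. For part~(1), $\sEnd(X)$ is closed under $\pafter$ by monotonicity (Prop.~\ref{prop:effectusPCMorder}(3)): if $f,g\leq\idmap$ then $f\pafter g\leq f\pafter\idmap=f\leq\idmap$. Every $f\in\sEnd(X)$ has an orthosupplement $f^\bot\in\sEnd(X)$ with $f\ovee f^\bot=\idmap$ directly from the definition of $\leq$, uniquely determined by cancellativity (Prop.~\ref{prop:effectusPCMorder}(2)); the remaining effect-algebra axiom $x\orthogonal\idmap\Rightarrow x=\zero$ is likewise cancellativity. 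Bilinearity of $\pafter$ in $(\ovee,\zero)$ is Prop.~\ref{prop:effectusPCM}(2), associativity is automatic, $\idmap$ is the unit, and commutativity is Def.~\ref{def:commbool}(2), so $\sEnd(X)$ is a commutative effect monoid. Part~(2) follows because $\kerbot(\idmap)=\one\pafter\idmap=\one$, hence $\asrt_\one=\idmap$ and $\asrt$ is an iso of effect algebras; transporting $(\pafter,\idmap)$ through $\asrt$ yields $\asrt_{p\andthen q}=\asrt_p\pafter\asrt_q$ and the commutative effect-monoid structure $(\andthen,\one)$ on $\Pred(X)$.

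Part~(3) factors $p=\one\pafter\asrt_p$, so $p\pafter f=\kerbot(\asrt_p\pafter f)$, and reflection of $\zero$ by $\kerbot$ closes the loop. Part~(4) is immediate because $\sEnd(1)=\Pred(1)$ with $\one=\idmap$, so $\kerbot$ is the identity. Part~(5) verifies $\kerbot(\asrt_p\pplus\asrt_q)=[p,q]$ using $\one_{X+Y}=[\one_X,\one_Y]$ together with $\one\pafter\kappa_i\pafter\asrt_{(-)}=\kerbot(\asrt_{(-)})$; monotonicity of cotuple and composition gives $\asrt_p\pplus\asrt_q\leq\idmap_X\pplus\idmap_Y=\idmap_{X+Y}$, and uniqueness of $\asrt$ identifies the two sides. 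For part~(6), $\kerbot(\asrt_p)\ovee\kerbot(\asrt_{p^\bot})=p\ovee p^\bot=\one$, so $\instr_p$ is total; Lemma~\ref{lem:sumpairing} rewrites $\klin{\instr_p}=(\kappa_1\pafter\asrt_p)\ovee(\kappa_2\pafter\asrt_{p^\bot})$, whence $\klin{\nabla\after\instr_p}=\asrt_p\ovee\asrt_{p^\bot}=\asrt_\one=\idmap$. Part~(7) unfolds $p\andthen q=\one\pafter\asrt_p\pafter\asrt_q=\one\pafter\asrt_q\pafter\asrt_p=q\pafter\asrt_p$ using commutativity, and the inequalities follow from monotonicity of $\pafter$ together with the symmetry of $\andthen$. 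For part~(8), I use $\pbox{\asrt_p}(q)=(q^\bot\pafter\asrt_p)^\bot=(p\andthen q^\bot)^\bot$ from Exercise~\ref{exc:subst}(iv) and part~(7), then prove $(p\andthen q^\bot)\ovee(p\andthen q)\ovee p^\bot=\one$ by collapsing the first two summands via bilinearity: $(q^\bot\ovee q)\pafter\asrt_p=\kerbot(\asrt_p)=p$, and $p\ovee p^\bot=\one$; uniqueness of orthosupplements in $\Pred(X)$ finishes it.

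For part~(9), the equivalences $(a)\Leftrightarrow(b)$ and $(c)\Leftrightarrow(d)$ are immediate from the bijection $\asrt$ together with $\asrt_\zero=\zero$, and $(a)\Leftrightarrow(c)$ follows by decomposing $\asrt_p=\asrt_p\pafter\idmap=(\asrt_p\pafter\asrt_p)\ovee(\asrt_p\pafter\asrt_{p^\bot})$ and applying cancellativity. The implication $(e)\Rightarrow(d)$ is immediate from part~(7), since $p\andthen p^\bot$ is then a common lower bound of $p$ and $p^\bot$. The main obstacle is the converse $(d)\Rightarrow(e)$: suppose $\asrt_p\pafter\asrt_{p^\bot}=\zero$, and let $x\in\Pred(X)$ satisfy $x\leq p$ and $x\leq p^\bot$. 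Choose $y,z$ with $\asrt_x\ovee\asrt_y=\asrt_p$ and $\asrt_x\ovee\asrt_z=\asrt_{p^\bot}$ (using that $\asrt$ preserves $\ovee$). Expanding $\zero=\asrt_p\pafter\asrt_{p^\bot}$ twice via bilinearity of $\pafter$ produces four orthogonal summands, each of which must vanish by positivity of $\ovee$ (Prop.~\ref{prop:effectusPCMorder}(1)); in particular $\asrt_x\pafter\asrt_x=\asrt_x\pafter\asrt_y=\asrt_x\pafter\asrt_z=\zero$, and by commutativity also $\asrt_y\pafter\asrt_x=\zero$. Hence $\asrt_x\pafter\asrt_p=(\asrt_x\pafter\asrt_x)\ovee(\asrt_x\pafter\asrt_y)=\zero$ and symmetrically $\asrt_x\pafter\asrt_{p^\bot}=\zero$, so $\asrt_x=\asrt_x\pafter(\asrt_p\ovee\asrt_{p^\bot})=\zero$, giving $x=\kerbot(\zero)=\zero$.
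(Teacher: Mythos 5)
Your proposal is correct, and for most of the nine parts it follows essentially the same route as the paper: every identity is pushed through the isomorphism $\kerbot\colon \sEnd(X)\rightarrow\Pred(X)$, using that $\kerbot$ preserves and reflects $\ovee$ and $\zero$. Two places diverge genuinely. In part~\eqref{lem:commeffectus:subst} you avoid the paper's appeal to Lemma~\ref{lem:ker}~\eqref{lem:ker:squareneg} and instead verify $(p\andthen q^{\bot})\ovee(p\andthen q)\ovee p^{\bot}=\one$ directly and invoke uniqueness of orthosupplements; this is the same computation unwound by one level, since that lemma is itself proved by uniqueness of orthosupplements. The real difference is in the implication \eqref{lem:commeffectus:BA:zeroandthen}$\Rightarrow$\eqref{lem:commeffectus:BA:sharp} of part~\eqref{lem:commeffectus:BA}: the paper takes a common lower bound $q$ of $p$ and $p^{\bot}$ and reasons with the predicate composites $q\pafter\asrt_{p^{\bot}}$ and $q\pafter\asrt_{p}$, using monotonicity and Lemma~\ref{lem:zero}; you instead lift $q$ to its assert map, decompose $\asrt_{p}=\asrt_{q}\ovee\asrt_{y}$ and $\asrt_{p^{\bot}}=\asrt_{q}\ovee\asrt_{z}$, expand the vanishing product $\asrt_{p}\pafter\asrt_{p^{\bot}}$ into four orthogonal summands, and kill them all by positivity, Proposition~\ref{prop:effectusPCMorder}~\eqref{prop:effectusPCMorder:pos}. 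Both arguments are sound; yours trades the paper's order-theoretic reasoning for a distribute-and-use-positivity argument that makes the role of commutativity and of bilinearity of $\pafter$ more visible. One small caveat on part~\eqref{lem:commeffectus:EA}: you attribute uniqueness of orthosupplements in $\sEnd(X)$, and the axiom $f\orthogonal\idmap\Rightarrow f=\zero$, to the cancellation of Proposition~\ref{prop:effectusPCMorder}~\eqref{prop:effectusPCMorder:canc}, which only yields $f\ovee g=g\Rightarrow f=\zero$ and is not by itself the binary cancellation you need; the clean argument --- which your opening sentence already announces as the unifying strategy --- is to apply $\kerbot$, cancel in the effect algebra $\Pred(X)$, and then use injectivity of $\kerbot$ on $\sEnd(X)$.
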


The instrument map $\instr_{p} \colon X \rightarrow X+X$ from
point~\eqref{lem:commeffectus:instr} can be understood as a `case'
expression, sending the input to the left component in $X+X$ if $p$
holds, and to the right component otherwise. The property $\nabla
\after \instr_{p} = \idmap$ says that this instrument map has no
side-effects.  We will encounter instrument maps in a non-commutative
setting in Section~\ref{sec:noncomm} where they do have
side-effects. Such side-effects are an essential aspect of the quantum
world. They don't exist in the current commutative setting.

\begin{proof}
We handle these points one-by-one.
\begin{enumerate}
\item First we have to produce an orthosupplement for an arbitrary
  $f\in\sEnd(X)$. Since $f \leq \idmap$, we have $f \ovee g = \idmap$
  for some $g \in \End(X)$. Clearly $g\leq \idmap$. It is unique with
  this property, since if $f \ovee h = \idmap = f \ovee g$, then
  $\kerbot(f) \ovee \kerbot(g) = \one = \kerbot{f} \ovee
  \kerbot(h)$. Hence $\kerbot(g) = \kerbot(h)$ by cancellation in the
  effect algebra $\Pred(X)$. But then $g = h$ since $\kerbot$ is an
  isomorphism.

We see that $\idmap = \zero^{\bot} \in \sEnd(X)$. If $\idmap
\orthogonal f$ in $\sEnd(X)$, then $\kerbot(\idmap) = \one \orthogonal
\kerbot(f)$.  Hence $\kerbot(f) = \zero$ in $\Pred(X)$, and thus $f =
\zero$ since $\kerbot$ reflects $\zero$.

The partial composition operation $\pafter$ on $\sEnd(X)$ preseves
$\ovee, \zero$ by
Proposition~\ref{prop:effectusPCM}~\eqref{prop:effectusPCM:pres}. Moreover,
the top element $\idmap \in \sEnd(X)$ obviously satisfies $p \pafter
\idmap = p = \idmap \pafter p$. Hence $\sEnd(X)$ is an effect monoid,
see Definition~\ref{def:EMon}. We show that it is commutative. For
arbitrary maps $f,g\colon X \pto X$ write $p = \kerbot(f)$ and $q =
\kerbot(g)$, so that $f = \asrt_{p}$ and $g = \asrt_{q}$. Then we are
done by Definition~\ref{def:commbool}~\eqref{def:commbool:comm}:
$$\begin{array}{rcccccl}
f \pafter g
& = &
\asrt_{p} \pafter \asrt_{q}
& = &
\asrt_{q} \pafter \asrt_{p}
& = &
g \pafter f.
\end{array}$$

\item We have $\asrt_{\one} = \idmap$, since $\one =
  \kerbot(\idmap)$. Next, if $p \orthogonal q$ in $\Pred(X)$, then
  $\kerbot(\asrt_{p}) = p \orthogonal q = \kerbot(\asrt_{q})$, and
  thus $\asrt_{p} \orthogonal \asrt_{q}$ since $\kerbot$ reflects
  $\orthogonal$. In that case $\asrt_{p\ovee q} = \asrt_{p} \ovee
  \asrt_{q}$ since:
$$\begin{array}{rcl}
\kerbot(\asrt_{p \ovee q})
\hspace*{\arraycolsep}=\hspace*{\arraycolsep}
p \ovee q
& = &
\kerbot(\asrt_{p}) \ovee \kerbot(\asrt_{q}) \\
& = &
\kerbot\big(\asrt_{p} \ovee \asrt_{q}\big).
\end{array}$$

\noindent By construction, $\kerbot \colon \sEnd(X)
\conglongrightarrow \Pred(X)$ sends $\pafter$ to $\andthen$, so that
$\Pred(X)$ becomes a commutative effect monoid, and $\kerbot$ an
isomorphism of effect monoids.

\auxproof{
We have $0\andthen p = 0$ and $1 \andthen p  = p$ since:
$$\begin{array}{rcl}
0\andthen p
& = &
\kerbot(\asrt_{p} \pafter \asrt_{0})
\hspace*{\arraycolsep}=\hspace*{\arraycolsep}
\kerbot(\asrt_{p} \pafter 0)
\hspace*{\arraycolsep}=\hspace*{\arraycolsep}
\kerbot(0)
\hspace*{\arraycolsep}=\hspace*{\arraycolsep}
0
\\
1\andthen p
& = &
\kerbot(\asrt_{p} \pafter \asrt_{1})
\hspace*{\arraycolsep}=\hspace*{\arraycolsep}
\kerbot(\asrt_{p} \pafter \idmap)
\hspace*{\arraycolsep}=\hspace*{\arraycolsep}
\kerbot(\asrt_{p})
\hspace*{\arraycolsep}=\hspace*{\arraycolsep}
p.
\end{array}$$

\noindent Associativity of the product is easy:
$$\begin{array}{rcl}
p \andthen (q\andthen r)
& = &
\kerbot\big(\asrt_{q \andthen r} \pafter \asrt_{p}\big) \\
& = &
\kerbot\big(\asrt_{r} \pafter \asrt_{q} \pafter \asrt_{p}\big) \\
& = &
\kerbot\big(\asrt_{r} \pafter \asrt_{p\andthen q}\big) \\
& = &
(p\andthen q) \andthen r.
\end{array}$$

\noindent Further,
$$\begin{array}[b]{rcl}
(p \ovee p') \andthen q
& = &
\kerbot\big(\asrt_{q} \pafter \asrt_{p \ovee p'}\big) \\
& = &
\kerbot\big(\asrt_{q} \pafter (\asrt_{p} \ovee \asrt_{p'})\big) \\
& = &
\kerbot\big((\asrt_{q} \pafter \asrt_{p}) \ovee 
   (\asrt_{q} \pafter \asrt_{p'})\big) \\
& = &
\kerbot\big(\asrt_{q} \pafter \asrt_{p}\big) \ovee 
   \kerbot\big(\asrt_{q} \pafter \asrt_{p'}\big) \\
& = &
(p\andthen q) \ovee (p' \ovee q).
\end{array}$$
}

\item By Lemma~\ref{lem:zero}, using that $\one \pafter \asrt_{p} = 
\kerbot(\asrt_{p}) = p$,
$$\begin{array}{rcccl}
\asrt_{p} \pafter f = \zero
& \Longleftrightarrow &
\one \pafter \asrt_{p} \pafter f = \zero
& \Longleftrightarrow &
p \pafter f = \zero.
\end{array}$$

\item We obtain $\asrt_{s} = s$ for a scalar $s$ from
  Lemma~\ref{lem:ker}~\eqref{lem:ker:pred}:
$$\begin{array}{rcccl}
\kerbot(\asrt_{s})
& = &
s
& = &
\kerbot(s).
\end{array}$$

\item The equation $\asrt_{[p,q]} = \asrt_{p} \pplus \asrt_{q}$ follows from:
$$\begin{array}{rcl}
\kerbot\big(\asrt_{p} \pplus \asrt_{q}\big)
& = &
\one \pafter [\kappa_{1} \pafter \asrt_{p}, \kappa_{2} \pafter \asrt_{q}] \\
& = &
[\one \pafter \kappa_{1} \pafter \asrt_{p}, 
   \one \pafter \kappa_{2} \pafter \asrt_{q}] \\
& = &
[\one \pafter \asrt_{p}, \one \pafter \asrt_{q}] 
   \quad \mbox{since coprojections are total} \\
& = &
[\kerbot(\asrt_{p}), \kerbot(\asrt_{q})] \\
& = &
[p, q] \\
& = &
\kerbot(\asrt_{[p,q]}).
\end{array}$$

\item For a predicate $p\in\Pred(X)$ we have by definition:
  $\ker(\asrt_{p}) = p^{\bot} = \kerbot(\asrt_{p})$. Hence we can use
  the pairing from Lemma~\ref{lem:pairing} and can form the (total)
  instrument map $\instr_{p} = \dtuple{\asrt_{p}, \asrt_{p^\bot}}
  \colon X \rightarrow X+X$. We obtain $\nabla \after \instr_{p} =
  \idmap$ from:
$$\begin{array}{rcl}
\klin{\nabla \after \instr_{p}}
\hspace*{\arraycolsep}=\hspace*{\arraycolsep}
\nabla \after \klin{\instr_{p}}
& \smash{\stackrel{\eqref{eqn:dtupleovee}}{=}} &
\nabla \after 
   \big((\kappa_{1} \after \asrt_{p}) \ovee (\kappa_{2} \after \asrt_{p^\bot})\big)
   \\
& = &
(\nabla \after \kappa_{1} \after \asrt_{p}) \ovee 
   (\nabla \after \kappa_{2} \after \asrt_{p^\bot}) \\
& = &
\asrt_{p} \ovee \asrt_{p^\bot} \\
& = &
\klin{\idmap}.
\end{array}$$

\item Simply: $p\andthen q = \kerbot(\asrt_{q} \pafter \asrt_{p}) = \one
  \pafter \asrt_{q} \pafter \asrt_{p} = q \pafter \asrt_{p}$.

Since $\asrt_{p} \leq \idmap$, we get $p\andthen q = q \pafter \asrt_{p}
\leq q \pafter \idmap = q$. By commutativity we obtain: $p\andthen q = q
\andthen p \leq p$.

\item We have:
$$\begin{array}[b]{rcl}
\pbox{\big(\asrt_{p}\big)}(q)
& = &
\pbox{\big(\asrt_{p}\big)}(q^{\bot\bot}) \\
& = &
\pbox{\big(\asrt_{p}\big)}(q^{\bot})^{\bot} \ovee \ker(\asrt_{p}) 
   \qquad \mbox{by Lemma~\ref{lem:ker}~\eqref{lem:ker:squareneg}} \\
& = &
(q \pafter \asrt_{p}) \ovee p^{\bot} \\
& = &
(p\andthen q) \ovee p^{\bot}.
\end{array}$$

\item The equivalences $\eqref{lem:commeffectus:BA:idempasrt}
  \Leftrightarrow \eqref{lem:commeffectus:BA:idempandthen}$ and
  $\eqref{lem:commeffectus:BA:zeroasrt} \Leftrightarrow
  \eqref{lem:commeffectus:BA:zeroandthen}$ are obvious, via the
  isomorphism $\kerbot$. We prove
  $\eqref{lem:commeffectus:BA:idempasrt} \Leftrightarrow
  \eqref{lem:commeffectus:BA:zeroasrt}$ and
  $\eqref{lem:commeffectus:BA:sharp} \Leftrightarrow
  \eqref{lem:commeffectus:BA:idempasrt}$.

For $\eqref{lem:commeffectus:BA:idempasrt} \Rightarrow
\eqref{lem:commeffectus:BA:zeroasrt}$, let predicate $p$ satisfy
$\asrt_{p} \pafter \asrt_{p} = \asrt_{p}$.  Then:
$$\begin{array}{rcl}
\asrt_{p}
\hspace*{\arraycolsep}=\hspace*{\arraycolsep}
\asrt_{p} \pafter \idmap
& = &
\asrt_{p} \pafter (\asrt_{p} \ovee \asrt_{p^\bot}) \\
& = &
(\asrt_{p} \pafter \asrt_{p}) \ovee (\asrt_{p} \pafter \asrt_{p^\bot}) \\
& = &
\asrt_{p} \ovee (\asrt_{p} \pafter \asrt_{p^\bot}).
\end{array}$$

\noindent Hence $\asrt_{p} \pafter \asrt_{p^\bot} = \zero$ by
Lemma~\ref{prop:effectusPCMorder}~\eqref{prop:effectusPCMorder:canc}.

In the reverse direction, assuming $\asrt_{p} \pafter \asrt_{p^\bot} =
\zero$ we obtain:
$$\begin{array}{rcl}
\asrt_{p}
\hspace*{\arraycolsep}=\hspace*{\arraycolsep}
\asrt_{p} \pafter \idmap
& = &
\asrt_{p} \pafter (\asrt_{p} \ovee \asrt_{p^\bot}) \\
& = &
(\asrt_{p} \pafter \asrt_{p}) \ovee (\asrt_{p} \pafter \asrt_{p^\bot}) \\
& = &
(\asrt_{p} \pafter \asrt_{p}) \ovee \zero \\
& = &
\asrt_{p} \pafter \asrt_{p}.
\end{array}$$

For $\eqref{lem:commeffectus:BA:zeroasrt} \Rightarrow
\eqref{lem:commeffectus:BA:sharp}$ we assume $\asrt_{p} \pafter
\asrt_{p^\bot} = \zero$. Let $q \leq p$ and $q \leq p^{\bot}$.  If we
show $q = 0$, then $p \wedge p^{\bot} = \zero$. First, the inequality
$q \leq p$ gives:
$$\begin{array}{rcccccccl}
q \pafter \asrt_{p^\bot}
& \leq &
p \pafter \asrt_{p^{\bot}}
& = &
\one \pafter \asrt_{p} \pafter \asrt_{p^\bot}
& = &
\one \pafter \zero
& = &
\zero.
\end{array}$$

\noindent Hence:
$$\begin{array}{rcccccccl}
q \pafter \asrt_{p}
& = &
(q \pafter \asrt_{p}) \ovee (q \pafter \asrt_{p^\bot})
& = &
q \pafter (\asrt_{p} \ovee \asrt_{p^\bot})
& = &
q \pafter \idmap
& = &
q.
\end{array}$$

\noindent But now the inequality $q \leq p^{\bot}$ gives the required
result:
$$\begin{array}{rcccccccccl}
q 
& = &
q \pafter \asrt_{p}
& \leq &
p^{\bot} \pafter \asrt_{p}
& = &
\one \pafter \asrt_{p^\bot} \pafter \asrt_{p}
& = &
\one \pafter \zero
& = &
\zero.
\end{array}$$

Finally, for $\eqref{lem:commeffectus:BA:sharp} \Rightarrow
\eqref{lem:commeffectus:BA:zeroasrt}$ let $p \wedge p^{\bot} =
\zero$. We have $p \andthen p^{\bot} \leq p$ and also $p \andthen
p^{\bot} \leq p^{\bot}$ by point~\eqref{lem:commeffectus:order}.
Hence $\zero = p \andthen p^{\bot} = \kerbot(\asrt_{p} \pafter
\asrt_{p^\bot})$. Since $\kerbot$ reflects $\zero$, we obtain
$\asrt_{p} \pafter \asrt_{p^\bot} = \zero$. \QED
\end{enumerate}
\end{proof}

We show how Bayes' rule can be described abstractly in the current
setting. A type-theoretic formulation of these ideas is elaborated
in~\cite{AdamsJ15}.

\begin{example}
\label{ex:Bayes}
Let $(\cat{C},I)$ be a commutative effectus, in partial form, with
normalisation,\index{S}{normalisation} as described in
Remark~\ref{rem:normalisation}.  Consider a total state $\omega\colon
1 \rightarrow X$ and a predicate $p\colon X \rightarrow I$ on the same
object $X$. We obtain a substate $\asrt_{p} \after \omega \colon 1
\rightarrow X$ with:
$$\begin{array}{rcccl}
\one \after \asrt_{p} \after \omega
& = &
p \after \omega
& \smash{\stackrel{\eqref{eqn:Born}}{=}} &
\omega \models p.
\end{array}$$

\noindent We thus have by Lemma~\ref{lem:zero}:
$$\begin{array}{rcl}
\asrt_{p} \after \omega = \zero
& \Longleftrightarrow &
\big(\omega\models p\big) = \zero.
\end{array}$$

\noindent Now let the validity $\omega\models p$ be non-zero. Then we
can normalise the substate $\asrt_{p} \after \omega$. We write the
resulting total `conditional' state as $\omega|_{p} \colon 1
\rightarrow X$.\index{N}{$\omega"|"_{p}$, conditional
  state}\index{S}{conditional state}\index{S}{state!conditional --} It
satisfies by construction, see~\eqref{eqn:normalisation}:
\begin{equation}
\label{eqn:condstate}
\begin{array}{rcl}
\omega|_{p} \after (\omega\models p)
& = &
\asrt_{p} \after \omega,
\end{array}
\end{equation}

\noindent where composition $\after$ on the left is scalar
multiplication, see Lemma~\ref{lem:substatePCMod}. This new
`conditional' state $\omega|_{p}$ should be read as the update of
state $\omega$ after learning $p$.

We claim that we now have the following abstract version of Bayes'
rule:\index{S}{Bayes' rule} for an arbitrary predicate $q$ on $X$,
\begin{equation}
\label{eqn:Bayes}
\begin{array}{rcl}
\big(\omega|_{p} \models q\big) \cdot \big(\omega \models p\big)
& = &
\big(\omega \models p\andthen q\big).
\end{array}
\end{equation}

\noindent In presence of division, this equation can be recast in
more familiar form:
$$\begin{array}{rcl}
\omega|_{p} \models q
& = &
\displaystyle\frac{\omega\models p\andthen q}{\omega \models p}.
\end{array}$$

\noindent The proof of Bayes' equation~\eqref{eqn:Bayes} is easy:
$$\begin{array}{rcl}
\big(\omega|_{p} \models q\big) \cdot \big(\omega \models p\big)
& = &
q \after \omega|_{p} \after \big(\omega \models p\big) \\
& \smash{\stackrel{\eqref{eqn:condstate}}{=}} &
q \after \asrt_{p} \after \omega \\
& = &
(p \andthen q) \after \omega \\
& = &
\omega \models p\andthen q.
\end{array}$$

\noindent This abstract description suggests how to do conditional
probability in a non-commutative setting, in presence of assert maps,
see Section~\ref{sec:noncomm}, and see also~\cite{LeiferS13}.

At this abstract level the \emph{total probability law}, also known as
\emph{the rule of belief propagation}~\cite{LeiferS13} also holds: for
a state $\omega$ on $X$, an $n$-test $p_{1}, \ldots, p_{n}$ on $X$,
and an arbitrary predicate $q$ on $X$,
$$\begin{array}{rcl}
\big(\omega\models q\big)
& = &
\bigovee_{i}\big(\omega|_{p_i}\models q\big) \cdot \big(\omega\models p_{i}\big).
\end{array}$$

\noindent The proof is left to the interested reader.

\auxproof{
$$\begin{array}{rcl}
\lefteqn{\textstyle\bigovee_{i}\big(\omega|_{p_i}\models q\big) \cdot 
   \big(\omega\models p_{i}\big)} \\
& \smash{\stackrel{\eqref{eqn:Bayes}}{=}} &
\bigovee_{i} \omega \models p_{i} \andthen q \\
& = &
\omega \models \bigovee_{i} (p_{i} \andthen q) \\
& & \qquad
   \mbox{since $\omega\models (-) = (-) \after \omega = \tbox{\omega}$
      is a map of effect modules} \\
& = &
\omega \models (\bigovee_{i} p_{i}) \andthen q \\
& = &
\omega \models \one \andthen q \\
& = &
\omega\models q.
\end{array}$$
}

In the current setting we can take the commutative effectus
$\Kl(\Dst)$ as example. For a total state $\omega\in\Dst(X)$ on a set
$X$ and a (fuzzy) predicate $p\in [0,1]^{X}$ with $\omega\models p =
\sum_{x} \omega(x)\cdot p(x) \neq 0$ we obtain as normalised state
$\omega|_{p} \in \Dst(X)$,
$$\begin{array}{rcl}
\omega|_{p}
& = &
{\displaystyle\sum}_{x}
   \displaystyle\frac{\omega(x)\cdot p(x)}{\omega\models p}\bigket{x}.
\end{array}$$

\noindent Then indeed, for $q\in [0,1]^{X}$,
$$\begin{array}{rcccccl}
\omega|_{p} \models q
& = &
\sum_{x} \omega_{p}(x)\cdot q(x)
& = &
\sum_{x} \displaystyle\frac{\omega(x)\cdot p(x) \cdot q(x)}{\omega\models p}
& = &
\displaystyle\frac{\omega\models p\andthen q}{\omega\models p}.
\end{array}$$

We briefly mention the continuous probabilistic case, given by the
Kleisli category $\Kl(\Giry)$ of the Giry monad $\Giry$ on measurable
spaces. For a measurable space $X$, with set $\Sigma_X$ of measurable
subsets, let $\omega \in \Giry(X)$ be a probablity distribution. Each
measurable subset $M\in\Sigma_{X}$ gives rise to predicate $\indic{M}
\colon X \rightarrow [0,1]$ with $\indic{M}(x) = 1$ if $x\in M$ and
$\indic{M}(x) = 0$ otherwise. We claim that the conditional state
$\omega|_{\indic{M}} \in \Giry(X)$, as described above, is the
conditional probability measure $\omega(-\mid M)$, if $\omega(M) \neq
0$.

Indeed, the subprobability measure $\asrt_{\indic{M}} \after \omega
\colon \Sigma_{X} \rightarrow [0,1]$ is given by $A \mapsto \int
\indic{M\cap A} \intd \omega = \omega(M \cap A)$. Normalisation gives
the conditional probability:
$$\begin{array}{rcccl}
\omega|_{\indic{M}}(A)
& = &
\displaystyle\frac{\omega(M\cap A)}{\omega(M)}
& = &
\omega(A\mid M).
\end{array}$$
\end{example}

We turn to Boolean effectuses and collect some basic results.

\begin{lemma}
\label{lem:booleffectus}
Let $\cat{B}$ now be a Boolean effectus, that is, a commutative
effectus in which $\asrt_{p} \pafter \asrt_{p^\bot} = \zero$ holds for
each predicate $p$.
\begin{enumerate}
\item \label{lem:booleffectus:idemp} All assert maps are idempotent,
  that is, $\asrt_{p} \pafter \asrt_{p} = \asrt_{p}$, and all
  predicates $p$ are sharp, that is, $p \wedge p^{\bot} = \zero$.

\item \label{lem:booleffectus:conj} The predicate $p \andthen q$ is
  the meet/conjunction $p\wedge q$ in $\Pred(X)$. Disjunctions then
  also exist via De Morgan: $p \vee q = (p^{\bot} \wedge
  q^{\bot})^{\bot}$.

\item \label{lem:booleffectus:ortho} $p \orthogonal q$ iff $\asrt_{p}
  \pafter \asrt_{q} = \zero$ iff $p\wedge q = \zero$.

\item \label{lem:booleffectus:ovee} If $p \orthogonal q$, then $p
  \ovee q = p \vee q$.

\item \label{lem:booleffectus:BA} Conjunction $\wedge$ distributes
  over disjunction $\vee$, making each effect algebra $\Pred(X)$ a
  Boolean algebra.\index{S}{Boolean!-- algebra}
\end{enumerate}
\end{lemma}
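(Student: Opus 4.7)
My plan is to prove the five parts in the listed order, exploiting the Boolean axiom $\asrt_p \pafter \asrt_{p^\bot} = \zero$ together with the results of Lemma~\ref{lem:commeffectus}. Part~\eqref{lem:booleffectus:idemp} is immediate: the Boolean axiom is exactly condition~(c) of Lemma~\ref{lem:commeffectus}~\eqref{lem:commeffectus:BA}, so all five equivalents hold for every predicate; idempotency of $\asrt_p$ is~(a) and sharpness of $p$ is~(e).

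For part~\eqref{lem:booleffectus:conj}, I first need monotonicity of $\andthen$. The kernel-supplement $\kerbot \colon \sEnd(X) \to \Pred(X)$ is an isomorphism of effect algebras, hence an order-isomorphism, so $r \leq p$ yields $\asrt_r \leq \asrt_p$ and then $r \andthen q = q \pafter \asrt_r \leq q \pafter \asrt_p = p \andthen q$ via Proposition~\ref{prop:effectusPCMorder}. Combined with Lemma~\ref{lem:commeffectus}~\eqref{lem:commeffectus:order} (so $p \andthen q$ is a lower bound) and the idempotency $r = r \andthen r$ from part~\eqref{lem:booleffectus:idemp}, this makes $p \andthen q$ the greatest lower bound. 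Part~\eqref{lem:booleffectus:ortho} then follows: the middle equivalence drops out from $\kerbot$ reflecting $\zero$ applied to $\kerbot(\asrt_p \pafter \asrt_q) = p \andthen q$; for $p \orthogonal q \Rightarrow p \wedge q = \zero$ use $q \leq p^\bot$ and sharpness of $p$; for the converse I apply the decomposition $q = q \pafter (\asrt_p \ovee \asrt_{p^\bot}) = (p \andthen q) \ovee (p^\bot \andthen q)$, which under $p \wedge q = \zero$ collapses to $q = p^\bot \andthen q \leq p^\bot$.

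For part~\eqref{lem:booleffectus:ovee}, given $p \orthogonal q$, I expand $\one = \one \andthen \one = (p \ovee p^\bot) \andthen (q \ovee q^\bot)$ by bilinearity of $\andthen$ to get
\[
\one = (p \wedge q) \ovee (p \wedge q^\bot) \ovee (p^\bot \wedge q) \ovee (p^\bot \wedge q^\bot).
\]
Using the decompositions $p = (p \wedge q) \ovee (p \wedge q^\bot)$ and $q = (p \wedge q) \ovee (p^\bot \wedge q)$ together with $p \wedge q = \zero$ from part~\eqref{lem:booleffectus:ortho}, this collapses to $\one = p \ovee q \ovee (p^\bot \wedge q^\bot)$. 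Uniqueness of orthosupplements then gives $p \ovee q = (p^\bot \wedge q^\bot)^\bot$, which is the De Morgan join $p \vee q$.

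Part~\eqref{lem:booleffectus:BA} is the main obstacle. Define $p \vee q \defeq (p^\bot \wedge q^\bot)^\bot$; that this is actually the join of $p$ and $q$ follows directly from double negation together with anti-monotonicity of orthosupplement (Exercise~\ref{exc:EA}~\eqref{exc:EA:OrthoAntiMonotone}). The inclusion $(p \wedge q) \vee (p \wedge r) \leq p \wedge (q \vee r)$ is routine monotonicity. For the reverse, I set $u = p \wedge (q \vee r)$ and decompose $u = (u \wedge q) \vee (u \wedge q^\bot)$, using that $(u \wedge q) \wedge (u \wedge q^\bot) \leq q \wedge q^\bot = \zero$ gives orthogonality via part~\eqref{lem:booleffectus:ortho} and hence $\vee = \ovee$ by part~\eqref{lem:booleffectus:ovee}. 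The first summand satisfies $u \wedge q \leq p \wedge q$; the delicate step is bounding $u \wedge q^\bot$. For this I would prove the sublemma $q^\bot \wedge (q \vee r) = q^\bot \wedge r$ by first showing $q \vee r = q \vee (q^\bot \wedge r)$ (applying the decomposition $r = (q \wedge r) \ovee (q^\bot \wedge r)$ and part~\eqref{lem:booleffectus:ovee} to pass to $q \ovee (q^\bot \wedge r)$), then computing $q^\bot \andthen (q \ovee (q^\bot \wedge r))$ via bilinearity of $\andthen$ and the sharpness $q^\bot \andthen q = \zero$. This yields $u \wedge q^\bot \leq p \wedge q^\bot \wedge r \leq p \wedge r$, completing distributivity and so the Boolean algebra structure on $\Pred(X)$.
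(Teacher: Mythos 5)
Your proof is correct, and in the two substantive parts it takes a genuinely different route from the paper's. Parts~\eqref{lem:booleffectus:idemp}--\eqref{lem:booleffectus:ortho} match the paper essentially step for step (the paper also reduces~\eqref{lem:booleffectus:idemp} to Lemma~\ref{lem:commeffectus}~\eqref{lem:commeffectus:BA}, gets monotonicity of $\andthen$ from preservation of $\ovee$, and uses the same $\asrt_{q}\ovee\asrt_{q^\bot}=\idmap$ decomposition for the converse in~\eqref{lem:booleffectus:ortho}). For~\eqref{lem:booleffectus:ovee} the paper argues directly that $p\ovee q$ is the \emph{least} upper bound: from $p,q\leq r$ it gets $\asrt_{r^\bot}\pafter\asrt_{p}=\asrt_{r^\bot}\pafter\asrt_{q}=\zero$, hence $\asrt_{r^\bot}\pafter\asrt_{p\ovee q}=\zero$ and so $p\ovee q\leq r$ via the equivalence of~\eqref{lem:booleffectus:ortho}; you instead identify $p\ovee q$ with the De Morgan dual $(p^{\bot}\wedge q^{\bot})^{\bot}$ by expanding $\one=(p\ovee p^{\bot})\andthen(q\ovee q^{\bot})$ into a fourfold partition and cancelling, which has the pleasant side effect of making the De Morgan formula explicit rather than relying on it as a definition. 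For distributivity in~\eqref{lem:booleffectus:BA} the paper picks an arbitrary lower bound $x$ of $p\wedge(q\vee r)\wedge(p\wedge q)^{\bot}\wedge(p\wedge r)^{\bot}$ and drives it to $\zero$ using repeatedly the equivalence $x\leq y^{\bot}\Leftrightarrow x\wedge y=\zero$; you instead split $u=p\wedge(q\vee r)$ as $(u\wedge q)\vee(u\wedge q^{\bot})$ and bound the second piece via the sublemma $q^{\bot}\wedge(q\vee r)=q^{\bot}\wedge r$, itself obtained from $q\vee r=q\ovee(q^{\bot}\wedge r)$ and bilinearity of $\andthen$. Your argument is more computational and constructive (it exhibits the decomposition witnessing distributivity), while the paper's is shorter once the equivalence $(*)$ is in hand; both are complete, and all the auxiliary facts you invoke (order-isomorphism of $\kerbot$, $\ovee$-preservation of $\andthen$ in each argument, associativity of genuine meets) are available from Lemma~\ref{lem:commeffectus} and Proposition~\ref{prop:effectusPCMorder}.
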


\begin{proof}
Most of these points are relatively easy, except the last one.
\begin{enumerate}
\item Directly by
  Lemma~\ref{lem:commeffectus}~\eqref{lem:commeffectus:BA}.

\item We have $p\andthen p = \kerbot(\asrt_{p} \pafter \asrt_{p}) =
  \kerbot(\asrt_{p}) = p$. This allows us to show that $p\andthen q$
  is the meet of $p,q$. We already have $p\andthen q \leq p$ and
  $p\andthen q \leq q$ by
  Lemma~\ref{lem:commeffectus}~\eqref{lem:commeffectus:order}. Next,
  let $r$ be a predicate with $r \leq p$ and $r \leq q$. Since product
  $r\andthen (-)$ preserves $\ovee$, by
  Lemma~\ref{lem:commeffectus}~\eqref{lem:commeffectus:EA}, it is
  monotone. Hence: $r = r\andthen r \leq p\andthen q$.

\item The equivalence $\asrt_{p} \pafter \asrt_{q} = \zero$ iff
  $p\wedge q = \zero$ follows from the previous point and 
  Lemma~\ref{lem:commeffectus}~\eqref{lem:commeffectus:BA}. So let $p
  \orthogonal q$, so that $q \leq p^{\bot}$. Then $p \wedge q \leq p
  \wedge p^{\bot} = \zero$. Conversely, if $\asrt_{p} \pafter
  \asrt_{q} = \zero$ then:
$$\begin{array}{rcl}
\asrt_{p}
\hspace*{\arraycolsep}=\hspace*{\arraycolsep}
\asrt_{p} \pafter \idmap
& = &
\asrt_{p} \pafter (\asrt_{q} \ovee \asrt_{q^\bot}) \\
& = &
(\asrt_{p} \pafter \asrt_{q}) \ovee (\asrt_{p} \pafter \asrt_{q^\bot}) \\
& = &
\zero \ovee (\asrt_{p} \pafter \asrt_{q^\bot}) \\
& = &
\asrt_{p} \pafter \asrt_{q^\bot}.
\end{array}$$
\noindent Hence $p = p \wedge q^{\bot}$, so that $p \leq q^{\bot}$,
and thus $p \orthogonal q$.

\item Let $p \orthogonal q$. We intend to prove that the sum $p \ovee
  q$, if it exists, is the join $p \vee q$.  In any effect algebra, $p
  \ovee q$ is an upperbound of both $p$ and $q$, so we only need to
  prove that it is the least upperbound. Let $p \leq r$ and $q\leq
  r$. The inequality $p \leq r$ says $p \orthogonal r^{\bot}$ and thus
  $\asrt_{r^\bot} \pafter \asrt_{p} = \zero$ by the previous
  point. Similarly $q \leq r$ gives $\asrt_{r^\bot} \pafter \asrt_{q}
  = 0$. But then:
$$\begin{array}{rcl}
\asrt_{r^\bot} \pafter \asrt_{p\ovee q}
& = &
\asrt_{r^\bot} \pafter \big(\asrt_{p} \ovee \asrt_{q}\big) \\
& = &
\big(\asrt_{r^\bot} \pafter \asrt_{p}\big) \ovee 
   \big(\asrt_{r^\bot} \pafter \asrt_{q}\big) 
\hspace*{\arraycolsep}=\hspace*{\arraycolsep}
0 \ovee 0 
\hspace*{\arraycolsep}=\hspace*{\arraycolsep}
0.
\end{array}$$
\noindent Hence $p\ovee q \orthogonal r^{\bot}$, again by the previous
point, and thus $p\ovee q \leq r$.


\item This result can be traced back to~\cite[Thm.~3.11]{BennettF95}.
  However, we give our own proof of distributivity $p \wedge (q \vee
  r) = (p \wedge q) \vee (p \wedge r)$, for all predicates
  $p,q,r\in\Pred(X)$. We repeatedly use the equivalence
$$\begin{array}{rcl}
x \leq y^{\bot} 
& \Longleftrightarrow &
x \wedge y = \zero
\end{array}\eqno{(*)}$$
\noindent from point~\eqref{lem:booleffectus:ortho}.

The inequality $p \wedge (q \vee r) \geq (p \wedge q) \vee (p \wedge r)$
always holds, so we need to prove the inequality:
$$\begin{array}{rcccl}
p \wedge (q \vee r)
& \leq &
(p \wedge q) \vee (p \wedge r)
& = &
\big[(p \wedge q)^{\bot} \wedge (p \wedge r)^{\bot}\big]^{\bot}.
\end{array}$$
\noindent That is, by~$(*)$ we have to prove:
$$\begin{array}{rcl}
p \wedge (q \vee r) \wedge (p \wedge q)^{\bot} \wedge (p \wedge r)^{\bot}
& = &
\zero.
\end{array}$$
\noindent We pick an arbitrary predicate $x$ for which:
$$(a)\; x\leq p
\qquad
(b)\; x \leq p\vee r
\qquad
(c)\; x \leq (p \wedge q)^{\bot}
\qquad
(d)\; x \leq (p \wedge r)^{\bot}.$$
\noindent We need to prove $x = \zero$. The inequality~(a) is
equivalent to $x \wedge p = x$. By~$(*)$, (c) and~(d) are equivalent
to $x \wedge p \wedge q = \zero$ and $x \wedge p \wedge r =
\zero$. But then $x \wedge q = \zero$ and $x \wedge r = \zero$, so
that $x \leq q^{\bot}$ and $x \leq r^{\bot}$ again by~$(*)$. We now
have $x \leq q^{\bot} \wedge r^{\bot} = (p \vee r)^{\bot}$, which,
together with~(b) and point~\eqref{lem:booleffectus:sharp} gives the
required conclusion:
$$\begin{array}{rcccl}
x
& \leq &
(p\vee r) \wedge (p\vee r)^{\bot}
& = &
\zero.
\end{array}\eqno{\QEDbox}$$
\end{enumerate}
\end{proof}

\begin{remark}
\label{rem:BooleanEffectusCommutation}
The commutation requirement $\asrt_{p} \pafter \asrt_{q} = \asrt_{q}
\pafter \asrt_{p}$ from
Definition~\ref{def:commbool}~\eqref{def:commbool:comm} holds
automatically in a Boolean effectus. More precisely, it follows from
the `Boolean' requirement $\asrt_{p} \pafter \asrt_{p^\bot} = \zero$
in Definition~\ref{def:commbool}, assuming the assert isomorphism in
point~\eqref{def:commbool:asrt}. This works as follows.

Let $p,q$ be arbitrary predicates on the same object. The associated
assert maps satisfy:
$$\begin{array}{rcl}
\zero
\hspace*{\arraycolsep}=\hspace*{\arraycolsep}
\asrt_{p} \pafter \asrt_{p^\bot}
& = &
\asrt_{p} \pafter \idmap \pafter \asrt_{p^\bot} \\
& = &
\asrt_{p} \pafter \big(\asrt_{q} \ovee \asrt_{q^{\bot}}\big) \pafter \asrt_{p^\bot} \\
& = &
\big(\asrt_{p} \pafter \asrt_{q} \pafter \asrt_{p^\bot}\big)
   \ovee \big(\asrt_{p} \pafter \asrt_{q^{\bot}} \pafter \asrt_{p^\bot}\big).
\end{array}$$

\noindent By Proposition~\ref{prop:effectusPCMorder}~\eqref{prop:effectusPCMorder:pos} we obtain:
$$\begin{array}{rclcrcl}
\asrt_{p} \pafter \asrt_{q} \pafter \asrt_{p^\bot}
& \smash{\stackrel{(a)}{=}} &
\zero.
& \qquad\mbox{Similarly,}\qquad &
\asrt_{p^\bot} \pafter \asrt_{q} \pafter \asrt_{p}
& \smash{\stackrel{(b)}{=}} &
\zero.
\end{array}$$

\noindent We can now derive commutation of assert maps:
$$\begin{array}{rcl}
\asrt_{p} \pafter \asrt_{q}
& = &
\asrt_{p} \pafter \asrt_{q} \pafter \big(\asrt_{p} \ovee \asrt_{p^\bot}\big) \\
& = &
\big(\asrt_{p} \pafter \asrt_{q} \pafter \asrt_{p}\big) 
   \ovee \big(\asrt_{p} \pafter \asrt_{q} \pafter \asrt_{p^\bot}\big) \\
& \smash{\stackrel{(a)}{=}} &
\big(\asrt_{p} \pafter \asrt_{q} \pafter \asrt_{p}\big) \ovee \zero \\
& \smash{\stackrel{(b)}{=}} &
\big(\asrt_{p} \pafter \asrt_{q} \pafter \asrt_{p}\big)
   \ovee \big(\asrt_{p^\bot} \pafter \asrt_{q} \pafter \asrt_{p}\big) \\
& = &
\big(\asrt_{p} \ovee \asrt_{p^\bot}\big) \pafter \asrt_{q} \pafter \asrt_{p} \\
& = &
\asrt_{q} \pafter \asrt_{p}.
\end{array}$$

\noindent It is not known if this commutation property can be derived
from the assert isomorphism in
Definition~\ref{def:commbool}~\eqref{def:commbool:asrt}.
\end{remark}

The next result justifies the name \emph{Boolean} effectuses. It uses
the category $\BA$\index{N}{cat@$\BA$, category of Boolean algebras}
of Boolean algebras, which is a subcategory $\BA \hookrightarrow \EA$
of the category of effect algebras.

\begin{proposition}
\label{prop:booleffectus}
For a Boolean effectus $\cat{B}$, all predicates are sharp, and the
predicate functor restricts to $\Pred \colon \cat{B} \rightarrow
\op{\BA}$.
\end{proposition}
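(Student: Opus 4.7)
The plan is to invoke the pieces already proved in Lemma~\ref{lem:booleffectus} and then supply the one missing ingredient, namely that substitution preserves Boolean meets and joins.

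First I would recall from Lemma~\ref{lem:booleffectus}~\eqref{lem:booleffectus:idemp} that in a Boolean effectus every predicate is sharp, and from Lemma~\ref{lem:booleffectus}~\eqref{lem:booleffectus:BA} that each effect algebra $\Pred(X)$ is a Boolean algebra. By Theorem~\ref{thm:effectusEMod}, for each map $f\colon Y\to X$ the substitution map $\tbox{f}\colon \Pred(X)\to\Pred(Y)$ is already known to be a homomorphism of effect algebras, hence to preserve $\zero,\one,\ovee$ and the orthosupplement $(-)^{\bot}$. What is left is to upgrade this to a Boolean algebra homomorphism, i.e.\ to show $\tbox{f}(p\wedge q)=\tbox{f}(p)\wedge \tbox{f}(q)$; preservation of $\vee$ then follows via De~Morgan.

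For the crucial step I would argue abstractly that any effect algebra homomorphism $\varphi\colon B_{1}\to B_{2}$ between Boolean algebras automatically preserves $\wedge$. The observation is twofold. On the one hand, $\varphi$ is monotone and $p\wedge q$ is a lower bound of $p,q$, so $\varphi(p\wedge q)\le \varphi(p)\wedge \varphi(q)$. On the other hand, in a Boolean algebra the decomposition $p=(p\wedge q)\vee(p\wedge q^{\bot})$ is a disjoint join, hence an $\ovee$-sum, so $\varphi(p)=\varphi(p\wedge q)\ovee \varphi(p\wedge q^{\bot})$. Setting $r=\varphi(p)\wedge \varphi(q)$, I note that $(p\wedge q^{\bot})\wedge q=\zero$ forces $\varphi(p\wedge q^{\bot})\wedge \varphi(q)=\zero$ (since $x\wedge y=\zero$ in a Boolean algebra is equivalent to $x\orthogonal y$, which is preserved by EA maps), and therefore $r\wedge \varphi(p\wedge q^{\bot})=\zero$. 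Using the Boolean identity $r=r\wedge \varphi(p)=r\wedge\big(\varphi(p\wedge q)\vee \varphi(p\wedge q^{\bot})\big)=(r\wedge \varphi(p\wedge q))\vee\zero$, we conclude $r\le \varphi(p\wedge q)$, yielding the reverse inequality.

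Applying this general fact to $\varphi=\tbox{f}$ gives the required preservation of $\wedge$, so $\tbox{f}$ is a Boolean algebra homomorphism. Since functoriality on objects and composition is inherited from Theorem~\ref{thm:effectusEMod}, we obtain the restricted functor $\Pred\colon \cat{B}\to \op{\BA}$.

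The main obstacle is really only the last algebraic observation above; everything else is bookkeeping on top of Lemma~\ref{lem:booleffectus}. It is worth verifying carefully that the step ``$x\wedge y=\zero$ in a Boolean algebra iff $x\orthogonal y$ in the associated effect algebra'' is indeed what lets the EA morphism transport the disjointness from $B_{1}$ to $B_{2}$; without this it would not be clear that $\tbox{f}(p\wedge q^{\bot})$ and $\tbox{f}(q)$ are disjoint, which is the crux of the inequality $\varphi(p)\wedge \varphi(q)\le \varphi(p\wedge q)$.
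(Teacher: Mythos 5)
Your proof is correct, and it reaches the conclusion by a slightly different route than the paper's. The paper also reduces everything to Lemma~\ref{lem:booleffectus}~\eqref{lem:booleffectus:ortho} and~\eqref{lem:booleffectus:ovee} (disjointness coincides with orthogonality, and disjoint joins are $\ovee$-sums, hence preserved by the effect algebra map $\tbox{f}$), but then it proves preservation of $\vee$ directly: it writes $p\vee q$ as the three-fold disjoint join $(p\wedge q^{\bot})\vee(p\wedge q)\vee(q\wedge p^{\bot})$, duplicates the middle term, and regroups into the disjoint joins $p=(p\wedge q^{\bot})\vee(p\wedge q)$ and $q=(p\wedge q)\vee(q\wedge p^{\bot})$. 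You instead prove preservation of $\wedge$ via an order-theoretic squeeze (monotonicity for one inequality, the disjoint decomposition $p=(p\wedge q)\vee(p\wedge q^{\bot})$ plus distributivity in the codomain for the other) and then recover $\vee$ by De~Morgan. Your version has the mild advantage of isolating a reusable abstract fact --- any effect algebra homomorphism between Boolean algebras carrying their canonical effect algebra structure is automatically a Boolean algebra homomorphism --- at the cost of an explicit appeal to distributivity in the codomain; the paper's version is a single direct computation. The point you flag at the end is indeed the crux, and it is exactly Lemma~\ref{lem:booleffectus}~\eqref{lem:booleffectus:ortho}: in a Boolean effectus $p\orthogonal q$ iff $p\wedge q=\zero$, which is what lets the effect algebra map $\tbox{f}$ transport disjointness from $\Pred(X)$ to $\Pred(Y)$. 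Both arguments are sound.
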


\begin{proof}
For each $X\in\cat{B}$ the collection $\Pred(X)$ contains only sharp
predicates and is a Boolean algebra, by
Lemma~\ref{lem:booleffectus}~\eqref{lem:booleffectus:BA}. We have to
prove that for each map $f\colon Y \rightarrow X$ in $\cat{B}$ the
total substitution functor $\tbox{f} \colon \Pred(X) \rightarrow
\Pred(Y)$ is a map of Boolean algebras. For this it suffices that
$\tbox{f}$ preserves disjunctions $\vee$.

First, let $p,q\in\Pred(X)$ be disjoint, that is, $p \wedge q = 0$. By
Lemma~\ref{lem:booleffectus}~\eqref{lem:booleffectus:ortho}
and~\eqref{lem:booleffectus:ovee}, using that $\tbox{f}$ is a map
of effect algebras,
$$\begin{array}{rcccccl}
\tbox{f}(p \vee q)
& = &
\tbox{f}(p \ovee q)
& = &
\tbox{f}(p) \ovee \tbox{f}(q)
& = &
\tbox{f}(p) \vee \tbox{f}(q).
\end{array}$$

\noindent For arbitrary $p,q$ we can always rewrite the join $p \vee q$
in a Boolean algebra as a disjoint join:
$$\begin{array}{rcl}
p \vee q
& = &
(p \wedge q^{\bot}) \vee (p \wedge q) \vee (q \wedge p^{\bot}).
\end{array}$$

\noindent Since substitution $\tbox{f}$ preserves disjoint joins we
get:
$$\begin{array}[b]{rcl}
\tbox{f}(p \vee q)
& = &
\tbox{f}(p \wedge q^{\bot}) \vee \tbox{f}(p \wedge q) \vee 
   \tbox{f}(q \wedge p^{\bot}) \\
& = &
\tbox{f}(p \wedge q^{\bot}) \vee \tbox{f}(p \wedge q) \vee 
   \tbox{f}(p \wedge q) \vee \tbox{f}(q \wedge p^{\bot}) \\
& = &
\tbox{f}\big((p \wedge q^{\bot}) \vee (p \wedge q)\big) \vee 
   \tbox{f}\big((p \wedge q) \vee (q \wedge p^{\bot})\big) \\
& = &
\tbox{f}(p) \vee \tbox{f}(q).
\end{array}\eqno{\QEDbox}$$
\end{proof}

Later, in Section~\ref{sec:extcat}, we return to Boolean effectuses
and describe their close relationship to extensive categories. But we
first need the notions of comprehension and quotient, 
see Section~\ref{sec:cmpr} and~\ref{sec:quot}.

\section{Monoidal effectuses}\label{sec:monoidal}

In this section we shall consider effectuses with tensors $\otimes$.
Such tensors are used for parallel composition, which is an important
part of quantum theory. For background information on symmetric
monoidal categories we refer to~\cite{Maclane71}.

\begin{definition}
\label{def:monoidal}
An effectus in total form $\cat{B}$ is called
\emph{monoidal}\index{S}{monoidal effectus}\index{S}{effectus!monoidal
  --} if it is a symmetric monoidal category such that:
\begin{enumerate}
\item the tensor unit is the final object $1$;

\item the tensor distributes over finite coproducts $(+,0)$; this
  means that the following canonical maps are isomorphisms:
\begin{equation}
\label{diag:tensorsumdistr}
\vcenter{\xymatrix{
(X\otimes A) + (Y\otimes A)\ar[r] & (X+Y)\otimes A
&
0\ar[r] & 0\otimes A
}}
\end{equation}

\noindent This says that the functor $(-)\otimes A$ preserves finite
coproducts $(+,0)$. By symmetry $A\otimes(-)$ then also preserves
$(+,0)$.
\end{enumerate}
\end{definition}

We recall the notation that is standardly used for monoidal
natural isomorphisms, and add notation for the distributivity
isomorphism~\eqref{diag:tensorsumdistr}.
$$\begin{array}{c}
\vcenter{\xymatrix@C-.5pc{
1\otimes X\ar[r]^-{\lambda}_-{\cong} & X
& 
X\otimes Y\ar[r]^-{\gamma}_-{\cong} & Y\otimes Y
&
X\otimes (Y\otimes Z)\ar[r]^-{\alpha}_-{\cong} &
  (X\otimes Y)\otimes Z
}}
\\
\vcenter{\quad\xymatrix{
X\otimes I\ar[r]^-{\rho = \lambda\after \gamma} & X
& 
(X\otimes A)+(Y\otimes A)
   \ar@/^1.5ex/[rr]^-{[\kappa_{1}\otimes\idmap,\kappa_{2}\otimes\idmap]} & 
   \cong & (X+Y)\otimes A\qquad \ar@/^1.5ex/[ll]^-{\dis}
}}
\end{array}$$

Because the final object $1$ is the tensor unit there are projections
$X \leftarrow X\otimes Y \rightarrow Y$ defined in:
\begin{equation}
\label{diag:tensorproj}
\vcenter{\xymatrix@C+1pc@R-.5pc{
X & X\otimes Y\ar[l]_-{\pi_{1}}\ar[dl]^{\idmap\otimes \bang}
   \ar[r]^-{\pi_{2}}\ar[dr]_{\bang\otimes\idmap} & Y
\\
X\otimes 1\ar[u]^{\rho}_{\cong} & & 1\otimes Y\ar[u]_{\lambda}^{\cong}
}}\index{N}{$\pi_i$, $i$-th projection in a monoidal effectus}
\end{equation}

\noindent These projections take the marginal, see
Example~\ref{ex:monoidal}. They are natural in the sense that $\pi_{i}
\after (f_{1}\otimes f_{2}) = f_{i} \after \pi_{i}$. A monoidal
category in which the tensor unit is final is sometimes called
semicartesian.\index{N}{semicartesian}

We have described tensors in the `total' case, for effectuses. But
they can equivalently be described in the `partial' case. This is the
topic of the next result.

\begin{proposition}
\label{prop:monparttot}
Let $\cat{B}$ be an effectus in total form. Then: $\cat{B}$ is
monoidal effectus if and only if
\begin{itemize}
\item $\Par(\cat{B})$ is symmetric monoidal, and the monoidal and
  distributivity isomorphisms~\eqref{diag:tensorsumdistr} are total
  --- so that $\otimes$ distributes over $(+,0)$ in $\Par(\cat{B})$;

\item the tensor $f\otimes g$ of two total morphisms $f,g$ in
  $\Par(\cat{B})$ is again total;

\item the final object $1\in\cat{B}$ forms a tensor unit in
  $\Par(\cat{B})$.
\end{itemize}
\end{proposition}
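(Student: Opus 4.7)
The plan is to show that the monoidal structure on $\cat{B}$ and the monoidal structure on $\Par(\cat{B})$ determine each other, via the identity-on-objects embedding $\klin{-}\colon \cat{B} \hookrightarrow \Par(\cat{B})$, which by Proposition~\ref{prop:effectuscoproj} is faithful.

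For the direction $(\Rightarrow)$, I would exploit the fact that when the tensor $\otimes$ on $\cat{B}$ distributes over coproducts $(+,0)$, the lift monad $T = (-)+1$ becomes a commutative strong monad. Concretely, using the distributivity isomorphism $\dis$ in~\eqref{diag:tensorsumdistr}, I define the strength and costrength
\[
\xymatrix@C+.5pc{X \otimes (Y+1)\ar[r]^-{\dis^{-1}}_-{\cong} & (X\otimes Y) + (X\otimes 1)\ar[rr]^-{\idmap + (\bang\after\rho)} & & (X\otimes Y)+1,}
\]
and similarly for the other side. Commutativity of these two ways of combining strengths is a routine coherence check using the universal property of coproducts. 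One then gets a tensor $\otimes$ on $\Par(\cat{B}) = \Kl(T)$ by setting, for $f\colon X \pto A$ and $g\colon Y \pto B$, the map $f\otimes g\colon X\otimes Y \pto A\otimes B$ as the usual Kleisli lift of strength applied to $f$ and $g$. The monoidal unit remains $1$, since it is both the tensor unit in $\cat{B}$ and still final in $\Par(\cat{B})$ (its totality is automatic as $\klin{-}$ preserves identities on $1$). That all the monoidal and distributivity isomorphisms are total is immediate because they come from $\cat{B}$ via $\klin{-}$, and by Lemma~\ref{lem:zero} a map in the image of $\klin{-}$ is total. Totality is preserved under $\otimes$ because $\klin{h}\otimes\klin{k} = \klin{h\otimes k}$, which follows from naturality of the strength with respect to total maps.

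For the direction $(\Leftarrow)$, the idea is just to restrict. Since the tensor of two total maps is total and the monoidal isomorphisms $\lambda,\rho,\gamma,\alpha$ are total, by Theorem~\ref{thm:partialtotal} the restriction of $\otimes$ to the wide subcategory of total maps --- which is (isomorphic to) $\cat{B}$ --- yields a symmetric monoidal structure on $\cat{B}$. The tensor unit condition is direct from the hypothesis that $1$ is monoidal unit in $\Par(\cat{B})$ together with the fact that $1$ is final in $\cat{B}$. Distributivity of $\otimes$ over $(+,0)$ in $\cat{B}$ follows because the canonical comparison maps in~\eqref{diag:tensorsumdistr} are total (by hypothesis) and invertible in $\Par(\cat{B})$ (by hypothesis), and for a total map being invertible in $\Par(\cat{B})$ is equivalent to being invertible in $\cat{B}$: if $\klin{h}\pafter f = \idmap = f\pafter\klin{h}$ then $f = \klin{h}^{-1}$ is forced to be total by Lemma~\ref{lem:zero} applied to $\one \pafter f$, and then $h$ is invertible by faithfulness of $\klin{-}$.

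The main obstacle will be the forward direction, specifically verifying that the defined $\otimes$ on $\Par(\cat{B})$ is functorial and genuinely symmetric monoidal --- i.e.\ that the coherence axioms (pentagon, triangle, hexagon) transfer from $\cat{B}$, and that commutativity of the two strengths holds. The latter is a diagram chase in $\cat{B}$ using the universal property of the $n$-fold coproduct decomposition of $(A+1)\otimes(B+1)$ obtained by iterated distribution, which decomposes into four summands $A\otimes B$, $A$, $B$, $1$, with the relevant maps collapsing the last three to the $1$-summand of $(A\otimes B)+1$. All coherences ultimately reduce to coherences in $\cat{B}$ together with the universal properties of coproducts.
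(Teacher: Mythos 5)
Your proposal is correct and takes essentially the same route as the paper: the paper defines $h\otimes k$ on $\Par(\cat{B})$ directly as the composite $X\otimes Y \to (A+1)\otimes(B+1) \cong (A\otimes B)+(A\otimes 1)+(1\otimes B)+(1\otimes 1) \to (A\otimes B)+1$ collapsing the three failure summands, which is exactly what your double-strength construction for the commutative lift monad unfolds to, and it likewise records $\klin{f}\otimes\klin{g}=\klin{f\otimes g}$ before leaving the remaining details to the reader. Your write-up of the converse direction (in particular the observation that a total map invertible in $\Par(\cat{B})$ has a total inverse, via $\one\pafter f = (\one\pafter\klin{h})\pafter f = \one\pafter(\klin{h}\pafter f)=\one$) supplies detail the paper omits.
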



When we talk about a `monoidal effectus in partial form' we mean an
effectus in partial form that satisfies the properties described above
for the category of partial maps $\Par(\cat{B})$, with $1$ as tensor
unit. The above second point is equivalent to: the following diagram
of partial maps commutes.
$$\xymatrix@R-.5pc@C-1pc{
& X\otimes Y\ar[dl]|-{\pafter}_{\one\otimes\one}\ar[dr]|-{\pafter}^-{\one} &
\\
1\otimes 1\ar[rr]|-{\pafter}^-{\cong} & & 1
}$$

\begin{proof}
Given the tensor $\otimes$ on $\cat{B}$ we obtain $\otimes$ on
$\Par(\cat{B})$ in the same way on objects. Given partial maps
$h\colon X \pto A$ and $k\colon Y\pto B$ in $\Par(\cat{B})$ we obtain
$h\otimes k \colon X\otimes Y \pto A\otimes B$ in $\Par(\cat{B})$ as:
$$\xymatrix@R-.5pc@C-1pc{
X\otimes Y\ar[d]^-{h\otimes k}
\\
(A+1)\otimes (B+1)\ar@{=}[r]^-{\sim} &
(A\otimes B) + (A\otimes 1) + (1\otimes B) + (1\otimes 1)
   \ar[d]^{[\kappa_{1}, \kappa_{2}\after \bang, \kappa_{2}\after \bang, \kappa_{2}\after \bang]}
\\
& (A\otimes B)+1
}$$

\noindent Then $\klin{f}\otimes\klin{g} = \klin{f\otimes
  g}$. Remaining details are left to the reader. \QED
\end{proof}

\begin{example}
\label{ex:monoidal}
We shall look at monoidal structure in three of our four running
examples. It is not clear if order unit groups have tensors.
\begin{enumerate}
\item The effectus $\Sets$ has finite products $(\times, 1)$, where
  functors $A\times(-)$ preserve finite coproducts. Obviously, the
  final object is the unit for $\times$.

\item The effectus $\Kl(\Dst)$ for discrete probability has tensors
  because the distribution monad $\Dst$ is monoidal: there are natural
  maps $\Dst(X)\times\Dst(Y) \rightarrow \Dst(X\times Y)$ which
  commute appropriately with the unit and multiplication of the monad.
  These maps send a pair $(\omega,\rho)\in\Dst(X)\times\Dst(Y)$ to the
  distribution in $\Dst(X\times Y)$ given by $(x,y) \mapsto
  \omega(x)\cdot \rho(y)$.

The resulting tensor on $\Kl(\Dst)$ is given on objects by 
cartesian product $\times$. For morphisms $f\colon A \rightarrow
\Dst(X)$ and $g\colon B \rightarrow \Dst(Y)$ there is map
$f\otimes g$ in $\Kl(\Dst)$ given by the composite:
$$\xymatrix{
A\times B\ar[r]^-{f\times g} & \Dst(X) \times \Dst(Y)\ar[r] &
   \Dst(X\times Y)
}$$

\noindent the projection map $X\otimes Y \rightarrow X$ in
$\Kl(\Dst)$, is, following~\eqref{diag:tensorproj}, the function
$$\xymatrix{
X\times Y \ar[r]^-{\pi_{1}} & \Dst(X)\qquad\mbox{with}\qquad
   \pi_{1}(x,y) = 1\ket{x}
}$$

\noindent When applied to a state $\omega\in\Dst(X\times Y) =
\Stat(X\times Y)$ we obtain the marginal distribution
$\Stat(\pi_{1})(\omega) = \tstat{(\pi_{1})}(\omega) \in\Stat(X)$ given
by:
$$\begin{array}{rcl}
\tstat{(\pi_{1})}(\omega)
& = &
{\displaystyle\sum}_{x}\big(\sum_{y}\omega(x,y)\big)\bigket{x}.
\end{array}$$

\item The effectus $\op{\vNA}$ of von Neumann algebras is also
  monoidal. Here it is essential that the morphisms of~$\vNA$
  are completely positive. The details are quite complicated, see
  \textit{e.g.}~\cite{Takesaki01} and~\cite{Cho14} for distributivity
  results using the `minimal' tensor. The tensor unit is the algebra
  $\C$ of complex numbers, which is final in $\op{\vNA}$. Distribution
  of $\otimes$ over coproducts $(\oplus, 0)$ is investigated
  in~\cite{Cho14}.
\end{enumerate}
\end{example}

The next result is proven explicitly in~\cite{Jacobs15a}. Here we give
a more abstract argument, using partial maps.

\begin{corollary}
\label{cor:scalmultcomm}
In a monoidal effectus the effect monoid of scalars $\Pred(1)$ is
commutative.
\end{corollary}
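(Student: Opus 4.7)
The plan is to invoke the well-known fact that the endomorphism monoid of the tensor unit in any symmetric monoidal category is commutative, and apply it to the category $\Par(\cat{B})$. By Proposition~\ref{prop:monparttot}, $\Par(\cat{B})$ is symmetric monoidal with $1$ as tensor unit, and by Definition~\ref{def:effectusScalars} the effect monoid of scalars is the endomorphism monoid $\Par(\cat{B})(1,1)$ under partial composition $\pafter$. So it suffices to prove the categorical lemma inside $\Par(\cat{B})$.

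First I would reduce composition to tensor: for scalars $s,t\colon 1 \pto 1$, using naturality of the unit isomorphism $\lambda \colon 1 \otimes 1 \conglongrightarrow 1$ (in the second argument) and of $\rho$ (in the first argument), together with the coherence identity $\lambda_1 = \rho_1$, one checks
$$\lambda \pafter (s \otimes t) \pafter \lambda^{-1} \;=\; s \pafter t,$$
by writing $s \otimes t = (s \otimes \idmap) \pafter (\idmap \otimes t)$ and applying the two naturality squares in turn. The same calculation with the roles of $s$ and $t$ swapped gives $\lambda \pafter (t \otimes s) \pafter \lambda^{-1} = t \pafter s$.

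Next I would show $s \otimes t = t \otimes s$ as maps $1 \otimes 1 \pto 1 \otimes 1$. Naturality of the symmetry $\gamma_{1,1}$ gives $\gamma_{1,1} \pafter (s \otimes t) = (t \otimes s) \pafter \gamma_{1,1}$; but the coherence $\rho_1 = \lambda_1 \pafter \gamma_{1,1}$ combined with $\lambda_1 = \rho_1$ forces $\gamma_{1,1} = \idmap$ (since $\lambda_1$ is an isomorphism). Hence $s \otimes t = t \otimes s$, and combining with the previous paragraph yields
$$s \pafter t \;=\; \lambda \pafter (s \otimes t) \pafter \lambda^{-1} \;=\; \lambda \pafter (t \otimes s) \pafter \lambda^{-1} \;=\; t \pafter s,$$
as required.

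There is no real obstacle: the entire argument is the standard Eckmann--Hilton-style computation for endomorphisms of a tensor unit, and every ingredient it uses (the unit isomorphisms $\lambda,\rho$, the symmetry $\gamma$, their naturality, and the coherence equations) is guaranteed to exist in $\Par(\cat{B})$ by Proposition~\ref{prop:monparttot}. The only mild care required is that, since we are in the partial category, one should work throughout with $\pafter$ rather than $\after$, but the coherence isomorphisms are total maps and behave exactly as expected under $\pafter$.
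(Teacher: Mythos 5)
Your proof is correct and takes essentially the same route as the paper: the paper's own argument simply cites the well-known Eckmann--Hilton fact that endomorphisms of the tensor unit commute under composition and applies it to $\Par(\cat{B})$, exactly as you do. You merely spell out the details the paper leaves implicit (and your detour through $\gamma_{1,1}=\idmap$ is a valid, if slightly less economical, substitute for the usual interchange-law computation, which would not even need symmetry).
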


\begin{proof}
It is a well-known fact that the endomaps $I\rightarrow I$ on the
tensor unit in a monoidal category form a commutative monoid under
composition. This relies on the `Eckmann-Hilton' argument see
\textit{e.g.}~\cite{KellyL80}. We apply this to the monoidal category
$\Par(\cat{B})$ of a monoidal effectus $\cat{B}$. The scalars are
precisely the partial maps $1 \pto 1$ on the tensor unit $1$, and
their multiplication is partial/Kleisli composition, see
Definition~\ref{def:effectusScalars}. \QED
\end{proof}

We need to know a bit more about the monoidal structure in categories
of partial maps.

\begin{lemma}
\label{lem:monpartprops}
Let $\cat{B}$ be a monoidal effectus in total form. Then, in
$\Par(\cat{B})$,
\begin{enumerate}
\item \label{lem:monpartprops:zero} $f\otimes \zero = \zero \colon
  X\otimes Y \pto A\otimes B$;

\item \label{lem:monpartprops:partproj} the following diagram commutes:
$$\xymatrix@C-1pc@R-.5pc{
X_{1}\otimes A + X_{2}\otimes A\ar[rr]|-{\pafter}^-{\cong}
   \ar[dr]|-{\pafter}_{\rhd_{i}} & &
   (X_{1}+X_{2})\otimes A\ar[dl]|-{\pafter}^{\rhd_{i}\otimes\idmap}
\\
& X_{i}\otimes A &
}$$

\item \label{lem:monpartprops:ovee} $f\otimes (g_{1} \ovee g_{2}) =
  (f\otimes g_{1}) \ovee (f\otimes g_{2})$.

\item \label{lem:monpartprops:proj} The functor $\klin{-} \colon
  \cat{B} \rightarrow \Par(\cat{B})$ preserves projections: these
  projections can be described in $\Par(\cat{B})$ as:
$$\begin{array}{rcl}
\klin{\pi_{1}}
& = &
\vcenter{\xymatrix@C-1pc{
\Big(X\otimes Y \ar[rr]^-{\idmap\otimes\one} & & 
   X\otimes 1 \ar[r]^-{\rho}_-{\cong} & X\Big)
}}
\\
\klin{\pi_{2}}
& = &
\vcenter{\xymatrix@C-1pc{
\Big(X\otimes Y \ar[rr]^-{\one\otimes\idmap} & & 
   1\otimes Y \ar[r]^-{\lambda}_-{\cong} & Y\Big)
}}
\end{array}$$

\noindent These projection maps are natural only wrt.\ total maps.
\end{enumerate}
\end{lemma}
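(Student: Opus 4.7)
My plan is to address the four claims in order, each time exploiting the equivalence established in Proposition~\ref{prop:monparttot} between the total and partial descriptions so that I can calculate with the tensor $f \otimes g$ as a functor on $\Par(\cat{B})$. The key global tool is that the distributivity iso~\eqref{diag:tensorsumdistr} is total, hence lives in $\Par(\cat{B})$ as an iso, and that $\klin{-}$ sends monoidal coherence isos (and $\bang$) to the evident partial maps.

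For point~\eqref{lem:monpartprops:zero}, the quickest route is the factorization trick: the zero map $\zero\colon Y \pto B$ in $\Par(\cat{B})$ factors through the zero object $0$ as $\zero = z_{0,B} \pafter z_{Y,0}$ (using that $0$ is a zero object in $\Par(\cat{B})$). By functoriality of $\otimes$ on $\Par(\cat{B})$, $f \otimes \zero = (\idmap \otimes z_{0,B}) \pafter (f \otimes z_{Y,0})$, and the middle object $A \otimes 0$ is isomorphic to $0$ via the second distributivity iso in~\eqref{diag:tensorsumdistr} (applied on the other side by symmetry), hence is itself a zero object; any composite routed through it is $\zero$. Point~\eqref{lem:monpartprops:partproj} then follows by a short calculation: post-composing the iso $[\kappa_{1}\otimes\idmap, \kappa_{2}\otimes\idmap]$ with $\rhd_{i}\otimes\idmap$ gives $[(\rhd_{i}\pafter\kappa_{1})\otimes\idmap,(\rhd_{i}\pafter\kappa_{2})\otimes\idmap]$, which equals $[\idmap,\zero]$ or $[\zero,\idmap]$ by functoriality and point~\eqref{lem:monpartprops:zero}, i.e.\ precisely the cotuple defining $\rhd_{i}\colon X_{1}+X_{2}\pto X_{i}$ from~\eqref{diag:partprojpart}.

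For point~\eqref{lem:monpartprops:ovee}, I will construct an explicit bound and reuse point~\eqref{lem:monpartprops:partproj}. Given $g_{1}\orthogonal g_{2}$ with bound $b\colon Z\pto B+B$ satisfying $\rhd_{i}\pafter b = g_{i}$, define
\[
b' \;=\; \dis \pafter (f\otimes b)\;\colon\; X\otimes Z \pto (A\otimes B)+(A\otimes B),
\]
where $\dis$ is the distributivity iso on the right-hand coordinate. By point~\eqref{lem:monpartprops:partproj} we get $\rhd_{i}\pafter b' = (\idmap \otimes \rhd_{i}) \pafter (f\otimes b) = f\otimes g_{i}$, so $b'$ witnesses $(f\otimes g_{1})\orthogonal(f\otimes g_{2})$. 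Their sum is then $\nabla \pafter b'$; identifying $\nabla \pafter \dis$ with $\idmap \otimes \nabla$ (directly from the formula $\dis^{-1} = [\idmap\otimes\kappa_{1},\idmap\otimes\kappa_{2}]$) gives $\nabla \pafter b' = (\idmap \otimes \nabla) \pafter (f\otimes b) = f \otimes (g_{1}\ovee g_{2})$, as required.

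For point~\eqref{lem:monpartprops:proj}, the description of $\klin{\pi_{i}}$ is immediate from the definition~\eqref{diag:tensorproj} together with the fact that $\klin{-}$ preserves composition and tensor, and that $\klin{\bang} = \one$ (both equal $\kappa_{1}\after\bang$). For the naturality claim, compute that
\[
\pi_{1} \pafter (h \otimes k) \;=\; \rho \pafter (h \otimes (\one\pafter k)) \;=\; \rho \pafter (h \otimes \kerbot(k)),
\]
whereas naturality of the (total) iso $\rho$ in $\Par(\cat{B})$ gives $h \pafter \pi_{1} = \rho \pafter (h \otimes \one)$; these agree exactly when $\kerbot(k) = \one$, i.e.\ when $k$ is total by Lemma~\ref{lem:zero}. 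A symmetric argument handles $\pi_{2}$. The main obstacle I anticipate is not conceptual but notational: one has to keep track of which distributivity iso (left vs.\ right, via $\gamma$) is being used in point~\eqref{lem:monpartprops:ovee}, and ensure the bound construction uses the one matching the coordinate in which $\ovee$ is formed; once that bookkeeping is done, each step reduces to functoriality of $\otimes$ on $\Par(\cat{B})$ and the already-proved points~\eqref{lem:monpartprops:zero} and~\eqref{lem:monpartprops:partproj}.
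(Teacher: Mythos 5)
Your proof is correct and follows essentially the same route as the paper's: factoring through $A\otimes 0\cong 0$ for the zero claim, post-composing the distributivity iso with $\rhd_i\otimes\idmap$ for the partial projections, transporting the bound across $\dis$ for $\ovee$, and using that $\klin{-}$ preserves tensors together with $\klin{\bang}=\one$ for the projections. Your explicit check that naturality of $\pi_1$ reduces to $\kerbot(k)=\one$ is a small addition the paper leaves implicit, and it is fine.
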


\begin{proof}
We follow the relevant definitions.
\begin{enumerate}
\item We have $f\otimes \zero = \zero$ via the following diagram, using that
  $0$ is the zero object in $\Par(\cat{B})$.
$$\xymatrix@R-1pc@C+1pc{
X\otimes A\ar[r]|-{\pafter}^-{f\otimes \bang}\ar[dr]|-{\pafter}_{\bang} & 
   Y\otimes 0\ar[r]|-{\pafter}^-{\idmap\otimes \bang}\ar@{=}[d]^{\wr} & 
   Y\otimes B
\\
& 0\ar[ur]|-{\pafter}_{\bang} &
}$$

\item We calculate in $\Par(\cat{B})$, for $i = 1$,
$$\begin{array}{rcl}
\lefteqn{(\rhd_{1}\otimes\idmap) \pafter 
   [\kappa_{1}\otimes\idmap, \kappa_{2}\otimes\idmap]} \\
& = &
[([\idmap,\zero]\otimes\idmap) \pafter (\kappa_{1}\otimes\idmap), 
   ([\idmap,\zero]\otimes\idmap) \pafter (\kappa_{2}\otimes\idmap)] \\
& = &
[\idmap\otimes\idmap, \zero\otimes\idmap] \\
& = &
[\idmap, \zero] \qquad \mbox{by point~\eqref{lem:monpartprops:zero}} \\
& = &
\rhd_{1}.
\end{array}$$

\item Let $b\colon Y \pto B+B$ be a bound for $g_{1}, g_{2} \colon Y
  \pto B$. Then we take as new bound:
$$\xymatrix{
c = \Big(X\otimes Y\ar[r]|-{\pafter}^-{f\otimes b} &
   A\otimes (B+B)\ar[r]|-{\pafter}^-{\cong} & A\otimes B + A\otimes B\Big)
}$$

\noindent Then, $c$ is a bound for $f\otimes g_{i}$, since by
point~\eqref{lem:monpartprops:partproj},
$$\begin{array}{rcccccl}
\rhd_{i} \pafter c
& = &
(\idmap\otimes\rhd_{i}) \pafter (f\otimes b)
& = &
f\otimes (\rhd_{i}\otimes b)
& = &
f\otimes g_{i}.
\end{array}$$

\noindent Hence:
$$\begin{array}[b]{rcl}
(f\otimes g_{1}) \ovee (f\otimes g_{2})
\hspace*{\arraycolsep}=\hspace*{\arraycolsep}
\nabla \pafter c
& = &
(\idmap\otimes\nabla) \pafter (f\otimes b) \\
& = &
f \otimes (\nabla\pafter b) \\
& = &
f\otimes (g_{1} \ovee g_{2}).
\end{array}$$

\item We use that the functor $\klin{-} \colon \cat{B} \rightarrow
\Par(\cat{B})$ preserves the tensor, and that $\one \colon X \rightarrow 1$
is the unique map to the final object in $\Par(\cat{B})$. Hence the
first projection in $\Par(\cat{B})$ is:
$$\begin{array}[b]{rcl}
\rho \pafter (\idmap\otimes\one)
\hspace*{\arraycolsep}=\hspace*{\arraycolsep}
\rho \pafter (\klin{\idmap}\otimes\klin{\bang}) 
& = &
(\rho\otimes\idmap) \after \klin{\idmap\otimes\bang} \\
& = &
\kappa_{1} \after \rho \after (\idmap\otimes\bang) \\
& = &
\klin{\pi_{1}}.
\end{array}\eqno{\QEDbox}$$
\end{enumerate}
\end{proof}

In Proposition~\ref{prop:effectusPCM} we have seen that homsets of
partial maps $X \pto Y$ in an effectus form a partial commutative
monoid (PCM). In the presence of tensors one also obtains scalar
multiplication on such homsets.

\begin{lemma}
\label{lem:monPCMod}
If $\cat{B}$ is a monoidal effectus in total form, then each partial
homset $\Par(\cat{B})(X, Y)$ is a partial commutative module; partial
pre- and post-composition preserves scalar
multiplication,\index{S}{scalar multiplication!-- on homsets, in
  presence of tensor}
and thus the module structure.

Moreover, the homsets $\cat{B}(X,Y)$ are convex sets, and the partial
homsets $\Par(\cat{B})(X, Y)$ are subconvex sets; again this structure
is preserved by pre- and post-composition.
\end{lemma}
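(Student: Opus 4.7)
The plan is to define the action of a scalar $s \colon 1 \pto 1$ on a partial map $f \colon X \pto Y$ by
\[
s \cdot f \;\defeq\; \lambda_Y \pafter (s \otimes f) \pafter \lambda_X^{-1} \;\colon\; X \pto Y,
\]
where the morphisms live in the monoidal category $\Par(\cat{B})$ supplied by Proposition~\ref{prop:monparttot}. A single application of the interchange law $(h \otimes k) \pafter (h' \otimes k') = (h \pafter h') \otimes (k \pafter k')$ combined with naturality of $\lambda$ yields two equivalent descriptions
\[
s \cdot f \;=\; \overline{s}_Y \pafter f \;=\; f \pafter \overline{s}_X, \qquad \overline{s}_X \;\defeq\; \lambda_X \pafter (s \otimes \idmap[X]) \pafter \lambda_X^{-1}.
\]
The module axioms $\idmap[1] \cdot f = f$ (using $\overline{\idmap[1]} = \idmap$) and $(s \pafter r) \cdot f = s \cdot (r \cdot f)$ (interchange again) are then immediate.

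Next I would verify that $s \cdot (-)$ and $(-) \cdot f$ are PCM-maps, i.e.\ preserve $\zero$ and $\ovee$. Both follow from Lemma~\ref{lem:monpartprops}: point~\eqref{lem:monpartprops:zero} gives $\zero \otimes g = g \otimes \zero = \zero$, and point~\eqref{lem:monpartprops:ovee}, combined with the symmetry of $\otimes$, supplies distributivity of $\otimes$ over $\ovee$ in each variable. Together with the fact that pre- and post-composition already preserve the PCM structure (Proposition~\ref{prop:effectusPCM}~\eqref{prop:effectusPCM:pres}), this makes $\Par(\cat{B})(X, Y)$ into a PC-module over $\Pred(1)$. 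Preservation of the scalar action by composition in $\Par(\cat{B})$ then reads off the displayed identities: $(s \cdot g) \pafter f = \overline{s}_Z \pafter g \pafter f = g \pafter \overline{s}_Y \pafter f = g \pafter (s \cdot f)$, using the commutation $\overline{s}_Z \pafter g = g \pafter \overline{s}_Y$ supplied by the two formulas above.

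For the convex and subconvex statements the key calculation is the identity
\[
\kerbot(s \cdot f) \;=\; s \pafter \kerbot(f),
\]
which I would prove by unfolding $\kerbot(s \cdot f) = \one_Y \pafter \overline{s}_Y \pafter f$, rewriting $\one_Y \pafter \lambda_Y$ as $\rho_1 \pafter (\idmap[1] \otimes \one_Y)$ (both sides being the unique total map $1 \otimes Y \to 1$), and then using interchange together with naturality of $\rho$ to pull $s$ to the left. Given orthogonal scalars $s_1, \dots, s_n \in \Pred(1)$ with $\bigovee_i s_i = \one$ and total maps $f_i \colon X \to Y$, this identity yields $\kerbot(s_i \cdot f_i) = s_i \pafter \one = s_i$; since $\kerbot$ reflects orthogonality (Lemma~\ref{lem:kerbot}) the maps $s_i \cdot f_i$ are pairwise orthogonal, so $\bigovee_i s_i \cdot f_i$ exists, and $\kerbot(\bigovee_i s_i \cdot f_i) = \bigovee_i s_i = \one$ forces it to be total by Lemma~\ref{lem:zero}. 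This defines the convex sum $\sum_i r_i f_i \in \cat{B}(X, Y)$, and the subconvex version for $\Par(\cat{B})(X, Y)$ is identical except one drops the totality conclusion. Affineness of pre- and post-composition is inherited from preservation of the PC-module structure.

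The main obstacle is purely notational --- threading the coherence isomorphisms $\lambda$, $\rho$ through the interchange law to establish $\kerbot(s \cdot f) = s \pafter \kerbot(f)$. Once that identity is in hand, the PC-module and (sub)convex structure are forced by the reflection properties of $\kerbot$ already proved in Section~\ref{sec:kerimg}.
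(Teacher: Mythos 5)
Your definition of $s\cdot f$ and your handling of the PC-module part match the paper's proof exactly: the same composite $\lambda \pafter (s\otimes f)\pafter \lambda^{-1}$, with preservation of $\zero$ and $\ovee$ in each argument read off from Lemma~\ref{lem:monpartprops} and the action axioms checked by interchange. Where you genuinely diverge is the convex/subconvex part. The paper stays entirely in the total world: it takes the total map $s\colon 1\to n\cdot 1$ underlying the bound witnessing $\bigovee_i r_i=\one$ and defines the convex combination directly as the total composite $X\cong 1\otimes X\to (n\cdot 1)\otimes X\cong n\cdot X\to Y$ via distributivity of $\otimes$ over $+$ and the cotuple $[f_1,\ldots,f_n]$, so totality is automatic and kernels never enter. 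You instead form $\bigovee_i r_i\cdot f_i$ inside $\Par(\cat{B})(X,Y)$ and deduce existence and totality from the identity $\kerbot(s\cdot f)=s\pafter\kerbot(f)$ together with reflection of orthogonality (Lemma~\ref{lem:kerbot}) and Lemma~\ref{lem:zero}. Both routes work; yours buys a uniform treatment of the convex and subconvex cases (the paper only spells out the total one and leaves the rest implicit), at the cost of the coherence bookkeeping you flag, while the paper's is shorter because the sum is manufactured as a single total morphism. One wording caveat: pairwise orthogonality does not by itself guarantee that an $n$-fold sum exists in a PCM; what your argument actually needs --- and delivers, by iterating the binary reflection of Lemma~\ref{lem:kerbot} exactly as is done in the proof of Lemma~\ref{lem:sumpairing} --- is that the family $r_1,\ldots,r_n$ is jointly summable, which then transfers to the family $r_i\cdot f_i$; in the subconvex case one also needs the monotonicity of post-composition with $r_i$ to see that $r_i\pafter\kerbot(f_i)\leq r_i$ before invoking reflection.
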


\begin{proof}
For a partial map $f\colon X\pto Y$ and a scalar $s\colon 1 \pto 1$ we
define $s\cdot f \colon X \pto Y$ in $\Par(\cat{B})$ as:
$$\xymatrix@C-.8pc{
s\cdot f = \Big(X\ar[r]|-{\pafter}^-{\cong} & 
   1\otimes X\ar[rr]|-{\pafter}^-{s\otimes f} & &
   1\otimes Y\ar[r]|-{\pafter}^-{\cong} & Y\Big)
}$$

\noindent It is obviously an action, and preserves $0,\ovee$ in each
argument by Lemma~\ref{lem:monpartprops}. 

We only show that the homset $\cat{B}(X,Y)$ is convex. We essentially
proceed as in (the proof of) Lemma~\ref{lem:effectusConv}: for scalars
$r_{1}, \ldots, r_{n} \in \Pred(1)$ with $\bigovee_{i} r_{i} = \one$,
the bound $b\colon 1 \pto n\cdot 1$ with $\rhd_{i} \pafter b = r_{i}$
and $\nabla \pafter b = \one \colon 1 \pto 1$ is a total map of the
form $b = \kappa_{1} \tafter s$. For an $n$-tuple of total maps $f_{i}
\colon X \to X$ we now define the convex sum $\sum_{i}r_{i}f_{i}$ to be
the composite in $\cat{B}$:
$$\xymatrix@C-0.5pc{
X\ar[r]^-{\cong} & 
   1\otimes X\ar[r]^-{s\otimes \idmap} &
   (n\cdot 1) \otimes X\ar[r]^-{\cong} & 
   n\cdot (1\otimes X)\ar[r]^-{\cong} & 
   n\cdot X\ar[rr]^-{[f_{1}, \ldots, f_{n}]} & & Y
}\eqno{\QEDbox}$$
\end{proof}

In Definition~\ref{def:puresstate} we have described when substates $1
\pto X$ are pure, via scalar multiplication for such states. With the
scalar multiplication for arbitrary partial maps we can extend the
definition of purity.

\begin{definition}
\label{def:purepartmap}
A partial map $f\colon X \pto Y$ in a monoidal effectus is called
\emph{pure} if for each pair of orthogonal partial maps $f_{1}, f_{2}
\colon X \pto Y$ with $f = f_{1} \ovee f_{2}$ there is a scalar $s$
with:
$$\begin{array}{rclcrcl}
f_{1}
& = &
s \cdot f
& \qquad\mbox{and}\qquad &
f_{2}
& = &
s^{\bot} \cdot f.
\end{array}$$

\noindent An object $X$ is called pure if the partial identity map
$\idmap \colon X \pto X$ is pure, in the above sense.
\end{definition}

It can be shown that a von Neumann algebra $\mathscr{A}$ is pure if
and only if it is a factor, that is, its center is $\C$, the algebra
of compex numbers. The proof will be given elsewhere.

Here is another notion that requires tensors. It is seen as the key
property of quantum mechanics according to~\cite{ChiribellaAP11}. It
will be further elaborated elsewhere.

\begin{definition}
\label{def:purification}
The \emph{purification} of a substate $\omega \colon 1 \pto X$ is a
pure substate $\rho \colon 1 \pto X\otimes Y$ with $\pi_{1} \pafter
\rho = \omega$.
\end{definition}

\subsection{Commutative effectuses via copiers}\label{subsec:copier}

This section concentrates on special tensors $\otimes$ which come
equipped with copier operations $X \rightarrow X\otimes X$. It is
shown that in presence of such copiers --- satisfying suitable
requirements --- an effectus is commutative. This result is the
logical contrapositive of the familiar fact that no-cloning holds in
the quantum world: if we do have cloning, we are not in the quantum,
but in the probabilistic, world.

\begin{definition}
\label{def:copier}
We say that a monoidal effectus $\cat{B}$ in total form has
copiers\index{N}{copier} if for each object $X\in\cat{B}$ there is
a copier, or diagonal, map $\delta \colon X \rightarrow X\otimes X$ 
such that the following diagrams commute:
$$\xymatrix@C-.5pc@R-.5pc{
& X 
& 
& X\otimes X\ar[dd]_{\cong}^{\gamma}
&
X\ar@{=}[r]\ar[d]_{\delta} & X\ar[d]^{\delta}
\\
X\ar[r]^-{\delta}\ar@{=}[ur]\ar@{=}[dr] & X\otimes X\ar[u]_{\pi_1}\ar[d]^{\pi_2}
&
X\ar[ur]^-{\delta}\ar[dr]_-{\delta} & 
&
X\otimes X\ar[d]_{\idmap\otimes\delta} & X\otimes X\ar[d]^{\delta\otimes\idmap}
\\
& X
&
& X\otimes X
&
X\otimes (X\otimes X)\ar[r]_-{\alpha}^-{\cong} & (X\otimes X)\otimes X
}$$

\noindent and additionally, for each partial map $f\colon X \pto X$
which is side-effect-free, that is satisfies $f\leq \idmap$, one has
in $\Par(\cat{B})$:
\begin{equation}
\label{eqn:copiersef}
\begin{array}{rcccl}
(f\otimes\idmap) \pafter \klin{\delta}
& = &
\klin{\delta} \pafter f
& = &
(\idmap\otimes f)\pafter \klin{\delta}.
\end{array}
\end{equation}

\noindent (The two equations are equivalent.)

\auxproof{
Assuming the left-equation we get:
$$\begin{array}{rcl}
(\idmap\otimes f)\pafter \klin{\delta}
& = &
(\idmap\otimes f)\pafter \gamma \pafter \klin{\delta} \\
& = &
\gamma \pafter (f\otimes\idmap) \pafter \klin{\delta} \\
& = &
\gamma \pafter \klin{\delta} \pafter f \\
& = &
\klin{\delta} \pafter f.
\end{array}$$
}

\end{definition}

We now define an assert map in terms of copying. It's easiest to do
this in partial form. For a predicate $p\colon X \pto 1$ write
$\asrt_{p}$ for the composite:
\begin{equation}
\label{diag:asrtpartcopy}
\xymatrix@C-.5pc{
\asrt_{p} = \Big(X\ar[r]|-{\pafter}^-{\klin{\delta}} & 
   X\otimes X\ar[r]|-{\pafter}^-{p\otimes\idmap} & 
   1\otimes X\ar[r]|-{\pafter}^-{\lambda}_-{\cong} & X\Big)
}\index{N}{$\asrt_p$, assert map for
    predicate $p$!in a monoidal effectus with copiers}
\end{equation}


\begin{theorem}
\label{thm:copier}
A monoidal effectus with copiers is commutative.
\end{theorem}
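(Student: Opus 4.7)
The plan is to verify the two clauses of Definition~\ref{def:commbool} with $\asrt_{p}$ given by~\eqref{diag:asrtpartcopy}, working throughout in $\Par(\cat{B})$. A preliminary observation, used repeatedly, is that by the symmetry of the copier ($\gamma\after\delta=\delta$) together with $\rho = \lambda\after\gamma$ and naturality of $\lambda,\gamma$, we also have the dual presentation
\[
\asrt_{p} \;=\; \rho\pafter(\idmap\otimes p)\pafter\klin{\delta}.
\]

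First I would show that $\asrt_{p}\in\sEnd(X)$ by computing $\asrt_{p}\ovee\asrt_{p^\bot}$. Using Lemma~\ref{lem:monpartprops}~\eqref{lem:monpartprops:ovee} to pull the $\ovee$ inside the tensor, one obtains $\lambda\pafter((p\ovee p^\bot)\otimes\idmap)\pafter\klin{\delta}=\lambda\pafter(\one\otimes\idmap)\pafter\klin{\delta}=\klin{\pi_{2}}\pafter\klin{\delta}=\klin{\pi_{2}\after\delta}=\idmap$ by the copier axiom. Hence $\asrt_{p}\leq\idmap$.

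Next I would prove that $\kerbot$ and $\asrt_{(-)}$ are mutually inverse. For $\kerbot(\asrt_{p})=p$, compose $\klin{\bang}$ with the dual presentation of $\asrt_{p}$: using naturality of $\lambda$ to slide $p$ past, and the copier axiom once more,
\[
\one\pafter\asrt_{p}
\;=\;\klin{\bang}\pafter\rho\pafter(\idmap\otimes p)\pafter\klin{\delta}
\;=\;p\pafter\klin{\pi_{1}}\pafter\klin{\delta}
\;=\;p\pafter\klin{\pi_{1}\after\delta}
\;=\;p.
\]
For the converse, let $f\in\sEnd(X)$. Writing $\kerbot(f)=\one\pafter f=\klin{\bang}\pafter f$ and using Lemma~\ref{lem:monpartprops}~\eqref{lem:monpartprops:proj}, the definition of $\asrt$ unfolds to $\asrt_{\kerbot(f)}=\klin{\pi_{2}}\pafter(f\otimes\idmap)\pafter\klin{\delta}$. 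Here the crucial ingredient enters: since $f\leq\idmap$, equation~\eqref{eqn:copiersef} gives $(f\otimes\idmap)\pafter\klin{\delta}=\klin{\delta}\pafter f$, so
\[
\asrt_{\kerbot(f)}
\;=\;\klin{\pi_{2}}\pafter\klin{\delta}\pafter f
\;=\;\klin{\pi_{2}\after\delta}\pafter f
\;=\;f.
\]
This establishes the assert isomorphism of Definition~\ref{def:commbool}~\eqref{def:commbool:asrt}.

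Finally, commutativity $\asrt_{p}\pafter\asrt_{q}=\asrt_{q}\pafter\asrt_{p}$ follows by the same pattern. Using naturality of $\lambda$ to slide $\asrt_{q}$ through, then splitting $p\otimes\asrt_{q}=(p\otimes\idmap)\pafter(\idmap\otimes\asrt_{q})$, and applying~\eqref{eqn:copiersef} to the side-effect-free map $\asrt_{q}$,
\[
\asrt_{q}\pafter\asrt_{p}
\;=\;\lambda\pafter(p\otimes\asrt_{q})\pafter\klin{\delta}
\;=\;\lambda\pafter(p\otimes\idmap)\pafter\klin{\delta}\pafter\asrt_{q}
\;=\;\asrt_{p}\pafter\asrt_{q}.
\]
The symmetric calculation gives the reverse equation, completing the proof.

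The main obstacle is purely bookkeeping: one must be careful that all the monoidal coherence isomorphisms involved ($\lambda,\rho,\gamma$ and the distributors) are total, and that naturality is used only in cases that are legitimate in $\Par(\cat{B})$ (i.e.\ where the ``sliding'' map is either total or appears on the appropriate side). Beyond that, every step is an application either of the copier axioms (including~\eqref{eqn:copiersef}) or of Lemma~\ref{lem:monpartprops}.
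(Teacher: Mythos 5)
Your proof is correct and follows the paper's own argument almost step for step: same definition of $\asrt_p$ via the copier, the same computation of $\asrt_p\ovee\asrt_{p^\bot}=\idmap$ to get $\asrt_p\leq\idmap$, and the same two calculations showing $\kerbot$ and $\asrt_{(-)}$ are mutually inverse (your ``dual presentation'' of $\asrt_p$ is just the paper's sliding of $\one$ past the unitor, packaged up front).

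The one place you genuinely diverge is the commutation step. The paper reduces $\asrt_p\pafter\asrt_q=\asrt_q\pafter\asrt_p$ to the predicate equation $p\andthen q=q\andthen p$ via injectivity of $\kerbot$, and proves the latter using only cocommutativity of the copier ($\gamma\pafter\klin{\delta}=\klin{\delta}$) together with $\lambda=\rho=\lambda\pafter\gamma$ on $1\otimes 1$; it never invokes~\eqref{eqn:copiersef} there. You instead prove the operator equation directly by applying~\eqref{eqn:copiersef} to the side-effect-free map $\asrt_q$ and sliding it through $\lambda$. Both are valid; yours is arguably slicker (one chain of equalities already yields the commutation, so your closing remark about needing ``the symmetric calculation'' is redundant), while the paper's makes visible that commutativity of $\andthen$ comes from the symmetry of $\delta$ alone. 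One small bookkeeping point in your first step: Lemma~\ref{lem:monpartprops}~\eqref{lem:monpartprops:ovee} is stated with $\ovee$ in the second tensor factor, whereas you need it in the first; this follows by symmetry and is exactly how the paper itself uses it, so it is not a gap.
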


\begin{proof}
Let $\cat{B}$ be a monoidal effectus in total form, with copiers. We
reason in $\Par(\cat{B})$. We first show that the map $\asrt_{p}\colon
X \pto X$ from~\eqref{diag:asrtpartcopy} is side-effect-free,
\textit{i.e.}~that it is below the (partial) identity $X \pto X$. This
follows from:
$$\begin{array}{rcll}
\lefteqn{\asrt_{p} \ovee \asrt_{p^{\bot}}} \\
& = &
\big(\lambda \pafter (p\otimes\idmap) \pafter \klin{\delta}\big) \ovee
   \big(\lambda \pafter (p^{\bot}\otimes\idmap) \pafter \klin{\delta}\big) \\
& = &
\lambda \pafter \big((p\otimes\idmap)\ovee(p^{\bot}\otimes\idmap)\big)
   \pafter \klin{\delta} &
   \mbox{by Proposition~\ref{prop:effectusPCM}~\eqref{prop:effectusPCM:pres}} \\
& = &
\lambda \pafter \big((p\ovee p^{\bot})\otimes\idmap\big)
   \pafter \klin{\delta} & 
   \mbox{by Lemma~\ref{lem:monpartprops}~\eqref{lem:monpartprops:ovee}} \\
& = &
\lambda \pafter \big(\one\otimes\idmap\big) \pafter \klin{\delta} \\
& = &
\klin{\pi_{2}} \pafter \klin{\delta} &
   \mbox{by Lemma~\ref{lem:monpartprops}~\eqref{lem:monpartprops:proj}} \\
& = &
\idmap.
\end{array}$$

\noindent Next we show that the mapping $p \mapsto \asrt_{p}$ makes the map
$\kerbot \colon \sEnd(X) \rightarrow \Pred(X)$ bijective, see
Definition~\ref{def:commbool}~\eqref{def:commbool:asrt}.
$$\begin{array}{rcll}
\kerbot(\asrt_{p})
& = &
\one \pafter \lambda \pafter (p\otimes\idmap) \pafter \klin{\delta} \\
& = &
\lambda \pafter (\idmap\otimes\one) \pafter (p\otimes\idmap) \pafter 
   \klin{\delta} \\
& = &
\rho \pafter (p\otimes\idmap) \pafter (\idmap\otimes\one) \pafter 
   \klin{\delta} \quad & 
   \mbox{since $\lambda=\rho\colon 1\otimes 1 \pto 1$} \\
& = &
p \pafter \rho \pafter (\idmap\otimes\one) \pafter \klin{\delta} \\
& = &
p \pafter \klin{\pi_{1}} \pafter \klin{\delta} &
   \mbox{by Lemma~\ref{lem:monpartprops}~\eqref{lem:monpartprops:proj}} \\
& = &
p.
\end{array}$$

\noindent In the other direction, let $f\colon X \pto X$ satisfy
$f\leq \idmap$. The associated predicate $\kerbot(f) = \one \pafter f
\colon X \pto 1$ satisfies:
$$\begin{array}{rcll}
\asrt_{\kerbot(f)}
& = &
\lambda \pafter (\one\otimes\idmap) \pafter (f\otimes\idmap) 
   \pafter \klin{\delta} \quad \\
& = &
\klin{\pi_{1}} \pafter \klin{\delta} \pafter f &
   \mbox{by~\eqref{eqn:copiersef} and 
      Lemma~\ref{lem:monpartprops}~\eqref{lem:monpartprops:proj}} \\
& = &
f.
\end{array}$$

Finally we have to prove that the assert maps commute: $\asrt_{p}
\pafter \asrt_{q} = \asrt_{q} \pafter \asrt_{p}$. Since the
kernel-supplement map $\kerbot$ is bijective it suffices to prove the
middle equation in:
$$\begin{array}{rcccccl}
\kerbot\big(\asrt_{q} \pafter \asrt_{p}\big)
& = &
p\andthen q
& = &
q\andthen p
& = &
\kerbot\big(\asrt_{p} \pafter \asrt_{q}\big).
\end{array}$$

\noindent Hence we are done by:
$$\begin{array}[b]{rcll}
p \andthen q
& = &
q \pafter \asrt_{p} &
   \mbox{by Lemma~\ref{lem:commeffectus}~\eqref{lem:commeffectus:order}} \\
& = &
q \pafter \lambda \pafter (p\otimes\idmap) \pafter \klin{\delta} \\
& = &
\lambda \pafter (\idmap\otimes q) \pafter (p\otimes\idmap) \pafter 
   \klin{\delta} \quad \\
& = &
\lambda \pafter \gamma \pafter (p\otimes q) \pafter \klin{\delta} &
   \mbox{since }
   \lambda = \rho = \lambda \pafter \gamma \colon 1\otimes 1 \pto 1 \\
& = &
\lambda \pafter (q\otimes p) \pafter \gamma \pafter \klin{\delta} \\
& = &
p \pafter \lambda \pafter (q\otimes\idmap) \pafter \klin{\delta} \\
& = &
q \andthen p.
\end{array}\eqno{\QEDbox}$$

\auxproof{
We first show that the map $\asrt_{p}\colon X \rightarrow X+1$ is
side-effect-free, that is, is below the (partial) identity $X
\rightarrow X+1$. In fact, we show that $\asrt_{p} \ovee
\asrt_{p^\bot} = \idmap \colon X \rightarrow X+1$. We take as bound $b
= \kappa_{1} \after (\lambda+\lambda) \after \dis \after
(p\otimes\idmap) \after \delta \colon X \rightarrow (X+X)+1$. Then:
$$\begin{array}{rcll}
\rhd_{1} \pafter b
& = &
[\rhd_{1}, \kappa_{2}] \after \kappa_{1} \after (\lambda+\lambda) \after 
   \dis \after (p\otimes\idmap) \after \delta \\
& = &
(\idmap+\bang) \after (\lambda+\lambda) \after 
   \dis \after (p\otimes\idmap) \after \delta \\
& = &
(\lambda+\bang) \after \dis \after (p\otimes\idmap) \after \delta \\
& = &
\asrt_{p} 
\\
\rhd_{2} \pafter b
& = &
[\rhd_{2}, \kappa_{2}] \after \kappa_{1} \after 
   (\lambda+\lambda) \after \dis \after (p\otimes\idmap) \after \delta \\
& = &
[\kappa_{2}\after\bang, \kappa_{1}] \after 
   (\lambda+\lambda) \after \dis \after (p\otimes\idmap) \after \delta \\
& = &
(\lambda+\bang) \after [\kappa_{2}, \kappa_{1}] \after 
   \dis \after (p\otimes\idmap) \after \delta \\
& = &
(\lambda+\bang) \after \dis \after ([\kappa_{2}, \kappa_{1}]\otimes\idmap) 
   \after (p\otimes\idmap) \after \delta \qquad
   & \mbox{see~\eqref{diag:copier:swap}} \\
& = &
\asrt_{p^\bot} 
\\
\asrt_{p} \ovee \asrt_{p^\bot}
& = &
\nabla \pafter b \\
& = &
\kappa_{1} \after \nabla \after (\lambda+\lambda) \after 
   \dis \after (p\otimes\idmap) \after \delta \\
& = &
\kappa_{1} \after \lambda \after \nabla \after 
   \dis \after (p\otimes\idmap) \after \delta \\
& = &
\kappa_{1} \after \lambda \after (\nabla\otimes\idmap) 
   \after (p\otimes\idmap) \after \delta 
   & \mbox{see~\eqref{diag:copier:swap}} \\
& = &
\kappa_{1} \after \lambda \after (!\otimes\idmap) \after \delta \\
& = &
\kappa_{1} \after \pi_{2} \after \delta \\
& = &
\kappa_{1}.
\end{array}$$

Next we show that the mapping $p \mapsto \asrt_{p}$ makes the map
$\kerbot \colon \sEnd(X) \rightarrow \Pred(X)$ bijective, see
Definition~\ref{def:commbool}~\eqref{def:commbool:asrt}.
$$\begin{array}{rcll}
\kerbot(\asrt_{p})
& = &
(!+\idmap) \after (\lambda+\bang) \after \dis \after (p\otimes\idmap) 
   \after \delta \\
& = &
(\bang+\bang) \after \dis \after (p\otimes\idmap) \after \delta \\
& = &
(\pi_{1}+\pi_{1}) \after \dis \after (p\otimes\idmap) \after \delta
   & \mbox{since }! = \pi_{1} \colon 1\otimes X \rightarrow 1 \\
& = &
\pi_{1} \after (p\otimes\idmap) \after \delta &
   \mbox{see the left diagram in~\eqref{diag:copier:distr}} \\
& = &
p \after \pi_{1} \after \delta \\
& = &
p.
\end{array}$$

\noindent In the other direction, let $f\colon X \rightarrow X+1$
satisfy $f\leq \idmap$. The associated predicate $\kerbot(f) = 
(\bang+\idmap) \after f \colon X \rightarrow 1+1$ satisfies:
$$\begin{array}{rcl}
\asrt_{\kerbot(f)}
& = &
(\lambda+\bang) \after \dis \after ((\bang+\idmap)\otimes\idmap) \after 
   (f\otimes\idmap) \after \delta \\
& = &
(\lambda+\bang) \after ((\bang\otimes\idmap)+\idmap) \after \dis \after 
   (f\otimes\idmap) \after \delta \\
& = &
(\pi_{1}+\pi_{1}) \after \dis \after (f\otimes\idmap) \after \delta \\
& = &
\pi_{1} \after (f\otimes\idmap) \after \delta 
   \qquad \mbox{by } \\
& = &
f \after \pi_{1} \after \delta \\
& = &
f.
\end{array}$$

Finally we have to prove that the induced andthen operation $p\andthen
q = q \pafter \asrt_{p}$ is commutative. We first note:
$$\begin{array}{rcl}
p \andthen q
& = &
[q,\kappa_{2}] \after (\lambda+\bang) \after \dis \after 
   (p\otimes\idmap) \after \delta \\
& = &
[q \after \lambda ,\kappa_{2}\after\bang] \after \dis \after 
   (p\otimes\idmap) \after \delta \\
& = &
[\lambda \after (\idmap\otimes q) , \zero \after (\idmap\otimes q)] \after 
   \dis \after (p\otimes\idmap) \after \delta \\
& = &
[\lambda,\zero] \after ((\idmap\otimes q)+(\idmap\otimes q)) \after \dis \after 
   (p\otimes\idmap) \after \delta \\
& = &
[\lambda,\zero] \after \dis \after (\idmap\otimes q) \after 
   (p\otimes\idmap) \after \delta \\
& = &
[\lambda,\zero] \after \dis \after (p\otimes q) \after \delta.
\end{array}$$

\noindent But then we are done by using the swap equations $\gamma
\after \delta = \delta$ and $\dis \after \gamma = \dis$ in:
$$\begin{array}[b]{rcl}
p \andthen q
& = &
[\lambda,\zero] \after \dis \after (p\otimes q) \after \delta \\
& = &
[\lambda,\zero] \after \dis \after (p\otimes q) \after \gamma \after \delta \\
& = &
[\lambda,\zero] \after \dis \after \gamma \after (q\otimes p) \after \delta \\
& = &
[\lambda,\zero] \after \dis \after (q\otimes p) \after \delta \\
& = &
q \andthen p.
\end{array}\eqno{\QEDbox}$$
}
\end{proof}

We conclude by describing copiers in two of our examples.

\begin{examples}
\label{ex:copier}
In the commutative effectus $\Kl(\Dst)$ one has copiers
$\delta \colon X \rightarrow X\otimes X$ given by:
$$\begin{array}{rcl}
\delta(x)
& = &
1\ket{(x,x)}.
\end{array}$$

\noindent These copiers are \emph{not} natural in $X$. But we do not
need naturality for the above theorem. We do need to check
equation~\eqref{eqn:copiersef}. So let $f\colon X \rightarrow
\sDst(X)$ be side-effect-free map. Then $f$ is of the form $f(x) =
p(x)\ket{x}$, for a predicate $p\in [0,1]^{X}$, see
Example~\eqref{ex:commbool}~\eqref{ex:commbool:KlD}. But then:
$$\begin{array}{rcccl}
\big((f\otimes \idmap) \after \delta\big)(x)
& = &
p(x)\ket{(x,x)}
& = &
\big(\delta \after f\big)(x).
\end{array}$$

In the commutative effectus $\op{\CvNA}$ of commutative von Neumann
algebras copiers $\delta \colon \mathscr{A}\otimes\mathscr{A}
\rightarrow \mathscr{A}$ are given by multiplication:
$$\begin{array}{rcl}
\delta(a\sotimes b)
& = &
a\cdot b.
\end{array}$$

\noindent This map is positive since the multiplication of two
positive elements is positive again in a commutative von Neumann
algebra.

If $f\colon \mathscr{A} \pto \mathscr{A}$ is a subunital map below the
identity, then $f(x) = f(1)\cdot x$, see
Example~\ref{ex:commbool}~\eqref{ex:commbool:vNA}.  Hence the
equation~\ref{eqn:copiersef} holds:
$$\begin{array}{rcl}
\big(\delta \after (f\otimes\idmap)\big)(a\sotimes b)
\hspace*{\arraycolsep}=\hspace*{\arraycolsep}
\delta\big(f(a)\sotimes b\big) 
& = &
(f(1)\cdot a) \cdot b \\
& = &
f(1) \cdot (a\cdot b) 
\hspace*{\arraycolsep}=\hspace*{\arraycolsep}
\big(f \after \delta\big)(a\sotimes b).
\end{array}$$
\end{examples}

\section{Comprehension}\label{sec:cmpr}

Comprehension is the operation that assigns to a predicate $p$ on a
type/object $X$ a new type/object $\cmpr{X}{p}$ which intuitively
contains those elements of $X$ that satisfy $p$. This comprehension
type comes equipped with an inclusion map $\pi_{p} \colon \cmpr{X}{p}
\rightarrow X$ which we call a comprehension map.

In categorical logic comprehension is described nicely via an
adjunction, see \textit{e.g.}~\cite{Jacobs99}, like any logical
operation that comes equipped with an introduction and elimination
rule. The slogan is: comprehension is right adjoint to truth.  Later
in Section~\ref{sec:quot} we shall see a similar situation for
quotients, namely: quotients are left adjoint to falsity.

This truth functor exists as functor that maps an object $X$ to the
truth/top element $1(X)$ in the set $\Pred(X)$ of predicates on $X$.
Our first task is to combine these set $\Pred(X)$ into a single
category $\Pred(\cat{B})$ of predicates in an effectus $\cat{B}$.  In
the partial case there is a slightly different category of predicates.

\begin{definition}
\label{def:totpredcat}
Let $\cat{B}$ be an effectus in total form. We write
$\Pred(\cat{B})$\index{N}{cat@$\Pred(\cat{B})$ category of predicates
  in an effectus in total form $\cat{B}$} for the category with:
\begin{itemize}
\item objects are pairs $(X,p)$ where $p\in\Pred(X)$ is a predicate
  $p\colon X \rightarrow 1+1$ on $X$; often we simply talk about
  objects $p$ in $\Pred(\cat{B})$ when the carrier $X$ is clear from
  the context;

\item morphisms $f\colon (X,p) \rightarrow (Y,q)$ in $\Pred(\cat{B})$
  are maps $f\colon X \rightarrow Y$ in $\cat{B}$ with an inequality:
$$\begin{array}{rcl}
p
& \leq &
\tbox{f}(q).
\end{array}$$

\noindent We recall that $\tbox{f}(q) = q \after f$ is total
substitution, which is a map of effect modules by
Theorem~\ref{thm:effectusEMod}.
\end{itemize}

\noindent We obtain a category $\Pred(\cat{B})$ by
Exercise~\ref{exc:subst}~\eqref{exc:subst:totalfun}. It comes equipped
with a forgetful functor $\Pred(\cat{B}) \rightarrow \cat{B}$ which is
so trivial that we don't bother to give it a name.

There are truth and falsity functors $\one, \zero \colon \cat{B}
\rightarrow \Pred(\cat{B})$\index{N}{$\zero$,!falsity
  functor}\index{N}{$\one$,!truth functor} given by the truth/top
element $\one(X)\in\Pred(X)$ and the falsity/bottom $\zero(X)
\in\Pred(X)$ respectively. These functors form left and right adjoints
to the forgetful functor in:
\begin{equation}
\label{diag:totpredzeroone}
\vcenter{\xymatrix@R-.5pc{
\Pred(\cat{B})\ar[d]_{\dashv\;}^{\;\dashv}
\\
\cat{B}\ar@/^4ex/[u]^(0.4){\zero\!}\ar@/_4ex/[u]_(0.4){\!\one}
}}
\end{equation}
\end{definition}

The forgetful functor $\Pred(\cat{B}) \rightarrow \cat{B}$ is an
example of a `fibration', or `indexed category', which is a basic
structure in categorical logic and type theory~\cite{Jacobs99}. Here
however we use this functor concretely, without going into the general
theory.

The adjunction $\zero \dashv \mbox{forget} \dashv \one$
in~\eqref{diag:totpredzeroone} exists because there are obvious
bijective correspondences:
$$\begin{prooftree}
{\xymatrix{X\ar[r]^-{f} & Y \rlap{\hspace*{7em}in $\cat{B}$}}}
\Justifies
{\xymatrix{(X,p)\ar[r]_-{f} & 
  (Y,\one)\rlap{$\;=\one(Y)$\hspace*{2em}in $\Pred(\cat{B})$}}}
\end{prooftree}$$

\noindent And:
$$\begin{prooftree}
{\xymatrix{X\ar[r]^-{f} & Y\rlap{\hspace*{7.8em}in $\cat{B}$}}}
\Justifies
{\xymatrix{\zero(Y)=\;(X,\zero)\ar[r]_-{f} & 
  (Y,q)\rlap{\hspace*{4.3em}in $\Pred(\cat{B})$}}}
\end{prooftree}$$

\noindent These trivial correspondences exist because one always
has $p \leq \tbox{f}(\one) = \one$ and $0 \leq \tbox{f}(q)$.

There is a similar, but slightly different, way to combine predicates
in a category in the partial case.

\begin{definition}
\label{def:partpredcat}
Let $(\cat{C},I)$ be an effectus in partial form. We write
$\PPred(\cat{C})$\index{N}{cat@$\PPred(\cat{C})$, category of
  predicates in an effectus in partial form $\cat{C}$} for the
category with:
\begin{itemize}
\item objects $(X,p)$, where $p\colon X \rightarrow I$ is a predicate
on $X$;

\item morphisms $f\colon (X,p) \rightarrow (Y,q)$ are maps $f\colon X
  \to Y$ in $\cat{C}$ satisfying:
$$\begin{array}{rcl}
p
& \leq &
\pbox{f}(q)
\end{array}$$

\noindent where $\pbox{f}(q) = (q^{\bot} \after f)^{\bot}$ is
partial substitution, which is monotone and preserves truth by
Lemma~\ref{lem:partialmonotone}. This yields a category
by Exercise~\ref{exc:subst}~\eqref{exc:subst:partialfun}.
\end{itemize}

\noindent Like in the total case we have falsity and truth functors as
left and right adjoints to the forgetful functor.
\begin{equation}
\label{diag:partpredzeroone}
\vcenter{\xymatrix@R-.5pc{
\PPred(\cat{C})\ar[d]_{\dashv\;}^{\;\dashv}
\\
\cat{C}\ar@/^4ex/[u]^(0.4){\zero\!}\ar@/_4ex/[u]_(0.4){\!\one}
}}
\end{equation}
\end{definition}

The adjoint correspondences in~\eqref{diag:partpredzeroone} work just
like in the total case. They use that partial substution $\pbox{f}$
preserves truth.

The above definition for an effectus in partial form applies in
particular to the category $\Par(\cat{B})$ of partial maps of an
effectus $\cat{B}$. This is the topic of the next result.

\begin{lemma}
\label{lem:totpartpred}
Let $\cat{B}$ be an effectus in total form, with its inclusion functor
$\klin{-} \colon \cat{B} \rightarrow \Par(\cat{B})$. Then we have a
diagram in which everything from left to right commutes:
$$\xymatrix@R-.5pc{
\Pred(\cat{B})\ar[d]_{\dashv\;}^{\;\dashv}\ar[rr]
& &
\PPred(\Par(\cat{B}))\ar[d]_{\dashv\;}^{\;\dashv}
\\
\cat{B}\ar@/^4ex/[u]^(0.4){\zero\!}\ar@/_4ex/[u]_(0.4){\!\one}\ar[rr]^-{\klin{-}}
& &
\Par(\cat{B})\ar@/^4ex/[u]^(0.4){\zero\!}\ar@/_4ex/[u]_(0.4){\!\one}
}$$
\end{lemma}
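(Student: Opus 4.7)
The plan is to construct the missing horizontal functor $\Pred(\cat{B}) \to \PPred(\Par(\cat{B}))$ at the top, and then check that each of the three requested commutations is essentially tautological given the relationship between total and partial substitution.

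First, I would define the horizontal functor. On objects it is the identity on the underlying data: $(X,p) \mapsto (X,p)$, where a predicate $p\colon X \to 1+1$ of the effectus in total form is viewed in $\Par(\cat{B})$ as the partial map $X \pto 1$ it represents (these are literally the same morphism of $\cat{B}$). On morphisms it acts by $f \mapsto \klin{f}$. To see this is well-defined, suppose $f\colon (X,p) \to (Y,q)$ in $\Pred(\cat{B})$, so $p \leq \tbox{f}(q)$. By Exercise~\ref{exc:subst}\eqref{exc:subst:totalpartial} we have $\pbox{\klin{f}}(q) = \tbox{f}(q)$, so $p \leq \pbox{\klin{f}}(q)$, giving a morphism $\klin{f}\colon (X,p) \to (Y,q)$ in $\PPred(\Par(\cat{B}))$. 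Functoriality is inherited from the functoriality of $\klin{-}\colon \cat{B} \to \Par(\cat{B})$.

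Next I would check the three commutations. The square with the two forgetful functors commutes on the nose: both routes send $(X,p)$ to $X$ and $f$ to $\klin{f}$. For the $\zero$-triangle, the falsity functor $\zero\colon \cat{B} \to \Pred(\cat{B})$ sends $X$ to $(X,\kappa_{2} \tafter \bang)$, and this lands in $\PPred(\Par(\cat{B}))$ on $(X,\zero)$ where $\zero\colon X \pto 1$ is the zero partial map; by definition of the zero map in a Kleisli category of the lift monad, $\zero = \kappa_{2} \tafter \bang$ as a morphism of $\cat{B}$, which matches the falsity assigned in $\Par(\cat{B})$ by its own $\zero$ functor. For the $\one$-triangle, the truth functor sends $X$ to $(X, \kappa_{1} \tafter \bang)$, and under the identification this is the total partial map $\klin{\bang}\colon X \pto 1$, which is exactly the top element $\one$ of $\Pred(X)$ viewed partially (and equals $\one$ by Lemma~\ref{lem:zero}). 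On morphisms, all of $\zero$, $\one$, and the horizontal functor act by $f \mapsto \klin{f}$, so the triangles commute.

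Since the statement only asks for commutation (not for the functor to intertwine the adjunctions as a morphism of adjunctions), no further work is required; the adjunctions themselves on each side were already established in Definitions~\ref{def:totpredcat} and~\ref{def:partpredcat}. The only genuinely substantive ingredient is Exercise~\ref{exc:subst}\eqref{exc:subst:totalpartial}, identifying $\pbox{\klin{f}}$ with $\tbox{f}$; everything else is bookkeeping. I do not expect any real obstacle, since the total/partial predicate identification is simply the statement that both categories of predicates are realized on the same underlying maps $X \to 1+1 = X \pto 1$ of $\cat{B}$, and the truth and falsity constants on the two sides are given by the same coprojections $\kappa_{1} \tafter \bang$ and $\kappa_{2} \tafter \bang$.
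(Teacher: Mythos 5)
Your proposal is correct and takes essentially the same route as the paper: the top functor is the identity on objects and $f \mapsto \klin{f}$ on morphisms, with well-definedness coming from $\pbox{\klin{f}}(q) = \tbox{f}(q)$ (Exercise~\ref{exc:subst}~\eqref{exc:subst:totalpartial}). The paper's proof is in fact terser than yours, leaving the commutation of the $\zero$- and $\one$-triangles implicit, so your extra bookkeeping is just a more explicit rendering of the same argument.
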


\begin{proof}
The functor $\klin{-} \colon \cat{B} \rightarrow \Par(\cat{B})$ lifts
to $\Pred(\cat{B}) \rightarrow \PPred(\Par(\cat{B}))$ since a map
$f\colon (X,p) \rightarrow (Y,q)$ in $\Pred(\cat{B})$ yields a map
$\klin{f} \colon (X,p) \rightarrow (Y,q)$. The reason is
Exercise~\ref{exc:subst}~\eqref{exc:subst:totalpartial}, giving:
$p \leq \tbox{f}(q) = \pbox{\klin{f}}(q)$. \QED
\end{proof}

We now come to the definition of comprehension, as right adjoint to
truth. We first give seperate formulations for the total and partial
case, but we prove that they are equivalent later on --- see
Theorem~\ref{thm:totpartcmpr}. Comprehension will be a functor from
predicates to their underlying objects. We shall write it as $(X,p)
\mapsto \cmpr{X}{p}$. When we consider it as a functor we write
$\cmpr{-}{-}$.

\begin{definition}
\label{def:cmpr}
We give separate formulations of essentially the same notion.
\begin{enumerate}
\item \label{def:cmpr:tot} An effectus in total form $\cat{B}$ has
  \emph{comprehension}\index{S}{comprehension}\index{S}{effectus!--
    with comprehension} when its truth functor $\one \colon \cat{B}
  \rightarrow \Pred(\cat{B})$ has a right adjoint $\cmpr{-}{-}$ such
  that for each predicate $p\colon X \rightarrow 1+1$ and object
  $Y\in\cat{B}$ the canonical map
\begin{equation}
\label{diag:cmprdefdistr}
\xymatrix{
\cmpr{X}{p}+Y\ar[r] & \cmpr{X+Y\,}{\,[p,\one]}
}\index{N}{$\cmpr{X}{p}$, comprehension of predicate $p$ on $X$}
\end{equation}

\noindent is an isomorphism.

\item \label{def:cmpr:part} An effectus in partial form $(\cat{C},I)$
  has \emph{comprehension} if its truth functor $\one \colon \cat{C}
  \rightarrow \PPred(\cat{C})$ has a right adjoint $\cmpr{-}{-}$ and
  each counit component $\cmpr{X}{p} \rightarrow X$ in $\cat{C}$ is a
  total map.
\end{enumerate}
\end{definition}

Formally this counit is a map in $\PPred(\cat{C})$ of the form
$\pi_{p} \colon (\one, \cmpr{X}{p}) \rightarrow
(X,p)$,\index{D}{$\pi_p$, comprehension projection for predicate $p$}
but in the definition we have written it as a map in $\cat{B}$, for
simplicity.  These comprehension maps $\pi_p$ should not be confused
with the tensor projections $\pi_i$ in~\eqref{diag:tensorproj}.

\begin{example}
\label{ex:cmpr}
We describe comprehension for our four running examples. We shall
alternate between the total and partial descriptions. This
comprehension structure occurs already in~\cite{ChoJWW15}, except for
order unit groups.
\begin{enumerate}
\item \label{ex:cmpr:Sets} Recall that predicates on an object/set $X$
  in the effectus $\Sets$ correspond to subsets of $X$. The category
  $\Pred(\Sets)$ has such subsets $(P\subseteq X)$ as objects;
  morphisms $f\colon (P\subseteq X) \rightarrow (Q\subseteq Y)$ are
  functions $f\colon X \rightarrow Y$ satisfying $P\subseteq
  f^{-1}(Q)$, that is: $x\in P \Rightarrow f(x)\in Q$.

The comprehension functor $\Pred(\Sets) \rightarrow \Sets$ is given by
$(Q\subseteq Y) \mapsto Q$. The adjoint correspondences for $\one \dashv
\cmpr{-}{-}$ amount to:
$$\begin{prooftree}
\xymatrix{\llap{$\one(X) =\;$}(X\subseteq X) \ar[r]^-{f} & (Q\subseteq Y)}
\Justifies
\xymatrix@C+1pc{X\ar[r]_-{g} & Q\rlap{$\;=\cmpr{Y}{Q}$}}
\end{prooftree}$$


\noindent Obviously, if $X\subseteq f^{-1}(Q)$, then $f(x) \in Q$ for
each $x\in X$, so that $f$ restricts to a unique map $\overline{f}
\colon X \rightarrow Q = \cmpr{Y}{Q}$ with $\pi_{Q} \after
\overline{f} = f$, where $\pi_{Q} \colon \cmpr{Y}{Q} \hookrightarrow
Y$ is the inclusion.

For a predicate $P\subseteq X$ the predicate $[P,\one] \subseteq X+Y$
used in~\eqref{diag:cmprdefdistr} is given by $[P,\one] =
\set{\kappa_{1}x}{x\in P} \cup \set{\kappa_{2}y}{y\in Y}$. Hence it is
immediate that we have: $\cmpr{X}{P}+Y = P+Y = [P,\one] =
\cmpr{X+Y\,}{\,[P,\one]}$.

\item \label{ex:cmpr:KlD} For discrete probability we use the effectus
  in partial form $\Kl(\sDst) \cong \Par(\Kl(\Dst))$. The category
  $\PPred(\Kl(\sDst))$ has fuzzy predicates $p\in [0,1]^{X}$ as
  objects. A morphism $f\colon (p\in [0,1]^{X}) \rightarrow (q\in
  [0,1]^{Y})$ is a function $f \colon X \rightarrow \sDst(Y)$
  satisfying, for each $x\in X$,
$$\begin{array}{rcccl}
p(x)
& \leq &
\pbox{f}(q)(x)
& \smash{\stackrel{\eqref{eqn:partsubstKlD}}{=}} & 
\sum_{y} f(x)(y)\cdot q(y) + 1 - \sum_{y}f(x)(y).
\end{array}$$

\noindent Comprehension sends a predicate $q\in [0,1]^{Y}$ to the set
$\cmpr{Y}{q} = \setin{y}{Y}{q(y) = 1}$ of elements where the predicate
$q$ holds with certainty. Let's check the correspondences:
$$\begin{prooftree}
\xymatrix{(\one\in [0,1]^{X}) \ar[r]^-{f} & 
   (q\in [0,1]^{Y})\rlap{\hspace*{2.8em}in $\PPred(\Kl(\sDst))$}}
\Justifies
\xymatrix@C+1pc{X\ar[r]_-{g} & 
  \setin{y}{Y}{q(y)=1}\rlap{\hspace*{3em}in $\Kl(\sDst)$}}
\end{prooftree}\hspace*{8em}$$

\noindent Let $f\colon X \rightarrow \sDst(Y)$ satisfy $1 \leq
\sum_{y} f(x)(y)\cdot q(y) + 1 - \sum_{y}f(x)(y)$ for each $x\in
X$. Hence $\sum_{y}f(x)(y) \leq \sum_{y}f(x)(y) \cdot q(y)$. This can
only happen if $f(x)(y) > 0$ implies $q(y) = 1$, for each $x$. But
this means that we can write each $f(x) \in \sDst(Y)$ as subconvex sum
$f(x) = \sum_{y} f(x)(y)\ket{y}$ with $q(y)=1$, and thus
$y\in\cmpr{Y}{q}$. Hence we have $f(x) \in \sDst(\cmpr{Y}{q})$, so
that $f$ restricts to a unique function $\overline{f} \colon X
\rightarrow \sDst(\cmpr{Y}{q})$ with $\pi_{Q} \after \overline{f} =
f$, where the comprehension map $\pi_{Q} \colon \cmpr{Y}{q}
\rightarrow \sDst(Y)$ is given by $\pi_{Q}(y) = 1\ket{y}$. Clearly,
it is total.

\item \label{ex:cmpr:OUG} The effectus $\op{\OUS}$ of order unit
  groups also has comprehension. The description is rather confusing
  because we work in the opposite category. It turns out that in this
  model comprehension is given by a quotient! We follow the relevant
  description meticulously, in the total case.

The category $\Pred(\op{\OUG})$ of predicates has effects $e\in
[0,1]_{G}$ in order unit groups $G$ as objects. A morphism $f\colon
(e\in [0,1]_{G}) \rightarrow (d\in [0,1]_{H})$ is a map $f\colon G
\rightarrow H$ in $\op{\OUG}$ with $e \leq \tbox{f}(d)$. This means
that $f$ is a homomorphism of order unit groups $f\colon H \rightarrow
G$ with $e \leq f(d)$ in $G$.

For an effect $d\in H$ we write $\pideal{d}{H} \subseteq
H$\index{D}{$\pideal{d}{H}$, ideal generated by $d$ in order unit
  group $H$} for the subset given by:
\begin{equation}
\label{eqn:pideal}
\begin{array}{rcl}
\pideal{d}{H}
& = &
\setin{x}{H}{\exin{n}{\NNO}{-n\cdot d \leq x \leq n\cdot d}}.
\end{array}
\end{equation}

\noindent It is not hard to see that $\pideal{d}{H}$ is a subgroup of
$H$. It is an ideal, since it satisfies: $-x \leq y \leq x$ and $x\in
\tuple{d}_{G}$ implies $y\in \pideal{d}{H}$. In fact, $\pideal{d}{H}$
is the smallest ideal containing $d$. It inherits the order from $H$
and is an order unit group itself, with $d\in\pideal{d}{H}$ as unit.

We now define comprehension as quotient group:
$$\begin{array}{rcl}
\cmpr{H}{d}
& = &
H/\pideal{d^\bot}{H}
\end{array}
\qquad\mbox{with}\qquad
\left\{\vcenter{\xymatrix@R-2.2pc{
H\ar[r]^-{\pi_{d}} & \cmpr{H}{d} \\
x\ar@{|->}[r] & x + \pideal{d^\bot}{H}
}}\right.$$

First we have to check that $\cmpr{H}{d}$ is an order unit group.  It
is easy to see that the quotient of an ordered Abelian group with an
ideal is again an ordered Abelian group, so we concentrate on checking
that $\cmpr{H}{d}$ has an order unit. We claim that $\pi_{d}(1) = 1 +
\pideal{d^\bot}{H}$ is the order unit in $H/\pideal{d^\bot}{H} =
\cmpr{H}{d}$. Indeed, for an arbitrary $x\in H$ there is an $n\in\NNO$
with $-n\cdot 1 \leq x \leq n\cdot 1$. But then:
$$\begin{array}{rcccccccl}
-n\cdot \pi_{d}(1)
& = &
-n\cdot 1 + \pideal{d^\bot}{H}
& \leq &
x + \pideal{d^\bot}{H}
& \leq &
n\cdot 1 + \pideal{d^\bot}{H}
& = &
n\cdot \pi_{d}(1).
\end{array}$$

The claimed comprehension adjunction involves a bijective
correspondence between:
$$\begin{prooftree}
\xymatrix{\llap{$\one(G) = \;$}(1\in [0,1]_{G})\ar[r]^-{f} & (H,d)
   \rlap{\hspace*{2.4em}in $\Pred\big(\op{\OUG}\big)$}}
\Justifies
\xymatrix@C+1pc{G\ar[r]_-{g} & \cmpr{H}{d}
   \rlap{\hspace*{4em}in $\op{\OUG}$}}
\end{prooftree}\hspace*{6em}$$


\noindent That is, between maps in $\OUG$:
$$\begin{prooftree}
\xymatrix{H\ar[r]^-{f} & G \mbox{ with $f(d) = 1$}}
\Justifies
\xymatrix{\cmpr{H}{d}\ar[r]_-{g} & G}
\end{prooftree}$$

\noindent This works as follows.
\begin{itemize}
\item Let $f\colon H \rightarrow G$ in $\OUG$ satisfy $f(d) = 1$. Then
  for each $x\in\pideal{d^{\bot}}{H}$ we have $f(x) = 0$. Indeed, if
  $-n\cdot d^{\bot} \leq x \leq n\cdot d^{\bot}$, then, because
  $f(d^{\bot}) = f(1 - d) = f(1) - f(d) = 1 - 1 = 0$, we get:
$$\begin{array}{rcccccccccccl}
0
& = &
-n\cdot f(d^{\bot})
& = &
f(-n\cdot d^{\bot})
& \leq &
f(x)
& \leq &
f(n\cdot d^{\bot})
& = &
n\cdot f(d^{\bot})
& = &
0.
\end{array}$$

\noindent Thus there is a unique group homomorphism $\overline{f}
\colon \cmpr{H}{d} = H/\pideal{d^\bot}{H} \rightarrow G$ by
$\overline{f}(x + \pideal{d^\bot}{H}) = f(x)$. This map $\overline{f}$
is clearly monotone and unital. 

\auxproof{
let $x + \pideal{d^\bot}{H} \leq y + \pideal{d^\bot}{H}$, say via $x +
u \leq y$, for $u\in\pideal{d^\bot}{H}$. Then $f(u) = 0$, as just shown,
and thus:
$$\begin{array}{rcccccl}
f(x)
& = &
f(x) + f(u)
& = &
f(x+u)
& \leq &
f(y).
\end{array}$$
}

\item The other direction is easy: given $g\colon \cmpr{H}{d}
  \rightarrow G$, take $\overline{g} = g \after \pi_{d} \colon H
  \rightarrow G$. Then:
$$\begin{array}{rcl}
\overline{g}(d)
\hspace*{\arraycolsep}=\hspace*{\arraycolsep}
g\big(\pi_{d}(d)\big) 
& = &
g\big(d + \pideal{d^\bot}{H}\big) \\
& = &
g\big(d + d^{\bot} + \pideal{d^\bot}{H}\big) \\
& = &
g\big(1 + \pideal{d^{\bot}}{H}\big)
\hspace*{\arraycolsep}=\hspace*{\arraycolsep}
g\big(\one_{\cmpr{H}{d}}\big)
\hspace*{\arraycolsep}=\hspace*{\arraycolsep}
1.
\end{array}$$
\end{itemize}

\noindent Clearly, \smash{$\overline{\overline{f}} = \overline{f}
  \after \pi_{d} = f$}. And \smash{$\overline{\overline{g}} = g$}
holds by uniqueness, since $\overline{g}(x) = g(\pi_{d}(x)) = g(x +
\pideal{d^\bot}{H})$.

Finally, for $e\in [0,1]_{G}$ the canonical map
$\cmpr{G\oplus H}{(e,1)} \rightarrow \cmpr{G}{e} \oplus H$ is the
function:
$$\xymatrix@R-2pc{
(G\oplus H)/\pideal{(e,1)^{\bot}}{} \ar[rr] & &
   G/\pideal{e^\bot}{}\oplus H \\
(x,y) + \pideal{(e,1)^{\bot}}{} \ar@{|->}[rr] & &
   (x+\pideal{e^\bot}{}, y).
}$$

\noindent The equivalence relation on $G\oplus H$ induced by the ideal
$\pideal{(e,1)^{\bot}}{} = \pideal{(e^{\bot}, 0)}{}$ is given by
$(x,y) \sim (x',y')$ iff $(x-x', y-y') \in \pideal{(e^{\bot}, 0)}{}$.
The latter means that there is an $n\in\NNO$ with $-n\cdot (e^{\bot},
0) \leq (x-x', y-y') \leq n\cdot (e^{\bot}, 0)$.  This is equivalent
to: $-n\cdot e^{\bot} \leq x-x' \leq n\cdot e^{\bot}$ and $y=y'$.  The
above function is thus well-defined, and clearly an isomorphism, with
inverse $(x+\pideal{e^\bot}{}, y) \mapsto (x,y) +
\pideal{(e,1)^\bot}{}$.

\item \label{ex:cmpr:vNA} The effectus in partial form
  $\Par(\op{\vNA})$ of von Neumann algebras and subunital maps also
  has comprehension. The relevant category of predicates
  $\PPred(\Par(\op{\vNA}))$ has effects $e\in[0,1]_{\mathscr{A}}$ in a
  von Neumann algebra $\mathcal{A}$ as objects. Morphisms $f \colon
  (e\in [0,1]_{\mathscr{A}}) \rightarrow (d\in[0,1]_{\mathscr{B}})$
  are subunital completely positive normal maps $f\colon \mathscr{B}
  \rightarrow \mathscr{A}$ with $e \leq f(d)$ in $\mathscr{A}$.

Recall the sharp predicates\index{S}{sharp!-- predicate!-- on a von
  Neumann algebra} on a von Neumann algebra are precisely the
projections.  For an effect $d\in[0,1]_{\mathscr{B}}$ we write
$\floor{d}\in[0,1]_{\mathscr{B}}$\index{D}{$\floor{d}$, greatest sharp
  element below $d$ in a von Neumann algebra} for the greatest sharp
element $s\in\mathscr{B}$ with $s\leq d$. The definition $\ceil{d} =
\floor{d^\bot}^{\bot}$\index{D}{$\ceil{d}$, least sharp element above
  $d$ in a von Neumann algebra} then yields the least sharp element
above $d$.

The projection~$\ceil{d}$ is known as the \emph{support projection}
of~$d$.  In other texts it is often denoted as~$r(d)$, as it is as an
the projection onto the closed range of~$d$.  Before we construct the
comprehension in~$\op\vNA$, we list a few properties of~$\ceil{d}$,
which will be useful later on.
\begin{enumerate}
\item The ascending sequence~$d \leq d^{\nicefrac{1}{2}} \leq
  d^{\nicefrac{1}{4}} \leq \cdots$ has supremum~$\ceil{d}$.  This can
  be shown using the spectral theorem.  It follows~$\ceil{d} =
  \ceil{\sqrt{d}} = \ceil{d^2}$.

\item We have~$d = \ceil{d}\cdot d = d \cdot \ceil{d}$.  In fact,
  $\ceil{d}$ is the least projection with this property.  This is a
  consequence of Lemma~\ref{lem:vNdownsetproj}.
  In combination with the previous
  point we obtain, for instance: $\sqrt{d}\cdot \ceil{d} = \sqrt{d}
  \cdot \ceil{\sqrt{d}} = \sqrt{d} = \ceil{d}\cdot \sqrt{d}$.
\end{enumerate}

Now we can define comprehension as:
$$\begin{array}{rcccl}
\cmpr{\mathscr{B}}{d}
& = &
\floor{d}\mathscr{B}\floor{d}
& = &
\set{\floor{d}\cdot x \cdot \floor{d}}{x\in\mathscr{B}}.
\end{array}$$

\noindent This subset $\cmpr{\mathscr{B}}{d}$ is itself a von Neumann
algebra, with $\floor{d}$ as unit. The associated comprehension map is
the map $\pi_{d} \colon \mathscr{B} \rightarrow \cmpr{\mathscr{B}}{d}$
in $\vNA$ given by $\pi_{d}(x) = \floor{d}\cdot x\cdot \floor{d}$.

We must show that given a von Neumann algebra~$\mathscr{A}$ and a
subunital map~$f\colon \mathscr{B}\to \mathscr{A}$ with~$f(d)=f(1)$
there is a unique subunital map $g\colon
\floor{d}\mathscr{B}\floor{d}\to\mathscr{B}$ with~$g(\floor{d}\cdot
x\cdot \floor{d})=f(x)$.  Put $g(x)=f(x)$; the difficulty it to show
that~$f(\floor{d}\cdot x\cdot \floor{d})=f(d)$.  For the details,
see~\cite{WesterbaanW15}.  We sketch the proof here.  By a variant of
Cauchy-Schwarz inequality for the completely positive map~$f$
(see~\cite[Exc.~3.4]{paulsen2002}) we can reduce this problem to
proving that~$f(\floor{d})=f(1)$, that is, $f(\ceil{d^\bot})=0$.
Since~$\ceil{d^\bot}$ is the supremum of~$d^\bot\leq
(d^\bot)^{\nicefrac{1}{2}} \leq
(d^\bot)^{\nicefrac{1}{4}}\leq\dotsb$ and~$f$ is normal,
$f(\ceil{d^\bot})$ is the supremum of~$f(d^\bot)\leq
f((d^\bot)^{\nicefrac{1}{2}}) \leq
f((d^\bot)^{\nicefrac{1}{4}})\leq\dotsb$, all of which turn out to be
zero by Cauchy-Schwarz since~$f(d)=f(1)$.  Thus~$f(\ceil{d^\bot})=0$,
and we are done.

We note that with the same argument, one shows that for any
$2$-positive normal subunital map~$h\colon \mathscr{A} \rightarrow
\mathscr{B}$, we have, for all effects $a\in [0,1]_{\mathscr{A}}$,
\begin{equation} 
\label{eqn:vNAceilMap}
\begin{array}{rcl}
h(a) = 0 
& \Longleftrightarrow &
h(\ceil{a}) = 0.
\end{array}
\end{equation}
\end{enumerate}
\end{example}

We come to the promised result that relates comprehension in the
total and in the partial case.

\begin{theorem}
\label{thm:totpartcmpr}
An effectus in total form $\cat{B}$ has comprehension iff its effectus
in partial form $\Par(\cat{B})$ has comprehension.

In both cases the comprehension maps $\pi_{p} \colon \cmpr{X}{p}
\rightarrow X$ are monic --- in $\cat{B}$ and in $\Par(\cat{B})$
respectively.
\end{theorem}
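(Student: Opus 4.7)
The plan is to use the same underlying object $\cmpr{X}{p}$ for comprehension in both the total and partial pictures, and to transport the adjunction across the inclusion $\klin{-}\colon \cat{B}\to\Par(\cat{B})$, exploiting the distributivity iso~\eqref{diag:cmprdefdistr} in one direction and the partial pairing of Lemma~\ref{lem:sumpairing} in the other.

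For the forward direction (total comprehension implies partial comprehension), I would keep the comprehension object $\cmpr{X}{p}$ fixed and take the partial counit to be $\klin{\pi_p}$, the total counit viewed as a partial map; this is automatically total. To verify the partial adjoint correspondence, I would compose the tautological iso $\Par(\cat{B})(Y,\cmpr{X}{p}) = \cat{B}(Y,\cmpr{X}{p}+1)$ with the distributivity iso $\cmpr{X}{p}+1 \cong \cmpr{X+1\,}{\,[p,\one]}$ and the total comprehension adjunction for the predicate $[p,\one]$ on $X+1$; this yields a bijection with maps $f\colon Y\to X+1$ satisfying $\tbox{f}([p,\one]) = \one$. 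The crucial identification $\pbox{f}(p) = [p,\kappa_1]\after f = [p,\one]\after f = \tbox{f}([p,\one])$ matches the total condition on the outside with the partial one on the inside, so this bijection is exactly $\PPred(\Par(\cat{B}))(\one(Y),(X,p)) \cong \Par(\cat{B})(Y,\cmpr{X}{p})$.

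For the backward direction, let $\cmpr{X}{p}$ with total counit $\pi_p$ be given by partial comprehension. Total maps $Y\to \cmpr{X}{p}$ in $\cat{B}$ are precisely the total partial maps $Y\pto\cmpr{X}{p}$ in $\Par(\cat{B})$. The key lemma is that totality transfers along $\pi_p$: if $g\colon Y\pto\cmpr{X}{p}$ corresponds under partial comprehension to $h = \pi_p\pafter g\colon Y\pto X$, then Lemma~\ref{lem:ker}~\eqref{lem:ker:tot} gives $\kerbot(g) = \kerbot(\pi_p\pafter g) = \kerbot(h)$ because $\pi_p$ is total, so $g$ is total iff $h$ is. Since $\pbox{h}(p) = \tbox{h}(p)$ for total $h$, this restricts the partial adjunction to a bijection $\cat{B}(Y,\cmpr{X}{p}) \cong \Pred(\cat{B})(\one(Y),(X,p))$.

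The hard part will be deriving the distributivity iso $\cmpr{X}{p}+Y \cong \cmpr{X+Y\,}{\,[p,\one]}$, which the partial definition does not provide directly. I would show both objects represent the same presheaf on $\Par(\cat{B})$ and invoke Yoneda. By the partial $n$-ary pairing of Lemma~\ref{lem:sumpairing}, maps $Z\pto\cmpr{X}{p}+Y$ correspond to pairs $(g_1\colon Z\pto\cmpr{X}{p},\, g_2\colon Z\pto Y)$ with $\kerbot(g_1)\orthogonal\kerbot(g_2)$. Applying partial comprehension to $g_1$ and using $\kerbot(g_1) = \kerbot(\pi_p\pafter g_1)$ again, such pairs biject with pairs $(h_1\colon Z\pto X\text{ with }\pbox{h_1}(p)=\one,\, g_2\colon Z\pto Y)$ whose kernel-supplements remain orthogonal, and hence with single partial maps $h\colon Z\pto X+Y$ satisfying a compatibility with $p$. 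The content of this last compatibility is pinned down by the calculation $\pbox{h}([p,\one]) = ([p^\bot,\zero]\pafter h)^\bot = ((p^\bot\pafter h_1)\ovee(\zero\pafter g_2))^\bot = \pbox{h_1}(p)$, using the decomposition $h = (\kappa_1\pafter h_1)\ovee(\kappa_2\pafter g_2)$ from~\eqref{eqn:dtupleovee} and $[p,\one]^\bot = [p^\bot,\zero]$; thus the condition becomes $\pbox{h}([p,\one]) = \one$, which is precisely the partial adjunction condition for $\cmpr{X+Y\,}{\,[p,\one]}$. The Yoneda isomorphism is then easily identified with the canonical map from~\eqref{diag:cmprdefdistr} by tracing identities.

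Finally, monicity of $\pi_p$ is immediate from the adjunction in either form: if $\pi_p\after f = \pi_p\after g$ for parallel maps into $\cmpr{X}{p}$, both $f$ and $g$ arise as the unique adjoint transpose of the common composite, so $f=g$.
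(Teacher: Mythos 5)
Your proposal is correct, and while the forward direction coincides with the paper's argument (both reduce the partial universal property to the total one for the predicate $[p,\one]$ on $X+1$ via the identity $\pbox{f}(p) = [p,\kappa_1]\after f = \tbox{f}([p,\one])$ and the distributivity isomorphism with $Y=1$), your backward direction takes a genuinely different route. Where the paper extracts totality of the transposed map from the pullback on the right of~\eqref{diag:effecutsderivedpb} and then builds the inverse of the canonical map $\sigma\colon\cmpr{X}{p}+Y\to\cmpr{X+Y\,}{\,[p,\one]}$ by hand — constructing $\sigma^{-1}$ as a mediating map into the defining pullback of an effectus and checking both composites via monicity of $\pi_{[p,\one]}$ and uniqueness of mediating maps — you instead transfer totality through the kernel identity $\ker(\klin{\pi_p}\pafter g)=\ker(g)$ of Lemma~\ref{lem:ker}~\eqref{lem:ker:tot}, and obtain the distributivity isomorphism by showing both sides represent the functor $Z\mapsto\set{h\colon Z\pto X+Y}{\pbox{h}([p,\one])=\one}$, using the $n$-ary pairing correspondence~\eqref{bijcor:sumpairing} and the computation $[p,\one]^\bot\pafter h = (p^\bot\pafter h_1)\ovee\zero$. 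This buys a cleaner, coordinate-free argument that makes visible \emph{why} the two comprehension objects agree (they classify the same partial maps), at the cost of two details you should make explicit: the naturality in $Z$ of each bijection in your chain (all are given by pre- or post-composition, so this is routine), and the identification of the Yoneda isomorphism with the \emph{canonical} map of~\eqref{diag:cmprdefdistr} — tracing $\idmap_{\cmpr{X}{p}+Y}$ through the chain yields $\klin{\pi_p+\idmap}$ and hence the unique total $\sigma$ with $\pi_{[p,\one]}\after\sigma=\pi_p+\idmap$, which is exactly the map the definition requires to be invertible, so this is not a gap but should be written out. Your monicity argument via uniqueness of adjoint transposes is the concrete form of the general fact the paper cites from Mac~Lane.
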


\begin{proof}
Since the left adjoint truth functors $\one \colon \cat{B} \rightarrow
\Pred(\cat{B})$ and $\one \colon \Par(\cat{B}) \rightarrow
\PPred(\Par(\cat{B}))$ are faithful in both cases, the counits of the
adjunctions are monic by a general categorical result --- see
\textit{e.g.}~(the dual of)~\cite[IV.3~Thm.1]{Maclane71}.  These
counits are monic maps $(\cmpr{X}{p},\one) \rightarrow (X,p)$ in the
categories $\Pred(\cat{B})$ and $\PPred(\Par(\cat{B}))$
respectively. It is easy to see, using the truth functor $\one$, that
the underlying maps are then monic in $\cat{B}$ and $\Par(\cat{B})$.

\auxproof{
If $f,g\colon Y \rightarrow \cmpr{X}{p}$ satisfy $\pi_{p} \after f
= \pi_{p} \after g$, then $1f = f, 1g = g\colon (Y,1) \rightarrow
(\cmpr{X}{p}, 1)$ satisfy $\pi_{p} \after f = \pi_{p} \after g$
in the category of predicates. Hence $f = g$. This argument applies
both in the total and in the partial case.
}

First, let $\one\colon \cat{B} \rightarrow \Pred(\cat{B})$ have a
right adjoint $\cmpr{-}{-}$. The counit is a map $\pi_{p} \colon
(\cmpr{X}{p},\one) \rightarrow (X,p)$ in $\Pred(\cat{B})$ satisfies
$\one \leq \tbox{\pi_{p}}(p) = p \after \pi_{p}$ by definition. The
underlying map $\pi_{p} \colon \cmpr{X}{p} \rightarrow X$ is monic in
$\cat{B}$, as just noted.

We first investigate the canonical map~\eqref{diag:cmprdefdistr}, call
it $\sigma$, is the unique one in:
\begin{equation}
\label{diag:cmprdistriso}
\vcenter{\xymatrix@R-.5pc@C-1pc{
\cmpr{X}{p} + Y\ar@{..>}[rr]^-{\sigma_p}\ar[dr]_{\pi_{p}+\idmap} & & 
   \cmpr{X+Y\,}{\,[p, \one]}\ar@{>->}[dl]^{\pi_{[p,\one]}} \\
& X+A &
}}
\end{equation}

\noindent This map $\sigma$ exists by comprehension because:
$$\begin{array}{rcl}
\tbox{(\pi_{p}+\idmap)}([p,\one])
& = &
[p, \kappa_{1}\after\bang] \after (\pi_{p}+\idmap) \\
& = &
[p \after \pi_{p}, \kappa_{1}\after\bang] 
\hspace*{\arraycolsep}=\hspace*{\arraycolsep}
[\kappa_{1} \after \bang, \kappa_{1}\after\bang]
\hspace*{\arraycolsep}=\hspace*{\arraycolsep}
\kappa_{1} \after [\bang,\bang]
\hspace*{\arraycolsep}=\hspace*{\arraycolsep}
\kappa_{1} \after \bang
\hspace*{\arraycolsep}=\hspace*{\arraycolsep}
\one.
\end{array}$$

\noindent By assumption, this $\sigma$ is an isomorphism, for each
predicate $p$ and object $Y$.

We now show that the truth functor $\one\colon \cat{B} \rightarrow
\PPred(\Par(\cat{B}))$ has a right adjoint, on an object $(X,p)$ given
by $\cmpr{X}{p}$. There is total map $\klin{\pi_{p}} = \kappa_{1}
\after \pi_{1} \colon \cmpr{X}{p} \pto X$ in $\Par(\cat{B})$, which
satisfies by Exercise~\ref{exc:subst}~\eqref{exc:subst:totalpartial}:
$$\begin{array}{rcccccl}
\pbox{\klin{\pi_{p}}}(p)
& = &
\tbox{\big(\pi_{p}\big)}(p)
& = &
p \after \pi_{p}
& = &
\one.
\end{array}$$

Let $f\colon Y \pto X$ be an arbitrary partial map satisfying $1 \leq
\pbox{f}(p) = [p,\kappa_{1}] \after f = [p,\one] \after f =
\tbox{f}([p,\one])$.  Then there is a unique total map $\overline{f} \colon Y
\tto \cmpr{X+1\,}{\,[p,\one]}$ with $\pi_{[p,\one]} \after
\overline{f} = f$. We now take $\smash{\widehat{f}} = \sigma^{-1} \after
\overline{f} \colon Y \tto \cmpr{X}{p} + 1$, using the
isomorphism from~\eqref{diag:cmprdistriso} with $Y=1$. Then:
$$\begin{array}{rcccccl}
\klin{\pi_{p}} \pafter \smash{\widehat{f}}
& = &
(\pi_{p}+\idmap) \after \sigma^{-1} \after \overline{f}
& \smash{\stackrel{\eqref{diag:cmprdistriso}}{=}} &
\pi_{[p,\one]} \after \overline{f}
& = &
f.
\end{array}$$

\noindent If $g\colon Y \pto \cmpr{X}{p}$ also satisfies
$\klin{\pi_{1}} \pafter g = f$, then $\sigma \after g\colon Y \to
\cmpr{X+1\,}{\,[p,\one]}$ satisfies $\pi_{[p,\one]} \after \sigma_{p}
\after g = (\pi_{p}+\idmap) \after g = \klin{\pi_{p}} \pafter g =
f$. Hence $\sigma \after g = \overline{f}$, and thus $g = \sigma^{-1}
\after \overline{f} = \smash{\widehat{f}}$.

In the other direction, we show how comprehension in the partial case
yields comprehension in the total case with the distributivity
isomorphism~\eqref{diag:cmprdefdistr}. Therefore, assume that a right
adjoint $\cmpr{-}{-}$ to $\one \colon \cat{B} \rightarrow
\PPred(\Par(\cat{B}))$ exists so that each counit map $\cmpr{X}{p}\pto
X$ is total. We shall write this counit as $\klin{\pi_{p}}$, for
$\pi_{p} \colon \cmpr{X}{p} \rightarrow X$ in $\cat{B}$. As noted in
the very beginning of this proof, the map $\klin{\pi_{p}}$ is monic in
$\Par(\cat{B})$. Further, we have:
$$\begin{array}{rcccccl}
p \after \pi_{p}
& = &
\tbox{\big(\pi_{p}\big)}(p)
& = &
\pbox{\klin{\pi_{p}}}(p)
& = &
\one.
\end{array}$$

\noindent The last equation holds because the counit is a map
$\klin{\pi_{p}} \colon (\cmpr{X}{p},\one) \rightarrow (X,p)$ in
$\PPred(\Par(\cat{B}))$.

We now show that there is also a right adjoint to the truth
functor $\one\colon\cat{B} \rightarrow \Pred(\cat{B})$, on objects given
by $(X,p) \mapsto \cmpr{X}{p}$.

Let $f\colon Y \to X$ be a total map satisfying $\tbox{f}(p) =
\one$. Then $\klin{f} = \kappa_{1} \after f \colon Y \pto X$ in
$\Par(\cat{B})$ satisfies $\pbox{\klin{f}}(p) = \tbox{f}(p) =
\one$. Hence there is a unique partial map $\overline{f} \colon Y \pto
\cmpr{X}{p}$ with $\klin{\pi_{p}} \pafter \overline{f} = \klin{f}$.
The latter means that the outer diagram below commutes in the effectus
$\cat{B}$.
$$\xymatrix{
Y\ar@/^2ex/[drr]^-{f}\ar@/_2ex/[ddr]_{\overline{f}}
   \ar@{..>}[dr]^(0.6){\smash{\widehat{f}}}
\\
& \cmpr{X}{p}\ar[r]_-{\pi_p}
   \ar@{ >->}[d]_{\kappa_1}\pullback & 
   X\ar@{ >->}[d]^{\kappa_1} 
\\
& \cmpr{X}{p} + 1\ar[r]_-{\pi_{p}+\idmap} & X+1
}$$

\noindent The rectangle in the middle is a pullback
by~\eqref{diag:effecutsderivedpb}. This mediating map
$\smash{\widehat{f}}$ is what we need, since it satisfies $\pi_{p}
\after \smash{\widehat{f}} = f$ by construction. Moreover, if $g\colon
Y \to \cmpr{X}{p}$ in $\cat{B}$ also satisfies $\pi_{p} \after g = f$,
then $\klin{g} = \kappa_{1} \after g\colon Y \pto \cmpr{X}{p}$ in
$\Par(\cat{B})$ satisfies $\klin{\pi_{p}} \pafter \klin{g} =
\klin{\pi_{p} \after g} = \klin{f}$. Hence $\klin{g} = \overline{f}$
by uniqueness. But then $\kappa_{1} \after g = \klin{g} = \overline{f}
= \kappa_{1} \after \smash{\widehat{f}}$, and so $g =
\smash{\widehat{f}}$ since $\kappa_{1}$ is monic. Thus we have shown
that the truth functor $\one\colon \cat{B} \rightarrow \Pred(\cat{B})$
has a right adjoint.

We still have to prove that the canonical map $\sigma \colon
\cmpr{X}{p}+Y \rightarrow \cmpr{X+Y\,}{\,[p,\one]}$
from~\eqref{diag:cmprdistriso} is an isomorphism. The inverse
$\sigma^{-1}$ for $\sigma$ is obtained via the following
pullback~\eqref{def:effectus:pb} from the definition of effectus:
$$\xymatrix{
\cmpr{X+Y\,}{\,[p,\one]}
   \ar@/^3ex/[drr]^-{(\bang+\idmap) \after \pi_{[p,\one]}}
   \ar@/_3ex/[ddr]_-{f}\ar@{..>}[dr]^-{\sigma^{-1}}
\\
& \cmpr{X}{p}+Y\ar[r]^(0.5){\bang+\idmap}
   \ar[d]_{\idmap+\bang}\pullback & 
   1+Y\ar[d]^{\idmap+\bang}
\\
& \cmpr{X}{p} + 1\ar[r]_-{\bang+\idmap} & 1+1
}$$

\noindent The auxiliary map $f$ in this diagram is the unique one
with $\klin{\pi_{p}} \pafter f = (\idmap+\bang) \after \pi_{[p,\one]} \colon
\cmpr{X+Y\,}{\,[p,\one]} \rightarrow X+1$. It exists since:
$$\begin{array}{rcl}
\pbox{\big((\idmap+\bang) \after \pi_{[p,\one]}\big)}(p)
& = &
[p,\one] \after (\idmap+\bang) \after \pi_{[p,\one]} \\
& = &
[p,\one] \after \pi_{[p,\one]} \\
& = &
\tbox{\pi_{[p,\one]}}([p,\one]) \\
& = &
\one.
\end{array}$$

\noindent We get $(\pi_{p}+\idmap) \after \sigma^{-1} =
\pi_{[p,\one]} \colon \cmpr{X+Y\,}{\,[p,\one]} \rightarrow X+Y$ via a
similar pullback, with $X+Y$ in the upper left corner, since:
$$\begin{array}{rcl}
(\bang+\idmap) \after (\pi_{p}+\idmap) \after \sigma^{-1}
& = &
(\bang+\idmap) \after \sigma^{-1} \\
& = &
(\bang+\idmap) \after \pi_{[p,\one]} 
\\
(\idmap+\bang) \after (\pi_{p}+\idmap) \after \sigma^{-1}
& = &
(\pi_{p}+\idmap) \after (\idmap+\bang) \after \sigma^{-1} \\
& = &
(\pi_{p}+\idmap) \after f \\
& = &
(\idmap+ \bang) \after \pi_{[p,\one]}.
\end{array}$$

\noindent From this we can conclude $\sigma \after \sigma^{-1} =
\idmap$, using that $\pi_{[p,\one]}$ is monic in $\cat{B}$, and:
$$\begin{array}{rcccl}
\pi_{[p,\one]} \after \sigma \after \sigma^{-1}
& = &
(\pi_{p}+\idmap) \after \sigma^{-1}
& = &
\pi_{[p,\one]}.
\end{array}$$

\noindent We obtain $\sigma^{-1} \after \sigma = \idmap$ via uniqueness of
mediating maps in the above pullback defining $\sigma^{-1}$:
$$\begin{array}{rcl}
(\bang+\idmap) \after \sigma^{-1} \after \sigma
& = &
(\bang+\idmap) \after \pi_{[p,\one]} \after \sigma \\
& = &
(\bang+\idmap) \after (\pi_{p}+\idmap) \\
& = &
(\bang+\idmap) 
\\
(\idmap+\bang) \after \sigma^{-1} \after \sigma
& = &
f \after \sigma \\
& \smash{\stackrel{(*)}{=}} &
(\idmap+\bang).
\end{array}$$

\noindent The latter, marked equation uses that the map
$\klin{\pi_{p}}$ is monic in $\Par(\cat{B})$:
$$\begin{array}[b]{rcl}
\klin{\pi_{p}} \pafter (f \after \sigma)
& = &
(\pi_{p}+\idmap) \after f\after \sigma \\
& = &
(\idmap+\bang) \after \pi_{[p,\one]} \after \sigma \\
& = &
(\idmap+\bang) \after (\pi_{p}+\idmap) \\
& = &
\klin{\pi_{p}} \pafter (\idmap+\bang).
\end{array}\eqno{\QEDbox}$$
\end{proof}

In the remainder of this section we list several properties of
comprehension. Some of these properties are more naturally formulated
in the total case, and some in the partial case.  Therefore we split
these results up in two parts.

\begin{lemma}
\label{lem:totcmpr}
Let $\cat{B}$ be an effectus with comprehension, in total form.
\begin{enumerate}
\item \label{lem:totcmpr:equalpb} A (comprehension) projection map
  $\pi_{p} \colon \cmpr{X}{p} \rightarrowtail X$ can be characterised
  either as the equaliser in $\cat{B}$:
$$\vcenter{\xymatrix{
\cmpr{X}{p}\ar@{ >->}[r]^-{\pi_p} & X\ar@/^2ex/[r]^-{p}
   \ar@/_2ex/[r]_-{\one = \kappa_{1} \after \bang} & 1\rlap{$+1$}
}}
\qquad\mbox{or as pullback:}\qquad
\vcenter{\xymatrix{
\cmpr{X}{p}\ar@{ >->}[d]_{\pi_p}\ar[r]^-{\bang}\pullback & 
   1\ar@{ >->}[d]^{\kappa_1}
\\
X\ar[r]_-{p} & 1+1
}}$$

\item \label{lem:totcmpr:bot} One has $p^{\bot} \after \pi_{p} = 0$, and
  moreover the diagrams below are, respectively, an equaliser and a
  pullback in $\cat{B}$.
$$\vcenter{\xymatrix{
\cmpr{X}{p^\bot}\ar@{ >->}[r]^-{\pi_{p^\bot}} & 
   X\ar@/^2ex/[r]^-{p}
   \ar@/_2ex/[r]_-{\zero = \kappa_{2} \after\bang} & 1\rlap{$+1$}
}}
\hspace*{6em}
\vcenter{\xymatrix{
\cmpr{X}{p^\bot}\ar@{ >->}[d]_{\pi_{p^\bot}}
   \ar[r]^-{\bang}\pullback & 
   1\ar@{ >->}[d]^{\kappa_2}
\\
X\ar[r]_-{p} & 1+1
}}$$

\item \label{lem:totcmpr:one} A projection $\pi_{p} \colon \cmpr{X}{p}
  \rightarrowtail X$ is an isomorphism if and only if $p=\one$.

\item \label{lem:totcmpr:zero} The comprehension $\cmpr{X}{\zero}$ is
  initial in $\cat{B}$.

\item \label{lem:totcmpr:pb} Projection maps are closed under pullback
  in $\cat{B}$: for each predicate $p$ on $X$ and map $f\colon Y
  \to X$ we have a pullback in $\cat{B}$:
\begin{equation}
\label{diag:totcmpr:pb}
\vcenter{\xymatrix{
\cmpr{Y}{\tbox{f}(p)}\ar@{..>}[r]
   \ar@{ >->}[d]_{\pi_{\tbox{f}(p)}}\pullback & 
   \cmpr{X}{p}\ar@{ >->}[d]^{\pi_p}
\\
Y\ar[r]_-{f} & X
}}
\end{equation}

\item \label{lem:totcmpr:coproj} The pullback of a coprojection along
  an arbitrary map exists in $\cat{B}$, and is given as in:
$$\vcenter{\xymatrix@R-.5pc{
\cmpr{X}{p}\ar[r]\ar@{ >->}[d]_{\pi_p}\pullback & 
   Y\ar@{ >->}[d]^{\kappa_1}
\\
X\ar[r]_-{f} & Y+Z
}}\qquad\mbox{where}\qquad p = (\bang+\bang) \after f \colon X \rightarrow 1+1$$

\noindent In particular, the coprojection $Y \rightarrowtail Y+Z$ is
itself (isomorphic to) a comprehension map, namely to $Y \cong
\cmpr{Y+Z\,}{\,[\one,\zero]} \rightarrowtail Y+Z$.

\item \label{lem:totcmpr:pbzero} Projections of orthogonal predicates are
  disjoint: if $p \orthogonal q$, then the diagram below is a
  pullback.
$$\vcenter{\xymatrix{
0\pullback\ar[d]\ar[r] & \cmpr{X}{p}\ar@{ >->}[d]^{\pi_{p}} 
\\
\cmpr{X}{q}\ar@{ >->}[r]_-{\pi_{q}} & X
}}$$

\item \label{lem:totcmpr:sumproj} For predicates $p$ on $X$ and $q$ on
  $Y$ the sum map $\pi_{p}+\pi_{q} \colon \cmpr{X}{p} +
  \cmpr{Y}{q} \rightarrow X+Y$ is monic in $\cat{B}$.

\item \label{lem:totcmpr:sumiso} For predicates $p$ on $X$ and $q$ on
  $Y$ there is a (canonical) isomorphism as on the left below.  Using
  point~\eqref{lem:totcmpr:equalpb}, this implies that the square on
  the right is a pullback.
$$\vcenter{\xymatrix@C-3pc{
\cmpr{X}{p} + \cmpr{Y}{q}\ar@{=}[rr]^-{\textstyle\sim}
   \ar@{ >->}[dr]_{\pi_{p}+\pi_{q}} & & 
   \cmpr{X+Y\,}{\,[p,q]}\ar@{ >->}[dl]^{\pi_{[p,q]}}
\\
& X+Y &
}}
\hspace*{3em}
\vcenter{\xymatrix@C-.5pc{
\cmpr{X}{p} + \cmpr{Y}{q}\ar@{ >->}[d]_{\pi_{p}+\pi_{q}}
   \ar[r]\pullback & 
   1\ar@{ >->}[d]^{\kappa_{1}}
\\
X+Y\ar[r]_-{[p,q]} & 1+1
}}$$
\end{enumerate}
\end{lemma}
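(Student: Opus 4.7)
My plan is to work through the nine points roughly in order, since each builds on the previous ones, using the adjunction $\one \dashv \cmpr{-}{-}$ and the distributivity isomorphism~\eqref{diag:cmprdefdistr} as the two main tools. Throughout, I will exploit that left adjoints to faithful functors give monic counits, so each $\pi_p$ is monic.

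For point~\eqref{lem:totcmpr:equalpb} I first prove the pullback characterization. Given $f\colon Y \to X$ and $g\colon Y \to 1$ with $\kappa_{1} \after g = p \after f$, the predicate $\tbox{f}(p) = p \after f = \kappa_{1} \after \bang = \one$ equals truth, so the adjoint transpose of $f\colon (Y,\one) \to (X,p)$ in $\Pred(\cat{B})$ gives the unique mediating map $Y \to \cmpr{X}{p}$. The equaliser characterization follows because $p \after f = \one = \kappa_{1} \after \bang$ is exactly the equation $p \after f = \one \after f$. Point~\eqref{lem:totcmpr:bot} follows by noting that $p^{\bot} \after \pi_{p} = \tbox{\pi_p}(p^{\bot}) = \tbox{\pi_p}(p)^{\bot} = \one^{\bot} = \zero$ (Exercise~\ref{exc:subst}~\eqref{exc:subst:totalpres}), and that $\cmpr{X}{p^{\bot}}$ is characterized dually by the pullback along $\kappa_{2}$. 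For~\eqref{lem:totcmpr:one}, if $p=\one$ then $\idmap$ witnesses the universal property by the trivial correspondence $\one \dashv \mbox{forget}$ in~\eqref{diag:totpredzeroone}; conversely, if $\pi_{p}$ is an iso then $\one = p \after \pi_{p} \after \pi_{p}^{-1} = p$. For~\eqref{lem:totcmpr:zero}, the pullback characterization of $\cmpr{X}{\zero} = \cmpr{X}{\kappa_{2} \after \bang}$ exhibits it as a sub-object admitting a map into the pullback of $\kappa_{1}$ along $\kappa_{2}$, which is $0$ by Proposition~\ref{prop:effectuscoproj}; hence $\cmpr{X}{\zero} \cong 0$ by strictness.

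For~\eqref{lem:totcmpr:pb} I paste pullbacks: the square from~\eqref{lem:totcmpr:equalpb} applied to $\tbox{f}(p) = p \after f$ is the outer rectangle of
$$\xymatrix{
\cmpr{Y}{\tbox{f}(p)}\ar[r]\ar[d] & \cmpr{X}{p}\ar[r]\ar[d]^{\pi_p} & 1\ar[d]^{\kappa_1}
\\
Y\ar[r]_-{f} & X\ar[r]_-{p} & 1+1
}$$
whose right square is a pullback by~\eqref{lem:totcmpr:equalpb}, so the left square is too by the Pullback Lemma. Point~\eqref{lem:totcmpr:coproj} then follows since $[\one,\zero] = \bang+\bang \colon Y+Z \to 1+1$ fits into a pullback
$$\xymatrix{
Y\ar[r]^-{\bang}\ar[d]_{\kappa_1} & 1\ar[d]^{\kappa_1}
\\
Y+Z\ar[r]_-{\bang+\bang} & 1+1
}$$
by~\eqref{diag:effecutsderivedpb}, giving $\cmpr{Y+Z\,}{\,[\one,\zero]}\cong Y$ with projection $\kappa_{1}$; then apply~\eqref{lem:totcmpr:pb} along $f$ with $p = (\bang+\bang)\after f$. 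For~\eqref{lem:totcmpr:pbzero}, if $p\orthogonal q$ then $q\leq p^{\bot}$, so $\tbox{\pi_p}(q)\leq \tbox{\pi_p}(p^{\bot}) = \tbox{\pi_p}(p)^{\bot} = \zero$, and~\eqref{lem:totcmpr:pb} with~\eqref{lem:totcmpr:zero} gives the claimed pullback at $0$.

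The last two parts are closely linked; I will prove~\eqref{lem:totcmpr:sumiso} first and derive~\eqref{lem:totcmpr:sumproj} from it. The strategy is to factor $\cmpr{X+Y\,}{\,[p,q]}$ through the larger comprehension $\cmpr{X+Y\,}{\,[p,\one]}$ using the general composability fact that for $p\leq r$ one has $\cmpr{X}{p}\cong \cmpr{\cmpr{X}{r}\,}{\,\tbox{\pi_r}(p)}$ (immediate from the pullback characterization~\eqref{lem:totcmpr:equalpb} plus~\eqref{lem:totcmpr:pb}). Since $[p,q]\leq [p,\one]$ pointwise, this yields
$$
\cmpr{X+Y\,}{\,[p,q]}
\;\cong\;
\cmpr{\cmpr{X+Y\,}{\,[p,\one]}}{\tbox{\pi_{[p,\one]}}([p,q])}.
$$
By the distributivity isomorphism~\eqref{diag:cmprdefdistr}, $\cmpr{X+Y\,}{\,[p,\one]}\cong \cmpr{X}{p}+Y$ with $\pi_{[p,\one]}$ identified with $\pi_{p}+\idmap$, and then
$\tbox{\pi_{p}+\idmap}([p,q]) = [p\after\pi_p,\,q] = [\one,q]$.
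Applying~\eqref{diag:cmprdefdistr} again (with the symmetric side, using the coproduct symmetry) to $\cmpr{\cmpr{X}{p}+Y}{[\one,q]}\cong \cmpr{X}{p}+\cmpr{Y}{q}$ produces the desired iso, and tracing through the identifications shows the composite iso commutes with $\pi_{[p,q]}$ and $\pi_{p}+\pi_{q}$ as in the triangle. Finally,~\eqref{lem:totcmpr:sumproj} is immediate: $\pi_{p}+\pi_{q}$ equals (up to iso) the monic $\pi_{[p,q]}$ from the general fact that all comprehension projections are monic. The main obstacle I expect is bookkeeping in~\eqref{lem:totcmpr:sumiso}: verifying that the chain of isomorphisms through the distributivity maps~\eqref{diag:cmprdistriso} indeed yields the stated canonical map $\pi_{p}+\pi_{q}$ and not some other map differing by a symmetry.
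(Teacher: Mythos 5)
Your proposal is correct, and for points \eqref{lem:totcmpr:equalpb}--\eqref{lem:totcmpr:pbzero} it follows essentially the paper's line (your pullback-pasting argument for \eqref{lem:totcmpr:pb} is a minor stylistic variant of the paper's direct verification of the universal property; both rest on the pullback form of point~\eqref{lem:totcmpr:equalpb}). The genuine divergence is in the last two points, where you invert the paper's order of dependence. The paper proves \eqref{lem:totcmpr:sumproj} first, by a direct argument: it reduces joint monicity of $\pi_{p}+\pi_{q}$ to two separate cancellations via the effectus pullback $(\bang+\idmap,\idmap+\bang)$ and uses monicity of $\klin{\pi_{q}}$ in $\Par(\cat{B})$; it then proves \eqref{lem:totcmpr:sumiso} by explicitly constructing $\varphi=[\varphi_1,\varphi_2]$ and $\psi=\dtuple{\psi_1,\psi_2}$ (the latter via the partial pairing of Lemma~\ref{lem:pairing}, after a kernel computation), and it \emph{uses} point~\eqref{lem:totcmpr:sumproj} to see that $\varphi$ and $\psi$ are mutually inverse. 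You instead prove \eqref{lem:totcmpr:sumiso} first, by iterating comprehension: the composability fact $\cmpr{X}{p}\cong\cmpr{\cmpr{X}{r}}{\tbox{\pi_r}(p)}$ for $p\leq r$ (which does follow from points \eqref{lem:totcmpr:equalpb} and \eqref{lem:totcmpr:pb} together with monicity of $\pi_r$), applied to $[p,q]\leq[p,\one]$, plus two applications of the distributivity isomorphism~\eqref{diag:cmprdefdistr} (one through the coproduct symmetry). Since this route never invokes \eqref{lem:totcmpr:sumproj}, there is no circularity, and \eqref{lem:totcmpr:sumproj} then drops out for free because $\pi_p+\pi_q$ becomes $\pi_{[p,q]}$ precomposed with an isomorphism and all comprehension maps are monic. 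Your approach is more modular and avoids the partial-pairing machinery entirely; its cost is exactly the bookkeeping you flag, namely verifying that the chain of isomorphisms really carries $\pi_{[p,q]}$ to $\pi_p+\pi_q$ (in particular that $\tbox{(\pi_p+\idmap)}([p,q])=[\one,q]$ and that the swap does not introduce a spurious twist), which the paper's explicit construction sidesteps by checking the two triangle identities against the jointly monic projections directly.
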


\begin{proof}
We use the formulation of comprehension for effectuses in
Definition~\ref{def:cmpr}~\eqref{def:cmpr:tot}.
\begin{enumerate}
\item This is just a reformulation of the universal property of
  comprehension.

\item We have $p^{\bot} \after \pi_{p} = [\kappa_{2}, \kappa_{1}]
  \after p \after \pi_{p} = [\kappa_{2}, \kappa_{1}] \after \kappa_{1}
  \after \bang = \kappa_{2} \after \bang = \zero$. If $f\colon Y \to
  X$ satisfies $p \after f = \zero$, then $p^{\bot} \after f =
  [\kappa_{2}, \kappa_{1}] \after p \after f = [\kappa_{2},
    \kappa_{1}] \after \kappa_{2} \after \bang = \kappa_{1} \after
  \bang = \one$. Hence $f$ factors through $\cmpr{X}{p^\bot}$ making
  the diagrams in point~\eqref{lem:totcmpr:bot} an equaliser and a
  pullback.

\item We first show that $\pi_{\one} \colon \cmpr{X}{\one}
  \rightarrowtail X$ is an isomorphism. The identity map $\idmap
  \colon X \to X$ trivially satisfies $\one = \tbox{\idmap}(\one)$.
  Hence there is a unique map $f \colon X \to \cmpr{X}{\one}$ with
  $\pi_{\one} \after f = \idmap$. The equation $f \after \pi_{\one} =
  \idmap$ follows because the projections are monic: $\pi_{\one}
  \after (f\after \pi_{\one}) = \pi_{\one} = \pi_{\one} \after
  \idmap$.

Conversely, if $\pi_{p} \colon \cmpr{X}{p} \rightarrowtail X$ is an
isomorphism, then, using the pullback in
point~\eqref{lem:totcmpr:equalpb} we obtain $p=1$ by writing:
$$\xymatrix{
p = \Big(X\ar[r]^-{\pi_{p}^{-1}}_-{\cong} &
   \cmpr{X}{p}\ar[r]^-{\bang} & 
   1\ar@{ >->}[r]^-{\kappa_1} & 
   1+1\Big) = \Big(X\ar[r]^-{\one} & 1+1\Big)
}$$

\item The projection $\pi_{\zero} \colon \cmpr{X}{\zero}
  \rightarrowtail X$ gives rise to an equality of predicates
  $\one=\zero\colon \cmpr{X}{\zero} \to 1+1$ via:
$$\begin{array}{rcccccl}
\one
& = &
\tbox{\pi_{\zero}}(\zero)
& = &
\zero \after \pi_{\zero}
& = &
\zero.
\end{array}$$

\noindent Hence we have a situation:
$$\xymatrix{
\cmpr{X}{\zero}\ar@/^2ex/[drr]^-{\bang}
   \ar@/_2ex/[ddr]_-{\bang}\ar@{..>}[dr]
\\
& 0\ar@{ >->}[d]\ar@{ >->}[r]\pullback &
  1\ar@{ >->}[d]^{\kappa_{1}} 
\\
& 1\ar@{ >->}[r]_-{\kappa_2} & 1+1
}$$

\noindent Proposition~\ref{prop:effectuscoproj} says that the
rectangle is a pullback, and that $0$ is strict. This means that the
map $\cmpr{X}{\zero} \to 0$ is an isomorphism.

\item The dashed arrow in diagram~\eqref{diag:totcmpr:pb} exists since:
$$\begin{array}{rcccccl}
\tbox{(f \after \pi_{\tbox{f}(p)})}(p)
& = &
p \after f \after \pi_{\tbox{f}(p)}
& = &
\tbox{f}(p) \after \pi_{\tbox{f}(p)}
& = &
\one.
\end{array}$$

\noindent The square~\eqref{diag:totcmpr:pb} is a pullback, since if
$g\colon Z \to Y$ and $h\colon Z \to \cmpr{X}{p}$ satisfy $f \after
g = \pi_{p} \after h$, then $g$ factors through
$\pi_{\tbox{f}(p)}$ since:
$$\begin{array}{rcccccccl}
\tbox{g}\big(\tbox{f}(p)\big)
& = &
p \after f \after g
& = &
p \after \pi_{p} \after h
& = &
\one \after h
& = &
\one.
\end{array}$$

\item First we have to check that the rectangle in
  point~\eqref{lem:totcmpr:coproj} commutes. It arrises in a situation:
$$\xymatrix@R-.5pc{
\cmpr{X}{p}\ar@{..>}[r]\ar@{ >->}[d]_{\pi_p} & 
   Y\ar@{ >->}[d]^{\kappa_1}\ar[r]^-{\bang}\pullback & 
   1\ar@{ >->}[d]_{\kappa_{1}}
\\
X\ar[r]^-{f}\ar@/_3ex/[rr]_-{p} & 
   Y+Z\ar[r]^-{\bang+\bang} & 1+1
}$$

\noindent The outer rectangle is a pullback by
point~\eqref{lem:totcmpr:equalpb}. Hence the rectangle on the left is
a pullback, by the Pullback Lemma.

\item We use that diagram~\eqref{diag:totcmpr:pb} is a pullback, with
  $f = \pi_{q}$. It yields that
  $\cmpr{\cmpr{X}{q}}{\tbox{\pi_{q}}(p)}$ forms a pullback. But since
  $p \orthogonal q$, and so $p \leq q^{\bot}$, we have
  $\tbox{\pi_{q}}(p) \leq \tbox{\pi_{q}}(q^{\bot}) = \zero$. This
  gives $\cmpr{\cmpr{X}{q}}{\pi_{q}^{*}(p)} \cong 0$ by
  point~\eqref{lem:totcmpr:zero}. Hence the rectangle in
  point~\eqref{lem:totcmpr:pbzero} is a pullback.

\item Let $f,g\colon Y \to \cmpr{X}{p} + \cmpr{X}{q}$ satisfy
  $(\pi_{p}+\pi_{q}) \after f = (\pi_{p}+\pi_{q}) \after
  g$. We must show that $f = g$. The next diagram is a pullback in
  $\cat{B}$, see Definition~\ref{def:effectus}.
$$\xymatrix{
\cmpr{X}{p} + \cmpr{X}{q}\pullback\ar[r]^-{\bang+\idmap}
   \ar[d]_{\idmap+\bang} &
   1 + \cmpr{X}{q}\ar[d]^{\idmap+\bang}
\\
\cmpr{X}{p} + 1\ar[r]_-{\bang+\idmap} & 1+1
}$$

\noindent Hence we are done if we can prove $f_{1} = g_{1}$ and $f_{2}
= g_{2}$ where:
$$\left\{\begin{array}{rcl}
f_{1} & = & (\idmap+\bang) \after f \\
g_{1} & = & (\idmap+\bang) \after g \\
\end{array}\right.
\qquad\qquad
\left\{\begin{array}{rcl}
f_{2} & = & (\bang+\idmap) \after f \\
g_{2} & = & (\bang+\idmap) \after g \\
\end{array}\right.$$

\noindent We recall from Theorem~\ref{thm:totpartcmpr} that the
projection maps $\klin{\pi_{q}} = \kappa_{1} \after \pi_{q}$ are monic
in $\Par(\cat{B})$. We then get:
$$\begin{array}{rcl}
\klin{\pi_{p}} \pafter f_{1}
& = &
(\pi_{p}+\idmap) \after (\idmap+\bang) \after f \\
& = &
(\idmap+\bang) \after (\pi_{p}+\pi_{q}) \after f \\
& = &
(\idmap+\bang) \after (\pi_{p}+\pi_{q}) \after g \\
& = &
(\pi_{p}+\idmap) \after (\idmap+\bang) \after g \\
& = &
\klin{\pi_{p}} \pafter g_{1}.
\end{array}$$

\noindent Similarly one gets $f_{2} = g_{2}$.

\item For the isomorphism $\cmpr{X}{p} + \cmpr{Y}{q} \cong
  \cmpr{X+Y\,}{\,[p,q]}$ we define total maps in both directions. First,
  consider the map $\kappa_{1} \after \pi_{p} \colon
  \cmpr{X}{p} \rightarrow X+Y$, satisfying:
$$\begin{array}{rcccccl}
\tbox{\big(\kappa_{1} \after \pi_{p}\big)}([p,q])
& = &
[p,q] \after \kappa_{1} \after \pi_{p}
& = &
p \after \pi_{p}
& = &
\one.
\end{array}$$

\noindent This yields a unique map $\varphi_{1} \colon \cmpr{X}{p}
\to \cmpr{X+Y\,}{\,[p,q]}$ with $\pi_{[p,q]} \after \varphi_{1} =
\kappa_{1} \after \pi_{p}$. In a similar way we get $\varphi_{2}
\colon \cmpr{Y}{q} \to \cmpr{X+Y\,}{\,[p,q]}$ with $\pi_{[p,q]}
\after \varphi_{2} = \kappa_{2} \after \pi_{q}$. The cotuple $\varphi
= [\varphi_{1}, \varphi_{2}] \colon \cmpr{X}{p} +\cmpr{Y}{q}
\to \cmpr{X+Y\,}{\,[p,q]}$ gives a map in one direction.
It satisfies, by construction:
$$\begin{array}{rcccccl}
\pi_{[p,q]} \after \varphi
& = &
[\pi_{[p,q]} \after \varphi_{1}, \pi_{[p,q]} \after \varphi_{2}]
& = &
[\kappa_{1} \after \pi_{p}, \kappa_{2} \after \pi_{q}]
& = &
\pi_{p} + \pi_{q}.
\end{array}$$

For the other direction we consider the map $\rhd_{1} \after \pi_{[p,q]}
\colon \cmpr{X+Y\,}{\,[p,q]} \to X+1$. It satisfies:
$$\begin{array}{rcl}
\pbox{\big(\rhd_{1} \after \pi_{[p,q]}\big)}(p)
\hspace*{\arraycolsep}=\hspace*{\arraycolsep}
[p, \kappa_{1}] \after (\idmap+\bang) \after \pi_{[p,q]} 
& = &
[p,\one] \after \pi_{[p,q]} \\
& = &
\tbox{\pi_{[p,q]}}([p,\one]) \\
& \geq &
\tbox{\pi_{[p,q]}}([p,q]) \\
& = &
\one.
\end{array}$$

\noindent Hence by (the partial version of) comprehension there is a
uniqe map $\psi_{1} \colon \cmpr{X+Y\,}{\,[p,q]} \to \cmpr{X}{p}+1$ with
$(\pi_{p}+\idmap) \after \psi_{1} = \rhd_{1} \after
\pi_{[p,q]}$. In a similar way there is a unique $\psi_{2} \colon
\cmpr{X+Y\,}{\,[p,q]} \to \cmpr{Y}{q}+1$ with $(\pi_{q}+\idmap)
\after \psi_{2} = \rhd_{2} \after \pi_{[p,q]}$. We then
obtain the map $\psi = \dtuple{\psi_{1}, \psi_{2}} \colon
\cmpr{X+Y\,}{\,[p,q]} \to \cmpr{X}{p} + \cmpr{Y}{q}$, like in
Lemma~\ref{lem:pairing}. This pairing exists since:
$$\begin{array}{rcl}
\ker(\psi_{1})
& = &
[\kappa_{2}, \kappa_{1}] \after (\bang+\idmap) \after \psi_{1} \\
& = &
[\kappa_{2}, \kappa_{1}] \after (\bang+\idmap) \after 
   (\pi_{p}+\idmap) \after \psi_{1} \\
& = &
\rhd_{2} \after \rhd_{1} \after \pi_{[p,q]} \\
& = &
(\bang+\idmap) \after \rhd_{2} \after \pi_{[p,q]} \\
& = &
(\bang+\idmap) \after (\pi_{q}+\idmap) \after \psi_{2} \\
& = &
(\bang+\idmap) \after \psi_{2} \\
& = &
\kerbot(\psi_{2}).
\end{array}$$

\auxproof{
We rely on:
$$\begin{array}{rcl}
\pbox{\big(\rhd_{2} \after \pi_{[p,q]}\big)}(q)
& = &
[q, \kappa_{1}] \after [\kappa_{2}\after \bang, \kappa_{1}] \after \pi_{[p,q]} \\
& = &
[\kappa_{1}\after \bang, q] \after \pi_{[p,q]} \\
& = &
[1, q] \after \pi_{[p,q]} \\
& = &
\tbox{\pi_{[p,q]}}([1,q]) \\
& \geq &
\tbox{\pi_{[p,q]}}([p,q]) \\
& = &
1
\\
\rhd_{2} \after \rhd_{1} 
& = &
[\kappa_{2}\after \bang, \kappa_{1}] \after [\kappa_{1}, \kappa_{2} \after \bang] \\
& = &
[\kappa_{2}\after \bang, \kappa_{1} \after \bang] \\
& = &
[\kappa_{1} \after \bang, \kappa_{2}] \after [\kappa_{2}\after \bang, \kappa_{1}] \\
& = &
(\bang+\idmap) \after \rhd_{2}
\end{array}$$
}

\noindent By~\eqref{eqn:dtuplenat} and the uniqueness of
$\dtuple{-,-}$ we obtain:
$$\begin{array}{rcl}
(\pi_{p}+\pi_{q}) \after \psi
& = &
\dtuple{\klin{\pi_{p}}\pafter \psi_{1}, \klin{\pi_{q}} \pafter \psi_{2}} \\
& = &
\dtuple{\rhd_{1} \after \pi_{[p,q]}, \rhd_{2} \after \pi_{[p,q]}} 
\hspace*{\arraycolsep}=\hspace*{\arraycolsep}
\pi_{[p,q]}.
\end{array}$$

\noindent Our final aim is to show that $\varphi$ and $\psi$ are each
others inverses. This is easy since both $\pi_{[p,q]}$ and
$\pi_{p}+\pi_{q}$ are monic --- the latter by
point~\eqref{lem:totcmpr:sumproj}. Hence we are done with:
$$\begin{array}[b]{rcl}
\pi_{[p,q]} \after \varphi \after \psi
& = &
(\pi_{p}+\pi_{q}) \after \psi
\hspace*{\arraycolsep}=\hspace*{\arraycolsep}
\pi_{[p,q]}
\\
(\pi_{p}+\pi_{q}) \after \psi \after \varphi
& = &
\pi_{[p,q]}  \after \varphi
\hspace*{\arraycolsep}=\hspace*{\arraycolsep}
(\pi_{p}+\pi_{q}).
\end{array}\eqno{\QEDbox}$$

\auxproof{
The equation $(\pi_{p}+\pi_{q}) \after \psi = \pi_{[p,q]}$
follows by uniqueness of mediating maps in the pullback diagram:
$$\xymatrix@R-.5pc{
\cmpr{X+Y\,}{\,[p,q]}
   \ar@/^3ex/[drr]^-{(\bang+\idmap) \after \pi_{[p,q]}}
   \ar@/_3ex/[ddr]_-{(\idmap+\bang) \after \pi_{[p,q]}}
   \ar@{..>}[dr]
\\
& X+Y\ar[r]^(0.5){\bang+\idmap}
   \ar[d]_{\idmap+\bang}\pullback & 
   1+Y\ar[d]^{\idmap+\bang}
\\
& X + 1\ar[r]_-{\bang+\idmap} & 1+1
}$$

\noindent The map $(\pi_{p}+\pi_{q}) \after \psi$ is mediating by the
way things are defined:
$$\begin{array}{rcl}
(\idmap+\bang) \after (\pi_{p}+\pi_{q}) \after \psi
& = &
(\pi_{p}+\idmap) \after (\idmap+\bang) \after \psi \\
& = &
(\pi_{p}+\idmap) \after \psi_{1} \\
& = &
(\idmap+\bang) \after \pi_{[p,q]} 
\\
(\bang+\idmap) \after (\pi_{p}+\pi_{q}) \after \psi
& = &
(\idmap+\pi_{q}) \after (\bang+\idmap) \after \psi \\
& = &
(\idmap+\pi_{q}) \after \psi_{2} \\
& = &
(\bang+\idmap) \after \pi_{[p,q]}.
\end{array}$$
}
\end{enumerate}
\end{proof}

We turn to comprehension properties in the partial case. Of particular
interest is Diagram~\eqref{diag:partcmpr:kermap} below, which shows
that in presence of comprehension, kernel predicates give rise to
kernel maps.

\begin{lemma}
\label{lem:partcmpr}
Let $(\cat{C},I)$ be an effectus with comprehension, in partial form.
\begin{enumerate}
\item \label{lem:partcmpr:pred} For each predicate $p$ one has
  $p \after \pi_{p} = \one$ and $p^{\bot} \after \pi_{p} = \zero$.

\item \label{lem:partcmpr:kermap} The category $\cat{C}$ has `total'
  kernel maps:\index{S}{kernel!-- map!-- in an effectus} for each map
  $f\colon X \to Y$ the following diagram is an equaliser:
\begin{equation}
\label{diag:partcmpr:kermap}
\vcenter{\xymatrix@C+1pc{
\cmpr{X}{\ker(f)}\ar@{ >->}[r]^-{\pi_{\ker(f)}} &
   X\ar@/^1.5ex/[r]^-{f}\ar@/_1.5ex/[r]_-{\zero} & Y
}}
\end{equation}

\noindent where the comprehension map $\pi_{\ker(f)}$ is total, by
definition.

\item \label{lem:partcmpr:kermapbot} For each map $f$, the composite
  $f \after \pi_{\kerbot(f)}$ is total.

\item \label{lem:partcmpr:pb} Comprehension maps are closed under
  pullback in $\cat{C}$: for each map $f\colon X \rightarrow Y$ and
  predicate $p$ on $Y$ there is a pullback:
\begin{equation}
\label{diag:partcmpr:pb}
\vcenter{\xymatrix{
\cmpr{Y}{\pbox{f}(p)}\ar@{..>}[r]\ar@{ >->}[d]_{\pi_{\pbox{f}(p)}}\pullback & 
   \cmpr{X}{p}\ar@{ >->}[d]^{\pi_{p}}
\\
X\ar[r]_-{f} & Y
}}
\end{equation}

\item \label{lem:partcmpr:imgfact} In presence of images, every map
  $f\colon X \to Y$ factors through the comprehension map of its
  image:
$$\xymatrix@R-.5pc{
X\ar[rr]^-{f}\ar@{..>}@/_1ex/[dr] & & Y
\\
& \cmpr{Y}{\img(f)}\ar@{ >->}@/_1ex/[ur]_-{\pi_{\img(f)}}
}$$


\end{enumerate}
\end{lemma}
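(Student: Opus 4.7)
The common strategy is to exploit the partial comprehension adjunction: a map $h\colon Z \to X$ factors uniquely through $\pi_{p}\colon \cmpr{X}{p} \to X$ iff $\one \leq \pbox{h}(p)$. Combined with Exercise~\ref{exc:subst}, Lemmas~\ref{lem:zero} and~\ref{lem:ker}, the definitions of $\ker$, $\kerbot$, $\img$, and the fact (Theorem~\ref{thm:totpartcmpr}) that each $\pi_{p}$ is total and monic in $\cat{C}$, every claim reduces to a short calculation.

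For~\eqref{lem:partcmpr:pred}, the counit $\pi_{p}\colon(\cmpr{X}{p},\one)\to(X,p)$ lives in $\PPred(\cat{C})$, which, since $\pi_{p}$ is total, means $\one \leq p\after\pi_{p}$, and hence $p\after\pi_{p}=\one$ in the effect algebra $\Pred(\cmpr{X}{p})$. Precomposing the relation $p\ovee p^{\bot}=\one$ with the total map $\pi_{p}$ (using that precomposition with a total map is a PCM-homomorphism, Lemma~\ref{lem:FinPACwE}~\eqref{lem:FinPACwE:pres}) then forces $p^{\bot}\after\pi_{p}=\zero$. Part~\eqref{lem:partcmpr:kermap}: commutation $f\after\pi_{\ker(f)}=\zero$ follows because $\one\after f\after\pi_{\ker(f)} = \kerbot(f)\after\pi_{\ker(f)} = \ker(f)^{\bot}\after\pi_{\ker(f)} = \zero$ by~\eqref{lem:partcmpr:pred}, and Lemma~\ref{lem:zero} then kills $f\after\pi_{\ker(f)}$ itself. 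For the universal property, given $g\colon Z\to X$ with $f\after g=\zero$, I compute
\[
\pbox{g}(\ker(f)) \;=\; \pbox{g}(\pbox{f}(\zero)) \;=\; \pbox{(f\after g)}(\zero) \;=\; \pbox{\zero}(\zero) \;=\; \one,
\]
the final equality being $\pbox{\zero}(p) = [p,\kappa_{1}]\after\kappa_{2}\after\bang = \one$ directly from~\eqref{diag:subst}; the adjunction then yields the unique mediating map. Part~\eqref{lem:partcmpr:kermapbot} is immediate: $\one\after f\after\pi_{\kerbot(f)} = \kerbot(f)\after\pi_{\kerbot(f)} = \one$ by~\eqref{lem:partcmpr:pred}.

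For the pullback~\eqref{lem:partcmpr:pb}, I construct the dashed arrow by applying the adjunction to $f\after\pi_{\pbox{f}(p)}\colon\cmpr{X}{\pbox{f}(p)}\to Y$, using
\[
\pbox{(f\after\pi_{\pbox{f}(p)})}(p) \;=\; \pbox{\pi_{\pbox{f}(p)}}(\pbox{f}(p)) \;=\; \pbox{f}(p)\after\pi_{\pbox{f}(p)} \;=\; \one
\]
by~\eqref{lem:partcmpr:pred}; this delivers a (total) $h$ with $\pi_{p}\after h = f\after\pi_{\pbox{f}(p)}$, i.e.~commutation of the square. Given a competing cone $(g\colon Z\to\cmpr{Y}{p},\,k\colon Z\to X)$ with $\pi_{p}\after g = f\after k$, compute
\[
\pbox{k}(\pbox{f}(p)) \;=\; \pbox{(f\after k)}(p) \;=\; \pbox{(\pi_{p}\after g)}(p) \;=\; \pbox{g}(\pbox{\pi_{p}}(p)) \;=\; \pbox{g}(\one) \;=\; \one,
\]
so the adjunction produces a unique $\ell$ with $\pi_{\pbox{f}(p)}\after\ell = k$; the remaining equation $h\after\ell = g$ follows by monicity of $\pi_{p}$ (Theorem~\ref{thm:totpartcmpr}), since $\pi_{p}\after h\after\ell = f\after\pi_{\pbox{f}(p)}\after\ell = f\after k = \pi_{p}\after g$. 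Finally~\eqref{lem:partcmpr:imgfact} is a one-line application of the adjunction: by definition $\pbox{f}(\img(f)) = \one$, so $f$ is a morphism $(X,\one)\to(Y,\img(f))$ in $\PPred(\cat{C})$ and factors uniquely through $\pi_{\img(f)}$.

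The only real obstacle is bookkeeping with $\pbox{(-)}$ and the distinction between total and partial composition, especially making sure that comprehension maps are treated as total so that $\pbox{\pi_{p}} = \tbox{\pi_{p}} = (-)\after\pi_{p}$ (as noted in Discussion~\ref{dis:partialtotal}). No step requires machinery beyond the defining adjunction and the calculus of kernels and images from Section~\ref{sec:kerimg}.
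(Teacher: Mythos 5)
Your proof is correct and follows essentially the same route as the paper's: every part is reduced to the comprehension adjunction via a $\pbox{(-)}$-computation, with part~(1) feeding parts~(2)--(4) exactly as in the paper. The only differences are cosmetic --- you derive $p\after\pi_{p}=\one$ before $p^{\bot}\after\pi_{p}=\zero$ rather than the reverse, you unfold $\pbox{\zero}(\zero)=\one$ directly instead of citing Lemma~\ref{lem:ker}, and you explicitly verify the second mediating triangle in the pullback via monicity of $\pi_{p}$, which the paper leaves implicit.
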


\begin{proof}
We use the formulation of comprehension in the partial case from
Definition~\ref{def:cmpr}~\eqref{def:cmpr:part}.
\begin{enumerate}
\item By definition the counit is a map $\pi_{p} \colon (\cmpr{X}{p},
  \one) \rightarrow (X,p)$ in $\PPred(\cat{C})$. This means $\one \leq
  \pbox{\pi_{p}}(p) = (p^{\bot} \after \pi_{p})^{\bot}$. Hence
  $p^{\bot} \after \pi_{p} = \zero$.

Since $\pi_{p}$ is total, pre-composition $(-) \after \pi_{p}$ is a
map of effect algebras, see
Lemma~\ref{lem:FinPACwE}~\eqref{lem:FinPACwE:pres}. Therefore:
$$\begin{array}{rcccccccl}
p \after \pi_{p}
& = &
p^{\bot\bot} \after \pi_{p}
& = &
\big(p^{\bot} \after \pi_{p}\big)^{\bot}
& = &
\zero^{\bot}
& = &
\one.
\end{array}$$

\item For the kernel predicate $\ker(f) = (\one \after f)^{\bot}
  \colon X \rightarrow I$ of a map $f\colon X \rightarrow Y$ consider
  the comprehension map $\pi_{\ker(f)} \colon \cmpr{X}{\ker(f)}
  \rightarrowtail X$. It satisfies, by the previous point:
$$\begin{array}{rcccl}
\one \after f \after \pi_{\ker(f)}
& = &
\kerbot(f) \after \pi_{\ker(f)}
& = &
\zero.
\end{array}$$

\noindent Hence $f \after \pi_{\ker(f)} = \zero$ by
Definition~\ref{def:FinPACwE}~\eqref{def:FinPACwE:zero}.

Next, let $g\colon Z \rightarrow X$ satisfy $f \after g = \zero \after g =
\zero$. Then:
$$\begin{array}{rcll}
\pbox{g}\big(\ker(f)\big)
& = &
\ker(f \after g) \qquad & \mbox{by Lemma~\ref{lem:ker}~\eqref{lem:ker:comp}} \\
& = &
\ker(\zero) \\
& = &
\one & \mbox{by Lemma~\ref{lem:ker}~\eqref{lem:ker:zero}.}
\end{array}$$

\noindent Hence there is a necessarily unique map $\overline{g} \colon
Z \rightarrow \cmpr{X}{\ker(f)}$ with $\pi_{\ker(f)} \after
\overline{g} = g$.

\item For each $f\colon X \rightarrow Y$ we have a total map
$f \after \pi_{\kerbot(f)}$ since by point~\eqref{lem:partcmpr:pred}:
$$\begin{array}{rcccl}
\one \after f \after \pi_{\kerbot(f)}
& = &
\kerbot(f) \after \pi_{\kerbot(f)}
& = &
\one.
\end{array}$$

\item The dashed arrow in diagram~\eqref{diag:partcmpr:pb} exists
  because by Exercise~\ref{exc:subst}~\eqref{exc:subst:partialfun}:
$$\begin{array}{rcccl}
\pbox{(f \after \pi_{\pbox{f}(p)})}(p)
& = &
\pbox{\pi_{\pbox{f}(p)}}\big(\pbox{f}(p)\big)
& = &
\one.
\end{array}$$

\noindent Next, assume we have maps $g\colon Z \rightarrow X$ and
$h\colon Z \rightarrow \cmpr{X}{p}$ with $f \after g = \pi_{p} \after
h$. The map $g$ then factors through $\pi_{\pbox{f}(p)}$ since:
$$\begin{array}{rcccccccccl}
\pbox{g}\big(\pbox{f}(p)\big)
& = &
\pbox{(g \after f)}(p)
& = &
\pbox{(\pi_{p} \after h)}(p)
& = &
\pbox{h}\big(\pbox{\pi_{p}}(p)\big)
& = &
\pbox{h}(1)
& = &
\one.
\end{array}$$

\item By definition $\pbox{f}(\img(f)) = \one$, so that $f$ factors
  through $\pi_{\img(f)}$. \QED
\end{enumerate}
\end{proof}

\auxproof{
Lemma~\ref{lem:totcmpr}~\eqref{lem:totcmpr:pb} and
Lemma~\ref{lem:partcmpr}~\eqref{lem:partcmpr:pb} imply that for an
effectuses $\cat{B}$ in total form and $\cat{C}$ in partial form there
are `fibred' functors in commuting triangles:
$$\xymatrix@C-1pc@R-.5pc{
\Pred(\cat{B})\ar[dr]\ar[rr]^-{\pi_{(-)}} & & \Sub(\cat{B})\ar[dl]
& &
\PPred(\cat{C})\ar[dr]\ar[rr]^-{\pi_{(-)}} & & \Sub(\cat{C})\ar[dl]
\\
& \cat{B} &
& &
& \cat{C} &
}$$

\noindent where, in general, $\Sub(\cat{A})$ is the category with
monics in $\cat{A}$ as objects, and commuting squares as morphisms.
When restricted to \emph{sharp} predicates, these functors $\pi_{(-)}$
are often full and faithful. This will be used for instance in
Proposition~\ref{prop:cmprquotimg}.
}

\section{Quotients}\label{sec:quot}

Comprehension sends a predicate $p$ on $X$ to the object $\cmpr{X}{p}$
of elements of $X$ for which $p$ holds. Quotients form a dual
operation: it sends a predicate $p$ on $X$ to the object $X/p$ in
which elements for which $p$ holds are identified with $0$.

We briefly describe comprehension and quotients for vector spaces and
Hilbert spaces (like in~\cite{ChoJWW15}). Let
$\LSub(\Vect)$\index{D}{$\Vect$, category of vector
  spaces}\index{D}{$\LSub(\Vect)$, category of linear subspaces of
  vector spaces} and $\CLSub(\Hilb)$\index{D}{$\Hilb$, category of
  Hilbert spaces}\index{D}{$\CLSub(\Hilb)$, category of closed linear
  subspaces of Hilbert spaces} be the categories of linear
(resp.\ closed linear) subspaces $S \subseteq V$, where $V$ is a
vector (resp.\ Hilbert) space. Morphisms are maps between the
underlying spaces that restrict appropriately. The obvious forgetful
functors $\LSub(\Vect) \rightarrow \Vect$ and $\CLSub(\Hilb)
\rightarrow \Hilb$ have two adjoints on each side:
$$\xymatrix{
\LSub(\Vect)\ar[d]_{\dashv\;}^{\;\dashv}
   \ar@/_8ex/[d]^{\;\dashv}_{\begin{array}{c}\scriptstyle (S\subseteq V) \\[-.7em]
                        \scriptstyle\mapsto V/S\end{array}}
   \ar@/^8ex/[d]_{\dashv\;}^{\begin{array}{c}\scriptstyle (S\subseteq V) \\[-.7em]
                        \scriptstyle\mapsto S\end{array}}
& \hspace*{4em} &
\CLSub(\Hilb)\ar[d]_{\dashv\;}^{\;\dashv}
   \ar@/_8ex/[d]^{\;\dashv}_{\begin{array}{c}\scriptstyle (S\subseteq V) \\[-.7em]
                        \scriptstyle\mapsto S^{\bot}\end{array}}
   \ar@/^8ex/[d]_{\dashv\;}^{\begin{array}{c}\scriptstyle (S\subseteq V) \\[-.7em]
                        \scriptstyle\mapsto S\end{array}}
\\
\Vect\ar@/^4ex/[u]^(0.4){\zero\!}\ar@/_4ex/[u]_(0.4){\!\one}
& &
\Hilb\ar@/^4ex/[u]^(0.4){\zero\!}\ar@/_4ex/[u]_(0.4){\!\one}
}$$

\noindent We see that in both cases comprehension is right adjoint to
the truth functor $\zero$, and is given by mapping a subspace
$S\subseteq V$ to $S$ itself. This is like for $\Sets$, see
Example~\ref{ex:cmpr}~\eqref{ex:cmpr:Sets}.

Categorically, quotients have dual description to comprehension: not
as right adjoint to truth, but as left adjoint to falsity $\zero$. We
briefly describe the associated dual correspondences, on the left
below for vector spaces, and on the right for Hilbert spaces.
$$\begin{prooftree}
\xymatrix{(S\subseteq V)\ar[r]^-{f} & (\{0\} \subseteq W)}
\Justifies
\xymatrix{V/S\ar[r]_-{g} & W}
\end{prooftree}
\hspace*{5em}
\begin{prooftree}
\xymatrix{(S\subseteq V)\ar[r]^-{f} & (\{0\} \subseteq W)}
\Justifies
\xymatrix{S^{\bot}\ar[r]_-{g} & W}
\end{prooftree}$$

\noindent The correspondence on the left says that a linear map
$f\colon V \rightarrow W$ with $S \subseteq f^{-1}(\{0\}) = \ker(f)$
corresponds to a map $g\colon V/S \rightarrow W$. Indeed, this $g$ is
given by $g(v+S) = f(v)$. This is well-defined, because $f(v) = f(v')$
if $v \sim v'$, since the latter means $v - v' \in S$, and thus
$f(v)-f(v') = f(v-v') = 0$. This is the standard universal property of
quotients in algebra.

The situation is more interesting for Hilbert spaces. A map $f \colon
V \rightarrow W$ with $S \subseteq \ker(f)$ is completely determined
by what it does on the orthosupplement $V^{\bot}$, since each vector
$v\in V$ can be written as sum $v = x+y$ with $x\in S$ and $y\in
S^{\bot}$. Hence $f(v) = f(y)$. This explains why $f$ corresponds
uniquely to a map $g\colon S^{\bot} \rightarrow W$, namely its
restriction.

Below we shall see more examples where quotients are given by
complements of comprehension. But first we have to say what it means
for an effectus to have quotients. Recall that comprehension can be
defined both for effectuses in total and partial form in an equivalent
manner, see Theorem~\ref{thm:totpartcmpr}. In contrast, quotients only
makes sense in the partial case.

\begin{definition}
\label{def:quot}
We say that an effectus in partial form $(\cat{C}, I)$ has
\emph{quotients}\index{S}{quotients}\index{S}{effectus!-- with
  quotients} if its zero functor $\zero \colon \cat{C} \rightarrow
\PPred(\cat{C})$ has a left adjoint.

We say that an effectus in total form $\cat{B}$ has quotients if its
category $\Par(\cat{B})$ of partial maps has quotients.
\end{definition}

For an effectus in partial form $\cat{C}$ with both comprehension and
quotients we have a `quotient-comprehension chain', like for vector
and Hilbert spaces:
$$\xymatrix{
\PPred(\cat{C})\ar[d]_{\dashv\;}^{\;\dashv}
   \ar@/_8ex/[d]^{\;\dashv}_{\begin{array}{c}\scriptstyle\mathrm{Quotient} \\[-.7em]
                        \scriptstyle(X,p) \mapsto X/p\end{array}}
   \ar@/^8ex/[d]_{\dashv\;}^{\begin{array}{c}\scriptstyle\mathrm{Comprehension} \\[-.7em]
                        \scriptstyle(X,p) \mapsto \cmpr{X}{p}\end{array}}
\\
\cat{C}\ar@/^4ex/[u]^(0.4){\zero\!}\ar@/_4ex/[u]_(0.4){\!\one}
}$$

\noindent We shall study such combinations in
Section~\ref{sec:cmprquot}.

The unit of the quotient adjunction is a map in
$\PPred(\cat{C})$ which we write as:
$$\xymatrix{
(X,p)\ar[r]^-{\xi_p} & (X/p, \zero)
}$$

\noindent Hence by definition it is a map $\xi_{p} \colon X
\rightarrow X/p$\index{D}{$X/p$, quotient of predicate $p$ on
  $X$}\index{D}{$\xi_p$, quotient map for predicate $p$} in $\cat{C}$
satisfying $p \leq \pbox{f}(\zero) = \ker(f)$.

\begin{example}
\label{ex:quot}
Each of our four running examples of effectuses has quotients.
\begin{enumerate}
\item \label{ex:quot:Sets} For the effectus $\Sets$, the associated
  category $\Par(\Sets)$ is the category of sets and partial
  functions. For a predicate $P\subseteq X$ we define the quotient set
  $X/P$ as comprehension of the complement:
\begin{equation}
\label{eqn:quot:Sets}
\begin{array}{rcccl}
X/P
& \defeq &
\cmpr{X}{\neg P}
& = &
\setin{x}{X}{x\not\in P},
\end{array}
\end{equation}

\noindent analogously to the Hilbert space example described
above. This gives a left adjoint to the zero functor $\zero \colon
\Par(\Sets) \rightarrow \PPred(\Par(\Sets))$, since there are
bijective correspondences:
$$\begin{prooftree}
\xymatrix{ (P\subseteq X) \ar[r]^-{f} & \zero(Y) 
   \rlap{$\;=(\emptyset \subseteq Y)$}}
\Justifies
\xymatrix{ \cmpr{X}{\neg P} \ar[r]_-{g} & Y }
\end{prooftree}$$

\noindent Indeed, given $f\colon (P\subseteq X) \rightarrow \zero(Y)$
in $\PPred(\Par(\Sets))$, then $f \colon X \pto Y$ is a partial
function satisfying $P \subseteq \pbox{f}(\zero) = \set{x}{f(x) = *}$,
see~\eqref{eqn:partsubstSets}.  Thus $f$ is determined by its outcome
on the complement $\neg P$, so that we can simply define a
corresponding partial function $\overline{f} \colon \cmpr{X}{\neg P}
\pto Y$ as $\overline{f}(x) = f(x)$. And, in the other direction, for
$g \colon \cmpr{X}{\neg P} \pto Y$ we define the extension
$\overline{g} \colon X \pto Y$ as:
$$\begin{array}{rcl}
\overline{g}(x)
& = &
\left\{\begin{array}{ll}
* \;\mbox{\textit{i.e.}~undefined} \quad & \mbox{if } x\in P \\
g(x) & \mbox{if } x \in \neg P
\end{array}\right.
\end{array}$$

\noindent By construction, this $\overline{g}$ is a map $(P\subseteq
X) \rightarrow (\emptyset \subseteq Y)$ in $\PPred(\Par(\Sets))$ since:
$$\begin{array}{rcccccl}
\ker(\overline{g})
& = &
\pbox{\overline{g}}(\zero)
& = &
\set{x}{g(x) = *}
& \supseteq &
P.
\end{array}$$

\noindent It is easy to see $\smash{\overline{\overline{f}} = f}$ and
$\smash{\overline{\overline{g}} = g}$. The unit $\xi_{P} \colon X
\rightarrow X/P$ is the partial function with $\xi_{P}(x) = x$ if
$x\in P$ and $\xi_{P}(x)$ undefined otherwise.

\auxproof{
$$\begin{array}[b]{rcl}
\overline{\overline{f}}(x)
& = &
\left\{\begin{array}{ll}
* \quad & \mbox{if } x\in P \\
\overline{f}(x) & \mbox{if } x \in \neg P
\end{array}\right. \\
& = &
\left\{\begin{array}{ll}
f(x) \quad & \mbox{if } x\in P \\
f(x) & \mbox{if } x \in \neg P
\end{array}\right. \\
& = &
f(x) \\
\overline{\overline{g}}(x)
& = &
\overline{g}(x) \\
& = &
\left\{\begin{array}{ll}
* \quad & \mbox{if } x\in P \\
g(x) & \mbox{if } x \in \neg P
\end{array}\right. \\
& = &
g(x) \qquad \mbox{since by assumption $x\in\neg P$.}
\end{array}$$
}

\item \label{ex:quot:KlD} We recall that the category of partial maps
  for the effectus $\Kl(\Dst)$ for discrete probability is the Kleisli
  category $\Kl(\sDst)$ of the subdistribution monad $\sDst$, see
  Example~\ref{ex:effectusKlD}. For a predicate $p\in [0,1]^{X}$ on a
  set $X$ we take as quotient:
\begin{equation}
\label{eqn:quot:KlD}
\begin{array}{rcl}
X/p
& \defeq &
\setin{x}{X}{p(x) < 1}.
\end{array}
\end{equation}

The quotient adjunction involves bijective correspondences:
$$\begin{prooftree}
\xymatrix{ (p\in [0,1]^{X}) \ar[r]^-{f} & (\zero\in [0,1]^{Y})
   \rlap{$\;=\zero(Y)$}}
\Justifies
\xymatrix{ X/p \ar[r]_-{g} & Y }
\end{prooftree}$$

\noindent This works as follows.
\begin{itemize}
\item Given $f\colon (p\in [0,1]^{X}) \rightarrow (\zero\in
  [0,1]^{Y})$ in $\PPred(\Kl(\sDst))$, then $f\colon X \rightarrow
  \sDst(Y)$ satisfies $p(x) \leq \pbox{f}(\zero)(x) = \ker(f)(x) = 1 -
  \sum_{y}f(x)(y)$ for each $x\in X$,
  see~\eqref{eqn:partsubstKlsD}. We then define $\overline{f} \colon
  X/p \rightarrow \sDst(Y)$ as:
$$\begin{array}{rcccl}
\overline{f}(x)
& = &
\sum_{y} \frac{f(x)(y)}{p^{\bot}(x)}\bigket{y}
& = &
\sum_{y} \frac{f(x)(y)}{1-p(x)}\bigket{y}
\end{array}$$

\noindent This is well-defined, since $p(x)\neq 1$ for $x\in X/p$.

\item In the other direction, given a function $g \colon X/p
  \rightarrow \sDst(Y)$ we define $\overline{g} \colon X \rightarrow
  \sDst(Y)$ as:
$$\begin{array}{rcl}
\overline{g}(x)
& = &
\sum_{y} p^{\bot}(x)\cdot g(x)(y)\bigket{y}
\end{array}$$

\noindent This $\overline{g}$ is a morphism $(p\in [0,1]^{X})
\rightarrow (\zero\in [0,1]^{Y})$ in $\PPred(\Kl(\sDst))$, since we
have an inequality $p \leq \ker(\overline{g})$ via:
$$\begin{array}{rcl}
\ker(\overline{g})(x)
\hspace*{\arraycolsep} = \hspace*{\arraycolsep}
1 - \sum_{y} \overline{g}(x)(y)
& = &
1 - \sum_{y} p^{\bot}(x)\cdot g(x)(y) \\
& = &
p(x) + p^{\bot}(x) - p^{\bot}(x)\cdot \sum_{y} g(x)(y) \\
& = &
p(x) - p^{\bot}(x)\cdot (1 - \sum_{y} g(x)(y)) \\
& = &
p(x) + p^{\bot}(x)\cdot \ker(g)(x) \\
& \geq &
p(x).
\end{array}$$
\end{itemize}

\noindent Clearly, the mappings $f\mapsto \overline{f}$ and $g \mapsto
\overline{g}$ are each other's inverses. The unit map $\xi_{p} \colon
X \rightarrow X/p$ is given by $\xi_{p}(x) = p^{\bot}(x)\ket{x} =
(1-p(x))\ket{x}$.

\auxproof{
$$\begin{array}[b]{rcl}
\overline{\overline{f}}(x)
& = &
\sum_{y} p^{\bot}(x)\cdot \overline{f}(x)(y)\ket{y} \\
& = &
\sum_{y} f(x)(y)\ket{y} \\
& = &
f(x) \\
\overline{\overline{g}}(x)
& = &
\sum_{y} \frac{\overline{g}(x)(y)}{p^{\bot}(x)}\ket{y} \\
& = &
\sum_{y} g(x)(y)\ket{y} \\
& = &
g(x).
\end{array}$$
}

We notice that also in this case we can express the quotient object as
comprehension, namely:
$$\begin{array}{rcccccl}
X/p
& = &
\setin{x}{X}{p(x) < 1}
& = &
\setin{x}{X}{\ceil{p^\bot}(x) = 1}
& = &
\cmpr{X}{\ceil{p^\bot}},
\end{array}$$

\noindent where, in general, $\ceil{q}$\index{D}{$\ceil{q}$, least
  sharp predicate above fuzzy predicate $q$} is the least sharp
predicate above $q$, given by:
\begin{equation}
\label{eqn:ceilKlD}
\begin{array}{rcl}
\ceil{q}(x)
& = &
\left\{\begin{array}{ll}
1 \quad & \mbox{if } q(x) > 0 \\
0 & \mbox{if } q(x) = 0
\end{array}\right.
\end{array}
\end{equation}

\auxproof{
Indeed:
$$\begin{array}{rcccl}
\ceil{p^\bot}(x) = 1
& \Longleftrightarrow &
p^{\bot}(x) = 1 - p(x) > 0
& \Longleftrightarrow &
p(x) < 1.
\end{array}$$
}

\item \label{ex:quot:OUG} For the effectus of order unit groups we
  recall from Example~\ref{ex:effectusOUG} that the category
  $\Par(\op{\OUG})$ of partial maps in $\op{\OUG}$ contains positive
  \emph{subunital} group homomorphisms as maps.  Because of the
  `opposite' involved, we define quotients via subgroups, in the
  following way. For an effect $e\in [0,1]_{G}$ in an order unit group
  $G$ we take:
\begin{equation}
\label{eqn:quot:OUG}
\begin{array}{rcl}
G/e
& \defeq &
\pideal{e^\bot}{G},
\end{array}
\end{equation}

\noindent where $\pideal{e^\bot}{G} \subseteq G$ is the subgroup given
by $\pideal{e}{G} = \setin{x}{G}{\exin{n}{\NNO}{-n\cdot e^{\bot} \leq
    x \leq n\cdot e^{\bot}}}$, see
Example~\ref{ex:cmpr}~\eqref{ex:cmpr:OUG}.  We write the inclusion
$\pideal{e^\bot}{G} \subseteq G$ as a map $\xi_{e} \colon
\pideal{e^\bot}{G} \rightarrow G$. This is a positive group
homomorphism, which is subunital since:
$$\begin{array}{rcccl}
\xi_{e}(\one_{G/e})
& = &
\xi_{e}(e^{\bot})
& = &
e^{\bot} \;\in\; [0,1]_{G}.
\end{array}$$

The quotient adjunction involves a bijective correspondence between:
$$\begin{prooftree}
\xymatrix{(G,e)\ar[r]^-{f} & \zero(H)
   \rlap{\hspace*{2em}in $\PPred\big(\op{\OUG}\big)$}}
\Justifies
\xymatrix{G/e\ar[r]_-{g} & H
   \rlap{\hspace*{3.1em}in \smash{$\Par(\op{\OUG})$}}}
\end{prooftree}\hspace*{5em}$$

\noindent That is, between positive subunital (PsU) group homomorphisms:
$$\begin{prooftree}
\xymatrix{H\ar[r]^-{f} & G \mbox{ with $e \leq \pbox{f}(\zero)$}}
\Justifies
\xymatrix{H\ar[r]_-{g} & G/e}
\end{prooftree}$$

\noindent where $\pbox{f}(\zero) = \ker(f) = f(1)^{\bot}$,
see~\eqref{eqn:partsubstOUG}. This works as follows.
\begin{itemize}
\item Given $f\colon H \rightarrow G$ with $e \leq \ker(f) =
  f(1)^{\bot}$, we have $f(1) \leq e^{\bot}$. For each $x\in H$ there
  is an $n$ with $-n\cdot 1 \leq x \leq n\cdot 1$. Using that $f$ is
positive/monotone we get:
$$\begin{array}{rcl}
-n\cdot e^{\bot} 
\hspace*{\arraycolsep}\leq\hspace*{\arraycolsep}
-n\cdot f(1) 
\hspace*{\arraycolsep}=\hspace*{\arraycolsep}
f(-n\cdot 1)
& \leq &
f(x) \\
& \leq &
f(n\cdot 1)
\hspace*{\arraycolsep}=\hspace*{\arraycolsep}
n\cdot f(1) 
\hspace*{\arraycolsep}\leq\hspace*{\arraycolsep}
n\cdot e^{\bot}.
\end{array}$$

\noindent Hence $f(x) \in \pideal{e^\bot}{G}$, so that we can simply
define $\overline{f} \colon H \rightarrow \pideal{e^\bot}{G} = G/e$ as
$\overline{f}(x) = f(x)$.

\item Given $g\colon H \rightarrow G/e = \pideal{e^{\bot}}{G}$, define
  $\overline{g} = \xi_{e} \after g \colon H \rightarrow G$. This is a
  subunital map with:
$$\begin{array}{rcl}
\pbox{\overline{g}}(\zero)
\hspace*{\arraycolsep}=\hspace*{\arraycolsep}
\overline{g}(\one)^{\bot}
\hspace*{\arraycolsep}=\hspace*{\arraycolsep}
\xi_{e^{\bot}}(g(1))^{\bot}
& = &
g(1)^{\bot} \\
& \geq &
\big(\one_{\pideal{e^\bot}{G}}\big)^{\bot}
\hspace*{\arraycolsep}=\hspace*{\arraycolsep}
\big(e^{\bot}\big)^{\bot}
\hspace*{\arraycolsep}=\hspace*{\arraycolsep}
e.
\end{array}$$
\end{itemize}

\noindent We have \smash{$\overline{\overline{f}} = \xi_{e} \after
  \overline{f} = f$}, and \smash{$\overline{\overline{g}} = g$} by
injectivity of $\xi_{e}$. \QED

\auxproof{
If $g\colon H \rightarrow \pideal{e^\bot}{G}$ satisfies $\xi_{e} \after
g = f$, then $g = \overline{f}$ since $\xi_{e}$ is injective.
}

\item \label{ex:quot:vNA} In the effectus $\op{\vNA}$ of von Neumann
  algebras the partial maps are also the subunital ones, see
  Example~\ref{ex:effectusNA}. The quotient of an effect $e\in
  [0,1]_{\mathscr{A}}$ in a von Neumann algebra $\mathscr{A}$ is
  defined as:
\begin{equation}
\label{eqn:quot:vNA}
\begin{array}{rcccl}
\mathscr{A}/e
& \defeq &
\ceil{e^\bot}\mathscr{A}\ceil{e^\bot}
& = &
\set{\ceil{e^\bot}\cdot x \cdot \ceil{e^\bot}}{x\in \mathscr{A}}.
\end{array}
\end{equation}

\noindent It uses the `ceiling' $\ceil{e^\bot} = \floor{e}^\bot$ from
Example~\ref{ex:cmpr}~\eqref{ex:cmpr:vNA}. The quotient map $\xi_{e}
\colon \mathscr{A} \rightarrow \mathscr{A}/e$ in $\op{\vNA}$ is given
by the subunital function $\mathscr{A}/e \rightarrow \mathscr{A}$ with:
\begin{equation}
\label{eqn:xi:vNA}
\begin{array}{rcl}
\xi_{e}(x)
& = &
\sqrt{e^\bot}\cdot x \cdot \sqrt{e^\bot}.
\end{array}
\end{equation}

\noindent This map incorporates L{\"u}ders rule, see
\textit{e.g.}~\cite[Eq.(1.3)]{BuschS98}.

The proof that these constructions yield a left adjoint to the zero
functor is highly non-trivial. For details we refer
to~\cite{WesterbaanW15}. We conclude that once again quotient objects
can be expressed via comprehension:
$$\begin{array}{rcccccl}
\cmpr{\mathscr{A}}{\ceil{e^\bot}}
& = &
\floor{\ceil{e^\bot}}\mathscr{A}\floor{\ceil{e^\bot}}
& = &
\ceil{e^\bot}\mathscr{A}\ceil{e^\bot}
& = &
\mathscr{A}/e.
\end{array}$$
\end{enumerate}
\end{example}

We continue with a series of small results that hold for quotients.

\begin{lemma}
\label{lem:quot}
Let $(\cat{C},I)$ be an effectus with quotients, in partial form.
\begin{enumerate}
\item \label{lem:quot:XiEqn} The unit $\xi_{p} \colon X \to X/p$
  satisfies $\ker(\xi_{p}) = p$ and thus $\kerbot(\xi_{p}) =
  p^{\bot}$.

\item \label{lem:quot:XiEpi} The unit map $\xi_{p} \colon X \to X/p$ is
  epic in $\cat{C}$.

\item \label{lem:quot:XiComp} For a predicate $q\colon X/p \rightarrow
  I$ one has $q \after \xi_{p} \leq p^{\bot}$.

\item \label{lem:quot:XiFandF} The functor 
$$\xymatrix@C+1pc{
\PPred(\cat{C})\ar[r]^-{\xi_{(-)}} & \Quot(\cat{C})
}$$

\noindent is full and faithful --- where $\Quot(\cat{C})$ is the
obvious category with epis $\twoheadrightarrow$ as objects and
commuting squares between them as arrows.

\item \label{lem:quot:univ} The universal property of quotient amounts
  to: for each $f\colon X \rightarrow Y$ with $p \leq \ker(f)$ there
  is a unique $\overline{f} \colon X/p \rightarrow Y$ with
  $\overline{f} \after \xi_{p} = f$. But we can say more: if there is
  an equality $p = \ker(f)$, then $\overline{f}$ is total.

\item \label{lem:quot:zeroone} There are total isomorphisms
  $\xi_{\zero} \colon X \conglongrightarrow X/\zero$ and $0
  \conglongrightarrow X/\one$.

\item \label{lem:quot:dc} For each predicate $p$ on $X$ there is a
  total `decomposition' map $\dc_{p} \colon X \rightarrow X/p^{\bot} +
  X/p$,\index{D}{$\dc_p$, decomposition map for predicate $p$} namely
  the unique map in:
\begin{equation}
\label{diag:quot:dc}
\vcenter{\xymatrix{
& X\ar[dl]_{\xi_{p^{\bot}}}\ar[dr]^{\xi_{p}}\ar@{..>}[d]^-{\dc_{p}} &
\\
X/p^{\bot} & X/p^{\bot}+X/p\ar[l]^-{\rhd_{1}}\ar[r]_-{\rhd_{2}} & X/p
}}
\end{equation}

\noindent Using the pairing notation of Lemma~\ref{lem:pairing}, we
can describe it as $\dc_{p} = \dtuple{\xi_{p^{\bot}}, \xi_{p}}$, and
thus also as total sum $\dc_{p} = (\kappa_{1} \after \xi_{p^{\bot}})
\ovee (\kappa_{2} \after \xi_{p})$ of two maps $X \rightarrow
X/p^{\bot} + X/p$, see Lemma~\ref{lem:sumpairing}.

\item \label{lem:quot:mapdc} Each total map $f\colon Y \rightarrow
  X_{1} + X_{2}$ can be decomposed in the subcategory $\Tot(\cat{C})
  \hookrightarrow \C$ of total maps as $f = (f_{1}+f_{2}) \after
  \dc_{p}$ where $p = (\bang+\bang) \after f$ and $f_{1}, f_{2}$ are
  total, and uniquely determined.

\item \label{lem:quot:sumxi} For predicates $p$ on $X$ and $q$ on $Y$
  the sum map $\xi_{p} + \xi_{q}\colon X+Y \rightarrow X/p + Y/q$ is
  epic in $\cat{C}$.

\item \label{lem:quot:sumiso} For predicates $p$ on $X$ and $q$ on $Y$
  there is a total (canonical) isomorphism in:
$$\xymatrix@R-.5pc@C-1.5pc{
(X/p)+(Y/q)\ar[rr]^-{\cong} & & (X+Y)/[p,q]
\\
& X+Y\ar@{->>}[ul]^{\xi_{p}+\xi_{q}}\ar@{->>}[ur]_{\xi_{[p,q]}} &
}$$

\item \label{lem:quot:partproj} Partial projections $\rhd_i$ are
  (isomorphic to) quotient maps, via total isomorphisms:
$$\qquad\quad\xymatrix@R-.5pc@C-1.5pc{
& X+Y\ar@{->>}[dl]_{\xi_{[\zero,\one]}}\ar@{->>}[dr]^{\rhd_1} &
& \qquad\qquad\quad &
& X+Y\ar@{->>}[dl]_{\xi_{[\one,\zero]}}\ar@{->>}[dr]^{\rhd_1} &
\\
\qquad\llap{$(X+Y)/[\zero,\one]$}\ar[rr]_-{\cong} & & X
& &
\qquad\llap{$(X+Y)/[\one,\zero]$}\ar[rr]_-{\cong} & & Y
}$$

\item \label{lem:quot:div} For a predicate $p$ on $X$ there is an
  isomorphism of effect modules:
\begin{equation}
\label{diag:quot:div}
\vcenter{\xymatrix{
\Pred(X/p^{\bot})\ar@/^2ex/[rr]^-{p\cdot (-)} & \cong &
  {\downset p} = \{q\in\rlap{$\Pred(X)\;|\;q \leq p\}$}
   \ar@/^2ex/[ll]^-{(-)/p}
}}\hspace*{7em}
\end{equation}

\noindent As a result, if $X/p \cong 0$, then $p = \one$.

\item \label{lem:quot:comp} Quotient maps are closed under
  composition, up to total isomorphisms. Given a predicate $p$ on $X$
  and $q$ on $X/p$, there is an isomorphism as on the left below. It
  uses the operation $p^{\bot} \cdot (-) = (-)\after \xi_{p}$
  from point~\eqref{lem:quot:div}.
$$\xymatrix@C-3pc@R-1pc{
(X/p)/q\ar@{=}[rrrr]^-{\sim}_-{\text{total}} & & & & 
   X/(p^{\bot}\cdot q^{\bot})^{\bot}
 & \qquad\quad &
 (X/p)/(r^{\bot}/p^{\bot})^{\bot}
   \ar@{=}[rrrr]^-{\sim}_-{\text{total}} & & & & X/r
\\
& X/p\ar@{->>}[lu]^{\xi_{q}} & \hspace*{3em} & &
& &
& X/p\ar@{->>}[lu]^{\xi_{(p^{\bot}/r^{\bot})^\bot}} & \hspace*{6em} & &
\\
& & X\ar@{->>}[lu]^{\xi_{p}}\ar@{->>}[rruu]_{\xi_{(p^{\bot}\cdot q^{\bot})^{\bot}}} & &
& &
& & X\ar@{->>}[lu]^{\xi_{p}}\ar@{->>}[rruu]_{\xi_{r}} & &
}$$

\noindent Equivalently, for predicates $p,r$ on $X$ with $p\leq r$,
there is a total isomorphism as on the right above.

\item \label{lem:quot:coker} If our effectus $\cat{C}$ has images,
  then it has cokernel maps:\index{S}{cokernel!-- map!-- in an
    effectus} for each map $f\colon X \rightarrow Y$ we have a
  coequaliser in $\cat{C}$ of the form:
$$\xymatrix@C+.5pc{
X\ar@/^1.5ex/[rr]^-{f}\ar@/_1.5ex/[rr]_{\zero} & & 
   Y\ar@{->>}[rr]^-{\coker(f) \;\defeq\; \xi_{\img(f)}} 
   & & Y/\img(f)
}$$

\noindent Thus: $\coker(f) = \zero$ iff $\img(f) = \one$, that is, iff
$f$ is internally epic.
\end{enumerate}
\end{lemma}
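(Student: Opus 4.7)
The plan is to traverse the fourteen parts by repeatedly unfolding the universal property of the quotient adjunction $Q \dashv \zero$ and combining it with earlier effectus-theoretic machinery. Parts (1)--(5) all follow from this adjunction at $(X,p)$. By construction, $\xi_{p}$ is a morphism $(X,p) \to (X/p,\zero)$ in $\PPred(\cat{C})$, which already gives $p \leq \ker(\xi_p)$. For the reverse inequality in (1), I would apply the existence clause of the universal property to the predicate $p^{\bot} \colon X \to I$: since $\ker(p^{\bot}) = p$ by Lemma~\ref{lem:ker}~\eqref{lem:ker:pred}, there is $\bar q \colon X/p \to I$ with $\bar q \after \xi_p = p^{\bot}$; then by Lemma~\ref{lem:ker}~\eqref{lem:ker:comp} and monotonicity of $\pbox{\xi_p}$, $p = \ker(p^{\bot}) = \pbox{\xi_p}(\ker(\bar q)) \geq \pbox{\xi_p}(\zero) = \ker(\xi_p)$. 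Part (2) follows directly from the uniqueness clause: two maps $f,g \colon X/p \to Y$ with $f \after \xi_p = g \after \xi_p$ both extend the common composite $h = f \after \xi_p$ (which has $p \leq \ker(h)$), hence agree. Then (3) is $q \after \xi_p \leq \kerbot(\xi_p) = p^{\bot}$ using (1) and Lemma~\ref{lem:ker}~\eqref{lem:ker:pred}; (4) restates (2) with naturality; and in (5), the universal property yields $\overline{f}$, and when $p = \ker(f)$ one computes $\one \after \overline{f} \after \xi_p = \kerbot(f) = p^{\bot} = \one \after \xi_p$, so $\one \after \overline{f} = \one$ by (2), making $\overline{f}$ total.

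Parts (6)--(11) apply (5) and Lemma~\ref{lem:pairing} to build specific maps. For (6), $\idmap[X]$ has $\ker(\idmap[X]) = \zero$, so (5) yields a total inverse of $\xi_{\zero}$; and the unique $X \to 0$ has kernel $\one$, yielding a total $X/\one \to 0$ which is an isomorphism by strictness of $0$ (Proposition~\ref{prop:effectuscoproj}). For (7), part (1) gives $\kerbot(\xi_{p^{\bot}}) = p = \ker(\xi_p)$, so Lemma~\ref{lem:pairing} delivers the total pairing $\dc_p = \dtuple{\xi_{p^{\bot}},\xi_p}$. Part (8) decomposes a total $f \colon Y \to X_1 + X_2$ by setting $p = (\bang+\bang) \after f$ and $f_i = \rhd_i \pafter f$; the kernels of $f_i$ are computed from Lemma~\ref{lem:ker}~\eqref{lem:ker:comp},\eqref{lem:ker:proj} to be exactly $p^{\bot}$ resp.\ $p$, so (5) produces total extensions $\bar f_i$, and the pairing identities~\eqref{eqn:dtupleuniqueness} reconstruct $f$. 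For the total iso in (10), I would construct $(X/p)+(Y/q) \to (X+Y)/[p,q]$ by applying (5) to each $\xi_{[p,q]} \after \kappa_i$ (whose kernel equals $p$ resp.\ $q$ since $[p,q]^{\bot} \after \kappa_1 = p^{\bot}$); its inverse comes from (8) applied to $\xi_{[p,q]}$. Parts (9) and (11) follow: (9) from (10) combined with (2), and (11) by specialising to $[p,q] = [\zero,\one]$, invoking (6), and identifying $\ker(\rhd_1) = [\zero,\one]$ via Lemma~\ref{lem:ker}~\eqref{lem:ker:proj}.

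Parts (12)--(14) package the remaining structure. For (12), the forward map $q \mapsto q \after \xi_{p^{\bot}}$ lands in $\downset p$ by (3); the inverse extends $q \leq p$ to $X/p^{\bot}$ via (5), noting that $q \leq p$ gives $p^{\bot} \leq q^{\bot} = \ker(q)$ (Lemma~\ref{lem:ker}~\eqref{lem:ker:pred}). Effect module preservation is routine from functoriality of $\Pred$ (Theorem~\ref{thm:effectusEMod}). Part (13) applies (12): a predicate $q$ on $X/p$ corresponds to $p^{\bot} \cdot q$ on $X$, and by Lemma~\ref{lem:ker}~\eqref{lem:ker:comp}, $\ker(\xi_q \after \xi_p) = \pbox{\xi_p}(q) = (p^{\bot} \cdot q^{\bot})^{\bot}$; both $(X/p)/q$ and $X/(p^{\bot} \cdot q^{\bot})^{\bot}$ then satisfy the same universal property (the hypothesis $(p^{\bot}\cdot q^{\bot})^{\bot} \leq \ker(h)$ translates via (12) to $q \leq \ker(\bar h)$), yielding mutually inverse total maps. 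Finally (14) combines (5) and Lemma~\ref{lem:img}: $\xi_{\img(f)} \after f = \zero$ by Lemma~\ref{lem:img}~\eqref{lem:img:compzero} since $\img(f) \leq \ker(\xi_{\img(f)}) = \img(f)$ using (1); and any $g$ with $g \after f = \zero$ has $\img(f) \leq \ker(g)$ by the same lemma and factors uniquely through $\xi_{\img(f)}$ by (5). The main obstacle I anticipate is the bookkeeping in parts (10) and (13), where mutually inverse isomorphisms must be built by hand and simultaneously shown to be total under the subtle interplay of total and partial maps---with (13) additionally requiring a careful verification that the concrete predicate $(p^{\bot} \cdot q^{\bot})^{\bot}$ indeed arises as the kernel of the composite quotient $\xi_q \after \xi_p$ via the dictionary of part (12).
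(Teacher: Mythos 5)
Your overall strategy coincides with the paper's: parts \eqref{lem:quot:XiEqn}--\eqref{lem:quot:zeroone} from the quotient adjunction, \eqref{lem:quot:dc}--\eqref{lem:quot:partproj} from the universal property combined with the pairing of Lemma~\ref{lem:pairing}, and \eqref{lem:quot:div}--\eqref{lem:quot:coker} with the same kernel computations (in particular $\ker(\xi_{q}\after\xi_{p})=\pbox{\xi_{p}}(q)=(p^{\bot}\cdot q^{\bot})^{\bot}$ for \eqref{lem:quot:comp} and Lemma~\ref{lem:img}~\eqref{lem:img:compzero} for \eqref{lem:quot:coker}). Two local points need repair, however. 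First, in \eqref{lem:quot:sumiso} you propose to obtain the map $(X+Y)/[p,q]\rightarrow (X/p)+(Y/q)$ by ``applying \eqref{lem:quot:mapdc} to $\xi_{[p,q]}$'', but \eqref{lem:quot:mapdc} decomposes only \emph{total} maps whose codomain is literally a coproduct $X_{1}+X_{2}$; the quotient map $\xi_{[p,q]}$ is neither total nor of that shape, so the cited tool does not apply. This direction must instead come from the universal property \eqref{lem:quot:univ} applied to $\xi_{p}+\xi_{q}$, which requires the separate computation $\kerbot(\xi_{p}+\xi_{q})=[\one,\one]\after(\xi_{p}+\xi_{q})=[p^{\bot},q^{\bot}]=[p,q]^{\bot}$ (via the isomorphism $\Pred(X+Y)\cong\Pred(X)\times\Pred(Y)$) to guarantee that the induced map is total. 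Relatedly, you derive \eqref{lem:quot:sumxi} from \eqref{lem:quot:sumiso}, whereas the paper proves \eqref{lem:quot:sumxi} directly (a sum of epis is epi in any category with coproducts) and then uses it inside the proof of \eqref{lem:quot:sumiso}. Your order is viable, but you must then verify that the composite endomap of $(X/p)+(Y/q)$ is the identity without invoking \eqref{lem:quot:sumxi}, e.g.\ by precomposing with each coprojection $\kappa_{i}$ and using epicness of $\xi_{p}$ and $\xi_{q}$ separately; given how trivial the direct proof of \eqref{lem:quot:sumxi} is, the reversal buys little.

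Second, the claim that \eqref{lem:quot:XiFandF} ``restates \eqref{lem:quot:XiEpi} with naturality'' covers only faithfulness. Fullness is the substantive half: given a commuting square $g\after\xi_{p}=\xi_{q}\after f$ you must show that $f$ is a morphism $(X,p)\rightarrow(Y,q)$ in $\PPred(\cat{C})$, i.e.\ $p\leq\pbox{f}(q)$, which follows from $p=\ker(\xi_{p})=\pbox{\xi_{p}}(\zero)\leq\pbox{\xi_{p}}(\pbox{g}(\zero))=\pbox{f}(\pbox{\xi_{q}}(\zero))=\pbox{f}(q)$, using monotonicity of $\pbox{(-)}$ and part \eqref{lem:quot:XiEqn}. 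With these two repairs the argument goes through exactly as in the paper.
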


The decomposition property of maps from point~\eqref{lem:quot:mapdc}
plays an important role in Section~\ref{sec:extcat} when we relate
extensive categories and (Boolean) effectuses.

\begin{proof}
We reason in the effectus in partial form $\cat{C}$ and do not use
different notation for partial and total maps.
\begin{enumerate}
\item Since the unit $\xi_p$ is a map $(X,p) \rightarrow (X/p, \zero)$
  in $\PPred(\cat{C})$ we have an inequality $p \leq
  \pbox{\xi_{p}}(\zero) = \ker(\xi_{p})$ by definition. In order to
  get an equation $p = \ker(\xi_{p})$, notice that the predicate
  $p^{\bot} \colon X \rightarrow I$ is a map $(X,p) \rightarrow (I,
  \zero)$ in $\PPred(\cat{C})$, since:
$$\begin{array}{rcccccccl}
\pbox{(p^{\bot})}(\zero)
& = &
\big(\zero^{\bot} \after p^{\bot}\big)^{\bot}
& = &
\big(\one_{I} \after p^{\bot}\big)^{\bot}
& = &
\big(\idmap[I] \after p^{\bot}\big)^{\bot}
& = &
p.
\end{array}$$

\noindent Hence there is a unique map $\overline{p} \colon X/p
\rightarrow I$ with $\overline{p} \after \xi_{p} = p^{\bot}$. But then
we are done by Lemma~\ref{lem:ker}~\eqref{lem:ker:pred}
and~\eqref{lem:ker:ord}: $p = \ker(p^{\bot}) = \ker(\overline{p}
\after \xi_{p}) \geq \ker(\xi_{p})$.

\item Let maps $f,g \colon X/p \rightarrow Y$ satisfy $f \after
  \xi_{p} = g\after \xi_{p} = h$, say. This $h\colon X \rightarrow Y$
  then satisfies $p = \ker(\xi_{p}) \leq \ker(f \after \xi_{p}) =
  \ker(h) = \pbox{h}(\zero)$. Hence there is a unique $\overline{h}
  \colon X/p \rightarrow Y$ with $\overline{h} \after \xi_{p} = h$.
  But then $f = \overline{h} = g$ by uniqueness.

\item For a predicate $q$ on $X/p$ one has:
$$\begin{array}{rcll}
q \after \xi_{p}
\hspace*{\arraycolsep}=\hspace*{\arraycolsep}
\big(q^{\bot\bot} \after \xi_{p}\big)^{\bot\bot}
\hspace*{\arraycolsep}=\hspace*{\arraycolsep}
\pbox{\xi_{p}}(q^{\bot})^{\bot} 
& \leq &
\pbox{\xi_{p}}(\zero)^{\bot} \quad & \mbox{since $\zero \leq q^{\bot}$} \\
& = &
\ker(\xi_{p})^{\bot} \\
& = &
p^{\bot} & \mbox{by point~\eqref{lem:quot:XiEqn}.}
\end{array}$$

\item Let's assume we have a commuting diagram in $\cat{C}$ of the
  form:
$$\xymatrix@R-.5pc@C+1pc{
X\ar@{->>}[d]_{\xi_p}\ar[r]^-{f} & Y\ar@{->>}[d]^{\xi_q} 
\\
X/p\ar@{..>}[r]_-{g} & Y/q
}$$

\noindent We have to prove $p \leq \pbox{f}(q)$, making $f$ a map
$(X,p) \rightarrow (Y,q)$ in $\PPred(\cat{C})$. This follows from:
$$\begin{array}{rcl}
p
\hspace*{\arraycolsep}=\hspace*{\arraycolsep}
\ker(\xi_{p})
\hspace*{\arraycolsep}=\hspace*{\arraycolsep}
\pbox{\xi_{p}}(\zero)
& \leq &
\pbox{\xi_{p}}\big(\pbox{g}(\zero)\big) \\
& = &
\pbox{f}\big(\pbox{\xi_{q}}(\zero)\big)
\hspace*{\arraycolsep}=\hspace*{\arraycolsep}
\pbox{f}\big(\ker(\xi_{q})\big)
\hspace*{\arraycolsep}=\hspace*{\arraycolsep}
\pbox{f}(q).
\end{array}$$

\item Clearly, $p \leq \ker(f) = \pbox{f}(\zero)$ means that $f$ is a
  map $(X,p) \rightarrow (Y,\zero)$ in $\PPred(\cat{C})$. Hence the
  unique map $\overline{f} \colon X/p \rightarrow Y$ is the adjoint
  transpose.

If we have an equation $p = \ker(f)$, then we can show that
$\overline{f}$ is total, by using that the unit map $\xi_p$ is epic:
$$\begin{array}{rcccccccccl}
\one \after \overline{f} \after \xi_{p}
& = &
\one \after f 
& = &
\kerbot(f)
& = &
p^{\bot} 
& = &
\kerbot(\xi_{p}) 
& = &
\one \after \xi_{p}.
\end{array}$$

\item We first show $X/\zero \cong X$. The identity map in $X$
  satisfies: $\ker(\idmap[X]) = \zero$. Hence by
  point~\eqref{lem:quot:univ} there is a unique total map $f\colon
  X/\zero \rightarrow X$ satisfying $f \after \xi_{\zero} =
  \idmap$. The map $\xi_{\zero}$ is also total since $\one \after
  \xi_{\zero} = \kerbot(\xi_{\zero}) = \zero^{\bot} = \one$. We now get
  $\xi_{\zero} \after f = \idmap[X]$ by using that $\xi_{\zero}$ is
  epic: $(\xi_{\zero} \after f) \after \xi_{\zero} = \xi_{\zero} =
  \idmap \after \xi_{\zero}$.

Next, for the isomorphism $X/\one \cong 0$, we notice that the zero
map $\zero \colon X \rightarrow 0$ in $\cat{C}$ satisfies $\ker(\zero)
= (\zero^{\bot} \after \zero)^{\bot} = \zero^{\bot} = \one$, so that
there is a unique total map $f\colon X/\one \rightarrow 0$ with $f
\after \xi_{\one} = \zero$. Obviously, $f \after \bang_{X/\one} =
\idmap[\zero]$, where $\bang_{X/\one} \colon 0 \rightarrow X/\one$ is
total. We also have $\bang_{X/\one} \after f = \idmap[X/\one]$, since
$\xi_{\one}$ is epic and: $(\bang_{X/\one} \after f) \after \xi_{\one}
= \bang_{X/\one} \after \zero = \zero = \xi_{\one}$. This last
equation holds by
Definition~\ref{def:FinPACwE}~\eqref{def:FinPACwE:zero} since $\one
\after \xi_{\one} = \kerbot(\xi_{\one}) = \one^{\bot} = \zero$.

\item The diagram~\eqref{diag:quot:dc} uniquely defines $\dc_{p} =
  \dtuple{\xi_{p^{\bot}}, \xi_{p}}$ as described in
  Lemma~\ref{lem:pairing}. It satisfies $[\kappa_{2}, \kappa_{1}]
  \after \dc_{p} = \dc_{p^\bot}$ since:
$$\begin{array}{rcl}
\rhd_{1} \after [\kappa_{2}, \kappa_{1}] \after \dc_{p}
& = &
[[\idmap, \zero] \after \kappa_{2}, [\idmap, \zero] \after \kappa_{1}] 
   \after \dc_{p} \\
& = &
[\zero, \idmap] \after \dc_{p}
\hspace*{\arraycolsep}=\hspace*{\arraycolsep}
\rhd_{2} \after \dc_{p}
\hspace*{\arraycolsep}=\hspace*{\arraycolsep}
\xi_{p}.
\end{array}$$

\noindent Similarly one proves $\rhd_{2} \after [\kappa_{2},
  \kappa_{1}] \after \dc_{p} = \xi_{p^{\bot}}$, so that:
$$\begin{array}{rcccl}
[\kappa_{2}, \kappa_{1}] \after \dc_{p}
& = &
\dtuple{\xi_{p}, \xi_{p^{\bot}}}
& = &
\dc_{p^\bot}.
\end{array}$$

\item Let $f\colon Y \rightarrow X_{1}+X_{2}$ be total, and write $p =
  (\bang+\bang) \after f \colon Y \rightarrow 1+1$ in the subcategory
  $\Tot(\cat{C})$ of total maps. The partial map $\rhd_{1} \after f
  \colon Y \rightarrow X_{1}$ satisfies:
$$\begin{array}{rcccl}
\ker\big(\rhd_{1} \after f\big)
& = &
\big((\bang+\idmap) \after (\idmap+\bang) \after f\big)^{\bot}
& = &
p^{\bot}.
\end{array}$$

\noindent Hence there is a unique total map $f_{1}\colon Y/p^{\bot}
\rightarrow X_{1}$ with $f_{1} \after \xi_{p^\bot} = \rhd_{1} \after
f$.  Similarly, from $\ker\big(\rhd_{2} \after f\big) = p$ we obtain a
unique total $f_{2} \colon Y/p \rightarrow X_{2}$ with $f_{2} \after
\xi_{p} = \rhd_{2} \after f$. But then in $\Tot(\cat{C})$, using
Lemma~\ref{lem:pairing}:
$$\begin{array}{rcl}
(f_{1}+f_{2}) \after \dc_{p}
\hspace*{\arraycolsep}=\hspace*{\arraycolsep}
(f_{1}+f_{2}) \after \dtuple{\xi_{p^\bot}, \xi_{p}} 
& = &
\dtuple{f_{1} \after \xi_{p^\bot}, f_{2} \after \xi_{p}} \\
& = &
\dtuple{\rhd_{1} \after f, \rhd_{2} \after f} \\
& = &
f.
\end{array}$$

\item In an arbitrary category, if $f\colon X\twoheadrightarrow A$,
  $g\colon Y \twoheadrightarrow B$ are both epic, then so is the sum
  $f+g$. 

\auxproof{
Let $h,k\colon A+B \rightarrow C$ satisfy $h \after (f+g) = k \after
(f+g)$. Write $h = [h_{1}, h_{2}]$ and $k = [k_{1}, k_{2}]$. Then:
$$\begin{array}{rcccccl}
[h_{1} \after f, h_{2} \after g]
& = &
h \after (f+g)
& = &
k \after (f+g)
& = &
[k_{1} \after f, k_{2} \after g].
\end{array}$$

\noindent Hence we have $h_{1} \after f = k_{1} \after g$ and
$h_{2} \after f = k_{2} \after g$. Thus, since $f,g$ are epic, we
get $h_{1} = k_{1}$ and $h_{2} = k_{2}$, and so $h = k$ as required.
}

\item We first have to say what the canonical map $\tau \colon
  (X+Y)/[p,q] \rightarrow (X/p)+(Y/q)$ is. Consider the sum map
  $\xi_{p}+\xi_{q}$ in $\cat{C}$ with:
$$\begin{array}{rcll}
\kerbot(\xi_{p}+\xi_{q})
\hspace*{\arraycolsep}=\hspace*{\arraycolsep}
\one \after (\xi_{p}+\xi_{q})
& = &
[\one,\one] \after (\xi_{p}+\xi_{q}) &
 \mbox{by Lemma~\ref{lem:FinPACwE}~\eqref{lem:FinPACwE:onecoptuple}} \\
& = &
[\one \after \xi_{p}, \one \after \xi_{q}] \\
& = &
[\kerbot(\xi_{p}), \kerbot(\xi_{q})] \\
& = &
[p^{\bot}, q^{\bot}] \\
& = &
[p,q]^{\bot} & \mbox{by Theorem~\ref{thm:effectusEMod}.}
\end{array}$$

\noindent This last step uses the isomorphism of effect modules
$\Pred(X+Y) \cong \Pred(X) \times \Pred(Y)$. The resulting equation
$\ker(\xi_{p}+\xi_{q}) = [p,q]$ yields by point~\eqref{lem:quot:univ}
a unique total map $\tau \colon (X+Y)/[p,q] \rightarrow X/p+Y/q$ with
$\tau \after \xi_{[p,q]} = \xi_{p}+\xi_{q}$.

In the other direction, consider the two composites:
$$\xymatrix@R-2pc{
X\ar[dr]^-{\kappa_{1}} 
\\
& X+Y\ar[r]^-{\xi_{[p,q]}} & (X+Y)/[p,q]
\\
Y\ar[ur]_-{\kappa_2}
}$$

\noindent The first one satisfies:
$$\begin{array}{rcl}
\ker\big(\xi_{[p,q]} \after \kappa_{1}\big)
\hspace*{\arraycolsep}=\hspace*{\arraycolsep}
\big(\one \after \xi_{[p,q]} \after \kappa_{1}\big)^{\bot} 
& = &
\big(\kerbot(\xi_{[p,q]}) \after \kappa_{1}\big)^{\bot} \\
& = &
\big([p^{\bot},q^{\bot}] \after \kappa_{1}\big)^{\bot} \\
& = &
p.
\end{array}$$

\noindent Similarly, $\ker\big(\xi_{[p,q]} \after \kappa_{2}\big) =
q$.  Hence there are unique total maps $f\colon X/p \rightarrow
(X+Y)/[p,q]$ and $g\colon Y/q \rightarrow (X+Y)/[p,q]$ with $f \after
\xi_{p} = \xi_{[p,q]} \after \kappa_{1}$ and $g \after \xi_{q} =
\xi_{[p,q]} \after \kappa_{2}$. We claim that the cotuple $[f,g]
\colon X/p + Y/q \rightarrow (X+Y)/[p,q]$ is the inverse of the
canonical map $\tau$.

We obtain $[f,g] \after \tau = \idmap$ from the fact that $\xi$'s are
epic:
$$\begin{array}{rcl}
[f,g] \after \tau \after \xi_{[p,q]}
\hspace*{\arraycolsep}=\hspace*{\arraycolsep}
[f,g] \after (\xi_{p}+\xi_{q}) 
& = &
[f \after \xi_{p}, g \after \xi_{q}] \\
& = &
[\xi_{[p,q]} \after \kappa_{1}, \xi_{[p,q]} \after \kappa_{2}] \\
& = &
\xi_{[p,q]}.
\end{array}$$

\noindent In the other direction we use that the sum map
$\xi_{p}+\xi_{q}$ in $\cat{C}$ is epic:
$$\begin{array}[b]{rcl}
\tau \after [f,g] \after (\xi_{p}+\xi_{q})
& = &
\tau \after [f \after \xi_{p}, g \after \xi_{q}] \\
& = &
[\tau \after \xi_{[p,q]} \after \kappa_{1}, 
   \tau \after \xi_{[p,q]} \after \kappa_{2}] \\
& = &
[(\xi_{p}+\xi_{q}) \after \kappa_{1}, (\xi_{p}+\xi_{q}) \after \kappa_{2}] \\
& = &
(\xi_{p}+\xi_{q}).
\end{array}$$

\item By points~\eqref{lem:quot:sumiso} and~\eqref{lem:quot:zeroone}
  we have isomorphisms:
$$\xymatrix@R-.5pc@C-2pc{
& \qquad & & X+Y\ar@{->>}[dlll]_{\xi_{[\zero,\one]}}
   \ar@{->>}[dl]|{\xi_{\zero}+\xi_{\one}}
   \ar@{->>}[dr]|{\idmap+!}\ar@{->>}[drrr]^{\rhd_1}
   & & \qquad &
\\
(X+Y)/[\zero,\one]\ar@{=}[rr]_-{\sim} & & 
   (X/\zero) + (Y/\one)\ar@{=}[rr]_-{\sim} & & 
   X + 0\ar@{=}[rr]_-{\sim} & & X
}$$

\item For a predicate $q\leq p$, we have $p^{\bot} \leq q^{\bot} =
  \ker(q)$.  Hence there is a unique predicate, written as $q/p \colon
  X/p^{\bot} \rightarrow I$, with $(q/p) \after \xi_{p^\bot} = q$.  In
  the other direction, for $r\colon X/p^{\bot} \rightarrow I$ we write
  $p\cdot r = r\after \xi_{p^\bot} \colon X \rightarrow I$. Then
  $p\cdot r \leq p$ follows from:
$$\begin{array}{rcl}
(p\cdot r)^{\bot}
\hspace*{\arraycolsep}=\hspace*{\arraycolsep}
\ker(p\cdot r)
\hspace*{\arraycolsep}=\hspace*{\arraycolsep}
\ker(r \after \xi_{p^\bot})
& \geq &
\ker(\xi_{p^\bot}) \qquad \mbox{by Lemma~\ref{lem:ker}~\eqref{lem:ker:ord}} \\
& = &
p^{\bot}.
\end{array}$$

\noindent By construction, $p\cdot (q/p) = q$. But also $(p\cdot r)/p
= r$, because $\xi$'s are epic, and: $((p\cdot r)/p) \after
\xi_{p^{\bot}} = p\cdot r = r \after \xi_{p^\bot}$. It requires a bit
of work to verify that these mappings $(-)/p$ and $p\cdot (-)$
preserve the effect module structure.

\auxproof{
We first show that the function $(-)/p \colon \downset p \rightarrow
\Pred(X/p^{\bot})$ is a map of effect modules.
\begin{itemize}
\item The top element in $\downset p$ is $p$ itself. We get $p/p = 1$,
  since $1\after \xi_{p^{\bot}} = (!+\idmap) \after \xi_{p^{\bot}} =
    \Dp(\xi_{p^{\bot}}) = p$.

\item The sum of $q_{1}, q_{2} \in \downset p$ is defined as $q_{1}
  \ovee q_{2}$ provided that $q_{1} \orthogonal q_{2}$ and $q_{1}
  \ovee q_{2} \leq p$. Let's assume that $b\colon X \rightarrow
  (1+1)+1$ is a bound in $\cat{B}$ with $\IV \after b = q_{1}$ and
  $\XI \after b = q_{2}$, where $\IV = [\idmap,\kappa_{2}]$ and $\XI =
        [[\kappa_{2},\kappa_{1}], \kappa_{2}]$. Then $\Dp(b) =
        (!+\idmap) \after b = (\nabla+\idmap) \after b = q_{1} \ovee
        q_{2} \leq p = p^{\bot\bot}$. Hence there is a unique map $b/p
        \colon X/p^{\bot} \rightarrow (1+1)+1$ with $(b/p) \after
        \xi_{p^\bot} = b$. We claim that $b/p$ is a bound for
        $q_{1}/p$ and $q_{2}/p$. Indeed, we have $\IV \after b/p =
        q_{1}/q$ and $\XI \after b/p = q_{2}/p$ since $\xi_{p^\bot}$
        is epic and:
$$\begin{array}{rcl}
(\IV \after b/q) \after \xi_{p^\bot}
& = &
[\IV \after b/q, \kappa_{2}] \after \xi_{p^\bot} \\
& = &
\IV \after [b/q, \kappa_{2}] \after \xi_{p^\bot} \\
& = &
\IV \after b \\
& = &
q_{1} \\
& = &
(q_{1}/p) \after \xi_{p^\bot} \\
(\XI \after b/q) \after \xi_{p^\bot}
& = &
[\XI \after b/q, \kappa_{2}] \after \xi_{p^\bot} \\
& = &
\XI \after [b/q, \kappa_{2}] \after \xi_{p^\bot} \\
& = &
\XI \after b \\
& = &
q_{2} \\
& = &
(q_{2}/p) \after \xi_{p^\bot} \\
\end{array}$$

\noindent Now we get $(q_{1}/p) \ovee (q_{2}/p) = (\nabla+\idmap) \after
b/p$. But this $(q_{1}\ovee q_{2})/p$, since:
$$\begin{array}{rcl}
(q_{1}\ovee q_{2})/p \after \xi_{p^\bot}
& = &
q_{1} \ovee q_{2} \\
& = &
(\nabla+\idmap) \after b \\
& = &
(\nabla+\idmap) \after [b/p, \kappa_{2}] \after \xi_{p^\bot} \\
& = &
[(\nabla+\idmap) \after b/p, \kappa_{2}] \after \xi_{p^\bot} \\
& = &
[(q_{1}/p) \ovee (q_{2}/p), \kappa_{2}] \after \xi_{p^\bot} \\
& = &
\big((q_{1}/p) \ovee (q_{2}/p)\big) \after \xi_{p^\bot}.
\end{array}$$

\item Let $s\colon 1 \rightarrow 1+1$ be a scalar, and $q\leq p$.
  Then $s\scalar q \leq q \leq p$, where $s\scalar q = [s, \kappa_{2}]
  \after q = s\after q$. We get $s \scalar (q/p) = (s\scalar q)/p$
  by:
$$\begin{array}{rcl}
\big(s \scalar (q/p)\big) \after \xi_{p^\bot}
& = &
[s \scalar (q/b), \kappa_{2}] \after \xi_{p^\bot} \\
& = &
[[s,\kappa_{2}] \after (q/b), \kappa_{2}] \after \xi_{p^\bot} \\
& = &
[s,\kappa_{2}] \after [(q/b), \kappa_{2}] \after \xi_{p^\bot} \\
& = &
[s,\kappa_{2}] \after q \\
& = &
s\scalar q \\
& = &
\big((s\scalar q)/p\big) \after \xi_{p^\bot}.
\end{array}$$

\noindent Then:
$$\begin{array}{rcl}
(p\cdot q_{1}) \ovee (p\cdot q_{2})
& = &
(\nabla+\idmap) \after (p\cdot b) \\
& = &
(\nabla+\idmap) \after [b, \kappa_{2}] \after \xi_{p^\bot} \\
& = &
[(\nabla+\idmap) \after b, \kappa_{2}] \after \xi_{p^\bot} \\
& = &
[q_{1} \ovee q_{2}, \kappa_{2}] \after \xi_{p^\bot} \\
& = &
r\cdot (q_{1} \ovee q_{2}).
\end{array}$$

\item For a scalar $s\colon 1 \rightarrow 1+1$ and a predicate $q\colon
X/p^{\bot} \rightarrow 1+1$ we have:
$$\begin{array}{rcl}
s \scalar (p\cdot q)
& = &
[s, \kappa_{2}] \after [p, \kappa_{2}] \after \xi_{p^\bot} \\
& = &
[[s, \kappa_{2}] \after p, \kappa_{2}] \after \xi_{p^\bot} \\
& = &
[s\scalar p, \kappa_{2}] \after \xi_{p^\bot} \\
& = &
p\cdot (s\scalar p).
\end{array}$$
\end{itemize}

In the other direction, the map $p\cdot (-) \colon \Pred(X/p^{\bot})
\rightarrow \downset p$ also preserves the effect module structure.
\begin{itemize}
\item $p\cdot 1 = 1 \after \xi_{p^\bot} = (!+\idmap) \after
  \xi_{p^\bot} = p$

\item Let $q_{1}, q_{2} \colon X/p^{\bot} \rightarrow 1+1$ have bound
$b\colon X/p^{\bot} \rightarrow (1+1)+1$. We claim that $p\cdot b$
is then a bound for $p\cdot q_{1}$ and $p\cdot q_{2}$. Indeed:
$$\begin{array}{rcl}
\IV \after (p\cdot b)
& = &
\IV \after [b, \kappa_{2}] \after \xi_{p^\bot} \\
& = &
[\IV \after b, \kappa_{2}] \after \xi_{p^\bot} \\
& = &
[q_{1}, \kappa_{2}] \after \xi_{p^\bot} \\
& = &
p\cdot q_{1} \\
\XI \after (p\cdot b)
& = &
\XI \after [b, \kappa_{2}] \after \xi_{p^\bot} \\
& = &
[\XI \after b, \kappa_{2}] \after \xi_{p^\bot} \\
& = &
[q_{2}, \kappa_{2}] \after \xi_{p^\bot} \\
& = &
p\cdot q_{2}.
\end{array}$$
\end{itemize}
}

Finally, if $X/p \cong 0$, then $\downset p^{\bot} \cong \Pred(X/p)
\cong \Pred(0) \cong 1$. But then $\downset p^{\bot} = \{0\}$, so that
$p^{\bot} = \zero$ and thus $p = \one$.

\item Let $p$ be a predicate on $X$ and $q$ on $X/p$. Then
by Lemma~\ref{lem:ker}~\eqref{lem:ker:comp}:
$$\begin{array}{rcccccccl}
\ker(\xi_{q} \after \xi_{p})
& = &
\pbox{\xi_{p}}(\ker(\xi_{q}))
& = &
\pbox{\xi_{p}}(q) 
& = &
\big(q^{\bot} \after \xi_{p}\big)^{\bot} 
& = &
\big(p^{\bot} \cdot q^{\bot}\big)^{\bot}.
\end{array}$$

\noindent Hence there is a unique (total) map $f\colon
X/(p^{\bot}\cdot q^{\bot})^{\bot} \rightarrow (X/p)/q$ with $f \after
\xi_{X/(p^{\bot}\cdot q^{\bot})^{\bot}} = \xi_{q} \after \xi_{p}$. In
the other direction we proceed in two steps.  First, by construction,
$p^{\bot}\cdot q^{\bot} \leq p^{\bot}$, so $p \leq (p^{\bot}\cdot
q^{\bot})^{\bot}$. This yields a map $g\colon X/p \rightarrow
X/(p^{\bot}\cdot q^{\bot})^{\bot}$ with $g \after \xi_{p} =
\xi_{(p^{\bot}\cdot q^{\bot})^{\bot}}$. We wish to show that there is
a total map $h\colon (X/p)/q \rightarrow X/(p^{\bot}\cdot
q^{\bot})^{\bot}$ with $h\after \xi_{q} = g$. This works if $\ker(g) =
q \colon X/p \rightarrow I$. Because the $\xi$'s are epic in
$\cat{C}$ we are done by:
$$\begin{array}{rcl}
\kerbot(g) \after \xi_{p}
\hspace*{\arraycolsep}=\hspace*{\arraycolsep}
\one \after g \after \xi_{p} 
\hspace*{\arraycolsep}=\hspace*{\arraycolsep}
\one \after \xi_{(p^{\bot}\cdot q^{\bot})^{\bot}} 
& = &
\kerbot(\xi_{(p^{\bot}\cdot q^{\bot})^{\bot}}) \\
& = &
p^{\bot}\cdot q^{\bot} \\
& = &
q^{\bot} \after \xi_{p}.
\end{array}$$

\noindent We now have (total) maps $(X/p)/q \rightleftarrows
X/(p^{\bot}\cdot q^{\bot})^{\bot}$, which are each other's inverses
because they commute with the $\xi$'s.

We leave it to the reader to construct similar maps
$(X/p)/(r^{\bot}/p^{\bot})^{\bot} \rightleftarrows X/r$ for $p\leq r$.

\auxproof{
Let $p\leq r$. Then $r^{\bot} \leq p^{\bot}$, so we have
$r^{\bot}/p^{\bot} \colon X/p \rightarrow 1+1$ with $r^{\bot}/p^{\bot}
\after \xi_{p} = r^{\bot}$. We have:
$$\begin{array}{rcl}
\ker(\xi_{(r^{\bot}/p^{\bot})^{\bot}} \after \xi_{p})
& = &
\square(\xi_{p})\big(\ker(\xi_{(r^{\bot}/p^{\bot})^{\bot}})\big) \\
& = &
\square(\xi_{p})\big((r^{\bot}/p^{\bot})^{\bot}\big) \\
& = &
[(r^{\bot}/p^{\bot})^{\bot}, \kappa_{1}] \after \xi_{p} \\
& = &
[\kappa_{2}, \kappa_{1}] \after [r^{\bot}/p^{\bot}, \kappa_{2}] \after \xi_{p} \\
& = &
[\kappa_{2}, \kappa_{1}] \after r^{\bot} \\
& = &
r.
\end{array}$$

\noindent Hence there is a total map $f\colon X/r \rightarrow
(X/p)/(r^{\bot}/p^{\bot})^{\bot}$ with $J(f) \after \xi_{r} =
\xi_{(r^{\bot}/p^{\bot})^{\bot}} \after \xi_{p}$.

In the other direct we proceed in two steps. First, since $p \leq r$
we have by functoriality of quotients a unique map $g\colon X/p
\rightarrow X/r+1$ with $g \after \xi_{p} = \xi_{r}$. In a next step
we see a total map $h\colon (X/p)/(p^{\bot}/r^{\bot})^{\bot}
\rightarrow X/r$ with $J(h) \after \xi_{(p^{\bot}/r^{\bot})^{\bot}}
= g$. This requires an equality $\ker(g) =
(r^{\bot}/p^{\bot})^{\bot}$, and thus $\ker(g)^{\bot} \after \xi_{p}
= r^{\bot}$. This last equation is obtained as follows.
$$\begin{array}{rcl}
\ker(g)^{\bot} \after \xi_{p}
& = &
[[\kappa_{2}, \kappa_{1}] \after \ker(g), \kappa_{2}] \after \xi_{p} \\
& = &
[\kappa_{2}, \kappa_{1}] \after [\ker(g), \kappa_{2}] \after \xi_{p} \\
& = &
\square(\xi_{p})(\ker(g))^{\bot} \\
& = &
\ker(g \after \xi_{p})^{\bot} \\
& = &
\ker(\xi_{r})^{\bot} \\
& = &
r^{\bot}.
\end{array}$$
}

\item Assuming images in $\cat{C}$ we have for each map $f\colon X
  \rightarrow Y$ a predicate $\img(f) \colon Y \rightarrow I$, with
  quotient map $\xi_{\img(f)} \colon Y \rightarrow Y/\img(f)$. We
  claim that $\xi_{\img(f)} \after f = \zero \colon X \rightarrow
  X/\img(f)$. This follows by
  Definition~\ref{def:FinPACwE}~\eqref{def:FinPACwE:zero} from:
$$\begin{array}{rcccccccccl}
\one \after \xi_{\img(f)} \after f
& = &
\kerbot(\xi_{\img(f)}) \after f
& = &
\imgbot(f) \after f
& = &
\pbox{f}(\img(f))^{\bot}
& = &
\one^{\bot}
& = &
\zero.
\end{array}$$

\noindent Now suppose we have a map $g\colon Y \rightarrow Z$ with $g
\after f = g \after \zero = \zero$. Then $\img(f) \leq \ker(g)$ by
Lemma~\ref{lem:img}~\eqref{lem:img:compzero}. This yields the required
unique map $\overline{g} \colon Y/\img(f) \rightarrow Z$ with
$\overline{g} \after \xi_{\img(f)} = g$.

Finally, if $\img(f) = \one$, then $Y/\img(f) = Y/\one \cong 0$, by
point~\eqref{lem:quot:zeroone}, so that $\coker(f) = \xi_{\img(f)}
\colon Y \rightarrow 0$ is the zero map. Conversely, let $\coker(f) =
\zero$. By Lemma~\ref{lem:ker}~\eqref{lem:ker:pred} we have
$\ker(\imgbot(f)) = \img(f) \colon Y \rightarrow I$. Hence there is a
unique map $g\colon X/\img(f) \rightarrow I$ with $\imgbot(f) = g
\after \xi_{\img(f)} = g \after \coker(f) = g\after \zero =
\zero$. But then $\img(f) = \one$. \QED
\end{enumerate}
\end{proof}

Several of the points in Lemma~\ref{lem:quot} are used to prove that
quotients give rise to a factorisation system. Such a system is given
by two collections of map, called `abstract monos' and `abstract
epis', satisfying certain properties, see \textit{e.g.}~\cite{BarrW85}
for details.

\begin{proposition}
\label{prop:quotfactorisation}
Each effectus with quotients in partial form has a factorisation
system\index{S}{factorisation system!-- via quotient maps} given by
internal monos (\textit{i.e.}~total maps) and quotient maps $\xi$.
\end{proposition}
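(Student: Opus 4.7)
The proof will verify the three ingredients of an orthogonal factorisation system: a factorisation of every morphism, a unique diagonal fill-in against the opposing class, and the closure properties of the two classes. The factorisation is handed to us almost directly by Lemma~\ref{lem:quot}~\eqref{lem:quot:univ}: for an arbitrary $f \colon X \to Y$ I set $p = \ker(f)$; then the ``equality'' clause of that lemma supplies a unique \emph{total} map $\overline{f} \colon X/p \to Y$ with $\overline{f} \after \xi_{p} = f$, and totality is exactly the internal-mono property by Lemma~\ref{lem:ker}~\eqref{lem:ker:mono}. Thus every morphism splits as $f = \overline{f} \after \xi_{\ker(f)}$ with the desired form.

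For the diagonal fill-in, suppose we are given a commutative square $m \after f = g \after \xi_{p}$ with $\xi_{p}\colon X \twoheadrightarrow X/p$ a quotient map and $m\colon A \to B$ total. The plan is to show $p \leq \ker(f)$, so that Lemma~\ref{lem:quot}~\eqref{lem:quot:univ} supplies the diagonal $d\colon X/p \to A$ with $d \after \xi_{p} = f$. The key calculation is
\[
\ker(f) \;=\; \pbox{f}(\ker m) \;=\; \ker(m \after f) \;=\; \ker(g \after \xi_{p}) \;=\; \pbox{\xi_{p}}(\ker g) \;\geq\; \ker(\xi_{p}) \;=\; p,
\]
which uses $\ker(m) = \zero$ (Lemma~\ref{lem:ker}~\eqref{lem:ker:mono}), the compositional identity for kernels (Lemma~\ref{lem:ker}~\eqref{lem:ker:comp}), monotonicity of partial substitution (Lemma~\ref{lem:ker}~\eqref{lem:ker:ord}), and finally $\ker(\xi_{p}) = p$ (Lemma~\ref{lem:quot}~\eqref{lem:quot:XiEqn}). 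Commutativity of the second triangle $m \after d = g$, and uniqueness of $d$, then both follow immediately from $\xi_{p}$ being epic (Lemma~\ref{lem:quot}~\eqref{lem:quot:XiEpi}).

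The closure properties assemble previous results: total maps form a subcategory (Lemma~\ref{lem:ker}~\eqref{lem:ker:monocat}) containing all isomorphisms (Lemma~\ref{lem:FinPACwE}~\eqref{lem:FinPACwE:splitmonic}); and by Lemma~\ref{lem:quot}~\eqref{lem:quot:comp} two composable quotient maps compose, up to a total isomorphism, to another quotient map, while $\xi_{\zero}$ is itself a total isomorphism by Lemma~\ref{lem:quot}~\eqref{lem:quot:zeroone}. The main conceptual obstacle I foresee is precisely this ``up to isomorphism'' gap: the composite $\xi_{q} \after \xi_{p}$ is not literally of the form $\xi_{r}$, only isomorphic to one. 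The clean way to sidestep this is to take the ``quotient class'' in the factorisation system to be the closure of $\{\xi_{p}\}$ under post-composition with total isomorphisms --- or equivalently, to characterise it as the class left-orthogonal to the total maps --- in which case both composition closure and containment of all isomorphisms follow tautologically from the fill-in already established above.
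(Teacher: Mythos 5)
Your proposal is correct and follows essentially the same route as the paper's proof: the factorisation via $f = \overline{f} \after \xi_{\ker(f)}$ from Lemma~\ref{lem:quot}~\eqref{lem:quot:univ}, the diagonal fill-in obtained from the kernel inequality $\ker(f) = \ker(m\after f) = \ker(g\after\xi_p) \geq \ker(\xi_p) = p$ together with $\xi_p$ being epic, and the same closure lemmas. Your closing remark about closing the quotient class under total isomorphisms is a reasonable way to make precise the ``up to isomorphism'' composition in Lemma~\ref{lem:quot}~\eqref{lem:quot:comp}, a point the paper's proof passes over silently; the paper additionally records explicitly that the intersection of the two classes consists of isomorphisms (a total $\xi_p$ forces $p=\zero$), which in your orthogonality-based formulation comes for free.
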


\begin{proof}
An arbitrary map $f\colon X \rightarrow Y$ can be factored as:
$$\xymatrix@R-.5pc{
X\ar[rr]^-{f}\ar@{->>}@/_1ex/[dr]_-{\xi_{\ker(f)}} & & Y
\\
& X/\ker(f)\ar@{..>}@/_1ex/[ur]_-{f'\text{, total}}
}$$

\noindent The fact that the map $f'$ is total follows from
Lemma~\ref{lem:quot}~\eqref{lem:quot:univ}.

By Lemma~\ref{lem:ker}~\eqref{lem:ker:monocat} the total (internally
monic) maps are closed under composition. The $\xi$'s are also closed
under composition, by Lemma~\ref{lem:quot}~\eqref{lem:quot:comp}. The
intersection of the these two classes consists of isomorphisms: if
$\xi_{p} \colon X \rightarrow X/p$ is total, then $\one = \one \after
\xi_{p} = \kerbot(\xi_{p}) = p^{\bot}$, so that $p=\zero$, making
$\xi_{p} = \xi_{\zero} \colon X \rightarrow X/\zero$ an isomorphism,
see Lemma~\ref{lem:quot}~\eqref{lem:quot:zeroone}.

Finally we check the diagonal-fill-in property. In a commuting diagram
as on the left below, there is a unique diagonal as on the right,
making both triangles commute.
$$\xymatrix@C+1pc{
X\ar@{->>}[r]^-{\xi_p}\ar[d]_{g} & X/p\ar[d]^{h}
& &
X\ar@{->>}[r]^-{\xi_p}\ar[d]_{g} & X/p\ar[d]^{h}\ar@{..>}[dl]
\\
Y\ar[r]_-{f}^-{\text{total}} & Z
& &
Y\ar[r]_-{f}^-{\text{total}} & Z
}$$

\noindent We have by Lemma~\ref{lem:ker}~\eqref{lem:ker:tot}
and~\eqref{lem:ker:ord}, using that $f$ is total,
$$\begin{array}{rcccccccl}
\ker(g)
& = &
\ker(f \after g)
& = &
\ker(h \after \xi_{p})
& \geq &
\ker(\xi_{p})
& = &
p.
\end{array}$$

\noindent Hence there is a unique map $k\colon X/p \rightarrow Y$ with
$k \after \xi_{p} = g$. We obtain $f \after k = h$ by using that
$\xi_{p}$ is epic. \QED
\end{proof}

\section{Boolean effectuses and extensive categories}\label{sec:extcat}

The main result of this section says that a Boolean effectus with
comprehension is the same thing as an extensive category. This is an
important coincidence of two categorical notions, which gives a deeper
understanding of what Boolean effectuses are. These extensive
categories are well-known from the literature~\cite{CarboniLW93} and
capture well-behaved coproducts.

In this section we shall work in the `total' language of effectuses,
and not the `partial' version. The reason is that we will work towards
a correspondence between extensive categories and certain effectuses
in total form--- and not between extensive categories and effectuses
in partial form.

We recall from Definition~\ref{def:commbool} that an effectus is
Boolean if it is commutative --- with assert maps $\asrt_{p} \colon X
\to X+1$ for predicates $p\colon X \to 1+1$ as inverse to $\kerbot$
--- which satisfy $\asrt_{p} \pafter \asrt_{p^\bot} = \zero$. We start
with an auxiliary result about comprehension in commutative
effectuses.

\begin{lemma}
\label{lem:commcmprasrt}
In a commutative effectus $\cat{B}$ with comprehension the following
two diagrams are equalisers in $\cat{B}$.
$$\xymatrix@C+0pc{
\cmpr{X}{p^\bot}\ar@{ >->}[r]^-{\pi_{p^\bot}} &
   X\ar@/^1.5ex/[r]^-{\asrt_{p}}\ar@/_1.5ex/[r]_-{\zero} & X\rlap{$+1$}
& &
\cmpr{X}{p}\ar@{ >->}[r]^-{\pi_{p}} &
   X\ar@/^1.5ex/[r]^-{\asrt_p}\ar@/_1.5ex/[r]_-{\kappa_{1}} & X\rlap{$+1$}
}$$
\end{lemma}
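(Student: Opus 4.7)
The plan is to verify commutation of each diagram and then extract the universal property, in both cases by translating the condition on $\asrt_p$ into a condition on the predicate $p$ itself, using the key equivalence from Lemma~\ref{lem:commeffectus}~\eqref{lem:commeffectus:zero} that $\asrt_p \pafter g = \zero$ iff $p \pafter g = \zero$. Uniqueness of the mediating map is free in both cases, since comprehension maps are monic by Theorem~\ref{thm:totpartcmpr}.

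For the first diagram, commutation reduces to showing $\asrt_p \after \pi_{p^\bot} = \zero$. Lemma~\ref{lem:totcmpr}~\eqref{lem:totcmpr:bot}, applied with $p$ replaced by $p^\bot$, gives $p \after \pi_{p^\bot} = \zero$; invoking the equivalence above (with $g = \klin{\pi_{p^\bot}}$) yields $\asrt_p \pafter \klin{\pi_{p^\bot}} = \zero$, which unfolds to the required equation in $\cat{B}$. For the universal property, given a total $f \colon Y \to X$ with $\asrt_p \after f = \zero$, the same equivalence gives $\tbox{f}(p) = p \after f = \zero$, hence $\tbox{f}(p^\bot) = \one$, so $f$ factors uniquely through $\pi_{p^\bot}$ by the defining property of comprehension.

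For the second diagram, I first need to verify $\asrt_p \after \pi_p = \kappa_1 \after \pi_p$, i.e., $\asrt_p \pafter \klin{\pi_p} = \klin{\pi_p}$ in $\Par(\cat{B})$. The decomposition $\idmap = \asrt_p \ovee \asrt_{p^\bot}$ in $\sEnd(X)$ gives $\klin{\pi_p} = (\asrt_p \pafter \klin{\pi_p}) \ovee (\asrt_{p^\bot} \pafter \klin{\pi_p})$; since $p^\bot \after \pi_p = \zero$ (Lemma~\ref{lem:totcmpr}~\eqref{lem:totcmpr:bot}), the equivalence above makes the second summand vanish, and we are left with $\klin{\pi_p} = \asrt_p \pafter \klin{\pi_p}$. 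The universal property follows by the same argument in reverse: if $f \colon Y \to X$ is total with $\asrt_p \after f = \kappa_1 \after f$, then applying the decomposition $\idmap = \asrt_p \ovee \asrt_{p^\bot}$ and using cancellation in the PCM (Proposition~\ref{prop:effectusPCMorder}~\eqref{prop:effectusPCMorder:canc}) forces $\asrt_{p^\bot} \pafter \klin{f} = \zero$, whence $p^\bot \after f = \zero$, whence $\tbox{f}(p) = \one$, whence $f$ factors uniquely through $\pi_p$.

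The only delicate point is the translation between the partial-map description of $\asrt_p$ (as a map $X \pto X$ below the identity) and the total-map description in $\cat{B}$ (as a map $X \to X+1$) used in the statement of the equalisers; this is entirely routine via the identification $\klin{(-)} = \kappa_1 \after (-)$, and there is no genuine obstacle once the equivalence from Lemma~\ref{lem:commeffectus}~\eqref{lem:commeffectus:zero} is in hand.
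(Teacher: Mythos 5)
Your proof is correct and follows essentially the same route as the paper: both arguments rest on $\kerbot(\asrt_p)=p$, the decomposition $\asrt_p\ovee\asrt_{p^\bot}=\idmap$ in $\sEnd(X)$, and the universal property of comprehension, with uniqueness coming from monicity of the $\pi$'s. The only (immaterial) differences are that the paper obtains the first equaliser by citing the kernel-map property of Lemma~\ref{lem:partcmpr} directly, and for the second it deduces $\asrt_p\after\pi_p=\kappa_1\after\pi_p$ from the inequality $\asrt_p\leq\idmap$ plus cancellation of $\kerbot$, whereas you kill the summand $\asrt_{p^\bot}\pafter\klin{\pi_p}$ outright via Lemma~\ref{lem:commeffectus}~\eqref{lem:commeffectus:zero}.
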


\begin{proof}
The equaliser on the left in $\cat{B}$ is in essence the kernel map
property in $\Par(\cat{B})$ from
Lemma~\ref{lem:partcmpr}~\eqref{lem:partcmpr:kermap}. For the one on
the right we recall that $\asrt_{p} \leq \idmap$ in the homset of
partial maps $X\pto X$, by definition. Hence:
$$\begin{array}{rcccccl}
\asrt_{p} \after \pi_{p}
& = &
\asrt_{p} \pafter \klin{\pi_{p}}
& \leq &
\idmap \pafter \klin{\pi_{p}}
& = &
\kappa_{1} \after \pi_{p}.
\end{array}$$

\noindent We wish to show that this inequality is actually an
equality.  So let $f\colon X\pto X$ satisfy $(\asrt_{p} \after
\pi_{p}) \ovee f = (\kappa_{1} \after \pi_{p})$. Applying the map of
effect algebras $\kerbot = (\bang+\idmap) \after (-)$ on both sides
yields in $\Pred(\cmpr{X}{p})$,
$$\begin{array}{rcll}
\one 
\hspace*{\arraycolsep}=\hspace*{\arraycolsep}
\kappa_{1} \after \bang
\hspace*{\arraycolsep}=\hspace*{\arraycolsep}
\kappa_{1} \after \bang \after \pi_{p}
& = &
\kerbot(\kappa_{1} \after \pi_{1}) \\
& = &
\kerbot(\asrt_{p} \after \pi_{p}) \ovee \kerbot(f) \\
& = &
((\bang+\idmap) \after \asrt_{p} \after \pi_{p}) \ovee \kerbot(f) \\
& = &
(p \after \pi_{p}) \ovee \kerbot(f) \\
& = &
\one \ovee \kerbot(f).
\end{array}$$

\noindent Hence $\kerbot(f) = \zero$ by cancellation in the effect
algebra $\Pred(\cmpr{X}{p})$, and thus $f = \zero$ by
Lemma~\ref{lem:ker}~\eqref{lem:ker:zero}.  But then $\kappa_{1} \after
\pi_{p} = (\asrt_{p} \after \pi_{p}) \ovee f = \asrt_{p} \after
\pi_{p}$.

Next, let $f\colon Y \to X$ in $\cat{B}$ satisfy $\asrt_{p} \after f
= \kappa_{1} \after f$. Then:
$$\begin{array}{rcl}
\tbox{f}(p)
\hspace*{\arraycolsep}=\hspace*{\arraycolsep}
p \after f
& = &
(\bang+\idmap) \after \asrt_{p} \after f \\
& = &
(\bang+\idmap) \after \kappa_{1} \after f
\hspace*{\arraycolsep}=\hspace*{\arraycolsep}
\kappa_{1} \after \bang \after f
\hspace*{\arraycolsep}=\hspace*{\arraycolsep}
\kappa_{1} \after \bang
\hspace*{\arraycolsep}=\hspace*{\arraycolsep}
\one.
\end{array}$$

\noindent Hence $f$ factors in a unique way through $\pi_{p}$. \QED

\auxproof{
We elaborate the one on the left in detail in $\cat{B}$, instead of in
$\Par(\cat{B})$, like in
Lemma~\ref{lem:partcmpr}~\eqref{lem:partcmpr:kermap}. We have:
$$\begin{array}{rcccccl}
1 \pafter (\asrt_{p} \after \pi_{p^\bot})
& = &
(1 \pafter \asrt_{p}) \after \pi_{p^\bot}
& = &
p \after \pi_{p^\bot}
& = &
0.
\end{array}$$

\noindent Hence $\asrt_{p} \after \pi_{p^\bot} = 0$. Further, if $f\colon
Y \rightarrow X$ satisfies $\asrt_{p} \after f = 0 \after f = 0$, then:
$$\begin{array}{rcccccccccl}
\tbox{f}(p^{\bot})
& = &
[\kappa_{2}, \kappa_{1}] \after p \after f
& = &
\big((\bang+\idmap) \after \asrt_{p} \after f\big)^{\bot}
& = &
\big((\bang+\idmap) \after 0\big)^{\bot}
& = &
0^{\bot}
& = &
1.
\end{array}$$

\noindent Hence $f$ factors uniquely through $\pi_{p^\bot}$.
}
\end{proof}

We turn to extensive categories. There are several equivalent
formulations, see~\cite{CarboniLW93}, but we use the most standard
one.

\begin{definition}
\label{def:extensive}
A category is called \emph{extensive}\index{S}{extensive
  category}\index{S}{category!extensive --} if it has finite
coproducts $(+,0)$ such that pullbacks along coprojections exist, and
in every diagram of the form,
$$\xymatrix{
Z_{1}\ar[r]\ar[d] & Z\ar[d] & Z_{2}\ar[l]\ar[d]
\\
X\ar[r]_-{\kappa_{1}} & X+Y & Y\ar[l]^-{\kappa_{2}}
}$$

\noindent the two rectangles are pullbacks if and only if the top row
is a coproduct, that is, the induced map $Z_{1}+Z_{2} \rightarrow Z$
is an isomorphism.
\end{definition}

There are many examples of extensive categories, with `well-behaved'
coproducts. For instance, every topos --- including $\Sets$ --- is an
extensive category, see \textit{e.g.}~\cite{CarboniLW93}. At the end
of this section we list some concrete examples.

For the record we recall the following basic observation
from~\cite{CarboniLW93}. The first point is the anologue of
Proposition~\ref{prop:effectuscoproj} for effectuses.

\begin{lemma}
\label{lem:extensive}
In an extensive category, 
\begin{enumerate}
\item \label{lem:extensive:coproj} coprojections are are monic and
  disjoint, and the initial object $0$ is strict;

\item \label{lem:extensive:pb} If the rectangles on the left below are
  pullbacks, for $i=1,2$, then the rectangle on the right is a pulback
  too.
$$\xymatrix@C+0.5pc{
A_{i}\ar[d]_{f_i}\ar[r]^-{g_i}\pullback & X\ar[d]^{f}
& &
A_{1}+A_{2}\ar[d]_{f_{1}+f_{2}}\ar[r]^-{[g_{1},g_{2}]}\pullback & X\ar[d]^{f}
\\
B_{i}\ar[r]_-{h_i} & Y
& &
B_{1}+B_{2}\ar[r]_-{[h_{1},h_{2}]} & Y
}$$

\end{enumerate}
\end{lemma}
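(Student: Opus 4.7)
The plan for part (1) is to exploit the extensivity axiom by feeding it trivial coproduct diagrams. For monicity and disjointness of the coprojections $\kappa_1,\kappa_2 \colon X, Y \to X+Y$, I would apply the ``only if'' direction of extensivity to the identity coproduct diagram
$$\xymatrix@R-.4pc{
X\ar@{=}[d]\ar[r]^-{\kappa_1} & X+Y\ar@{=}[d] & Y\ar@{=}[d]\ar[l]_-{\kappa_2} \\
X\ar[r]_-{\kappa_1} & X+Y & Y\ar[l]^-{\kappa_2}
}$$
whose top row is a coproduct. This immediately makes both squares pullbacks, so the pullback of $\kappa_1$ along itself is $X$ with $\idmap$ (a standard criterion equivalent to monicity of $\kappa_1$); disjointness then comes from pulling $\kappa_2$ back along $\kappa_1$ via extensivity, combined with a similar diagram having $Y$ on the right with identity, forcing the apex on the left to be initial. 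For strictness, given $f\colon X\to 0$, I would use the ``if'' direction on
$$\xymatrix@R-.4pc{
X\ar@{=}[d]\ar@{=}[r] & X\ar[d]^{f} & X\ar@{=}[d]\ar@{=}[l] \\
0\ar[r]_-{\kappa_1} & 0 & 0\ar[l]^-{\kappa_2}
}$$
where the bottom $0 = 0+0$ is a coproduct by initiality of $0$ and both squares are trivially pullbacks. Extensivity then tells us the top row $X=X=X$ is a coproduct, so $(X,\idmap,\idmap)$ presents $X+X$. Any two morphisms $g,h\colon X\to B$ must therefore be equal (cotuple along the two ``identity coprojections''), so $X$ has at most one arrow to any object. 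Since $\bang\colon 0\to X$ and $f\colon X\to 0$ both exist, their composite $\bang\after f$ must equal $\idmap_X$, and $X\cong 0$.

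For part (2), suppose $(T,u,t)$ is a test cone, with $u\colon T\to X$ and $t\colon T\to B_1+B_2$ satisfying $[h_1,h_2]\after t = f\after u$. I would apply extensivity to the coproduct $B_1+B_2$ pulled back along $t$: this produces objects $T_1,T_2$, maps $t_i\colon T_i\to B_i$ and $s_i\colon T_i\to T$ with $\kappa_i\after t_i = t\after s_i$, and crucially $T\cong T_1+T_2$ via $[s_1,s_2]$. Then on each component the equation $h_i\after t_i = f\after(u\after s_i)$ holds, so the hypothesis that each $A_i\to B_i,\, X\to Y$ is a pullback yields a unique $a_i\colon T_i\to A_i$ with $f_i\after a_i = t_i$ and $g_i\after a_i = u\after s_i$. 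The mediating map is then $m = [\kappa_i^A\after a_i]\after [s_1,s_2]^{-1}\colon T\to A_1+A_2$, and its two defining equations $(f_1+f_2)\after m = t$ and $[g_1,g_2]\after m = u$ are verified componentwise.

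For uniqueness, given any second mediating map $m'\colon T\to A_1+A_2$, I would pull the coproduct $A_1+A_2$ back along $m'$ using extensivity to obtain a decomposition $T\cong T_1'+T_2'$ with $T_i'\to A_i$, then observe that the composite pullback of $B_i\hookrightarrow B_1+B_2$ along $(f_1+f_2)\after m' = t$ factors through this decomposition (using the pullback lemma and the fact, established by applying extensivity to $f_1+f_2$, that $A_i$ is the pullback of $\kappa_i^B$ along $f_1+f_2$). This identifies $T_i'$ with $T_i$ and reduces uniqueness of $m'$ to uniqueness of each $a_i$, which we already have. The main obstacle I expect is precisely this uniqueness step: keeping track of the various canonical isomorphisms coming from repeated applications of extensivity, and verifying that the two decompositions $T\cong T_1+T_2$ (via $t$) and $T\cong T_1'+T_2'$ (via $m'$) coincide compatibly with the structure maps, so that the pointwise uniqueness from the hypothesized pullbacks of $A_i\to B_i,\,X\to Y$ really does yield global uniqueness on $T$.
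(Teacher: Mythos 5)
Your proposal is correct and follows essentially the same route as the paper: part (1) is obtained by feeding extensivity degenerate coproduct diagrams (the paper uses the $0+Y\cong Y$ diagram to get monicity and disjointness in one stroke, and likewise concludes strictness from the top row $X=X=X$ being a coproduct, there phrased as $\nabla\colon X+X\to X$ being iso so that $\kappa_1=\kappa_2$), and part (2) constructs the mediating map exactly as you do, by pulling the coprojections of $B_1+B_2$ back along $t$ to split the test object as $T\cong T_1+T_2$ and cotupling the componentwise mediators, with uniqueness handled by the same decomposition-along-$m'$ argument you sketch. The bookkeeping you worry about in the uniqueness step is resolved just as you anticipate, via the pullback lemma and the fact that each $A_i$ is the pullback of $\kappa_i\colon B_i\to B_1+B_2$ along $f_1+f_2$.
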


\begin{proof}
For the first point, consider the rectangles:
$$\xymatrix{
0\ar[r]\ar[d] & Y\ar[d]^-{\kappa_2} & Y\ar@{=}[l]\ar@{=}[d]
\\
X\ar[r]_-{\kappa_{1}} & X+Y & Y\ar[l]^-{\kappa_{2}}
}$$

\noindent The top row is a coproduct diagram. Hence the two rectangles
are pullbacks. The one on the left says that coprojections are
disjoint, and the one on the right says that $\kappa_{2}$ is monic.

For strictness of $0$, we have to prove that a map $f\colon X
\rightarrow 0$ is an isomorphism. Consider the diagram:
$$\xymatrix{
X\ar@{=}[r]\ar[d]_{f}\pullback & 
   X\ar[d]|-{\kappa_{1} \after f = \kappa_{2} \after f} & 
   X\ar@{=}[l]\ar[d]^{f}\pullback[dl]
\\
0\ar[r]_-{\kappa_{1}} & 0+0 & Y\ar[l]^-{\kappa_{2}}
}$$

\noindent The two rectangles are pullbacks because the coprojections
are monic. Hence the top row is a coproduct diagram, so that the
codiagonal $\nabla = [\idmap,\idmap] \colon X+X \rightarrow X$ is an
isomorphism. Using $\nabla \after \kappa_{1} = \idmap = \nabla \after
\kappa_{2}$ we get $\kappa_{1} = \kappa_{2} \colon X \rightarrow X+X$.
Now we can now prove that $\bang\after f \colon X \rightarrow 0
\rightarrow X$ is the identity, via:
$$\begin{array}{rcccccl}
\bang \after f
& = &
[\bang\after f, \idmap[X]] \after \kappa_{1}
& = &
[\bang\after f, \idmap[X]] \after \kappa_{2}
& = &
\idmap[X].
\end{array}$$

\auxproof{
Let $g\colon Y \rightarrow X$ and $h\colon Y \rightarrow 0$
satisfy $\kappa_{1} \after f \after g = \kappa_{1} \after h$. Then
$f \after g = h$, so that $h$ is the unique mediating map.
}

We turn to the second point. Let $a\colon Z \rightarrow X$ and $b
\colon Z \rightarrow B_{1}+B_{2}$ satisfy $f \after a = [h_{1}, h_{2}]
\after b$. Form the pullbacks on the left, and the mediating maps
$c_{i}$ on the right:
$$\vcenter{\xymatrix{
Z_{1}\ar@{ >->}[r]^-{k_1}\ar[d]_{b_1}\pullback & 
   Z\ar[d]^{b} & 
   Z_{2}\ar@{ >->}[l]_-{k_2}\ar[d]^{b_2}\pullback[dl]
\\
B_{1}\ar@{ >->}[r]_-{\kappa_{1}} & B_{1}+B_{2} & B_{2}\ar@{ >->}[l]^-{\kappa_{2}}
}}
\qquad\mbox{and}\qquad
\vcenter{\xymatrix{
Z_{i}\ar@/_2ex/[ddr]_{b_i}\ar@/^2ex/[drr]^-{a \after k_{i}}
   \ar@{..>}[dr]^(0.6){c_i}
\\
& A_{i}\ar[d]_{f_i}\ar[r]^-{g_i}\pullback & X\ar[d]^{f}
\\
& B_{i}\ar[r]_-{h_i} & Y
}}$$

\noindent The outer diagram on the right commutes since:
$$\begin{array}{rcccccl}
f \after a \after k_{i}
& = &
[h_{1}, h_{2}] \after b \after k_{i}
& = &
[h_{1}, h_{2}] \after \kappa_{i} \after b_{i}
& = &
h_{i} \after b_{i}.
\end{array}$$

\noindent The unique mediating map is $c = (c_{1}+c_{2}) \after
[k_{1}, k_{2}]^{-1} \colon Z \rightarrow A_{1}+A_{2}$. \QED

\auxproof{
Indeed $[g_{1}, g_{2}] \after c = a$ and $(f_{1}+f_{2}) \after c = b$ since:
$$\begin{array}{rcl}
a \after [k_{1}, k_{2}]
& = &
[a \after k_{1}, a \after k_{2}] \\
& = &
[g_{1} \after c_{1}, g_{2} \after c_{2}] \\
& = &
[g_{1}, g_{2}] \after (c_{1}+c_{2})
\\
b \after [k_{1}, k_{2}]
& = &
[b \after k_{1}, b \after k_{2}] \\
& = &
[\kappa_{1} \after b_{1}, \kappa_{2} \after b_{2}] \\
& = &
[\kappa_{1} \after f_{1} \after c_{1}, \kappa_{2} \after f_{2} \after c_{2}] \\
& = &
(f_{1}+f_{2}) \after (c_{1}+c_{2}).
\end{array}$$

\noindent And if also $d \colon Z \rightarrow A_{1}+A_{2}$ satisfies
$[g_{1}, g_{2}] \after d = a$ and $(f_{1}+f_{2}) \after d = b$ then form
the pullbacks:
$$\xymatrix{
Z_{1}\ar@{ >->}[r]^-{k_1}\ar@{..>}[d]_{d_1}\pullback & 
   Z\ar[d]^{d} & 
   Z_{2}\ar@{ >->}[l]_-{k_2}\ar@{..>}[d]^{d_2}\pullback[dl]
\\
A_{1}\ar@{ >->}[r]_-{\kappa_{1}}\ar[d]_{f_1}\pullback & 
   A_{1}+A_{2}\ar[d]^{f_{1}+f_{2}} & 
   A_{2}\ar[d]^{f_2}\ar@{ >->}[l]^-{\kappa_{2}}\pullback[dl]
\\
B_{1}\ar@{ >->}[r]_-{\kappa_{1}} & B_{1}+B_{2} & B_{2}\ar@{ >->}[l]^-{\kappa_{2}}
}$$

\noindent Then $d_{i} = c_{i}$ since $f_{i} \after d_{i} = b_{i}$ by
construction, and:
$$\begin{array}{rcccccl}
g_{i} \after d_{i}
& = &
[g_{1}, g_{2}] \after \kappa_{i} \after d_{i}
& = &
[g_{1}, g_{2}] \after d \after k_{i}
& = &
a \after k_{i}.
\end{array}$$

\noindent Hence $c = d$ since:
$$\begin{array}{rcccccccl}
d \after [k_{1}, k_{2}]
& = &
[d \after k_{1}, d\after k_{2}]
& = &
[\kappa_{1} \after d_{1}, \kappa_{2} \after d_{2}]
& = &
[\kappa_{1} \after c_{1}, \kappa_{2} \after c_{2}]
& = &
c_{1}+c_{2}.
\end{array}$$
}
\end{proof}

\begin{proposition}
\label{prop:extensiveeffectus}
Each extensive category with a final object is a Boolean effectus with
comprehension.
\end{proposition}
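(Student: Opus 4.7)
\medskip

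\textbf{Plan.} Let $\cat{E}$ be an extensive category with final object $1$. The plan is to exploit the fundamental correspondence, provided by extensivity, between maps into coproducts and decompositions of the domain: a map $p \colon X \to 1+1$ corresponds bijectively to an isomorphism $X \cong X_1 + X_2$ (via pullbacks of $p$ along $\kappa_1, \kappa_2$), and more generally any partial map $f \colon X \to Y+1$ decomposes as a map on one summand with the other summand sent to $*$. This structural fact is the central lever.

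\medskip

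First I would verify the effectus axioms. The two pullback squares in Definition~\ref{def:effectus}\eqref{def:effectus:pb} follow from Lemma~\ref{lem:extensive}\eqref{lem:extensive:pb}: the left-hand square decomposes as a sum of two trivial pullback squares along the coproduct structure of $X+Y$, and the right-hand square encodes precisely that $\kappa_1 \colon X \to X+Y$ is the pullback of $\kappa_1 \colon 1 \to 1+1$ along $\bang+\bang$, immediate from extensivity. For joint monicity of $\IV, \XI \colon (1+1)+1 \to 1+1$, any map $f \colon A \to (1+1)+1$ induces by iterated extensivity a three-way decomposition $A \cong A_1+A_2+A_3$; the composites $\IV \after f$ and $\XI \after f$ together single out each $A_i$, so two maps with equal $\IV$- and $\XI$-composites must coincide.

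\medskip

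Next I would define comprehension by the characterisation in Lemma~\ref{lem:totcmpr}\eqref{lem:totcmpr:equalpb}: let $\cmpr{X}{p}$ be the pullback of $\kappa_1 \colon 1 \to 1+1$ along $p$. This exists because pullbacks along coprojections exist by extensivity, and its universal property directly yields the right adjoint to the truth functor. The distributivity isomorphism $\cmpr{X}{p} + Y \cong \cmpr{X+Y\,}{\,[p,\one]}$ of~\eqref{diag:cmprdefdistr} follows from Lemma~\ref{lem:extensive}\eqref{lem:extensive:pb} by splitting $[p,\one]$ across the decomposition of $X+Y$.

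\medskip

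The core of the argument is Booleanness. For each predicate $p$, write $\alpha_p \colon \cmpr{X}{p} + \cmpr{X}{p^\bot} \conglongrightarrow X$ for the decomposition isomorphism supplied by extensivity from the two pullbacks defining $\cmpr{X}{p}$ and $\cmpr{X}{p^\bot}$. Define
\[
\asrt_p \;=\; \Bigl(X \xrightarrow{\alpha_p^{-1}} \cmpr{X}{p} + \cmpr{X}{p^\bot} \xrightarrow{[\kappa_1 \after \pi_p,\; \kappa_2 \after \bang]} X+1\Bigr).
\]
A direct calculation gives $\kerbot(\asrt_p) = (\bang+\bang)\after \alpha_p^{-1} = p$, and the Boolean axiom $\asrt_p \pafter \asrt_{p^\bot} = \zero$ follows by tracing through the four-way decomposition that $\asrt_{p^\bot}$ sends both summands of $X$ into a component that $[\asrt_p, \kappa_2]$ routes to $\kappa_2$. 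Once the bijection $\kerbot \colon \sEnd(X) \conglongrightarrow \Pred(X)$ is established, Remark~\ref{rem:BooleanEffectusCommutation} yields the commutation equation $\asrt_p \pafter \asrt_q = \asrt_q \pafter \asrt_p$ automatically, so no separate commutativity argument is required.

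\medskip

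The main obstacle is proving that $\kerbot \colon \sEnd(X) \to \Pred(X)$ is bijective, in particular that every side-effect-free $f \leq \idmap[X]$ has the form $\asrt_{\kerbot(f)}$. My approach is to apply the extensive decomposition principle to $f \colon X \to X+1$ itself: pulling back along $\kappa_1 \colon X \to X+1$ and $\kappa_2 \colon 1 \to X+1$ produces $X \cong X_1 + X_2$ with $f = [\kappa_1 \after h_1,\; \kappa_2 \after \bang] \after \alpha^{-1}$ for a unique $h_1 \colon X_1 \to X$, and by construction $X_1 \cong \cmpr{X}{\kerbot(f)}$. The delicate point is extracting from $f \leq \idmap$ the extra identification $h_1 = \pi_{\kerbot(f)}$; I plan to handle this by producing an explicit bound $b \colon X \to (X+1)+(X+1)$ witnessing $f \ovee g = \idmap[X]$ for some $g$, then using cancellativity of extensive coproducts together with joint monicity (Lemma~\ref{lem:effectusjm}) to force $h_1$ to be the comprehension inclusion. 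With this in hand, all remaining properties of a Boolean effectus with comprehension fall out of the decomposition calculus already set up.
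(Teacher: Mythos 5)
Your proposal is correct and follows essentially the same route as the paper's proof: comprehension as the pullback of $\kappa_{1}$ along $p$, the decomposition isomorphism $[\pi_{p},\pi_{p^\bot}]$, the definition $\asrt_{p} = (\pi_{p}+\bang)\after[\pi_{p},\pi_{p^\bot}]^{-1}$ (your cotuple is exactly this map), the Boolean axiom by tracing the four-way decomposition, and commutation for free via Remark~\ref{rem:BooleanEffectusCommutation}. The ``delicate point'' you flag is resolved in the paper exactly as you sketch it: from $f\ovee g=\idmap$ one gets $\one\pafter g = \kerbot(f)^{\bot}$, hence $g\after\pi_{p}=\zero$ by Lemma~\ref{lem:zero}, and then $\kappa_{1}\after h = \kappa_{1}\after\pi_{p}$ forces $h=\pi_{p}$ since $\kappa_{1}$ is monic.
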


\begin{proof}
This is quite a bit of work. So let $\cat{A}$ be an extensive category
with a final object $1$.  We start by showing that $\cat{A}$ is an
effectus. We first concentrate on the two pullbacks in
Definition~\ref{def:effectus}:
$$\vcenter{\xymatrix{
X+Y\ar[r]^-{\idmap+ \bang}\ar[d]_{\bang+ \idmap} & 
   X + 1\ar[d]^{\bang+\idmap}
& &
X\ar[r]^-{\bang}\ar@{ >->}[d]_{\kappa_1} & 1\ar@{ >->}[d]^{\kappa_1}
\\
1+ Y\ar[r]_-{\idmap+ \bang} & 1 + 1
& &
X+Y\ar[r]_-{\bang+\bang} & 1 + 1
}}\eqno{(*)}$$

\noindent For convenience, we start with the diagram on the right. We
consider it as turned left-part of the following diagram:
$$\xymatrix{
X\ar@{ >->}[r]^-{\kappa_1}\ar[d]_{\bang} & X+Y\ar[d]^{\bang+\bang} & 
   Y\ar@{ >->}[l]_-{\kappa_2}\ar[d]^{\bang}
\\
1\ar@{ >->}[r]_-{\kappa_{1}} & 1+1 & 1\ar@{ >->}[l]^-{\kappa_{2}}
}$$

\noindent Since the top row is a coproduct diagram, both rectangles
are pullbacks in an extensive category, by definition. 

For the diagram on the left in $(*)$, it suffices, by
Lemma~\ref{lem:extensive}~\eqref{lem:extensive:pb} to prove that the
two diagrams below are pullbacks.
$$\xymatrix{
X\ar[r]^-{\kappa_{1}}\ar[d]_{\bang} & X+1\ar[d]^{\bang+\idmap}
& &
Y\ar@{=}[d]\ar[r]^-{\bang} & 1\ar[r]^-{\kappa_{2}}\ar@{=}[d] & X+1\ar[d]^{\bang+\idmap}
\\
1\ar[r]_-{\kappa_{1}} & 1+1
& &
Y\ar[r]_-{\bang} & 1\ar[r]_-{\kappa_{2}} & 1+1
}$$

\noindent The rectangle on the left is a pullback just like the
diagram on the right in~$(*)$. Similarly, the right-rectangle on the
right is a pullback. The left-rectangle on the right is obviously a
pullback, so we are done by the Pullback Lemma.

\auxproof{
Explicitly: let $f\colon Z \rightarrow X+1$
and $g\colon Z \rightarrow 1+Y$ satisfy $(\bang+\idmap) \after f =
(\idmap+\bang)  \after g = h$, say. Form the diagram:
$$\xymatrix{
Z_{1}\ar@{ >->}[r]^-{k_1}\ar[d]\pullback & 
   Z\ar[d]^{h} & 
   Z_{2}\ar@{ >->}[l]_-{k_2}\ar[d]\pullback[dl]
\\
1\ar@{ >->}[r]_-{\kappa_{1}} & 1+1 & 1\ar@{ >->}[l]^-{\kappa_{2}}
}$$

\noindent Then we can form the following diagrams for $f$ and $g$.
$$\xymatrix{
Z_{1}\ar@{ >->}[r]^-{k_1}\ar@{..>}[d]_{f_1}\pullback & 
   Z\ar[d]^-{f} & 
   Z_{2}\ar@{ >->}[l]_-{k_2}\ar@{..>}[d]^{\bang}\pullback[dl]
& &
Z_{1}\ar@{ >->}[r]^-{k_1}\ar@{..>}[d]_{\bang}\pullback & 
   Z\ar[d]^-{g} & 
   Z_{2}\ar@{ >->}[l]_-{k_2}\ar@{..>}[d]^{g_2}\pullback[dl]
\\
X\ar@{ >->}[r]_-{\kappa_{1}}\ar[d]_{\bang}\pullback & 
   X+1\ar[d]^{\bang+\idmap} & 
   1\ar@{ >->}[l]^-{\kappa_{2}}\ar@{=}[d]\pullback[dl]
& &
1\ar@{ >->}[r]_-{\kappa_{1}}\ar@{=}[d]_{\bang}\pullback & 
   1+Y\ar[d]^{\idmap+\bang} & 
   Y\ar@{ >->}[l]^-{\kappa_{2}}\ar[d]^{\bang}\pullback[dl]
\\
1\ar@{ >->}[r]_-{\kappa_{1}} & 1+1 & 1\ar@{ >->}[l]^-{\kappa_{2}}
& &
1\ar@{ >->}[r]_-{\kappa_{1}} & 1+1 & 1\ar@{ >->}[l]^-{\kappa_{2}}
}$$

\noindent We can now construct as mediating map:
$$\xymatrix@C+1pc{
h = \Big(Z\ar[r]_-{\cong}^-{[k_{1},k_{2}]^{-1}} & 
   Z_{1}+Z_{2}\ar[r]^-{f_{1}+g_{2}} & X+Y\Big)
}$$

\noindent Then:
$$\begin{array}{rcl}
(\idmap+\bang) \after h
& = &
(f_{1}+\bang) \after [k_{1},k_{2}]^{-1} \\
& = &
[\kappa_{1} \after f_{1}, \kappa_{2} \after \bang] \after [k_{1},k_{2}]^{-1} \\
& = &
[f \after k_{1}, f \after k_{2}] \after [k_{1},k_{2}]^{-1} \\
& = &
f
\\
(\bang+\idmap) \after h
& = &
(\bang+g_{2}) \after [k_{1},k_{2}]^{-1} \\
& = &
[\kappa_{1} \after \bang, \kappa_{2} \after g_{2}]  \after [k_{1},k_{2}]^{-1} \\
& = &
[g \after k_{1}, g \after k_{2}]  \after [k_{1},k_{2}]^{-1} \\
& = &
g.
\end{array}$$

\noindent For uniqueness let $\ell\colon Z \rightarrow X+Y$ also
satisfy $(\idmap+\bang) \after \ell = f$ and $(\bang+\idmap) \after \ell =
g$. Then $(\bang+\bang) \after \ell = h$, so that we have a diagram:
$$\xymatrix{
Z_{1}\ar@{ >->}[r]^-{k_1}\ar@{..>}[d]_{\ell_1}\pullback & 
   Z\ar[d]^-{\ell} & 
   Z_{2}\ar@{ >->}[l]_-{k_2}\ar@{..>}[d]^{\ell_2}\pullback[dl]
\\
X\ar@{ >->}[r]_-{\kappa_{1}}\ar[d]_{\bang}\pullback & 
   X+Y\ar[d]^{\bang+\bang} & 
   Y\ar@{ >->}[l]^-{\kappa_{2}}\ar[d]^{\bang}\pullback[dl]
\\
1\ar@{ >->}[r]_-{\kappa_{1}} & 1+1 & Y\ar@{ >->}[l]^-{\kappa_{2}}
}$$

\noindent We claim $\ell_{1} = f_{1}$ and $\ell_{2} = g_{2}$. This follows
from:
$$\begin{array}{rcl}
\kappa_{1} \after f_{1}
& = &
f \after k_{1} \\
& = &
(\idmap+\bang) \after \ell \after k_{1} \\
& = &
(\idmap+\bang) \after \kappa_{1} \after \ell_{1} \\
& = &
\kappa_{1} \after \ell_{1}
\\
\kappa_{2} \after g_{2}
& = &
g \after k_{2} \\
& = &
(\bang+\idmap) \after \ell \after k_{2} \\
& = &
(\bang+\idmap) \after \kappa_{2} \after \ell_{2} \\
& = &
\kappa_{2} \after \ell_{2}.
\end{array}$$

\noindent Hence $\ell = h = (f_{1}+g_{2}) \after [k_{1},k_{2}]^{-1}$ since:
$$\begin{array}{rcl}
\ell \after [k_{1}, k_{2}]
& = &
[\ell \after k_{1}, \ell \after k_{2}] \\
& = &
[\kappa_{1} \after \ell_{1}, \kappa_{2} \after \ell_{2}] \\
& = &
[\kappa_{1} \after f_{1}, \kappa_{2} \after g_{2}] \\
& = &
f_{1}+g_{2}.
\end{array}$$

We continue with the diagram on the right in $(*)$. Let $f\colon Z
\rightarrow X+Y$ satisfy $(\bang+\bang) \after f = \kappa_{1} \after \bang$. We
form the diagram:
$$\xymatrix{
Z_{1}\ar@{ >->}[r]^-{k_1}\ar[d]_{f_1}\pullback & 
   Z\ar[d]^-{f} & 
   Z_{2}\ar@{ >->}[l]_-{k_2}\ar[d]^{f_2}\pullback[dl]
\\
X\ar@{ >->}[r]_-{\kappa_{1}}\ar[d]_{\bang}\pullback & 
   X+Y\ar[d]^{\bang+\bang} & 
   Y\ar@{ >->}[l]^-{\kappa_{2}}\ar[d]^{\bang}\pullback[dl]
\\
1\ar@{ >->}[r]_-{\kappa_{1}} & 1+1 & Y\ar@{ >->}[l]^-{\kappa_{2}}
}$$

\noindent The lower two rectangles are pullbacks by definition (in an
extensive category). The upper two rectangles are constructed as
pullbacks. As a result, the cotuple $[k_{1},k_{2}] \colon Z_{1}+Z_{2}
\rightarrow Z$ is an isomorphism. We claim $Z_{2} \cong 0$. This
follows by Lemma~\ref{lem:extensive}~\eqref{lem:extensive:coproj}
because coprojections are disjoint, and $0$ is strict, in:
$$\xymatrix{
Z_{2}\ar@/^2ex/[drr]^-{\bang}\ar@/_2ex/[ddr]_{\bang}\ar@{..>}[dr]^-{\cong}
\\
& 0\ar[d]\ar[r]\pullback & 1\ar@{ >->}[d]^{\kappa_1}
\\
& 1\ar@{ >->}[r]_-{\kappa_{2}} & 1+1
}$$

\noindent The outer diagram commutes since:
$$\begin{array}{rcccccccl}
\kappa_{1} \after \bang
& = &
\kappa_{1} \after \bang \after k_{2}
& = &
(\bang+\bang) \after f \after k_{2}
& = &
\kappa_{2} \after  \bang \after f_{2}
& = &
\kappa_{2} \after \bang.
\end{array}$$

\noindent If $Z_{2} \cong 0$, then $k_{1} \colon Z_{1} \rightarrow Z$
is an isomorphism. Hence we obtain as (unique) mediating map $f_{1}
\after k_{1}^{-1} \colon Z \rightarrow X$, since:
$$\begin{array}{rcccl}
\kappa_{1} \after f_{1} \after k_{1}^{-1}
& = &
f \after k_{1} \after k_{1}^{-1}
& = &
f.
\end{array}$$
}

The next step is to prove that the two maps $\IV =
[\idmap,\kappa_{2}], \XI = [[\kappa_{2},\kappa_{1}],\kappa_{2}]$ are
jointly monic. We sketch how to proceed. If $f,g\colon Y \rightarrow
(1+1)+1$ satisfy $\IV \after f = \IV \after g$ and $\XI \after f = \XI
\after g$ then we decompose $f,g$ each in three parts, via pulbacks
with approriate coprojections $1\rightarrow (1+1)+1$, and show that
these three parts are equal. This is left to the interested reader.

\auxproof{
Next we wish to show that the maps $\IV = [\idmap,\kappa_{2}], \XI =
[[\kappa_{2},\kappa_{1}],\kappa_{2}]$ are jointly monic. Let
$f,g\colon Y \rightarrow (1+1)+1$ be maps with $\IV \after f = \IV
\after g = h$, say, and $\XI \after f = \XI \after g = h'$, say. We
form the diagrams:
$$\xymatrix{
Y_{1}\ar@{ >->}[r]^-{k_1}\ar[d]\pullback & Y\ar[d]^{h} & 
   Y_{2}\ar@{ >->}[l]_-{k_2}\ar[d]\pullback[dl]
& &
Y'_{1}\ar@{ >->}[r]^-{k'_1}\ar[d]\pullback & Y\ar[d]^{h'} & 
   Y'_{2}\ar@{ >->}[l]_-{k'_2}\ar[d]\pullback[dl]
\\
1\ar@{ >->}[r]_-{\kappa_{1}} & 1+1 & 1\ar@{ >->}[l]^-{\kappa_{2}}
& &
1\ar@{ >->}[r]_-{\kappa_{1}} & 1+1 & 1\ar@{ >->}[l]^-{\kappa_{2}}
}$$

\noindent Then $f_{11} = g_{11}$ and $f_{12} = g_{12}$ in:
$$\xymatrix@C+.5pc{
Y_{1}\ar@{..>}@<-.5ex>[d]_{f_{11}}\ar@{..>}@<+.5ex>[d]^{g_{11}}
   \ar@{ >->}[r]^-{k_1} & 
   Y\ar@<-.5ex>[d]_{f}\ar@<+.5ex>[d]^{g} 
& &
Y'_{1}\ar@{..>}@<-.5ex>[d]_{f_{12}}\ar@{..>}@<+.5ex>[d]^{g_{12}}
   \ar@{ >->}[r]^-{k'_1} & 
   Y\ar@<-.5ex>[d]_{f}\ar@<+.5ex>[d]^{g} 
\\
1\ar@{ >->}[r]_-{\kappa_{1}\after\kappa_{1}}\ar[d]_{\bang}\pullback & 
   (1+1)+1\ar[d]^{\IV} 
& &
1\ar@{ >->}[r]_-{\kappa_{1}\after\kappa_{2}}\ar[d]_{\bang}\pullback & 
   (1+1)+1\ar[d]^{\XI} 
\\
1\ar@{ >->}[r]_-{\kappa_{1}} & 1+1
& &
1\ar@{ >->}[r]_-{\kappa_{1}} & 1+1
}$$

\noindent The two lower rectangles are pullbacks since we can
reorganise them as standard pullbacks:
$$\xymatrix@R-.5pc@C+1pc{
1\ar@{=}[d]\ar@{ >->}[r]_-{\kappa_{1}\after\kappa_{1}}
   \ar@/^4ex/[rr]^{\kappa_1}\pullback & 
   (1+1)+1\ar[d]^{[\idmap,\kappa_{2}]}\ar[r]^-{\alpha^{-1}}_-{\cong} &
   1 + (1+1)\ar@/^2ex/[dl]^{\idmap+\nabla} \\
1\ar@{ >->}[r]_-{\kappa_{1}} & 1+1
}$$

$$\xymatrix@R-.5pc@C+1pc{
1\ar@{=}[d]\ar@{ >->}[r]_-{\kappa_{1}\after\kappa_{2}}
   \ar@/^4ex/[rrr]^{\kappa_1}\pullback & 
   (1+1)+1\ar[d]^{[[\kappa_{2},\kappa_{1}], \kappa_{2}]}
      \ar[rr]^-{[[\kappa_{2}\after\kappa_{1},\kappa_{1}],
   \kappa_{2}\after\kappa_{2}]}_-{\cong} & &
   1 + (1+1)\ar@/^2ex/[dll]^{\idmap+\nabla} \\
1\ar@{ >->}[r]_-{\kappa_1} & 1+1
}$$

\noindent By Lemma~\ref{lem:extensive}~\eqref{lem:extensive:pb} we
then obtain pullbacks of the following form, both for $f$ and $g$,
$$\xymatrix@C+.5pc{
Y_{1}+Y'_{1}\ar[d]_{\bang+\bang}\ar@{ >->}[r]^-{[k_{1},k'_{1}]}\pullback & 
   Y\ar@<-.5ex>[d]_{f}\ar@<+.5ex>[d]^{g} 
\\
1+1\ar@{ >->}[r]_-{\kappa_{1}} & (1+1)+1
}$$

\noindent This shows that their `left' parts are equal. We now
concentrate on their `right' parts. Consider:
$$\xymatrix@C+.5pc{
Y\ar@<-.5ex>[d]_{f}\ar@<+.5ex>[d]^{g} &
   Y_{2}\ar@{..>}@<-.5ex>[d]_{f_{2}}\ar@{..>}@<+.5ex>[d]^{g_{2}}
   \ar@{ >->}[l]_-{k_2} 
& &
Y\ar@<-.5ex>[d]_{f}\ar@<+.5ex>[d]^{g} &
   Y'_{2}\ar@{..>}@<-.5ex>[d]_{f'_{2}}\ar@{..>}@<+.5ex>[d]^{g'_{2}}
   \ar@{ >->}[l]_-{k'_2} 
\\
(1+1)+1\ar[d]_{\IV} & 
   1+1\ar@{ >->}[l]^-{\kappa_{2}+\idmap}\ar[d]^{\nabla}\pullback[dl]
& &
(1+1)+1\ar[d]_{\XI} & 
   1+1\ar@{ >->}[l]^-{\kappa_{1}+\idmap}\ar[d]^{\nabla}\pullback[dl]
\\
1+1 & 1\ar@{ >->}[l]^-{\kappa_{2}}
& &
1+1 & 1\ar@{ >->}[l]^-{\kappa_{2}}
}$$

\noindent The lower rectangles are pullbacks, since they can be 
reorganised as:
$$\xymatrix@R-.5pc@C+1pc{
1+1\ar[d]_{\nabla}\ar@{ >->}[r]_-{\kappa_{2}+\idmap}\ar@/^4ex/[rr]^{\kappa_2}
   \pullback & 
   (1+1)+1\ar[d]^{[\idmap,\kappa_{2}]}\ar[r]^-{\alpha^{-1}}_-{\cong} &
   1 + (1+1)\ar@/^2ex/[dl]^{\idmap+\nabla} \\
1\ar@{ >->}[r]_-{\kappa_{2}} & 1+1
}$$

$$\xymatrix@R-.5pc@C+1pc{
1+1\ar[d]_{\nabla}\ar@{ >->}[r]_-{\kappa_{1}+\idmap}
   \ar@/^4ex/[rrr]^{\kappa_2}\pullback & 
   (1+1)+1\ar[d]^{[[\kappa_{2},\kappa_{1}], \kappa_{2}]}
      \ar[rr]^-{[[\kappa_{2}\after\kappa_{1},\kappa_{1}],
   \kappa_{2}\after\kappa_{2}]}_-{\cong} & &
   1 + (1+1)\ar@/^2ex/[dll]^{\idmap+\nabla} \\
1\ar@{ >->}[r]_-{\kappa_2} & 1+1
}$$

\noindent We now form the pullback:
$$\xymatrix{
Z\ar@{ >->}[d]_{\ell}\ar@{ >->}[r]^-{\ell'}\pullback & Y'_{2}\ar@{ >->}[d]^{k'_2}
\\
Y_{2}\ar@{ >->}[r]_-{k_2} & Y
}$$

\noindent We also use that we have pullback:
$$\xymatrix@R-.5pc@C+1pc{
1\ar@{ >->}[d]_{\kappa_2}\ar@{ >->}[r]^-{\kappa_2}\pullback &
   1+1\ar@{ >->}[d]^{\kappa_{2}+\idmap}\ar[dr]^{\kappa_2} \\
1+1\ar@{ >->}[r]^-{\kappa_{1}+\idmap}\ar@/_4ex/[rr]_{\idmap+\kappa_2} & 
   (1+1)+1\ar[r]^-{\alpha^{-1}}_-{\cong} &
   1 + (1+1)
}$$

\noindent Now we obtain:
$$\xymatrix{
Z\ar@/_2ex/[ddr]_{f_{2}\after\ell}\ar@/^2ex/[drr]^-{f'_{2}\after\ell'}
   \ar@{..>}[dr] & &
& 
Z\ar@/_2ex/[ddr]_{g_{2}\after\ell}\ar@/^2ex/[drr]^-{g'_{2}\after\ell'}
   \ar@{..>}[dr]
\\
& 1\ar[d]^{\kappa_2}\ar[r]_{\kappa_2}\pullback & 1+1\ar[d]^{\kappa_{1}+\idmap}
& 
& 1\ar[d]^{\kappa_2}\ar[r]_{\kappa_2}\pullback & 1+1\ar[d]^{\kappa_{1}+\idmap}
\\
& 1+1\ar[r]_-{\kappa_{2}+\idmap} & (1+1)+1
& 
& 1+1\ar[r]_-{\kappa_{2}+\idmap} & (1+1)+1
}$$

\noindent The outer diagram commutes since:
$$\begin{array}{rcccccl}
(\kappa_{2}+\idmap) \after f_{2} \after \ell
& = &
f \after k_{2} \after \ell
& = &
f \after k'_{2} \after \ell'
& = &
(\kappa_{2}+\idmap) \after f'_{2} \after \ell'.
\end{array}$$

\noindent And similarly for $g$. We now claim that we have pullbacks:
$$\xymatrix@C+.5pc{
Y\ar@<-.5ex>[d]_{f}\ar@<+.5ex>[d]^{g} & Z\ar[d]^{\bang}
   \ar@{ >->}[l]_-{k_{2}\after\ell = k'_{2}\after\ell'} 
\\
(1+1)+1 & 1\ar@{ >->}[l]^-{\kappa_{2}}
}$$

\noindent We shall prove this for $f$. First, the diagram commutes since:
$$\begin{array}{rcccccl}
\kappa_{2} \after \bang
& = &
(\kappa_{2}+\idmap) \after \kappa_{2} \after \bang
& = &
(\kappa_{2}+\idmap) \after f_{2} \after \ell
& = &
f \after k_{2} \after \ell.
\end{array}$$

\noindent Next, it is a pullback: if $a\colon W \rightarrow Y$
satisfies $f \after a = \kappa_{2} \after \bang$. Then:
$$\begin{array}{rcccccl}
h \after a
& = &
\IV \after f \after a
& = &
\IV \after \kappa_{2} \after \bang
& = &
\kappa_{2} \after \bang.
\end{array}$$

\noindent Hence there is a unique $a_{2} \colon W \rightarrow Y_{2}$
with $k_{2} \after a_{2} = a$. Similarly:
$$\begin{array}{rcccccl}
h' \after a
& = &
\XI \after f \after a
& = &
\XI \after \kappa_{2} \after \bang
& = &
\kappa_{2} \after \bang.
\end{array}$$

\noindent Hence there is a unique $a'_{2} \colon W \rightarrow Y'_{2}$
with $k_{2} \after a'_{2} = a$. But then there is a unique $a_{1} \colon
W \rightarrow Z$ with $\ell \after a_{1} = a_{2}$ and $\ell' \after a_{1}
= a'_{2}$. This $a_{1}$ is the (unique) map that we seek, since:
$$\begin{array}{rcccl}
k_{2} \after \ell \after a_{1}
& = &
k_{2} \after a_{2}
& = &
a.
\end{array}$$

We can thus conclude that the `right' parts of $f$ and $g$ are also
the same. Hence $f=g$.
}

We now know that the extensive category $\cat{A}$ that we started from
is an effectus. We turn to predicates and comprehension. For each
predicate $p\colon X \rightarrow 1+1$ in $\cat{A}$ we choose a
pullback:
\begin{equation}
\label{diag:extensivecmpr}
\vcenter{\xymatrix{
\cmpr{X}{p}\ar[d]\ar@{ >->}[r]^-{\pi_p}\pullback & X\ar[d]^{p}
\\
1\ar@{ >->}[r]_-{\kappa_1} & 1+1
}}
\end{equation}

\noindent We will need a number of facts about these maps 
$\pi_{p} \colon \cmpr{X}{p} \rightarrowtail X$.
{\renewcommand{\theenumi}{(\alph{enumi})}
\begin{enumerate}
\item \label{prop:extensiveeffectus:pb} The following diagrams are
  also pullbacks.
$$\xymatrix{
X\ar[d]_{p} & \cmpr{X}{p^\bot}\ar@{ >->}[l]_-{\pi_{p^\bot}}\ar[d]
& &
\cmpr{X}{p}+Y\ar[d]_{[\bang,\bang]=\bang}
   \ar@{ >->}[r]^-{\pi_{p}+\idmap} & X+Y\ar[d]^{[p,\one]}
\\
1+1 & 1\ar@{ >->}[l]^-{\kappa_2}
& &
1\ar@{ >->}[r]_-{\kappa_1} & 1+1
}$$

\noindent The combination of the left diagram and
Diagram~\eqref{diag:extensivecmpr} tells us that the cotuple
$[\pi_{p}, \pi_{p^\bot}] \colon \cmpr{X}{p}+\cmpr{X}{p^\bot}
\rightarrow X$ is an isomorphism. The consequence of the pullback on
the right is that $\cmpr{X\,}{\,[p,\one]} \cong \cmpr{X}{p}+Y$, so
that the effectus $\cat{A}$ has comprehension, see
Definition~\ref{def:cmpr}.

It is easy to see that the above diagram on the left is a pullback: if
$f\colon Y \rightarrow X$ satisfies $p \after f = \kappa_{2} \after
\bang$, then:
$$\begin{array}{rcccccccl}
p^{\bot} \after f
& = &
[\kappa_{2},\kappa_{1}] \after p \after f
& = &
[\kappa_{2},\kappa_{1}] \after \kappa_{2} \after \bang
& = &
\kappa_{1} \after \bang
& = &
\one.
\end{array}$$

\noindent Hence $f$ factors uniquely through $\pi_{p^\bot}$
in~\eqref{diag:extensivecmpr}.

The above rectangle on the right is a pullback diagram since one can
apply Lemma~\ref{lem:extensive}~\eqref{lem:extensive:pb} to
Diagram~\eqref{diag:extensivecmpr} and a trivial pullback.

\item \label{prop:extensiveeffectus:jm} The pair of comprehension maps
  $\pi_{p}, \pi_{p^\bot}$ is jointly epic.

Indeed, if $f \after \pi_{p} = g \after \pi_{p}$ and $f \after \pi_{p^\bot}
= g \after \pi_{p^\bot}$, then:
$$\begin{array}{rcccccl}
f \after [\pi_{p}, \pi_{p^\bot}]
& = &
[f \after \pi_{p}, f \after \pi_{p^\bot}]
& = &
[g \after \pi_{p}, g \after \pi_{p^\bot}]
& = &
g \after [\pi_{p}, \pi_{p^\bot}].
\end{array}$$

\noindent Hence $f=g$ since the cotuple $[\pi_{p}, \pi_{p^\bot}]$ is
an isomorphism.
\end{enumerate}}

We now define for each predicate $p\colon X \rightarrow 1+1$ an
`assert' map $X \rightarrow X+1$ as:
$$\xymatrix{
\asrt_{p} \;\defeq\; \Big(X\ar[rr]^-{[\pi_{p}, \pi_{p^\bot}]^{-1}} & &
   \cmpr{X}{p}+\cmpr{X}{p^{\bot}}\ar[r]^-{\pi_{p}+\bang} & X+1\Big)
}$$

\noindent Again we prove a number of basic facts.
{\renewcommand{\theenumi}{(\roman{enumi})}
\begin{enumerate}
\item \label{prop:extensiveeffectus:asrtproj} $\asrt_{p} \after \pi_{p} =
  \kappa_{1} \after \pi_{p}$ and $\asrt_{p} \after \pi_{p^\bot} = \zero$.

From $[\pi_{p}, \pi_{p^\bot}] \after \kappa_{1} = \pi_{p}$ we
obtain $[\pi_{p}, \pi_{p^\bot}]^{-1} \after \pi_{p} = \kappa_{1}$. Hence:
$$\begin{array}{rcccccl}
\asrt_{p} \after \pi_{p}
& = &
(\pi_{p}+\bang) \after [\pi_{p}, \pi_{p^\bot}]^{-1} \after \pi_{p} 
& = &
(\pi_{p}+\bang) \after \kappa_{1}
& = &
\kappa_{1} \after \pi_{p}.
\end{array}$$

\noindent Similarly:
$$\begin{array}{rcccccccl}
\asrt_{p} \after \pi_{p^\bot}
& = &
(\pi_{p}+\bang) \after [\pi_{p}, \pi_{p^\bot}]^{-1} \after \pi_{p^\bot} 
& = &
(\pi_{p}+\bang) \after \kappa_{2}
& = &
\kappa_{2} \after \bang
& = &
\zero.
\end{array}$$

\item \label{prop:extensiveeffectus:asrtbelow} $\asrt_{p} \leq \idmap$
  in the homset of partial maps $X\pto X$, since $\asrt_{p} \ovee
  \asrt_{p^{\bot}} = \idmap$, via the bound $b = \kappa_{1} \after
  (\pi_{p}+\pi_{p^{\bot}}) \after [\pi_{p}, \pi_{p^\bot}]^{-1} \colon
  X \rightarrow (X+X)+1$. Indeed:
$$\begin{array}{rcl}
\rhd_{1} \pafter b
& = &
(\idmap+\bang) \after (\pi_{p}+\pi_{p^{\bot}}) \after 
   [\pi_{p}, \pi_{p^\bot}]^{-1} \\
& = &
(\pi_{p}+\bang) \after [\pi_{p}, \pi_{p^\bot}]^{-1} \\
& = &
\asrt_{p}
\\
\rhd_{2} \pafter b
& = &
[\kappa_{2}\after\bang,\kappa_{1}] \after (\pi_{p}+\pi_{p^{\bot}}) \after 
   [\pi_{p}, \pi_{p^\bot}]^{-1} \\
& = &
[\kappa_{2}\after\bang,\kappa_{1}\after\pi_{p^\bot}] \after 
   [\pi_{p}, \pi_{p^\bot}]^{-1} \\
& = &
(\pi_{p^{\bot}}+\bang) \after [\kappa_{2},\kappa_{1}] \after 
   [\pi_{p}, \pi_{p^\bot}]^{-1} \\
& = &
(\pi_{p^{\bot}}+\bang) \after [\pi_{p^\bot}, \pi_{p}]^{-1} \\
& = &
\asrt_{p^\bot}
\\
\asrt_{p} \ovee \asrt_{p^\bot}
& = &
\nabla \pafter b \\
& = &
\kappa_{1} \after \nabla \after (\pi_{p}+\pi_{p^{\bot}}) \after 
   [\pi_{p}, \pi_{p^\bot}]^{-1} \\
& = &
\kappa_{1}
\end{array}$$

\item \label{prop:extensiveeffectus:asrtiso} This mapping $p\mapsto
  \asrt_{p}$ is the inverse of $\kerbot \colon \sEnd(X) \rightarrow
  \Pred(X)$. First, we have:
$$\begin{array}{rcl}
\kerbot(\asrt_{p})
& = &
(\bang+\idmap) \after (\pi_{p}+\bang) \after [\pi_{p}, \pi_{p^\bot}]^{-1} \\
& = &
(\bang+\bang) \after [\pi_{p}, \pi_{p^\bot}]^{-1} \\
& = &
p.
\end{array}$$

\noindent Conversely, let $f\colon X \pto X$ satisfy $f\leq \idmap$.
We need to prove $f = \asrt_{p}$, for $p = \kerbot(f) = \one \pafter f =
(\bang+\idmap) \after f$. Consider the diagram:
$$\xymatrix{
\cmpr{X}{p}\ar@{ >->}[r]^-{\pi_{p}}\ar@{..>}[d]_{h}\pullback & 
   X\ar[d]^-{f} & 
   \cmpr{X}{p^\bot}\ar@{ >->}[l]_-{\pi_{p^\bot}}\ar@{..>}[d]^{\bang}\pullback[dl]
\\
X\ar@{ >->}[r]_-{\kappa_{1}}\ar[d]_{\bang}\pullback & 
   X+1\ar[d]^{\bang+\idmap} & 
   1\ar@{ >->}[l]^-{\kappa_{2}}\ar@{=}[d]\pullback[dl]
\\
1\ar@{ >->}[r]_-{\kappa_{1}} & 1+1 & Y\ar@{ >->}[l]^-{\kappa_{2}}
}$$

\noindent We first show $h = \pi_{p}$. Since $f \leq \idmap$ we can
write $f \ovee g = \idmap$ for some map $g\colon X\pto X$. Then:
$$\begin{array}{rcccccccl}
\one
& = &
\one \pafter \idmap
& = &
\one \pafter (f\ovee g)
& = &
(\one \pafter f) \ovee (\one \pafter g)
& = &
p \ovee (\one \pafter g).
\end{array}$$

\noindent Hence $\one \pafter g = p^{\bot}$. Then: $\one \pafter (g
\after \pi_{p}) = p^{\bot} \after \pi_{p} = \zero$, so that $g\after
\pi_{p} = \zero$ by Lemma~\ref{lem:zero}. We now get $\pi_{p} = h$
from:
$$\begin{array}{rcccccccl}
\kappa_{1} \after \pi_{p}
& = &
(f \ovee g) \after \pi_{p}
& = &
(f \after \pi_{p}) \ovee (g \after \pi_{p})
& = &
(\kappa_{1} \after h) \ovee \zero
& = &
\kappa_{1} \after h.
\end{array}$$

\noindent Finally we obtain $f = \asrt_{p}$ from:
$$\begin{array}{rcccccl}
f \after [\pi_{p}, \pi_{p^\bot}]
& = &
[f \after \pi_{p}, f \after \pi_{p^\bot}]
& = &
[\kappa_{1} \after \pi_{p}, \kappa_{2} \after \bang]
& = &
\pi_{p}+\bang.
\end{array}$$

\item \label{prop:extensiveeffectus:asrtboolean} $\asrt_{p} \pafter
  \asrt_{p^\bot} = \zero$.

We simply calculate:
$$\begin{array}{rcl}
\asrt_{p} \pafter \asrt_{p^\bot}
& = &
[\asrt_{p}, \kappa_{2}] \after (\pi_{p^\bot}+\bang) \after
   [\pi_{p^\bot}, \pi_{p}]^{-1} \\
& = &
[\asrt_{p} \after \pi_{p^\bot}, \kappa_{2} \after \bang] \after 
   [\pi_{p^\bot}, \pi_{p}]^{-1} \\
& \smash{\stackrel{\ref{prop:extensiveeffectus:asrtproj}}{=}} &
[\kappa_{2} \after \bang, \kappa_{2} \after \bang] \after 
   [\pi_{p^\bot}, \pi_{p}]^{-1} \\
& = &
\zero \after [\pi_{p^\bot}, \pi_{p}]^{-1} \\
& = &
\zero.
\end{array}$$

\item \label{prop:extensiveeffectus:asrtcomm} Commutation of asserts
  --- that is, $\asrt_{p} \pafter \asrt_{q} = \asrt_{q} \pafter
  \asrt_{p}$ --- follows from the previous point, as described in
  Remark~\ref{rem:BooleanEffectusCommutation}.

\auxproof{
The proof of this equation requires some work. Since the mapping
$\kerbot$ is an isomorphism it suffices to prove
$\kerbot\big(\asrt_{p} \pafter \asrt_{q}\big) = \kerbot\big(\asrt_{q}
\pafter \asrt_{p}\big)$. This amounts to the equation:
$$\begin{array}{rcl}
[p,\kappa_{2}] \after \asrt_{q}
& = &
[q,\kappa_{2}] \after \asrt_{p}.
\end{array}$$

\noindent Since the pair $\pi_{p}, \pi_{p^\bot}$ is jointly epic ---
see point~\ref{prop:extensiveeffectus:jm} above --- we are done if
we can prove both:
$$\begin{array}{rcl}
[p,\kappa_{2}] \after \asrt_{q} \after \pi_{p}
& = &
[q,\kappa_{2}] \after \asrt_{p} \after \pi_{p} \\
& \smash{\stackrel{\ref{prop:extensiveeffectus:asrtproj}}{=}} &
[q,\kappa_{2}] \after \kappa_{1} \after \pi_{p}
\hspace*{\arraycolsep}=\hspace*{\arraycolsep}
q \after \pi_{p} 
\\
{[p,\kappa_{2}]} \after \asrt_{q} \after \pi_{p^\bot}
& = &
[q,\kappa_{2}] \after \asrt_{p} \after \pi_{p^\bot} \\
& \smash{\stackrel{\ref{prop:extensiveeffectus:asrtproj}}{=}} &
[q,\kappa_{2}] \after \kappa_{2} \after \bang
\hspace*{\arraycolsep}=\hspace*{\arraycolsep}
\kappa_{2} \after \bang
\hspace*{\arraycolsep}=\hspace*{\arraycolsep}
\zero.
\end{array}\eqno{(**)}$$

The predicates $p,q$ each decompose $X$ in two parts. Together they
decompose $X$ in four parts, in two ways, depending on the order of
decomposition (first $p$ or first $q$). We show how these two
4-decompositions are related.

Consider the following diagram, with the four isomorphisms $\varphi_i$
between pullbacks, at each corner.
$$\xymatrix@R-1pc{
\cmpr{\cmpr{X}{p}}{\pi_{p}^{*}(q)}
   \ar@/^1ex/@{ >->}[dr]^-{\pi_{\pi_{p}^{*}(q)}}\ar[d]_{\varphi_1}^{\cong}
& &
\cmpr{\cmpr{X}{p}}{\pi_{p}^{*}(q^{\bot})}
      \ar@/_1ex/@{ >->}[dl]_-{\pi_{p}^{*}(q^{\bot})}\ar[d]^{\varphi_2}_{\cong}
\\
\cmpr{\cmpr{X}{q}}{\pi_{q}^{*}(p)}
      \ar[r]\ar@{ >->}[dd]_{\pi_{\pi_{q}^{*}(p)}}\pullback & 
   \cmpr{X}{p}\ar[dd]^{\pi_{p}} & 
   \cmpr{\cmpr{X}{q^\bot}}{\pi_{q^{\bot}}^{*}(p)}
      \ar[l]\ar@{ >->}[dd]^{\pi_{\pi_{q^{\bot}}^{*}(p)}}\pullback[dl]
\\
\\
\cmpr{X}{q}\ar@{ >->}[r]^-{\pi_q} & X & \cmpr{X}{q^\bot}\ar@{ >->}[l]_{\pi_{q^\bot}}
\\
\\
\cmpr{\cmpr{X}{q}}{\pi_{q}^{*}(p^{\bot})}
      \ar[r]\ar@{ >->}[uu]^{\pi_{\pi_{q}^{*}(p^{\bot})}}\pullback[ur] & 
   \cmpr{X}{p^\bot}\ar[uu]^{\pi_{p^\bot}} & 
   \cmpr{\cmpr{X}{q^\bot}}{\pi_{q^\bot}^{*}(p^{\bot})}
      \ar[l]\ar@{ >->}[uu]_{\pi_{q^\bot}^{*}(p^{\bot})}\pullback[ul]
\\
\cmpr{\cmpr{X}{p^\bot}}{\pi_{p^\bot}^{*}(q)}
   \ar@/_1ex/@{ >->}[ur]_-{\pi_{p^\bot}^{*}(q)}\ar[u]^{\varphi_3}_{\cong}
& &
\cmpr{\cmpr{X}{p^\bot}}{\pi_{p^\bot}^{*}(q^{\bot})}
   \ar@/^1ex/@{ >->}[ul]^-{\pi_{p^\bot}^{*}(q^{\bot})}\ar[u]_{\varphi_4}^{\cong}
}$$

\noindent Again using that pairs of projections are jointly epic we
prove the equations~$(**)$. The first one is obtained via:
$$\begin{array}{rcl}
[p, \kappa_{2}] \after \asrt_{q} \after \pi_{p} \after \pi_{\pi_{p}^{*}(q)}
& = &
[p, \kappa_{2}] \after \asrt_{q} \after \pi_{q} \after \pi_{\pi_{q}^{*}(p)} 
   \after \varphi_{1} \\
& \smash{\stackrel{\ref{prop:extensiveeffectus:asrtproj}}{=}} &
[p, \kappa_{2}] \after \kappa_{1} \after \pi_{q} \after \pi_{\pi_{q}^{*}(p)} 
   \after \varphi_{1} \\
& = &
p \after \pi_{q} \after \pi_{\pi_{q}^{*}(p)} \after \varphi_{1} \\
& = &
p \after \pi_{p} \after \pi_{\pi_{p}^{*}(q)} \\
& = &
\one \\
& = &
q \after \pi_{q} \after \pi_{\pi_{q}^{*}(p)} \after \varphi_{1} \\
& = &
q \after \pi_{p} \after \pi_{\pi_{p}^{*}(q)} 
\\
{[p, \kappa_{2}]} \after \asrt_{q} \after \pi_{p} \after \pi_{\pi_{p}^{*}(q^{\bot})} 
& = &
[p, \kappa_{2}] \after \asrt_{q} \after \pi_{q^{\bot}} \after 
   \pi_{\pi_{q^{\bot}}^{*}(p)} \after \varphi_{2} \\
& \smash{\stackrel{\ref{prop:extensiveeffectus:asrtproj}}{=}} &
[p, \kappa_{2}] \after 0 \after \pi_{\pi_{q^{\bot}}^{*}(p)} \after \varphi_{2} \\
& = &
\zero \\
& = &
q \after \pi_{q^{\bot}} \after \pi_{\pi_{q^{\bot}}^{*}(p)} \after \varphi_{2} \\
& = &
q \after \pi_{p} \after \pi_{\pi_{p}^{*}(q^{\bot})}.
\end{array}$$

\noindent The second equation in~$(**)$ is obtained similarly.

\auxproof{
$$\begin{array}{rcl}
[p, \kappa_{2}] \after \asrt_{q} \after \pi_{p^\bot} \after \pi_{\pi_{p^\bot}^{*}(q)}
& = &
[p, \kappa_{2}] \after \asrt_{q} \after \pi_{q} \after \pi_{\pi_{q}^{*}(p^{\bot})} 
   \after \varphi_{3} \\
& \smash{\stackrel{\ref{prop:extensiveeffectus:asrtproj}}{=}} &
[p, \kappa_{2}] \after \kappa_{1} \after \pi_{q} \after \pi_{\pi_{q}^{*}(p^{\bot})} 
   \after \varphi_{3} \\
& = &
p \after \pi_{q} \after \pi_{\pi_{q}^{*}(p^{\bot})} \after \varphi_{3} \\
& = &
p \after \pi_{p^\bot} \after \pi_{\pi_{p^\bot}^{*}(q)} \\
& = &
0 \\
& = &
0 \after \pi_{\pi_{p^\bot}^{*}(q)}
\\
{[p, \kappa_{2}]} \after \asrt_{q} \after \pi_{p^\bot} \after 
   \pi_{\pi_{p^\bot}^{*}(q^{\bot})}
& = &
[p, \kappa_{2}] \after \asrt_{q} \after \pi_{q^{\bot}} \after 
   \pi_{\pi_{q^{\bot}}^{*}(p^{\bot})} \after \varphi_{4} \\
& \smash{\stackrel{\ref{prop:extensiveeffectus:asrtproj}}{=}} &
[p, \kappa_{2}] \after 0 \after 
   \pi_{\pi_{q^{\bot}}^{*}(p^{\bot})} \after \varphi_{4} \\
& = &
0 \\
& = &
0 \after \pi_{\pi_{p^\bot}^{*}(q^{\bot})}.
\end{array}$$
}
}
\end{enumerate}}

We may now conclude that the extensive category $\cat{A}$ is an
effectus with comprehension, by point~\ref{prop:extensiveeffectus:pb},
which is commutative, by points~\ref{prop:extensiveeffectus:asrtbelow}
-- \ref{prop:extensiveeffectus:asrtcomm}, and Boolean in particular,
by point~\ref{prop:extensiveeffectus:asrtboolean}. \QED
\end{proof}

We now arrive at the main result of this section.

\begin{theorem}
\label{thm:extensiveeffectus}
Let $\cat{B}$ be a category with finite coproducts and a final
object. Then: $\cat{B}$ is extensive if and only if it is a Boolean
effectus with comprehension.

In this situation quotients come for free.
\end{theorem}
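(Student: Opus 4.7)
The forward implication is exactly Proposition~\ref{prop:extensiveeffectus}, so my plan focuses on the converse: given a Boolean effectus $\cat{B}$ with comprehension, I will verify the extensivity axiom. Lemma~\ref{lem:totcmpr}~\eqref{lem:totcmpr:coproj} already tells us that the pullback of $\kappa_{1}\colon X \to X+Y$ along any $f\colon Z \to X+Y$ exists and is, up to isomorphism, the comprehension $\pi_{p}\colon \cmpr{Z}{p} \rightarrowtail Z$ for $p = (\bang+\bang)\after f$, with the symmetric statement giving $\pi_{p^{\bot}}$ for the pullback along $\kappa_{2}$. The extensivity axiom therefore reduces to two statements: (i) for every predicate $p$ the cotuple $[\pi_{p}, \pi_{p^{\bot}}]\colon \cmpr{X}{p} + \cmpr{X}{p^{\bot}} \to X$ is an isomorphism, and (ii) when the top row of the diagram is already a coproduct, each square is a pullback. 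Statement (ii) is immediate from the very same Lemma~\ref{lem:totcmpr}~\eqref{lem:totcmpr:coproj}, applied to $f = \alpha_{1}+\alpha_{2}$ and to the coprojections, identifying $Z_{1} \cong \cmpr{Z_{1}+Z_{2}}{[\one,\zero]}$ and symmetrically for $Z_{2}$.

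The real work is in (i), where I will build a two-sided inverse to $[\pi_{p},\pi_{p^{\bot}}]$ using the assert maps. In a Boolean effectus $p \andthen p = p$ (Lemma~\ref{lem:booleffectus}~\eqref{lem:booleffectus:conj}), and so by Lemma~\ref{lem:commeffectus}~\eqref{lem:commeffectus:subst} we have $\pbox{\asrt_{p}}(p) = (p\andthen p)\ovee p^{\bot} = \one$. The universal property of comprehension (Definition~\ref{def:cmpr}~\eqref{def:cmpr:part}) then produces a partial map $\psi_{p}\colon X \pto \cmpr{X}{p}$ with $\klin{\pi_{p}} \pafter \psi_{p} = \asrt_{p}$, and symmetrically $\psi_{p^{\bot}}$. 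Since $\pi_{p}$ is total, $\kerbot(\psi_{p}) = \kerbot(\asrt_{p}) = p$, so $\kerbot(\psi_{p}) \ovee \kerbot(\psi_{p^{\bot}}) = \one$, and Lemma~\ref{lem:sumpairing} assembles them into a \emph{total} map $\dtuple{\psi_{p},\psi_{p^{\bot}}}\colon X \to \cmpr{X}{p}+\cmpr{X}{p^{\bot}}$ satisfying $\klin{\dtuple{\psi_{p},\psi_{p^{\bot}}}} = (\kappa_{1}\pafter\psi_{p})\ovee(\kappa_{2}\pafter\psi_{p^{\bot}})$. Post-composing with $[\pi_{p},\pi_{p^{\bot}}]$ and using $\asrt_{p}\ovee\asrt_{p^{\bot}} = \idmap$ gives $[\pi_{p},\pi_{p^{\bot}}] \after \dtuple{\psi_{p},\psi_{p^{\bot}}} = \idmap[X]$ at once. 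For the other composite I apply the jointly monic $\rhd_{1},\rhd_{2}$: from Lemma~\ref{lem:commcmprasrt} the equalisers yield $\asrt_{p}\after\pi_{p} = \kappa_{1}\after\pi_{p}$ and $\asrt_{p^{\bot}}\after\pi_{p} = \zero$, and combined with the monicity of $\klin{\pi_{p}}$ and $\klin{\pi_{p^{\bot}}}$ in $\Par(\cat{B})$ this gives $\psi_{p}\pafter\klin{\pi_{p}} = \klin{\idmap}$ and $\psi_{p^{\bot}}\pafter\klin{\pi_{p}} = \zero$; hence $\dtuple{\psi_{p},\psi_{p^{\bot}}} \after \pi_{p} = \kappa_{1}$ by joint monicity, and symmetrically for $\pi_{p^{\bot}}$, giving $\dtuple{\psi_{p},\psi_{p^{\bot}}} \after [\pi_{p},\pi_{p^{\bot}}] = \idmap$.

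For the final clause, quotients come for free once (i) is in place. I will set $X/p \defeq \cmpr{X}{p^{\bot}}$ with quotient map $\xi_{p} \defeq \psi_{p^{\bot}}$, so that $\ker(\xi_{p}) = p$ by the kernel-supplement computation above. Given any partial $f\colon X \to Y$ with $\ker(f) \geq p$, define $\overline{f} \defeq f \pafter \klin{\pi_{p^{\bot}}}$. The inequality $\kerbot(f) \leq p^{\bot}$ together with $\andthen = \wedge$ in a Boolean effectus gives $\kerbot(f\pafter\asrt_{p}) = p \andthen \kerbot(f) \leq p \wedge p^{\bot} = \zero$ (using Lemma~\ref{lem:commeffectus}~\eqref{lem:commeffectus:order}), so $f\pafter\asrt_{p} = \zero$ by Lemma~\ref{lem:ker}~\eqref{lem:ker:zero}; combining this with $\asrt_{p}\ovee\asrt_{p^{\bot}} = \idmap$ yields $\overline{f}\pafter\xi_{p} = f\pafter\asrt_{p^{\bot}} = f$, and uniqueness follows from $\psi_{p^{\bot}}\pafter\klin{\pi_{p^{\bot}}} = \klin{\idmap}$. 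The main obstacle I anticipate is the conceptual bookkeeping for (i): one must slip carefully between $\cat{B}$ and $\Par(\cat{B})$ to define the partial maps $\psi_{p}$, assemble them via the generalised pairing, and verify both composite identities. Once this translation is set up, the remaining work is routine manipulation of $\ovee$, $\kerbot$, and the equalisers of Lemma~\ref{lem:commcmprasrt}.
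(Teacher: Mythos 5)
Your proof is correct and follows essentially the same route as the paper: both factor $\asrt_{p}$ through the comprehension map $\pi_{p}$ using the Boolean identity $\pbox{\asrt_{p}}(p)=\one$, and both use the equalisers of Lemma~\ref{lem:commcmprasrt} to show that the resulting pairing is a two-sided inverse to $[\pi_{p},\pi_{p^{\bot}}]$ --- your $\psi_{p}$ is the paper's $\xi_{p^{\bot}}$ and your $\dtuple{\psi_{p},\psi_{p^{\bot}}}$ is exactly the paper's decomposition map $\dc_{p}$. The only difference is ordering: the paper first verifies the quotient universal property of $\xi_{p^{\bot}}$ and then forms $\dc_{p}$ via Lemma~\ref{lem:quot}~\eqref{lem:quot:dc}, whereas you build the inverse directly through the pairing of Lemma~\ref{lem:sumpairing} (which needs no quotients) and derive quotients afterwards; both orderings work.
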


\begin{proof}
By Proposition~\ref{prop:extensiveeffectus} we only have to prove that
a Boolean effectus $\cat{B}$ with comprehension is extensive. From
Proposition~\ref{prop:booleffectus} we know that all predicates are
sharp and form Boolean algebras.

For a predicate $p$ on $X$ we have the map $\asrt_{p} \colon X \pto X$
with $\kerbot(\asrt_{p}) = p$. In a Boolean effectus we have:
$$\begin{array}{rcccccccl}
\pbox{\asrt_{p}}(p)
& = &
\big(p^{\bot} \pafter \asrt_{p}\big)^{\bot}
& = &
\big(\one \pafter \asrt_{p^{\bot}} \pafter \asrt_{p}\big)^{\bot}
& = &
\big(\one \pafter \zero\big)^{\bot}
& = &
\one.
\end{array}$$

\noindent We then have a factorisation in $\Par(\cat{B})$ of the form:
$$\xymatrix@C+1pc@R-.5pc{
& \cmpr{X}{p}\ar@{ >->}[d]|-{\pafter}^{\klin{\pi_{p}}}
\\
X\ar@{..>}@/^2ex/[ur]|-{\pafter}^-{\xi_{p^\bot}}\ar[r]|-{\pafter}_-{\asrt_p} & X
}$$

\noindent The notation $\xi$ for these maps is deliberate, because they
yield quotients. But first we show that:
$$\begin{array}{rclcrcl}
\xi_{p^\bot} \pafter \klin{\pi_{p}}
& = &
\idmap
& \qquad\mbox{and}\qquad &
\xi_{p^\bot} \pafter \klin{\pi_{p^\bot}}
& = &
\zero.
\end{array}\eqno{(*)}$$ 

\noindent The proof uses that $\klin{\pi_p}$ is monic in $\Par(\cat{B})$:
$$\begin{array}{rcll}
\klin{\pi_p} \pafter \xi_{p^\bot} \pafter \klin{\pi_{p}}
& = &
\asrt_{p} \pafter \klin{\pi_{p}} \\
& = &
\kappa_{1} \after \pi_{1} & 
   \mbox{by Lemma~\ref{lem:commcmprasrt}} \\
& = &
\klin{\pi_{p}}
\\
\klin{\pi_{p}} \pafter \xi_{p^\bot} \pafter \klin{\pi_{p^\bot}}
& = &
\asrt_{p} \pafter \klin{\pi_{p^\bot}} \\
& = &
\zero & \mbox{by Lemma~\ref{lem:commcmprasrt}} \\
& = &
\klin{\pi_{p}} \pafter \zero.
\end{array}$$

We now show that $\xi_{p^\bot} \colon X \pto \cmpr{X}{p}$ is a
quotient map, where $\cmpr{X}{p}$ is the quotient object
$X/p^{\bot}$. First we have:
$$\begin{array}{rcccccccl}
\ker(\xi_{p^\bot})
& = &
\big(\one \pafter \xi_{p^\bot}\big)^{\bot}
& = &
\big(\one \pafter \klin{\pi_{p}} \pafter \xi_{p^\bot}\big)^{\bot}
& = &
\big(\one \pafter \asrt_{p}\big)^{\bot}
& = &
p^{\bot}.
\end{array}$$

\noindent Next, if $f\colon X \rightarrow Y$ satisfies $p^{\bot} \leq
\ker(f)$, then:
$$\begin{array}{rcl}
\one \pafter f \pafter \asrt_{p^\bot}
\hspace*{\arraycolsep}=\hspace*{\arraycolsep}
\kerbot(f) \pafter \asrt_{p^\bot}
& \leq &
p \pafter \asrt_{p^\bot} \\
& = &
\one \pafter \asrt_{p} \pafter \asrt_{p^\bot}
\hspace*{\arraycolsep}=\hspace*{\arraycolsep}
\one \pafter \zero
\hspace*{\arraycolsep}=\hspace*{\arraycolsep}
\zero.
\end{array}$$

\noindent This gives $f \pafter \asrt_{p^\bot} = \zero$, by
Lemma~\ref{lem:zero}, and thus:
$$\begin{array}{rcccccccl}
f
& = &
f \pafter \idmap
& = &
f \pafter (\asrt_{p} \ovee \asrt_{p^\bot})
& = &
(f \pafter \asrt_{p}) \ovee (f \pafter \asrt_{p^\bot})
& = &
f \pafter \asrt_{p}.
\end{array}$$

\noindent We obtain $\overline{f} = f \after \pi_{p} \colon 
\cmpr{X}{p} \pto Y$ as mediating map, since:
$$\begin{array}{rcccccl}
\overline{f} \pafter \xi_{p^\bot}
& = &
f \pafter \klin{\pi_{p}} \pafter \xi_{p^\bot}
& = &
f \pafter \asrt_{p}
& = &
f.
\end{array}$$

\noindent This map $\overline{f}$ is unique, since if $g\colon
\cmpr{X}{p} \pto Y$ also satisfies $g \pafter \xi_{p^\bot} = f$, then:
$$\begin{array}{rcccccccl}
\overline{f}
& = &
f \pafter \klin{\pi_{p}}
& = &
g \pafter \xi_{p^\bot} \pafter \klin{\pi_{p}}
& \smash{\stackrel{(*)}{=}} &
g \pafter \idmap
& = &
g.
\end{array}$$

Now that we have quotient maps we can form the total `decomposition'
map $\dc_{p} = \dtuple{\xi_{p^\bot}, \xi_{p}} \colon X \to
\cmpr{X}{p} + \cmpr{X}{p^\bot}$ from
Lemma~\ref{lem:quot}~\eqref{lem:quot:dc}. We claim that this map
$\dc_{p}$ is an isomorphism, with the cotuple $[\pi_{p},
  \pi_{p^{\bot}}]$ as inverse. We first prove $\dc_{p} \after \pi_{p}
= \kappa_{1}$ and $\dc_{p} \after \pi_{p^\bot} = \kappa_{2}$ by
using~\eqref{eqn:dtupleuniqueness}:
$$\begin{array}{rcl}
\dc_{p} \after \pi_{p}
\hspace*{\arraycolsep}=\hspace*{\arraycolsep}
\dtuple{\xi_{p^\bot}, \xi_{p}} \after \pi_{p}
& = &
\dtuple{\xi_{p^\bot} \after \pi_{p}, \xi_{p} \after \pi_{p}} \\
& \smash{\stackrel{(*)}{=}} &
\dtuple{\kappa_{1}, \zero}
\hspace*{\arraycolsep}=\hspace*{\arraycolsep}
\dtuple{\rhd_{1} \after \kappa_{1}, \rhd_{2} \after \kappa_{1}}
\hspace*{\arraycolsep}=\hspace*{\arraycolsep}
\kappa_{1}
\\
\dc_{p} \after \pi_{p^{\bot}}
\hspace*{\arraycolsep}=\hspace*{\arraycolsep}
\dtuple{\xi_{p^\bot}, \xi_{p}} \after \pi_{p^{\bot}}
& = &
\dtuple{\xi_{p^\bot} \after \pi_{p}, \xi_{p} \after \pi_{p^{\bot}}} \\
& \smash{\stackrel{(*)}{=}} &
\dtuple{\zero, \kappa_{1}}
\hspace*{\arraycolsep}=\hspace*{\arraycolsep}
\dtuple{\rhd_{1} \after \kappa_{2}, \rhd_{2} \after \kappa_{2}}
\hspace*{\arraycolsep}=\hspace*{\arraycolsep}
\kappa_{2}.
\end{array}$$

\noindent At this stage we can prove one part of the claim that the
decomposition map $\dc_{p}$ is an isomorphism:
$$\begin{array}{rcccccl}
\dc_{p} \after [\pi_{p}, \pi_{p^\bot}]
& = &
[\dc_{p} \after \pi_{p}, \dc_{p} \after \pi_{p^\bot}]
& = &
[\kappa_{1}, \kappa_{2}]
& = &
\idmap.
\end{array}$$

\noindent For the other direction, we take $b = \kappa_{1} \after
(\pi_{p}+\pi_{p^\bot}) \after \dc_{p} \colon X \pto X+X$ and claim
that $b$ is a bound for the pair of maps $\asrt_{p}, \asrt_{p^\bot}
\colon X \pto X$. This is the case, since:
$$\begin{array}{rcl}
\rhd_{1} \pafter b
\hspace*{\arraycolsep}=\hspace*{\arraycolsep}
(\idmap+\bang) \after (\pi_{p}+\pi_{p^\bot}) \after \dc_{p} 
& = &
(\pi_{p}+\idmap) \after (\idmap+\bang) \after \dc_{p} \\
& = &
(\pi_{p}+\idmap) \after \xi_{p^\bot} \\
& = &
\asrt_{p} 
\\
\rhd_{2} \pafter b
\hspace*{\arraycolsep}=\hspace*{\arraycolsep}
[\kappa_{2}\after\bang, \kappa_{1}] \after (\pi_{p}+\pi_{p^\bot}) \after \dc_{p} 
& = &
[\kappa_{2}, \kappa_{1} \after \pi_{p^\bot}] \after (\bang+\idmap) \after \dc_{p} \\
& = &
[\kappa_{2}, \kappa_{1} \after \pi_{p^\bot}] \after [\kappa_{2}, \kappa_{1}]
   \after \xi_{p} \\
& = &
(\pi_{p^\bot}+\idmap) \after \xi_{p} \\
& = &
\asrt_{p^\bot}.
\end{array}$$

\noindent Since the assert map is a homomorphism of effect algebras we have:
$$\begin{array}{rcl}
\kappa_{1}
\hspace*{\arraycolsep}=\hspace*{\arraycolsep}
\asrt_{p} \ovee \asrt_{p^\bot} 
\hspace*{\arraycolsep}=\hspace*{\arraycolsep}
(\nabla+\idmap) \after b 
& = &
\kappa_{1} \after \nabla \after (\pi_{p}+\pi_{p^\bot}) \after \dc_{p} \\
& = &
\kappa_{1} \after [\pi_{p}, \pi_{p^\bot}] \after \dc_{p}.
\end{array}$$

\noindent But then $[\pi_{p}, \pi_{p^\bot}] \after \dc_{p} = \idmap$, making
$\dc_p$ an isomorphism, as required.

We are finally in a position to show that the effectus $\cat{B}$ is an
extensive category, see Definition~\ref{def:extensive}. Let $f\colon Z
\rightarrow X+Y$ be an arbitrary map in $\cat{B}$.  Write $p =
(\bang+\bang) \after f \colon Z \rightarrow 1+1$, and consider the
following diagram.
$$\xymatrix{
\cmpr{Z}{p}\ar@{ >->}[r]^-{\pi_p}\ar@/_3ex/[dd]
   \ar@{..>}[d]\pullback & 
   Z\ar[d]^{f} & 
   \cmpr{Z}{p^\bot}\ar@{ >->}[l]_-{\pi_{p^\bot}}
      \ar@/^3ex/[dd]\ar@{..>}[d]
      \pullback[dl]
\\
Y\ar@{ >->}[r]^-{\kappa_{1}}\ar[d]\pullback & 
   Y+Z\ar[d]^-{\bang+\bang} & 
   Z\ar@{ >->}[l]_-{\kappa_{2}}\ar[d]\pullback[dl]
\\
1\ar@{ >->}[r]_-{\kappa_{1}} & 1+1 & 
   1\ar@{ >->}[l]^-{\kappa_{2}}
}$$

\noindent The two lower pullbacks exist because $\cat{B}$ is an
effectus. The two outer ones, with the curved arrows, exist because
$\cat{B}$ has comprehension. The two upper squares, with the dashed
arrows are then pulbacks by the Pullback Lemma. Thus, the pullbacks
along $f$ of the coprojections $Y\rightarrow Y+Z \leftarrow Z$
exist. Moreover, we have just seen that the cotuple $[\pi_{p},
  \pi_{p^\bot}]$ of the pulled-back maps is an isomorphism.  This is
one part of Definition~\ref{def:extensive}.

For the other part, consider a similar diagram where the top row is
already a coproduct diagram:
$$\xymatrix{
Z_{1}\ar[d]_{f_1}\ar@{ >->}[r]^-{\kappa_1} &
   Z_{1}+Z_{2}\ar[d]^{f} & 
   Z_{2}\ar[d]^{f_2}\ar@{ >->}[l]_-{\kappa_2}
\\
X\ar@{ >->}[r]^-{\kappa_1} & X+Y & 
   Y\ar@{ >->}[l]_-{\kappa_2}
}$$

\noindent Then: 
$$\begin{array}{rcccccccl}
f
& = &
f \after [\kappa_{1},\kappa_{2}]
& = &
[f \after \kappa_{1}, f \after \kappa_{2}]
& = &
[\kappa_{1} \after f_{1}, \kappa_{2} \after f_{2}]
& = &
f_{1}+f_{2}.
\end{array}$$

\noindent But the above two rectangles are then pullbacks
by Lemma~\ref{lem:effectuspb}~\eqref{lem:effectuspb:tot}. \QED
\end{proof}

We thus see that a Boolean effectus with comprehension corresponds to
the well-established notion of extensive category. It is an open
question whether there is a similar alternative notion for a
commutative effectus, capturing probabilistic computation.

Each topos is an extensive category, and thus an effectus. Notice that
in a topos-as-effectus the predicates are the maps $X \rightarrow
1+1$, and not the more general predicates $X\rightarrow\Omega$, where
$\Omega$ is the subobject classifier. There are many extensive
categories that form interesting examples of effectuses, such as: the
category $\Top$ of topological spaces, and it subcategory $\CH
\hookrightarrow \Top$ of compact Hausdorff spaces; the category
$\Meas$ of measurable spaces; the opposite $\op{\CRng}$ of the
category of commutative rings, see~\cite{ChoJWW15}.

\section{Combining comprehension and quotients}\label{sec:cmprquot}

In previous sections we have studied comprehension and quotients
separately. We now look at the combination, and derive some more
results. First we need the following observation.

\begin{lemma}
\label{lem:theta}
Let $\cat{C}$ be an effectus in partial form which has comprehension
and also has quotients. Write for a predicate $p$ on an object
$X\in\cat{C}$:
\begin{equation}
\label{diag:cmprquotcanmap}
\xymatrix@C+.5pc{
\cmpr{X}{p}\ar[rr]^-{\theta_{p} \;\defeq\; \xi_{p^{\bot}} \after \pi_{p}} 
   & & X/p^{\bot}
}\index{D}{$\theta_p = \xi_{p^{\bot}} \after \pi_{p}$}
\end{equation}

\noindent Then:
\begin{enumerate}
\item this map $\theta_p$ is total;

\item if $\theta_p$ is an isomorphism, then $p$ is sharp.
\end{enumerate}
\end{lemma}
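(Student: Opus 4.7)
The plan is to handle the two parts separately, using the basic identities $\kerbot(\xi_{p^{\bot}}) = p$ from Lemma~\ref{lem:quot}\,\eqref{lem:quot:XiEqn} and $p \after \pi_{p} = \one$, $p^{\bot} \after \pi_{p} = \zero$ from Lemma~\ref{lem:partcmpr}\,\eqref{lem:partcmpr:pred}, together with the quotient/division isomorphism $\Pred(X/p^{\bot}) \cong \downset p$ from Lemma~\ref{lem:quot}\,\eqref{lem:quot:div}.

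For totality of $\theta_{p}$, I simply compute the kernel-supplement. Since $\pi_{p}$ is total (comprehension projections are total by definition), composition with $\pi_{p}$ preserves the predicate structure, and
\[
\kerbot(\theta_{p}) \;=\; \one \after \xi_{p^{\bot}} \after \pi_{p} \;=\; \kerbot(\xi_{p^{\bot}}) \after \pi_{p} \;=\; p \after \pi_{p} \;=\; \one.
\]
Hence $\theta_{p}$ is total by Lemma~\ref{lem:zero}.

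For the second part, assume $\theta_{p}$ is an isomorphism and let $q$ be any predicate on $X$ with $q \leq p$ and $q \leq p^{\bot}$; I will show $q = \zero$, which gives sharpness. Because $\pi_{p}$ is total, $\tbox{\pi_{p}} = (-) \after \pi_{p}$ is a map of effect modules (Lemma~\ref{lem:FinPACwE}\,\eqref{lem:FinPACwE:pres}) and hence monotone, so the inequality $q \leq p^{\bot}$ yields $q \after \pi_{p} \leq p^{\bot} \after \pi_{p} = \zero$, \textit{i.e.}\ $q \after \pi_{p} = \zero$. Meanwhile, from $q \leq p$ Lemma~\ref{lem:quot}\,\eqref{lem:quot:div} provides a predicate $q/p \in \Pred(X/p^{\bot})$ with $(q/p) \after \xi_{p^{\bot}} = q$. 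Composing gives
\[
(q/p) \after \theta_{p} \;=\; (q/p) \after \xi_{p^{\bot}} \after \pi_{p} \;=\; q \after \pi_{p} \;=\; \zero.
\]
Since $\theta_{p}$ is an isomorphism, $\tbox{\theta_{p}}$ is a bijection on predicates, and this forces $q/p = \zero$, whence $q = \zero \after \xi_{p^{\bot}} = \zero$.

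The only non-routine ingredient is noticing how the division isomorphism lets us translate sharpness of $p$ (a statement about $\Pred(X)$) into an injectivity statement about substitution along $\theta_{p}$ (a statement about $\Pred(\cmpr{X}{p})$ and $\Pred(X/p^{\bot})$); once that translation is made, everything follows from the bookkeeping identities already established. I expect no real obstacle beyond correctly keeping track of the two comprehension/quotient correspondences; in particular, there is no need to appeal to commutativity of the effectus, only to the adjoint characterisations of $\pi_{p}$ and $\xi_{p^{\bot}}$.
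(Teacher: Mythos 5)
Your proposal is correct and follows essentially the same route as the paper: totality via $\one\after\theta_{p}=\kerbot(\xi_{p^{\bot}})\after\pi_{p}=p\after\pi_{p}=\one$, and sharpness by lifting $q\leq p$ to $q/p$ with Lemma~\ref{lem:quot}~\eqref{lem:quot:div}, using $p^{\bot}\after\pi_{p}=\zero$ to kill $(q/p)\after\theta_{p}$, and then cancelling the isomorphism $\theta_{p}$ (the paper writes this as composing with $\theta_{p}^{-1}$, which is the same injectivity argument you invoke). No gaps.
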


\begin{proof}
The first point is easy: $\theta_p$ is total by Lemma~\ref{lem:partcmpr}~\eqref{lem:partcmpr:pred} and
Lemma~\ref{lem:quot}~\eqref{lem:quot:XiEqn}:
$$\begin{array}{rcccccccl}
\one \after \theta_{p}
& = &
\one \after \xi_{p^\bot} \after \pi_{p}
& = &
\kerbot(\xi_{p^\bot}) \after \pi_{p}
& = &
p \after \pi_{p}
& = &
\one.
\end{array}$$

For the second point let $\theta_p$ be an isomorphism, and let
$q\in\Pred(X)$ satisfy $q \leq p$ and $q \leq p^{\bot}$. In order to
prove that $p$ is sharp, we must show $q = \zero$. From $p^{\bot} \leq
q^{\bot} = \ker(q)$ we get a predicate $q/p$ in $\Pred(X/p^{\bot})$
with $q/p \after \xi_{p^\bot} = q$, see
Lemma~\ref{lem:quot}~\eqref{lem:quot:div}. Then:
$$\begin{array}{rcl}
q/p
\hspace*{\arraycolsep}=\hspace*{\arraycolsep}
q/p \after \theta_{p} \after \theta_{p}^{-1}
\hspace*{\arraycolsep}=\hspace*{\arraycolsep}
q/p \after \xi_{p^\bot} \after \pi_{p} \after \theta_{p}^{-1}
& = &
q \after \pi_{p} \after \theta_{p}^{-1} \\
& \leq &
p^{\bot} \after \pi_{p} \after \theta_{p}^{-1} \\
& = &
\zero \after \theta_{p}^{-1} \\
& = &
\zero.
\end{array}$$

\noindent Hence $q = q/p \after \xi_{p^{\bot}} = \zero \after
\xi_{p^\bot} = \zero$. \QED
\end{proof}

We now consider the condition which enforces an equivalence in the
above second point.

\begin{definition}
\label{def:cmprquot}
We say that an effectus in partial form has \emph{both quotients and
  comprehension}\index{S}{effectus!-- with both quotients and
  comprehension} if it has comprehension, like in
Definition~\ref{def:cmpr}~\eqref{def:cmpr:part}, and quotients, like
in Definition~\ref{def:quot}, such that for each \emph{sharp}
predicate $p\in\Pred(X)$ the total map $\theta_{p} \colon \cmpr{X}{p}
\rightarrow X/p^{\bot}$ in~\eqref{diag:cmprquotcanmap} is an
isomorphism.

We then define, for each such sharp $p\in\Pred(X)$ the map:
\begin{equation}
\label{diag:sharpassrt}
\xymatrix{
\asrt_{p} \;\defeq\; \Big(X\ar@{->>}[r]^-{\xi_{p^\bot}} & 
   X/p^{\bot}\ar[r]^-{\theta_{p}^{-1}}_-{\cong} & 
   \cmpr{X}{p}\ar@{ >->}[r]^-{\pi_p} & X\Big).
}\index{N}{$\asrt_p$, assert map for predicate $p$!when $p$ is sharp}
\end{equation}
\end{definition}

\begin{example}
\label{ex:cmprquot}
We briefly illustrate the map $\theta_{p} \colon \cmpr{X}{p}
\rightarrow X/p^{\bot}$ from~\eqref{diag:cmprquotcanmap} in our
running examples.

In the effectus $\Sets$ we have for a predicate $P\subseteq X$
$\cmpr{X}{P} = P$ and also $Q/P^{\bot} = \cmpr{X}{P^{\bot\bot}} = P$.
The map $\theta$ is the identity, and is thus always an isomorphism.
Indeed, in this Boolean effectus $\Sets$ every predicate is sharp.

More generally, consider an extensive category $\cat{B}$ with final
object, as a Boolean effectus with both comprehension and quotients,
see Theorem~\ref{thm:extensiveeffectus}. The quotient that is
constructed in the proof is of the form $\xi_{p^\bot} \colon X
\rightarrow \cmpr{X}{p}$. Hence we have $\theta_{p} = \idmap \colon
\cmpr{X}{p} \rightarrow X/p^{\bot}$.

In the effectus $\Kl(\Dst)$ we have for a predicate $p\in [0,1]^{X}$
an inclusion:
$$\xymatrix{
\cmpr{X}{p} = \set{x}{p(x) = 1}\;\ar@{^(->}[r]^-{\theta_p} &
   \set{x}{p(x) > 0} = X/p^{\bot}
}$$

\noindent When $p$ is sharp, we have $p(x) \in \{0,1\}$, so that
$\theta_{p}$ is the identity.

Next we consider the effectus $\op{\OUG}$ of order unit groups. For an
effect $e\in [0,1]_{G}$ in an order unit group $G$ we have a map
(in the opposite direction):
$$\xymatrix@R-2pc{
    G/e^{\bot} = \pideal{e}{G} \ar[r]^-{\theta_e} & 
   G/\pideal{e^\bot}{G} = \cmpr{G}{e}
\\
\qquad x\ar@{|->}[r] & x + \pideal{e^\bot}{G}
}$$

\noindent This map is unital since:
$$\begin{array}{rcccccccl}
\theta_{e}(\one_{G/e^\bot})
& = &
e + \pideal{e^\bot}{G}
& = &
e + e^{\bot} + \pideal{e^\bot}{G}
& = &
1 + \pideal{e^\bot}{G}
& = &
\one_{\cmpr{G}{e}}.
\end{array}$$

\noindent It can be shown that this map $\theta_e$ is an isomorphism
for sharp $e$, if one assumes that $G$ satisfies interpolation,
see~\cite{Goodearl86} for details.

Finally, in our effectus $\op{\vNA}$ of von Neumann algebras we also
have, for an effect $e\in[0,1]_{\mathscr{A}}$, a map of the form:
$$\xymatrix@R-2pc{
\mathscr{A}/e^{\bot} = \ceil{e}\mathscr{A}\ceil{e}\ar[r]^-{\theta_e} & 
   \floor{e}\mathscr{A}\floor{e} = \cmpr{\mathscr{A}}{e}
\\
\qquad x\ar@{|->}[r] & x \qquad
}$$

\noindent This map is well-defined since $\ceil{e}\cdot \floor{e} =
\ceil{e} = \floor{e}\cdot \ceil{e}$, using a more general fact: $a
\leq \ceil{e} \Rightarrow a\cdot \ceil{e} = a = \ceil{e}\cdot a$.

In case $e$ is sharp we get $\ceil{e} = e = \floor{e}$, so that
the above map $\theta_e$ is an isomorphism.
\end{example}

We collect some more results about the maps $\theta$.
Since these maps are total, we switch to the total perspective.

\begin{lemma}
\label{lem:cmprquotcanmap}
Let effectus in total form $\cat{B}$ have quotients and comprehension.
\begin{enumerate}
\item \label{lem:cmprquotcanmap:pb} For each predicate $p\in\Pred(X)$
  the two squares below are pullbacks in $\cat{B}$.
\begin{equation}
\label{diag:cmprquotcanmap:pb}
\vcenter{\xymatrix{
X/p^{\bot}\ar@{ >->}[d]_{\kappa_1} & 
   \cmpr{X}{p}\ar[l]_-{\theta_p}\ar[r]^-{\bang}\ar@{ >->}[d]_{\pi_p} 
   & 1\ar@{ >->}[d]^{\kappa_2}
\\
X/p^{\bot}+1 & X\ar[l]^-{\xi_{p^\bot}}\ar[r]_-{\xi_p} & X/p + 1
}}
\end{equation}

\item \label{lem:cmprquotcanmap:triangle} For each predicate $p$ on
  $X$, the following triangle commutes.
\begin{equation}
\label{diag:cmprquotcanmap:triangle}
\vcenter{\xymatrix@R-.5pc@C-1pc{
\cmpr{X}{p} + \cmpr{X}{p^\bot}\ar[dr]_{[\pi_{p}, \pi_{p^\bot}]\;}
   \ar[rr]^-{\theta_{p} + \theta_{p^\bot}} & & X/p^{\bot} + X/p
\\
& X\ar[ur]_{\dc_p} &
}}
\end{equation}

\noindent Thus, for \emph{sharp} $p$ the decomposition map $\dc_p$ is
a split epi, and the cotuple $[\pi_{p}, \pi_{p^\bot}]$ is a split
mono, in $\cat{B}$.

\item The maps $\theta$ commute with the distributivity isomorphisms
  from Lemma~\ref{lem:totcmpr}~\eqref{lem:totcmpr:sumiso} and
  Lemma~\ref{lem:quot}~\eqref{lem:quot:sumiso}, as in:
$$\xymatrix@R-1pc@C+1pc{
\cmpr{X}{p}+\cmpr{Y}{q}\ar@{=}[d]^{\wr}\ar[r]^-{\theta_{X}+\theta_{Y}} &
   X/p^{\bot}+Y/q^{\bot}\ar@{=}[d]^{\wr}
\\
\cmpr{X+Y\,}{\,[p,q]}\ar[r]_-{\theta_{[p,q]}} &
   (X+Y)/[p,q]^{\bot}\rlap{$\;=(X+Y)/[p^{\bot},q^{\bot}]$}
}\hspace*{6em}$$
\end{enumerate}
\end{lemma}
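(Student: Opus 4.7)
Since $\theta_p$ is total by Lemma~\ref{lem:theta}, both squares in part~(1) live in $\cat{B}$, and my strategy is to recognise each as an instance of the pullback-closure of comprehension. Specifically, by Lemma~\ref{lem:totcmpr}~\eqref{lem:totcmpr:coproj}, the coprojection $\kappa_2\colon 1 \rightarrowtail X/p+1$ is (isomorphic to) the comprehension $\pi_{[\zero,\one]}$ and $\kappa_1 \colon X/p^\bot \rightarrowtail X/p^\bot+1$ is (isomorphic to) $\pi_{[\one,\zero]}$. Applying Lemma~\ref{lem:totcmpr}~\eqref{lem:totcmpr:pb} to pull these back along $\xi_p$ and $\xi_{p^\bot}$ respectively, I compute using Lemma~\ref{lem:quot}~\eqref{lem:quot:XiEqn} that
\[
\tbox{\xi_p}([\zero,\one]) \;=\; \ker(\xi_p) \;=\; p,
\qquad
\tbox{\xi_{p^\bot}}([\one,\zero]) \;=\; \kerbot(\xi_{p^\bot}) \;=\; p
\]
(the routine identities $[\zero,\one]=[\zero,\kappa_1]$ and $[\one,\zero]=\bang+\idmap$ follow from $\bang=\idmap[1]$). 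So both pullbacks are carried by $\pi_p\colon\cmpr{X}{p}\rightarrowtail X$. The top mediating arrow of the right square is forced by finality of $1$ to be $\bang$, while the top arrow of the left square is the unique $h$ with $\kappa_1 \after h = \xi_{p^\bot} \after \pi_p$; monicity of $\kappa_1$ combined with the defining relation $\kappa_1 \after \theta_p = \xi_{p^\bot} \after \pi_p$ (the total-form translation of $\theta_p = \xi_{p^\bot} \after \pi_p$) identifies $h = \theta_p$.

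For part~(2), the triangle commutes iff $\dc_p \after \pi_p = \kappa_1 \after \theta_p$ and $\dc_p \after \pi_{p^\bot} = \kappa_2 \after \theta_{p^\bot}$, by the coproduct universal property. Using the definition $\dc_p = \dtuple{\xi_{p^\bot},\xi_p}$ together with~\eqref{eqn:dtupleuniqueness},
\[
\dc_p \after \pi_p \;=\; \dtuple{\xi_{p^\bot}\after\pi_p,\,\xi_p\after\pi_p} \;=\; \dtuple{\theta_p,\zero},
\]
where $\xi_p \after \pi_p = \zero$ holds because $\ker(\xi_p \after \pi_p) = \pbox{\pi_p}(p) = p \after \pi_p = \one$ (by Lemmas~\ref{lem:ker}~\eqref{lem:ker:comp}, \ref{lem:quot}~\eqref{lem:quot:XiEqn}, \ref{lem:partcmpr}~\eqref{lem:partcmpr:pred}, and~\ref{lem:ker}~\eqref{lem:ker:zero}). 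Equation~\eqref{eqn:dtupleovee} then identifies $\dtuple{\theta_p,\zero} = \kappa_1 \after \theta_p$, and the $\pi_{p^\bot}$-case is symmetric. When $p$ is sharp, so is $p^\bot$, hence $\theta_p$ and $\theta_{p^\bot}$ are isomorphisms by Definition~\ref{def:cmprquot} and $\theta_p + \theta_{p^\bot}$ is an iso; then $(\theta_p + \theta_{p^\bot})^{-1} \after \dc_p$ is simultaneously a retract for $[\pi_p,\pi_{p^\bot}]$ and a section for $\dc_p$.

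For part~(3), let $\varphi \colon \cmpr{X}{p}+\cmpr{Y}{q} \conglongrightarrow \cmpr{X+Y\,}{\,[p,q]}$ be the distributivity iso of Lemma~\ref{lem:totcmpr}~\eqref{lem:totcmpr:sumiso}, characterised by $\pi_{[p,q]} \after \varphi = \pi_p + \pi_q$, and let $\psi \colon (X+Y)/[p^\bot,q^\bot] \conglongrightarrow X/p^\bot + Y/q^\bot$ be the iso of Lemma~\ref{lem:quot}~\eqref{lem:quot:sumiso}, characterised by $\psi \after \xi_{[p^\bot,q^\bot]} = \xi_{p^\bot} + \xi_{q^\bot}$. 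Unfolding $\theta_{[p,q]} = \xi_{[p^\bot,q^\bot]} \after \pi_{[p,q]}$, one line suffices:
\[
\psi \after \theta_{[p,q]} \after \varphi
\;=\;
(\psi \after \xi_{[p^\bot,q^\bot]}) \after (\pi_{[p,q]} \after \varphi)
\;=\;
(\xi_{p^\bot} + \xi_{q^\bot}) \after (\pi_p + \pi_q)
\;=\;
\theta_p + \theta_q,
\]
by functoriality of $+$.

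The main obstacle is the identification work in part~(1): translating carefully between the total and partial forms, recognising the two coprojections as comprehension maps, and pinning down the dashed mediating arrow as $\theta_p$. Once these identifications are in place, parts~(2) and~(3) reduce to straightforward algebraic manipulations using the defining identities for $\dc_p$, $\theta$, and the distributivity isomorphisms.
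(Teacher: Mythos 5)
Your proposal is correct, and parts~(2) and~(3) follow essentially the same computations as the paper (the paper justifies $\xi_{p}\after\pi_{p}=\zero$ via $\kerbot(\xi_p)\after\pi_p=p^{\bot}\after\pi_p=\zero$ and pins down the pairing with~\eqref{eqn:dtupleuniqueness} rather than~\eqref{eqn:dtupleovee}, and in part~(3) it unfolds the two characterising equations into explicit coprojection manipulations, but these are cosmetic differences). Part~(1) is where you genuinely diverge: the paper verifies the universal property of each square by hand --- given a cone, it shows the relevant predicate pulls back to $\one$ and factors the cone through $\pi_p$ --- whereas you recognise $\kappa_1$ and $\kappa_2$ as the comprehension maps $\pi_{[\one,\zero]}$ and $\pi_{[\zero,\one]}$ (Lemma~\ref{lem:totcmpr}~\eqref{lem:totcmpr:coproj}) and then invoke pullback-stability of comprehension (Lemma~\ref{lem:totcmpr}~\eqref{lem:totcmpr:pb}) along $\xi_{p^\bot}$ and $\xi_p$, computing $\tbox{\xi_{p^\bot}}([\one,\zero])=\kerbot(\xi_{p^\bot})=p$ and $\tbox{\xi_p}([\zero,\one])=\ker(\xi_p)=p$. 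This is a clean reuse of earlier machinery and buys a shorter argument; the only price is the (harmless) bookkeeping that comprehension maps are determined only up to isomorphism, which you acknowledge. One small slip in part~(2): $(\theta_p+\theta_{p^\bot})^{-1}\after\dc_p$ is indeed a retraction of $[\pi_p,\pi_{p^\bot}]$, but it is not a section of $\dc_p$ (the types do not even match); the section of $\dc_p$ is the other composite $[\pi_p,\pi_{p^\bot}]\after(\theta_p+\theta_{p^\bot})^{-1}$. The substance --- that both splittings follow from the commuting triangle once $\theta_p$ and $\theta_{p^\bot}$ are isomorphisms --- is right.
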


\begin{proof}
Recall that we are in an effectus in total form.
\begin{enumerate}
\item The left square in~\eqref{diag:cmprquotcanmap:pb} commutes 
  because $\theta_p$ is total. It forms a pullbacks since if $f\colon
  Y \rightarrow X/p^{\bot}$ and $g\colon Y \rightarrow X$ satisfy
  $\kappa_{1} \after f = \xi_{p^\bot} \after g$, then $g$ factors
  uniquely as $g = \pi_{p} \after \overline{g}$ since:
$$\begin{array}{rcl}
\tbox{g}(p)
\hspace*{\arraycolsep}=\hspace*{\arraycolsep}
p \after g
\hspace*{\arraycolsep}=\hspace*{\arraycolsep}
\kerbot(\xi_{p^\bot}) \after g
& = &
(\bang+\idmap) \after \xi_{p^\bot} \after g \\
& = &
(\bang+\idmap) \after \kappa_{1} \after \theta_{p} \\
& = &
\kappa_{1} \after \bang \after \theta_{p} \\
& = &
\kappa_{1} \after \bang \\
& = &
\one.
\end{array}$$

\noindent Then $\theta_{p} \after \overline{g} = f$ because the
coprojection $\kappa_{1}$ is monic:
$$\begin{array}{rcccccl}
\kappa_{1} \after \theta_{p} \after \overline{g}
& = &
\xi_{p^\bot} \after \pi_{p} \after \overline{g}
& = &
\xi_{p^\bot} \after g
& = &
\kappa_{1} \after f.
\end{array}$$

The rectangle on the right in~\eqref{diag:cmprquotcanmap:pb} commutes
by Lemma~\ref{lem:zero} since:
$$\begin{array}{rcccccl}
(\bang+\idmap) \after \xi_{p} \after \pi_{p}
& = &
\kerbot(\xi_{p}) \after \pi_{p}
& = &
p^{\bot} \after \pi_{p}
& = &
\zero.
\end{array}$$

\noindent It is a pullback: if $f\colon Y \rightarrow X$ satisfies
$\xi_{p} \after f = \kappa_{2} \after \bang = 0$, then $f$ factors
uniquely through $\pi_p$ since:
$$\begin{array}{rcl}
\tbox{f}(p)
\hspace*{\arraycolsep}=\hspace*{\arraycolsep}
p \after f
\hspace*{\arraycolsep}=\hspace*{\arraycolsep}
\ker(\xi_{p}) \after f
& = &
[\kappa_{2},\kappa_{1}] \after (\bang+\idmap) \after \xi_{p} \after f \\
& = &
[\kappa_{2},\kappa_{1}] \after (\bang+\idmap) \after \kappa_{2} \after \bang \\
& = &
\kappa_{1} \after \bang \\
& = &
\one.
\end{array}$$

\item We use the commuting rectangles~\eqref{diag:cmprquotcanmap:pb}
and the equations~\eqref{eqn:dtupleuniqueness} to get:
$$\begin{array}{rcl}
\dc_{p} \after \pi_{p}
\hspace*{\arraycolsep}=\hspace*{\arraycolsep}
\dtuple{\xi_{p^\bot}, \xi_{p}} \after \pi_{p}
& = &
\dtuple{\xi_{p^\bot} \after \pi_{p}, \xi_{p} \after \pi_{p}} \\
& = &
\dtuple{\kappa_{1} \after \theta_{p}, \zero} \\
& = &
\dtuple{\rhd_{1} \after \kappa_{1} \after \theta_{p}, 
   \rhd_{2} \after \kappa_{1} \after \theta_{p}}
\hspace*{\arraycolsep}=\hspace*{\arraycolsep}
\kappa_{1} \after \theta_{p}
\\
\dc_{p} \after \pi_{p^{\bot}}
\hspace*{\arraycolsep}=\hspace*{\arraycolsep}
\dtuple{\xi_{p}, \xi_{p^\bot}} \after \pi_{p^{\bot}}
& = &
\dtuple{\xi_{p} \after \pi_{p^{\bot}}, \xi_{p^\bot} \after \pi_{p^{\bot}}} \\
& = &
\dtuple{\zero, \kappa_{1} \after \theta_{p^{\bot}}} \\
& = &
\dtuple{\rhd_{1} \after \kappa_{2} \after \theta_{p^{\bot}}, 
   \rhd_{2} \after \kappa_{2} \after \theta_{p^{\bot}}}
\hspace*{\arraycolsep}=\hspace*{\arraycolsep}
\kappa_{2} \after \theta_{p^{\bot}}.
\end{array}$$

\noindent Hence we see that the
triangle~\eqref{diag:cmprquotcanmap:triangle} commutes:
$$\begin{array}{rcccccl}
\dc_{p} \after [\pi_{p}, \pi_{p^\bot}]
& = &
[\dc_{p} \after \pi_{p}, \dc_{p} \after \pi_{p^\bot}]
& = &
[\kappa_{1} \after \theta_{p}, \kappa_{2} \after \theta_{p^{\bot}}]
& = &
\theta_{p} + \theta_{p^\bot}.
\end{array}$$

\item We show that the composite:
$$\hspace*{-.5em}\xymatrix@C-1.2pc{
\cmpr{X+Y\,}{\,[p,q]}\ar[r]^-{\varphi}_-{\cong} &
   \cmpr{X}{p}+\cmpr{Y}{q}\ar[rr]^-{\theta_{X}+\theta_{Y}} & &
   X/p^{\bot} + Y/q^{\bot}\ar[r]^-{\psi}_-{\cong} & (X+Y)/[p,q]^{\bot}
}$$

\noindent satisfies the equation that defines $\theta_{[p,q]}$ as
$\kappa_{1} \after \theta_{[p,q]} = \xi_{[p,q]^{\bot}} \after
\pi_{[p,q]}$. 
$$\begin{array}[b]{rcl}
\lefteqn{\kappa_{1} \after \psi \after (\theta_{p}+\theta_{q}) \after 
   \varphi} \\
& = &
(\psi+\idmap) \after \kappa_{1} \after 
   [\kappa_{1} \after \theta_{p}, \kappa_{2} \after \theta_{q}]
   \after \varphi \\
& = &
(\psi+\idmap) \after 
   [(\kappa_{1}+\idmap) \after \kappa_{1} \after \theta_{p}, 
    (\kappa_{2}+\idmap) \after \kappa_{1} \after \theta_{q}]
   \after \varphi \\
& = &
(\psi+\idmap) \after 
   [(\kappa_{1}+\idmap) \after \xi_{p^\bot} \after \pi_{p}, 
    (\kappa_{2}+\idmap) \after \xi_{q^\bot} \after \pi_{q}]
   \after \varphi \\
& = &
(\psi+\idmap) \after 
   [(\kappa_{1}+\idmap) \after \xi_{p^\bot}, (\kappa_{2}+\idmap) \after \xi_{q^\bot}]
   \after (\pi_{p}+\pi_{q}) \after \varphi \\
& = &
(\psi \pafter (\xi_{p}\pplus\xi_{q})) \after  \pi_{[p,q]} \\
& = &
\xi_{[p,q]} \after \pi_{[p,q]}.
\end{array}\eqno{\QEDbox}$$
\end{enumerate}
\end{proof}

We continue with properties of the assert map~\eqref{diag:sharpassrt}
for sharp predicates $p$. We have seen `assert' maps in
Section~\ref{sec:commbool} as inverse of the kernel-supplement map
$\kerbot \colon \sEnd(X) \rightarrow \Pred(X)$. This isomorphism of
effect algebras is a key aspect of commutative effectuses (including
Boolean ones).  The assert map~\eqref{diag:cmprquotcanmap} in this
section is at the same time more general and also more restricted: it
is not necessarily side-effect free, that is, below the identity, nor
a map of effect algebras; and it is defined only for sharp predicates.

\begin{lemma}
\label{lem:sharpassert}
Let $(\cat{C}, I)$ be an effectus with quotients and comprehension, in
partial form. Let $p$ be a sharp predicate on $X\in\cat{C}$.
\begin{enumerate}
\item \label{lem:sharpassert:idemp} $\asrt_{p} \after \asrt_{p} = \asrt_{p}$
  and $\asrt_{p} \after \asrt_{p^\bot} = \zero$.

\item \label{lem:sharpassert:ker} $\ker(\asrt_{p}) = p^{\bot}$, and so
  $\kerbot(\asrt_{p}) = p$.

\item \label{lem:sharpassert:zeroone} $\asrt_{\zero} = \zero$ and
  $\asrt_{\one} = \idmap$.

\item \label{lem:sharpassert:equiv} For each $f\colon X \rightarrow Y$
  one has:
$$\begin{array}{rcl}
p^{\bot} \leq \ker(f)
& \Longleftrightarrow &
f \after \asrt_{p} = f.
\end{array}$$

\item \label{lem:sharpassert:eq} The following two diagrams are
  equalisers in $\cat{C}$.
$$\xymatrix{
\cmpr{X}{p^\bot}\ar@{ >->}[r]^-{\pi_{p^\bot}} &
   X\ar@/^1.5ex/[r]^-{\asrt_{p}}\ar@/_1.5ex/[r]_-{\zero} & X
& &
\cmpr{X}{p}\ar@{ >->}[r]^-{\pi_{p}} &
   X\ar@/^1.5ex/[r]^-{\asrt_p}\ar@/_1.5ex/[r]_-{\idmap} & X
}$$

\item \label{lem:sharpassert:coeq} Similarly, the next diagram is a
  coequaliser.
$$\xymatrix{
X\ar@/^1.5ex/[r]^-{\asrt_p}\ar@/_1.5ex/[r]_-{\idmap} & 
   X\ar@{->>}[r]^-{\xi_{p^\bot}} & X/p^{\bot}
}$$

\item \label{lem:sharpassert:sef} Let $f\colon X \rightarrow X$ be a
  side-effect free endomap, that is $f\leq \idmap[X]$ in the poset
  $\sEnd(X)$ of endomaps on $X$ below the identity. If the predicate
  $p = \kerbot(f) = \one \after f$ is sharp, then $f =
  \asrt_{p}$. 

Moreover, in that case the decomposition map $\dc_{p}$ from
Lemma~\ref{lem:quot}~\eqref{lem:quot:dc} is a total isomorphism:
$$\xymatrix@+1pc{
X\ar[r]^-{\dc_p}_-{\cong} & X/p^{\bot}+X/p
}$$
\end{enumerate}
\end{lemma}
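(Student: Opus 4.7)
The plan is to exploit the definition $\asrt_{p} = \pi_{p} \after \theta_{p}^{-1} \after \xi_{p^{\bot}}$ together with the two basic identities $\xi_{p^{\bot}} \after \pi_{p} = \theta_{p}$ (by definition of $\theta_p$) and $\xi_{p^{\bot}} \after \pi_{p^{\bot}} = \zero$ (from Lemma~\ref{lem:cmprquotcanmap}~\eqref{lem:cmprquotcanmap:pb}). With these in hand, \eqref{lem:sharpassert:idemp} is a direct cancellation; \eqref{lem:sharpassert:ker} follows from Lemma~\ref{lem:ker}~\eqref{lem:ker:tot} (kernels are invariant under post-composition with total maps) combined with $\ker(\xi_{p^{\bot}}) = p^{\bot}$ from Lemma~\ref{lem:quot}~\eqref{lem:quot:XiEqn}; and \eqref{lem:sharpassert:zeroone} combines \eqref{lem:sharpassert:ker} and Lemma~\ref{lem:zero} for $\asrt_{\zero}$, while for $\asrt_{\one}$ one notes that $\pi_{\one}$, $\xi_{\zero}$, and hence $\theta_{\one}$ are all isomorphisms (Lemma~\ref{lem:totcmpr}~\eqref{lem:totcmpr:one}, Lemma~\ref{lem:quot}~\eqref{lem:quot:zeroone}). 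For \eqref{lem:sharpassert:equiv}, the forward direction uses Lemma~\ref{lem:quot}~\eqref{lem:quot:univ} to factor $f = \overline{f}\after \xi_{p^{\bot}}$ and then cancels $\theta_{p}\theta_{p}^{-1}$; the converse is immediate from \eqref{lem:sharpassert:ker} and Lemma~\ref{lem:ker}~\eqref{lem:ker:ord}. The first equaliser in \eqref{lem:sharpassert:eq} is the kernel equaliser of Lemma~\ref{lem:partcmpr}~\eqref{lem:partcmpr:kermap} applied to $\asrt_{p}$ via \eqref{lem:sharpassert:ker}; the second uses that $\asrt_{p} \after \pi_{p} = \pi_{p}$ together with monicity of $\pi_{p}$, since any $g$ with $\asrt_{p}\after g = g$ factors explicitly as $g = \pi_{p}\after(\theta_{p}^{-1}\after \xi_{p^{\bot}}\after g)$. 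Point~\eqref{lem:sharpassert:coeq} dualises via~\eqref{lem:sharpassert:equiv} and the universal property of $\xi_{p^{\bot}}$.

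The heart of the lemma is \eqref{lem:sharpassert:sef}. Given $f\leq \idmap$ with sharp $p = \kerbot(f)$, pick the orthocomplement $g$ with $f\ovee g = \idmap$; since $\kerbot$ preserves $\ovee$ (Lemma~\ref{lem:kerbot}), $\kerbot(g) = p^{\bot}$. The crucial intermediate facts are $f\after\pi_{p} = \pi_{p}$ and $f\after\pi_{p^{\bot}} = \zero$. The second follows from Lemma~\ref{lem:zero} applied to $\one\after f\after\pi_{p^{\bot}} = p \after \pi_{p^{\bot}} = \zero$ (Lemma~\ref{lem:partcmpr}~\eqref{lem:partcmpr:pred}); the analogous computation with $g$ gives $g\after\pi_{p} = \zero$, and then $(f\after\pi_p)\ovee\zero = (f\ovee g)\after\pi_{p} = \pi_{p}$ yields $f\after\pi_{p} = \pi_{p}$. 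Because the assumption is $\ker(f) = p^{\bot}$ \emph{with equality}, Lemma~\ref{lem:quot}~\eqref{lem:quot:univ} yields a unique \emph{total} $\overline{f}$ with $\overline{f}\after\xi_{p^{\bot}} = f$; composing with $\pi_{p}$ and using $\xi_{p^{\bot}}\after \pi_{p} = \theta_{p}$ gives $\overline{f}\after\theta_{p} = f\after\pi_{p} = \pi_{p}$, whence $\overline{f} = \pi_{p}\after\theta_{p}^{-1}$ and therefore $f = \asrt_{p}$.

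For the moreover, the same argument applied to $g$ (also side-effect free, with sharp kernel-supplement $p^{\bot}$) gives $g = \asrt_{p^{\bot}}$, so $\asrt_{p}\ovee \asrt_{p^{\bot}} = f\ovee g = \idmap$. Lemma~\ref{lem:cmprquotcanmap}~\eqref{lem:cmprquotcanmap:triangle} already exhibits $\dc_{p}$ as a split epi with section $s = [\pi_{p},\pi_{p^{\bot}}] \after (\theta_{p}+\theta_{p^{\bot}})^{-1}$; it remains to check $s\after \dc_{p} = \idmap$. Using $\dc_{p} = \dtuple{\xi_{p^{\bot}},\xi_{p}}$, naturality~\eqref{eqn:dtuplenat}, and the partial-sum description of pairings from Lemma~\ref{lem:sumpairing} (equation~\eqref{eqn:dtupleovee}), the composite $s\after\dc_{p}$ rewrites to $(\pi_{p}\after\theta_{p}^{-1}\after\xi_{p^{\bot}}) \ovee (\pi_{p^{\bot}}\after\theta_{p^{\bot}}^{-1}\after\xi_{p}) = \asrt_{p}\ovee \asrt_{p^{\bot}} = \idmap$, making $\dc_{p}$ a total isomorphism. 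The main obstacle is \eqref{lem:sharpassert:sef}: the essential ingredient is the \emph{equality} case of the quotient universal property, producing a \emph{total} mediator $\overline{f}$; without equality one only recovers $f\after\asrt_{p} = f$ (as in~\eqref{lem:sharpassert:equiv}) rather than the identification $f = \asrt_{p}$ itself.
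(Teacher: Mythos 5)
Your proposal is correct and follows essentially the same route as the paper's proof: the same cancellation identities $\xi_{p^\bot}\after\pi_p=\theta_p$ and $\xi_{p^\bot}\after\pi_{p^\bot}=\zero$, the same kernel lemmas for points (2)--(6), and for point (7) the same key steps — showing $g\after\pi_p=\zero$ hence $f\after\pi_p=\pi_p$, invoking the equality case of the quotient universal property to obtain a \emph{total} mediator $\overline f=\pi_p\after\theta_p^{-1}$, and then computing $s\after\dc_p=\asrt_p\ovee\asrt_{p^\bot}=\idmap$ via~\eqref{eqn:dtuplenat}. No gaps.
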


\begin{proof}
Recall that $\asrt_{p} = \pi_{p} \after \theta_{p}^{-1} \after
\xi_{p^\bot} \colon X \rightarrow X$, see~\eqref{diag:sharpassrt}.
\begin{enumerate}
\item We compute:
$$\begin{array}{rcl}
\asrt_{p} \after \asrt_{p}
& = &
\pi_{p} \after \theta_{p}^{-1} \after \xi_{p^\bot} \after
   \pi_{p} \after \theta_{p}^{-1} \after \xi_{p^\bot} \\
& = &
\pi_{p} \after \theta_{p}^{-1} \after \theta_{p} \after
   \theta_{p}^{-1} \after \xi_{p^\bot} \\
& = &
\pi_{p} \after \theta_{p}^{-1} \after \xi_{p^\bot} \\
& = &
\asrt_{p}
\\
\asrt_{p} \after \asrt_{p^\bot}
& = &
\pi_{p} \after \theta_{p}^{-1} \after \xi_{p^\bot} \after
   \pi_{p^\bot} \after \theta_{p^\bot}^{-1} \after \xi_{p} \\
& \smash{\stackrel{\eqref{diag:cmprquotcanmap:pb}}{=}} &
\pi_{p} \after \theta_{p}^{-1} \after \zero \after 
   \theta_{p}^{-1} \after \xi_{p^\bot} \\
& = &
\zero.
\end{array}$$

\item Applying Lemma~\ref{lem:ker}~\eqref{lem:ker:tot} with the
total map $\pi_{p} \after \theta_{p}^{-1}$ yields:
$$\begin{array}{rcccccl}
\ker(\asrt_{p})
& = &
\ker\big(\pi_{p} \after \theta_{p}^{-1} \after \xi_{p^\bot}\big) 
& = &
\ker(\xi_{p^{\bot}}) 
& = &
p^{\bot}.
\end{array}$$

\item We have $\asrt_{\zero} = \zero$ since $\kerbot(\asrt_{\zero}) =
  \zero$ and $\kerbot$ reflects $\zero$, see
  Lemma~\ref{lem:kerbot}. Further, there are isomorphisms $X/\zero \cong X
  \cong \cmpr{X}{\one}$ by Lemma~\ref{lem:quot}~\eqref{lem:quot:zeroone}
  and Lemma~\ref{lem:totcmpr}~\eqref{lem:totcmpr:one}. Hence:
$$\xymatrix{
\asrt_{\one} \;=\; \Big(X\ar[r]^-{\xi_{\zero}}_-{\cong} & 
   X/\zero\ar[r]^-{\theta_{\one}^{-1}}_-{\cong} &
   \cmpr{X}{\one}\ar[r]^-{\pi_{\one}}_-{\cong} & X\Big) \;=\; \idmap,
}$$

\noindent since by definition $\theta_{\one} = \xi_{\zero} \after \pi_{\one}$,
and so $\theta_{\one}^{-1} = \pi_{\one}^{-1} \after \xi_{\zero}^{-1}$.


\item We need to prove the equivalence $p^{\bot} \leq \ker(f)
  \Longleftrightarrow f \after \asrt_{p} = f$. This is done in two
  steps.
\begin{itemize}
\item If $p^{\bot} \leq \ker(f)$, then we can write $f = \overline{f}
\after \xi_{p^{\bot}}$, so that:
$$\begin{array}{rcl}
f \after \asrt_{p}
\hspace*{\arraycolsep}=\hspace*{\arraycolsep}
\overline{f} \after \xi_{p^{\bot}} \after \pi_{p} \after \theta_{p}^{-1}
   \after \xi_{p^{\bot}} 
& = &
\overline{f} \after \theta_{p} \after \theta_{p}^{-1} \after \xi_{p^{\bot}} \\
& = &
\overline{f} \after \xi_{p^{\bot}} \\
& = &
f.
\end{array}$$

\item And if $f \after \asrt_{p} = f$, then by
Lemma~\ref{lem:ker}~\eqref{lem:ker:ord},
$$\begin{array}{rcccccl}
\ker(f)
& = &
\ker(f \after \asrt_{p})
& \geq &
\ker(\asrt_{p})
& = &
p^{\bot}.
\end{array}$$
\end{itemize}

\item The first kernel map equaliser is an instance of
  Lemma~\ref{lem:partcmpr}~\eqref{lem:partcmpr:kermap}, since
  $\ker(\asrt_{p}) = p^\bot$. For the other one, notice that:
$$\begin{array}{rcccccl}
\asrt_{p} \after \pi_{p}
& = &
\pi_{p} \after \theta_{p}^{-1} \after \xi_{p^{\bot}} \after \pi_{p}
& = &
\pi_{p} \after \theta_{p}^{-1} \after \theta_{p}
& = &
\pi_{p}.
\end{array}$$

\noindent And if $f\colon Y \rightarrow X$ satisfies $\asrt_{p} \after
f = f$, then $f$ factors uniquely through $\pi_{p}$ via the composite:
$\theta_{p}^{-1} \after \xi_{p^\bot} \after f \colon Y \rightarrow
\cmpr{X}{p}$.

\item First,
$$\begin{array}{rcccccl}
\xi_{p^\bot} \after \asrt_{p}
& = &
\xi_{p^\bot} \after \pi_{p} \after \theta_{p}^{-1} \after \xi_{p^{\bot}}
& \smash{\stackrel{\eqref{diag:cmprquotcanmap:pb}}{=}} &
\theta_{p} \after \theta_{p}^{-1} \after \xi_{p^{\bot}}
& = &
\xi_{p^\bot}.
\end{array}$$

\noindent Next, if $f\colon X\rightarrow Y$ satisfies $f \after
\asrt_{p} = f$, then $p^{\bot} \leq \ker(f)$ by
point~\eqref{lem:sharpassert:equiv}, so that $f = \overline{f} \after
\xi_{p^\bot}$, for a necessarily uniqe $\overline{f} \colon X/p^{\bot}
\rightarrow Y$. 

\item Let $f\colon X \rightarrow X$ satisfy $f\leq \idmap$, say via
  $f\ovee g = \idmap$. Further, the predicate $p = \kerbot(f) = \one
  \after f$ is sharp, by assumption. Then $\one \after g = p^{\bot}$
  by uniqueness of orthosupplements:
$$\begin{array}{rcccccl}
p \ovee (\one \after g)
& = &
\one \after (f \ovee g)
& = &
\one \after \idmap
& = &
\one.
\end{array}$$

\noindent As a result, $\one \after g \after \pi_{p} = p^{\bot} \after
\pi_{p} = \zero$, so that $g \after \pi_{p} = \zero$ by
Definition~\ref{def:FinPACwE}~\eqref{def:FinPACwE:zero}. Hence:
$$\begin{array}{rcccccccl}
f \after \pi_{p}
& = &
(f \after \pi_{p}) \ovee (g \after \pi_{p})
& = &
(f \ovee g) \after \pi_{p}
& = &
\idmap \after \pi_{p}
& = &
\pi_{p}.
\end{array}$$ 

\noindent By definition, $p^{\bot} = \ker(f)$, so that there is a
unique total map $\overline{f} \colon X/p^{\bot} \rightarrow X$ with
$f = \overline{f} \after \xi_{p^\bot}$. But then:
$$\begin{array}{rcccccl}
\pi_{p}
& = &
f \after \pi_{p}
& = &
\overline{f} \after \xi_{p^\bot} \after \pi_{p}
& = &
\overline{f} \after \theta_{p}.
\end{array}$$

\noindent Hence $\overline{f} = \pi_{p} \after \theta_{p}^{-1}$, using
that $p$ is sharp. Now we can conclude $f = \overline{f} \after
\xi_{p^\bot} = \pi_{p} \after \theta_{p}^{-1} \after \xi_{p^{\bot}} =
\asrt_{p}$. In a similar way one obtains $g = \asrt_{p^\bot}$.

In this situation the decomposition map $\dc_{p} =
\dtuple{\xi_{p^\bot}, \xi_{p}} \colon X \rightarrow X/p^{\bot} + X/p$
is an isomorphism.  We already know that it has a right inverse
$[\pi_{p}, \pi_{p^\bot}] \after
(\theta_{p}^{-1}+\theta_{p^\bot}^{-1})$ by
Lemma~\ref{lem:cmprquotcanmap}~\eqref{lem:cmprquotcanmap:triangle}.
We also have:
$$\begin{array}[b]{rcl}
\lefteqn{[\pi_{p}, \pi_{p^\bot}] \after (\theta_{p}^{-1}+\theta_{p^\bot}^{-1})
   \after \dc_{p}} \\
& = &
\nabla \after (\pi_{p}+\pi_{p^\bot}) \after (\theta_{p}^{-1}+\theta_{p^\bot}^{-1})
   \after \dtuple{\xi_{p^\bot}, \xi_{p}} \\
& \smash{\stackrel{\eqref{eqn:dtuplenat}}{=}} &
\nabla \after \dtuple{\pi_{p} \after \theta_{p}^{-1} \after \xi_{p^\bot}, 
   \pi_{p^\bot}  \after \theta_{p^\bot}^{-1} \after \xi_{p}} \\
& = &
\nabla \after \dtuple{\asrt_{p}, \asrt_{p^\bot}} \\
& = &
f \ovee g \\
& = &
\idmap.
\end{array}\eqno{\QEDbox}$$





\end{enumerate}
\end{proof}

If we add two more assumptions we can prove a lot more. At this
stage we start using images that are sharp.

\begin{proposition}
\label{prop:cmprquotimg}
Let $\cat{C}$ be an effectus with quotients and comprehension, in partial
form, which additionally:
\begin{itemize}
\item has sharp images

\item and satisfies, for all sharp predicates $p,q\in \Pred(X)$,
\begin{equation}
\label{equiv:projfulness}
\begin{array}{rcl}
p \leq q
& \qquad\Longleftrightarrow\quad &
\vcenter{\xymatrix@R-1pc@C-1.5pc{
\cmpr{X}{p}\ar@{ >->}[dr]_{\pi_p}\ar@{..>}[rr] & &
   \cmpr{X}{q}\ar@{ >->}[dl]^{\pi_q}
\\
& X &
}}
\end{array}
\end{equation}
\end{itemize}

\noindent Then we can prove the following series of results.
\begin{enumerate}
\item \label{prop:cmprquotimg:floorceildef} Take for an arbitrary
  predicate $p$,
$$\begin{array}{rclcrcl}
\floor{p} 
& \;\defeq\; &
\img(\pi_{p})
& \qquad\mbox{and}\qquad &
\ceil{p} 
& \;\defeq\; &
\floor{p^\bot}^{\bot}.
\end{array}$$

\noindent Then: $\floor{p}$\index{D}{$\floor{p}$, greatest sharp
  predicate below predicate $p$} is the greatest sharp predicate below
$p$, and $\ceil{p}$\index{D}{$\ceil{p}$, least sharp predicate above
  predicate $p$} is the least sharp predicate above $p$.

\item \label{prop:cmprquotimg:floorceileq} Let $p\in\Pred(X)$. For each
  \emph{total} map $f\colon Y\rightarrow X$ one has:
\begin{equation}
\label{eqn:partsubstfloorceil}
\begin{array}{rclcrcl}
\tbox{f}(p) = \one
& \Longleftrightarrow &
\tbox{f}(\floor{p}) = \one
& \quad\mbox{and} \quad &
\tbox{f}(p) = \zero
& \Longleftrightarrow &
\tbox{f}(\ceil{p}) = \zero.
\end{array}
\end{equation}

\noindent For an arbitrary map $g\colon Y \rightarrow X$ we have:
$$\begin{array}{rcl}
\pbox{g}(p) = \one
& \Longleftrightarrow &
\pbox{g}(\floor{p}) = \one.
\end{array}$$

\item \label{prop:cmprquotimg:cokernel} For a sharp predicate
  $p\in\Pred(X)$, the map $\asrt_{p}\colon X \rightarrow X$
  from~\eqref{diag:sharpassrt} satisfies $\img(\asrt_{p}) = p$. Hence
  by Lemma~\ref{lem:quot}~\eqref{lem:quot:coker} we have a cokernel
  map:
$$\xymatrix{
X\ar@/^1.5ex/[r]^-{\asrt_{p}}\ar@/_1.5ex/[r]_-{\zero} & 
   X\ar@{->>}[r]^-{\xi_{p}} & X/p
}$$

\item \label{prop:cmprquotimg:adj} For $f\colon X \rightarrow Y$ and
  $p\in\Pred(X)$ put:
$$\begin{array}{rcl}
\sum_{f}(p)
& \defeq &
\img\big(f \after \pi_{p}\big).
\end{array}$$

\noindent Then there is an adjunction:
\begin{equation}
\label{diag:shapredadj}
\vcenter{\xymatrix{
\ShaPred(X)\ar@/^2ex/[rr]^-{\sum_f} & \bot & 
   \ShaPred(Y)\ar@/^2ex/[ll]^-{\floor{\pbox{f}(-)}}
}}
\end{equation}

\noindent where $\ShaPred$\index{N}{$\ShaPred(X)$, collection of sharp
  predicates on an object $X$}\index{S}{sharp!-- predicate!collection
  of --} describes the subposet of sharp predicates.

\item \label{prop:cmprquotimg:projcomp} Comprehension maps
  $\pi$ are closed under composition, up-to-iso\-mor\-phism.

\item \label{prop:cmprquotimg:orthomodular} The posets $\ShaPred(X)
  \hookrightarrow \Pred(X)$ of sharp predicates are
  orthomodular\index{S}{orthomodular lattice} lattices.

\item \label{prop:cmprquotimg:factsyst} Comprehension maps $\pi$ and
  internal epis form a factorisation 
system\index{S}{factorisation system!-- via comprehension maps} in $\cat{C}$.
\end{enumerate}
\end{proposition}

\begin{proof}
The fact that images are assumed to be sharp plays an important role.
\begin{enumerate}
\item By definition the image $\floor{p} = \img(\pi_{p}) \colon X
  \rightarrow I$ is a sharp predicate. This $\floor{p}$ is below $p$
  by minimality of images, and $\pbox{\pi_{p}}(p) = (p^{\bot} \after
  \pi_{p})^{\bot} = \zero^{\bot} = \one$. If $s\in \Pred(X)$ is sharp
  with $s \leq p$, then $\pi_{s}$ factors through $\pi_{\floor{p}}$
  since:
\begin{equation}
\label{diag:projfloor}
\vcenter{\xymatrix@R-.5pc@C-.5pc{
\cmpr{X}{s}\ar@{ >->}[dr]_{\pi_{s}}\ar[r] & 
   \cmpr{X}{p}\ar@{ >->}[d]_{\pi_p}\ar@{..>}[r]^-{\cong} &
   \cmpr{X}{\floor{p}}\ar@{ >->}[dl]^{\pi_{\floor{p}}}
\\
& X &
}}
\end{equation}

\noindent The dashed arrow exists since by definition
$\pbox{\pi_{p}}(\floor{p}) = \pbox{\pi_{p}}(\img(\pi_{p})) = \one$.
Hence we get $s \leq \floor{p}$ by~\eqref{equiv:projfulness}.

We now have $\floor{p^\bot} \leq p^{\bot}$, and thus $p \leq
\floor{p^\bot}^{\bot} = \ceil{p}$. If $p \leq s$, where $s$ is sharp,
then $s^{\bot} \leq p^{\bot}$, so that $s^{\bot} \leq
\floor{p^{\bot}}$, and thus $\ceil{p} = \floor{p^\bot}^{\bot} \leq s$.

\item Let $f\colon Y \rightarrow X$ be a total map. If
  $\tbox{f}(\floor{p}) = \one$, then $\tbox{f}(p) = \one$ since
  $\floor{p} \leq p$. In the other direction, if $\tbox{f}(p) = \one$,
  then $f$ factors through $\pi_{p}$. But $f$ then also factors
  through $\pi_{\floor{p}}$, by the isomorphism in
  Diagram~\eqref{diag:projfloor}, so that $\tbox{f}(\floor{p}) =
  \one$.

We now get:
$$\begin{array}{rcl}
\tbox{f}(p) = \zero
& \Longleftrightarrow &
\tbox{f}(p^{\bot}) = \tbox{f}(p)^{\bot} = \one \\
& \Longleftrightarrow &
\tbox{f}(\floor{p^{\bot}}) = \one \qquad\qquad \mbox{as just shown} \\
& \Longleftrightarrow &
\tbox{f}(\ceil{p}) = \tbox{f}(\floor{p^\bot}^{\bot}) = 
   \tbox{f}(\floor{p^\bot})^{\bot} = \zero.
\end{array}$$

For an arbitrary map $g\colon Y \rightarrow X$, if
$\pbox{g}(\floor{p}) = \one$, then $\pbox{g}(p) = \one$ since
$\pbox{g}$ is monotone by Lemma~\ref{lem:partialmonotone}. In the
other direction, if $\pbox{g}(p) = \one$, then $\img(g) \leq p$ by
minimality of images, and thus $\img(g) \leq \floor{p}$ by
point~\eqref{prop:cmprquotimg:floorceildef}. But then $\one =
\pbox{g}(\img(g)) \leq \pbox{g}(\floor{p})$.

\item For a sharp predicate $p$ we have:
$$\begin{array}{rcll}
\img(\asrt_{p})
\hspace*{\arraycolsep}=\hspace*{\arraycolsep}
\img(\pi_{p} \after \theta_{p}^{-1} \after \xi_{p^\bot})
& = &
\img(\pi_{p}) \quad & 
   \mbox{by Lemma~\ref{lem:img}~\eqref{lem:img:compeq},~\eqref{lem:img:ext}} \\
& = &
\floor{p} \\
& = &
p   & \mbox{since $p$ is sharp.}
\end{array}$$

\item Let $f\colon X \rightarrow Y$ be an arbitrary map. We need to
  prove for sharp predicates $p\in\Pred(X)$ and $q\in\Pred(Y)$,
$$\begin{array}{rcl}
\sum_{f}(p) = \img(f \after \pi_{p}) \leq q
& \Longleftrightarrow &
p \leq \floor{\pbox{f}(q)}.
\end{array}$$

\noindent This works in the following way.
\begin{itemize}
\item Let $\sum_{f}(p) = \img(f \after \pi_{p}) \leq q$. We have:
$$\begin{array}{rcccccl}
\one
& = &
\pbox{(f \after \pi_{p})}(\img(f \after \pi_{p}))
& \leq &
\pbox{(f \after \pi_{p})}(q)
& = &
\pbox{\pi_{p}}\big(\pbox{f}(q)\big).
\end{array}$$

\noindent This implies that we have a map $\cmpr{X}{p} \rightarrow
\cmpr{X}{\pbox{f}(q)}$ commuting with the projections. We thus get $p
\leq \pbox{f}(q)$ by~\eqref{equiv:projfulness}, and thus $p \leq
\floor{\pbox{f}(q)}$.

\item In the other direction, let $p \leq \floor{\pbox{f}(q)} \leq
\pbox{f}(q)$. Then:
$$\begin{array}{rcccccl}
\one
& = &
\pbox{\pi_p}(p)
& \leq &
\pbox{\pi_{p}}\big(\pbox{f}(q)\big)
& = &
\pbox{(f \after \pi_{p})}(q).
\end{array}$$

\noindent But then $\sum_{f}(p) = \img(f \after \pi_{p}) \leq q$ by
minimality of images.
\end{itemize}

\item For two sharp predicates $p\in\Pred(X)$ and
  $q\in\Pred(\cmpr{X}{p})$ consider the sharp predicate on $X$ given
  by:
$$\begin{array}{rcl}
\sum_{\pi_{p}}(q)
& = &
\img(\pi_{p} \after \pi_{q}).
\end{array}$$

\noindent Notice that by Lemma~\ref{lem:img}~\eqref{lem:img:comp},
$$\begin{array}{rcccccccl}
\sum_{\pi_p}(q)
& = &
\img(\pi_{p} \after \pi_{q})
& \leq &
\img(\pi_{p})
& = &
\floor{p}
& = &
p.
\end{array}\eqno{(*)}$$

We are done if we can show that there are necessarily unique
maps $\varphi$, $\psi$ in a commuting square:
$$\xymatrix@C-1pc{
\quad\cmpr{\cmpr{X}{p}}{q}\;\ar@{ >->}[d]_{\pi_q}\ar@{..>}@/^2ex/[rr]^-{\varphi} 
   & \cong &
   \cmpr{X}{\sum_{\pi_p}(q)}\ar@{ >->}[d]^{\pi_{\sum_{\pi_p}(q)}}
      \ar@{..>}@/^2ex/[ll]^-{\psi}
\\
\cmpr{X}{p}\ar@{ >->}[rr]_-{\pi_p} & & X
}$$

\noindent The existence of the map $\varphi$ is easy since
comprehension maps are total and thus:
$$\begin{array}{rcccl}
\one
& = &
\pbox{(\pi_{p} \after \pi_{q})}\big(\img(\pi_{p} \after \pi_{q})\big)
& = &
\tbox{(\pi_{p} \after \pi_{q})}\big(\sum_{\pi_{p}}(q)\big).
\end{array}$$

\noindent In order to show the existence of the map $\psi$ we use
\emph{ad hoc} notation for:
$$\xymatrix{
p \sqcap q = \Big(X\ar[r]^-{\xi_{p^\bot}} & 
   X/p^{\bot}\ar[r]^-{\theta_{p}^{-1}}_-{\cong} &
   \cmpr{X}{p}\ar[r]^-{q} & I\Big)
}$$

\noindent Then:
$$\begin{array}{rcccccccl}
\tbox{\pi_{p}}(p \sqcap q)
& = &
(p \sqcap q) \after \pi_{p}
& = &
q \after \theta_{p}^{-1} \after \xi_{p^\bot} \after \pi_{p}
& = &
q \after \theta_{p}^{-1} \after \theta_{p}
& = &
q.
\end{array}$$

\noindent And thus:
$$\begin{array}{rcccccccl}
\pbox{(\pi_{p} \after \pi_{q})}(p \sqcap q)
& = &
\tbox{(\pi_{p} \after \pi_{q})}(p \sqcap q)
& = &
\tbox{\pi_q}\big(\tbox{\pi_p}(p \sqcap q)\big)
& = &
\tbox{\pi_{q}}(q)
& = &
\one.
\end{array}$$

\noindent The latter yields $\sum_{\pi_{p}}(q) = \img(\pi_{p} \after
\pi_{q}) \leq p \sqcap q$ by minimality of images. We now go back to
the inequality $\sum_{\pi_{p}}(q) \leq p$ from~$(*)$. It gives a total
map $f\colon \cmpr{X}{\sum_{\pi_p}(q)} \rightarrow \cmpr{X}{p}$ with
$\pi_{p} \after f = \pi_{\sum_{\pi_p}(q)}$. This $f$ factors through
$\pi_{q}$, and thus restricts to the required map $\psi$, since:
$$\begin{array}{rcl}
\tbox{f}(q)
\hspace*{\arraycolsep}=\hspace*{\arraycolsep}
q \after f
& = &
q \after \theta_{p}^{-1} \after \theta_{p} \after f \\
& = &
q \after \theta_{p}^{-1} \after \xi_{p^\bot} \after \pi_{p} \after f \\
& = &
(p\sqcap q) \after \pi_{\sum_{\pi_p}(q)} \\
& \geq &
\sum_{\pi_{p}}(q) \after \pi_{\sum_{\pi_p}(q)} \\
& = &
\one.
\end{array}$$

\item We first show how to obtain conjunction $\wedge$ for sharp 
predicates. For $p,q\in\ShaPred(X)$ define:
$$\begin{array}{rcccl}
p \wedge q
& = &
\sum_{\pi_p}\tbox{\pi_p}(q)
& = &
\img\big(\pi_{p} \after \tbox{\pi_{p}}(q)\big).
\end{array}$$

\noindent Since projections are closed under pullback --- see
Lemma~\ref{lem:totcmpr}~\eqref{lem:totcmpr:pb} --- we have a total
isomorphism $\varphi$ between two pullbacks in:
$$\vcenter{\xymatrix@R-1.5pc@C-1.5pc{
\cmpr{\cmpr{X}{p}}{\tbox{\pi_p}(q)}\ar@{..>}[dr]^(0.55){\varphi}_-{\cong}
   \ar@/^2ex/[drrrr]^-{\pi_{\tbox{\pi_p}(q)}}\ar@/_2ex/[dddr]
\\
& \cmpr{\cmpr{X}{q}}{\tbox{\pi_q}(p)}\ar[rrr]\ar@{ >->}[dd]^{\pi_{\tbox{\pi_q}(p)}}
   & & & \cmpr{X}{p}\ar@{ >->}[dd]^{\pi_p}
\\
\\
& \cmpr{X}{q}\ar@{ >->}[rrr]_-{\pi_q} & & & X
}}$$

\noindent Hence we can prove that $p\wedge q$ is a lower bound of
both $p$ and $q$ via Lemma~\ref{lem:img}~\eqref{lem:img:comp}:
$$\begin{array}{rcl}
p \wedge q
& = &
\img\big(\pi_{p} \after \tbox{\pi_{p}}(q)\big)
\hspace*{\arraycolsep}\leq\hspace*{\arraycolsep}
\img(\pi_{p})
\hspace*{\arraycolsep}=\hspace*{\arraycolsep}
\floor{p}
\hspace*{\arraycolsep}=\hspace*{\arraycolsep}
p
\\
p \wedge q
& = &
\img\big(\pi_{p} \after \tbox{\pi_{p}}(q)\big)
\hspace*{\arraycolsep}=\hspace*{\arraycolsep}
\img\big(\pi_{q} \after \tbox{\pi_{q}}(p) \after \varphi\big)
\hspace*{\arraycolsep}\leq\hspace*{\arraycolsep}
\img(\pi_{q})
\hspace*{\arraycolsep}=\hspace*{\arraycolsep}
\floor{q}
\hspace*{\arraycolsep}=\hspace*{\arraycolsep}
q.
\end{array}$$

\noindent We show that $p\wedge q$ is the greatest lower bound in
$\ShaPred(X)$: inequalities $r \leq p$ and $r\leq q$ yield maps
$f\colon \cmpr{X}{r} \rightarrow \cmpr{X}{p}$ and $g\colon \cmpr{X}{r}
\rightarrow \cmpr{X}{q}$ with $\pi_{p} \after f = \pi_{r} = \pi_{q}
\after g$.  The above pullback gives a mediating map $h\colon
\cmpr{X}{r} \rightarrow \cmpr{\cmpr{X}{p}}{\tbox{\pi_p}(q)}$ with
$\pi_{\tbox{\pi_p}(q)} \after h = f$. But then we are done:
$$\begin{array}{rcl}
r
\hspace*{\arraycolsep}=\hspace*{\arraycolsep}
\floor{r}
\hspace*{\arraycolsep}=\hspace*{\arraycolsep}
\img(\pi_{r})
\hspace*{\arraycolsep}=\hspace*{\arraycolsep}
\img(\pi_{p} \after f)
& = &
\img(\pi_{p} \after \pi_{\tbox{\pi_p}(q)} \after h) \\
& \leq &
\img(\pi_{p} \after \pi_{\tbox{\pi_p}(q)}) \\
& = &
p \wedge q.
\end{array}$$

\noindent We now also have joins $p \vee q$ in $\ShaPred(X)$ via De
Morgan: $p\vee q = (p^{\bot} \wedge q^{\bot})^{\bot}$. We prove the
orthomodularity law in the following way. Let $p\leq q$; we have to
prove $p \vee (p^{\bot} \wedge q) = q$. This is done essentially as
in~\cite[Prop.~1]{HeunenJ10}.
$$\begin{array}{rcll}
p \vee (p^{\bot} \wedge q)
& = &
(p\wedge q) \vee (p^{\bot} \wedge q) \\
& = &
\sum_{\pi_{q}}(p) \vee \sum_{\pi_{q}}(p^{\bot}) \\
& = &
\sum_{\pi_{q}}(p \vee p^{\bot}) 
   & \mbox{since $\sum_{\pi_{p}}$ is left adjoint, see~\eqref{diag:shapredadj}} \\
& = &
\sum_{\pi_{q}}(\one) & \mbox{since $p$ is sharp} \\
& = &
\sum_{\pi_{q}}(\tbox{\pi_q}(\one)) \\
& = &
q \wedge \one \\
& = &
q.
\end{array}$$

\item We finally have to show that comprehension maps form the
  `abstract monos' in a factorisation system, with internal epis as
  `abstract epis'. We recall the factorisation from
  Lemma~\ref{lem:partcmpr}~\eqref{lem:partcmpr:imgfact}:
$$\xymatrix@R-.5pc{
X\ar[rr]^-{f}\ar@{..>}@/_1ex/[dr]_-{\iep(f)} & & Y
\\
& \cmpr{Y}{\img(f)}\ar@{ >->}@/_1ex/[ur]_-{\pi_{\img(f)}}
}$$

\noindent The map $\iep(f)$ is the `internal epi' part of $f$. Let
$p\colon\cmpr{Y}{\img(f)} \rightarrow I$ be a predicate with
$\pbox{\iep(f)}(p) = \one$. If we can show that $p=\one$, then we know
that the map $\iep(f)$ is indeed internally epic. First we have:
$$\begin{array}{rcl}
\zero
\hspace*{\arraycolsep}=\hspace*{\arraycolsep}
\pbox{\iep(f)}(p)^{\bot}
\hspace*{\arraycolsep}=\hspace*{\arraycolsep}
p^{\bot} \after \iep(f)
& = &
p^{\bot} \after \theta_{\img(f)}^{-1} \after \theta_{\img(f)} \after \iep(f) \\
& = &
p^{\bot} \after \theta_{\img(f)}^{-1} \after \xi_{\imgbot(f)} \after 
   \pi_{\img(f)} \after \iep(f) \\
& = &
p^{\bot} \after \theta_{\img(f)}^{-1} \after \xi_{\imgbot(f)} \after f.
\end{array}$$

\noindent Lemma~\ref{lem:img}~\eqref{lem:img:post} then yields:
$$\begin{array}{rcl}
p^{\bot} \after \theta_{\img(f)}^{-1} \after \xi_{\imgbot(f)}
& \leq &
\imgbot(f).
\end{array}$$

\noindent But we also have, by
Lemma~\ref{lem:quot}~\eqref{lem:quot:XiComp},
$$\begin{array}{rcccl}
p^{\bot} \after \theta_{\img(f)}^{-1} \after \xi_{\imgbot(f)}
& \leq &
\imgbot(f)^{\bot}
& = &
\img(f).
\end{array}$$

\noindent Since images are sharp, by definition, we obtain:
$$\begin{array}{rcl}
p^{\bot} \after \theta_{\img(f)}^{-1} \after \xi_{\imgbot(f)}
& = &
\zero.
\end{array}$$

\noindent The fact that the map $\theta_{\img(f)}^{-1} \after
\xi_{\imgbot(f)}$ is epic --- see
Lemma~\ref{lem:quot}~\eqref{lem:quot:XiEpi} --- yields $p^{\bot} =
\zero$, and thus $p = \one$, as required.

There are a few more requirements of a factorisation system that we
need to check: the internal epis are closed under composition, by
Lemma~\ref{lem:img}~\eqref{lem:img:cat}; the comprehension maps are
closed under composition, by
point~\eqref{prop:cmprquotimg:projcomp}. We show that an internally
epic comprehension map is an isomorphism. Let $\pi_{p}$ be internally
epic, that is, $\one = \img(\pi_{p}) = \floor{p}$, so that
$\pi_{\floor{p}} = \pi_{\one}$ is an isomorphism by
Lemma~\ref{lem:totcmpr}~\eqref{lem:totcmpr:one}.  But $\pi_p$ is then
an isomorphism too, by the isomorphism in~\eqref{diag:projfloor}.

Finally we have to check that the diagonal-fill-in property holds:
in a commuting rectangle as below, we need a diagonal.
$$\xymatrix@C+2pc{
Y\ar[r]^-{f}_-{\text{internal epi }\;}\ar[d]_{g} & Z\ar[d]^{h}\ar@{..>}[dl]
\\
\cmpr{X}{p}\ar@{ >->}[r]_-{\pi_p} & X
}$$

\noindent We are done if we can show that $h$ factors through
$\pi_{p}$, that is, if $\pbox{h}(p) = \one$. By
Lemma~\ref{lem:img}~\eqref{lem:img:epichar} this is
equivalent to $\pbox{f}\big(\pbox{h}(p)\big) = \one$. But:
$$\begin{array}{rcccccl}
\pbox{f}\big(\pbox{h}(p)\big)
& = &
\pbox{g}\big(\pbox{\pi_p}(p)\big)
& = &
\pbox{g}(\one)
& = &
\one.
\end{array}\eqno{\QEDbox}$$
\end{enumerate}
\end{proof}

As a result of points~\eqref{prop:cmprquotimg:adj}
and~\eqref{prop:cmprquotimg:orthomodular} in
Proposition~\ref{prop:cmprquotimg} the assignment $X\mapsto
\ShaPred(X)$ gives a functor $\cat{C} \rightarrow \OMLatGal$, where
$\OMLatGal$ is the dagger kernel category of orthomodular lattices,
with Galois connections between them, as used in~\cite{Jacobs10b} (see
also~\cite{HeunenJ10}).

\begin{remark}
\label{rem:factsyst}
In an effectus as considered in Proposition~\ref{prop:cmprquotimg} we
have two factorisation systems, namely:
\begin{itemize}
\item the one with internal monos and quotient maps $\xi$ from
Proposition~\ref{prop:quotfactorisation};

\item the one with comprehension maps $\pi$ and internal epis from
  Propostion~\ref{prop:cmprquotimg}~\eqref{prop:cmprquotimg:factsyst}.
\end{itemize}

\noindent The question is how they are related. What is called the
`first isomorphism theorem' gives an answer. This `theorem' is the
familiar phenomenom, that can be expressed informally as $X/\ker(f)
\cong \img(f)$, for each map $f\colon X \rightarrow Y$. It holds for
many algebraic structures. However, this `theorem' does not hold in
effectuses, in general.

%

Abstractly this isomorphism is obtained via a canonical map from the
`coimage' to the `image', which is then an isomorphism. In the setting
of an effectus (in partial form), this canonical map is the dashed one
described in the following diagram --- which is the same as
diagram~(2) in~\cite{Grandis92} for the ideal of zero maps.
$$\xymatrix@C-.5pc{
\cmpr{X}{\ker(f)}\ar@{ >->}[r]^-{\pi_{\ker(f)}} & 
   X\ar@/^1.5ex/[rr]^-{f}\ar@/_1.5ex/[rr]_-{\zero}
      \ar@{->>}[d]_{\xi_{\floor{\ker(f)}}} & &
   Y\ar@{->>}[r]^-{\xi_{\img(f)}} & Y/\img(f)
\\
& X/\floor{\ker(f)}\ar@{..>}[rr] & & \cmpr{Y}{\img(f)}\ar@{ >->}[u]_{\pi_{\img(f)}}
}$$

\noindent We briefly explain this diagram, starting on the right.
\begin{itemize}
\item The map $\xi_{\img(f)}$ is the cokernel of $f$, see
  Lemma~\ref{lem:quot}~\eqref{lem:quot:coker}. The upgoing projection
  $\pi_{\img(f)}$ is its kernel map, because $\ker(\xi_{\img(f)}) =
  \img(f)$.

\item On the left, the projection $\pi_{\ker(f)}$ is the kernel map of
  $f$, see Lemma~\ref{lem:partcmpr}~\eqref{lem:partcmpr:kermap}. Its
  cokernel is the downgoing map $\xi_{\floor{\ker(f)}}$ since
  $\img(\pi_{\ker(f)}) = \floor{\ker(f)}$.
\end{itemize}

\noindent The dashed arrow exist, because the evident maps
$X/\floor{\ker(f)} \rightarrow Y$ and $X \rightarrow
\cmpr{Y}{\img(f)}$ restrict appropriately.

This canonical dashed map is in general not an isomorphism. For instance,
for a partial map $f\colon X \rightarrow \Dst(Y+1)$ in the Kleisli category
of the distribution monad $\Dst$ we have:
$$\begin{array}{rcl}
X/\floor{\ker(f)}
& = &
\setin{x}{X}{\floor{\ker(f)} < 1} \\
& = &
\setin{x}{X}{\ker(f)=0} \\
& = &
\setin{x}{X}{f(x)(*) = 0} \\
\cmpr{Y}{\img(f)}
& = &
\setin{y}{Y}{\img(f)(y) = 1} \\
& = &
\setin{y}{Y}{\exin{x}{X}{f(x)(y) > 0}}.
\end{array}$$

\noindent The canonical map $X/\floor{\ker(f)} \rightarrow
\cmpr{Y}{\img(f)}$ in the above diagram is given by $x \mapsto
\sum_{y, f(x)(y)>0} f(x)(y)\bigket{y}$, which is not an isomorphism.

More generally, a total internally epic map need not be an
isomorphism. The situation is reminiscent of Quillen model categories
as used in homotopy theory (see~\cite{Quillen67}), where one has two
factorisation systems, and mediating maps like the dashed one above
which are not necessarily isomorphisms.  However, in model categories
the relevant classes of maps satisfy closure properties which do not
hold here.

We have already hinted a few times that an effectus, in partial form,
is similar to, but more general, and less well-behaved, than an
Abelian category.\index{S}{Abelian!-- category} For such an Abelian
category $\cat{A}$ we do have a quotient-comprehension chain:
$$\vcenter{\xymatrix{
\Sub(\cat{A})\ar[d]_{\dashv\;}^{\;\dashv}
   \ar@/_8ex/[d]^{\;\dashv}_{\begin{array}{c}\scriptstyle\mathrm{Quotient} \\[-.7em]
                        \scriptstyle(U\rightarrowtail X)\mapsto X/U\end{array}}
   \ar@/^8ex/[d]_{\dashv\;}^{\begin{array}{c}\scriptstyle\mathrm{Comprehension} \\[-.7em]
                        \scriptstyle(U\rightarrowtail X) \mapsto U\end{array}} \\
\cat{A}\ar@/^4ex/[u]^(0.4){\zero\!}\ar@/_4ex/[u]_(0.4){\!\one}
}}$$

\noindent Quotients are obtained via cokernels: for a subobject $m
\colon U\rightarrowtail X$ one takes the codomain of the cokernel
$\coker(m)\colon X \twoheadrightarrow X/U$ as quotient unit.  In such
an Abelian category the first isomorphism theorem does hold.


The general theory that we are after will bear some resemblance to
recent work in (non-Abelian) homological algebra, see in
particular~\cite{Weighill14}, where adjunction chains like above are
studied, but also~\cite{Janelidze14,Grandis12}. There, part of the
motivation is axiomatising the category of (non-Abelian) groups,
following~\cite{MacLane50}. As a result, stronger properties are used
than occur in the current setting, such as the first isomorphism
theorem and left adjoints to substitution, corresponding to
bifibrations, which do not hold in general here.
\end{remark}

We still have to check that the assumptions in
Proposition~\ref{prop:cmprquotimg} make sense. This concentrates on
the equivalence in~\eqref{equiv:projfulness} saying that for sharp
predicates $p,q$ one has $p\leq q$ iff $\pi_{p} \leq \pi_{q}$, that
is, $\pi_{p}$ factors through $\pi_{q}$. We check this for the
effectus examples $\Sets$ --- in fact, more generally, for Boolean
effectuses --- for $\Kl(\Dst)$, and $\op{\vNA}$.

This equivalence~\eqref{equiv:projfulness} obviously holds in the
effectus $\Sets$, since the comprehension of a predicate $P\subseteq
X$ is given by $P$ itself.  It is less trivial that the
equivalence~\eqref{equiv:projfulness} also holds for Boolean
effectuses / extensive categories, see
Theorem~\ref{thm:extensiveeffectus}. Recall that all predicates are
automatically sharp in the Boolean case, see
Lemma~\ref{lem:booleffectus}~\eqref{lem:booleffectus:idemp}.

\begin{lemma}
Let $\cat{A}$ be an extensive category, understood as a Boolean
effectus with comprehension. Then $p \leq q$ iff $\pi_{p} \leq
\pi_{q}$ for all predicates $p,q$ on the same object.
\end{lemma}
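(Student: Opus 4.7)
My approach is to handle the two implications separately. The forward direction $p \leq q \Rightarrow \pi_p \leq \pi_q$ is immediate from monotonicity of total substitution together with comprehension being right adjoint to truth: the chain $\one = \tbox{\pi_p}(p) \leq \tbox{\pi_p}(q)$ forces $\tbox{\pi_p}(q) = \one$, so the universal property of $\pi_q$ yields the required factorisation of $\pi_p$ through $\pi_q$.

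For the converse, assume $\pi_p = \pi_q \circ f$ for some $f$. Using Lemma~\ref{lem:totcmpr}(\ref{lem:totcmpr:bot}), which gives $q^\bot \circ \pi_q = \zero$, I get $\tbox{\pi_p}(q^\bot) = q^\bot \circ \pi_q \circ f = \zero$. The strategy is then to deduce $p \wedge q^\bot = \zero$ in the Boolean algebra $\Pred(X)$, which is equivalent to $p \leq q$. The ingredients are: Proposition~\ref{prop:booleffectus}, which ensures each $\tbox{g}$ is a Boolean-algebra homomorphism (so substitution preserves meets); Theorem~\ref{thm:effectusEMod}, according to which $\Pred$ sends finite coproducts in $\cat{A}$ to finite products of effect algebras (here Boolean algebras); and the coproduct decomposition $[\pi_p,\pi_{p^\bot}] \colon \cmpr{X}{p} + \cmpr{X}{p^\bot} \cong X$ from Lemma~\ref{lem:totcmpr}(\ref{lem:totcmpr:sumiso}). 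Together they say that a predicate $r$ on $X$ is determined by the pair $\bigl(\tbox{\pi_p}(r),\, \tbox{\pi_{p^\bot}}(r)\bigr)$.

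Applying this to $r \defeq p \wedge q^\bot$, I compute componentwise: first, $\tbox{\pi_p}(p \wedge q^\bot) = \tbox{\pi_p}(p) \wedge \tbox{\pi_p}(q^\bot) = \one \wedge \zero = \zero$; second, $\tbox{\pi_{p^\bot}}(p \wedge q^\bot) = \tbox{\pi_{p^\bot}}(p) \wedge \tbox{\pi_{p^\bot}}(q^\bot) = \zero \wedge \tbox{\pi_{p^\bot}}(q^\bot) = \zero$, where the second line uses $p \circ \pi_{p^\bot} = \zero$ from Lemma~\ref{lem:totcmpr}(\ref{lem:totcmpr:bot}). Both components vanish, so $p \wedge q^\bot = \zero$, and the reverse implication follows. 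The only real subtlety is choosing to route the argument through the Boolean meet and the coproduct-preservation of $\Pred$; one could alternatively identify the pullback of $\pi_p$ and $\pi_{q^\bot}$ with a comprehension object and appeal to strictness of $0$ in an extensive category, but the componentwise Boolean calculation presented here is cleaner given what has already been established.
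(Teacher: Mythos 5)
Your proof is correct, but the converse direction takes a genuinely different route from the paper's. The paper argues geometrically, in two steps: it first shows that $\pi_{p}\leq\pi_{q}$ forces $\pi_{q^{\bot}}\leq\pi_{p^{\bot}}$, by pulling $[\pi_{p},\pi_{p^{\bot}}]^{-1}\after\pi_{q^{\bot}}$ back along the two coprojections and using disjointness of comprehension maps together with strictness of $0$ to kill one summand; it then builds an explicit predicate $r$ from $p\after\pi_{q}$ and the splitting of $X$ along $q$, and exhibits a bound witnessing $p\ovee r=q$, giving $p\leq q$ in the effect-algebra order. You instead work entirely on the logical side: since in a Boolean effectus every $\tbox{g}$ is a Boolean-algebra homomorphism (Proposition~\ref{prop:booleffectus}) and $\Pred$ turns the splitting $X\cong\cmpr{X}{p}+\cmpr{X}{p^{\bot}}$ into a product of Boolean algebras, a predicate on $X$ is determined by its restrictions along $\pi_{p}$ and $\pi_{p^{\bot}}$, and you check componentwise that $p\wedge q^{\bot}=\zero$. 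This is shorter: it needs only the vanishing $q^{\bot}\after\pi_{p}=\zero$, which falls straight out of the hypothesis $\pi_{p}=\pi_{q}\after f$, whereas the paper's first step exists precisely to manufacture the second vanishing $p\after\pi_{q^{\bot}}=\zero$ required for its bound construction; the price is reliance on the heavier Proposition~\ref{prop:booleffectus}. One small correction: the isomorphism $[\pi_{p},\pi_{p^{\bot}}]\colon\cmpr{X}{p}+\cmpr{X}{p^{\bot}}\rightarrow X$ is \emph{not} Lemma~\ref{lem:totcmpr}~\eqref{lem:totcmpr:sumiso} (which gives $\cmpr{X}{p}+\cmpr{Y}{q}\cong\cmpr{X+Y\,}{\,[p,q]}$ and holds in any effectus with comprehension); it is the Boolean/extensive-specific fact established in the proof of Proposition~\ref{prop:extensiveeffectus} and used freely in the paper's own argument, so the fact is available here but the citation should be adjusted.
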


\begin{proof}
We proceed in two steps, where we first prove that $\pi_{p} \leq
\pi_{q}$ implies $\pi_{q^\bot} \leq \pi_{p^{\bot}}$, and only then
that it also implies $p\leq q$.

So let $\pi_{p} \leq \pi_{q}$, say via a (necessarily unique) map
$\varphi \colon \cmpr{X}{p} \rightarrow \cmpr{X}{q}$ with $\pi_{q}
\after \varphi = \pi_{p}$. 
\begin{enumerate}
\item We first need to produce a map $\cmpr{X}{q^\bot} \rightarrow
  \cmpr{X}{p^\bot}$ commuting with the projections. Consider the
  following diagram in $\cat{A}$.
$$\xymatrix@R-.5pc{
A\ar@{ >->}[r]^-{g_1}\ar[dd]_{f_1}\pullback & 
   \cmpr{X}{q^\bot}\ar@{ >->}[d]^{\pi_{q^\bot}} & 
   B\ar@{ >->}[l]_{g_2}\ar[dd]^-{f_2}\pullback[dl]
\\
& X\ar[d]^-{[\pi_{p}, \pi_{p^\bot}]^{-1}}_-{\cong} &
\\
\cmpr{X}{p}\ar@{ >->}[r]_-{\kappa_1} & 
   \cmpr{X}{p}+\cmpr{X}{p^\bot} & \cmpr{X}{p^\bot}\ar@{ >->}[l]^-{\kappa_2}
}$$

\noindent We now have a situation:
$$\xymatrix@R-.5pc{
A\ar[d]_{f_1}\ar@/^2ex/[drr]^{g_1}\ar@{..>}[dr]^(0.6){\cong}
\\
\cmpr{X}{p}\ar@/_2ex/[dr]_{\varphi} & 
   0\ar[r]\ar[d]\pullback & 
   \cmpr{X}{q^\bot}\ar@{ >->}[d]^{\pi_{q^\bot}}
\\
& \cmpr{X}{q}\ar@{ >->}[r]_-{\pi_q} & Y+1
}$$

\noindent The outer diagram commutes since:
$$\begin{array}{rcl}
\pi_{q^\bot} \after g_{1}
& = &
[\pi_{p}, \pi_{p^\bot}] \after [\pi_{p}, \pi_{p^\bot}]^{-1} \after 
   \pi_{q^\bot} \after g_{1} \\
& = &
[\pi_{p}, \pi_{p^\bot}] \after \kappa_{1} \after f_{1} \\
& = &
\pi_{p} \after f_{1} \\
& = &
\pi_{q} \after \varphi \after f_{1}.
\end{array}$$

\noindent Thus, $A\cong 0$, since $0$ is strict, by
Lemma~\ref{lem:extensive}~\eqref{lem:extensive:coproj}. Since the
cotuple $[g_{1}, g_{2}] \colon A+B \rightarrow \cmpr{X}{q^\bot}$ is an
isomorphism, we obtain that $g_{2} \colon B \rightarrow
\cmpr{X}{q^\bot}$ is an isomorphism. But then $f_{2} \after g_{2}^{-1}
\colon \cmpr{X}{q^\bot} \rightarrow \cmpr{X}{p^\bot}$ is the required
map, since:
$$\begin{array}{rcl}
\pi_{p^\bot} \after f_{2} \after g_{2}^{-1}
& = &
[\pi_{p}, \pi_{p^\bot}] \after \kappa_{2} \after f_{2} \after g_{2}^{-1} \\
& = &
[\pi_{p}, \pi_{p^\bot}] \after [\pi_{p}, \pi_{p^\bot}]^{-1} \after \pi_{q^\bot} \\
& = &
\pi_{q^\bot}.
\end{array}$$

\item Our second aim is to prove $p\leq q$, assuming $\pi_{p} \leq
  \pi_{q}$ via the map $\varphi$. We define a predicate $r =
  \XI \after ((p\after\pi_{q})+\bang)  \after [\pi_{q},
    \pi_{q^\bot}]^{-1} \colon X \rightarrow 1+1$, and claim that
  $p\ovee r = q$. This proves $p\leq q$.

As bound $b \colon X \rightarrow (1+1)+1$ we take $b =
((p\after\pi_{q})+\bang) \after [\pi_{q}, \pi_{q^\bot}]^{-1}$. Then $\XI
\after b = r$ holds by construction. We have:
$$\begin{array}{rcl}
\IV \after b
& = &
[\idmap,\kappa_{2}] \after ((p\after\pi_{q})+\bang) \after 
   [\pi_{q}, \pi_{q^\bot}]^{-1} \\
& = &
[p\after\pi_{q}, \kappa_{2} \after\bang] \after [\pi_{q}, \pi_{q^\bot}]^{-1} \\
& = &
p.
\end{array}$$

\noindent The last equation follows from:
$$\begin{array}{rclcrcl}
[p\after\pi_{q}, \kappa_{2} \after\bang] 
& = &
p \after [\pi_{q}, \pi_{q^\bot}]
& \quad\mbox{and thus from}\quad &
p \after \pi_{q^{\bot}}
& = &
\zero.
\end{array}$$

\noindent As just shown we have $\pi_{q^\bot} \leq \pi_{p^\bot}$, say
via a map $\psi$. Then:
$$\begin{array}{rcccccl}
p \after \pi_{q^\bot}
& = &
p \after \pi_{p^\bot} \after \psi
& = &
\zero \after \psi
& = &
\zero.
\end{array}$$

\noindent We now obtain:
$$\begin{array}{rcl}
p \ovee r
\hspace*{\arraycolsep}=\hspace*{\arraycolsep}
(\nabla+\idmap) \after b
& = &
(\nabla+\idmap) \after ((p\after\pi_{q})+\bang) \after 
   [\pi_{q}, \pi_{q^\bot}]^{-1} \\
& = &
((\bang\after\pi_{q})+\bang) \after [\pi_{q}, \pi_{q^\bot}]^{-1} \\
& = &
(\bang+\bang) \after [\pi_{q}, \pi_{q^\bot}]^{-1} \\
& = &
q,
\end{array}$$

\noindent where the latter equation holds since:
$$\begin{array}{rcccccl}
q \after [\pi_{q}, \pi_{q^\bot}]
& = &
[q \after \pi_{q}, q \after \pi_{q^\bot}]
& = &
[\kappa_{1} \after \bang, \kappa_{2} \after \bang]
& = &
\bang+\bang.
\end{array}\eqno{\QEDbox}$$
\end{enumerate}
\end{proof}

\begin{example}
In the effectus $\Kl(\Dst)$ we have $p \leq q$ iff $\pi_{p} \leq
\pi_{q}$ for sharp $p,q$. Recall from
Example~\ref{ex:cmpr}~\eqref{ex:cmpr:KlD} that $\cmpr{X}{p} =
\set{x}{p(x) = 1}$, where $p(x),q(x)\in\{0,1\}$ because $p,q$ are
sharp. Let $\pi_{p} \leq \pi_{q}$, say via a function $\varphi \colon
\cmpr{X}{p} \rightarrow \Dst(\cmpr{X}{q})$ with $\pi_{q} \after
\varphi = \pi_{q}$. We have to prove $p(x) = 1 \Rightarrow q(x) =
1$. So assume $p(x)=1$, so that $x\in\cmpr{X}{p}$. Write $\varphi(x) =
\sum_{i}r_{i}\ket{x_i}$ with $x_{i}\in\cmpr{X}{q}$. The equation
$\pi_{p} = \pi_{q} \after \varphi$ yields:
$$\begin{array}{rcccccl}
1\ket{x}
& = &
\pi_{p}(x)
& = &
\big(\pi_{q} \after \varphi\big)(x)
& = &
\sum_{i}r_{i}\ket{x_i}.
\end{array}$$

\noindent This can only happen if $\varphi(x) = 1\ket{x}$, so that
$x\in\cmpr{X}{q}$. Hence $q(x)=1$.
\end{example}

Before showing the equivalence~\eqref{equiv:projfulness} for von
Neumann we give a general order result.

\begin{lemma}
\label{lem:vNdownsetproj}
Let~$\mathscr{A}$ be a von Neumann algebra with effects~$a, p \in
[0,1]_{\mathscr{A}}$.  If~$p$ is a projection (\textit{i.e.}~is
sharp), then the following are equivalent.
    \begin{multicols}{2}
    \begin{enumerate}
        \item $a \leq p$
        \item $ap = a$
        \item $pa = a$
        \item $pap = a$
    \end{enumerate}
    \end{multicols}
\noindent In particular, if~$a \leq p$, then~$a$ and~$p$ commute.
\end{lemma}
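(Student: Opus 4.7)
The plan is to prove the cycle $(1) \Rightarrow (2) \Leftrightarrow (3) \Rightarrow (4) \Rightarrow (1)$, which gives all four equivalences, and the final ``commute'' claim follows immediately from $pa = a = ap$.

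The crucial implication is $(1) \Rightarrow (2)$. Suppose $a \leq p$. Then with $p^{\bot} = 1 - p$, which is again a projection, I would first observe that $p \leq 1$ implies (by conjugating with $p^\bot$, which preserves the order) that $0 \leq p^{\bot} a p^{\bot} \leq p^{\bot} p\, p^{\bot} = 0$, so $p^{\bot} a p^{\bot} = 0$. Since $a \geq 0$, we may form $\sqrt{a}$, and then using the $C^*$-identity
\[
\|\sqrt{a}\, p^{\bot}\|^{2} \;=\; \|(\sqrt{a}\, p^{\bot})^{*}(\sqrt{a}\, p^{\bot})\| \;=\; \|p^{\bot} a\, p^{\bot}\| \;=\; 0,
\]
I conclude $\sqrt{a}\, p^{\bot} = 0$, whence $a\, p^{\bot} = \sqrt{a}(\sqrt{a}\, p^{\bot}) = 0$, i.e.\ $ap = a(1 - p^{\bot}) = a$. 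This step is the main obstacle, since it is the only nontrivial use of the operator-algebraic structure; everything else is formal.

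The equivalence $(2) \Leftrightarrow (3)$ is obtained by taking adjoints: $a$ is self-adjoint (as a positive element) and $p^{*} = p$, so $(ap)^{*} = p\, a$, and the equations $ap = a$ and $pa = a$ are each other's adjoints. For $(2) \Rightarrow (4)$, I simply compute $pap = p(ap) = pa = a$, using (3). Finally, for $(4) \Rightarrow (1)$, I would use that $a \leq 1$ (since $a$ is an effect) and that conjugation $x \mapsto pxp$ is a positive map on the $*$-algebra; therefore
\[
a \;=\; pap \;\leq\; p \cdot 1 \cdot p \;=\; p^{2} \;=\; p,
\]
which is~(1).

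The commutation statement is then immediate: if $a \leq p$, by $(2)$ and $(3)$ we have $ap = a = pa$.
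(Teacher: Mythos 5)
Your proof is correct, and it follows the same overall cycle of implications as the paper, but the two nontrivial steps are argued differently. For $(1)\Rightarrow(2)$ the paper passes to a concrete representation on a Hilbert space and argues with vectors: from $\|\sqrt{a}v\|^2 = \inprod{av}{v} \leq \inprod{pv}{v} = \|pv\|^2$ it deduces that $pv=0$ forces $av=0$, applies this to $v' = (1-p)v$, and concludes $av = apv$. You instead stay inside the abstract $C^*$-algebra: conjugating $a\leq p$ by $p^\bot$ gives $p^\bot a p^\bot = 0$, and the $C^*$-identity yields $\sqrt{a}\,p^\bot = 0$, hence $ap^\bot=0$. Similarly, for $(4)\Rightarrow(1)$ the paper computes $\inprod{papv}{v} = \inprod{apv}{pv} \leq \inprod{pv}{pv} = \inprod{pv}{v}$, whereas you use positivity of the conjugation map $x\mapsto pxp$ applied to $a\leq 1$. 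Your representation-free argument is arguably cleaner and is in the same spirit as computations the paper itself performs elsewhere (e.g.\ the $\|ptp\| = \|pt\|^2$ trick in the discussion of $\img(\pi_p)$); the paper's vector-based argument is more elementary in that it needs only the defining order on operators. The middle implications ($(2)\Leftrightarrow(3)$ by taking adjoints, and the passage to $(4)$) coincide in both proofs.

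One small wording slip: in your $(1)\Rightarrow(2)$ step you write that ``$p\leq 1$ implies \dots $p^\bot a p^\bot \leq p^\bot p\, p^\bot$''; the inequality being conjugated by $p^\bot$ is of course $a\leq p$, not $p\leq 1$. The displayed chain makes the intent clear, so this is cosmetic rather than a gap.
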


\begin{proof}
We will prove~(1) $\Rightarrow$ (2) $\Rightarrow$ (3) $\Rightarrow$
(4) $\Rightarrow$ (1).  Without loss of generality, we may
assume~$\mathscr{A}$ is a von Neumann algebra of operators on a
Hilbert space~$\mathscr{H}$.

(1) $\Rightarrow$ (2)\ Assume~$a \leq p$.  By definition,
$\inprod{av}{v} \leq \inprod{pv}{v}$ for all~$v \in \mathscr{H}$.
Thus $\| \sqrt{a} v \|^2 = \inprod{av}{v} \leq \inprod{pv}{v} =
\|pv\|^2$.  Hence~$pv=0$ implies~$av = 0$.  In particular, as~$p(1-p)v
= 0$ we have~$a (1-p)v=0$.  We compute~$av=apv+a(1-p)v=apv$.  Hence~$a
=ap$.

(2) $\Rightarrow$ (3)\ Assume~$ap = a$. Directly~$a=a^* = (ap)^* =
p^*a^*=pa$.

(3) $\Rightarrow$ (4)\ Assume~$pa = a$. Clearly~$pap=ap=(pa)^*=a^*=a$.

(4) $\Rightarrow$ (1)\ Assume~$pap = a$.  For any~$v \in \mathscr{H}$,
we have~$ \inprod{av}{v} = \inprod{papv}{v} = \inprod{apv}{pv} \leq
\inprod{pv}{pv} = \inprod{p^2v}{v} = \inprod{pv}{v}$.  Consequently~$a
\leq p$. \QED
\end{proof}

\begin{example}
The equivalence~\eqref{equiv:projfulness}, namely $p \leq q$ iff
$\pi_{p} \leq \pi_{q}$ for sharp $p,q$, also holds in the effectus
$\op{\vNA}$ of von Neumann algebra. Let $p,q\in[0,1]_{\mathscr{A}}$ be
sharp, that is, be projections, so that $p\cdot p = p$ and $q\cdot q =
q$. We first prove that $\img(\pi_{p}) = p$, and similarly for
$q$. Since $\pbox{\pi_p}(p) = \one$ we always have $\img(\pi_{p}) \leq
p$, so we only have to prove `$\geq$'.

Recall from Example~\ref{ex:cmpr}~\eqref{ex:cmpr:vNA} that the
projection $\pi_p$ is the map $\pi_{p} \colon \mathscr{A} \rightarrow
p\mathscr{A}p$ in $\vNA$ given by $p(x) = pxp$. Here we use that $p$
is sharp, and thus $\floor{p} = p$. According to~\eqref{eqn:imgvNA}
the image of $\pi_p$ is given by:
$$\begin{array}{rcl}
\img(\pi_{p}) 
& = & 
\displaystyle\bigwedge\setin{s}{\mathscr{A}}{s \mbox{ is
    a projection with } \pi_{p}(s) = psp = p = \pi_{p}(1)}.
\end{array}$$

\noindent We thus need to prove $s \geq p$ for a projection $s$ with
$p s p = p$. Write $t = s^{\bot} = 1-s$, so that $p t p = p p - p s p =
p - p = 0$. Hence:
$$\begin{array}{rcccccccl}
0
& = &
\|p t p \|
& = &
\|p t t p\|
& = &
\|(p t)\cdot (p t)^{*}\|
& = &
\|p t\|^{2}.
\end{array}$$

\noindent But then $0 = p t = p - ps$, so that $ps = s$. Hence $p \leq
s$ by Lemma~\ref{lem:vNdownsetproj}.

We can now reason abstractly: if $\pi_{p} \leq \pi_{q}$ via a map
$\varphi \colon \cmpr{X}{p} \rightarrow \cmpr{X}{q}$ with $\pi_{q}
\after \varphi = \pi_{q}$ in $\op{\vNA}$, then we are done by
Lemma~\ref{lem:img}~\eqref{lem:img:comp}:
$$\begin{array}{rcccccccl}
p
& = &
\img(\pi_{p})
& = &
\img(\pi_{q} \after \varphi)
& \leq &
\img(\pi_{q})
& = &
q.
\end{array}$$

\noindent This concludes the example.
\end{example}

\section{Towards non-commutative effectus theory}\label{sec:noncomm}

In the preceding sections we have presented the first steps of the
theory of effectuses. These sections provide a solid foundation.  At
this stage we have reached the boundaries of what we know for sure. We
add one more section that is of a more preliminary nature. It contains
a list of postulates for a class of effectuses that is intended to
capture the essential aspects of von Neumann algebras. Although this
`axiomatisation' is by no means fixed, we do already have a name for
this notion, namely \emph{telos}.\index{S}{telos} It is an effectus
satisfying the postulates~\ref{post:assertexist} --
\ref{post:asrtquot} below.  Thus, we are the first to admit that the
notion of telos is poorly defined. But by making our current thoughts
about this topic explicit, we hope to generate more research, leading
eventually to a properly defined concept.

In the previous sections we have frequently associated a partial
`assert' map $\asrt_{p} \colon X \rightarrow X+1$ with a predicate
$p\colon X \rightarrow 1+1$. This predicate-action correspondence
occurs for instance in:
\begin{itemize}
\item Definition~\ref{def:commbool} where the (unique) existence of
  assert maps as inverse of kernel-supplements is the essence of the
  definitions of `commutative' and `Boolean' effectus;

\item Definition~\ref{def:cmprquot} where the map $\asrt_p$ can be
  defined, but only for a sharp predicate $p$.
\end{itemize}

\noindent
(Proposition~\ref{lem:sharpassert}~\eqref{lem:sharpassert:sef} says
that when these two cases overlap, the relevant assert maps coincide.)

These assert maps $\asrt_{p} \colon X \rightarrow X+1$ incorporate the
side-effect map associated with a predicate $p$. This predicate-action
correspondence is similar to the situation in Hilbert spaces, where a
projection may be described either as a closed subset (a predicate),
or as a function that projects all elements into this subset (an
`assert' map). These assert maps express the dynamic character of
quantum computation. They incorporate the side-effect of an
observation, if any.  In the commutative (and Boolean) case the assert
maps have no side-effect --- technically, since such assert maps are
below the identity. But in the proper quantum case assert maps need
not be below the identity.  This is already clear for the assert maps
associated with sharp elements in Definition~\ref{def:cmprquot}.

We thus have:
\begin{center}
\begin{tabular}{|c|c|}
\hline
\textbf{Situation} & \textbf{Question} \\
\hline\hline
\begin{minipage}{13em}
Assert maps are fully determined in the
commutative (and Boolean) case, and also for
sharp elements.
\end{minipage}
&
\begin{minipage}{13em}
Can we also define assert maps in general, in the non-commutative
non-sharp case, via a certain property, or do we have to assume them
as separate structure?
\end{minipage} \vrule height4.5em depth4em width0em \\
\hline
\end{tabular}
\end{center}

\noindent This question arises in particular in the effectus $\op{\vNA}$
of von Neumann algebras.

Recall from Diagram~\eqref{diag:sharpassrt} in Definition~\ref{def:cmprquot} that for a sharp predicate
$p$ the associated assert map is defined via:
\begin{equation}
\label{diag:sharpassrtrepeat}
\xymatrix{
\asrt_{p} = \Big(X\ar@{->>}[r]^-{\xi_{p^\bot}} & 
   X/p^{\bot}\ar[r]^-{\theta_{p}^{-1}}_-{\cong} & 
   \cmpr{X}{p}\ar@{ >->}[r]^-{\pi_p} & X\Big).
}
\end{equation}

\noindent This definition relies on the assumption that the map
$\theta_{p} = \xi_{p^\bot} \after \pi_{p} \colon \cmpr{X}{p}
\rightarrow X/p^{\bot}$ is an isomorphism when $p$ is sharp.  That
assumption does not hold in general, for non-sharp $p$.

Still, in the commutative effectus $\Kl(\sDst)$, in partial form, we
have, for a fuzzy predicate $p\in[0,1]^{X}$, an assert map $\asrt_{p}
\colon X \rightarrow \sDst(X)$ given by $\asrt_{p}(x) =
p(x)\ket{x}$. Intruigingly, this assert map can also be described via
quotient and comprehension, as composite in $\Kl(\sDst)$ of the form:
$$\xymatrix@R-2pc{
X\ar[r]^-{\xi_{p^{\bot}}} & 
   X/p^{\bot} = \set{x}{p(x) > 0} = \cmpr{X}{\ceil{p}}\ar[r]^-{\pi_{\ceil{p}}}
   & X \\
x\ar@{|->}[r] & p(x)\ket{x}\ar@{|->}[r] & p(x)\ket{x}
}$$

\noindent Here we use that there is an equality (or isomorphism)
$\cmpr{X}{\ceil{p}} \rightarrow X/p^{\bot}$. The map $\theta_{p}$ used
above is a special case, since $\ceil{p} = p$ for a sharp predicate
$p$.

It turns out that we can do the same for von Neumann algebras. For an
effect $e\in [0,1]_{\mathscr{A}}$ in a von Neumann algebra
$\mathscr{A}$ we can define, formally in $\Par(\op{\vNA})$,
$$\xymatrix{
\asrt_{e} = \Big(\mathscr{A}\ar[r]^-{\xi_e} & \mathscr{A}/e^{\bot} =
   \ceil{e}\mathscr{A}\ceil{e} = 
   \floor{\ceil{e}}\mathscr{A}\floor{\ceil{e}} = 
   \cmpr{\mathscr{A}}{\ceil{e}}\ar[r]^-{\pi_{\ceil{e}}} & \mathscr{A}\Big)
}$$

\noindent The equality in the middle was already mentioned at the end
of Example~\ref{ex:quot}~\eqref{ex:quot:vNA}. As subunital map
$\asrt_{e} \colon \mathscr{A} \rightarrow \mathscr{A}$ this composite
becomes:
\begin{equation}
\label{eqn:asrtvNA}
\begin{array}{rcl}
\asrt_{e}(a)
\hspace*{\arraycolsep}=\hspace*{\arraycolsep}
\xi_{e^{\bot}}\big(\pi_{\ceil{e}}(a)\big)
\hspace*{\arraycolsep}=\hspace*{\arraycolsep}
\xi_{e^{\bot}}\big(\ceil{e}\, a\, \ceil{e}\big) 
& = &
\sqrt{e}\, \ceil{e}\, a\, \ceil{e}\, \sqrt{e} \\
& = &
\sqrt{e}\, a\, \sqrt{e}.
\end{array}
\end{equation}

\noindent In this way we obtain L{\"u}ders rule,
see~\cite[Eq.(1.3)]{BuschS98}. The problem of definining assert maps
via comprehension and quotient is thus reduced to having an
isomorphism $\cmpr{X}{\ceil{p}} \cong X/p^{\bot}$. But: \emph{which
  isomorphism}? 

We have not found, in general, a canonically defined isomorphism
$\cmpr{X}{\ceil{p}} \cong X/p^{\bot}$ in an effectus that gives rise
to the canonical description~\eqref{eqn:asrtvNA} in the effectus of
von Neumann algebras. This remains largely an unsolved problem,
although the situation is clearer in the special case of von Neumann
algebras, see Subsection~\ref{subsec:findimnoncomm} below. A related
question is: is this apparent lack of canonicity a `bug' or a
`feature'?

\medskip

Accepting for the time being that there is no such canonical
isomorphism $\cmpr{X}{\ceil{p}} \cong X/p^{\bot}$, there are two
possible ways forward.
\begin{enumerate}
\item Simply assume some isomorphism $\cmpr{X}{\ceil{p}} \cong
  X/p^{\bot}$ and use it to define an assert map as
  in~\eqref{diag:sharpassrtrepeat}, but with $\pi_{\ceil{p}}$ instead
  of $\pi_p$. Then we can see which requirements we need for this
  isomorphism in order to prove reasonable properties about assert
  maps.

\item Simply assume maps $\asrt_{p} \colon X \rightarrow X+1$
  satisfying some reasonable properties which induce an isomorphism
  $\cmpr{X}{\ceil{p}} \cong X/p^{\bot}$.
\end{enumerate}

\noindent In the sequel we shall follow the second approach. We have
already seen various properties of assert maps --- \textit{e.g.}~in
Lemmas~\ref{lem:commeffectus} and~\ref{lem:sharpassert}. Hence we can
check if they hold in the effectus $\op{\vNA}$, and if so, use them as
a basis for our preliminary axiomatisation of what we call a telos.

(This same approach is followed in~\cite{Jacobs15a}, where instrument
maps $\instr_{p} = \dtuple{\asrt_{p}, \asrt_{p^\bot}} \colon X
\rightarrow X+X$ are assumed, satisfying certain properties, instead
of assert maps. But the difference between using instruments or assert
maps is inessential.)

Below we describe a number of postulates that together give a
preliminary description of the notion of telos. Each postulate
contains a requirement, and a short discussion about its rationale and
consequences.

\begin{postulate}
\label{post:effectus}
Each telos is a monoidal effectus with sharp images. We shall describe
it in partial form (with special object $I$, as usual).

In the sequel the monoidal structure, see Section~\ref{sec:monoidal},
plays a modest role, but it should be included since it is important
for combining operations.  Similarly, images are a basic ingredient,
see Subsection~\ref{subsec:img}.
\end{postulate}

The next postulate introduces assert maps as actions that are
associated with predicates. Given such maps, we define an `andthen'
operation $\andthen$ on predicates, as before: $p \andthen q = q
\after \asrt_{p}$, for predicates $p,q$ on the same object. In the
current general situation $\andthen$ is not commutative, like in
Section~\ref{sec:commbool}.

\begin{postulate}
\label{post:assertexist}
For each predicate $p$ on an object $X$ in a telos there is an assert
map $\asrt_{p} \colon X \rightarrow X$\index{N}{$\asrt_p$, assert map for predicate $p$!in a telos} such that:
\begin{enumerate}
\item \label{post:assertexist:ker} $\ker(\asrt_{p}) = p^{\bot}$, or
  equivalently, $\kerbot(\asrt_{p}) = p$;

\item \label{post:assertexist:img} $\img(\asrt_{p}) = \ceil{p}$, where
  $\ceil{p}$ is the least sharp predicate above $p$;

\item \label{post:assertexist:inj} if $f \leq \idmap[X]$, then $f =
  \asrt_{p}$ for $p = \kerbot(f) = \one \after f$;


\item \label{post:assertexist:sum} $\asrt_{[p,q]} = \asrt_{p} +
  \asrt_{q} \colon X+Y \rightarrow X+Y$ for $p\in\Pred(X)$,
  $q\in\Pred(Y)$;

\item \label{post:assertexist:tensor} $\asrt_{p\otimes q} = \asrt_{p}
  \otimes \asrt_{q} \colon X\otimes Y \rightarrow X\otimes Y$, where
  $p\otimes q \colon X\otimes Y \rightarrow I\otimes I \cong I$.

\item \label{post:assertexist:andthen} $\asrt_{p} \after \asrt_{p} =
  \asrt_{p \andthen p}$;

\item \label{post:assertexist:sharp} $ \asrt_p \after f = f \after
  \asrt_{\pbox{f}(p)}$ for any predicate $p$ and any map~$f\colon Y \to
  X$ that preserves sharp elements: $\pbox{f}(q)$ is sharp if $q$ is
  sharp.
\end{enumerate}
\end{postulate}

We check that these properties hold in our leading example of a telos:
the opposite $\op{\vNA}$ of the category of von Neumann algebras.

\begin{example}
\label{ex:assertexist}
The effectus~$\op{\vNA}$ of von Neumann algebras, with assert maps
given by~$\asrt_p(x) = \sqrt{p}\, x\, \sqrt{p}$ as
in~\eqref{eqn:asrtvNA}, satisfies the previous postulate. Clearly our
chosen~$\asrt_p$ map is subunital, linear and positive.  It is also
completely positive~\cite[Thm.~1]{Stinespring55} and
normal~\cite[Lem.~1.7.4]{Sakai71}.  See also appendix
of~\cite{WesterbaanW15}.  We cover the different postulates one at a
time.
\begin{enumerate}
\item Obviously, $\kerbot(\asrt_p)= \asrt_{p}(1) = \sqrt{p}\,
  1\, \sqrt{p} = p$.

\item From $\sqrt{p}\, \ceil{p} = \sqrt{p}$ we obtain $\sqrt{p}
  \, \ceil{p}^{\bot} \, \sqrt{p} = 0$ and thus $\ceil{p}^{\bot}
  \after \asrt_{p} = \zero$. But then $\pbox{\asrt_{p}}(\ceil{p}) =
  (\ceil{p}^{\bot} \after \asrt_{p})^{\bot} = \one$, and thus
  $\img(\asrt_{p}) \leq \ceil{p}$ by minimality of images.

Now we prove the reverse inequality $\ceil{p} \leq \img(\asrt_{p})$.
Write~$b = \imgbot(\asrt_p)$.  We have $b \after \asrt_{p} = \zero$ by
Lemma~\ref{lem:img}~\eqref{lem:img:post}, so that $\asrt_{p}(b) = 0$
by~\eqref{eqn:compappOUG}.  That is: $\sqrt{p} \, b \, \sqrt{p}
= 0$.  By the~$C^*$-identity~$\| b\, \sqrt{p} \|^2 = \|\sqrt{p}
\, b \, \sqrt{p} \| = 0$.  Hence~$b\, \sqrt{p} = 0$.  But
then also~$\sqrt{p}\, b = (b\, \sqrt{p})^* = 0$.  Thus~$p$
and~$b$ commute.
By~\eqref{eqn:vNAceilMap} we obtain~$\ceil{p}\, b =0 = b\,
\ceil{p}$. Since both $\ceil{p}$ and $b$ are sharp we obtain that the
sum $\ceil{p} + b$ is sharp too, and thus an effect.  The latter
yields $\ceil{p} + b \leq 1$, and thus~$\imgbot(\asrt_p) = b \leq
\ceil{p}^\bot$.  Consequently~$\ceil{p} \leq \img(\asrt_p)$ as
desired.

\item In Example~\ref{ex:commbool}~\eqref{ex:commbool:vNA} we have
  already shown that a subunital map $f\colon \mathscr{A} \rightarrow
  \mathscr{A}$ with $f \leq \idmap$ is of the form $f(x) = f(1)\,
  x$, where the element $f(1)\in [0,1]_{\mathscr{A}}$ is central in
  $\mathscr{A}$. Hence $\sqrt{f(1)}$ is central too, so that the
  effect $p = \kerbot(f) = f(1)$ satiesfies:
$$\begin{array}{rcccccl}
\asrt_{p}(x)
& = &
\sqrt{f(1)}\, x \, \sqrt{f(1)}
& = &
f(1)\, x
& = &
f(x).
\end{array}$$

\item Simply:
$$\begin{array}{rcl}
\asrt_{[p,q]}(x,y)
& = &
\sqrt{(p,q)}\, (x,y) \, \sqrt{(p,q)} \\
& = &
(\sqrt{p}\, x\, \sqrt{p}, \sqrt{q} \, y \, \sqrt{q}) \\
& = &
(\asrt_p \oplus \asrt_q)(x,y).
\end{array}$$

\item
The linear span of predicates~$p\sotimes q$ is ultraweakly dense
in~$\mathscr{A}\otimes\mathscr{B}$, see
e.g.~\cite[Prop.~4.5.3]{Cho14}.  As our chosen~$\asrt$ map is
ultraweakly continuous and linear, it is sufficient to show the equality for
product-predicates:
$$\begin{array}{rcl}
\asrt_{p \sotimes q}(x \sotimes y) 
& = &
\sqrt{p \sotimes q} \, (x \sotimes y) \, \sqrt{p \sotimes q} \\
& = &
(\sqrt{p} \sotimes \sqrt{q}) \, (x \sotimes y) \,
   (\sqrt{p} \sotimes \sqrt{q}) \\
& = &
(\sqrt{p} \, x \, \sqrt{p}) \sotimes (\sqrt{q} \, y \, \sqrt{q}) \\
& = &
\asrt_{p}(x) \sotimes \asrt_{q}(y) \\
& = &
(\asrt_{p} \otimes \asrt_{q})(x \sotimes y).
\end{array}$$

\noindent For the last step, see e.g.~\cite[Prop.~4.5.5]{Cho14}.

\item Note that~$p \andthen p = \sqrt{p}\, p\, \sqrt{p} = p^2$
and so:
$$\begin{array}{rcl}
\asrt_p(\asrt_p(x))
& = &
\sqrt{p}\, \sqrt{p}\, x\, \sqrt{p}\, \sqrt{p} \\
& = &
p\, x\, p \\
& = &
\sqrt{p^2} \, x \, \sqrt{p^2}
\hspace*{\arraycolsep}=\hspace*{\arraycolsep}
\asrt_{p \andthen p}(x).
\end{array}$$

\item A completely positive map $f\colon \mathscr{A} \rightarrow
  \mathscr{B}$ preserves projections if and only if it is
  multiplicative.  Hence if $f$ is sharp, then it is a (unital) and
  preserves multiplication and thus square roots too. Hence:
$$\begin{array}{rcccccl}
\tbox{f}\big(\asrt_{p}(x)\big)
& = &
f\big(\sqrt{p} \, x \, \sqrt{p}\big)
& = &
\sqrt{f(p)} \, f(x) \, \sqrt{f(p)}
& = &
\asrt_{\tbox{f}(p)}\big(f(x)\big).
\end{array}$$
\end{enumerate}

\noindent This concludes the example.
\end{example}

Point~\eqref{post:assertexist:ker} allows us to define (total)
instrument maps like in
Lemma~\ref{lem:commeffectus}~\eqref{lem:commeffectus:instr}:
\begin{equation}
\label{diag:instr}
\xymatrix@C+2pc{
X\ar[rr]^-{\instr_{p} \defeq \dtuple{\asrt_{p}, \asrt_{p^\bot}}} & & X+X
}\index{N}{$\instr_p$, instrument map for predicate $p$}
\end{equation}

\noindent The side-effect associated with the predicate $p$ is the map
$\nabla \after \instr_{p} \colon X \rightarrow X$. We call $p$
side-effect free if this map $\nabla \after \instr_{p}$ is the
identity. Also, it allows us to define sequential composition
`andthen' on predicates (on the same object) as:
\begin{equation}
\label{eqn:andthen}
\begin{array}{rcl}
p \andthen q
& = &
q \after \asrt_{p},
\end{array}
\end{equation}

\noindent see
Lemma~\ref{lem:commeffectus}~\eqref{lem:commeffectus:order}.
Moreover, we can define conditional states $\omega|_{p}$ in a quantum
context as normalisation of $\asrt_{p} \after \omega$, like in
Example~\ref{ex:Bayes}, if we additionally assume normalisation in a
telos.

\begin{postulate}
\label{post:asrtcmpr}
Each assert map in a telos has a total kernel map --- that is, 
a kernel map which is total --- written as:
$$\xymatrix@C+1pc{
\cmpr{X}{p^\bot}\ar@{ >->}[r]^-{\pi_{p^\bot}}_-{\text{total}} &
   X\ar@/^1.5ex/[r]^-{\asrt_p}\ar@/_1.5ex/[r]_-{\zero} & X
}$$

\noindent The comprehension notation is deliberate, since this kernel
map $\pi_{p^\bot}$ is a comprehension map (for $p^\bot$), as in
Definition~\ref{def:cmpr}~\eqref{def:cmpr:part}: let $f\colon Y
\rightarrow X$ satisfy $\pbox{f}(p^{\bot}) = \one$. Then:
$$\begin{array}{rcccccccl}
\zero
& = &
\pbox{f}(p^{\bot})^{\bot}
& = &
p \after f
& = &
\kerbot(\asrt_{p}) \after f
& = &
\one \after \asrt_{p} \after f.
\end{array}$$

\noindent But then $\asrt_{p} \after f = \zero$, by
Lemma~\ref{lem:zero}, so that $f$ factors through the kernel map
$\pi_{p^\bot}$.

Notice that the assert maps are structure, but their kernel maps are
determined up to isomorphism, since they form comprehension maps and
thus a right adjoint to the truth functor.
\end{postulate}

\begin{postulate}
\label{post:asrtquot}
Postulate~\ref{post:assertexist}~\eqref{post:assertexist:img} tells
that $\img(\asrt_{p}) = \ceil{p}$. In particular,
$\pbox{\asrt_{p}}(\ceil{p}) = \one$, so that we obtain a
factorisation:
$$\xymatrix@C+1pc@R-.5pc{
& \cmpr{X}{\ceil{p}}\ar@{ >->}[d]^{\pi_{\ceil{p}}}
\\
X\ar@{..>}@/^2ex/[ur]^-{\xi_{p^\bot}}\ar[r]_-{\asrt_p} & X
}$$

\noindent using that the kernel maps $\pi$ are comprehension maps, see
Postulate~\ref{post:asrtcmpr}. Then, using
Lemma~\ref{lem:ker}~\eqref{lem:ker:tot},
$$\begin{array}{rcccccl}
\ker(\xi_{p^\bot})
& = &
\ker(\pi_{\ceil{p}} \after \xi_{p^\bot})
& = &
\ker(\asrt_{p})
& = &
p^{\bot}.
\end{array}$$

\noindent We postulate that in a telos these maps $\xi_p$ are
universal, forming quotients.

In this way the equation $X/p^{\bot} = \cmpr{X}{\ceil{p}}$ that we
discussed in the beginning of this section is built in. Moreover, if
$p$ is sharp, then $\ceil{p}=p$, so that we have an equality
$X/p^{\bot} = \cmpr{X}{p}$, as in
Definition~\ref{def:cmprquot}. Below, in
Proposition~\ref{prop:telos}~\eqref{prop:telos:sharptheta} it is shown
that this means that the canonical map $\theta_p$
from~\eqref{diag:cmprquotcanmap} is an isomorphism. Hence the
properties of Lemmas~\ref{lem:cmprquotcanmap}
and~\ref{lem:sharpassert} hold in a telos.

Since the $\xi$'s are quotient maps and $\img(\asrt_{p}) = \ceil{p}$
we have a coequaliser (cokernel map) diagram by
Lemma~\ref{lem:quot}~\eqref{lem:quot:coker}:
$$\xymatrix{
X\ar@/^1.5ex/[r]^-{\asrt_{p}}\ar@/_1.5ex/[r]_-{\zero} & 
   X\ar@{->>}[r]^-{\xi_{\ceil{p}}} & X/\ceil{p}
}$$
\end{postulate}

This concludes our description of the notion of telos. We continue
with some basic properties that hold in a telos about the assert maps
and the `andthen' operator $p \andthen q$, written as $p \after q$
in~\cite{GudderG02}, and as $[p?](q)$ in~\cite{Jacobs15a}.  Below we
prove the first three of the five requirements for andthen
in~\cite[\S\S3]{GudderG02}, see points~\eqref{prop:telos:andthenhom}
and~\eqref{prop:telos:andthenzero} below. We also prove that sharpness
is related to idempotency of $\andthen$, like in $C^*$-algebras.

\begin{proposition}
\label{prop:telos}
Let $\cat{C}$ be a telos, that is, $\cat{C}$ is an effectus in partial
form, satisfying the postulates~\ref{post:assertexist} --
\ref{post:asrtquot}. Then the following properties hold.
\begin{enumerate}
\item \label{prop:telos:zerone} The assert maps satisfy $\asrt_{\one}
  = \idmap \colon X \rightarrow X$ and $\asrt_{\zero} = \zero \colon X
  \rightarrow X$ for the truth and falsity predicates $\one, \zero$ on
  $X$; moreover, $\asrt_{s} = s \colon I \rightarrow I$ for each
  scalar $s$.

\item \label{prop:telos:andthenhom} For each predicate $p$ on
  $X\in\cat{C}$ we have a map of effect algebras:
$$\xymatrix{
\Pred(X)\ar[rr]^-{p\andthen(-)} & & \downset p
}$$

\noindent As a result, $p \andthen q \leq p$. Moreover, by
point~\eqref{prop:telos:zerone}:
$$\begin{array}{rccclcrcccl}
\one\andthen p 
& = &
p
& = &
p \andthen \one
& \qquad\mbox{and}\qquad &
\zero\andthen p 
& = &
\zero
& = &
p \andthen \zero.
\end{array}$$

\item \label{prop:telos:andthenzero} If $p\andthen q = \zero$, then
$p \andthen q = q \andthen p$.

\item \label{prop:telos:sef} For a predicate $p\in\Pred(X)$ the
  following side-effect freeness formulations are equivalent.
\begin{enumerate}
\item \label{prop:telos:sef:p} $\asrt_{p} \leq \idmap$;

\item \label{prop:telos:sef:pbot} $\asrt_{p^{\bot}} \leq \idmap$;

\item \label{prop:telos:sef:instr} $\nabla \after \instr_{p} = \idmap$.
\end{enumerate}

\item \label{prop:telos:sharpidemp} There are equivalences:
$$\begin{array}{rcl}
\mbox{$p$ is sharp}
& \Longleftrightarrow &
p \andthen p = p.
\end{array}$$

\item \label{prop:telos:sharptheta} For a sharp predicate $p$, the map
  $\theta_{p} = \xi_{p^\bot} \after \pi_{p} \colon \cmpr{X}{p}
  \rightarrow X/p^{\bot}$ from~\eqref{diag:cmprquotcanmap} is the
  identity. In particular, a telos has both comprehension and
  quotients as in Definition~\ref{def:cmprquot}.

\item \label{prop:telos:projimg} For each predicate $p$ the image of
  the comprehension map $\pi_{p}$ is given by $\img(\pi_{p}) =
  \floor{p}$. Further, for sharp predicates $p,q$ on the same object
  we have $p \leq q$ iff $\pi_{p} \leq \pi_{q}$, like
  in~\eqref{equiv:projfulness}. Hence
  Proposition~\ref{prop:cmprquotimg} applies in a telos.
\end{enumerate}
\end{proposition}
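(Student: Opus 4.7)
The seven statements split into three natural groups: (1)--(4) flow from Postulates~\ref{post:effectus}--\ref{post:asrtcmpr} and elementary effect-algebra reasoning; (5) and (6) are interlinked and both pivot on the factorisation $\asrt_{p} = \pi_{\ceil{p}} \after \xi_{p^{\bot}}$ supplied by Postulate~\ref{post:asrtquot}; and (7) combines Postulate~\ref{post:assertexist}\eqref{post:assertexist:img} with the epicness of quotient maps. For (1), each of $\idmap$, $\zero$, and any scalar $s$ is an endomorphism below $\idmap$ (for scalars using $\idmap[I] = \one$ from Lemma~\ref{lem:FinPACwE}\eqref{lem:FinPACwE:idI}) with kernel-supplement $\one$, $\zero$, $s$ respectively, so Postulate~\ref{post:assertexist}\eqref{post:assertexist:inj} forces them to equal $\asrt_{\one}$, $\asrt_{\zero}$, $\asrt_{s}$. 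For (2), the map $q \mapsto q \after \asrt_{p}$ preserves $\ovee$ and $\zero$ by Proposition~\ref{prop:effectusPCM}\eqref{prop:effectusPCM:pres}, sends $\one$ to $\kerbot(\asrt_{p}) = p$ (the top of $\downset p$), and hence is an effect-algebra homomorphism $\Pred(X) \to {\downset p}$; the four boundary equations then follow from (1). For (3), two applications of Lemma~\ref{lem:img}\eqref{lem:img:post} plus the universal property of $\ceil{-}$ chain together as $q \after \asrt_{p} = \zero \Rightarrow \ceil{p} \leq q^{\bot} \Rightarrow q \leq \floor{p^{\bot}} \Rightarrow \ceil{q} \leq \floor{p^{\bot}} \leq p^{\bot} \Rightarrow p \after \asrt_{q} = \zero$. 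For (4), Postulate~\ref{post:assertexist}\eqref{post:assertexist:inj} gives (a)$\Leftrightarrow$(b) directly: $\asrt_{p} \ovee g = \idmap$ forces $g \leq \idmap$ with $\kerbot(g) = p^{\bot}$, hence $g = \asrt_{p^{\bot}}$; the equivalence with (c) then falls out of Lemma~\ref{lem:sumpairing}, which gives $\nabla \after \instr_{p} = \asrt_{p} \ovee \asrt_{p^{\bot}}$.

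Parts (5) and (6) are the heart of the argument. For (5)($\Rightarrow$), sharpness gives $\ceil{p} = p$, so Postulate~\ref{post:asrtquot} specialises to $\asrt_{p} = \pi_{p} \after \xi_{p^{\bot}}$, and Lemma~\ref{lem:partcmpr}\eqref{lem:partcmpr:pred} ($p^{\bot} \after \pi_{p} = \zero$) yields $p \andthen p^{\bot} = \zero$, whence $p = p \andthen \one = (p \andthen p) \ovee (p \andthen p^{\bot}) = p \andthen p$ by (2). For (5)($\Leftarrow$), the same decomposition $p = (p \andthen p) \ovee (p \andthen p^{\bot})$ combined with the hypothesis $p \andthen p = p$ and cancellation in $\Pred(X)$ forces $p \andthen p^{\bot} = \zero$; then $\ceil{p} = \img(\asrt_{p}) \leq p$ by Lemma~\ref{lem:img}\eqref{lem:img:post}, and since $p \leq \ceil{p}$ always, $p = \ceil{p}$. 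Part (6) then falls out: sharpness of $p$ gives $p \andthen p = p$ by (5), hence $\asrt_{p} \after \asrt_{p} = \asrt_{p}$ by Postulate~\ref{post:assertexist}\eqref{post:assertexist:andthen}; plugging in $\asrt_{p} = \pi_{p} \after \xi_{p^{\bot}}$ turns this into $\pi_{p} \after \xi_{p^{\bot}} \after \pi_{p} \after \xi_{p^{\bot}} = \pi_{p} \after \xi_{p^{\bot}}$, and cancellation using $\pi_{p}$ monic (Theorem~\ref{thm:totpartcmpr}) and $\xi_{p^{\bot}}$ epic (Lemma~\ref{lem:quot}\eqref{lem:quot:XiEpi}) produces $\theta_{p} = \xi_{p^{\bot}} \after \pi_{p} = \idmap$.

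Part (7) first handles the sharp case: for sharp $s$, since $\xi_{s^{\bot}}$ is a quotient map and hence internally epic (Lemma~\ref{lem:img}\eqref{lem:img:ext}), Lemma~\ref{lem:img}\eqref{lem:img:compeq} gives $\img(\pi_{s}) = \img(\pi_{s} \after \xi_{s^{\bot}}) = \img(\asrt_{s}) = \ceil{s} = s$ by Postulate~\ref{post:assertexist}\eqref{post:assertexist:img}. For general $p$, minimality of images together with sharpness of images (Postulate~\ref{post:effectus}) gives $\img(\pi_{p}) \leq \floor{p}$; the reverse inequality comes from factoring $\pi_{\floor{p}} = \pi_{p} \after h$ (using $\floor{p} \leq p$) and Lemma~\ref{lem:img}\eqref{lem:img:comp}. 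The equivalence $p \leq q \Leftrightarrow \pi_{p} \leq \pi_{q}$ for sharp $p, q$ is then routine: ($\Rightarrow$) is the factorisation $\pbox{\pi_{p}}(q) \geq \pbox{\pi_{p}}(p) = \one$, and ($\Leftarrow$) uses $\pi_{p} = \pi_{q} \after h$ together with Lemma~\ref{lem:img}\eqref{lem:img:comp} and the sharp case.

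The main obstacle is (5)($\Leftarrow$). It is tempting to work at the level of morphisms, where idempotence of $\asrt_{p}$ combined with the factorisation yields the split mono/epi relation $\xi_{p^{\bot}} \after \pi_{\ceil{p}} = \idmap$; but this does not obviously collapse $p$ to $\ceil{p}$, because two distinct quotient maps can share the same section. The winning move is to stay at the level of predicates, where the always-true identity $p = p \andthen \one = (p \andthen p) \ovee (p \andthen p^{\bot})$ reduces the hypothesis $p \andthen p = p$ to $p \andthen p^{\bot} = \zero$ by a single effect-algebra cancellation, after which sharpness is immediate via Lemma~\ref{lem:img}\eqref{lem:img:post}.
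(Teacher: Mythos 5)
Your proposal is correct and follows essentially the same route as the paper's proof: part (1) via Postulate~\ref{post:assertexist}~\eqref{post:assertexist:inj}, part (2) via Proposition~\ref{prop:effectusPCM}, part (3) via the $\ceil{-}$/$\imgbot$ chain, part (4) via the $\asrt_{p}\ovee\asrt_{p^\bot}=\idmap$ characterisation, parts (5)--(6) pivoting on $p\andthen p^{\bot}=\zero$ together with Lemma~\ref{lem:img}~\eqref{lem:img:post} and the factorisation $\asrt_{p}=\pi_{\ceil{p}}\after\xi_{p^\bot}$, and part (7) via $\img(\pi_{\ceil{p}})=\img(\asrt_{p})=\ceil{p}$. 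The only cosmetic deviations are that you prove the forward direction of (5) from $p^{\bot}\after\pi_{p}=\zero$ rather than from $\ceil{p}\leq p\Rightarrow p^{\bot}\leq\imgbot(\asrt_{p})$, and use cancellation in $\Pred(X)$ where the paper uses orthosupplement uniqueness in $\downset p$; both are equivalent one-line substitutes.
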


\begin{proof}
We must be careful not to assume more about the assert maps than is
postulated above.
\begin{enumerate}
\item We use
  Postulate~\ref{post:assertexist}~\eqref{post:assertexist:inj} each
  time. First, the identity map $\idmap \colon X \rightarrow X$
  evidently satisfies $\idmap \leq \idmap$, so that $\idmap =
  \asrt_{\one \after \idmap} = \asrt_{\one}$. Similarly, $\zero \leq
  \idmap$, so that $\zero = \asrt_{\one \after \zero} =
  \asrt_{\zero}$.  Further, we have $\idmap = \one \colon I
  \rightarrow I$, see
  Lemma~\ref{lem:FinPACwE}~\eqref{lem:FinPACwE:idI}. Hence every
  scalar $s\colon I \rightarrow I$ satisfies $s \leq \one = \idmap$,
  and thus $s = \asrt_{\one \after s} = \asrt_{\idmap \after s} =
  \asrt_{s}$.

\item By
  Proposition~\ref{prop:effectusPCM}~\eqref{prop:effectusPCM:pres} we
  have:
$$\begin{array}{rcl}
p \andthen (q_{1} \ovee q_{2})
& = &
(q_{1} \ovee q_{2}) \after \asrt_{p} \\
& = &
(q_{1} \after \asrt_{p}) \ovee (q_{2} \after \asrt_{p})
\hspace*{\arraycolsep}=\hspace*{\arraycolsep}
(p\andthen q_{1}) \ovee (p\andthen q_{2}).
\end{array}$$

\noindent As a result, $p \andthen (-)$ is monotone, and in particular
$p \andthen q \leq p$. Since $p\andthen \one = p$ we obtain that $p
\andthen (-)$ is a map of effect algebras $\Pred(X) \rightarrow
\downset p$.

\item Let $p\andthen q = \zero$, then $q \after \asrt_{p} = p \andthen
  q = \zero$, so that $q \leq \imgbot(\asrt_{p}) = \ceil{p}^{\bot}$ by
  Lemma~\ref{lem:img}~\eqref{lem:img:post}. But then $\ceil{q} \leq
  \ceil{p}^{\bot}$, since $\ceil{p}^{\bot}$ is sharp, and thus $p \leq
  \ceil{p} \leq \ceil{q}^{\bot} = \imgbot(\asrt_{q})$. Hence, again by
  Lemma~\ref{lem:img}~\eqref{lem:img:post}, $q \andthen p = p \after
  \asrt_{q} = \zero = p \andthen q$.

\item For the implication $\eqref{prop:telos:sef:p} \Rightarrow
  \eqref{prop:telos:sef:pbot}$, let $\asrt_{p} \leq \idmap$. Then
  there is a map $f\colon X \rightarrow X$ with $\asrt_{p} \ovee f =
  \idmap$. This $f$ then also satisfies $f \leq \idmap$, so that $f =
  \asrt_{q}$ for $q = \kerbot(f)$ by
  Postulate~\ref{post:assertexist}~\eqref{post:assertexist:inj}. We
  have:
$$\begin{array}{rcccccccl}
\one
& = &
\kerbot(\idmap)
& = &
\kerbot(\asrt_{p} \ovee f)
& = &
\kerbot(\asrt_{p}) \ovee \kerbot(f)
& = &
p \ovee \kerbot(f).
\end{array}$$

\noindent Hence $p^{\bot} = \kerbot(f) = q$. But then $\asrt_{p^\bot} =
\asrt_{q} = f \leq \idmap$. 

For the implication $\eqref{prop:telos:sef:pbot} \Rightarrow
\eqref{prop:telos:sef:instr}$ we assume $\asrt_{p^\bot} \leq
\idmap$. By reasoning as before, we get $\asrt_{p^{\bot}} \ovee f =
\idmap$ for $f = \asrt_{p} \leq \idmap$. Hence $\asrt_{p} \ovee
\asrt_{p^\bot} = \idmap$, so that we are done as in the proof of
Lemma~\ref{lem:commeffectus}~\eqref{lem:commeffectus:instr}.

Finally, for $\eqref{prop:telos:sef:instr} \Rightarrow
\eqref{prop:telos:sef:p}$, assume an equality of total maps $\nabla
\after \instr_{p} = \idmap \colon X \rightarrow X$. Then:
$$\begin{array}{rcl}
\asrt_{p} \ovee \asrt_{p^{\bot}}
& = &
(\nabla \after \kappa_{1} \after \asrt_{p}) \ovee
   (\nabla \after \kappa_{2} \after \asrt_{p^{\bot}}) \\
& = &
\nabla \after \big((\kappa_{1} \after \asrt_{p}) \ovee
   (\kappa_{2} \after \asrt_{p^{\bot}})\big) \\
& \smash{\stackrel{\eqref{eqn:dtupleovee}}{=}} &
\nabla \after \dtuple{\asrt_{p}, \asrt_{p^\bot}} \\
& = &
\nabla \after \instr_{p} \\
& = &
\idmap.
\end{array}$$

\noindent Hence $\asrt_{p} \leq \asrt_{p} \ovee \asrt_{p^\bot} =
\idmap$.

\item We use equivalences:
$$\begin{array}{rcll}
p \mbox{ is sharp}
& \Longleftrightarrow &
\img(\asrt_{p}) = \ceil{p} \leq p \\
& \Longleftrightarrow &
p^{\bot} \leq \imgbot(\asrt_{p}) \\
& \Longleftrightarrow &
p \andthen p^{\bot} = p^{\bot} \after \asrt_{p} = \zero \qquad
   & \mbox{by Lemma~\ref{lem:img}~\eqref{lem:img:post}} \\
& \smash{\stackrel{(*)}{\Longleftrightarrow}} &
(p\andthen p)^{\bot} = \zero \mbox{ in }\downset p \\
& \Longleftrightarrow &
p \andthen p = p.
\end{array}$$

\noindent The marked equivalence uses that $p \andthen (-)$ is a map
of effect algebras $\Pred(X) \rightarrow \downset p$, see
point~\eqref{prop:telos:andthenhom}. Hence it preserves
orthosupplements.

\item Let $p$ be a sharp predicate. Then $p\andthen p = p$ by
  point~\eqref{prop:telos:sharpidemp}, and thus $\asrt_{p} \after
  \asrt_{p} = \asrt_{p\andthen p} = \asrt_{p}$ by
  Postulate~\ref{post:assertexist}~\eqref{post:assertexist:andthen}.
Using that $p = \ceil{p}$, this last equation yields:
$$\begin{array}{rcccccl}
\pi_{p} \after \xi_{p^\bot} \after \pi_{p} \after \xi_{p^\bot} 
& = &
\asrt_{p} \after \asrt_{p}
& = &
\asrt_{p}
& = &
\pi_{p} \after \xi_{p^\bot}.
\end{array}$$

\noindent But then $\theta_{p} = \xi_{p^\bot} \after \pi_{p} =
\idmap$, since $\pi_p$ is monic, and $\xi_{p^\bot}$ is epic.

\item For an arbitrary predicate $p$ we have $\img(\pi_{\ceil{p}}) =
  \img(\pi_{\ceil{p}} \after \xi_{p^{\bot}}) = \img(\asrt_{p}) =
  \ceil{p}$, since $\xi_{p^\bot}$ is externally, and thus internally,
  epic. In particular, $\img(\pi_{q}) = q$ if $q$ is sharp.

We always have $\img(\pi_{p}) \leq p$ by minimality of images. We show
that $\img(\pi_{p})$ is the greatest sharp predicate below $p$. If $q$
is sharp, and $q \leq p$, then there is a (total) map $f\colon
\cmpr{X}{q} \rightarrow \cmpr{X}{p}$ with $\pi_{p} \after f =
\pi_{q}$. But then we are done: $q = \img(\pi_{q}) \leq \img(\pi_{p})$,
where the inequality follows from minimality of images, and:
$$\begin{array}{rcccccccl}
\pbox{\pi_{q}}\big(\img(\pi_{p})\big)
& = &
\tbox{\pi_{q}}\big(\img(\pi_{p})\big)
& = &
\tbox{f}\tbox{\pi_{p}}\big(\img(\pi_{p})\big)
& = &
\tbox{f}(\one)
& = &
\one.
\end{array}$$

Next we prove the equivalence $p \leq q \Longleftrightarrow \pi_{p}
\leq \pi_{q}$ for sharp predicates $p,q$ on the same object. The
direction $(\Rightarrow)$ always holds. For $(\Leftarrow)$ we use $p =
\img(\pi_{p}) \leq \img(\pi_{q}) = q$. \QED
\end{enumerate}
\end{proof}

\begin{remark}\label{remark:nonstandardandthen}
Let us try to find out in which sense assert maps satisfying the above
postulates are uniquely determined. To this end, assume we have two
sets of assert maps, written as $\asrt_{p}$ and $\asrt'_{p}$. 

They both have kernel maps like in Postulate~\ref{post:asrtcmpr}, 
written as:
$$\xymatrix@C+0pc{
\cmpr{X}{p^\bot}\ar@{ >->}[r]^-{\pi_{p^\bot}} &
   X\ar@/^1.5ex/[r]^-{\asrt_p}\ar@/_1.5ex/[r]_-{\zero} & X
& &
\cmpr{X}{p^\bot}'\ar@{ >->}[r]^-{\pi'_{p^\bot}} &
   X\ar@/^1.5ex/[r]^-{\asrt'_p}\ar@/_1.5ex/[r]_-{\zero} & X
}$$

\noindent In Postulate~\ref{post:asrtcmpr} we have seen that the
kernel maps form comprehension maps, and are thus determined
up-to-isomorphism. This means that for each predicate $p$ there is a
(total) isomorphism $\varphi_p$ in a commuting triangle:
$$\xymatrix@C-1pc@R-.5pc{
\cmpr{X}{p}\ar@{ >->}[dr]_{\pi_{p}}\ar@{..>}[rr]^-{\varphi_p}_-{\cong} & &
   \cmpr{X}{p}'\ar@{ >->}[dl]^{\pi'_{p}}
\\
& X &
}$$

\noindent By factoring like in Postulate~\ref{post:asrtquot}, we
obtain two maps $\xi_{p^\bot}$ and $\xi'_{p^\bot}$ in:
$$\xymatrix@C+1pc@R-.5pc{
& & & \cmpr{X}{\ceil{p}}\ar@{ >->}[dl]^{\pi_{\ceil{p}}}
   \ar[dd]_{\cong}^{\varphi_{\ceil{p}}}\ar@{..>}@/^8ex/[dd]_{\cong}^{\psi_{p}}
\\
X\ar@{..>>}@/^2ex/[urrr]^-{\xi_{p^\bot}}\ar@/^1ex/[rr]^(0.6){\asrt_p}
   \ar@{..>>}@/_2ex/[drrr]_-{\xi'_{p^\bot}}\ar@/_1ex/[rr]_(0.6){\asrt'_p} & & X
\\
& & & \cmpr{X}{\ceil{p}}'\ar@{ >->}[ul]_{\pi'_{\ceil{p}}}
}$$

\noindent Since both $\xi$ and $\xi'$ are universal quotient maps,
there is a second isomorphism, written as $\psi_{p}$, with $\psi_{p}
\after \xi_{p^\bot} = \xi'_{p^\bot}$. The endo map
$\varphi_{\ceil{p}}^{-1} \after \psi_{p} \colon \cmpr{X}{\ceil{p}}
\rightarrow \cmpr{X}{\ceil{p}}$ satisfies:
$$\begin{array}{rcccl}
\pi_{\ceil{p}} \after \big(\varphi_{\ceil{p}}^{-1} \after \psi_{p}\big)
   \after \xi_{p^\bot}
& = &
\pi'_{\ceil{p}} \after \xi'_{p^\bot}
& = &
\asrt'_{p}.
\end{array}$$

\end{remark}

\subsection{Uniqueness of assert maps, in von Neumann algebras}\label{subsec:findimnoncomm}

We have defined a telos
to be a special type of effectus
(Postulate~\ref{post:effectus})
endowed with a family of assert maps
(Postulate~\ref{post:assertexist})
that gives us comprehension (Postulate~\ref{post:asrtcmpr})
and quotients (Postulate~\ref{post:asrtquot}).
We have devoted much effort
to see whether
this list of postules
(or any extension of it)
uniquely determines the assert maps.
In this section,
we will show that
for the telos of von Neumann algebras,
$\op{\vNA}$,
the assert maps are
uniquely determined,
and are given by  $\asrt_p(x) = \sqrt{p}x\sqrt{p}$,
if we add
Postulate~\ref{pos:asrt-duality}
to the list.
Whether
the assert maps are uniquely determined 
by these postulates in general
remains an open problem.

We shall call a map $f\colon X \rightarrow Y$ a \emph{comprehension
  projection}, of simply a \emph{comprehension map} if there is a
sharp predicate $q$ on $Y$ with an isomorphism:
$$\xymatrix@R-1pc@C-2pc{
\;\;X\; \ar[dr]_{f}\ar@{=}[rr]^-{\sim} & & \cmpr{Y}{q}\ar@{ >->}[dl]^{\pi_q}
\\
& Y &
}$$

\noindent Such a comprehension is automatically monic.


\begin{postulate}
\label{pos:asrt-duality}
Let~$p$ be any predicate on~$X$ and $\pi\colon 1 \rightarrowtail X$ a
state that is also a comprehension map (in a telos). Then:
$$\begin{array}{rcl}
\img (p * \pi)
& = &
\ceil{p \andthen \img(\pi)},
\end{array}$$

\noindent where $p * \pi = \asrt_{p} \after \pi$ and $p \andthen
\img(\pi) = \img(\pi) \after \asrt_{p}$ as in~\eqref{eqn:andthen}.
\end{postulate}

The substate~$p * \pi$ is an unnormalised version of the conditional
state~$\pi|_p$ of Example~\ref{ex:Bayes}, which makes sense even if
the validity probability~$\pi \models p$ is zero.  If~$(\pi \models p)
\neq 0$, we have
$$\begin{array}{rcl}
\pi|_{p} \after (\pi\models p)
& = &
p * \pi.
\end{array}$$

Before we come to the main result, we show that this additional
postulate holds in the telos of von Neumann algebras. In doing so we
use the following two properties. For a non-zero $r\in [0,1]$,
\begin{equation}
\label{eqn:vNAimageceil}
\begin{array}{rclcrcl}
\img(r\cdot f)
& = &
\img(f)
& \qquad\mbox{and}\qquad &
\ceil{r\cdot p}
& = &
p,
\end{array}
\end{equation}

\noindent for a subunital map $f$ and a projection $p$.

\begin{example}
\label{ex:comprstates}
The postulate~\ref{pos:asrt-duality} is true in the
effectus~$\op{\vNA}$ with the standard~$\asrt$-maps $\asrt_p(x) =
\sqrt{p}\cdot x\cdot \sqrt{p}$ from Example~\ref{ex:assertexist}.  To
demonstrate this, we will first study states that are comprehension
maps.  Let~$\mathscr{A}$ be a von Neumann algebra.  We will state and
prove a number of claims.
\begin{enumerate}
\item \label{ex:comprstates:downset} For a sharp predicate~$s$
  on~$\mathscr{A}$, the mapping:
$$\xymatrix@R-2pc{
\downset s \ar[rr] & & \Pred(\cmpr{\mathscr{A}}{s})
   \rlap{$\;=[0,1]_{\cmpr{\mathscr{A}}{s}}$}
\\
a\ar@{|->}[rr] & & \tbox{\pi_{s}}(a) = \pi_{s}(a) = sas
}$$

\noindent is an order isomorphism --- where the downset $\downset s$
is a subset of $\Pred(\mathscr{A}) = [0,1]_{\mathscr{A}}$. Recall that
$\cmpr{\mathscr{A}}{s} = s\mathscr{A}s$, see
Example~\ref{ex:cmpr}~\eqref{ex:cmpr:vNA}.

To see this, note that~$sas \leq s$ for any~$sas \in
[0,1]_{\cmpr{\mathscr{A}}{s}}$ since $s$ is the unit element in
$\cmpr{\mathscr{A}}{s} = s\mathscr{A}s$.  Hence there is an
inclusion-map~$j \colon [0,1]_{s\mathscr{A}s} \to \downset
s$. It is the inverse to the above map $\tbox{\pi_s}$ since:
$$\begin{array}{rcll}
\tbox{\pi_{s}}\big(j(sas)\big)
& = &
s(sas)s
\hspace*{\arraycolsep}=\hspace*{\arraycolsep}
sas  \qquad & \mbox{since $s$ is a projection}
\\
j\big(\tbox{\pi_{s}}(a)\big)
& = &
sas
\hspace*{\arraycolsep}=\hspace*{\arraycolsep} a & \mbox{by
  Lemma~\ref{lem:vNdownsetproj} since $a \leq s$.}
\end{array}$$

\item \label{ex:comprstates:minimal} A comprehension map $\pi_{s}
  \colon \mathscr{A} \rightarrow \cmpr{\mathscr{A}}{s}$ for a sharp
  predicate~$s$ on~$\mathscr{A}$ is a state if and only if~$s$ is a
  minimal projection.

First, if $\pi_s$ is a state, then $\cmpr{\mathscr{A}}{s} \cong \C$,
so that by the previous point we have an order isomorphism:
$$\begin{array}{rcccccl}
\downset s
& \cong &
\Pred\big(\cmpr{\mathscr{A}}{s}\big)
& \cong &
\Pred(\C)
& = &
[0,1].
\end{array}$$

\noindent If $t \in \downset s$ is a projection, then it corresponds
to sharp element in $[0,1]$. But there are only two sharp elements in
$[0,1]$, namely $0$ and $1$, so that $t = 0$ or $t = s$. Hence $s$ is a
minimal projection.

Conversely, if $s$ is a minimal projection, then
$\Pred(\cmpr{\mathscr{A}}{s}) \cong \downset s$ has only two
projections. This means that the von Neumann algebra
$\cmpr{\mathscr{A}}{s}$ itself has two projections, and is thus
isomorphic two the unique von Neumann algebra $\C$ with two
projections.

\item Let~$\mathscr{H}$ be any Hilbert space with element $v\in
  \mathscr{H}$. The projection $\ket{v}\bra{v}\in \B(\mathscr{H})$ is
  minimal, and the corresponding state $\pi_{v} \colon \B(\mathcal{H})
  \rightarrow \C$ given by $\pi_{v}(t) = \inprod{tv}{v}$ is a
  comprehension map. One calls a state of this form $\pi_v$ a vector
  state. The image in $\B(\mathscr{H})$ of such a vector state $\pi_v$
  is given by the projection $\ket{v}\bra{v}$. Any state
  $\B(\mathscr{H}) \rightarrow \C$ which is a comprehension map is of
  this form.

\item \label{ex:comprstates:form} Any state that is a projection is
  of the form $\pi \colon \B(\mathscr{H}) \oplus \mathscr{B}
  \rightarrow \C$, where $\pi(a, b) = \inprod{av}{v}$ for some vector
  $v\in \mathscr{H}$. The image of $\pi$ is then the pair
  $(\ket{v}\bra{v}, 0)\in \B(\mathscr{B})\oplus \mathscr{B}$.

To show this, assume~$\pi\colon \mathscr{A} \to \C$ is a comprehension
map for a (consequently minimal) projection~$s$.  Let~$c(s)$ denote
the central carrier of~$s$ that is: $c(s)\in [0,1]_{\mathscr{A}}$ is
the least central projection above~$s\in [0,1]_{\mathscr{A}}$.  We
will show $\cmpr{\mathscr{A}}{c(s)} \cong c(s)\mathscr{A}$ is a (type
I) factor, in which the projection~$c(s)s$ is minimal.


Let~$z \leq c(s)$ be any central projection in~$c(s) \mathscr{A}$.  If
we can show~$z=0$ or~$z=c(s)$, we may conclude~$c(s) \mathscr{A}$ is a
factor.  Note~$z$ is central in~$\mathscr{A}$ as~$za = zc(s)a =
c(s)az=az$ for any~$a \in \mathscr{A}$.  Clearly~$zs$ is a projection
below~$s$.  Hence by minimality of~$s$, we have~$zs=s$ or~$zs=0$.  For
the first case, assume~$zs = s$.  Then~$s \leq z$ by
Lemma~\ref{lem:vNdownsetproj}.  Hence~$c(s) \leq z \leq c(s)$.
Thus~$z=c(s)$, as desired.  Now, we cover the other case~$zs = 0$.
That is: $zs^\bot = z$.  Hence~$z \leq s^\bot$ by
Lemma~\ref{lem:vNdownsetproj}.  So~$s \leq z^\bot$.  Hence~$c(s) \leq
z^\bot$.  Thus~$z \leq c(s)^\bot$ and~$z \leq c(s)$.
Consequently~$z=0$.
 
A fundamental result says that each such (type I) factor is given by
bounded operators on a Hilbert space, see
\textit{e.g.}~\cite[Corrolary 10]{Topping71}. Thus,
let~$c(s)\mathscr{A} \cong \B(\mathscr{H})$ for some Hilbert
space~$\mathscr{H}$. Consequently, there is an
isomorphism~$\vartheta\colon \mathscr{A} \to \B(\mathscr{H}) \oplus
c(s)^\bot \mathscr{A}$ such that~$\pi = \pi' \after \pi_1 \after
\vartheta$, where~$\pi'\colon \B(\mathscr{H}) \rightarrow \C$ is a
comprehension map for the minimal projection corresponding to~$c(s)s$
and hence a vector state.

\item We will now show that the additional
  postulate~\ref{pos:asrt-duality} holds in the telos~$\op\vNA$ for
  the canonical~$\asrt$-maps from
  Example~\ref{ex:assertexist}. Let~$\pi$ be a state that is also a
  comprehension map.  With the previous in mind, we may assume without
  loss of generality that~$\pi$ is of the form~$\pi\colon
  \B(\mathscr{H})\oplus \mathscr{B} \to \C$ with~$\pi(a,b) =
  \inprod{av}{v}$ for some~$v \in \mathscr{H}$.  Let~$e=(e_1,e_2)$ and
  $d = (d_1, d_2)$ be arbitrary effects on~$\B(\mathscr{H}) \oplus
  \mathscr{B}$.  Note that:
$$\begin{array}{rcl}
(e * \pi) (d)
\hspace*{\arraycolsep}=\hspace*{\arraycolsep}
\pi\big(\asrt_{e}(d)\big)
& = &
\pi\big(\sqrt{e_1}d_1\sqrt{e_1}, \sqrt{e_2}d_2\sqrt{e_2}\big) \\
& = &
\inprod{\sqrt{e_1}d_1\sqrt{e_1}\,v}{v} \\
& = &
\inprod{d_1\sqrt{e_1}\,v}{\sqrt{e_1}\,v}.
\end{array}$$

\noindent Suppose~$\sqrt{e_1}v=0$.  Then~$(e*\pi)(d) =
\inprod{d_1\sqrt{e_1}v}{\sqrt{e_1}v} = 0$, so that~$\img(e * \pi) =
0$.  Also $\ceil{e \andthen \img(\pi)} = 0$ since:
$$\begin{array}{rcll}
e \andthen \img(\pi)
& = &
e \andthen (\ket{v}\bra{v},0) & \mbox{see point~\eqref{ex:comprstates:form}} \\
& = &
\asrt_{e}(\ket{v}\bra{v},0) \\
& = &
(\ket{\sqrt{e_1} v}\bra{v \sqrt{e_1}}, \sqrt{e_2}0\sqrt{e_2}) \\
& = &
0.
\end{array}$$

\noindent For the other case, assume~$\sqrt{e_1}v\neq 0$.  Then, 
using what we have seen above:
$$\begin{array}{rcccl}
(e*\pi)(d)
& = &
\inprod{d_1 \sqrt{e_1}v}{\sqrt{e_1}v}
& = &
\|\sqrt{e_1}v\|^2 \inprod{d_1 \frac{\sqrt{e_1}v}{\|\sqrt{e_1}v\|}}{
                        \frac{\sqrt{e_1}v}{\|\sqrt{e_1}v\|}}.
\end{array}$$

\noindent This means~$e*\pi$ is a scaled vector state with:
$$\begin{array}{rcll}
\img (e * \pi)
& = &
(\ket{ \frac{\sqrt{e_1}v}{\|\sqrt{e_1}v\|}}\bra{
            \frac{\sqrt{e_1}v}{\|\sqrt{e_1}v\|}}, 0) &
   \mbox{by~\eqref{eqn:vNAimageceil} and point~\eqref{ex:comprstates:form}} \\
& = &
\frac{1}{\|\sqrt{e_1}v\|^2} (\ket{\sqrt{e_1}v}\bra{\sqrt{e_1}v}, 0) \quad \\
& = &
\ceil{\,(\ket{\sqrt{e_1}v}\bra{\sqrt{e_1}v}, 0)\,} &
   \mbox{by~\eqref{eqn:vNAimageceil}} \\
& = &
\ceil{e \andthen \img(\pi)}.
\end{array}$$

\noindent Hence, in both cases~$\img (e * \pi) = \ceil{e \andthen
  \img(\pi)}$, as desired.
\end{enumerate}
\end{example}

Now we are ready to show there is only one choice of assert maps
in~$\op\vNA$ that satisfies all the postulates ---
including~\ref{pos:asrt-duality}.  The result is a reformulation of a
result from~\cite{WesterbaanW15}, which in turn is inspired by the
characterization of the sequential product in Hilbert spaces by Gudder
and Latr\'emoli\`ere, see~\cite{gudder}.  Our
Postulate~\ref{pos:asrt-duality} should be compared with their
Condition~1.

\begin{theorem}
\label{thm:uniqueness-asrt}
For each von Neumann algebra~$\mathscr{A}$
and $p\in [0,1]_{\mathscr{A}}$,
let
\begin{equation*}
\asrt_p\colon \mathscr{A}\longrightarrow\mathscr{A}
\end{equation*}
be a completely positive normal subunital map.
Assume that these assert maps
on~$\op{\vNA}$
satisfy Postulate~\ref{post:effectus},
\ref{post:assertexist},
\ref{post:asrtcmpr},
\ref{post:asrtquot},
and~\ref{pos:asrt-duality}.

Then
for every von Neumann algebra~$\mathscr{A}$,
predicate $p\in[0,1]_\mathscr{A}$,
and $x\in\mathscr{A}$,
\begin{equation}
\label{eq:uniqueness-asrt}
\asrt_p(x) \ = \ \sqrt{p} \,x \,\sqrt{p}.
\end{equation}
\end{theorem}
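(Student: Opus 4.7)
The plan is to reduce the problem to the case $\mathscr{A}=\B(\mathscr{H})$ of operators on a Hilbert space, and then to pin down each $\asrt_{p}$ on rank-one projections using Postulate~\ref{pos:asrt-duality} applied to vector states; normality, polarization, and ultraweak density will propagate the formula to all of $\mathscr{A}$.

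For the reduction, every von Neumann algebra admits a faithful normal unital $*$-representation $\iota\colon\mathscr{A}\hookrightarrow\B(\mathscr{H})$. Since $*$-homomorphisms preserve projections, the corresponding arrow $\widehat\iota\colon\B(\mathscr{H})\to\mathscr{A}$ in $\op{\vNA}$ is sharpness-preserving, so Postulate~\ref{post:assertexist}~\eqref{post:assertexist:sharp} translates in~$\vNA$ to the intertwining $\iota\after\asrt^{\mathscr{A}}_{p}=\asrt^{\B(\mathscr{H})}_{\iota(p)}\after\iota$ for every $p\in[0,1]_{\mathscr{A}}$. Once the theorem is proved on $\B(\mathscr{H})$, this identity combined with injectivity of $\iota$ gives $\asrt^{\mathscr{A}}_{p}(x)=\sqrt{p}\,x\,\sqrt{p}$ for every $x\in\mathscr{A}$. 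On $\B(\mathscr{H})$, since $\asrt_{p}$ is linear and normal and the finite-rank operators are ultraweakly dense, the polarization identity further reduces the claim to proving $\asrt_{p}(\ket{v}\bra{v})=\ket{\sqrt{p}v}\bra{\sqrt{p}v}$ for every $v\in\mathscr{H}$.

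To exploit Postulate~\ref{pos:asrt-duality} I would take the unit-vector state $\pi_{v}(a)=\inprod{av}{v}$, which is a comprehension-map state by Example~\ref{ex:comprstates}~\eqref{ex:comprstates:form} with $\img(\pi_{v})=\ket{v}\bra{v}$. Write $T_{v}$ for the positive trace-class operator representing the substate $p*\pi_{v}$, so that $(p*\pi_{v})(a)=\tr(aT_{v})$. The duality postulate gives
\[
\mathrm{supp}(T_{v})\;=\;\img(p*\pi_{v})\;=\;\ceil{\,\asrt_{p}(\ket{v}\bra{v})\,},
\]
and $\asrt_{p}(1)=p$ from Postulate~\ref{post:assertexist}~\eqref{post:assertexist:ker} fixes the normalization $\tr(T_{v})=\pi_{v}(p)=\|\sqrt{p}v\|^{2}$. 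Writing $\asrt_{p}$ in Kraus form $\asrt_{p}(x)=\sum_{i}K_{i}^{*}xK_{i}$ with $\sum_{i}K_{i}^{*}K_{i}=p$, a direct computation yields $T_{v}=\sum_{i}\ket{K_{i}v}\bra{K_{i}v}$ and $\asrt_{p}(\ket{v}\bra{v})=\sum_{i}\ket{K_{i}^{*}v}\bra{K_{i}^{*}v}$, so the postulate forces the span-equality $\mathrm{span}\{K_{i}v\}=\mathrm{span}\{K_{i}^{*}v\}$ for every $v\in\mathscr{H}$.

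The hard part is to combine this span-equality with the sequential-composition identity $\asrt_{p}\after\asrt_{p}=\asrt_{p\andthen p}$ of Postulate~\ref{post:assertexist}~\eqref{post:assertexist:andthen} and the tensor-compatibility Postulate~\ref{post:assertexist}~\eqref{post:assertexist:tensor}, the latter applied to a maximally entangled vector in $\mathscr{H}\otimes\mathscr{H}$ so as to constrain the Choi matrix of $\asrt_{p}$, to conclude that this Choi matrix has rank one, equivalently that $\asrt_{p}$ admits a single Kraus operator proportional to $\sqrt{p}$. Once this rigidity is established, $\asrt_{p}(\ket{v}\bra{v})$ is a rank-one positive multiple of $\ket{\sqrt{p}v}\bra{\sqrt{p}v}$, and the trace condition $\tr(T_{v})=\|\sqrt{p}v\|^{2}$ fixes the scalar. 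This rigidity step is in the spirit of Gudder and Latr\'emoli\`ere's characterization of the sequential product on a Hilbert space~\cite{gudder}, and the authors attribute the detailed combinatorics to the adaptation in~\cite{WesterbaanW15}.
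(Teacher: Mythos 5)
Your reduction to $\mathscr{A}=\B(\mathscr{H})$ via a faithful normal representation and Postulate~\ref{post:assertexist}~\eqref{post:assertexist:sharp} is sound (the paper does essentially the same thing, per normal state, via GNS representations plus separation), and your use of Postulate~\ref{pos:asrt-duality} on vector states to extract the span condition $\overline{\mathrm{span}}\{K_i v\}=\overline{\mathrm{span}}\{K_i^* v\}$ is correct and parallels the corresponding computation in the paper. But the argument stops exactly where the real work begins. The step you label ``the hard part'' --- forcing the Choi matrix to have rank one via the tensor postulate and a maximally entangled vector --- is not carried out, is unavailable in infinite dimensions (there is no maximally entangled unit vector in $\mathscr{H}\otimes\mathscr{H}$ when $\dim\mathscr{H}=\infty$), and, crucially, would not finish the proof even if it succeeded: a single Kraus operator $K$ with $K^*K=p$ has polar decomposition $K=u\sqrt{p}$ for a partial isometry $u$, and then $\asrt_p(\ket{v}\bra{v})=\ket{\sqrt{p}\,u^*v}\bra{\sqrt{p}\,u^*v}$, which is a rank-one positive operator of the correct trace but is \emph{not} a multiple of $\ket{\sqrt{p}\,v}\bra{\sqrt{p}\,v}$ unless $u$ is eliminated. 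Your closing claim that the trace condition fixes the scalar is therefore mistaken: it fixes $\|Kv\|$ but says nothing about the direction of $K^*v$.

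The paper sidesteps the Kraus-rank question entirely by first invoking Postulates~\ref{post:asrtcmpr} and~\ref{post:asrtquot} (through Remark~\ref{remark:nonstandardandthen}): because comprehension and quotient maps are determined up to isomorphism by universal properties, any admissible $\asrt_p$ must have the form $\sqrt{p}\,\vartheta(\ceil{p}\,(-)\,\ceil{p})\,\sqrt{p}$ for an automorphism $\vartheta$ of the type~I factor $\ceil{p}\B(\mathscr{H})\ceil{p}$, and by Kaplansky's theorem $\vartheta$ is inner, $\vartheta=u^*(-)\,u$ for a unitary $u$. The longest part of the paper's proof is then the elimination of this residual unitary: the duality postulate yields $\sqrt{p}\,u^*x=\alpha_x\, u\sqrt{p}\,x$ for each vector $x$, a linear-independence argument shows $\alpha_x$ is independent of $x$, whence $pu=up$ and $u^2$ is central, and finally $\asrt_p\after\asrt_p=\asrt_{p\andthen p}$ together with the fact that every effect is a square cancels $u$. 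None of this unitary-elimination argument appears in your proposal, and the two postulates that make its starting point available are never used; this is a genuine gap rather than a routine omission.
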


\begin{proof}
Let~$\mathscr{H}$ be a Hilbert space.
We will first show that Equation~\eqref{eq:uniqueness-asrt}
holds for $\mathscr{A} = \B(\mathscr{H})$.
Let $p\in [0,1]_{\B(\mathscr{H})}$ be given.

By the discussion in Remark~\ref{remark:nonstandardandthen}
we already have the following
connection between the canonical assert map $a\mapsto \sqrt{p}a\sqrt{p}$
and the one, $\asrt_p$, we are given:
there is an automorphism $\vartheta$ on $\ceil{p}\B(\mathscr{H})\ceil{p}$
such that, for all~$a\in \B(\mathscr{H})$,
\begin{equation*}
\asrt_p(a)\ = \ \sqrt{p}
\ \vartheta(\,\ceil{p}a\ceil{p}\,)\,\sqrt{p}.
\end{equation*}

Since $\ceil{p}\B(\mathscr{H})\ceil{p}$
is a type~I factor (i.e.~isomorphic to a $\B(\mathscr{K})$),
it is known 
(see Theorem~3 of~\cite{kaplansky1952})
that~$\vartheta$ 
must be what is called an \emph{inner automorphism},
that is,
there is an unitary $u\in \ceil{p}\B(\mathscr{H})\ceil{p}$
such that,
$\vartheta(a) \ =\  u^* a u$
for all~$a\in \ceil{p}\B(\mathscr{H})\ceil{p}$.
Note that $\ceil{p}u=u$ since $u\in \ceil{p}\B(\mathscr{H})\ceil{p}$,
and
thus we have, for all~$a\in \B(\mathscr{H})$,
\begin{equation*}
\asrt_p(a)\ = \ \sqrt{p}
\,u^* \,a\,u\,\sqrt{p}.
\end{equation*}
Of course,
our ultimate goal should be to show that~$u=1$,
or at least that $u=\lambda\cdot 1$
for some $\lambda\in\mathbb{C}$ with $|\lambda|=1$.
Our first step is to prove $up=pu$.

To this end, we extract
some information about~$u$ from Postulate~\ref{pos:asrt-duality}.
Let~$x\in\mathscr{H}$
with~$\|x\|=1$ be given.
Let $\pi\colon\B(\mathscr{H})\to \mathbb{C}$
be given by~$\pi(a)=\inprod{x}{ax}$
for $a\in\B(\mathscr{H})$.
Then by Example~\ref{ex:comprstates} 
we know that~$\pi$ is a comprehension.
Thus, by Postulate~\ref{pos:asrt-duality},
we know that, in $\op{\vNA}$,
\begin{equation}
\label{eq:duality-vector-state}
\img(\asrt_p \circ \pi)
\ = \ \ceil{p\andthen \img(\pi)}.
\end{equation}
Before we continue,
observe that for $y \in \mathscr{H}$ with~$\|y\| \leq 1$
we have
\begin{equation*}
\ceil{\,\ket{y}\bra{y}\,} 
\ = \ 
\img(\,\inprod{y}{(-)\,y})\,)
\ = \ (\text{ projection onto }y\mathbb{C}\equiv
\{\lambda y\colon \lambda\in\mathbb{C}\}\ ).
\end{equation*}
Now, let us unfold Equation~\eqref{eq:duality-vector-state}.
\begin{alignat*}{3}
(\ \text{projection onto }(\sqrt{p}u^*x)\mathbb{C}\ ) \ 
=\ &\ceil{\ \ket{\,\sqrt{p}u^*x\,}\bra{\sqrt{p}u^*x\,}\ } \\
=\ &\ceil{\ \sqrt{p}u^* \,\ket{x}\bra{x}\, u\sqrt{p} \ }\\
=\ &\ceil{\ p\andthen \img(\pi) \  }\\
=\ &\img(\asrt_p\circ \pi) \\
=\ &\img(\ \inprod{x}{\,\sqrt{p}u^*(-)u\sqrt{p}x\,} \ ) \\
=\ &\img(\ \inprod{\,u\sqrt{p}x\,}{\,(-)\,u\sqrt{p}x\,} \ ) \\
=\ &(\ \text{projection onto }(u\sqrt{p}x)\mathbb{C}\ )
\end{alignat*}
Hence, 
for every $x\in \mathscr{H}$ with~$\|x\|=1$
there is~$\alpha\in \mathbb{C}$
with $\alpha\neq 0$ such that
\begin{equation*}
\sqrt{p}u^* x \ =\  \alpha \cdot u\sqrt{p} x.
\end{equation*}
By scaling it is clear that 
this statement is also true
for all~$x\in \mathscr{H}$
(and not just for $x\in \mathscr{H}$
with $\|x\|=1$).
While a priori~$\alpha$ might depend on~$x$,
we will show that there is~$\alpha\in \mathbb{C}\backslash\{0\}$
such that $\sqrt{p}u^* = \alpha\cdot u\sqrt{p}$.

First note that  $\sqrt{p}u^*x = 0$
iff $u\sqrt{p}x=0$ for all~$x\in \mathscr{H}$.
Thus we may factor~$\sqrt{p}u^*$
and~$u\sqrt{p}$ through
the quotient map  $q\colon \mathscr{H}\to\mathscr{H}/{K}$,
where
\begin{equation*}
K
\ =\ \{x\in\mathscr{H}\colon \sqrt{p}u^*x=0\}
\ =\ \{x\in\mathscr{H}\colon u\sqrt{p}x=0\}.
\end{equation*}
(Here $\mathscr{H}/K$
is just the quotient of~$\mathscr{H}$ as vector
space.)
Let~$t,s\colon \mathscr{H}/K\to \mathscr{H}$
be given by $s\circ q = \sqrt{p}u^* $
and $t\circ q = u\sqrt{p}$.
Then~$s$ and~$t$ are injective,
and writing $V=\mathscr{H}/K$,
it is not hard to see that
for every $x\in V$
there is $\alpha\in \mathbb{C}\backslash\{0\}$
with $s(x)=\alpha \cdot t(x)$. 

We will show that there is $\alpha\in \mathbb{C}\backslash\{0\}$
with $s = \alpha\cdot t$.
If~$V=\{0\}$ then this is clear, so assume that $V\neq \{0\}$.
Pick $x\in V$ with~$x\neq 0$, and let~$\alpha\in\mathbb{C}\backslash\{0\}$
be such that $s(x) = \alpha\cdot t(x)$.
Let~$y\in V$ be given;
we must show that $s(y)=\alpha\cdot t(y)$.
Now,  either~$t(x)$ and~$t(y)$ are linearly independent
or not.

Suppose that $t(x)$ and $t(y)$ are linearly independent.
Let $\beta,\gamma\in \mathbb{C}\backslash\{0\}$
be such that
$s(y)=\beta \cdot t(y)$
and $s(x+y) = \gamma\cdot t(x+y)$.
Then 
\begin{equation*}
(\gamma-\alpha)\cdot t(x)\,+\,(\gamma-\beta)\cdot t(y) \ = \ 0.
\end{equation*}
Thus, as $t(x)$ and~$t(y)$
are linearly independent,
$\gamma-\alpha=0$ and $\gamma-\beta=0$,
and so $\alpha=\beta$.
Hence $s(y) = \alpha\cdot  t(y)$.

Suppose that $t(x)$ and $t(y)$ are linearly dependent.
Since~$x\neq 0$, we have $t(x)\neq 0$---as~$t$ is injective---,
and thus $t(y)=\varrho t(x)$ for some~$\varrho\in\mathbb{C}$.
Then $t(y-\varrho x)=0$, and so $y=\varrho x $
since~$t$ is injective.
Then
\begin{equation*}
 s(y)
\ = \ \varrho s(x)
\ = \ \varrho \alpha t(x)
\ = \ \alpha t(y).
\end{equation*}
Thus, in any case, $ s(y) = \alpha t(y)$.
Hence $s=\alpha\cdot t$.
Thus $\sqrt{p}u^* = \alpha\cdot u\sqrt{p} $.

It follows that~$p = \sqrt{p} u^* u\sqrt{p}
= \alpha\cdot u\sqrt{p} u\sqrt{p}
= u\sqrt{p} \sqrt{p} u^* = upu^*$,
and so $pu=up$.
Then also $\sqrt{p}u = u\sqrt{p}$.
Thus~$\sqrt{p}u^* = \alpha u\sqrt{p} = \alpha \sqrt{p} u$.

Now, note that $(\sqrt{p}u^*)^*=u\sqrt{p}$,
and so $u\sqrt{p}=\alpha^*\sqrt{p}u^*=\alpha^*\alpha u\sqrt{p}$.
Then if~$u\sqrt{p}\neq 0$ we get $\alpha^*\alpha =1$,
and if~$u\sqrt{p}$ then we can put~$\alpha=1$ 
and still have both $\sqrt{p}u^* = \alpha u\sqrt{p}$
and $\alpha^*\alpha=1$.

It follows that, for all~$b\in\B(\mathscr{H})$,
\begin{equation*}
\sqrt{p} u^* \,b\, u\sqrt{p}
\ = \ 
\sqrt{p} u\, b\, u^* \sqrt{p}.
\end{equation*}
By the universal
property of the quotient, 
we conclude that~$u^*  (-)  u = u (-) u^*$,
and thus $u^2 b = b u^2$
for all~$b\in\B(\mathscr{H})$.
Hence~$u^2$ is central in~$\B(\mathscr{H})$.

Since~$\B(\mathscr{H})$ is a factor,
we get $u^2 = \lambda\cdot 1$
for some~$\lambda\in\mathbb{C}$.
Then by Postulate~\ref{post:assertexist}~\eqref{post:assertexist:andthen}
and using $\sqrt{p}u = u\sqrt{p}$,
we get, for all~$b\in\B(\mathscr{H})$,
\begin{equation*}
p\,b\,p
\ = \ 
\sqrt{p}u^*\sqrt{p} u^* b u \sqrt{p} u \sqrt{p}
\ = \ 
(\asrt_p \circ \asrt_p)(b) 
\ = \ 
\asrt_{p\andthen p}(b).
\end{equation*}
Note that $p\andthen p 
= \asrt_p(p) = \sqrt{p}u^* p u \sqrt{p} = p^2$.
Thus, for all~$b\in \B(\mathscr{H})$,
\begin{equation*}
\asrt_{p^2}(b)\ =\  p\,b\,p.
\end{equation*}
Since every element of~$[0,1]_{\B(\mathscr{H})}$
is a square,
we have proven Equation~\eqref{eq:uniqueness-asrt}
when~$\mathscr{A}\equiv\B(\mathscr{H})$.

Now, let
us consider the  general case (so $\mathscr{A}$ 
need not be of the form  $\mathscr{B}(\mathscr{H})$).
Let~$\omega\colon \mathscr{A} \to
\C$ be any normal state on~$\mathscr{A}$.  
Let~$a\in \mathscr{A}$ be given.
As normal states are
separating, it suffices to prove that~$\omega(\asrt_p(a)) =
\omega(\sqrt{p} a \sqrt{p})$.  
Let $\varrho\colon \mathscr{A} \to \B(\mathscr{K})$
be the GNS-representation 
of~$\mathscr{A}$ for the state~$\omega$ with cyclic
vector~$x\in\mathscr{K}$. 
Let~$\pi\colon \B(\mathscr{K}) \to \C$
be given by $\pi(b) = \inprod{x }{bx}$
for all~$b\in\mathscr{A}$. 
Then 
we have~$\omega = \pi
\after \varrho$ (in~$\vNA$).
Thus:
$$\begin{array}{rcll}
\omega(\asrt_p(a))
& = &
\pi ( \varrho (\asrt_p( a) )) \\
& = &
\pi ( \asrt_{\varrho(p)}( \varrho(a ))) & 
   \mbox{by Postulate~\ref{post:assertexist}~\eqref{post:assertexist:sharp}} \\
& = &
\pi(\sqrt{\varrho(p)} \varrho(a) \sqrt{\varrho(p)}) \qquad &
   \mbox{by uniqueness for~$\B(\mathscr{K})$} \\
& = &
\pi(\varrho(\sqrt{p} a \sqrt{p})) \\
& = &
\omega(\sqrt{p} a \sqrt{p})
\end{array}$$
Hence $\asrt_p(a) = \sqrt{p}a\sqrt{p}$. 
We have proven Equation~\eqref{eq:uniqueness-asrt}.
\QED
\end{proof}

\section{Conclusions and future work}\label{sec:conclusion}

This text collects definitions and results about the new notion of
effectus in categorical logic. Already at this early stage it is clear
that the theory of effectuses includes many examples that are of
interest in quantum (and probability) theory. But much remains to be
done. We list a few directions for further research.
\begin{enumerate}
\item Which constructions exist to obtain new effectuses from old,
  such as products, slices, (co)algebras of a (co)monad,
  \textit{etc.}?  A related matter is the definition of an appropriate
  notion of morphism of effectuses: one can take a functor that
  preserves finite coproducts and the final object; alternatively, one
  can take adjoints as morphisms, like in geometric morphisms between
  toposes.

\item Tensors in effectuses have been discussed in
  Section~\ref{sec:monoidal}, but only in a very superficial way. They
  deserve more attention, leading to a closer connection with the work
  done in the Oxford school (see the introduction). For instance, the
  combination of tensors $\otimes$ and coproducts $+$ could lead to a
  3-dimensional graphical calculus that combines (parallel)
  composition and effect logic.

\item The approach in this text is very much
  logic-oriented. Connections with quantum theory are touched upon,
  but should be elaborated further. In particular, the formulation
  (and correctness!) of concrete quantum protocols in the present
  setting is missing.

\item An internal language for effectuses, along the lines
  of~\cite{Adams14,AdamsJ15}, may be useful for the verification of
  probabilistic and/or quantum protocols.

\item The possibility of doing homological algebra (in abstract form,
  see~\cite{Grandis92,Grandis12} also deserves attention.
\end{enumerate}

\subsubsection*{Acknowledgements} This document benefitted from discussion
with and/or feedback from: Robin Adams, Tobias Fritz, Robert Furber, Aleks
Kissinger, Mathys Rennela, Sam Staton, Sean Tull, Sander Uijlen, and
Fabio Zanasi. We like to thank them all.

\bibliographystyle{alpha} 

\clearpage
\addcontentsline{toc}{section}{References}
\bibliography{common}

\begin{thebibliography}{CJWW15}

\bibitem[AC04]{AbramskyC04}
S.~Abramsky and B.~Coecke.
\newblock A categorical semantics of quantum protocols.
\newblock In {\em Logic in Computer Science}, pages 415--425. IEEE, Computer
  Science Press, 2004.

\bibitem[Ada14]{Adams14}
R.~Adams.
\newblock {QPEL}: Quantum program and effect language.
\newblock In B.~Coecke, I.~Hasuo, and P.~Panangaden, editors, {\em Quantum
  Physics and Logic (QPL) 2014}, number 172 in Elect. Proc. in Theor. Comp.
  Sci., pages 133--153, 2014.

\bibitem[AJ15]{AdamsJ15}
R.~Adams and B.~Jacobs.
\newblock A type theory for probabilistic and {Bayesian} reasoning.
\newblock See \url{arxiv.org/abs/1511.09230}, 2015.

\bibitem[AM80]{ArbibM1980}
M.~Arbib and E.~Manes.
\newblock Partially additive categories and flow-diagram semantics.
\newblock {\em Journal of Algebra}, 62(1):203--227, 1980.

\bibitem[BF95]{BennettF95}
M.~Bennett and D.~Foulis.
\newblock Phi-symmetric effect algebras.
\newblock {\em Phys. Review Letters}, 25(12):1699--1722, 1995.

\bibitem[BS98]{BuschS98}
P.~Busch and J.~Singh.
\newblock L{\"u}ders theorem for unsharp quantum measurements.
\newblock {\em Phys. Letters A}, 249:10--12, 1998.

\bibitem[BW85]{BarrW85}
M.~Barr and Ch. Wells.
\newblock {\em Toposes, Triples and Theories}.
\newblock Springer, Berlin, 1985.
\newblock Revised and corrected version available from URL:
  \url{www.cwru.edu/artsci/math/wells/pub/ttt.html}.

\bibitem[CD11]{CoeckeD11}
B.~Coecke and R.~Duncan.
\newblock Interacting quantum observables: categorical algebra and
  diagrammatics.
\newblock {\em New Journ. of Physics}, 13(4):043016, 2011.

\bibitem[CDP10]{ChiribellaAP10}
G.~Chiribella, G.M. D'Ariano, and P.~Perinotti.
\newblock Probabilistic theories with purification.
\newblock {\em Phys. Rev. A}, 81:062348, 2010.

\bibitem[CDP11]{ChiribellaAP11}
G.~Chiribella, G.M. D'Ariano, and P.~Perinotti.
\newblock Informational derivation of quantum theory.
\newblock {\em Phys. Rev. A}, 84:012311, 2011.

\bibitem[CHK14]{CoeckeHK14}
B.~Coecke, C.~Heunen, and A.~Kissinger.
\newblock Categories of quantum and classical channels.
\newblock {\em Quantum Information Processing}, pages 1–--31, 2014.

\bibitem[Cho14]{Cho14}
K.~Cho.
\newblock Semantics for a quantum programming language by operator algebras.
\newblock In B.~Coecke, I.~Hasuo, and P.~Panangaden, editors, {\em Quantum
  Physics and Logic (QPL) 2014}, number 172 in Elect. Proc. in Theor. Comp.
  Sci., pages 165--190, 2014.

\bibitem[Cho15]{Cho15}
K.~Cho.
\newblock Total and partial computation in categorical quantum foundations.
\newblock In C.~Heunen, P.~Selinger, and J.~Vicary, editors, {\em Quantum
  Physics and Logic (QPL) 2015}, number 195 in Elect. Proc. in Theor. Comp.
  Sci., pages 116--135, 2015.

\bibitem[CJ13]{CoumansJ13}
D.~Coumans and B.~Jacobs.
\newblock Scalars, monads and categories.
\newblock In C.~Heunen, M.~Sadrzadeh, and E.~Grefenstette, editors, {\em
  Quantum Physics and Linguistics. A Compositional, Diagrammatic Discourse},
  pages 184--216. Oxford Univ. Press, 2013.

\bibitem[CJWW15]{ChoJWW15}
K.~Cho, B.~Jacobs, A.~Westerbaan, and B.~Westerbaan.
\newblock Quotient comprehension chains.
\newblock In C.~Heunen, P.~Selinger, and J.~Vicary, editors, {\em Quantum
  Physics and Logic (QPL) 2015}, number 195 in Elect. Proc. in Theor. Comp.
  Sci., pages 136--147, 2015.

\bibitem[CK95]{ChovanecK95}
F.~Chovanec and F.~K{\^o}pka.
\newblock D-lattices.
\newblock {\em Int. Journ. Theor. Physics}, 34(8):1297--1302, 1995.

\bibitem[CK15]{CoeckeK15}
B.~Coecke and A.~Kissinger.
\newblock Categorical quantum mechanics {I}: Causal quantum processes.
\newblock \url{http://arxiv.org/abs/1510.05468}, 2015.

\bibitem[CKar]{CoeckeK16}
B.~Coecke and A.~Kissinger.
\newblock {\em Picturing Quantum Processes. A First Course in Quantum Theory
  and Diagrammatic Reasoning}.
\newblock Cambridge Univ. Press, 2016, to appear.

\bibitem[CL13]{CoeckeL13}
B.~Coecke and R.~Lal.
\newblock Causal categories: Relativistically interacting processes.
\newblock {\em Found. Physics}, 43(4):458--501, 2013.

\bibitem[CLW93]{CarboniLW93}
A.~Carboni, S.~Lack, and R.~Walters.
\newblock Introduction to extensive and distributive categories.
\newblock {\em Journ. of Pure \& Appl. Algebra}, 84(2):145--158, 1993.

\bibitem[DP00]{DvurecenskijP00}
A.~Dvure\v{c}enskij and S.~Pulmannov{\'a}.
\newblock {\em New Trends in Quantum Structures}.
\newblock Kluwer Acad. Publ., Dordrecht, 2000.

\bibitem[FB94]{FoulisB94}
D.~Foulis and M.~Bennett.
\newblock Effect algebras and unsharp quantum logics.
\newblock {\em Found. Physics}, 24(10):1331--1352, 1994.

\bibitem[FJ15]{FurberJ15}
R.~Furber and B.~Jacobs.
\newblock From {Kleisli} categories to commutative {$C^*$-algebras}:
  Probabilistic {Gelfand} duality.
\newblock {\em Logical Methods in Comp. Sci.}, 11(2):1--28, 2015.

\bibitem[GG94]{GiuntiniG94}
R.~Giuntini and H.~Greuling.
\newblock Toward a formal language for unsharp properties.
\newblock {\em Found. Physics}, 19:769–--780, 1994.

\bibitem[GG02]{GudderG02}
S.~Gudder and R.~Greechie.
\newblock Sequential products on effect algebras.
\newblock {\em Reports on Math. Physics}, 49(1):87--111, 2002.

\bibitem[GL08]{gudder}
S~Gudder and F.~Latremoliere.
\newblock Characterization of the sequential product on quantum effects.
\newblock {\em arXiv preprint arXiv:0803.3867}, 2008.

\bibitem[Goo86]{Goodearl86}
K.~Goodearl.
\newblock {\em Partially Ordered Abelian Groups with Interpolation}, volume~20
  of {\em Mathematical Surveys}.
\newblock Amer. Math. Soc., 1986.

\bibitem[Gra92]{Grandis92}
M.~Grandis.
\newblock On the categorical foundations of homological and homotopical
  algebra.
\newblock {\em Cah. de Top. et G{\'e}om. Diff. Cat{\'e}goriques}, 30:135--175,
  1992.

\bibitem[Gra12]{Grandis12}
M.~Grandis.
\newblock {\em Homological Algebra: The Interplay of Homology with Distributive
  Lattices and Orthodox Semigroups}.
\newblock World Scientific, Singapore, 2012.

\bibitem[Hag00]{Haghverdi00}
E.~Haghverdi.
\newblock Unique decomposition categories, geometry of interaction and
  combinatory logic.
\newblock {\em Math. Struct. in Comp. Sci.}, 10:205--231, 2000.

\bibitem[HJ10]{HeunenJ10}
C.~Heunen and B.~Jacobs.
\newblock Quantum logic in dagger kernel categories.
\newblock {\em Order}, 27(2):177--212, 2010.

\bibitem[HV15]{HeunenV15}
C.~Heunen and J.~Vicary.
\newblock {\em Categories for Quantum Theory: an Introduction}, volume~17 of
  {\em Oxford Graduate Texts in Math.}
\newblock Oxford Univ. Press, 2015.

\bibitem[HZ12]{HeinosaariZ12}
T.~Heinosaari and M.~Ziman.
\newblock {\em The Mathematical Language of Quantum Theory. From Uncertainty to
  Entanglement}.
\newblock Cambridge Univ. Press, 2012.

\bibitem[Jac99]{Jacobs99}
B.~Jacobs.
\newblock {\em Categorical logic and type theory}.
\newblock North Holland, Amsterdam, 1999.

\bibitem[Jac10a]{Jacobs10a}
B.~Jacobs.
\newblock Convexity, duality, and effects.
\newblock In C.~Calude and V.~Sassone, editors, {\em IFIP Theoretical Computer
  Science 2010}, number 82(1) in IFIP Adv. in Inf. and Comm. Techn., pages
  1--19. Springer, Boston, 2010.

\bibitem[Jac10b]{Jacobs10b}
B.~Jacobs.
\newblock Orthomodular lattices, {Foulis} semigroups and dagger kernel
  categories.
\newblock {\em Logical Methods in Comp. Sci.}, 6(2), 2010.

\bibitem[Jac13]{Jacobs13}
B.~Jacobs.
\newblock Measurable spaces and their effect logic.
\newblock In {\em Logic in Computer Science}. IEEE, Computer Science Press,
  2013.

\bibitem[Jac15a]{Jacobs15a}
B.~Jacobs.
\newblock New directions in categorical logic, for classical, probabilistic and
  quantum logic.
\newblock {\em Logical Methods in Comp. Sci.}, 11(3):1--76, 2015.

\bibitem[Jac15b]{Jacobs15b}
B.~Jacobs.
\newblock A recipe for state and effect triangles.
\newblock In L.~Moss and P.~Sobocinski, editors, {\em Conference on Algebra and
  Coalgebra in Computer Science (CALCO 2015)}, LIPIcs, 2015.

\bibitem[Jan94]{Janelidze14}
Z.~Janelidze.
\newblock On the form of subobjects in semi-abelian and regular protomodular
  categories.
\newblock {\em Appl. Categorical Struct.}, 22(5-6):755--766, 1994.

\bibitem[JM12a]{JacobsM12a}
B.~Jacobs and J.~Mandemaker.
\newblock Coreflections in algebraic quantum logic.
\newblock {\em Found. of Physics}, 42(7):932--958, 2012.

\bibitem[JM12b]{JacobsM12b}
B.~Jacobs and J.~Mandemaker.
\newblock The expectation monad in quantum foundations.
\newblock In B.~Jacobs, P.~Selinger, and B.~Spitters, editors, {\em Quantum
  Physics and Logic (QPL) 2011}, volume~95 of {\em Elect. Proc. in Theor. Comp.
  Sci.}, pages 143--182, 2012.

\bibitem[Joh02]{Johnstone02}
P.~Johnstone.
\newblock {\em Sketches of an Elephant: A Topos Theory Compendium}.
\newblock Number~44 in Oxford Logic Guides. Oxford Univ. Press, 2002.
\newblock 2 volumes.

\bibitem[JWW15]{JacobsWW15}
B.~Jacobs, B.~Westerbaan, and A.~Westerbaan.
\newblock States of convex sets.
\newblock In A.~Pitts, editor, {\em Foundations of Software Science and
  Computation Structures}, number 9034 in Lect. Notes Comp. Sci., pages
  87--101. Springer, Berlin, 2015.

\bibitem[Kad56]{Kadison56}
R.~Kadison.
\newblock Operator algebras with a faithful weakly-closed representation.
\newblock {\em Annals of Mathematics}, pages 175--181, 1956.

\bibitem[Kap52]{kaplansky1952}
Irving Kaplansky.
\newblock Algebras of type {I}.
\newblock {\em Annals of Mathematics}, pages 460--472, 1952.

\bibitem[KL80]{KellyL80}
M.~Kelly and M.~Laplaza.
\newblock Coherence for compact closed categories.
\newblock {\em Journ. of Pure \& Appl. Algebra}, 19:193--213, 1980.

\bibitem[Koz81]{Kozen81}
D.~Kozen.
\newblock Semantics of probabilistic programs.
\newblock {\em Journ. Comp. Syst. Sci}, 22(3):328--350, 1981.

\bibitem[Koz85]{Kozen85}
D.~Kozen.
\newblock A probabilistic {PDL}.
\newblock {\em Journ. Comp. Syst. Sci}, 30(2):162--178, 1985.

\bibitem[KR97]{kadison97}
R.~Kadison and J.~Ringrose.
\newblock {\em Fundamentals of the theory of operator algebras}.
\newblock American Mathematical Soc., 1997.

\bibitem[LS13]{LeiferS13}
M.~Leifer and R.~Spekkens.
\newblock Towards a formulation of quantum theory as a causally neutral theory
  of {Bayesian} inference.
\newblock {\em Phys. Rev. A}, 88(5):052130, 2013.

\bibitem[MA86]{ManesA86}
E.~Manes and M.~Arbib.
\newblock {\em Algebraic Approaches to Program Semantics}.
\newblock Texts and Monogr. in Comp. Sci. Springer, Berlin, 1986.

\bibitem[{Mac}50]{MacLane50}
S.~{Mac~Lane}.
\newblock An algebra of additive relations.
\newblock {\em Bull. Amer. Math. Soc.}, 56:485--516, 1950.

\bibitem[{Mac}71]{Maclane71}
S.~{Mac~Lane}.
\newblock {\em Categories for the Working Mathematician}.
\newblock Springer, Berlin, 1971.

\bibitem[MM92]{MaclaneM92}
S.~{Mac~Lane} and I.~Moerdijk.
\newblock {\em Sheaves in Geometry and Logic. A First Introduction to Topos
  Theory}.
\newblock Springer, New York, 1992.

\bibitem[Nag74]{Nagel74}
R.~Nagel.
\newblock Order unit and base norm spaces.
\newblock In A.~Hartk{\"a}mper and H.~Neumann, editors, {\em Foundations of
  Quantum Mechanics and Ordered Linear Spaces}, number~29 in Lect. Notes
  Physics, pages 23--29. Springer, Berlin, 1974.

\bibitem[Pau02]{paulsen2002}
V.~Paulsen.
\newblock {\em Completely bounded maps and operator algebras}, volume~78.
\newblock CUP, 2002.

\bibitem[Qui67]{Quillen67}
D.~Quillen.
\newblock {\em Homotopical algebra}.
\newblock Number~43 in Lect. Notes Math. Springer, Berlin, 1967.

\bibitem[Ren14]{Rennela14}
M.~Rennela.
\newblock Towards a quantum domain theory: Order-enrichment and fixpoints in
  $w^*$-algebras.
\newblock {\em Electronic Notes in Theoretical Computer Science}, 308:289--307,
  2014.

\bibitem[Sak71]{Sakai71}
S.~Sakai.
\newblock {\em $C^*$-algebras and $W^*$-algebras}, volume~60 of {\em Ergebnisse
  der Mathematik und ihrer Grenzgebiete}.
\newblock Springer, 1971.

\bibitem[Sti55]{Stinespring55}
W.~Stinespring.
\newblock Positive functions on {{$C^*$}}-algebras.
\newblock {\em Proceedings of the American Mathematical Society},
  6(2):211--216, 1955.

\bibitem[SU15]{StatonU15}
S.~Staton and S.~Uijlen.
\newblock Effect algebras, presheaves, non-locality and contextuality.
\newblock In M.~Halld{\'o}rsson, K.~Iwama, N.~Kobayashi, and B.~Speckmann,
  editors, {\em Int. Coll. on Automata, Languages and Programming}, number 9135
  in Lect. Notes Comp. Sci., pages 401--413. Springer, Berlin, 2015.

\bibitem[Tak01]{Takesaki01}
M.~Takesaki.
\newblock {\em Theory of Operator Algebras {I}}, volume 124 of {\em
  Encyclopedia of Mathematical Sciences}.
\newblock Springer, $2^{\textrm{nd}}$ edition, 2001.

\bibitem[Tom57]{Tomiyama57}
J.~Tomiyama.
\newblock On the projection of norm one in {$W^*$-algebras}.
\newblock {\em Proceedings of the Japan Academy}, 33(10):608--612, 1957.

\bibitem[Top71]{Topping71}
D.~Topping.
\newblock {\em Lectures on von {Neumann} algebras}.
\newblock Van Nostrand Reinhold, 1971.

\bibitem[Tul16]{Tull16}
S.~Tull.
\newblock Operational theories of physics as categories.
\newblock forthcoming, 2016.

\bibitem[Wei14]{Weighill14}
T.~Weighill.
\newblock Bifibrational duality in non-abelian algebra and the theory of
  databases.
\newblock MSc Thesis, 2014.

\bibitem[Wes13]{Westerbaan13}
B.~Westerbaan.
\newblock Sequential product on effect logics.
\newblock Master's thesis, Radboud Univ. Nijmegen, 2013.
\newblock Available at \url{http://westerbaan.name/~bas/math/master.pdf}.

\bibitem[Wil12]{Wilce12}
A.~Wilce.
\newblock Symmetry and self-duality in categories of probabilistic models.
\newblock In B.~Jacobs, P.~Selinger, and B.~Spitters, editors, {\em Quantum
  Physics and Logic (QPL) 2011}, volume~95 of {\em Elect. Proc. in Theor. Comp.
  Sci.}, pages 275--279, 2012.

\bibitem[WW15]{WesterbaanW15}
B.~Westerbaan and A.~Westerbaan.
\newblock A universal property of sequential measurement.
\newblock Preprint available at
  \url{http://westerbaan.name/~bas/math/univ-prop-seq-prod.pdf}, 2015.

\end{thebibliography}

\printindex{N}{Notation Index}

\printindex{S}{Subject Index}

\end{document}
